\documentclass[11pt]{article}

\newif\ifcomments
\commentstrue

\usepackage{fullpage}
\usepackage[utf8]{inputenc}

\usepackage{graphicx}
\usepackage{amsmath,amsthm,amssymb}
\usepackage{algorithm}
\usepackage{algpseudocode}
\usepackage{multirow}
\usepackage{makecell}

\usepackage{thm-restate,color,xspace}
\usepackage{comment}
\usepackage{thmtools}
\usepackage{xcolor}
\usepackage{nameref}
\usepackage{array}
\usepackage{dsfont}

\usepackage{tikz}
\usetikzlibrary{shapes.geometric, arrows, positioning, calc}

\definecolor{ForestGreen}{rgb}{0.1333,0.5451,0.1333}
\definecolor{DarkRed}{rgb}{0.65,0,0}
\definecolor{Red}{rgb}{1,0,0}
\usepackage[linktocpage=true,
pagebackref=true,colorlinks,
linkcolor=DarkRed,citecolor=ForestGreen,
bookmarks,bookmarksopen,bookmarksnumbered]
{hyperref}
\usepackage{cleveref}
\usepackage[T1]{fontenc}

\declaretheorem[numberwithin=section]{theorem}
\declaretheorem[numberlike=theorem,name=Theorem]{thm}
\declaretheorem[numberlike=theorem]{lemma}

\declaretheorem[numberlike=theorem]{corollary}

\declaretheorem[numberlike=theorem]{claim}

\declaretheorem[numberlike=theorem]{observation}

\declaretheorem[numberlike=theorem,style=definition]{definition}
\declaretheorem[numberlike=theorem,style=definition]{remark}
\declaretheorem[numberlike=theorem,name=Definition,style=definition]{defn}
\declaretheorem[numberlike=theorem,name=Open Question]{question}

\newcommand{\rmnum}[1]{\romannumeral #1}

\global\long\def\Ball{\mathrm{Ball}}

\global\long\def\polylog{\mathrm{polylog}}

\global\long\def\iin{{\rm in}}
\global\long\def\out{{\rm out}}

\global\long\def\poly{{\rm poly}}

\global\long\def\cov{{\rm cov}}

\global\long\def\sep{{\rm sep}}
\global\long\def\diam{{\rm diam}}
\global\long\def\wit{{\rm wit}}
\global\long\def\emb{{\rm emb}}
\global\long\def\prune{{\rm prune}}
\global\long\def\new{{\rm new}}

\global\long\def\init{{\rm init}}
\global\long\def\iint{{\rm int}}

\newcommand{\wtilde}{\widetilde}

\newcommand{\eps}{\epsilon}
\global\long\def\poly{\mathrm{poly}}

\global\long\def\spars{\mathrm{spars}}
\global\long\def\dist{\mathrm{dist}}
\global\long\def\cond{\mathrm{cond}}
\global\long\def\supp{\mathrm{supp}}
\global\long\def\core{\mathrm{core}}

\global\long\def\part{\mathrm{part}}
\global\long\def\src{\mathrm{src}}
\global\long\def\sink{\mathrm{sink}}
\global\long\def\Core{\mathrm{Core}}

\global\long\def\supp{\mathrm{supp}}
\global\long\def\rou{\mathrm{rt}}
\global\long\def\rt{\mathrm{rt}}

\global\long\def\ED{\mathrm{ED}}
\global\long\def\sat{\mathrm{sat}}
\global\long\def\PC{\mathrm{NC}}

\global\long\def\size{\mathrm{size}}

\global\long\def\CM{\mathrm{CM}}
\global\long\def\low{\mathrm{low}}
\global\long\def\up{\mathrm{up}}
\global\long\def\heavy{\mathrm{heavy}}
\global\long\def\dom{\mathrm{dom}}
\global\long\def\rcs{\mathrm{rcs}}
\global\long\def\stk{\mathrm{stk}}
\global\long\def\query{\mathrm{LowDO}}

\global\long\def\dynED{\mathrm{dynED}}
\global\long\def\trace{\mathrm{trace}}
\global\long\def\drop{\mathrm{drop}}
\global\long\def\ite{\mathrm{item}}
\global\long\def\virtual{\mathrm{vtul}}
\global\long\def\Member{\textsc{Membership}}

\global\long\def\spars{\mathrm{spars}}
\global\long\def\sep{\mathrm{sep}}
\global\long\def\cond{\mathrm{cond}}
\global\long\def\val{\mathrm{val}}
\global\long\def\vertex{\mathrm{vtx}}
\global\long\def\path{\mathrm{path}}
\global\long\def\blocker{\mathrm{blocker}}
\global\long\def\val{\mathrm{val}}
\global\long\def\size{\mathrm{size}}
\global\long\def\insLM{\mathrm{insLM}}
\global\long\def\delete{\mathrm{delete}}
\global\long\def\tmn{\mathrm{tmn}}
\global\long\def\Gdiam{\mathrm{RealDiam}}

\global\long\def\EH{\mathrm{EH}}
\global\long\def\NC{\mathrm{NC}}

\global\long\def\core{\mathrm{core}}
\global\long\def\recourse{\mathrm{Rcs}}

\global\long\def\DO{\mathrm{DO}}
\global\long\def\rr{\mathrm{emb}}
\global\long\def\wit{\mathrm{wit}}
\global\long\def\aff{\mathrm{aff}}
\global\long\def\edge{\mathrm{edge}}
\global\long\def\del{\mathrm{del}}
\global\long\def\sp{\mathrm{sp}}
\global\long\def\local{\mathrm{local}}

\global\long\def\fresh{\mathrm{fresh}}
\global\long\def\final{\mathrm{final}}
\global\long\def\report{\mathrm{report}}

\newcommand{\ignore}[1]{}

\title{Dynamic Deterministic Constant-Approximate Distance Oracles with $n^{\epsilon}$ Worst-Case Update Time}

\author{Bernhard Haeupler\thanks{\texttt{bernhard.haeupler@inf.ethz.ch}}\\ETH Zurich \& CMU \and Yaowei Long\thanks{\texttt{yaoweil@umich.edu}}\\University of Michigan \and Thatchaphol Saranurak\thanks{\texttt{thsa@umich.edu}}\\University of Michigan}

\date{}

\begin{document}

\maketitle

\pagenumbering{gobble}
\begin{abstract}
We present a new distance oracle in the fully dynamic setting: given a weighted undirected graph $G=(V,E)$ with $n$ vertices undergoing both edge insertions and deletions, and an arbitrary parameter $\epsilon\in[1/\log^{c} n, 1]$ where $c>0$ is a small constant, we can deterministically maintain a data structure with $O(n^{\epsilon})$ worst-case update time that, given any pair of vertices $(u,v)$, returns a $2^{\poly(1/\epsilon)}$-approximate distance between $u$ and $v$ in $\poly(1/\epsilon)\log\log n$ query time.

Our algorithm significantly advances the state-of-the-art in two aspects, both for fully dynamic algorithms and even decremental algorithms.
First, no existing algorithm with  \emph{worst-case update time} guarantees a $o(n)$-approximation while also achieving an $n^{2-\Omega(1)}$ update and $n^{o(1)}$ query time, while our algorithm offers a constant $O_\eps(1)$-approximation with $O(n^\epsilon)$ update time and $O_{\epsilon}(\log \log n)$ query time. 
Second, even if amortized update time is allowed, it is the first \emph{deterministic} constant-approximation algorithm with $n^{1-\Omega(1)}$ update and query time. The best result in this direction is the recent deterministic distance oracle by Chuzhoy and Zhang \cite{chuzhoy2023new} which achieves an approximation of $(\log\log n)^{2^{O(1/\epsilon^{3})}}$ with amortized update time of $O(n^{\epsilon})$ and query time of $2^{\poly(1/\epsilon)}\log n\log\log n$.

We obtain the result by dynamizing tools related to \emph{%
length-constrained expanders} 
\cite{HRG22,HaeuplerHT2023length,haeupler2022cut}. Our technique completely bypasses the 40-year-old Even-Shiloach tree, which has remained the most pervasive tool in the area but is inherently amortized.

\end{abstract}

\newpage

\tableofcontents

\newpage

\pagenumbering{arabic}

\section{Introduction}

The \emph{dynamic distance oracle} problem (or the \emph{dynamic all-pairs shortest paths} problem) is one of the most well-studied dynamic graph problems since the beginning of the field (see \Cref{tab:exact,tab:small approx,tab:large approx}). In this problem, given an undirected weighted graph $G=(V,E)$ with $n$ vertices undergoing edge insertions and deletions, the goal is to build a data structure that given a vertex pair $(u,v)$, quickly returns an (approximate) distance between $u$ and $v$. The time required to update the data structure for each edge update is called \emph{update time}. The data structure has $t$ \emph{worst-case} update time if every update needs at most $t$ time. On the other hand, it has $t$ \emph{amortized} update time if it only guarantees that after $U$ updates for all large enough $U$, the total time taken is at most $t\times U$. The time to answer each distance query is called \emph{query time}. If the algorithms handle only edge insertions or deletions, then they are \emph{incremental} or \emph{decremental}, respectively. If they handle both, then they are \emph{fully dynamic}. 

One of the overarching goals in this long line of research is to obtain, for every integer $k\ge2$, a fully dynamic deterministic $(2k-1)$-approximation algorithm with $O(n^{1/k})$ worst-case update time and $n^{o(1)}$ query time, which is an optimal trade-off assuming the 3SUM conjecture \cite{abboud2023stronger,jin2023removing}. In 2018, Chechik \cite{chechik}, improving upon \cite{rodittyZ2,BernsteinR11}, nearly achieved this goal \emph{quantitatively}; she gave a decremental randomized $((2+\epsilon)k-1)$-approximation algorithm with $n^{1/k+o(1)}$ amortized update time and $O(\log\log n)$ query time. A similar trade-off (i.e., constant approximation with arbitrarily small polynomial update time) was also recently shown in the fully dynamic setting. Improving upon \cite{abraham2014fully,forster2021dynamic}, Forster et al\@.~\cite{forster2023bootstrapping} gave a fully dynamic $(1/\epsilon)^{O(1/\epsilon)}$-approximation algorithm with $O(n^{\epsilon})$ amortized update time and $O(n^{\epsilon/8})$ query time. All these results, however, have two important \emph{qualitative} drawbacks:
\begin{enumerate}
\item they are randomized and also assume an oblivious adversary\footnote{An \emph{oblivious} adversary fixes the update sequence from the beginning and reveals the updates to the data structure one by one. In particular, the updates are independent from the random choice of the data structure.}, and
\item they only guarantee amortized update time.
\end{enumerate}
Fixing either of the two drawbacks remains a big challenge for fully dynamic distance oracles and even decremental ones. In contrast, in the \emph{incremental} setting, both drawbacks were addressed by Chen et al.~\cite{chen2020fast} who gave deterministic algorithms with worst-case update time and the following trade-off: $(1/\epsilon)^{O(1/\epsilon)}$-approximation using $n^{\epsilon}$ update and query time, and $(2k-1)$-approximation using $O(m^{1/2}n^{1/k})$ update and query time. Unfortunately, their techniques inherently cannot work in the decremental setting, which is more challenging and implies more applications \cite{Madry10_stoc,chuzhoy2021decremental,bernstein2022deterministic}. Below, we discuss the progress to fixing both drawbacks in decremental and fully dynamic distance oracles.

\paragraph{Derandomization.}

Derandomization and, more generally, removing the oblivious adversary assumption are major research programs on dynamic algorithms (see, e.g., \cite{nanongkai2017dynamic,fast-vertex-sparsest,Wajc20,beimel2022dynamic}) because they enable data structures to be used as a subroutine inside static algorithms and have led to many exciting applications \cite{chuzhoy2021decremental,bernstein2022deterministic,chen2022maximum,abboud2022breaking}.%

Towards derandomizing Chechik's distance oracle, Chuzhoy \cite{chuzhoy2021decremental} showed a deterministic decremental $(\log n)^{2^{O(1/\epsilon)}}$-approximate distance oracle with $O(n^{\epsilon})$ amortized update time and $O(\log n\log\log n)$ query time, and applied it to obtain faster multi-commodity flow and multicut algorithms.\footnote{Similar trade-off was independently shown in \cite{bernstein2022deterministic}.} Recently, Chuzhoy and Zhang \cite{chuzhoy2023new} improved the approximation factor to $(\log\log n)^{2^{O(1/\epsilon^{3})}}$ and also make the algorithm fully dynamic. Can one improve the approximation further to a constant? Unfortunately, the fastest deterministic constant-approximation algorithm still takes a large update time of at least $n^{1+o(1)}$ \cite{bernstein2022deterministic}, but it gives $(1+\epsilon)$-approximation. Hence, we ask:
\begin{question}
\label{question:deterministic}Is there a \emph{deterministic} fully dynamic (or even decremental) distance oracle with constant approximation and  $n^{1-\Omega(1)}$ update and query time?
\end{question}

\paragraph{Deamortization.}

Deamortization is another major research program in dynamic algorithms (e.g.~\cite{thorup2005worst,KapronKM13,NanongkaiSW17,bernstein2021deamortization}) since worst-case update time guarantees are required by many real-world applications and also some reductions \cite{blikstad2023fast}. For dynamic distance oracles, the deamortization challenge highlights one of the biggest gap of our understanding. While decremental $(2k-1)$-approximation algorithms with $n^{1/k+o(1)}$ amortized update time and $O(\log\log n)$ query time are known \cite{chechik,lkacki2022near}, there is no non-trivial algorithm with $n^{2-\Omega(1)}$ worst-case update time and $n^{o(1)}$ query time!%

{} 

There are two explanations behind the lack of progress. First, a conditional lower bound \cite[Corollary 5.10]{BrandNS19} implies that, any decremental or incremental distance oracle with $(5/3-\epsilon)$-approximation cannot have $n^{2-\Omega(1)}$ worst-case update time and $n^{o(1)}$ query time. However, this does not explain the lack of progress in larger approximation regimes. The deeper explanation is that \emph{all} known fully dynamic and decremental algorithms with $n^{2-\Omega(1)}$ update time and $n^{o(1)}$ query time share one common technique that is inherently amortized, called the \emph{Even-Shiloach tree} \cite{EvenS,king1999fully}.\footnote{The Even-Shiloach tree is actually a technique for maintaining single-source short paths, but previous dynamic distance oracle algorithms still need to use it as a subroutine to maintain ``balls'' around vertices.} This technique is already 40 year old, yet no alternative techniques were developed to avoid amortization until now. 
\begin{question}
\label{question:worst-case}Is there a fully dynamic (or even decremental) distance oracle with $o(n)$-approximation, $n^{2-\Omega(1)}$ \emph{worst-case} update time, and $n^{o(1)}$ query time? More specifically, can we avoid the Even-Shioach tree?
\end{question}

\paragraph{Our Result.}

We give a strong affirmative answer to both Questions \ref{question:deterministic} and \ref{question:worst-case}.

\begin{restatable}{theorem}{maintheorem}\label{thm:main}
\label{thm:FullyDynamicDistanceOracle}
Let $G$ be an $n$-vertex fully dynamic undirected graph with size $m_{0}$ initially and polynomially-bounded positive integral edge lengths over all updates. For some small constant $c>0$, given a parameter $1/\log^{c} n\leq \epsilon\leq 1$, there is a $2^{\poly(1/\epsilon)}$-approximate deterministic fully dynamic distance oracle with $O(m_{0}^{1+\epsilon})$ initialization time, $O(n^{\epsilon})$ worst-case update time and $O(\log\log n/\epsilon^{4})$ query time. 

\end{restatable}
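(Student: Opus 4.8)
The plan is to build the data structure hierarchically on top of a dynamized theory of \emph{length-constrained expander decompositions}. First I would recall that a length-constrained expander decomposition partitions the graph (after deleting a small fraction of edges, a ``moving cut'') so that every remaining piece supports low-congestion short-path routing between any demand that respects its vertex capacities; the key point is that distances inside an expander piece are preserved up to a small $\poly(1/\epsilon)$ factor by a sparse \emph{neighborhood cover} / low-diameter routing structure. I would therefore first establish the existence of a \emph{deterministic worst-case dynamic} algorithm maintaining a length-constrained expander decomposition of $G$ (equivalently, maintaining a sparse ``emulator'' or ``covering graph'' $\hat G$ with polylog or $n^{o(1)}$ hop-diameter blow-up) under edge insertions and deletions, with $n^{\epsilon}$ worst-case update time. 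This is the technical heart: the classical approach here — peeling balls with Even--Shiloach trees — is exactly what we must avoid, so instead I would use the cut-matching-game / expander-pruning machinery of \cite{haeupler2022cut,HaeuplerHT2023length} in an \emph{online} fashion, where each edge update triggers only a bounded amount of local re-pruning, and re-balance work is spread across updates to get a worst-case rather than amortized bound.

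Next I would set up the recursion. A single level of length-constrained expander decomposition reduces distance queries in $G$ to distance queries in a graph that is sparser (fewer edges by a factor $n^{-\Omega(\epsilon)}$, say) but on roughly the same vertex set, at the cost of an $O(\poly(1/\epsilon))$ multiplicative loss in approximation and a $\polylog$ hop-length blow-up, plus accounting for the deleted moving-cut edges via a recursively-maintained structure on a contracted/auxiliary graph. Iterating this reduction $O(1/\epsilon)$ times drives the edge count down to $n^{o(1)}$ (or to something where a brute-force oracle is affordable), which is why the final approximation is $\poly(1/\epsilon)^{O(1/\epsilon)} = 2^{\poly(1/\epsilon)}$ and the query time is $\poly(1/\epsilon) \cdot \log\log n$ — each of the $O(1/\epsilon)$ levels contributes an $\poly(1/\epsilon)$ factor, and the $\log\log n$ comes from a final predecessor-search-style lookup (as in Chechik's and Forster et al.'s oracles). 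The update time is $n^{\epsilon}$ because each level costs $n^{O(\epsilon^2)}$ or so worst-case, and there are $O(1/\epsilon)$ levels; initialization is a static length-constrained expander decomposition computed once, costing $O(m_0^{1+\epsilon})$.

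For the query algorithm, given $(u,v)$ I would walk up the hierarchy: at each level, either $u$ and $v$ lie in a common expander piece, in which case the sparse neighborhood cover at that level returns a short route (with bounded length distortion), or they are separated by the moving cut, in which case the query is forwarded to the next, coarser level; stitching these together gives the claimed $2^{\poly(1/\epsilon)}$-approximation. I would verify correctness by the standard telescoping argument: the true shortest $u$--$v$ path either survives within the pieces at every level (bounded loss) or crosses a moving-cut edge, which by the sparsity of the cut and the length-constraint is accounted for at a coarser level; a potential/charging argument over levels shows no path is ``lost.''

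The main obstacle I expect is the worst-case deamortized maintenance of the length-constrained expander decomposition itself. Expander pruning is naturally amortized (a long deletion sequence can force a large total amount of pruning), and length constraints make it worse because a single edge deletion can lengthen many shortest paths at once. The plan to overcome this is a combination of (i) \emph{lazy rebuilding with fixed budgets} — maintain several copies at staggered ages and rebuild each in the background at a rate of $n^{\epsilon}$ work per update, switching to a fresh copy before the current one degrades too far — and (ii) using the \emph{robustness} of length-constrained expanders under a bounded amount of adversarial edge removal (so a partially-stale decomposition is still a valid, slightly-worse one), which lets the background rebuild tolerate a controlled lag. Making the budgets, the staleness bounds, and the interaction with the recursion all fit together quantitatively — so that the losses stay $\poly(1/\epsilon)$ per level and the work stays $n^{\epsilon}$ worst-case — is where essentially all the technical difficulty lies.
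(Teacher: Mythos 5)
Your high-level skeleton matches the paper's: dynamize length-constrained expander decompositions, build a multi-level hierarchy, walk up on a query, and deamortize via staggered copies (what the paper formalizes as the online-batch framework of Nanongkai--Saranurak--Wulff-Nilsen and Jin--Sun). But the proposal hand-waves precisely the step where the construction actually threatens to break: the \emph{multiplicative recourse blowup} across levels. A dynamic length-constrained expander decomposition with congestion parameter $\phi$ and length $h$ naturally has two bad amplification factors per level. First, a unit input change creates a new moving cut of size $\Theta(h)$, so if the next level's vertex set is the set of cut-incident vertices, the next level's update batch is $\Theta(h)$ times larger. Second, one deleted edge destroys $\Theta(1/\phi)$ embedding paths, which prunes $\Theta(1/\phi)$ vertices out of the routers and hence churns the pairwise cover by $\Theta(1/\phi)$. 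Stacking $\Theta(1/\epsilon^{2})$ levels with $h=n^{\epsilon}$, $1/\phi = n^{\epsilon^{2}}$, either factor compounds to $n^{\Omega(1/\epsilon)}$ and the claimed $n^{\epsilon}$ update time is dead. The paper's fix is two structural devices you never mention: \emph{landmarks}, a set of $O(1)$ representatives per unit of cut (rather than $h$ cut-vertices) used as the next level's terminal set; and \emph{density}, which puts weight $1/\phi$ on the landmark/terminal vertices inside the routers so that $\Theta(1/\phi)$ pruned router-vertices cause only $O(1)$ terminal removals. Without these your ``bounded local re-pruning'' hope is false as stated, and no amount of careful budgeting rescues it.

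A second gap: all the dynamic length-constrained flow/cutmatch primitives run in $\poly(h)\cdot n^{O(\epsilon)}$ time per unit update, so the hierarchy you describe gives a distance oracle that is only correct (and only efficient) for distances up to $h=n^{\epsilon}$. Your proposal does not have a mechanism to go beyond that. The paper adds a \emph{second}, independent stacking: it builds \emph{length-reducing emulators} $Q$ with $\dist_{Q}\approx \dist_{G}/h$ from the same hierarchy, then composes $O(1/\epsilon)$ of them to collapse all distances into the range $[1, h]$ before invoking the bounded-distance oracle at each scale. Your single recursion conflates these two hierarchies (the $O(1/\epsilon^{2})$-level expander hierarchy controlled by $\phi$ and the $O(1/\epsilon)$-level emulator stack controlled by $h$), and losing one of them means you either cannot handle large distances or cannot terminate the vertex-set shrinkage. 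Finally, the staggered-rebuild plan implicitly requires the inner structure to handle a single batch update in time proportional to the batch size in the \emph{worst case}; for that you need a \emph{local} cutmatch whose running time scales with the pruned-set size rather than the graph size, which is why the paper first localizes the length-constrained maxflow subroutine — another essential piece the proposal omits.
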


Our distance oracle can also output an approximate shortest path $P$ in additional $O(|P|)$ time, or an approximate shortest \emph{simple} path $P_{\sp}$ in additional $O(|P_{\sp}|\cdot n^{\epsilon})$ time. See \Cref{thm:MainDetailed} for a detailed version of \Cref{thm:main}. 

Up to a constant factor in the approximation, \Cref{thm:main} achieves the optimal trade-off \cite{abboud2023stronger,jin2023removing} and simultaneously guarantees desirable qualitative properties including being deterministic, fully dynamic, and having worst-case update time. \Cref{tab:high level} gives a high-level comparison between our distance oracle with the start-of-the-art.%

\begin{table}
\begin{centering}
\begin{tabular}{|c|c|c|c|c|}
\hline 
 & Approximation & Fully dynamic? & Deterministic? & Worst-case?\tabularnewline
\hline 
\hline 
\cite{chechik} & $O(1)$ & \textcolor{red}{decremental} & \textcolor{red}{no} & \textcolor{red}{no}\tabularnewline
\hline 
\cite{chen2020fast} & $O(1)$ & \textcolor{red}{incremental} & yes & yes\tabularnewline
\hline 
\cite{forster2023bootstrapping} & $O(1)$ & fully & \textcolor{red}{no} & \textcolor{red}{no}\tabularnewline
\hline 
\cite{chuzhoy2023new} & \textbf{\textcolor{red}{$\omega(1)$}} & fully & yes & \textcolor{red}{no}\tabularnewline
\hline 
\makecell{\cite{KMP23}\\\footnotesize{Independent}} & \textbf{\textcolor{red}{$\omega(1)$}} & fully & yes & yes\tabularnewline
\hline
\textbf{Ours} & $O(1)$ & fully & yes & yes\tabularnewline
\hline 
\end{tabular}
\par\end{centering}
\caption{\label{tab:high level}Comparison between our distance oracle with the state-of-the-art distance oracles when $n^{\epsilon}$ update time is allowed. \cite{KMP23} is an independent work as discussed in \Cref{sect:IndependentWork}.}
\end{table}

\subsection{Our Techniques}

The technique we use to prove \Cref{thm:main} completely bypasses the Even-Shiloach tree. Instead we dynamize tools based on \emph{length-constrained expanders}, which were recently introduced by Haeupler, Räcke, and Ghaffari~\cite{HRG22}. At a high level, length-constrained expanders are a generalization of expanders that offer the same advantages (e.g., robustness under updates, low congestion routing) but in addition to approximating flows and cuts also provide approximations and control over distances. Thanks to the recent expansion of this theory to constant-step expanders~\cite{haeupler2022cut,HaeuplerHT2023length} and efficient algorithms for them \cite{HaeuplerHT2023length} constant-step length-constrained expanders can even achieve constant-fractor approximations and go beyond the polylogarithmic flow-cut gap inherent in regular expanders and their decompositions.

Now, we highlight our novel technical components, which may be of independent interest, by placing them in the context of previous works. A high-level discussion on how we construct and use these components can be found in \Cref{sect:overview}.

Before going into details, we first introduce the \emph{online-batch dynamic setting}. Unlike the fully dynamic setting in which we receive a sequence of updates, in the online-batch dynamic setting, the updates come in the form of several online batches. The algorithm needs to handle the batches one by one, and the update time for each batch should be roughly proportional (with a small overhead, e.g. $n^{\epsilon}$) to the size of this batch \emph{in the worst case}.\footnote{Generally, dynamic algorithms with \emph{amortized} update time do not work in the online-batch dynamic setting.} A standard reduction to obtain fully dynamic algorithms (with worst-case update time) from online-batch algorithms was explicitly shown in \cite{NanongkaiSW17,JS22}. Therefore, we will aim at online-batch dynamic distance oracles and most components introduced below are online-batch dynamic. Refer to \Cref{sect:OnelineBatch} for preliminaries and notations of the online-batch dynamic setting, and see \Cref{sect:Reduction} for the formal statement of the reduction.

\paragraph{Localized Length-Constrained Flows.} Local flow algorithms refer to algorithms for solving flow problems with running times that are independent of the graph size. 
Many such algorithms with impressive applications have been developed, such as the local versions of blocking flow \cite{orecchia2014flow, nanongkai2019breaking,hua2023maintaining}, push-relabel \cite{henzinger2020local,SW19}, and Ford-Fulkerson \cite{chechik2017faster,forster2020computing}. In \Cref{sect:LocalLengthConstrainedFlow}, we extend local flow techniques to the new context of length-constrained flow by localizing the length-constrained flow algorithms from \cite{haeupler2023maximum}. \Cref{thm:IntroLocalFlow} is the statement of our local length-constrained flow algorithm.\footnote{An $h$-length flow has only flow paths with length at most $h$, and a fractional moving cut is an edge-weight function s.t. any $s$-$t$ path has weight at least $1$. In fact, $h$-length flows and fractional moving cuts are solution to the primal and dual LPs formalizing the $h$-length maxflow problem.} See \Cref{thm:LocalLengthConstrainedFlow} for a detailed version. 

\begin{theorem}
\label{thm:IntroLocalFlow}
Let $G$ be a directed graph with positive integral lengths $\ell$, capacities $U$, source vertex $s$, sink vertex $t$, further satisfying that for each vertex $v\in V(G)\setminus\{s,t\}$,
\begin{itemize}
\item its sink capacity $\nabla(v) := U(v,t)$ is at least its out-degree $\deg^{+}_{G}(v)$, and
\item its sink edge $(v,t)$ has length $\ell(v,t) = 1$.
\end{itemize}
Given parameters $0<\delta<1, h\geq 1$, there is an algorithm that compute a feasible pair of $h$-length flow and fractional moving cut that is $(2+\delta)$-approximate. The running time is $\poly(h,1/\delta,\log n)\cdot\sum_{v\in V(G)\setminus\{s,t\}}\Delta(v)$,
where $\Delta(v):= U(s,v)$ is the source capacity of $v$.
\end{theorem}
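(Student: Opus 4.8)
The plan is to \emph{localize} the (global) multiplicative-weights algorithm of \cite{haeupler2023maximum} for $h$-length maximum flow and minimum fractional moving cut, so that its running time is governed by the total source capacity $\Delta := \sum_{v\neq s,t}\Delta(v)$ rather than by $|V(G)|$ or $|E(G)|$. Recall that this algorithm maintains an edge-weight function $w$ (the emerging fractional moving cut) and repeatedly routes a small amount of flow along $s$-$t$ paths that are near-shortest with respect to $\ell+h\cdot w$ among paths of hop-length at most $h$, each time increasing $w$ multiplicatively on the edges used; after $\poly(\log n,1/\delta)$ rounds the process stabilizes and the pair $(f,w)$, suitably rescaled, is a $(1+\delta)$-approximately optimal primal--dual pair. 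To run this locally I work on the length-expanded graph $G^{(h)}$, whose vertices are pairs $(v,i)$ with $0\le i\le h$; this inflates the instance by a factor $O(h)$, but $G^{(h)}$ is instantiated \emph{lazily} --- only copies $(v,i)$ that actually carry flow are ever created --- and likewise only the edges whose weight $w$ has been touched are ever materialized.

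Locality comes entirely from the two hypotheses on the sink edges. Since every non-terminal $v$ has a length-$1$ edge to $t$ of capacity $\nabla(v)\ge\deg_G^+(v)$, any flow that reaches an already-explored vertex can be absorbed into the sink within one more hop; this lets me replace the blocking-flow / Dijkstra step of \cite{haeupler2023maximum} by a local push--relabel computation in the style of Unit-Flow \cite{henzinger2020local,SW19}, in which excess is pushed from the sources, each vertex absorbs up to $\nabla(v)$ units into its sink edge, and a vertex is touched only once flow arrives at it. The total flow ever routed is at most $\Delta$ (bounded by the source capacities), each unit of flow traverses at most $h$ edges of $G^{(h)}$, there are $\poly(\log n,1/\delta)$ MWU rounds, and --- crucially --- the hypothesis $\nabla(v)\ge\deg_G^+(v)$ is exactly what bounds how often excess can revisit a vertex, so that the amortized work per push is $O(1)$ as in the classical Unit-Flow analysis. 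Multiplying these together gives running time $\poly(h,1/\delta,\log n)\cdot\Delta$. The returned moving cut is the final $w$, whose support lies inside the explored region, so it is output as a sparse function.

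Correctness is inherited from \cite{haeupler2023maximum}, up to one adjustment that explains the ``$2$'' in the approximation. Working on $G^{(h)}$ and using only locally-found augmenting paths means the moving cut we certify separates $s$ from $t$ at length scale $h$ while the flow we route uses paths of length up to $h$; reconciling these, together with the fact that our max-flow is local rather than global, costs a constant factor in weak duality, as in the standard length-constrained flow--cut comparisons \cite{HRG22,haeupler2022cut}, and the MWU slack contributes the remaining $+\delta$; this yields a $(2+\delta)$-approximate feasible pair. Feasibility of $f$ (capacity- and length-respecting) and of $w$ (every $h$-length $s$-$t$ path has $w$-weight at least $1$ after rescaling) follows directly from the stopping condition of the weight-update loop.

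The main obstacle is the design and analysis of the local augmenting subroutine: its running time must be \emph{entirely} charged to the flow it routes, with no residual dependence on the size of $G$, even though edge weights change across the $\poly(\log n,1/\delta)$ MWU rounds. This needs (i) a charging argument in which each explored copy $(v,i)$ of $G^{(h)}$ absorbs only $\poly(h,1/\delta,\log n)$ work against the $\Delta(v)$ units of source at $v$ --- the sink-capacity hypothesis being precisely what validates this charging by limiting revisits --- (ii) a lazy data structure for $G^{(h)}$ and for the per-edge weights so that untouched regions are never materialized, and (iii) ensuring that re-weighting between MWU rounds does not trigger a global recomputation, which holds because weights change only on already-explored edges and the local search can be restarted incrementally. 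The rest is a careful but routine adaptation of \cite{haeupler2023maximum} together with existing local-flow machinery.
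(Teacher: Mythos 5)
Your proposal takes a genuinely different route from the paper, but it has a gap that I think is fatal as stated.

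\textbf{The core mismatch.} The MWU framework of \cite{haeupler2023maximum} requires, in each round, a subroutine that finds near-\emph{lightest} $s$-$t$ paths with respect to the evolving weight function $w$ (formally, an $h$-length $(\gamma_1,\gamma_2)$-lightest path blocker). You propose to replace ``the blocking-flow / Dijkstra step'' by a Unit-Flow-style local push-relabel on the (lazily instantiated) length-expanded graph $G^{(h)}$. But push-relabel/Unit-Flow has no notion of edge weights $w$: it computes max flows (or absorbs excess), not weight-short flows, and the layering of $G^{(h)}$ enforces only the length constraint $\ell$, not lightness with respect to $w$. Nothing in the proposal explains how the flow you route would be restricted to paths of $w$-weight at most $O(\lambda)$, nor how it would be guaranteed to block all paths of $w$-weight at most $(1+\delta)\lambda$. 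These are exactly the two properties the MWU analysis relies on; without them the weight-doubling argument for termination and the bound on $|w|/\mathrm{val}(f)$ both break. The paper sidesteps this entirely by \emph{not} reimplementing the blocker: it keeps the global path blocker of \cite{haeupler2023maximum} as a black box (which knows about $w$), and only localizes the \emph{graph} it runs on, via an iteration over path lengths $h_i = i+1$ in which the local graph $G_i$ is grown one BFS-layer at a time through sink-saturated vertices.

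\textbf{The source of the ``$2$''.} Your explanation (weak duality / flow--cut comparison on $G^{(h)}$ plus locality) does not match what actually forces the constant. In the paper's construction, in order to explore the graph locally one must prevent the path blocker from routing into vertices $v$ whose sink edge already has large weight $w(v,t) > (1+\delta)\lambda$; those sink edges are removed, and the remaining sink edges have their $w$-weight zeroed out so that exploring through them is free. The price is that a returned flow path may pick up an extra $(1+\delta)\lambda$ of weight from its (zeroed) sink edge when evaluated in the original graph, which degrades the blocker quality from $(1+\delta,\,1+2\delta)$ to $(1+\delta,\,2+3\delta)$ (\Cref{Coro:LocalPathBlockers}); propagating this through the MWU analysis is what yields the $(2+\delta)$ approximation. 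Your write-up doesn't surface any step that would produce this factor, because it doesn't have the sink-edge-zeroing issue in view.

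\textbf{What does transfer.} The high-level intuition you emphasize --- that the hypothesis $\nabla(v) \ge \deg^+_G(v)$ and $\ell(v,t)=1$ is what makes locality possible, and that all work should be charged against $\sum_v \Delta(v)$ --- is exactly the right one and matches the paper's charging argument: the paper bounds $|G_i|$ by the total source capacity plus the out-degree deficit $\max\{\deg^+_G(v) - c\nabla(v), 0\}$, and the number of saturated sinks is at most the total source. But to make a proof out of this you would either (i) adopt the paper's strategy of localizing the subgraph rather than the algorithm, or (ii) design a bespoke local subroutine that simultaneously respects both the length layering and the weight function $w$, prove it is a lightest path blocker, and redo the quality analysis to identify where the constant slack enters. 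As written, neither is done, so the proposal is not a complete proof.
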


\paragraph{Dynamic Router and Dynamic Length-Constrained Expander Decomposition.} The expander pruning algorithm by \cite{SW19} efficiently maintains a large expander subgraph of a decremental expander, enabling numerous dynamic algorithms to leverage the power of expanders (e.g., \cite{NanongkaiSW17,JS22}). In \Cref{sect:DynamicRouter}, we design dynamic router (some concrete expander with constant diameter) which supports online-batch pruning operations. In \Cref{sect:DynamicCertifiedED}, we will implement expander pruning for length-constrained expanders by embedding dynamic routers into a length-constrained expander as its expansion certificate. Further, using our local length-constrained flow algorithm to compute a \emph{cutmatch} in local time between the pruned graph and the remaining expander, part of the pruned graph can be matched back to the expander and the rest will be separated by adding a new cut. We note that a cutmatch algorithm with non-local running time was shown in \cite{haeupler2023maximum} using their non-local length-constrained maxflow algorithm. These two steps lead to our online-batch dynamic algorithm, \Cref{thm:MainDynED}, for maintaining the length-constrained version of expander decomposition.\footnote{The integral moving cut in \Cref{thm:MainDynED} is a non-negative integral weight function on $G$ (slightly different from the fractional moving cut in \Cref{thm:IntroLocalFlow}), and its size is the total weight on all edges. The graph $G-C$ is the graph $G$ with different edge lengths $\ell_{G-C} = \ell_{G} + C$. An $h$-length unit demand includes only demand pairs between vertices with distance at most $h$.} We note that a static version of length-constrained expander decomposition is shown in \cite{HaeuplerHT2023length}, and our dynamic algorithm will use it as a subroutine.

\begin{theorem}
\label{thm:MainDynED}
Let $G$ be a graph with positive integral edge lengths undergoing $t$ batched updates $\pi^{(1)},\pi^{(2)},...,\pi^{(t)}$ of edge insertions and deletions. For any $\eps \in [1/\log^{\Omega(1)}(n),1]$, $h\geq 1$ and $\phi\leq 1$, there is a deterministic algorithm that maintains an integral moving cut $C$ satisfying the following.
\begin{itemize}
\item Initially, $C$ has size  $O(\phi\cdot h\cdot |G^{(0)}|)$ and the size increase after batch $i$ is $O(\phi\cdot h\cdot |\pi^{(i)}|)$.
\item The graph $G$ and moving cut $C$ can be characterized in the flow view: any $h$-length unit demand on $G-C$ can be routed on $G$ with length $h\cdot 2^{\poly(1/\epsilon)}$ and congestion $O(n^{\epsilon}/\phi)$.
\end{itemize}
The initialization time is $|G^{(0)}|\cdot \poly(h)\cdot n^{\epsilon}$ and the update time for $\pi^{(i)}$ is $|\pi^{(i)}|\cdot \poly(h)\cdot n^{\epsilon}/\phi$.
\end{theorem}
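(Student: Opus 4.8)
The plan is to maintain a \emph{certified} length-constrained expander decomposition: in addition to the moving cut $C$, we maintain an explicit witness that $G-C$ is an $(h,\phi)$-length-constrained expander, and it is this witness — rather than $C$ alone — that is updated under the batches. The witness is an embedding into $G-C$ of a family of \emph{routers} (the constant-diameter expanders of \Cref{sect:DynamicRouter}), one per cluster of the decomposition, with embedding length $h\cdot 2^{\poly(1/\epsilon)}$ and congestion $O(n^{\epsilon}/\phi)$. This witness immediately yields the flow-view conclusion: given any $h$-length unit demand $D$ on $G-C$, first route $D$ within the routers (their constant diameter and expansion give a routing of $O(1)$ length and $\polylog n$ congestion), then push this routing through the embedding to obtain a routing of $D$ inside $G-C$ — hence inside $G$, since a moving cut only lengthens edges — of length $h\cdot 2^{\poly(1/\epsilon)}$ and congestion $O(n^{\epsilon}/\phi)$.

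For initialization we run the static length-constrained expander decomposition of \cite{HaeuplerHT2023length} on $G^{(0)}$ with parameters $\epsilon,h,\phi$, obtaining $C^{(0)}$ of size $O(\phi h|G^{(0)}|)$ together with the router embedding, in time $|G^{(0)}|\cdot\poly(h)\cdot n^{\epsilon}$; each router is instantiated with the dynamic router data structure so that it supports online-batch pruning. Insertions in a batch $\pi^{(i)}$ require no work: adding edges to $G$ only adds routing capacity and cannot invalidate the router embedding or the expansion of $G-C$, so $C$ is unchanged and the new edges simply receive zero cut weight. All work is driven by the deletions in $\pi^{(i)}$.

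When edges are deleted, every router edge whose embedding path used a deleted edge is destroyed; we feed these destroyed router edges into the dynamic routers as an online-batch prune, and the union $S$ of the returned pruned vertex sets is exactly the part of $G-C$ for which the witness is no longer valid. To repair the witness \emph{locally} we compute a cutmatch between $S$ and $V(G-C)\setminus S$ inside $G-C$ via our local length-constrained flow algorithm (\Cref{thm:IntroLocalFlow}): take source capacities supported on $S$ and proportional to degrees, sink capacities on $V\setminus S$ as required by the hypothesis of \Cref{thm:IntroLocalFlow}, length bound $\Theta(h)$, and accuracy $\delta=\poly(\epsilon)$. Running in time $\poly(h,\log n)\cdot\Theta(|S|)$ — crucially independent of $|G|$ — the algorithm returns a short, low-congestion flow matching a subset $S'\subseteq S$ back into the surviving expander, together with a fractional moving cut (rounded to an integral one) separating $S\setminus S'$ from the rest. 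We add this cut to $C$, re-embed the matched demand on $S'$ through the returned flow so the routers again certify $S'$, cut off $S\setminus S'$ and recursively apply the static decomposition to the subgraph it induces (which is small, hence local). Because adding the cut lengthens some edges, a few more embedding paths may become too long, triggering another round of prune-and-cutmatch; since the uncertified set only shrinks, this iterates to a valid witness for the updated $G-C$, re-establishing the invariant. Tracking the pruning blow-up of the routers and the cutmatch cut through these rounds shows the total cut growth charged to $\pi^{(i)}$ is $O(\phi h|\pi^{(i)}|)$ and the total time is $|\pi^{(i)}|\cdot\poly(h)\cdot n^{\epsilon}/\phi$.

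I expect the main obstacle to be keeping \emph{every} step local — running time proportional to $|\pi^{(i)}|$, not $|G|$ — while simultaneously controlling the blow-ups. This forces: (i) using the local flow algorithm of \Cref{thm:IntroLocalFlow} for the cutmatch rather than the global length-constrained maxflow of \cite{haeupler2023maximum}; (ii) designing routers whose online-batch pruning touches only a bounded region and whose pruned sets obey the required size bounds; and (iii) controlling how pruning and re-decomposition cascade through the $\poly(1/\epsilon)$ internal recursion levels of the length-constrained decomposition, so that each level contributes only a constant factor to the length blow-up (telescoping to $2^{\poly(1/\epsilon)}$), only an $n^{O(\epsilon)}$-type factor to congestion (kept at $O(n^{\epsilon}/\phi)$ by a careful choice of per-level branching), and only an $O(\phi h|\pi^{(i)}|)$ additive term to the global cut, all while the rounds of prune-and-cutmatch within a batch are shown to terminate. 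This cross-level, cross-round bookkeeping is the technical heart of the argument.
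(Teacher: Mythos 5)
Your high-level architecture matches the paper's: maintain a certified decomposition $(C, {\cal N}, {\cal R}, \Pi)$, initialize it via the static algorithm of \cite{HaeuplerHT2023length} and \Cref{thm:Router}, on deletion prune routers along destroyed embedding paths, and repair locally via a cutmatch built from \Cref{thm:IntroLocalFlow}. But there are two concrete gaps.

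\emph{Insertions are not free.} You claim that after an insertion ``$C$ is unchanged and the new edges simply receive zero cut weight'' because adding edges ``only adds routing capacity and cannot invalidate \dots the expansion of $G-C$.'' This is false: adding an edge with zero cut value \emph{shrinks} distances in $G-C$ and thereby \emph{enlarges} the family of $h$-length demands you must route, while the added routing capacity is a single unit edge. Take $G-C$ to be two disjoint paths of $\Theta(n)$ vertices, each of diameter at most $h$; adding one unit edge between them creates $\Theta(n)$ new $h$-close pairs that all must cross that edge, so the claimed congestion $O(n^\eps/\phi)$ is destroyed. The paper avoids this precisely by assigning cut value $h$ to each new edge (blocking it in $G-C$), then treating the insertion as a node-weighting increase on the endpoints, which is handled by the same cutmatch machinery as deletions (\Cref{thm:EdgeInsertion}).

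\emph{The embedding lives in $G$, not $G-C$.} You state the witness is ``an embedding into $G-C$,'' and as a consequence you need a cascade: adding the cutmatch cut lengthens some $G-C$ edges, which may invalidate more embedding paths, ``triggering another round of prune-and-cutmatch.'' You then assert this terminates ``since the uncertified set only shrinks,'' but that is not justified: each new cut can invalidate embedding paths in fresh regions of the graph, so the uncertified set can grow, and no bound on the number of rounds or the cumulative cut size falls out of your argument. The paper sidesteps this entirely by defining the certified-ED's embedding $\Pi_{{\cal R}\to G}$ with length and congestion measured \emph{on $G$}, not on $G-C$ (\Cref{def:CertifiedED}, and the remark after it). Since lengths on $G$ never change when $C$ grows, adding cuts can never invalidate existing embedding paths, so a single cutmatch per batch suffices and there is no cascade. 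This is exactly why the theorem's flow characterization is stated as ``demand $h$-length on $G-C$, routed on $G$.'' The resulting object is a strictly weaker notion than a genuine length-constrained expander decomposition (it certifies routability on $G$, not on $G-C$), but it is all that is needed for the distance-oracle application, and it is what makes the dynamic maintenance work.

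Fixing either point essentially forces you back onto the paper's construction; the rest of your outline (local cutmatch, router pruning, recursive static decomposition on the small separated piece) is sound.
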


For better understanding, if we put \Cref{thm:MainDynED} in the context of classic expanders, the cut $C$ will be a subset of edges with size at most a $\phi$ fraction of $G$ and previous updates, and the flow characterization will become that any unit demand $D$ on $G\setminus C$ (with demand pairs between connected vertices) can be routed on $G$ with roughly $1/\phi$ congestion.
\Cref{thm:DynamicED} is a detailed version of \Cref{thm:MainDynED}, where, besides the cut $C$, we will also maintain an explicit structure for routing the demand.

\paragraph{Dynamic Length-Constrained Expander Hierarchy.} Expander decomposition becomes even more powerful when we build them on top of each other. We consider two structures that repeatedly perform expander decomposition in a bottom-up manner. The first structure, the dynamic expander hierarchy in \cite{GRST21}, can approximate various cut-flow values in the dynamic setting but is not useful for distance-based problems. The second structure, the length-constrained expander hierarchy in \cite{HRG22}, encodes both cut and distance information simultaneously but does not work in the dynamic setting. In \Cref{sect:ExpanderHierarchy}, we aim to combine the best of both worlds by developing a hierarchy that works in the dynamic setting and approximates distances, although it does not handle cut-flow values. This limitation is due to technical reasons that we believe could be overcome. \Cref{thm:ExpanderHierarchy}  is a formal statement of our dynamic hierarchy. To efficiently maintain our hierarchy, we introduce dynamic vertex sparsifiers for distances in \Cref{sect:DynSparsifier} and generalize this result in \Cref{sect:ExtendedSparsifier}, which may be of interest and will be discussed later.

\paragraph{Length-Reducing Emulator.} Roughly speaking, an $h$-length-reducing emulator $Q$ of a graph $G$ is a graph with unit edge lengths such that $\dist_{Q}(u,v)\approx\dist_{G}(u,v)/h$ for all $u,v$. This is a key object in the literature because many dynamic distance oracles \cite{chechik,chuzhoy2021decremental,bernstein2022deterministic,lkacki2022near,chuzhoy2023new} can be viewed as maintaining dynamic length-reducing emulators on top of each other, so that they only need to handle small distance values. Our final distance oracle also follows this standard approach. Unfortunately, known dynamic constructions of length-reducing emulators only guarantee amortized update time (as they use Even-Shiloach trees) and do not directly work in the online-batch dynamic setting. In \Cref{sect:DynHopEmu}, we present a new construction, based on our expander hierarchy, that overcomes both of these limitations. See \Cref{thm:HopReducingEmulator} for a detailed version of \Cref{thm:IntroEmulator}.

\begin{theorem}
\label{thm:IntroEmulator}
Let $G$ be an $n$-vertex dynamic graph with positive integral edge lengths under $t$ batched updates $\pi^{(1)},\pi^{(2)},...,\pi^{(t)}$ of edge insertions and deletions. Given parameters $\epsilon\in [1/\log^{\Omega(1)} n,1]$ and $h\geq 1$, there is a deterministic algorithm that maintains a graph $Q$ with unit edge lengths s.t. for all vertices $u,v\in V(G)$, we have
\begin{itemize}
\item $\dist_{G}(u,v)\leq 2^{\poly(1/\epsilon)}\cdot h\cdot \dist_{Q}(u,v)$, and
\item if $\dist_{G}(u,v)\leq h$, then $\dist_{Q}(u,v)\leq \poly(1/\epsilon)$.
\end{itemize}
The size of $Q$ is at most $O(n^{1+\epsilon})$ at all time. The initialization time is $|G^{(0)}|\cdot \poly(h)\cdot n^{\epsilon}$ and the update time for $\pi^{(i)}$ is $|\pi^{(i)}|\cdot \poly(h)\cdot n^{\epsilon}$.
\end{theorem}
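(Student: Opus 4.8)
The plan is to obtain $Q$ as a direct read-off of the dynamic length-constrained expander hierarchy of \Cref{thm:ExpanderHierarchy}, run on $G$ at the single length scale $\Theta(h)$ with an internal parameter $\epsilon'=\Theta(\epsilon)$ chosen a constant factor below $\epsilon$. Recall that such a hierarchy maintains $\ell=\poly(1/\epsilon)$ levels: level $i$ carries an accumulated integral moving cut $C_{<i}$, of total size only a small fraction of $G$ plus the past updates, together with a nested neighborhood-cover-type structure on $G-C_{<i}$ whose clusters $S$ each come with a routing certificate of ``$h_i$-length diameter'' $h_i\cdot 2^{\poly(1/\epsilon)}$, for a geometrically growing scale $h\le h_i\le h\cdot 2^{\poly(1/\epsilon)}$, and which is complete in the sense that any pair $u,v$ with $\dist_G(u,v)\le h$ not already charged to a lower level lies together in some level-$i$ cluster; by level $\ell$ all such pairs are resolved. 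The routing certificate of a cluster $S$ is an embedding into $G-C_{<i}$ of a dynamic router from \Cref{sect:DynamicRouter} — of $O(1)$ diameter, living on $n^{\epsilon'}$ copies of the level-$(i-1)$ children of $S$ so as to absorb the congestion $n^{\epsilon'}$ of \Cref{thm:MainDynED}, with each router edge embedded along a $G-C_{<i}$-path of length at most $h_i\cdot 2^{\poly(1/\epsilon)}$ — and this is exactly what the dynamic maintenance of the hierarchy already produces and updates with bounded recourse. Before anything else I would fix the internal parameters of \Cref{thm:ExpanderHierarchy} (the expansion parameter, the value of $\epsilon'$, the growth rate of the $h_i$) so that the various $2^{\poly(1/\epsilon)}$ and $n^{\epsilon}$ budgets line up with the target statement.

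The emulator is then assembled gadget by gadget: into $Q$, with unit edge lengths, we place the router of every cluster at every level, plus unit-length ``identification'' edges tying each vertex of $G$ to its copies and tying the copies of a vertex in a child cluster to its copies in the parent cluster of the nesting. The two distance guarantees follow from the two properties of the hierarchy. For the overshoot bound, project a length-$d$ path in $Q$ into $G$: each of its $d$ edges is either an identification edge, projecting to a zero-length move since its endpoints are copies of one vertex, or a router edge of some level-$i$ cluster, which is embedded into $G-C_{<i}$ — hence into $G$, since a moving cut only increases lengths — along a path of length at most $h_i\cdot 2^{\poly(1/\epsilon)}\le h\cdot 2^{\poly(1/\epsilon)}$; concatenating gives $\dist_G(u,v)\le d\cdot h\cdot 2^{\poly(1/\epsilon)}=2^{\poly(1/\epsilon)}\cdot h\cdot\dist_Q(u,v)$. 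For the coverage bound, if $\dist_G(u,v)\le h$ then completeness (the scales $h_i$ being grown precisely so that a length-$\le h$ path in $G$ still has $\dist_{G-C_{<j}}$-length $\le h_j$) places $u$ and $v$ in a common cluster $S$ at some level $j\le\ell$; in $Q$ we climb from $u$ through the $j=O(\ell)$ nested cluster-routers to a copy of $u$ inside $S$, cross $S$'s router in $O(1)$ steps to a copy of $v$, and descend $O(\ell)$ steps back to $v$, so $\dist_Q(u,v)=O(\ell)=\poly(1/\epsilon)$.

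The size and running-time bounds are then bookkeeping. At each of the $\ell=\poly(1/\epsilon)$ levels the routers have total size $O(n^{1+\epsilon'})$ — linear in the ground set, times the $n^{\epsilon'}$ copy blow-up, times the $O(1)$ overhead of a router — so $|Q|=\poly(1/\epsilon)\cdot O(n^{1+\epsilon'})=O(n^{1+\epsilon})$, the $\poly(1/\epsilon)$ and $\polylog$ factors being absorbed into $n^{\epsilon-\epsilon'}$ since $\epsilon\ge 1/\log^{\Omega(1)}n$. Initialization runs the hierarchy's initialization, costing $|G^{(0)}|\cdot\poly(h)\cdot n^{\epsilon}$ by \Cref{thm:ExpanderHierarchy}, and then writes out the $O(n^{1+\epsilon})$-size routers in time linear in their size. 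On an update batch $\pi^{(i)}$ we forward it to the hierarchy, whose maintenance reports cluster and router changes of total size $|\pi^{(i)}|\cdot\poly(h)\cdot n^{\epsilon}$; each reported change is realized in $Q$ by an online-batch prune/insert on the affected router, in time proportional to the change, so the update time is $|\pi^{(i)}|\cdot\poly(h)\cdot n^{\epsilon}$.

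The step I expect to be the genuine obstacle is not the distance bookkeeping above — which is mechanical once the hierarchy's two properties are available — but two points internal to establishing those properties. First, that $\poly(1/\epsilon)$ rather than $\Theta(\log n)$ levels suffice to resolve every distance-$\le h$ pair at a single scale, while the per-level diameter and congestion overheads compose to $2^{\poly(1/\epsilon)}$ and $n^{\epsilon}$ rather than $n^{o(1)}$: this is exactly where the constant-step length-constrained expander machinery of \cite{HaeuplerHT2023length,haeupler2022cut}, and the local length-constrained flow of \Cref{thm:IntroLocalFlow} feeding \Cref{thm:MainDynED}, are essential, and making the constants close is the crux. Second, that the hierarchy's dynamic maintenance has \emph{worst-case} recourse proportional to the batch size — any amortization there would break the online-batch guarantee and hence the final worst-case update time — so the read-off of $Q$ must be arranged to touch only the reported, bounded set of changes and never to rescan the structure.
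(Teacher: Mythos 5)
Your high-level plan — read the emulator off the dynamic length-constrained expander hierarchy of \Cref{thm:ExpanderHierarchy}, run at a single distance scale $\Theta(h)$ with $\poly(1/\epsilon)$ levels, so that a pair at distance $\le h$ lands in a common cluster within a bounded number of levels while each cluster gadget has $O(1)$ diameter in $Q$ and $h\cdot 2^{\poly(1/\epsilon)}$ realized diameter in $G$ — is indeed the paper's plan, and it is the right one. But your account of what \Cref{thm:ExpanderHierarchy} actually hands you, and the two distance arguments you hang on it, rely on the \emph{warm-up} picture of the hierarchy (Section~\ref{sect:overview}), not the thing that theorem maintains, and that is where the argument as written breaks.

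\Cref{thm:ExpanderHierarchy} does \emph{not} maintain nested neighborhood covers on a single graph $G-C_{<i}$ with a per-level accumulated cut. It maintains a \emph{chain of shrinking vertex sparsifiers} $H_1=G, H_2,\dots,H_{\bar k}$, where $H_{k+1}$ is a $(\lambda_{H,\low},\lambda_{H,\up},\bar h_k)$-sparsifier of the landmark set $L_k$ on $H_k$, and a separate certified-ED $(C_k,L_k,{\cal N}_k,{\cal R}_k,\Pi_{{\cal R}_k\to H_k})$ lives on each $H_k$. The paper explains (end of Section~2.1) exactly why the warm-up picture cannot be dynamized and why the sparsifier chain is needed, and your read-off inherits the same obstruction: it would force reinitializing on an $\Omega(|G|)$-size structure after a small batch, breaking the worst-case bound. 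This mismatch then fails your two distance arguments. For coverage, a generic vertex $u\in V(G)$ simply does not have ``a copy inside $S$'' at level $j>1$ — it is not a vertex of $H_j$ unless it happens to be a landmark. There is also no parent-child nesting of clusters across levels to climb. The mechanism the paper actually uses is the \emph{landmark jump}: if $u_k,v_k$ are not covered by a single level-$k$ cluster, the cut-vs-length argument on the shortest $H_k$-path produces a $C_k$-vertex within $2h_k$ of $u_k$ and hence a landmark $w_u\in L_k$ within $h_k/100$; $u_k$ and $w_u$ \emph{are} in one cluster (covering radius), giving $\dist_Q(u_k,w_u)\le 2$, and $w_u\in L_k\subseteq V(H_{k+1})$ becomes $u_{k+1}$. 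The induction adds $O(1)$ per level and bottoms out at the first empty-cut level. For the overshoot bound, a level-$k$ router edge embeds into $H_k$, not into $G-C_{<k}$; the projection back to $G$ is not ``free because moving cuts only lengthen''. You must unfold through the sparsifier chain, paying $\lambda_{H,\low}$ per level (this is the paper's induction $\dist_{H_k}(u,v)\le \beta\,(2\lambda_{H,\up}\lambda_{H,\low})^{\bar k-k} h_k\,\dist_{Q_{\ge k}}(u,v)$), which is where the $2^{\poly(1/\epsilon)}$ in the left-hand guarantee comes from.

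Two smaller points. First, the paper places \emph{stars} $Q^\star_S$ on the vertex-restricted clusters ${\cal N}_{\vertex,k}$, not the routers themselves; the routers and their embeddings are only used to certify the cluster diameter (\Cref{lemma:CertifiedEDToDist}) and for the path-unfolding interface, and placing stars keeps $|Q|$ at $\sum_k \omega_k |V(H_k)| = O(n^{1+\epsilon})$ without any identification-edge gadgetry. Placing routers directly is not wrong, but it is unnecessary, and the ``copies + identification edges'' scaffolding you build around it reflects the warm-up misreading rather than a needed ingredient. Second, your stated completeness property — that a length-$\le h$ path in $G$ still has $\dist_{G-C_{<j}}$-length $\le h_j$ at some level — is false and is not what the hierarchy promises; the whole point of the landmark mechanism is to handle the opposite case, when the moving cut has made the distance large.
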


\subsection{Applications and Future Work}

\paragraph{Multicommodity Maxflow with Unit Vertex Capacity.}
In the maximum multicommodity flow problem with unit vertex capacity, the input is a graph $G$ and $k$ demand pairs $(s_{1},t_{1}),\dots,(s_{k},t_{k})$. The goal is to send a maximum amount of flow between any demand pairs such that the total amount of flow traversing through any vertex is at most one. This problem can be solved exactly by an LP solver. Faster approximation algorithms can be obtained via applying dynamic shortest path algorithms to the multiplicative weight update framework as first observed by Madry \cite{Madry10_stoc}. The state of the art of this problems currently corresponds precisely to the one of dynamic shortest path algorithms that support shortest paths queries (not just distance queries).

More precisely, the dynamic $(1+\epsilon)$-approximate single-source shortest paths algorithms of \cite{APSP-old,bernstein2022deterministic} implies a $(1+\epsilon)$-approximation algorithm with $\ensuremath{km^{1+o(1)}}$ time, which is slow for big $k$. The dynamic all-pairs shortest paths algorithm algorithms of \cite{chuzhoy2021decremental,chuzhoy2023new} implies $(\log\log n)^{2^{O(1/\epsilon^{3})}}$-approximation algorithms with $O((m+k)n^{\epsilon})$ time. 
\Cref{thm:main} immediately implies the first constant-approximation algorithm with $O((m+k)n^{\epsilon})$ time. 

\begin{theorem}
Let $G$ be an vertex-unit-capacitated undirected graph with $m$ edges, and $k$ source-sink pairs $\{(s_{j},t_{j})\mid 1\leq j\leq k\}$. Given $\epsilon\in [1/\log^{\Omega(1)} n,1]$, there is a deterministic algorithm computing a $2^{\poly(1/\epsilon)}$-approximate multicommodity maxflow in $O((m+k)n^{\epsilon})$ time.
\end{theorem}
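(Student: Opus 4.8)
The plan is to feed \Cref{thm:main} into the classical reduction from approximate maximum-throughput multicommodity flow to dynamic shortest paths of Madry~\cite{Madry10_stoc}, used in the same form by Chuzhoy~\cite{chuzhoy2021decremental} and Chuzhoy--Zhang~\cite{chuzhoy2023new}: one runs the Garg--K{\"o}nemann/Fleischer multiplicative-weights algorithm, which keeps a length $\ell(v)\ge 0$ on every vertex (the dual variable of the unit vertex-capacity constraint), repeatedly routes flow along an approximately shortest $s_j$--$t_j$ path in the vertex-length metric, and multiplies the $\ell$-values of the vertices on the routed path according to their load, with some constant internal accuracy $\epsilon_0$. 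I would supply these approximately shortest paths using our oracle as a $2^{\poly(1/\epsilon)}$-approximate shortest-path oracle. Since \Cref{thm:main} takes \emph{edge} lengths, I would run it directly on $G$ with the induced edge lengths $\ell_E(\{u,v\}):=(\ell(u)+\ell(v))/2$, so that the $\ell_E$-length of a path equals its $\ell$-cost up to the two endpoints; and I would scale everything by a fixed large polynomial in $m$ and round, which makes the lengths polynomially-bounded positive integers as \Cref{thm:main} requires while perturbing path lengths negligibly. The weight updates only increase lengths, which I would implement as a deletion followed by a re-insertion (so a decremental oracle, supporting length increases, would already suffice). Throughout I assume without loss of generality that $G$ has no isolated vertices, so $n=O(m)$.

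The heart of the argument is the running-time bookkeeping, which I would take almost verbatim from \cite{Madry10_stoc,chuzhoy2023new}. First, although one weight update may touch the $\ell$-values of many vertices, I would keep the induced edge lengths $\ell_E$ only up to powers of $1+\epsilon_0$, so each edge is re-inserted $O(\log n/\epsilon_0)$ times over the whole run and there are $\Otil(m)$ oracle updates in total, each of worst-case cost $O(n^\epsilon)$ by \Cref{thm:main}. Second, to avoid paying $\Theta(k)$ queries per routing step I would use Fleischer's phase structure: there are $O(\log n/\epsilon_0^2)$ scales, and a scale sweeps the $k$ commodities, routing along the reported shortest $s_j$--$t_j$ path as long as its reported length stays below the current threshold and advancing to the next commodity otherwise; since each routing step multiplies a bottleneck vertex's length by $1+\epsilon_0$ and a vertex can grow only $O(\log n/\epsilon_0)$-fold before it leaves every short path, there are $\Otil(n)$ routing steps, hence $\Otil(m+k)$ queries overall. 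Third, the standard load accounting (each vertex's scaled load is $O(1)$ by feasibility, and the scaling factor is $\Theta(\log_{1+\epsilon_0}n)$) bounds the total length of all reported paths by $\Otil(n)=\Otil(m)$, and any non-simple walk returned by the oracle can be pruned to a simple $s_j$--$t_j$ path in $O(|P|)$ extra time without increasing its length. Adding the $O(m^{1+\epsilon})$ initialization, $\Otil(m)\cdot O(n^\epsilon)$ for updates, $\Otil(m+k)\cdot O(\log\log n/\epsilon^4)$ for queries, and $\Otil(m)$ for path work, the total is $O((m+k)n^{O(\epsilon)})$, which becomes $O((m+k)n^\epsilon)$ after invoking \Cref{thm:main} with parameter $\Theta(\epsilon)$; the final approximation is $2^{\poly(1/\epsilon)}\cdot(1+O(\epsilon_0))=2^{\poly(1/\epsilon)}$ for $\epsilon_0$ a small enough constant.

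I expect the only real obstacle to be making the amortization above fully rigorous, in particular (i) checking that rounding $\ell_E$ to powers of $1+\epsilon_0$ really keeps the number of oracle updates to $\Otil(m)$ without distorting the metric enough to hurt the approximate-oracle multiplicative-weights guarantee, and (ii) checking that Fleischer's phases compose correctly with an oracle that returns merely \emph{approximately} shortest paths (this is where the $2^{\poly(1/\epsilon)}$ factor enters, and it is exactly the source of our constant-versus-$\omega(1)$ improvement over \cite{chuzhoy2023new}). Both points are handled in the cited prior work, so given \Cref{thm:main} the statement should follow essentially immediately; the one thing I would explicitly remark is that this reduction is a purely static meta-algorithm, so the worst-case nature of our update time is inherited directly and nothing in the reduction reintroduces amortization.
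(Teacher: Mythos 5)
Your proposal follows the same high-level route as the paper: a Garg--K\"onemann/Fleischer multiplicative-weights loop, feeding the dual vertex lengths into the oracle of \Cref{thm:main} via the induced edge length $\ell_E(\{u,v\})=(w(u)+w(v))/2$, scaling and rounding to make the lengths polynomially-bounded positive integers, and using the same bookkeeping (each vertex's weight grows at most $O(\log_{1+\delta}n)$ times, the feasibility of the scaled flow caps the total vertex load at $O(n\log n)$, queries are charged to routing steps plus one ``advance'' per commodity per phase). This is essentially the paper's Algorithm~2 and its running-time analysis.

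There is one genuine gap, and it is precisely the point the paper spends a subsection (\Cref{sect:ReportSimplePath}) on. You write that ``any non-simple walk returned by the oracle can be pruned to a simple $s_j$--$t_j$ path in $O(|P|)$ extra time without increasing its length,'' and you then bound the total path work by the load accounting $\sum_P|P_{\sp}|=\tilde O(n)$. But the pruning step you invoke is charged to the \emph{walk} length $|P|$, and the oracle of \Cref{thm:main} gives no useful upper bound on $|P|$ in terms of $|P_{\sp}|$: the only guarantee for the cheap $O(|P|)$-time path-reporting mode is $\ell_G(P)\leq 2^{\poly(1/\epsilon)}\dist_G(u,v)$, and after scaling, $\dist_G(u,v)$ can be as large as $n^{\Theta(1/\delta)}$, so $|P|$ can be polynomially larger than $|P_{\sp}|$. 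The load-accounting bound $\tilde O(n)$ applies only to the \emph{pruned} simple paths, not to the raw walks, so the total pruning time $\sum_P|P|$ is not controlled by your argument. The paper resolves this by invoking the dedicated simple-path-reporting interface of \Cref{thm:MainDetailed}, which returns a simple path $P_{\sp}$ directly in $O(|P_{\sp}|\cdot n^{\epsilon})$ time; then $\sum_P|P_{\sp}|\cdot n^{\epsilon}=\tilde O(n)\cdot n^{\epsilon}=O(n^{1+\epsilon})$ closes the loop. So the fix is simply to query the simple-path interface rather than the walk interface plus post-hoc pruning; with that change your analysis matches the paper's.

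Two minor notes. First, your rounding of $\ell_E$ to powers of $1+\epsilon_0$ to cap the number of oracle updates is a fine alternative to what the paper does, but unnecessary here: the paper just observes that each vertex's weight changes $O(\log n)$ times, so the edge-length updates in $G_\ell$ total $O(m\log n)$, which is already absorbed into $m\cdot n^{O(\epsilon)}$ given the assumption $(\log n)^{\poly(1/\epsilon)}\leq n^{\poly(\epsilon)}$. Second, the standard MWU analysis with an $\alpha$-approximate shortest-path oracle loses a factor $\alpha^2$, not $\alpha$ (once from choosing which commodity to advance, once from the returned path itself); the paper's \Cref{claim:MultiApprox} tracks this as $(1+\delta)\alpha^2\lambda$, and the final approximation is $(1+O(\delta))\alpha^2=2^{\poly(1/\epsilon)}$, which is what your statement needs but slightly stronger than the ``$\cdot(1+O(\epsilon_0))$'' you wrote.
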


Very recently, \cite{HHLRS2023emu} showed a $2^{\mathrm{poly}(1/\epsilon)}$-approximation $O((m+k)n^{\epsilon})$-time algorithm for various version of multicommodity flow (e.g.~both maximum and concurrent version, with capacities). But their technique only handle edge capacities and does not work with vertex capacity like ours.

\paragraph{Vertex Sparsifiers for Distances.}

For any graph $G$, a vertex sparsifier for distances with respect to a terminal set $T$ is a small graph $H$ containing close to $|T|$ vertices such that, for any terminals $u,v\in T$, $\dist_{H}(u,v) \approx \dist_{G}(u,v)$.

Previous dynamic algorithms for this problem either work on only incremental graphs \cite{chen2020fast,forster2023deterministic} or assume oblivious adversary \cite{forster2023bootstrapping}.
Recently, they also were used in the breakthrough results for computing maximum flow in almost-linear time \cite{chen2022maximum,van2023deterministic}. However, these dynamic algorithms assume very specific structure in the update sequence. 
We resolved these issues by showing a deterministic fully dynamic algorithm for maintaining a vertex sparsifier of size $O(|T|n^\eps)$ with $O(1)$-approximation and $O(n^\epsilon)$ amortized update time.

\begin{theorem}
\label{thm:IntroSparsifiers}
\label{thm:vertex sparsifier}
Let $G$ be an $n$-vertex fully dynamic graph with edge lengths and a fully dynamic terminal set $T\subseteq V(G)$. For any $\epsilon\in [1/\log^{\Omega(1)} n,1]$, there is a deterministic algorithm that processes a sequence of edge insertions/deletions and terminal insertions/deletions, and maintains a graph $H$ of size $|H|\leq O(|T|\cdot n^{\epsilon})$ using $O(n^{\epsilon})$ amortized update time, such that for all time and all terminals $u,v \in T$,
$\dist_{G}(u,v)\leq \dist_{H}(u,v)\leq 2^{\poly(1/\epsilon)}\cdot\dist_{G}(u,v).$
\end{theorem}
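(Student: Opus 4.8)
The plan is to realize $H$ as a hierarchy of \emph{terminal-restricted length-reducing emulators}, built on top of \Cref{thm:IntroEmulator} and \Cref{thm:MainDynED}, and to process every update — edge or terminal — as a batch in the online-batch dynamic setting. First I would invoke the standard online-batch-to-fully-dynamic reduction (\Cref{sect:Reduction}): it suffices to give an algorithm receiving the updates in batches $\pi^{(1)},\pi^{(2)},\dots$ and spending $O(|\pi^{(i)}|\cdot n^{\eps})$ time per batch in the worst case. Terminal insertions and deletions are folded into the batches by treating each change of $T$ as an atomic update of its own kind. Terminal deletions are cheap to postpone — a removed terminal may remain in $H$ as an ordinary Steiner vertex, so the distance guarantee is unaffected — but to keep $|H|=O(|T|\cdot n^{\eps})$ I would trigger a global re-initialization whenever $|T|$ has dropped by a constant factor since the last rebuild. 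This rebuild, charged to the matching terminal insertions, is exactly what turns the final bound into an \emph{amortized} (rather than worst-case) $O(n^{\eps})$.

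For the construction, fix $h=2^{\poly(1/\eps)}$ and $L=O(\log_{h}W)=O(\log n)$, where $W=n^{O(1)}$ bounds all distances. Starting from $Q_0=G$, I iterate \Cref{thm:IntroEmulator}: let $Q_{i+1}$ be the length-reducing emulator of $Q_i$ at scale $h$, so that $Q_i$ has unit lengths, captures $G$-distances up to scale $\approx h^i$, and — via the length-constrained expander decomposition underlying its construction (\Cref{thm:MainDynED}, itself built on the local flow algorithm of \Cref{thm:IntroLocalFlow}) — carries a clustering into $\poly(1/\eps)$-diameter clusters, each certified by an embedded router. Gluing $Q_0,\dots,Q_L$ by promotion edges (a copy of each $v$ at level $i$ joined to its copy at level $i+1$ by an edge of length $\Theta(2^{\poly(1/\eps)}h^i)$) yields precisely the hierarchical structure behind the distance oracle of \Cref{thm:main}; on it, distances between vertices of $G$ are faithful up to $2^{\poly(1/\eps)}$, but its size is $O(n^{1+\eps})$. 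To shrink it, I replace each level $Q_i$ by its \emph{$T$-restriction} $\tilde Q_i$: keep exactly those clusters/routers of the level-$i$ decomposition that contain a terminal, together with the few moving-cut edges incident to them, and discard everything else. Since the clusters at a fixed level overlap with multiplicity $n^{\eps}$ and each router is small, at most $O(|T|\cdot n^{\eps})$ clusters are kept, so $|\tilde Q_i|=O(|T|\cdot n^{\eps})$. Finally $H$ is $\tilde Q_0,\dots,\tilde Q_L$ glued by promotion edges with all copies of a terminal identified; then $|H|=O(L\cdot|T|\cdot n^{\eps})=O(|T|\cdot n^{\eps})$ after replacing $\eps$ by, say, $\eps/4$ throughout, so that the $\log n$, $\poly(1/\eps)$, and $2^{\poly(1/\eps)}$ overheads are absorbed into $n^{\eps}$.

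For correctness, $\dist_G(u,v)\le\dist_H(u,v)$ holds because every edge of $H$ over-estimates the $G$-distance of its endpoints after rescaling: edges of $\tilde Q_i$ inherit an embedding into $G$ from \Cref{thm:IntroEmulator} of total length at most their rescaled length, and a promotion edge over-estimates a $0$ distance; hence any $H$-path projects to a $G$-walk that is no longer. For the upper bound, given $u,v\in T$, let $i^{*}$ be the level whose scale $h^{i^{*}}$ is $\Theta(\dist_G(u,v))$ up to the $2^{\poly(1/\eps)}$ slack built into the emulators; at that level $\dist_{Q_{i^{*}}}(u,v)\le\poly(1/\eps)$, and by the neighborhood-cover property of the level-$i^{*}$ decomposition the corresponding $\poly(1/\eps)$-length $Q_{i^{*}}$-path between the two terminals runs through only $O(1)$ terminal-containing clusters and therefore survives inside $\tilde Q_{i^{*}}\subseteq H$. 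Adding the geometrically bounded cost $\sum_{j<i^{*}}\Theta(2^{\poly(1/\eps)}h^{j})=O(2^{\poly(1/\eps)}h^{i^{*}})$ of the promotion edges used to climb to level $i^{*}$ and return, we obtain $\dist_H(u,v)\le 2^{\poly(1/\eps)}\cdot\dist_G(u,v)$; crucially the distortion is charged to the single level $i^{*}$ and does not compound over the hierarchy.

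For the running time, maintaining each $\tilde Q_i$ under a batch costs $|\pi^{(i)}|\cdot\poly(h)\cdot n^{\eps}=|\pi^{(i)}|\cdot 2^{\poly(1/\eps)}\cdot n^{\eps}$ by \Cref{thm:IntroEmulator} and \Cref{thm:MainDynED}, and a batch propagates through the $L=O(\log n)$ levels, so the per-batch cost is $O(|\pi^{(i)}|\cdot n^{\eps})$ after the $\eps$-rescaling; composing with the online-batch reduction and the rebuild schedule gives $O(n^{\eps})$ amortized update time. The step I expect to be the real obstacle is the size bound together with its dynamic maintenance: one must verify that at \emph{every} one of the $\Theta(\log n)$ scales the terminals occupy only $O(|T|\cdot n^{\eps})$ clusters of the length-constrained expander decomposition; that the $T$-restrictions can be chosen consistently across levels so that terminal-to-terminal emulator shortest paths are not destroyed; and — hardest — that all of this can be kept up to date under batched edge \emph{and} terminal updates within the stated time budget, even though inserting a single terminal may in principle perturb every level of the hierarchy. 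This is essentially the content of the dynamic vertex sparsifier developed in \Cref{sect:DynSparsifier} and generalized in \Cref{sect:ExtendedSparsifier}; the remainder of the argument is bookkeeping on top of \Cref{thm:IntroEmulator}, \Cref{thm:MainDynED}, and the online-batch reduction.
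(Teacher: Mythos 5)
The proposal has a genuine gap at its central step, the ``$T$-restriction,'' and the student correctly senses where the trouble lies but does not resolve it. When you replace each level $Q_i$ by the sub-emulator $\tilde Q_i$ consisting only of clusters that contain a terminal, you destroy exactly the paths you need. Recall how a terminal-to-terminal path is certified in the emulator: it climbs the expander hierarchy inside $Q_{i^*}$, and at every internal level $k$ it traverses a cluster of ${\cal N}_k$ that contains the \emph{landmarks} $u_{k+1},v_{k+1}\in L_k$, not terminals. These landmarks are vertices of the intermediate sparsifier $H_{k+1}$; there is no reason for the clusters that hold them to contain any member of $T$. So ``keep exactly those clusters that contain a terminal'' removes the very clusters the $\poly(1/\eps)$-length path between $u$ and $v$ walks through, and the upper-bound argument $\dist_H(u,v)\le 2^{\poly(1/\eps)}\dist_G(u,v)$ collapses. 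The paper avoids this precisely by defining the extended terminal set $\bar T = T \cup \bigcup_j L_j$ in \Cref{thm:CoversToEmulators} and \Cref{thm:NonHopReducingEmulator}, restricting the covers to $\bar T$ rather than to $T$. That shifts the whole difficulty onto controlling $|\bar T|$ and its recourse across $\Theta(\log n)$ length scales and $O(1/\eps^2)$ hierarchy levels, which is what the landmark machinery of \Cref{sect:DynamicCertifiedED} and the $\rho$-dense certified-EDs of \Cref{sect:DynDenseED} are for. Nothing in your proposal substitutes for that.

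A second, independent issue is the size bound. Even after correcting the restriction to include landmarks, the one-shot sparsifier of \Cref{thm:NonHopReducingEmulator} has size $\kappa_{H,\size}\cdot(|T|+\phi\,|G|)$, not $O(|T|\cdot n^\eps)$: there is an additive $\phi\,|G|$ term coming from the initial moving cuts and landmark sets. The paper removes it by telescoping (\Cref{lemma:SmallSparsifierLowDist}): iterate ``build a sparsifier of $T$ on $H_k$'' for $O(\log_{1/\phi} n)=O(1/\eps^2)$ rounds so that the $\phi\,|G|$ term shrinks geometrically to $O(|T|)$. Your proposal never shrinks the graph on which the decomposition lives, so $\tilde Q_i$ would still carry an $n^\eps$-dependent factor of the ambient graph size rather than of $|T|$. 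Finally, as a minor point on the amortization: the generic online-batch reduction of \Cref{lemma:Reduction} maintains $2^{O(1/\eps)}$ parallel instances and hence cannot by itself give a single explicit $H$; the paper's \Cref{thm:FullyDynamicSparsifier}(\ref{SparsifierAmortized}) uses a separate base-$b$ counter scheduling to get one explicit graph with amortized time. Your ``rebuild when $|T|$ halves'' device addresses terminal deletions but not this instance-multiplicity issue. The structural gap --- the $T$-restriction breaking landmark-mediated paths and the consequent need for the extended-terminal / density / telescoping apparatus --- is the one that needs a real idea; the rest is indeed bookkeeping.
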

We can also obtain worst-case update time by maintaining $2^{O(1/\epsilon)}$ candidates sparsifiers instead of one explicit sparsifier. See \Cref{thm:FullyDynamicSparsifier} for the formal statement.
Very recently, \cite{KMP23} independently showed a fully dynamic deterministic vertex sparsifier of size $|T|n^{o(1)}$ with $n^{o(1)}$ approximation and worst-case update time.

\paragraph{Future Work: Dynamic Min-Cost Flows.}
\cite{HHLRS2023emu} very recently introduced a stronger version of length-reducing emulators that captures both flow and distance information, and used it to develop very fast constant-approximation algorithms for (min-cost) multi-commodity flow problems. We believe it is possible to enhance our dynamic expander hierarchy and emulators to capture flow information (as done by \cite{HHLRS2023emu} but in the static setting) and obtain dynamic low-step flow-emulators for unit-capacity graphs as well. This would be a powerful tool for dynamic min-cost multi-commodity flow algorithms and other applications.%

\subsection{An Independent Work}
\label{sect:IndependentWork}
An independent work by Kyng, Meierhans and Probst Gutenberg \cite{KMP23}\footnote{Their paper and the first version of our paper were both finished slightly before the submission deadline of STOC 2024 (Nov 13, 2023).} also obtains deterministic fully dynamic approximate distance oracles with worst-case update time using completely different techniques. Precisely, their oracle has $\exp(O(\log^{6/7} n\log\log n)) = n^{o(1)}$ approximation, $\exp(O(\log^{20/21} n\log\log n)) = n^{o(1)}$ worst-case update time and $O(\log^{2} n)$ query time. For comparison, our distance oracle has $\exp(\poly(1/\epsilon))$ approximation, $n^{\epsilon} = \exp(\eps \log n)$ worst-case update time and $O((1/\epsilon)^4  \cdot \log\log n)$ query time, for any parameter $\epsilon\in[1/\log^{c}n,1]$ where $c>0$ is a small constant. By setting $\eps = 1/\log^{c}n$, we can achieve similar upper bounds with $\exp(\log^{1-\Omega(1)}n) = n^{o(1)}$ overhead, but we can further achieve $O(1)$ approximation by setting $\eps$ to be an arbitrarily small constant at the expense of a small polynomial overhead in update time.
We note that the \cite{KMP23} distance oracle supports path-reporting queries, but the oracle in the first version of our manuscript does not. We note that our current version, which is finished after \cite{KMP23} was published, includes results about path reporting.

Additionally, both papers build deterministic fully dynamic distance-preserving vertex sparsifiers that support insertions and deletions of edges and terminal vertices on the way to their distance oracle results 
(see \cite[Theorem 1.1]{KMP23} and \Cref{thm:IntroSparsifiers}). 
The sparsifier in \cite{KMP23} has $\exp(O(\log^{20/21}n\log\log n)) = n^{o(1)}$ approximation, $\exp(O(\log^{20/21}n\log\log n)) = n^{o(1)}$ \emph{worst-case} update time and size $|A|\cdot\exp(O(\log^{20/21}n\log\log n)) = |A|\cdot n^{o(1)}$, where $A$ denotes the set of terminals. For comparison, our sparsifier has $\exp(\poly(1/\epsilon))$ approximation, $O(n^{\epsilon})$ \emph{amortized} update time and size $O(|A|\cdot n^{\epsilon})$, for any parameter $\epsilon\in[1/\log^{c}n, 1]$ where $c>0$ is a sufficiently small constant. We note that our dynamic sparsifer algorithm can guarantee $O(n^{\epsilon})$ \emph{worst-case} update time when it maintains $2^{O(1/\epsilon)}$ candidate sparsifiers instead of only one explicit sparsifier.

\paragraph{Organziation}
We give an overview of the paper in the next section and preliminaries in \Cref{sect:prelim}. The dependency between the remaining sections is shown in \Cref{figure:dependency}.
\begin{figure}[ht]
\centering

\begin{tikzpicture}[auto]
\tikzset{
    mynode/.style={rectangle, draw=red, thick, fill=red!10, text width=5cm, text centered, rounded corners, minimum height=1cm},
    arrow/.style={->, >=latex', shorten >=2pt, thick},
    line/.style={-, thick}
}

\node[mynode] (Section 4) at (0,0) {\Cref{sect:LocalLengthConstrainedFlow}\\ Local Flow};
\node[mynode] (Section 5) at (6,0) {\Cref{sect:DynamicRouter}\\Dynamic Routers\\with Path Reporting};
\node[mynode] (Section 6) at (3,-2) {\Cref{sect:DynamicCertifiedED}\\Expander Decomposition\\ with Landmarks};
\node[mynode] (Section 7) at (9,-2) {\Cref{sect:DynDenseED}\\Expander Decomposition\\ with Landmarks and Density};
\node[mynode] (Section 8) at (9,-4) {\Cref{sect:DynSparsifier}\\Vertex Sparsifiers for Bounded Distance};
\node[mynode] (Section 9) at (3,-4) {\Cref{sect:ExpanderHierarchy}\\Expander Hierarchy};
\node[mynode] (Section 10) at (0,-6) {\Cref{sect:OracleShort}\\Distance Oracles for Bounded Distance};
\node[mynode] (Section 11) at (6,-6) {\Cref{sect:DynHopEmu}\\Length-reducing Emulators};
\node[mynode] (Section 12) at (3,-8) {\Cref{sect:Oracle}\\Distance Oracles};
\node[mynode] (Section 13) at (9,-8) {\Cref{sect:ExtendedSparsifier}\\Vertex Sparsifiers};
\node[mynode] (Section 14) at (3, -10)
{\Cref{sect:MultiFlow}\\Multicommodify Maxflows};

\draw[arrow] (Section 4) -- (Section 6);
\draw[arrow] (Section 5) -- (Section 6);
\draw[arrow] (Section 6) -- (Section 7);
\draw[arrow] (Section 7) -- (Section 8);
\draw[arrow] (Section 8) -- (Section 9);
\draw[arrow] (Section 6) -- (Section 9);
\draw[arrow] (Section 9) -- (Section 10);
\draw[arrow] (Section 9) -- (Section 11);
\draw[arrow] (Section 10) -- (Section 12);
\draw[arrow] (Section 11) -- (Section 12);
\draw[arrow] (Section 8) -- (Section 13);
\draw[arrow] (Section 11) -- (Section 13);
\draw[arrow] (Section 12) -- (Section 14);

\end{tikzpicture}
\caption{Dependencies between sections in this paper.\label{figure:dependency}}

\end{figure}
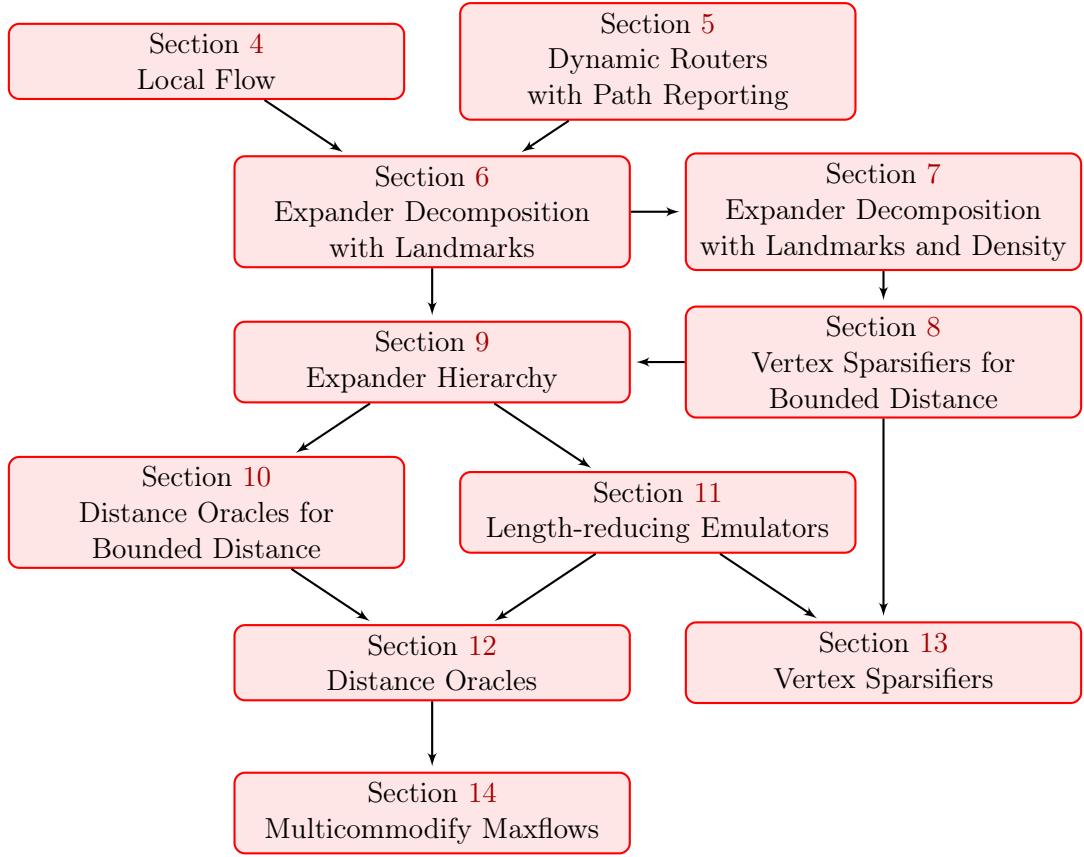

\section{Overview}

\label{sect:overview}
In this overview, for simplicity, we assume that all input graphs have constant degree. Moreover, we consider the \emph{online-batch dynamic} setting, in which a dynamic algorithm only needs to handle a constant number of \emph{batched updates} $\pi^{(1)},\pi^{(2)},...,\pi^{(t)}$, instead of a long sequence of \emph{unit updates}. The update time to handle $\pi^{(i)}$ should be $|\pi^{(i)}|\cdot n^{O(\epsilon)}$ \emph{in the worst case}. Then, a standard reduction can transform an online-batch dynamic algorithm to a fully dynamic algorithm with \emph{worst-case} update time.

For better understanding, we want to clarify the difference between online-batch dynamic algorithms and fully dynamic algorithms with \emph{amortized} update time. The former is stronger than the latter in the sense that an online-batch dynamic algorithm can be transformed into a fully dynamic algorithm with amortized (even worst-case) update time, but generally a fully dynamic algorithm with amortized update time does not work in the online-batch dynamic setting. Precisely, for a fully dynamic algorithm has small amortized update time, it may have worst-case update time $\Omega(n)$ for one unit update (it is indeed the case for all previous dynamic distance oracles using Even-Shiloach trees), so when put it into the online-batch dynamic setting, its worst-case update time $\Omega(n)$ for a batch $\pi^{(i)}$ cannot be bounded by any function of $|\pi^{(i)}|$ (since $|\pi^{(i)}|$ can be extremely small, e.g. a constant).

Consider an online-batch dynamic algorithm with a dynamic object ${\cal O}_{\iin}$ under batched updates $\pi^{(1)},...,\pi^{(t)}$ as the input (generally ${\cal O}_{\iin}$ is a graph) and another object ${\cal O}_{\out}$ as the output. The \emph{recourse} of ${\cal O}_{\out}$ at time $i$, denoted by $\recourse({\cal O}^{(i-1)}_{\out} \to {\cal O}^{(i)}_{\out})$, is the size of the batched update (generated by the algorithm) to update ${\cal O}^{(i-1)}_{\out}$ to ${\cal O}^{(i)}_{\out}$ (${\cal O}^{(i-1)}_{\out}$ and ${\cal O}^{(i)}_{\out}$ are the ${\cal O}_{\out}$ right before and after the algorithm handles $\pi^{(i)}$). In our main body, the recourse $\recourse({\cal O}^{(i-1)}_{\out} \to {\cal O}^{(i)}_{\out})$ will be \emph{worst-case} bounded in the form $f(n)\cdot |\pi^{(i)}|$, and $f(n)$ is usually a small polynomial e.g. $n^{O(\epsilon^{4})}$, but in this overview, we will say the recourse (amortized inside \emph{each batched update}) of ${\cal O}_{\out}$ is $f(n)$ for simplicity. See \Cref{sect:OnelineBatch} for more discussions.

\subsection{From Length Constrained Expander Hierarchy to Distance Oracles}

As a warm-up, we start from discussing how a simplified version of the static expander hierarchy from \cite{HRG22} gives a structure that supports constant-approximate distance queries. Our high-level approach to obtain dynamic distance oracles is to dynamize this structure, and we will point out the key challenges we need to overcome at the end of this subsection.

\paragraph{Preliminaries on Length Constrained Expanders.}

We first give an over-simplified description (which may be incorrect) of some concepts related to length constrained expanders.
Roughly speaking, a graph $G$ is an \emph{$h$-length $\phi$-expander} for a terminal set $T\subseteq V(G)$ if, for any $h$-length unit demand $D$ (roughly speaking, $D$ is a collection of \emph{demand pairs} where each vertex $v\in V(G)$ appears in at most one pair and each pair $(u,v)\in D$ has $\dist_{G}(u,v)\leq h$), one can embed $D$ into $G$ with length $O(h)$ and congestion $O(1/\phi)$, where the embedding can be interpreted as a mapping from demand pairs to embedding paths on $G$ s.t. all embedding paths have length at most $O(h)$ and each edge in $G$ is used by at most $O(1/\phi)$ embedding paths. 
Furthermore, an $h$-length $\phi$-expander w.r.t. $T$ admits a \emph{certificate} $({\cal N}, {\cal R}, \Pi_{{\cal N}\to G})$ that includes the following.
\begin{itemize}
\item ${\cal N}$ is a pairwise cover of $T$ on $G$ with \emph{covering radius} $h_{\cov} = h$ and \emph{width} $\omega = O(1)$. Roughly speaking, ${\cal N}$ is a collection of vertex sets $S\subseteq T$ (called \emph{clusters}), which can be partitioned into $\omega$ many \emph{clusterings}, where each clustering is a collection of disjoint clusters. The covering radius means each pair of vertices $u,v\in V(G)$ with $\dist_{G}(u,v)\leq h_{\cov}$ has a cluster $S$ s.t. $u,v\in S$. We note that, the following two objects ${\cal R}$ and $\Pi_{{\cal R}\to G}$ will certify that each cluster $S\in {\cal N}$ has diameter $O(h)$ on $G$, i.e. $\dist_{G}(u,v)\leq O(h)$ for all $u,v\in S$.
\item ${\cal R} = \{R^{S}\mid S\in{\cal N}\}$ is a set of routers, one for each cluster $S\in {\cal N}$, where $V(R^{S}) = S$. A \emph{router} $R$ is a graph that can route any unit-demand on $V(R)$ with $O(1)$ length and $O(1)$ congestion. For better understanding, we can interpret a router as an $\Omega(1)$-classic expander with diameter $O(1)$.

\item $\Pi_{{\cal R}\to G}$ is an embedding of the union of routers $\bigcup_{R\in{\cal R}} R$ into $G$ with length $O(h)$ and congestion $O(1/\phi)$. The embedding here follows the same definition as above, except that it matches router edges to embedding paths. %
\end{itemize}

An \emph{integral moving cut} $C$ is an integral function on $E(G)$. Its size is $|C| = \sum_{e}C(e)$. The graph after cutting $C$ on $G$, denoted by $G-C$, has the same set of vertices and edges but different edge lengths $\ell_{G-C}$ s.t. $\ell_{G-C}(e) = \ell_{G}(e) + C(e)$ for each $e\in E(G)$. A \emph{$h$-length $\phi$-expander decomposition w.r.t. terminals $T$} is an integral moving cut $C$ s.t. $|C|\leq O(\phi\cdot h\cdot |T|)$ and $G-C$ is an $h$-length $\phi$-expander with respect to $T$. 

A \emph{certified $h$-length $\phi$-expander decomposition} (called $(h,\phi)$-certified-ED for short), denoted by $(C,{\cal N},{\cal R},\Pi_{{\cal R}\to G})$, is an object containing both the expander decomposition $C$ and a certificate of $G-C$, except that the embedding $\Pi_{{\cal R}\to G}$ has $h$-length embedding paths on $G$ rather than $G-C$. In other words, an $(h,\phi)$-certified-ED is only a weak form of expander decomposition, which guarantees some kind of flow characterization about $G$ and $C$ but does not certify that $G-C$ is an $h$-length $\phi$-expander. In particular, the routers ${\cal R}$ and embedding $\Pi_{{\cal R}\to G}$ will now guarantee that each cluster $S\in{\cal N}$ has diameter $O(h)$ on $G$ (instead of $G-C$), i.e. $\dist_{G}(u,v)\leq O(h)$ for each $u,v\in S$, so in this context, we say the \emph{real diameter} of ${\cal N}$ is $O(h)$.

\paragraph{The Expander Hierarchy and the Query Algorithm.} Let $G$ be the input graph with edge length function $\ell_{G}$. Suppose we already have a (static) \emph{$h$-length expander hierarchy} with $\bar{k} = O(1/\epsilon^{2})$ levels, which is defined in an algorithmic way as follows.

We let $h_{1} = 4 h$ and let $h_{k} = O(h_{k-1})$ with sufficiently large hidden constant factor for each level $2\leq k\leq \bar{k}$. Fix $\phi = 1/(n^{\epsilon^{2}}\cdot h_{\bar{k}})$. We will grow the hierarchy in a bottom-up manner. Let $T_{1} = V(G)$ be the first-level terminal set. Starting from the first level, for each level $k\geq 1$, we construct an $(h_{k},\phi)$-certified-ED $(C_{k},{\cal N}_{k},{\cal R}_{k},\Pi_{{\cal R}_{k}\to G})$ of $T_{k}$ on $G$, and then define the next-level terminals to be $T_{k+1} = V(\supp(C_{k}))$, which includes all vertices incident to edges $e$ with $C_{k}(e)>0$. The hierarchy stops growing when we reach a level $k$ with empty $T_{k+1}$ (which implies $C_{k}$ is zero at all edges), and this is the highest level, denoted by $\bar{k}$. Note that number of levels can be bounded by $\bar{k}\leq O(1/\epsilon^{2})$ because $|T_{k+1}|\leq O(|C_{k}|)\leq O(\phi\cdot h_{k}\cdot |T_{k}|)\leq O(|T_{k}|/n^{\epsilon^{2}})$.

Given two vertices $u,v\in V(G)$, the query algorithm should correctly declare either (\textsc{Close}) $\dist_{G}(u,v)\leq 2^{O(\bar{k})}\cdot h$ or (\textsc{Far}) $\dist_{G}(u,v)>h$. The algorithm is as follows. We climb up the hierarchy from the first level. Suppose at level $k$, we have two vertices $u_{k},v_{k}\in T_{k}$ as the input (the initial input satisfies $u,v\in T_{1} = V(G)$ at the first level). Then we will perform the following steps. 
\begin{enumerate}
\item If there exists a cluster $S\in{\cal N}_{k}$ s.t. $u_{k},v_{k}\in S$ (which will hold if $\dist_{G-C_{k}}(u_{k},v_{k})\leq h_{k}$ by the covering radius of ${\cal N}_{k}$), then we will output \textsc{Close}.
\item Otherwise, we have $\dist_{G-C_{k}}(u_{k},v_{k})>h_{k}$ by the covering radius of ${\cal N}_{k}$. If now $k=\bar{k}$ is the top level, we will output \textsc{Far}. 
\item Otherwise, we have $k\leq \bar{k}-1$. We pick an arbitrary $u_{k+1}\in T_{k+1}$ s.t. there exists a cluster $S\in {\cal N}_{k}$ s.t. $u_{k},u_{k+1}\in S$, and we pick $v_{k+1}$ similarly. If there is no such $u_{k+1}$ or $v_{k+1}$, we will output \textsc{Far}, otherwise proceed to the next level. 
\end{enumerate}

If the algorithm outputs \textsc{Close} in step 1 at some level $k^{\star}$, then indeed $\dist_{G}(u,v)\leq 2^{O(\bar{k})}\cdot h$ by the following reasons. At each level $1\leq k\leq k^{\star}-1$, we pick another $u_{k+1}$ (resp. $v_{k+1}$) that is quite close to $u_{k}$ (resp. $v_{k}$). Precisely, we have $\dist_{G}(u_{k},u_{k+1}),\dist_{G}(v_{k},v_{k+1})\leq O(h_{k})$ by the real diameter of ${\cal N}_{k}$. Moreover, level $k^{\star}$ guarantees that $\dist_{G}(u_{k^{\star}},v_{k^\star})\leq O(h_{k^{\star}})$, i.e. it is at most the real diameter of ${\cal N}_{k^{\star}}$. Therefore, $\dist_{G}(u,v)\leq O(h_{k^{\star}}) + 2\sum_{1\leq k\leq k^{\star}-1}O(h_{k}) \leq 2^{O(\bar{k})}$.

Suppose the initial input $u,v$ has $\dist_{G}(u,v)\leq h$. We claim that the algorithm must end with \textsc{Close} by the following reasons. Let $k$ be the current level. If the algorithm ends at step 1, it is good. Otherwise, we assume $\dist_{G}(u_{k},v_{k})\leq h_{k}/4$, which can be shown by induction as we will see (it holds at level $1$ trivially). Then, we know the algorithm will not output \textsc{Far} at step 2, because when $k=\bar{k}$, we have $C_{k}$ is a zero function and $\dist_{G-C_{k}}(u_{k},v_{k}) = \dist_{G}(u_{k},v_{k}) \leq h_{k}/4$, contradicting the statement of step 2. About step 3, consider the shortest $u_{k}$-to-$v_{k}$ path $P$ on $G$. The key observation is that $\ell_{G}(P)\leq h_{k}/4$ but $\ell_{G-C_{k}}(P)>h_{k}$. Hence, there must exist $u_{k+1}\in T_{k+1}\cap P$ s.t. $\dist_{G-C_{k}}(u_{k},u_{k+1})\leq h_{k}/4$.\footnote{For example, $u_{k+1}$ can be the $P$-vertex closest to $u_{k}$ and incident to some edge $e\in P$ with $C_{k}(e)\geq 0$. Then $u_{k+1}$ satisfies that $u_{k+1}\in V(\supp(C_{k}))\cap P = T_{k+1}\cap P$ and $\dist_{G-C_{k}}(u_{k},u_{k+1}) = \dist_{G}(u_{k},u_{k+1})\leq \ell_{G}(P)\leq h_{k}/4$.} We can find such $u_{k+1}$ successfully because ${\cal N}_{k}$ has covering radius $h_{k}$. The same argument holds for $v_{k+1}$, so the algorithm will not end at step 3. Finally, to complete the induction, we have $\dist_{G}(u_{k+1},v_{k+1})\leq \dist_{G}(u_{k+1},u_{k}) + \dist_{G}(u_{k},v_{k}) + \dist_{G}(v_{k},v_{k+1})\leq O(h_{k}) + h_{k}/4 + O(h_{k})\leq h_{k+1}/4$, where the first inequality is by the real diameter of ${\cal N}_{k}$ and the second inequality is from the definition of $h_{k+1}$.

\paragraph{Obstacles to Dynamizing the Hierarchy.} This warm-up scenario motivates us to dynamize the length-constrained expander hierarchy to obtain dynamic constant-approximate distance oracles. However, there are some obstacles along the way. 

The primary task is to develop a \emph{dynamic certified-ED} algorithm. However, having dynamic certified-ED algorithms is not sufficient to make the hierarchy dynamic because of the following inherent obstacle. To bound the number of levels in the hierarchy, we need that at each level $k$, the size of next-level terminal set will drop by a factor $n^{\epsilon^{2}}$. Hence, intuitively, when $|T_{k+1}|$ becomes too large compared to $|T_{k}|/n^{\epsilon^{2}}$ (this will happen when the current update batch $\pi^{(i)}$ is too large, say $|\pi^{(i)}|$ reaches $|T_{k}|$, and quite a lot of new terminals are added into $T_{k}$ by the dynamic certified-ED algorithm), we have to reinitialize the dynamic certified-EDs on this level and above. However, it is unaffordable to use roughly $O(|G|)$ time to initialize it in a batched update with size $|T_{k}|$, because $|T_{k}|$ can be much smaller than $|G|$.

Besides, our dynamic certified-ED algorithm has a technical limitation. That is, we can only maintain certified-ED w.r.t. the \emph{whole vertex set} $V(G)$ instead of a subset of terminals $T$ (we will see the reason in the next subsection).

To overcome both obstacles, we develop a \emph{dynamic vertex sparsifier} (preserving pairwise distance between terminals) algorithm. Before constructing the certified-ED w.r.t. $T_{k+1}$ at the next level, we first construct a vertex sparsifier $H_{k+1}$ of $T_{k+1}$ on $H_{k}$, s.t. $|H_{k+1}|$ is roughly the same as $|T_{k+1}|$. At the next level $k+1$, we maintain a certified-ED on $H_{k+1}$ w.r.t. $V(H_{k+1})\supseteq T_{k+1}$. Intuitively, doing this will keep reducing the size of the working graph $H_{k}$ when we climb up, and now the reinitialization at level $k$ takes only $O(|H_{k}|)$ time instead of $O(|G|)$.

\paragraph{Obstacles to Going Beyond Bounded-Distance.} The last issue is that the batched update time of our dynamic certified-EDs and sparsifiers will depend on $\poly(h)$. If we want $n^{O(\epsilon)}$ update time, it means we can only obtain \emph{dynamic distance oracles for bounded distances} that support query on $u,v$ when the real distance between $u,v$ is bounded by $h=n^{\epsilon}$.

To bypass this, we develop a \emph{dynamic length-reducing emulator} algorithm. Roughly speaking, an $h$-length-reducing emulator $Q$ of $G$ is a unit-edge-length graph with distance metric $\dist_{Q}(\cdot,\cdot)\approx \dist_{G}(\cdot,\cdot)/h$ for $h=n^{\epsilon}$. We can construct an emulator by exploiting the expander hierarchy again. The key intuition is that, for two vertices $u,v$ with $\dist_{G}(u,v)\leq h$, we only need to jump up level by level to reach two vertices $u_{k^{\star}},v_{k^{\star}}$ which are inside some cluster $S\in{\cal N}_{k^{\star}}$, as discussed above. Furthermore, each jump from $u_{k}$ to $u_{k+1}$ (or $v_{k}$ to $v_{k+1}$) is also jumping inside a cluster $S\in{\cal N}_{k}$. Therefore, for each cluster $S\in{\cal N}_{k}$, we construct a star graph $Q^{\star}_{S}$ with vertices $S\cap (T_{k}\cup T_{k+1})$ as leaves, and define the emulator to be $Q = \bigcup_{S\in{\cal N}_{k},1\leq k\leq \bar{k}} Q^{\star}_{S}$. Then each vertices $u,v$ with $\dist_{G}(u,v)\leq h$ will have $\dist_{Q}(u,v) \leq 2\cdot\bar{k}$, i.e. the distance drops by roughly an $h$ factor.

Providing the dynamic emulators, a standard \emph{stacking} technique will construct a chain of graphs $G_{1},G_{2},...,G_{\bar{x}}$ s.t. $G_{x}$ is an emulator of $G_{x-1}$, and $\bar{x} = \log_{h}\poly(n) = O(1/\epsilon)$ is sufficiently large (so the last graph $G_{\bar{x}}$ has maximum distance at most $h = n^{\epsilon}$). Then, by maintaining a bounded-distance oracle on each of $G_{x}$, we can query a pair $(u,v)$ with general distance by looking at the appropriate $G_{x}$.

\subsection{Dynamic Certified Expander Decomposition}
As a prerequisite step towards our dynamic expander hierarchy, we need to discuss how to maintain a certified $h$-length $\phi$-expander decomposition $(C,{\cal N},{\cal R},\Pi_{{\cal R}\to G})$ on $G$ (w.r.t. $V(G)$). 
One of the tools we develop is a \emph{dynamic router} algorithm in \Cref{sect:DynamicRouter}, which supports matching insertion and batched edge deletion to the router, where the former will prune some vertices out (to ensure the remaining graph is still a router) and the latter will simply add the matching (the graph is still a router after adding the matching). %

In the dynamic certified-ED algorithm, suppose that we want to apply a batched edge deletion update $\pi^{(i)}$ to $G$. One of the key ideas is to exploit the low-congestion embedding $\Pi_{{\cal R}\to G}$. Precisely, removing edges in $\pi^{(i)}$ will only destroy at most $O(|\pi^{(i)}|/\phi)$ many embedding paths, because $\Pi_{{\cal R}\to G}$ has congestion $O(1/\phi)$. We then update our dynamic routers in ${\cal R}$ with a batched edge deletion including all the router edges corresponding to destroyed embedding paths. The dynamic router algorithm will generate a set of pruned vertices $V_{\prune}\subseteq V(G)$ with size roughly $O(|\pi^{(i)}|/\phi)$. One can easily verify that after removing $V_{\prune}$, the remaining $(C',{\cal N}',{\cal R}',\Pi_{{\cal R}'\to G})$ is a certified-ED w.r.t. $V(G)\setminus V_{\prune}$.

The next step is to add $V_{\prune}$ back to $(C',{\cal N}',{\cal R}',\Pi_{{\cal R}'\to G})$. Let $V' = V(G)\setminus V_{\prune}$. In this step we will exploit a subroutine computing 
\emph{cutmatch} between $V_{\prune}$ and $V'$, and we design a \emph{local cutmatch} algorithm with local running time $O(|V_{\prune}|)$. Roughly speaking, the cutmatch between $V_{\prune}$ and $V'$ will partition $V_{\prune},V'$ into matching parts $V_{\prune,M}, V'_{M}$ and unmatched parts $V_{\prune,U},V'_{U}$, s.t. vertices in $V_{\prune,M}$ are matched to $V'_{M}$ with length $h/3$ and congestion $1/\phi$, but $V_{\prune,U}$ is $h/3$-far from $V'_{U}$ by adding an additional cut $C_{\CM}$ with size at most $O(\phi\cdot h\cdot |V_{\prune,M}|) = O(h\cdot |\pi^{(i)}|)$. Then intuitively, we can simply add $V_{\prune,M}$ back to $(C',{\cal N}',{\cal R}',\Pi_{{\cal R}'\to G})$, which leads to slightly worse quality parameters $h$ and $\phi$ (because we only need to handle constant many batched updates, so we are fine with the loss). For $V_{\prune,U}$, we can take care of it by initializing a new certified-ED on the \emph{local graph} $G[V_{\prune}\cup V'_{M}]$, because $V_{\prune,U}$ is kind of far from $V'_{U}$. Lastly we just need to compose the two certified-EDs of these two parts.

Dealing with batched edge insertion update is much more simpler, because we can assign cut value $h$ to each new edge. Intuitively, this will block all new edges, so the certified-ED after adding this cut still have the same quality $h$ and $\phi$ on the new graph. 

Recall that we mentioned the embedding $\Pi_{{\cal R}\to G}$ in a certified-ED is on $G$ rather than $G-C$, and now we have an explanation: we will add new cuts during the maintenance. Furthermore, We cannot maintain certified-EDs for a subset of terminals because the local cutmatch algorithm requires $V_{\prune}\cup V' = V(G)$. The update time of the dynamic certified-ED algorithm is roughly $|\pi^{(i)}|\cdot \poly(h)/\phi$. The dependency on $\poly(h)$ is because the initialization algorithm of certified-ED and the local cutmatch algorithm have running time depending on $\poly(h)$.

\paragraph{Local Length-Constrained Maxflow.}

In more details, to obtain our local cutmatch algorithm, actually we localize the length-constrained approximate maxflow algorithm in \cite{haeupler2023maximum}, which may be of independent interest. A key step is that we give a black-box localization of a key subroutine called $h$-length lightest path blockers in \cite{haeupler2023maximum}. Then, we can obtain the localized algorithms of length-constrained maximum flow and cutmatch because the previous reductions to lightest path blockers in \cite{haeupler2023maximum} are easy to localize.

\paragraph{Reduce the Recourse Using Landmarks.} By our argument above, handling each batched update $\pi^{(i)}$ will add a new cut with size roughly $O(h\cdot |\pi^{(i)}|)$. However, this factor $h$ in the size of the new cut is a trouble by the following reason. Recall our construction of the expander hierarchy. If a cut $C_{k}$ has its size increased by $O(h\cdot |\pi^{(i)}|)$, then in the worst case the number of new terminals added into $T_{k}$ can be roughly $h\cdot |\pi^{(i)}|$ (in other words, the terminal set $T_{k}$ defined via cut $C_{k}$ will have recourse $h$). Then unavoidably, when we dynamize the hierarchy using dynamic vertex sparsifiers, this factor $h$ will aggregate multiplicatively when we climb up, and the sparsifier at the highest level may have recourse $h^{O(1/\epsilon^{2})} = (n^{\epsilon})^{O(1/\epsilon^{2})} = \Omega(n)$, which makes it impossible to achieve fast update time. In fact, this factor $h$ in the recourse will bring the same issue when we use the dynamic hierarchy to design the dynamic emulator, because we will work on $O(1/\epsilon)$ levels of emulators at last.

To solve this issue, we introduce the notion of \emph{landmarks}. Roughly speaking, a landmark set $L$ with distortion $\sigma$ of an integral moving cut $C$ on $G$ is a subset of $V(G)$ such that for each vertex $v\in V(\supp(C))$ (called a $C$-vertex), there is a landmark $w\in L$ with $\dist_{G-C}(u,w)\leq \sigma$. When maintaining an $(h,\phi)$-certified-ED, we will also maintain the landmark set $L$ of $C$ on $G$ with distortion $\sigma \leq h/n^{\poly\epsilon}$. In other words, the representation of the certified-ED now becomes $(C,L,{\cal N},{\cal R},\Pi_{{\cal R}\to G})$. Given $C$, we will be able to initialize a landmark set $L$ with size $|L|\leq O(|C|/\sigma)\approx O(|C|/h)$, and furthermore, when the dynamic certified-ED algorithm adds a new cut $C_{\new}$, we can also extend the landmark set to keep it valid, by adding $O(|C_{\new}|/\sigma)\approx O(|C_{\new}|/h)$ new landmarks. We are not going to introduce the initialization and maintenance of landmark sets here, but it is actually non-trivial. 

Providing the landmark set, the way to remove the $h$ factor in the recourse of sparsifers in the hierarchy is simply using the landmarks $L_{k}$ as terminals $T_{k}$ instead of $C_{k}$-vertices (i.e. set $T_{k} = L_{k}$). This will not affect the correctness of the query algorithm since each $C_{k}$-vertex is quite close to a landmark on $G-C$. We will discuss this again in details after we introducing dynamic vertex sparsifiers.

\subsection{Dynamic Vertex Sparsifiers for Bounded Distances}

Next, we discuss our dynamic vertex sparsifier algorithm. Let $T$ be the terminal set. We say a graph $H$ with edge length $\ell_{H}$ is a $(\alpha_{\low},\alpha_{\up},h)$-sparsifier if (1) $T\subseteq V(H)$, (2) each pair $u,v\in T$ has $\dist_{H}(u,v)\cdot\alpha_{\low}\geq \dist_{G}(u,v)$, and (3) each pair $u,v\in T$ s.t. $\dist_{G}(u,v)\leq h$ has $\dist_{H}(u,v)\leq \alpha_{\up}\cdot \dist_{G}(u,v)$. That is, $H$ approximately preserves pairwise distances between all terminals whose distance in $G$ is at most $h$.

We first show how to construct a sparsifier $H$ statically, given a collection of $(h_{j},\phi)$-certified-EDs $(C_{j},L_{j},{\cal N}_{j},{\cal R}_{j},\Pi_{{\cal R}_{j}\to G})$ with $h_{j} = 2^{j}$, where $j$ has range from $1$ to $\bar{j} = \log h = O(\log n)$. Let $\bar{T} = T\cup \bigcup_{j}V(\supp(C_{j}))$ be the union of original terminals and $C_{j}$-vertices (vertices that incident to some edge $e$ with $C_{j}(e)>0$) of all certified-EDs ($\bar{T}$ is called the \emph{extended terminals}). Then for each pairwise cover ${\cal N}_{j}$ and each cluster $S\in{\cal N}_{j}$, we construct a star graph $H^{\star}_{S}$ with an artificial vertex $v_{S}$ as the center and vertices in $\bar{T}\cap S$ as leaves, and we assign length $h_{j}$ to all edges in $H^{\star}_{S}$ Then we define $H = \bigcup_{j}\supp(C_{j})\cup\bigcup_{S\in{\cal N}_{j},1\leq j\leq \bar{j}}H^{\star}_{S}$.

Now we show the correctness. First the lower bound side holds for $\alpha_{\low} = O(1)$. The shortest $u$-$v$ path on $H$ can be decomposed into subpaths, each of which is either a $2$-hop path inside some $H^{\star}_{S}$ or a $1$-hop path with the single edge inside some $\supp(C_{j})$. For each subpath $P'\subseteq H^{\star}_{S}$ for $S\in{\cal N}_{j}$ (with endpoints $u',v'$), we have $\ell_{H^{\star}_{S}}(P') = 2\cdot h_{j}$ and $\ell_{G}(u',v')\leq O(h_{j})$ (the diameter of ${\cal N}_{j}$ is $O(h_{j})$). 
The upper bound side holds for $\alpha_{\up} = O(1)$. We can decompose the shortest $u$-$v$ path on $G$ into (1) cut edges in $\bigcup_{j}\supp(C_{j})$, and (2) maximal subpaths $P'$ s.t. $P'$ has no cut edges. For each such $P'$ (with endpoints $u',v'$), we take the minimum $j^{\star}$ s.t. $h_{j^{\star}}\geq \ell_{G}(P')$ (so $\ell_{G}(P')\geq h_{j^{\star}}/2$), and then we have $\ell_{H}(u',v')\leq 2\cdot h_{j^{\star}}\leq 4\cdot \ell_{G}(P')$ because $\dist_{G-C_{j^{\star}}}(u',v')\leq \ell_{G-C_{j^{\star}}}(u',v') = \ell_{G}(u',v')$, which is at most the covering radius of ${\cal N}_{j^{\star}}$. 

However, this construction still has the recourse issue when we try to make it dynamic. First, the recourse will depend on $h$, because when a new cut $C_{\new}$ with size $h\cdot |\pi^{(i)}|$ is added by the dynamic certified-ED, we need to add $\supp(C_{\new})$ into $H$, and also let the star graphs include $C_{\new}$-vertices. As mentioned before, our solution is to use the landmarks. Roughly speaking, we define the extended terminals to be $\bar{T} = T\cup\bigcup_{j}L_{j}$. Then $H$ is similarly the union of stars on $\bar{T}$ and some original graph edges (but not exactly $\bigcup_{j}\supp(C_{j})$). However, proving the bounds on the approximation (i.e. $\alpha_{\low}$ and $\alpha_{\up}$) becomes more complicated. We will not explain it here and see \Cref{thm:CoversToEmulators} for details.

\paragraph{Reduce the Recourse Further Using Certified-EDs with Density.}

We are not done yet because the recourse of the sparsifier still depends on $1/\phi$ (we will explain the reason in a moment), which is not allowed because our hierarchy has $\bar{k} = O(\log_{1/\phi} n)$ levels and the each $(k+1)$-level sparsifier is built upon the $k$-level sparsifier. Hence, the $1/\phi$ factor in recourse will blow up to $n$ after all levels, which is too expensive.
This dependency on $1/\phi$ is from the dynamic certified-EDs when handling a batched edge deletion $\pi^{(i)}$. Recall that our algorithm first removes $V_{\prune}$ with size $|V_{\prune}| = O(|\pi^{(i)}|/\phi)$ from the old certified-ED and then add it back. This will bring $O(|\pi^{(i)}|/\phi)$ recourse to the pairwise cover, and so our sparsifier (even the one based on landmarks).

To fix this, we will exploit \emph{certified-EDs with density}, which can be interpreted as a weighted version of certified-EDs. The key observation is that, the recourse of the sparsifier will depend on the recourse of the pairwise covers \emph{restricted on} $\bar{T}$. That is, the sparsifier does not care how many times the vertices outside $\bar{T}$ are removed from or added to the pairwise covers. Therefore, we will assign each extended terminal in $\bar{T}$ a weight $1/\phi$ called its \emph{density} (each vertex not in $\bar{T}$ has density $1$), and map each unit of density to a router vertex, so each extended terminal $v$ in some cluster $S$ now corresponds to $1/\phi$ router vertices in $R^{S}$. Intuitively, we need to remove $v$ from $S$ only if all these $1/\phi$ router vertices are pruned, so $O(|\pi^{(i)}|/\phi)$ many pruned router vertices will only cause $O(|\pi^{(i)}|)$ many extended terminal removals from the pairwise covers. The reason why we have the flexibility to assign density $1/\phi$ to vertices in $\bar{T}$ is that the size of $\bar{T}$ is roughly $|V(G)|/\phi$ and the total density over $V(G)$ is still $O(|V(G)|)$.

We further mention a small technical issue induced by the density. Concretely, because our landmark sets are incremental during maintenance, when a set $L_{\new}$ of new landmarks is added, these vertices in $L_{\new}$ become new extended terminals, and we also need to raise the density of them from $1$ to $1/\phi$. Raising density can be done using cutmatch, but unavoidably this will generate an other set $L'_{\new}$ of new landmarks. To solve this circular situation, we can slightly adjust the parameters of the cutmatch, so that the number of new landmarks decrease by a $n^{\poly\epsilon}$ factor each time, and this will only cause $\poly(1/\epsilon)$ many sub-batched updates.

\subsection{Putting it all Together} We conclude this overview by putting everything together with a brief recourse and update time analysis. For simplicity, in this analysis we bound the recourse and update time amortized inside each batched update, which implies worst-case bounds of total recourse and update time for each batched update. The update time is easy to analyse because it just adds up, but the recourse will propagate multiplicatively if we run dynamic algorithms on top of each other, so we will focus more on the recourse. We will choose $\epsilon\in[1/\log^{c}n, 1]$ for some sufficiently small constant $c$, to make sure that $(\log n)^{\poly(1/\epsilon)}$ is always smaller than $n^{\poly(\epsilon)}$. %

The dynamic certified-ED algorithm (\Cref{thm:DynamicED}) has update time $\poly(h)\cdot n^{O(\epsilon)}/\phi$, and it will bring recourse $n^{O(\epsilon^{3})}/\phi$ to the pairwise cover ${\cal N}$, $2^{O(1/\epsilon)}\cdot h$ to the integral moving cut $C$, and $O(1)$ to the landmark set $L$. For each batched update, the recourse of ${\cal N}$ is the number of vertex insertions/deletions summing over all clusters, the recourse of $C$ is its size increment, and the recourse of $L$ is the number of vertex insertions/deletions.

The dynamic vertex sparsifiers algorithm for bounded distances (\Cref{thm:NonHopReducingEmulator}) has update time $\poly(h)\cdot n^{O(\epsilon)}/\phi^{2}$, and the recourse of the sparsifier $H$ is $n^{O(\epsilon^{4})}$. The recourse of $H$ is roughly the recourse of $L$ plus the recourse of ${\cal N}$ from the dynamic certified-ED algorithm \emph{with density}. Recall that introducing density will reduce the recourse of ${\cal N}$ by a $1/\phi$ factor. The difference between the overheads $n^{O(\epsilon^{3})}$ and $n^{O(\epsilon^{4})}$ is due to some technical details, which is not important because such overheads can be adjusted to be even smaller, e.g. $n^{O(\epsilon^{100})}$, by choosing appropriate tradeoffs for the based subroutines\footnote{These based subroutines include \Cref{thm:NeighborhoodCovers}, \Cref{thm:WitnessedED}, \Cref{thm:RouterRouting}, \Cref{thm:cutmatch} and \Cref{thm:Router}.}. 

The dynamic length-constrained expander hierarchy (\Cref{thm:ExpanderHierarchy}) has $\bar{k} = O(\log_{1/\phi} n) = O(1/\epsilon^{2})$ levels by setting $\phi = n^{\epsilon^{2}}$. At each level $k$, we maintain a certified-ED $(C_{k},L_{k},{\cal N}_{k},{\cal R}_{k},\Pi_{{\cal R}_{k}\to H_{k}})$ (with landmarks $L_{k}$ but without density) on sparsifer $H_{k}$ (at the first level $H_{1} = G$), and then maintain a sparsifier $H_{k+1}$ (for the next level) of $L_{k}$ on $H_{k}$. The recourse of each $H_{k}$ is at most $n^{O(\epsilon^{4}\cdot\bar{k})} = n^{O(\epsilon^{2})}$. The recourse of each ${\cal N}_{k}$ is at most $n^{O(\epsilon^{2})}\cdot n^{O(\epsilon^{3})}/\phi = n^{O(\epsilon^{2})}$ (i.e. the recourse of $H_{k}$ times the recourse of ${\cal N}$ from dynamic certified-EDs). The update time is $\poly(h)\cdot n^{O(\epsilon)}/\phi^{2} = \poly(h)\cdot n^{O(\epsilon)}$.

The dynamic length-reducing emulator (\Cref{thm:HopReducingEmulator}) is based on the dynamic hierarchy. From the construction, the recourse of the emulator $Q$ is proportional to the total recourse of all pairwise covers in the hierarchy, which is $n^{O(\epsilon^{2})}/\epsilon^{2} = n^{O(\epsilon^{2})}$. The update time is $\poly(h)\cdot n^{O(\epsilon)}$.

We will stack dynamic emulator on top of each other and the number of levels is $\bar{x} = O(\log_{h}n) = O(1/\epsilon)$ by setting $h = n^{\epsilon}$. The recourse of each emulator is at most $n^{O(\epsilon^{2}\cdot\bar{x})} = n^{O(\epsilon)}$. The update time is $\poly(h)\cdot n^{O(\epsilon)} = n^{O(\epsilon)}$.

Finally, our online-batch dynamic distance oracle is just a collection of dynamic bounded-distance oracles (based on the expander hierarchy) for different length scales $h=2,4,8,...,n^{\epsilon}$ on these emulators. The final update time is $n^{O(\epsilon)}$.

\section{Preliminaries}
\label{sect:prelim}

\subsection{Standard Graph Notations}

In this paper, we consider \textit{undirected} graphs\footnote{The only exception is \Cref{sect:PathBlockers,sect:LCMaxflow}, where we work on directed graphs with notations given at the beginning of \Cref{sect:LocalLengthConstrainedFlow}.}, say $G$. We use $V(G)$ to denote its vertices and $E(G)$ to denote its edges. Let $|G| = |V(G)| + |E(G)|$ denote the size of the graph $G$. If edges in $G$ have length, we denote this by a function $\ell_{G}:E(G)\to \mathbb{N}_{>0}$, where the length $\ell_{G}(e)$ of each edge is a positive and polynomially bounded integer. Let $\dist_{G}(\cdot,\cdot)$ denote the distance metric of $G$. Each path $P$ on $G$ has length $\ell_{G}(P)$, and a path $P$ is \textit{$h$-length} if $\ell_{G}(P)\leq h$. For two vertices $u,v\in V(G)$, we say $u$ and $v$ are \textit{$h$-separated} if $\dist_{G}(u,v)> h$. More generally, for two subset of vertices $S,T\subseteq V(G)$, $S$ and $T$ are $h$-separated on $G$ if for each $u\in S$ and $v\in T$, $u$ and $v$ are $h$-separated. We say $G$ is a \textit{fully dynamic graph} if it undergoes a sequence of unit updates including edge insertions, edge deletions, isolated vertex insertions and isolated vertex deletions.%

\subsection{The Online-Batch Dynamic Setting}
\label{sect:OnelineBatch}

Our final fully dynamic approximate distance oracle in \Cref{thm:main} is for fully dynamic graphs. However, for most of this paper, we will consider \textit{online-batch dynamic graphs} and design an online-batch dynamic distance oracle. Lastly, in \Cref{sect:Reduction}, we will turn the online-batch dynamic oracle into a fully dynamic one by a standard reduction.

Formally speaking, an online-batch dynamic graph $G$ will undergo $t$ \textit{batched updates} $\pi^{(1)},...,\pi^{(t)}$, each of which is a set of \textit{unit updates} including edge insertions, edge deletions, isolated vertex insertions and insolated vertex deletions. Let $G^{(0)}$ be the initial snapshot of $G$ and for each $1\leq i\leq t$, let $G^{(i)}$ be the snapshot of $G$ right after batched update $\pi^{(i)}$. In other words, $G^{(i)}$ is the graph from applying all unit updates in the batched update $\pi^{(i)}$ to $G^{(i-1)}$. More generally, for all online-batch dynamic object (e.g. graphs, sets, data structures), we use the superscript $(0)$ to denote its initial snapshot and ${(i)}$ to denote its snapshot right after the $i$-th batched update. We will define the batched updates for the object when an online-batch dynamic algorithm takes it as input. We emphasize that, when we write a statement of an online-batch dynamic object without the superscript, it means this statement holds at all time $0\leq i\leq t$.

We will analysis the \emph{recourse} of an object in the scenario that it is maintained by an online-batch dynamic algorithm. Formally, consider an online-batch dynamic algorithm with a dynamic object ${\cal O}_{\iin}$ under batched updates $\pi^{(1)},...,\pi^{(t)}$ as the input (generally ${\cal O}_{\iin}$ is a graph) and another object ${\cal O}_{\out}$ as the output. The \emph{recourse} of ${\cal O}_{\out}$ at time $i$, denoted by $\recourse({\cal O}^{(i-1)}_{\out} \to {\cal O}^{(i)}_{\out})$, is the size of the batched update (generated by the algorithm) to update ${\cal O}^{(i-1)}_{\out}$ to ${\cal O}^{(i)}_{\out}$. In our analysis, $\recourse({\cal O}^{(i-1)}_{\out} \to {\cal O}^{(i)}_{\out})$ will be \emph{worst-case} bounded in the form $f(n)\cdot |\pi^{(i)}|$, and $f(n)$ is usually a small polynomial e.g. $n^{O(\epsilon^{4})}$. In fact, this paper only includes recourse analysis for two type of objects: graphs and pairwise covers. The definition of batched updates on graphs is clear as mentioned above, where the batched updates on pairwise covers will be defined right below \Cref{def:PairwiseCover}.

An \textit{online-batch data structure} ${\cal D}$ for an input graph $G$ is such that at each time $0\leq i\leq t$, ${\cal D}^{(i)}$ is for the graph $G^{(i)}$. The running time to initialize ${\cal D}^{(0)}$ is called \textit{the initialization time}. The time to update ${\cal D}^{(i-1)}$ to ${\cal D}^{(i)}$ is called \textit{the update time for $\pi^{(i)}$}. 

For all online-batch dynamic algorithms in this paper, we allow \emph{mixed} input batched updates. That is, each of them may contain more than one type of unit updates. However, when we design and analyse the algorithm, we can assume without loss of generality that all input batched updates are \emph{pure}, i.e. it contains only one type of unit updates. This is because for a mixed batched update, we can simply substitute it with a constant number of pure batched updates. This will only increase the number of batched updates from $t$ to $O(t)$, and will not change our bounds asymptotically. 

\subsection{Global Parameters}
\label{sect:GlobalParameters}

\paragraph{The Input Graph Size $n$.} We use $n$ to denote the maximum number of vertices in the input dynamic graph of \Cref{thm:main} over all updates. We note that all the graphs in this paper will have number of vertices and edges polynomial in $n$.

\paragraph{The Global Tradeoff Parameter $\epsilon$.} We will fix $\epsilon$ to be the one in the input of \Cref{thm:main} throughout the paper, called the \textit{global tradeoff parameter}. Note that we require $\epsilon$ satisfies $1/\log^{c} n\leq \epsilon\leq 1$ for some sufficiently small constant $c>0$, which implies 
\[
(\log n)^{\poly(1/\epsilon)}\leq n^{\poly(\epsilon)},
\]
so we can hide $(\log n)^{\poly(1/\epsilon)}$ factors inside $n^{\poly(\epsilon)}$ in the analysis. We note that we set the lower bound $1/\log^{c}n$ for $\epsilon$ just to make the above inequality holds. In fact, most algorithms in this paper could set tradeoff parameters locally, but we present them with tradeoff parameters controlled by $\epsilon$ just to avoid clutter. %

\paragraph{The Length and Congestion Parameters $h$ and $\phi$.} Most subroutines in this paper will receive parameters $h$ and $\phi$ as input, and we call $h$ the \textit{length parameter} and $\phi$ the \textit{congestion parameter}. The parameters $h$ and $\phi$ of each subroutine are given \textit{locally} from the input. When we use these subroutines as building blocks of our main result \Cref{thm:main}, the subroutines may receive different parameters $h$ and $\phi$, but they are always bounded by $h=n^{O(\epsilon)}$ and $\phi = n^{\Theta(\epsilon^{2})}$.

\paragraph{The Global Time Parameter $t$.} We use $t$ to denote the number of batched update of all online-batch dynamic algorithms in this paper, called the \textit{global time parameter}. When we transform online-batch dynamic algorithms to fully dynamic algorithms, we will set $t = \Theta(1/\epsilon)$. The only exception is \Cref{thm:Router}, whose time parameter $t$ is local and can be larger than $\Theta(1/\epsilon)$.

\paragraph{Additional Global Parameters.} There will be additional global parameters defined in the following sections. We list them in \Cref{table:GlobalParameters} for references. We use $\lambda$ to denote all additional global parameters at most $2^{\poly(1/\epsilon)}$, and use $\kappa$ denote the others (which are at most $n^{O(\epsilon)}$).

\begin{table}

\scriptsize{
\begin{tabular}{|l|c|l|c|}
\hline 
Subroutines & Parameters & \makecell[c]{Expressions} & Values \tabularnewline
\hline 
\hline 
\multirow{2}{0.18\textwidth}{\Cref{thm:NeighborhoodCovers},\\Construction of Neighborhood Covers} & Diameter slack & $\lambda_{\NC,\diam} = O(1/\epsilon^{4})$ & $O(1/\epsilon^{4})$ \tabularnewline
\cline{2-4}
& Width & $\kappa_{\NC,\omega} = n^{O(\epsilon^{4})}$ & $n^{O(\epsilon^{4})}$ \tabularnewline
\hline
\multirow{2}{0.18\textwidth}{\Cref{thm:WitnessedED},\\Witnessed Expander Decomposition} & Cut size slack & $\kappa_{\ED,C} = n^{O(\epsilon^{4})}$ & $n^{O(\epsilon^{4})}$ \tabularnewline
\cline{2-4}
& Length slack & $\lambda_{\ED,h} = 2^{\poly(1/\epsilon)}$ & $2^{\poly(1/\epsilon)}$ \tabularnewline
\hline
\multirow{2}{0.18\textwidth}{\Cref{thm:WitnessExpanderRouting},\\Expander Routing with Witness} & Congestion slack & $\kappa_{\emb,\gamma} = n^{O(\epsilon^{4})}$ & $n^{O(\epsilon^{4})}$ \tabularnewline
\cline{2-4}
& Length slack & $\lambda_{\emb,h} = 2^{\poly(1/\epsilon)}$ & $2^{\poly(1/\epsilon)}$\tabularnewline
\hline

\multirow{3}{0.18\textwidth}{\Cref{thm:cutmatch}\\Cutmatch} & Distortion shrink & $\kappa_{\sigma} = n^{\epsilon^{4}}$ & $n^{O(\epsilon^{4})}$ \tabularnewline
\cline{2-4}
& Congestion slack & $\kappa_{\CM,\gamma} = \log^{4} n$ & $\log^{4}n$ \tabularnewline
\cline{2-4}
& Landmark size slack & $\kappa_{\CM,L} = \kappa_{\sigma}\cdot\log^{3} n$ & $n^{O(\epsilon^{4})}$ \tabularnewline
\hline
\multirow{3}{0.18\textwidth}{\Cref{thm:Router},\\Dynamic Routers} & Maximum Degree & $\kappa_{\rt,\deg} = n^{O(\epsilon^{4})}$ & $n^{O(\epsilon^{4})}$ \tabularnewline
\cline{2-4}
& \makecell{Pairwise Distance} & $\lambda_{\rt,h}(i) = 2^{O(1/\epsilon^{4})} + O(i)$ & function of $i$ \tabularnewline
\cline{2-4}
& \makecell{Pruned set size inflation} & $\lambda_{\rt,\prune}(i) = (1/\epsilon)^{O(1/\epsilon^{4})}\cdot 2^{O(i)}$ & function of $i$ \tabularnewline
\hline
\multirow{4}{0.18\textwidth}{\Cref{thm:InitLandmark,thm:InitCertifiedED}\\Initialization of Certified-EDs} & Cut size slack & $\kappa_{\init,C} = \kappa_{\ED,C}\cdot\lambda_{\PC,\diam}$ & $n^{O(\epsilon^{4})}$ \tabularnewline
\cline{2-4}
& Landmark size slack & $\kappa_{\init,L} = \lambda_{\PC,\diam}\cdot\kappa_{\PC,\omega}\cdot\kappa_{\init,C}\cdot\kappa_{\sigma}$ & $n^{O(\epsilon^{4})}$ \tabularnewline
\cline{2-4}
& \makecell{Embedding congestion slack} & $\kappa_{\init,\gamma} = \kappa_{\PC,\omega}\cdot\kappa_{\rou,\deg}\cdot \kappa_{\emb,\gamma}$ & $n^{O(\epsilon^{4})}$ \tabularnewline
\cline{2-4}
& \makecell{Embedding length slack} & $\lambda_{\init,h} = \lambda_{\emb,h}\cdot\lambda^{2}_{\ED,h}\cdot\lambda_{\PC,\diam}$ & $2^{\poly(1/\epsilon)}$ \tabularnewline
\hline
\multirow{2}{0.18\textwidth}{\Cref{thm:DynamicED}\\Dynamic Expander Decomposition} & \makecell{Length slack} & $\lambda_{\dynED,h} = 2^{O(t)}$ & $2^{O(1/\epsilon)}$ \tabularnewline
\cline{2-4} & \makecell{Embedding\\congestion slack} & \makecell[l]{
$\kappa_{\dynED,\gamma} = ((t\cdot \kappa_{\PC,\omega}\cdot\kappa_{\CM,\gamma}+\kappa_{\init,\gamma})\cdot$
\\$(\kappa_{\CM,L} + \kappa_{\init,L})\cdot \lambda_{\rt,\prune}(t))^{O(t)}$
} & $n^{O(\epsilon^{3})}$ \tabularnewline
\hline
\multirow{2}{0.18\textwidth}{\Cref{lemma:LandmarkClosure}\\Insert Landmarks to Node-Weighting} & Batch number inflation & $\lambda_{\insLM,t} = O(1/\epsilon^{4})$ & $O(1/\epsilon^{4})$ \tabularnewline
\cline{2-4}
& Congestion slack & $\lambda_{\insLM,\gamma} = n^{O(\epsilon^{4})}$ & $n^{O(\epsilon^{4})}$ \tabularnewline
\hline
\multirow{5}{0.18\textwidth}{\Cref{thm:NonHopReducingEmulator}\\Dynamic Vertex Sparsifier} & \makecell{Stretch\\lower bound side} & $\lambda_{H,\low} = \lambda_{\rt,h}(t\cdot\lambda_{\insLM,t})\cdot\lambda_{\init,h}\cdot 2^{O(t\cdot\lambda_{\insLM,t})}$ & $2^{\poly(1/\epsilon)}$ \tabularnewline
\cline{2-4}
& \makecell{Stretch\\upper bound side} & $\lambda_{H,\up} = O(1)$ & $O(1)$ \tabularnewline
\cline{2-4}
& \makecell{Size of the\\Certified-ED collections} & $\kappa_{j} = O(\log n)$ & $O(\log n)$ \tabularnewline
\cline{2-4}
& Size inflation & \makecell[l]{$\kappa_{H,\size} = \kappa_{j}^{3}\cdot 2^{O(t\cdot\lambda_{\insLM,t})}\cdot \lambda_{\rt,\prune}(t\cdot\lambda_{\insLM,t})\cdot$\\
$\kappa_{\PC,\omega}^{3}\cdot (\kappa_{\CM,L} +\kappa_{\init,L})\cdot(\kappa_{\CM,\gamma} + \kappa_{\init,\gamma})$} & $n^{O(\epsilon^{4})}$ \tabularnewline
\cline{2-4}
& Recourse & \makecell[l]{$\kappa_{H,\rcs} = \kappa_{j}^{3}\cdot 2^{O(t\cdot\lambda_{\insLM,t})}\cdot \lambda_{\rt,\prune}(t\cdot\lambda_{\insLM,t}) \cdot$\\ $\kappa_{\PC,\omega}^{4}\cdot (\kappa_{\CM,L} +\kappa_{\init,L})\cdot(\kappa_{\CM,\gamma} + \kappa_{\init,\gamma})\cdot \kappa_{\insLM,\gamma}$} & $n^{O(\epsilon^{4})}$ \tabularnewline
\hline
\multirow{2}{0.18\textwidth}{\Cref{thm:ExpanderHierarchy}\\Length Constrained Expander Hierarchy} & Number of levels & $\lambda_{k} = O(1/\epsilon^{2})$ & $O(1/\epsilon^{2})$ \tabularnewline
\cline{2-4} & \makecell{Recourse on\\pairwise covers} & \makecell[l]{$\kappa_{\EH,\PC,\rcs} = \lambda_{\rt,\prune}(t)\cdot t\cdot \kappa_{\PC,\omega}\cdot$\\
$\kappa_{\dynED,\gamma}\cdot 2^{O(\lambda_{k})}\cdot \kappa^{\lambda_{k}}_{H,\rcs}$}
& $n^{O(\epsilon^{2})}$ \tabularnewline
\hline
\makecell[l]{\Cref{thm:LowDistanceOracle}\\Low Distance Oracles} & Approximation & \makecell[l]{$\lambda_{\query,\alpha} = 2^{O(\lambda_{k})}\cdot(\lambda_{H,\low}\cdot\lambda_{H,\up})^{\lambda_{k}}\cdot$\\
$(\lambda_{\rt,h}(t)\cdot\lambda_{\init,h}\cdot\lambda_{\dynED,h})^{\lambda_{k}}$} & $2^{\poly(1/\epsilon)}$ \tabularnewline
\hline
\multirow{5}{0.18\textwidth}{\Cref{thm:HopReducingEmulator}\\Length-Reducing Emulator} & \makecell{Stretch\\lower bound side} & \makecell[l]{$\lambda_{Q,\low} = \lambda_{\rt,h}(t)\cdot\lambda_{\init,h}\cdot\lambda_{\dynED,h}\cdot$\\
$2^{O(\lambda_{k})}\cdot (\lambda_{H,\up}\cdot\lambda_{H,\low})^{\lambda_{k}}$} & $2^{\poly(1/\epsilon)}$ \tabularnewline
\cline{2-4} & \makecell{Stretch\\upper bound side} & $\lambda_{Q,\up} = O(\lambda_{k})$ & $O(1/\epsilon^{2})$ \tabularnewline
\cline{2-4} & Congestion parameter $\phi$ & $\kappa_{\phi} = n^{\epsilon^{2}}$ & $n^{O(\epsilon^{2})}$ \tabularnewline
\cline{2-4} & Size inflation & $\kappa_{Q,\size} = O(\lambda_{k}\cdot t\cdot \kappa_{\PC,\omega})$ & $n^{O(\epsilon^{4})}$ \tabularnewline  
\cline{2-4} & Recourse & $\kappa_{Q,\rcs} =\kappa_{\EH,\PC,\rcs}\cdot\kappa_{\phi}$ & $n^{O(\epsilon^{2})}$ \tabularnewline
\hline
\makecell[l]{\Cref{thm:Stacking}\\Stacking} & \makecell{Stretch, Size, Recourse} & \makecell[l]{$\lambda_{\stk,\low} = O(\lambda_{Q,\low}) = 2^{\poly(1/\epsilon)}$,\\
$\lambda_{\stk,\up} = O(\lambda_{Q,\up}) = O(1/\epsilon^{2})$,\\ $\kappa_{\stk,\size} = O(\kappa_{Q,\size}) = n^{O(\epsilon^{4})}$,\\ $\kappa_{\stk,\rcs} = O(\kappa_{Q,\rcs}) = n^{O(\epsilon^{2})}$} & \tabularnewline
\hline
\multirow{2}{0.18\textwidth}{\Cref{thm:BatchDynDistanceOracle}\\Distance Oracles} & Levels of stacking & $\lambda_{x} = O(1/\epsilon)$ & $O(1/\epsilon)$ \tabularnewline
\cline{2-4} & Approximation & $\lambda_{\DO,\alpha} = O(\kappa_{\query,\alpha})\cdot (\kappa_{\stk,\low}\cdot\kappa_{\stk,\up})^{\lambda_{x}}\cdot 2^{O(1/\epsilon^{4})}$ & $2^{\poly(1/\epsilon)}$ \tabularnewline
\hline

\end{tabular}
}

\caption{Additional Global Parameters}
\label{table:GlobalParameters}
\end{table}

\subsection{Pairwise Covers}
\label{sect:PairwiseCovers}

\begin{definition}[Node-Weighting]
Given a graph $G$, a \textit{node-weighting} is a function $A:V(G)\to \mathbb{N}$. Without ambiguity, $A$ also refers to a set of \emph{virtual nodes} s.t. each virtual node in $A$ is owned by a vertex $v\in V(G)$ and each vertex $v\in V(G)$ has exactly $A(v)$ virtual nodes. For each vertex $v\in V(G)$, $A(v)$ also refers to the set of virtual nodes owned by $v$. 

Furthermore, a node-weighting $A$ is \textit{positive} if $A(v)\geq 1$ for each $v\in V(G)$. For an arbitrary vertex set $W\subseteq V(G)$, we use $\mathds{1}(W)$ to denote the node-weighting $A$ with $A(v) = 1$ for each $v\in W$ and $A(v) = 0$ for each $v\notin W$. In fact, the vertex set $W$ and virtual node set $\mathds{1}(W)$ are equivalent and we may use them interchangeably.
\label{def:NodeWeighting}
\end{definition}

For two non-negative functions $g_{1}$ and $g_{2}$ (e.g. node-weightings), we use $g_{1}\uplus g_{2}$ to denote the function with domain $\dom(g_{1}\uplus g_{2}) = \dom(g_{1})\cup \dom(g_{2})$ s.t.
\[
(g_{1}\uplus g_{2})(v) = \left\{
\begin{aligned}
&g_{1}(v),\text{ if }v\in \dom(g_{1})\setminus \dom(g_{2})\\
&g_{2}(v),\text{ if }v\in \dom(g_{2})\setminus \dom(g_{1})\\
&g_{1}(v) + g_{2}(v),\text{ if }v\in \dom(g_{1})\cap\dom(g_{2}).
\end{aligned}
\right.
\]

\begin{definition}[Pairwise cover]
Given a graph $G$ with node-weighting $A$, an \emph{pairwise cover} ${\cal N}$ of $A$ on $G$ is a collection of \emph{clusterings}, where each clustering ${\cal S}\in{\cal N}$ is a collection of disjoint \emph{clusters}, and each cluster $S\in{\cal S}$ is a subset of virtual nodes in $A$, i.e. $S\subseteq A$. For simplicity, we also write $S\in{\cal N}$ as a short form of $S\in{\cal S}\in{\cal N}$. A pairwise cover ${\cal N}$ has covering radius $h_{\cov}$, separation $h_{\sep}$, diameter $h_{\diam}$ and width $\omega$ (called \emph{quality parameters}), if ${\cal N}$ satisfies the following properties.
\begin{itemize}
\item (covering radius $h_{\cov}$) For each pair of virtual node $u,v\in A$ with $\dist_{G}(u,v)\leq h_{\cov}$, there exists a cluster $S\in{\cal N}$ s.t. $u,v\in S$.
\item (Separation $h_{\sep}$) For each clustering ${\cal S}\in{\cal N}$, the distance between any two clusters $S,S'\in{\cal S}$ is larger than $h_{\sep}$. Precisely, for each $u\in S$ and $v\in S'$, $\dist_{G}(u,v)> h_{\sep}$.
\item (Diameter $h_{\diam}$) For each cluster $S\in{\cal N}$, its diameter on $G$ is at most $h_{\diam}$. Precisely, for each $u,v\in S$, $\dist_{G}(u,v)\leq h_{\diam}$.
\item (Width $\omega$) The number of clusterings ${\cal S}$ in ${\cal N}$ is at most $\omega$.
\end{itemize}
We require that ${\cal N}$ should cover all virtual nodes in $A$, i.e. $\bigcup_{S\in{\cal N}} S = A$. For better understanding, if some $u\in A$ has $\dist_{G}(u,v)>h_{\cov}$ for all other $v\in A$, then there exists a singleton cluster $S=\{u\}\in {\cal N}$.
The \textit{size} of ${\cal N}$, denoted by $\size({\cal N})$, is the total size of clusters $S\in{\cal N}$, i.e. $\size({\cal N}) = \sum_{S\in{\cal N}}|S|$.

In particular, if ${\cal N}$ is a pairwise cover of node-weighting $\mathds{1}(W)$ on $G$ for some vertex set $W\subseteq V(G)$, equivalently, we say ${\cal N}$ is a pairwise cover of \emph{vertex set} $W$ on $G$. For an arbitrary node-weighting $A^{\star}$, the \textit{restriction} of ${\cal N}$ on $A^{\star}$, denoted by ${\cal N}_{\mid A^{\star}}$, is the pairwise cover obtained by restricting every cluster $S\in{\cal N}$ on $A^{\star}$ (i.e. substituting $S$ with $S\cap A^{\star}$). 

\label{def:PairwiseCover}
\end{definition}

When a pairwise cover ${\cal N}$ is updated to $\wtilde{\cal N}$ by some algorithm, the \emph{recourse from ${\cal N}$ to $\wtilde{\cal N}$}, denoted by $\recourse({\cal N}\to\wtilde{\cal N})$, is the number of virtual node insertions and deletions\footnote{A virtual node insertion can add a virtual node into an arbitrary cluster or create a singleton cluster (i.e. a cluster with only one virtual node), while a virtual node deletion can remove a virtual node from an arbitrary cluster or remove a singleton cluster.} in this update. Furthermore, given an arbitrary node-weighting $A^{\star}$, the \emph{recourse from ${\cal N}$ to $\wtilde{\cal N}$ restricted on $A^{\star}$}, denoted by $\recourse_{\mid A^{\star}}({\cal N}\to\wtilde{\cal N})$, is the number of virtual node insertions and deletions involving virtual nodes in $A^{\star}$. In other words, $\recourse_{\mid A^{\star}}({\cal N}\to\wtilde{\cal N}) = \recourse({\cal N}_{\mid A^{\star}}\to\wtilde{\cal N}_{\mid A^{\star}})$.

Before going into \Cref{def:DistributedNC}, we define the notions of \emph{neighborhood/ball}. Let $G$ be a graph with a node-weighting $A$. For each virtual node $v\in A$ and a length parameter $h$, we let $\Ball_{G,A}(v,h) = \{u\in A\mid \dist_{G}(u,v)\leq h\}$ denote the \emph{$h$-neighborhood of $v$ on $G$ w.r.t. $A.$}.

\begin{definition}[Distributed Neighborhood Covers]
Given a graph $G$ with node-weighting $A$, a pairwise cover ${\cal N}$ of $A$ on $G$ is further a \textit{$b$-distributed neighborhood cover} for covering radius $h_{\cov}$ ($b$ is called the \textit{distributed factor}) if for each virtual node $v\in A$, there exists a set ${\cal S}_{v}\subseteq {\cal N}$ of at most $b$ s.t. $v$ is inside every $S\in{\cal S}_{v}$ and $v$'s $h_{\cov}$-neighborhood $\Ball_{G,A}(v,h_{\cov})$ is contained by $\bigcup_{S\in{\cal S}_{v}} S$. We call such ${\cal S}_{v}$ a \textit{ball cover} of $v$, and ${\cal S}_{v}$ is stored explicitly. A $1$-distributed neighborhood cover is called a \textit{neighborhood cover}.
\label{def:DistributedNC}
\end{definition}

\begin{observation}
Let ${\cal N}$ be a pairwise cover of some node-weighting $A$ on some graph $G$. For any subgraph $\wtilde{G}\subseteq G$ s.t. $V(\wtilde{G}) = V(G)$ and sub-node-weighting $\wtilde{A}\subseteq A$, the restriction $\wtilde{\cal N}$ of ${\cal N}$ on $\wtilde{A}$ has the same covering radius, separation, and width as those of ${\cal N}$. If ${\cal N}$ is further a $b$-distributed neighborhood cover for some covering radius $h_{\cov}$, then $\wtilde{\cal N}$ has the same distributed factor $b$ for this covering radius $h_{\cov}$.
\label{ob:CoversOnSubgraph}
\end{observation}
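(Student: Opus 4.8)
The plan is to prove this purely by monotonicity of distances under edge deletion. The one structural fact I would isolate first is that, since $\wtilde G\subseteq G$ and $V(\wtilde G)=V(G)$, we have $E(\wtilde G)\subseteq E(G)$ with unchanged lengths, so every path in $\wtilde G$ is a path in $G$ of the same length and therefore $\dist_{\wtilde G}(u,v)\ge\dist_G(u,v)$ for all $u,v\in V(G)$; combined with $\wtilde A\subseteq A$ this is all I will need. I would then fix notation $\wtilde{\cal N}={\cal N}_{\mid\wtilde A}=\{S\cap\wtilde A\mid S\in{\cal N}\}$ (discarding empty sets), and observe that the clusterings of $\wtilde{\cal N}$ are exactly the restrictions of the clusterings of ${\cal N}$: this immediately gives width $\le\omega$, and since $\bigcup_{S\in{\cal N}}S=A$ it gives $\bigcup_{\wtilde S\in\wtilde{\cal N}}\wtilde S=A\cap\wtilde A=\wtilde A$, so $\wtilde{\cal N}$ is a valid pairwise cover of $\wtilde A$ on $\wtilde G$.

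Next I would check the remaining quality parameters one by one. For covering radius: if $u,v\in\wtilde A$ with $\dist_{\wtilde G}(u,v)\le h_{\cov}$ then $\dist_G(u,v)\le h_{\cov}$, so some $S\in{\cal N}$ contains both $u$ and $v$, whence $u,v\in S\cap\wtilde A\in\wtilde{\cal N}$. For separation: a clustering of $\wtilde{\cal N}$ has the form $\{S\cap\wtilde A\mid S\in{\cal S}\}$ for a clustering ${\cal S}\in{\cal N}$, and two distinct nonempty members arise from distinct $S,S'\in{\cal S}$ (disjoint $S,S'$ can restrict to the same set only if both restrictions are empty); for $u\in S\cap\wtilde A$ and $v\in S'\cap\wtilde A$ we then get $\dist_{\wtilde G}(u,v)\ge\dist_G(u,v)>h_{\sep}$. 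For the distributed factor, given $v\in\wtilde A\subseteq A$ I would take the stored ball cover ${\cal S}_v\subseteq{\cal N}$ of $v$ with $|{\cal S}_v|\le b$ and restrict it to $\wtilde{\cal S}_v=\{S\cap\wtilde A\mid S\in{\cal S}_v\}$, which still has size $\le b$, still has $v$ in every member, and satisfies $\Ball_{\wtilde G,\wtilde A}(v,h_{\cov})\subseteq\Ball_{G,A}(v,h_{\cov})\cap\wtilde A\subseteq\bigl(\bigcup_{S\in{\cal S}_v}S\bigr)\cap\wtilde A=\bigcup\wtilde{\cal S}_v$, where the first inclusion is again just $\dist_{\wtilde G}\ge\dist_G$ together with $\wtilde A\subseteq A$.

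There is no real obstacle here — the proof is a direct verification whose only content is monotonicity of distances — but the one point worth flagging (and the reason the statement carefully omits it) is that the \emph{diameter} parameter is genuinely \emph{not} inherited: a cluster $S$ of $\dist_G$-diameter at most $h_{\diam}$ may have strictly larger diameter in $\wtilde G$, since removing edges can only lengthen intra-cluster distances. I would also note in passing that, since each ball cover is stored explicitly, the restricted ball covers are obtained by the obvious bookkeeping, so the observation is effective and not merely existential.
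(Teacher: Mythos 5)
Your proof is correct and follows essentially the same route as the paper's (which is a two-line verification of each parameter via $\dist_{\wtilde G}\ge\dist_G$ and $\wtilde A\subseteq A$). The extra care you take over distinctness of restricted clusters and the explicit construction of restricted ball covers is accurate but not new in substance, and your side remark that diameter is genuinely \emph{not} inherited is a correct and worthwhile clarification of why the statement omits it.
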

\begin{proof}
The covering radius is the same because for each $u,v\in \wtilde{S}\in\wtilde{\cal N}$, $\dist_{G}(u,v)\leq \dist_{\wtilde{G}}(u,v)$, which implies $u,v\in S\in{\cal N}$. The separation is the same because for each clustering ${\cal S}'\in \wtilde{\cal N}'$, any pair of virtual nodes $u,v$ in different clusters of ${\cal S}'$ has $\dist_{\wtilde{G}}(u,v)\geq \dist_{G}(u,v)$ at least the separation of ${\cal N}$. The width is trivally the same. ${\cal N}$ has the same distributed factor for $h_{\cov}$ because for each $v\in\wtilde{A}$, $\Ball_{\wtilde{G},\wtilde{A}}(v,h_{\cov})\subseteq \wtilde{A}\cap \Ball_{G,A}(v,h_{\cov})$.
\end{proof}

\begin{theorem}[Algorithmic Neighborhood Covers, \cite{ABCP98}]
Let $G$ be a graph. Given parameters $h$ and $\beta$, there is an algorithm that computes a neighborhood cover ${\cal N}$ of all vertices $V(G)$ on $G$ with covering radius $h_{\cov} = h$, diameter $h_{\diam} = O(\beta\cdot h)$ and width $\omega = O(\beta\cdot n^{1/\beta})$. In particular, for each vertex $v\in V(G)$, its ball cover $S_{v}\in {\cal N}$ covering $\Ball_{G,V(G)}(v,h)$ will be specified. The running time is $O(|G|\cdot \beta^{2}\cdot n^{1/\beta})$.
\label{thm:ClassicNC}
\end{theorem}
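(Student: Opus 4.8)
The plan is to prove this by a deterministic region-growing (ball-carving) construction in the style of Awerbuch--Peleg sparse partitions and covers, driven by a volume-doubling argument; this is essentially the algorithm of \cite{ABCP98}.

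First I would isolate the density lemma. Fix a vertex $v$ and an arbitrary vertex set $U\subseteq V(G)$ (to be thought of as the currently uncovered vertices), and consider the nested balls $\Ball_{G}(v,h)\cap U\subseteq\Ball_{G}(v,2h)\cap U\subseteq\cdots\subseteq\Ball_{G}(v,(\beta+1)h)\cap U$, whose sizes lie between $1$ and $n$. The product of the $\beta$ consecutive size ratios is at most $n$, so some index $j\in\{1,\dots,\beta\}$ satisfies $|\Ball_{G}(v,(j+1)h)\cap U|\le n^{1/\beta}\cdot|\Ball_{G}(v,jh)\cap U|$; I always take the smallest such $j=j_v$, so $j_v\le\beta$. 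The resulting \emph{outer ball} $\Ball_{G}(v,(j_v+1)h)$ then has diameter in $G$ at most $2(\beta+1)h=O(\beta h)$, which is where $h_{\diam}=O(\beta h)$ will come from, while its \emph{core} $\Ball_{G}(v,j_vh)$ has, by the triangle inequality, the key property that $\Ball_{G}(u,h)\subseteq\Ball_{G}(v,(j_v+1)h)$ for every $u$ in the core.

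Next I would carve. Maintain a set $U$ of still-uncovered vertices, initialized to $V(G)$; while $U\ne\emptyset$, pick any $v\in U$, run the density test on the current $U$ to obtain $j_v$, emit the cluster $S_v:=\Ball_{G}(v,(j_v+1)h)$ as a full ball in $G$, and for every $u\in\Ball_{G}(v,j_vh)$ that still lies in $U$ record $S_v$ as $u$'s ball cover and delete $u$ from $U$. Since $v$ itself is always deleted the loop terminates, every vertex receives a ball cover, and by the previous paragraph that ball cover contains the vertex's entire $h$-neighbourhood; hence the emitted clusters together with these pointers already form a neighbourhood cover (a $1$-distributed cover) with covering radius $h_{\cov}=h$ and diameter $O(\beta h)$. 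This part is routine.

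The part I expect to be the main obstacle is the width bound together with the near-linear running time, which is exactly where \cite{ABCP98} does real work. The emitted balls can overlap heavily, so the carving must be reorganized so that the clusters fall into only $O(\beta n^{1/\beta})$ clusterings of pairwise-disjoint clusters, all within time $O(|G|\cdot\beta^{2}n^{1/\beta})$. The idea is to run the carving in phases: within a phase one picks centers greedily, but after emitting $S_v$ one also removes from the phase's working set an enlarged blocked ball of radius $\Theta(\beta h)$ around $v$, so that any two clusters emitted in the same phase are disjoint and hence constitute a single clustering, while only cores are ever removed from the global uncovered set. One must then show, using the smallest-$j$ stopping rule and the density lemma, that the number of phases --- equivalently, the number of phases any fixed vertex survives --- is $O(\beta n^{1/\beta})$ with no spurious $\log n$ factor, and simultaneously implement each ball growth by an incremental BFS that expands one $h$-shell at a time and halts the instant the density test succeeds, charging the volume explored from $v$ against the core deleted at that step (cores deleted within a phase being disjoint) to obtain the stated running-time bound. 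Coordinating these two requirements --- densities measured against the shrinking active set for the sake of progress and of the time bound, but full balls of $G$ emitted for the covering guarantee --- is the delicate point, and is precisely what the careful version of this construction in \cite{ABCP98} settles.
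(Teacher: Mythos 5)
The paper does not prove this theorem: it is imported as a black-box citation to \cite{ABCP98}, so there is no in-paper argument to compare against. Your sketch correctly reproduces the standard region-growing/ball-carving construction behind that reference: the volume-doubling ``density lemma'' picking a radius $j_v h$ with $|\Ball(v,(j_v+1)h)\cap U|\le n^{1/\beta}|\Ball(v,j_vh)\cap U|$, emitting the full ball $\Ball_G(v,(j_v+1)h)$ as the cluster while deleting only the core from $U$, and the triangle-inequality observation that every core vertex's $h$-ball sits inside the emitted cluster. That part is sound.

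You are also right, and appropriately candid, that the genuinely technical content --- obtaining width $O(\beta n^{1/\beta})$ (rather than, say, an extra $\log n$) while simultaneously achieving running time $O(|G|\beta^2 n^{1/\beta})$ --- is where the real work in \cite{ABCP98} lies: organizing clusters into disjoint clusterings by phases that temporarily ``block'' an inflated radius, carving full balls in $G$ for the covering guarantee while measuring densities and charging BFS work only against the shrinking active set. Your sketch identifies the right two tensions to reconcile. Since the paper treats the theorem as a known result, defer to \cite{ABCP98} for those bounds exactly as you do; there is no gap relative to what the paper itself establishes.
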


\begin{theorem}[Neighborhood Covers with Separation]
Given a graph $G$ with a node-weighting $A$ and length parameter $h$, there is an algorithm that computes a neighborhood cover of $A$ on $G$ with covering radius $h_{\cov} = h$, separation $h_{\sep} = h$, diameter $h_{\diam}=\lambda_{\NC,\diam}\cdot h_{cov}$ and width $\omega = \kappa_{\NC,\omega}$, where $\kappa_{\NC,\omega} = n^{O(\epsilon^{4})}$ and $\lambda_{\NC,\diam} = O(1/\epsilon^{4})$. Furthermore, if a cluster $S\in{\cal N}$ includes a virtual node of some vertex $v$, then $S$ includes all virtual nodes of $v$.
The running time is $(|G| + |A|)\cdot n^{O(\epsilon)}$.
\label{thm:NeighborhoodCovers} 
\end{theorem}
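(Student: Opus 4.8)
The plan is to reduce this to the classical neighborhood cover construction of \Cref{thm:ClassicNC} by a two-step process: first handle the node-weighting $A$ by a vertex-splitting reduction, and then convert a plain neighborhood cover (which only guarantees covering radius, diameter, and width) into one that additionally has the separation property. For the node-weighting, the clean trick is that \Cref{thm:ClassicNC} is stated for plain graphs covering $V(G)$, but we want to cover virtual nodes of $A$. Since the theorem only promises that clusters respect the partition into virtual nodes per vertex (``if a cluster includes a virtual node of $v$ it includes all of them''), we may as well work with the \emph{support} $\supp(A) = \{v : A(v) \geq 1\}$ as an ordinary vertex set, run the plain construction on $G$ restricted to covering $\supp(A)$, and then lift each cluster $S$ to the set of all virtual nodes owned by vertices in $S$. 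Distances are unchanged, the covering radius, diameter, and width carry over, and the ``includes all virtual nodes of $v$'' clause is automatic. The cost of this lifting is bounded by $|A|$ in the running time, which matches the claimed $(|G| + |A|)\cdot n^{O(\epsilon)}$.

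The substantive step is obtaining separation $h_{\sep} = h$ within each clustering, i.e. ensuring that distinct clusters inside one clustering are more than $h$ apart. The standard way to do this is a \emph{scale-up trick}: invoke \Cref{thm:ClassicNC} with a larger covering radius, say $h' = \Theta(\lambda_{\NC,\diam})\cdot h$ (so that the resulting diameter $O(\beta h')$ is still $O(1/\epsilon^4)\cdot h$ after choosing $\beta = \Theta(1/\epsilon^4)$), producing a neighborhood cover $\mathcal{N}'$ in which each vertex's $h'$-ball is contained in some cluster. Now I would \emph{shrink} each cluster: inside a clustering $\mathcal{S}' \in \mathcal{N}'$, replace each cluster $S'$ by its ``core'' $S = \{v \in S' : \Ball_{G}(v, h) \subseteq S'\}$, i.e. the vertices whose entire $h$-neighborhood was already inside $S'$. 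Two such cores from different clusters of the same clustering must then be more than $h$ apart: if $u \in \core(S_1')$ and $v \in \core(S_2')$ with $\dist_G(u,v) \leq h$, then $v \in \Ball_G(u,h) \subseteq S_1'$, contradicting disjointness of $S_1', S_2'$ within the clustering. Covering radius is preserved because if $\dist_G(u,v) \leq h$ then both lie in the cluster $S'$ containing $\Ball_G(u, h')$ (since $h' \geq h$ and balls of radius $h'$ around $u$ contain the $h$-ball around any point of the $h$-ball of $u$), and in fact one checks $u, v \in \core(S')$ using $h' \geq 2h$ (or a comparable constant-factor slack): for any $w$ with $\dist_G(u,w) \leq h$ we have $\dist_G(\text{center}, w) \leq \dist_G(\text{center}, u) + h$, so taking the center to be a vertex whose $h'$-ball is the guaranteeing cluster and using the triangle inequality gives the containment. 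Width and diameter of the shrunk cover are no worse than those of $\mathcal{N}'$.

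Finally, to make it a genuine \emph{neighborhood cover} (distributed factor $1$) rather than merely a pairwise cover with separation, note that the shrinking step assigns to each vertex $v$ the specific cluster $S = \core(S')$ where $S'$ is the ball cover of $v$ from $\mathcal{N}'$ at radius $h'$; by the covering-radius argument above, $\Ball_G(v, h) \subseteq S$, so $\{S\}$ is a valid ball cover of size $1$. Tracking the parameters: choose $\beta = \Theta(1/\epsilon^4)$ so that $n^{1/\beta} = n^{O(\epsilon^4)}$, giving width $\omega = O(\beta \cdot n^{1/\beta}) = n^{O(\epsilon^4)} = \kappa_{\NC,\omega}$ and diameter $O(\beta \cdot h') = O(1/\epsilon^4)\cdot h = \lambda_{\NC,\diam}\cdot h_{\cov}$; the running time from \Cref{thm:ClassicNC} is $O(|G|\cdot \beta^2 \cdot n^{1/\beta}) = |G|\cdot n^{O(\epsilon^4)} \leq |G| \cdot n^{O(\epsilon)}$, and the core-computation plus lifting to virtual nodes adds $(|G| + |A|)\cdot n^{O(\epsilon)}$ (one can compute cores by running bounded-radius BFS from each cluster boundary, whose total cost is width times graph size). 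The main obstacle I anticipate is getting the constant-factor slack between $h'$, the diameter bound, and the two uses of the triangle inequality exactly right so that covering radius $h$ survives the core-shrinking while the diameter stays $O(1/\epsilon^4)\cdot h$; but since we have an $\Theta(1/\epsilon^4)$-factor of room in $\beta$, this is a routine (if slightly fiddly) calculation rather than a genuine difficulty.
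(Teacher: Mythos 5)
Your plan matches the paper's approach: run the classical cover of \Cref{thm:ClassicNC} at an inflated radius, shrink each cluster to a core so that distinct cores in a clustering are far apart, and then pull the vertex-level cover back to virtual nodes of $A$. The two core definitions differ slightly (you take $S=\{v\in S':\Ball_{G}(v,h)\subseteq S'\}$, the paper takes the union of $h$-balls around vertices whose ball cover is $S'$), but both yield the covering, separation, and diameter bounds for the same reasons, and both support a distributed factor of $1$.

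There is, however, a concrete parameter error. You set the inflated radius to $h'=\Theta(\lambda_{\NC,\diam})\cdot h = \Theta(1/\epsilon^{4})\cdot h$, and then claim the resulting diameter $O(\beta h')$ with $\beta=\Theta(1/\epsilon^{4})$ is still $O(h/\epsilon^{4})$. That arithmetic does not work: $O(\beta h') = O((1/\epsilon^{4})\cdot(1/\epsilon^{4})\cdot h) = O(h/\epsilon^{8})$, which overshoots the claimed bound $h_{\diam}=\lambda_{\NC,\diam}\cdot h_{\cov}=O(h/\epsilon^{4})$. The inflated radius only needs a constant-factor slack (any $h'\ge 2h$ already makes the covering-radius and separation arguments go through, and the paper uses $h'=100\cdot h$); the $\Theta(1/\epsilon^{4})$ factor should live entirely in $\beta$, not in $h'$. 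With $h'=O(1)\cdot h$ and $\beta=\Theta(1/\epsilon^{4})$, the diameter becomes $O(\beta h')=O(h/\epsilon^{4})$ and the width $O(\beta n^{1/\beta})=n^{O(\epsilon^{4})}$ as required; the rest of your argument then goes through unchanged.
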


\begin{proof}
We first construct a neighborhood cover ${\cal N}'$ of $V(G)$ on $G$ with $h'_{\cov} = 100\cdot h$, $h'_{\diam} = O(h/\epsilon^{4})$ and width $\omega' = n^{O(\epsilon^{4})}$ by applying \Cref{thm:ClassicNC} with parameter $h' = 100\cdot h$ and $\beta = 1/\epsilon^{4}$. This step takes $|G|\cdot n^{O(\epsilon)}$ time.

We do the following for each clustering ${\cal S}'\in{\cal N}'$. For each cluster $S'\in{\cal S}'$, let $\Core(S') = \{v\in V(G)\mid \text{$S'$ is the ball cover of $v$}\}$. Note that for each $v\in \Core(S')$, we have $\Ball_{G,V(G)}(v,h')\subseteq S$ by the definition of ball covers. We define a new cluster $S = \bigcup_{v\in \Core(S')}\Ball_{G,V(G)}(v,h)$, which is the union of $h$-neighborhood of the core of $S'$. Obviously $S\subseteq S'$. Such new clusters $S$ of all old clusters $S'$ in this clustering can be computed by running single-source shortest path algorithm (with all cores together as the source), which takes $\wtilde{O}(|G|)$ time. Let ${\cal S}$ collect all new clusters $S$ when processing the old clustering ${\cal S}'$. Then we can define a new neighborhood cover ${\cal N}$ of $V(G)$ by taking all the new clustering ${\cal S}$.

Obviously, ${\cal N}$ has $h_{\cov} = h$, $h_{\diam} = h'_{\diam}$ and $\omega = \omega'$. The ball cover of each $v\in V(G)$ is just the new cluster corresponding to $v$'s old ball cover. It remains to show the separation of ${\cal N}$ is $h_{\sep} = h$. Consider two clusters $S_{1},S_{2}$ in some clustering ${\cal S}\in{\cal N}$. Let $S'_{1},S'_{2}\in{\cal S'}$ be the old clusters corresponding to $S_{1}$ and $S_{2}$. Assume for the contradiction that, there exists $u\in S_{1}$ and $v\in S_{2}$ s.t. $\dist_{G}(u,v)\leq h$. Note that $v\in \Ball_{G,V(G)}(v',h)$ for some $v'\in \Core(S'_{2})$ by the construction of $S_{2}$. Therefore, $u\in \Ball_{G,V(G)}(v',2h)\subseteq \Ball_{G,V(G)}(v',h')\subseteq S'_{2}$, a contradiction because $u\in S_{1}\subseteq S'_{1}$ and $S'_{1}$ is disjoint with $S'_{2}$.

Finally, to obtain the neighborhood cover of $A$, just simply replace each $S\in{\cal N}$ with $S\cap A$.
\end{proof}

\subsection{Routers}

\begin{definition}[Demand]
Given a graph $G$, a \textit{demand} $D:\binom{V}{2}\to \mathbb{R}_{\geq 0}$ assigns a nonnegative value to each unordered pair of vertices $(u,v)$. For convenience, both $D(u,v)$ and $D(v,u)$ denote the demand between $u$ and $v$. Each pair $d=(u,v)\in\supp(D)$ is called a \textit{demand pair}. \begin{itemize}
\item For a node-weighting $A$ on $G$, $D$ is \textit{$A$-respecting} if for each vertex $v$, $\sum_{u\in V(G)}D(v,u)\leq A(v)$ holds. In particular, We say $D$ is a $\mathds{1}(V(G))$-respecting demand if $\sum_{u\in V(G)} D(u,v)\leq 1$ for each $v\in V(G)$. 
\item For a length parameter $h$, $D$ is \textit{$h$-length} if each pair of vertices $u,v\in V(G)$ s.t. $D(u,v)>0$ has $\dist_{G}(u,v)\leq h$.
\end{itemize}
A demand $D$ is integral if all demand pairs $(u,v)$ have integral $D(u,v)$.
\end{definition}

\begin{definition}[Multicommodity Flows/Routing]
Given a graph $G$, a \textit{(multicommodity) flow} on $G$ is a function $f$ that assigns each simple path $P$ on $G$ a nonnegative real number $f(P)\geq 0$. a simple path $P$ is a \emph{flow path} of $f$ if $f(P)>0$. We let $\path(f)$ be the set collecting all flow paths of $f$, and let $|\path(f)|$ denote the number of flow paths, called \emph{path count}. The flow $f$ is integral if all $P\in\path(f)$ has integral value $f(P)$. 
\begin{itemize}
\item The flow $f$ has \textit{congestion} $\gamma$ if $\sum_{P\ni e}f(P)\leq \gamma$ for each edge $e\in E(G)$. 
\item The flow $f$ has \textit{length} $h$ if all flow paths $P\in\path(f)$ has length $\ell_{G}(P)\leq h$.
\end{itemize}
The \textit{demand routed by $f$}, deonted by $D_{f}$, is such that $D_{f}(u,v) = \sum_{\text{$(u,v)$-paths $P\in\path(f)$}}f(P)$ for each pair of $u,v\in V(G)$. For a demand $D$, a \textit{routing} of $D$ is a flow $f$ with $D_{f} = D$, and we say $D$ can be routed on $G$ with congestion $\gamma$ and length $h$ if there exists a routing $f$ of $D$ with congestion $\gamma$ and length $h$. 
\end{definition}

\begin{definition}
[Routers] An $(h_{\rt},\gamma_{rt})$-router $R$ for a node-weighting $A_{R}$ is a unit-edge-length multi-graph s.t. any $A_{R}$-respecting demand can be routed on $R$ with length 
$h_{\rt}$ and congestion $\gamma_{\rt}$. When $A_{R} = \mathds{1}(V(R))$, we simply say $R$ is an $(h_{\rt},\gamma_{\rt})$-router.
\label{def:router}
\end{definition}

Given a routable demand on a router, one can efficiently route it.  

\begin{thm}
[Routing on a Router \cite{HHLRS2023emu}] Let $R$ be a $(h_{\rt},\gamma_{\rt})$-router for a node-weighting $A_{R}$. Given an $A_{R}$-respecting demand $D$, there is an algorithm that computes an integral routing of $D$ with congestion $\gamma_{\rt}\cdot \kappa_{\rr,\gamma}$ and length $h_{\rt}\cdot\lambda_{\mathrm{\rr,h}}$, where $\lambda_{\rr,h}=2^{\poly(1/\epsilon)}$ and $\kappa_{\rr,\gamma}=n^{O(\epsilon^{4})}$. The running time is $(|R|+|A|)\cdot \poly(h_{\rt})\cdot n^{O(\epsilon)}$.
\label{thm:RouterRouting}
\end{thm}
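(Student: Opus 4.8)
The router property of \Cref{def:router} is purely existential: it only guarantees that \emph{some} length-$h_{\rt}$, congestion-$\gamma_{\rt}$ routing of every $A_R$-respecting demand exists. The task is to compute one for the specific demand $D$, integrally, and within time $(|R|+|A_R|)\cdot\poly(h_{\rt})\cdot n^{O(\epsilon)}$. My plan has two stages. First, \emph{certify} that $R$ is a length-constrained expander for $A_R$ by computing an embedding witness. Second, use that witness to route $D$ via the standard divide-and-conquer expander-routing procedure, which reduces routing a multicommodity demand on an expander to a controlled number of length-constrained flow computations, each solved by the local algorithm of \Cref{thm:IntroLocalFlow}.

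\textbf{Stage 1: the witness.} I would run the witnessed length-constrained expander decomposition (\Cref{thm:WitnessedED}, from the static length-constrained expander toolkit of \cite{HaeuplerHT2023length}) on $R$ with node-weighting $A_R$, length parameter $h_{\rt}$, and expansion parameter $\phi=\Theta(1/\gamma_{\rt})$. Because $R$ routes \emph{every} $A_R$-respecting $h_{\rt}$-length demand with length $h_{\rt}$ and congestion $\gamma_{\rt}$ --- in particular any demand that would exhibit a $\phi$-sparse $h_{\rt}$-length cut --- the decomposition cannot return a nonempty moving cut, so it outputs an embedding of a union of routers into $R$ of length $h_{\rt}\cdot\lambda_{\ED,h}$ and congestion $\gamma_{\rt}\cdot\kappa_{\ED,C}$ whose covered node-weighting is (essentially) all of $A_R$; any negligible remainder allowed by the $\kappa_{\ED,C}$ slack is routed by brute force. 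This decomposition internally uses cutmatches (\Cref{thm:cutmatch}) built on \Cref{thm:IntroLocalFlow}, so it runs in $(|R|+|A_R|)\cdot\poly(h_{\rt})\cdot n^{O(\epsilon)}$ time. (If $R$ comes already equipped with such a witness --- e.g.\ when produced by \Cref{thm:Router} --- this stage is skipped.)

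\textbf{Stage 2: routing $D$.} Given the witness, I would invoke the expander-routing-with-witness algorithm (\Cref{thm:WitnessExpanderRouting}) on $R$ and $D$. It recursively splits the weight of $A_R$, routes the part of $D$ crossing each split across it \emph{while preserving the exact demand pairing} --- matching half-paths through auxiliary matchings that are themselves routed by length-constrained flows --- and recurses; the recursion depth and per-level length/congestion overheads are controlled so that the total length slack is $\lambda_{\emb,h}=2^{\poly(1/\epsilon)}$ and the total congestion slack is $\kappa_{\emb,\gamma}=n^{O(\epsilon^4)}$. Since $D$ is integral and the flow primitive of \Cref{thm:IntroLocalFlow} can be taken to output integral flows, the result is an integral routing of $D$ with length $h_{\rt}\cdot\lambda_{\ED,h}\cdot\lambda_{\emb,h}$ and congestion $\gamma_{\rt}\cdot\kappa_{\ED,C}\cdot\kappa_{\emb,\gamma}$; setting $\lambda_{\rr,h}=\lambda_{\ED,h}\cdot\lambda_{\emb,h}=2^{\poly(1/\epsilon)}$ and $\kappa_{\rr,\gamma}=\kappa_{\ED,C}\cdot\kappa_{\emb,\gamma}=n^{O(\epsilon^4)}$ proves the claim. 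The running time is dominated by the flow computations and by writing the output, which consists of $O(|A_R|)$ flow paths each of length $h_{\rt}\cdot 2^{\poly(1/\epsilon)}$, hence $(|R|+|A_R|)\cdot\poly(h_{\rt})\cdot n^{O(\epsilon)}$ in total.

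\textbf{Main obstacle.} The delicate part is \emph{parameter alignment}. The guarantee for $R$ is tied to the single length scale $h_{\rt}$ and congestion budget $\gamma_{\rt}$, whereas \Cref{thm:WitnessedED,thm:WitnessExpanderRouting} only certify (resp.\ route) at a scale shrunk by a $2^{\poly(1/\epsilon)}$ length slack and with congestion inflated by an $n^{O(\epsilon^4)}$ factor. One must argue that the hypothesis ``$R$ routes all $h_{\rt}$-length $A_R$-respecting demands with length $h_{\rt}$ and congestion $\gamma_{\rt}$'' is strong enough to force Stage 1 to return a \emph{complete} witness (empty cut, all of $A_R$ covered) even though the algorithm only needs to verify a slacked expansion condition, and then compose the length and congestion slacks of the two stages so that the totals land exactly at $2^{\poly(1/\epsilon)}$ and $n^{O(\epsilon^4)}$. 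A secondary point is integrality: the flows and the matching / half-path bookkeeping of Stage 2 must be carried out integrally, which is standard given that \Cref{thm:IntroLocalFlow} returns integral flows on integral capacities. Finally, Stage 2 would collapse to a single length-constrained maxflow invocation if one only needed a \emph{free} assignment between the endpoints of $D$; it is precisely the requirement to realize the exact pairing of $D$ that forces the recursive expander-routing approach in general.
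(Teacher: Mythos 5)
There is a genuine circularity in your plan. In Stage 2 you invoke \Cref{thm:WitnessExpanderRouting}, but the paper proves \Cref{thm:WitnessExpanderRouting} \emph{by applying \Cref{thm:RouterRouting}}: it assigns each demand pair to a cluster, then calls \Cref{thm:RouterRouting} on the corresponding router $R^S$ to route the intra-cluster demand $D_S$. So the theorem you are using as your main tool already presupposes the theorem you are trying to prove. In principle one might hope this bottoms out because the routers $R^S$ produced in Stage 1 are smaller than $R$, but you treat both \Cref{thm:WitnessedED} and \Cref{thm:WitnessExpanderRouting} as black boxes and never set up a recursion, identify a base case, or control the multiplicative blow-up of length/congestion slacks across recursive levels; as written the dependency is simply circular.

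A secondary gap is your claim that Stage 1 must return an empty moving cut and hence a witness covering all of $A_R$. \Cref{thm:WitnessedED} only gives an \emph{upper} bound $|C|\le\kappa_{\ED,C}\cdot h\cdot\phi\cdot|A|$; the algorithm is free to output a small but nonempty cut even on a graph that is already a length-constrained expander at scale $1/\phi$, and then the resulting pairwise cover is a cover on $G-C$, whose routers only certify intra-cluster routability — cross-cluster demand pairs separated by $C$ are not handled by the witness. The paper takes a structurally different route that sidesteps both issues: it cites \cite{HHLRS2023emu} for the general capacitated statement and remarks that the unit-capacity case can be proven directly by running the length-constrained cut-matching game of \cite{haeupler2022cut} recursively as in \cite{CGLNPS20,chang2020deterministic}. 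The cut-matching game yields an explicit, layered matching structure on $R$ (analogous to a sorting/permutation network) that one routes $D$ through directly, so no ``route on a router'' subroutine is needed at the bottom of the recursion — precisely the dependence your two-stage plan fails to escape.
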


A stronger capacitated version of the above theorem is proven in \cite{HHLRS2023emu}. The unit-capacity version here can be proven in a simpler way by applying the length-constrained cut-matching game from \cite{haeupler2022cut} in a recursive manner as in \cite{CGLNPS20} and \cite{chang2020deterministic}.

\subsection{Length-Constrained Expanders}

We will introduce some concepts related to length-constrained expanders.

\begin{definition}[Integral Moving Cut]
Given a graph $G$, an \textit{integral moving cut} is a function $C:E(G)\to \mathbb{N}$ which assigns an integral length increase $C(e)$ to each edge $e\in E(G)$. It has size $|C| = \sum_{e}C(e)$. A vertex $v\in V(G)$ is a \textit{$C$-vertex} if $v$ is incident to some edge $e\in \supp(C)$. Furthermore, $C$ is an \textit{$h$-length integral moving cut} if $0\leq C(e)\leq h$ holds for each edge $e\in E(G)$.

We use $G-C$ to denote the graph with the same set of vertices and edges as $G$, but it has edge length $\ell_{G-C}(e) = \ell_{G}(e) + C(e)$ for each edge $e$.
\label{def:IntegralMovingCut}
\end{definition}

We note that in the literature of length-constrained expanders, e.g. \cite{HRG22,HHLRS2023emu,HaeuplerHT2023length}, an $h$-length moving cut is usually an edge weight function with weights in $\{0,1/h,2/h,...,(h-1)/h,1\}$, and the length of $G-C$ is $\ell_{G-C} = \ell_{G} + h\cdot C$. In \Cref{def:IntegralMovingCut}, we just scale up the moving cut to be an integral function, because integral moving cuts are easier to handle in the dynamic setting. The scale up will also change the definition of length-constrained expanders below, but actually it is equivalent to that in the literature.

\begin{definition}[Length-Constrained Expanders]

Let $G$ be a graph with node-weighting $A$. For a length parameter $h$, a length slack factor $s$, an $(hs)$-length integral moving cut $C$ and an $A$-respecting $h$-length demand $D$ on $G$, the \textit{sparsity of $C$ with respect to $D$} is
\[
\spars_{(h,s)}(C,D) = \frac{|C|/(hs)}{\sep_{hs}(C,D)},
\]
where 
\[
\sep_{hs}(C,D) = \sum_{(u,v):\dist_{G-C}(h,s)>hs} D(u,v)
\]
denotes the total $D$-demands of vertex pairs whose distance is increased by a factor $s$ after adding the moving cut, i.e. from at most $h$ to larger than $hs$.

The \textit{$(h,s)$-length conductance of $G$ w.r.t. $A$} is
\[
\cond_{(h,s)}(A) = \min_{\substack{\text{$h$-length $A$-respecting}\\ \text{demand $D$}}}\ \min_{\text{integral moving cut $C$}}\spars_{(h,s)}(C,D).
\]
Note that, in the above expression, the moving cut $C$ minimizes $\spars_{(h,s)}(C,D)$ must be $(hs)$-length even if the minimizing operation considers integral moving cut without length constraint.

We say $G$ is an \textit{$(h,s)$-length $\phi$-expander} w.r.t. $A$ if $\cond_{(h,s)}(A)\geq \phi$. 

\end{definition}

\begin{lemma}[Flow Characterization of Length-Constrained Expanders \cite{HRG22}]
If $G$ is an $(h,s)$-length $\phi$-expander w.r.t. $A$, then every $h$-length $A$-respecting demand $D$ can be routed on $G$ with congestion $O(\log n/\phi)$ and length at most $sh$. 

For a graph $G$ and node-weighting $A$, if every $h$-length $A$-respecting demand can be routed on $G$ with congestion $1/(2\phi)$ and length $sh/2$, then $G$ is an $(h,s)$-length $\phi$-expander w.r.t. $A$.
\label{lemma:HopExpanderFlowView}
\end{lemma}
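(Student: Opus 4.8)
The plan is to prove the two implications separately. The easier one is the second: a good $sh$-length routing certifies expansion. Fix an $h$-length $A$-respecting demand $D$ and an integral moving cut $C$; I want $\spars_{(h,s)}(C,D)\ge\phi$. If $\sep_{hs}(C,D)=0$ the sparsity is infinite, so assume it is positive and let $D'$ be the restriction of $D$ to the pairs $(u,v)$ with $\dist_{G-C}(u,v)>hs$; then $\sum_{u,v}D'(u,v)=\sep_{hs}(C,D)$, and $D'$ is again $h$-length and $A$-respecting since $D'\le D$ pointwise. By hypothesis $D'$ is routed on $G$ by a flow $f$ (so $\sum_{P\in\path(f)}f(P)=\sum_{u,v}D'(u,v)=\sep_{hs}(C,D)$) with congestion $1/(2\phi)$ and $\ell_G$-length at most $sh/2$. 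Every $P\in\path(f)$ joins a pair $(u,v)\in\supp(D')$, so $\ell_{G-C}(P)\ge\dist_{G-C}(u,v)>hs$ while $\ell_G(P)\le sh/2$, whence $\sum_{e\in P}C(e)=\ell_{G-C}(P)-\ell_G(P)>sh/2$. Weighting by $f(P)$ and exchanging the order of summation,
\begin{align*}
\frac{sh}{2}\cdot\sep_{hs}(C,D)=\frac{sh}{2}\sum_{P\in\path(f)}f(P)
&<\sum_{P\in\path(f)}f(P)\sum_{e\in P}C(e)\\
&=\sum_{e\in E(G)}C(e)\sum_{P\ni e}f(P)\le\frac{1}{2\phi}\sum_{e\in E(G)}C(e)=\frac{|C|}{2\phi},
\end{align*}
which rearranges to $\spars_{(h,s)}(C,D)=\frac{|C|/(hs)}{\sep_{hs}(C,D)}>\phi$; since $D$ and $C$ were arbitrary, $\cond_{(h,s)}(A)\ge\phi$.

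For the first implication, fix an $h$-length $A$-respecting demand $D$ and let $\lambda^\star$ be the largest $\lambda$ for which $\lambda D$ is routable on $G$ at congestion $1$ using only paths of $\ell_G$-length at most $sh$. Scaling an optimal such routing by $1/\lambda^\star$ routes $D$ at congestion $1/\lambda^\star$ and length $sh$, so it suffices to show $\lambda^\star=\Omega(\phi/\log n)$. Writing the path-flow LP for this routing problem and dualizing,
\[
\lambda^\star=\min_{w:E(G)\to\mathbb{R}_{\ge0}}\ \frac{\sum_e w(e)}{\sum_{(u,v)}D(u,v)\,d_w^{sh}(u,v)},
\]
where $d_w^{sh}(u,v)$ is the least $w$-weight of a $(u,v)$-path of $\ell_G$-length at most $sh$ --- equivalently a genuine shortest-path distance in the time-expanded graph with $O(sh)$ layers. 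If $\lambda^\star<\phi/(c\log n)$ for a suitable constant $c$, I would take the minimizing $w$ and run the length-constrained analogue of the Leighton--Rao region-growing argument on the metric $d_w^{sh}$ (inside the time-expanded graph, then projected onto $E(G)$) to extract an integral moving cut $C$ with $\spars_{(h,s)}(C,D)=O(\log n)\cdot\lambda^\star<\phi$, contradicting $\cond_{(h,s)}(A)\ge\phi$. Hence $\lambda^\star=\Omega(\phi/\log n)$, and the claimed routing --- congestion $O(\log n/\phi)$, length $sh$ --- exists.

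The main obstacle is exactly this rounding step: producing a single \emph{integral} moving cut on $E(G)$ from the fractional dual $w$ while losing only an $O(\log n)$ factor in sparsity. The difficulty is that the $sh$-length cap on flow paths makes $d_w^{sh}$ not an honest shortest-path metric on $(G,w)$, so classical region-growing does not apply verbatim; one works in the time-expanded layered graph --- which has $\poly(n)$ size since $s$, $h$, and the edge lengths are polynomially bounded, keeping the flow--cut gap at $O(\log n)$ --- and must argue that the layered cut can be read back as an integral length increase on the original edges, tracking the $hs$-normalizations so that they cancel as in the display above. This is precisely the length-constrained flow--cut gap established in \cite{HRG22}, so one may instead simply invoke it as a black box.
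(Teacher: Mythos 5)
The paper gives no proof of this lemma; it cites \cite{HRG22} and moves on, so there is no in-paper argument to compare against.

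Your proof of the second implication (good routing implies expansion) is complete and correct. Restricting $D$ to the separated pairs to form $D'$, observing that $D'$ is still $h$-length and $A$-respecting, routing $D'$, and then double-counting $\sum_{P} f(P)\sum_{e\in P}C(e)$ against both the length lower bound $C(P)>sh/2$ (which follows since $\ell_{G-C}(P)\ge \dist_{G-C}(u,v)>hs$ and $\ell_G(P)\le sh/2$) and the congestion upper bound $1/(2\phi)$ is exactly the right argument, and the constants cancel to give $\spars_{(h,s)}(C,D)>\phi$ as required. You also correctly handle the degenerate case $\sep_{hs}(C,D)=0$.

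The first implication, however, has a genuine gap that you yourself flag. The LP-duality setup is correct --- the dual of $sh$-length-constrained maximum concurrent flow is the fractional length-constrained sparsest cut $\min_w \sum_e w(e)\big/\sum_{(u,v)}D(u,v)\,d^{sh}_w(u,v)$ --- and the reduction of the claim to $\lambda^\star=\Omega(\phi/\log n)$ is also correct. But the step where you ``run the length-constrained analogue of the Leighton--Rao region-growing argument'' to round the fractional dual $w$ into an integral moving cut $C$ with $\spars_{(h,s)}(C,D)=O(\log n)\cdot\lambda^\star$ is not a minor technicality: it is the actual mathematical content of the lemma, and it is nontrivial precisely for the reasons you enumerate ($d^{sh}_w$ fails the triangle inequality on $G$, the region-growing must happen in the layered/time-expanded graph, and the layered cut must be projected back to an integral length-increase on $E(G)$ without distorting the $hs$ normalization). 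Deferring this to \cite{HRG22} ``as a black box'' is honest but leaves the hard direction unproven rather than proved. Since the paper itself only cites \cite{HRG22} for this lemma, your treatment is in practice no weaker than the paper's --- but it should not be mistaken for a self-contained proof of the first implication.
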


Based on the flow characterization of length-constrained expanders, the following object called \emph{expansion witness} can certify that a graph is an length-constrained expander, as shown in \Cref{lemma:WitnessToFlow}. Before that, we first define the notion of \emph{embedding}. Intuitively, an embedding is just an integral and discretized interpretation of routing.

\begin{definition}[Embedding]
Let $G$ be a graph with edge lengths. Let $G_{D}$ be a unit-edge-length multi-graph with a mapping $\tau:V(G_{D})\to V(G)$. An embedding of $G_{D}$ into $G$, denoted by $\Pi_{G_{D}\to G}$, is a mapping from $E(G_{D})$ to simple paths (called \emph{embedding paths}) on $G$, s.t. each edge $e=(u,v)\in E(G_{D})$ is mapped to a simple path $\Pi_{G_{D}\to G}(e)$ connecting vertices $\tau(u),\tau(v)\in V(G)$. 
\begin{itemize}
\item The embedding $\Pi_{G_{D}\to G}$ has length $h$ if all embedding paths have length at most $h$ on $G$. 
\item The embedding $\Pi_{G_{D}\to G}$ has congestion $\gamma$ if each edge $e\in E(G)$ is contained by at most $\gamma$ embedding paths.
\end{itemize}
We say $\Pi_{G_{D}\to G}$ is an $(h,\gamma)$-embedding if it has length $h$ and congestion $\gamma$.

\end{definition}

\begin{defn}
[Expansion Witness]Let $G$ be a graph. Let $A$ be a node weighting. An \emph{$(h,h_{{\cal R}},\gamma_{{\cal R}},h_{\Pi},\gamma_{\Pi})$-witness} of $A$ in $G$ is a tuple $({\cal N},\mathcal{R},\Pi_{{\cal R}\to G})$. 
\begin{itemize}
\item ${\cal N}$ is a neighborhood cover of $\supp(A)$ on $G$ with covering radius $h$.
\item $\mathcal{R} = \{R^{S}\mid S\in{\cal N}\}$ is a collection of routers. For each cluster $S\in{\cal N}$, its corresponding router $R^{S}\in{\cal R}$ is a $(h_{{\cal R}},\gamma_{{\cal R}})$-router $R^{S}\in{\cal R}$ with $V(R^{S}) = S$ for node-weighting $A_{S} = A_{\mid S}$ (the restriction of $A$ on $S$). 
\item $\Pi_{{\cal R}\to G}$ is an $(h_{\Pi},\gamma_{\Pi})$-embedding of the union of all routers (i.e. $\bigcup_{R\in{\cal R}}R$) to $G$. Without ambiguity, ${\cal R}$ also refers to $\bigcup_{R\in{\cal R}}R$. Note that the embedding uses the mapping $\tau:V({\cal R})\to V(G)$ which is naturally defined by the fact that $V({\cal R})\subseteq V(G)$.

\end{itemize}
\end{defn}

\begin{lemma}
[Expansion Witness Certifies Expansion]Let $G$ be a graph with node-weighting $A$. Suppose that there exists a \emph{$(h,h_{{\cal R}},\gamma_{\cal R},h_{\Pi},\gamma_{\Pi})$-witness} $({\cal N},\mathcal{R},\Pi_{{\cal R}\to G})$ of $A$ in $G$. Then, for an arbitrary $h$-length $A$-respecting demand $D$, it can be routed on $G$ with length $h_{{\cal R}}\cdot h_{\Pi}$ and congestion $\gamma_{\cal R}\cdot \gamma_{\Pi}$.
\label{lemma:WitnessToFlow}
\end{lemma}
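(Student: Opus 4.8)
The plan is the standard \emph{decompose--route--embed} argument. Fix an arbitrary $h$-length $A$-respecting demand $D$. First I would split $D$ into per-cluster sub-demands $\{D_S\}_{S\in{\cal N}}$, then route each $D_S$ inside the router $R^S$ using the router property, and finally push each router-routing down into $G$ by substituting every router edge with its embedding path. Concatenating a length-$h_{\cal R}$ router path with length-$h_{\Pi}$ embedding paths yields $G$-paths of length at most $h_{\cal R}\cdot h_{\Pi}$, and a two-step congestion count (router congestion times embedding congestion) yields the bound $\gamma_{\cal R}\cdot\gamma_{\Pi}$.

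For the decomposition step I would use that ${\cal N}$ is a ($1$-distributed) neighborhood cover with covering radius $h$: for each endpoint $v$ the whole ball $\Ball_{G,A}(v,h)$ lies in a single cluster $S_v\in{\cal N}$. Since $D$ is $h$-length, every demand pair $(u,v)\in\supp(D)$ has $\dist_G(u,v)\le h$, so $u\in S_v$ and $v\in S_u$; assigning each pair $(u,v)$ to the cluster $S_w$ of its endpoint $w$ of smaller index, and letting $D_S$ collect the pairs assigned to $S$, gives $\supp(D_S)\subseteq\binom{V(R^S)}{2}$ because both endpoints of such a pair lie in $S_w$. Moreover $D_S\le D$ pointwise, so for every $v\in V(R^S)=S$ we have $\sum_u D_S(v,u)\le\sum_u D(v,u)\le A(v)=A_S(v)$, i.e.\ $D_S$ is $A_S$-respecting, where $A_S=A_{\mid S}$.

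Next, for each $S\in{\cal N}$, since $R^S$ is an $(h_{\cal R},\gamma_{\cal R})$-router for $A_S$, the $A_S$-respecting demand $D_S$ is routed on $R^S$ by a flow $f_S$ of length at most $h_{\cal R}$ and congestion at most $\gamma_{\cal R}$. Replacing every router edge $e'\in E(R^S)$ on a flow path of $f_S$ by the embedding path $\Pi_{{\cal R}\to G}(e')$ turns $f_S$ into a flow $\tilde f_S$ on $G$ with $D_{\tilde f_S}=D_S$; each flow path of $f_S$ uses at most $h_{\cal R}$ edges, each replaced by a path of length at most $h_{\Pi}$, so $\tilde f_S$ has length at most $h_{\cal R}\cdot h_{\Pi}$. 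Setting $f=\sum_{S\in{\cal N}}\tilde f_S$ routes $D=\sum_{S} D_S$. For the congestion, fix $e\in E(G)$; then the total flow of $f$ through $e$ equals $\sum_{S\in{\cal N}}\ \sum_{e'\in E(R^S):\,e\in\Pi_{{\cal R}\to G}(e')}(\text{flow of }f_S\text{ on }e')$, which is at most $\gamma_{\cal R}\cdot\bigl|\{e'\in E(\bigcup_{R\in{\cal R}}R):e\in\Pi_{{\cal R}\to G}(e')\}\bigr|\le\gamma_{\cal R}\cdot\gamma_{\Pi}$, the last inequality being exactly the congestion-$\gamma_{\Pi}$ guarantee of the joint embedding. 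Hence $D$ is routed on $G$ with length $h_{\cal R}\cdot h_{\Pi}$ and congestion $\gamma_{\cal R}\cdot\gamma_{\Pi}$.

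The only delicate point is the decomposition: the restricted demands $D_S$ must be $A_S$-respecting, and this is where being a \emph{neighborhood cover} (not merely a pairwise cover) is essential — a $1$-distributed cover guarantees that all demand emanating from a vertex $v$ (which, by $h$-lengthness, reaches only $\Ball_{G,A}(v,h)$) can be served from the single cluster $S_v$, so that each $D_S$ is dominated by $D$ at every vertex it touches. With a general pairwise cover, two demand pairs at the same vertex could be forced into different clusters and the per-cluster demand could exceed $A_S$. The remaining pieces — router routing, concatenation of paths, and summing congestion — are routine. (Everything also goes through at the level of virtual nodes if one prefers to treat clusters as subsets of $A$: one first spreads the demand at each vertex over its virtual nodes, which is possible since $D$ is $A$-respecting, and then argues exactly as above.)
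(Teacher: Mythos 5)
Your proof is correct and uses the same decompose--route--embed argument as the paper; the only difference is that the paper assigns each demand pair to an \emph{arbitrary} cluster containing both endpoints, which suffices because such a cluster exists by the covering-radius property alone.

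Your closing paragraph, however, contains a misconception worth correcting. You claim the $1$-distributed neighborhood-cover structure is essential and that ``with a general pairwise cover, two demand pairs at the same vertex could be forced into different clusters and the per-cluster demand could exceed $A_S$.'' That is backwards: spreading the demand pairs at a vertex $v$ across several clusters only makes each $D_S(v,\cdot)$ \emph{smaller}. Since each pair is assigned to exactly one cluster, $D_S\le D$ pointwise for every choice of assignment, hence $\sum_u D_S(v,u)\le\sum_u D(v,u)\le A(v)=A_S(v)$ automatically — which is precisely the derivation you wrote two paragraphs earlier. The decomposition therefore is not delicate, the ball-cover/smaller-index device is unnecessary, and the $1$-distributed property plays no role in this lemma; only the covering-radius guarantee of a pairwise cover is used.
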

\begin{proof}

We assign demand pairs of $D$ to clusters in ${\cal N}$ as follows. For each demand pair $(u,v)\in \supp(D)$, we assign it to an arbitrary cluster $S\in{\cal N}$ s.t. $u,v\in S$. Note that $S$ must exist because $D$ is $h$-length on $G$ and ${\cal N}$ has covering radius $h$. 

For each cluster $S\in{\cal N}$, let $D_{S}:\binom{S}{2}\to\mathbb{R}_{\geq 0}$ be the total demand of pairs assigned to $S$. Consider the router $R\in{\cal R}$ corresponding to $S$. Observe that $D_{S}$ is a $A_{S}$-respecting demand, so it can be routed on $R$ with length $h_{{\cal R}}$ and congestion $\gamma_{\cal R}$, by \Cref{def:router}. Finally, because all routers can be embedded into $G$ simultaneously with length $h_{\Pi}$ and congestion $\gamma_{\Pi}$, $D = \sum_{S\in{\cal N}}D_{S}$ can be routed on $G$ with length $h_{{\cal R}}\cdot h_{\Pi}$ and congestion $\gamma_{\cal R}\cdot \gamma_{\Pi}$.
\end{proof}

\begin{definition}[Witnessed Expander Decomposition]
Let $G$ be a graph with node-weighting $A$. A \emph{$(h,h_{\cal R},\gamma_{\cal R},h_{\Pi},\gamma_{\Pi})$-witnessed expander decomposition} ($(h,h_{\cal R},\gamma_{\cal R},h_{\Pi},\gamma_{\Pi})$-witnessed-ED for short) of $A$ on $G$ includes an integral moving cut $C$ and an $(h,h_{\cal R},\gamma_{\cal R},h_{\Pi},\gamma_{\Pi})$-witness $({\cal N}, {\cal R}, \Pi_{{\cal R}\to G-C})$ of $A$ on $G-C$.
\label{def:WitnessedED}
\end{definition}

\begin{thm}
[Algorithmic Witnessed Expander Decomposition \cite{HaeuplerHT2023length}]Let $G$ be a graph with edge lengths and node-weighting $A$. Given parameters $h$ and $\phi$, there exists a deterministic algorithm that computes an $(h,h_{{\cal R}},\gamma_{\cal R},h_{\Pi},\gamma_{\Pi})$-witnessed-ED $(C,{\cal N},\mathcal{R},\Pi_{{\cal R}\to G-C})$ of $A$ with the following additional guarantees.
\begin{itemize}
\item $|C|\leq \kappa_{\ED,C}\cdot h\cdot\phi\cdot|A|$,
\item ${\cal N}$ has width $\omega = \kappa_{\PC,\omega}$,
\item The total number of edges in all routers is $|E({\cal R})|\leq |E(G)|\cdot n^{O(\epsilon)}$,
\item $h_{{\cal R}} = \lambda_{\ED,h}$, $h_{\Pi}=\lambda_{\ED,h}\cdot h$, $\gamma_{\cal R} = 1$ and $\gamma_{\Pi}=1/\phi$,
\end{itemize}
where $\kappa_{\ED,C} = n^{O(\epsilon^{4})}$ and $\lambda_{\ED,h} = 2^{\poly(1/\epsilon)}$ are newly defined global parameters, and the global parameter $\kappa_{\PC,\omega} = n^{O(\epsilon^{4})}$ is the same as that in \Cref{thm:NeighborhoodCovers}. The running time is $|G|\cdot\poly(h)\cdot n^{O(\epsilon)}$.
\label{thm:WitnessedED}
\end{thm}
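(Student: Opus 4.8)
The plan is to follow the classical recursive–sparse–cut template for expander decomposition, but with the \emph{length-constrained cut-matching game} of \cite{haeupler2022cut,HaeuplerHT2023length} (and its constant-step refinement) as the engine, so that the routers we build have $2^{\poly(1/\epsilon)}$ diameter rather than the usual $\poly\log n$. First I would localize: apply \Cref{thm:NeighborhoodCovers} to $G$ with length parameter $\Theta(h)$ to get a neighborhood cover ${\cal N}_{0}$ of $A$ with covering radius $\ge h$, separation $\ge h$, diameter $O(h/\epsilon^{4})$, and width $\kappa_{\NC,\omega}=n^{O(\epsilon^{4})}$. Inside one clustering of ${\cal N}_{0}$ the clusters are pairwise $h$-separated, so no $h$-length $A$-respecting demand crosses between them; hence it suffices to build a witnessed-ED for the restriction of $A$ to each individual cluster of ${\cal N}_{0}$ (each of diameter $O(h/\epsilon^{4})$) and then union over all $O(\kappa_{\NC,\omega})$ clusters. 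The output width $\kappa_{\PC,\omega}=n^{O(\epsilon^{4})}$ and the $n^{O(\epsilon)}$ blow-up in the total router-edge count both originate here.

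On a single low-diameter piece $S_{0}$, I would run the length-constrained cut-matching game. Over $\poly(1/\epsilon)$ rounds, each round solves an approximate $h$-length maxflow / min moving-cut instance between a bipartition of the surviving node-weighting: a round with large flow contributes a matching routed into $G[S_{0}]$ with length $O(h)$ and congestion $O(1/\phi)$, while a round with small flow exposes a roughly balanced sparse $(h,s)$-length moving cut. Either the accumulated matchings, after $\poly(1/\epsilon)$ rounds, form a router $R^{S_{0}}$ with $h_{{\cal R}}=2^{\poly(1/\epsilon)}=\lambda_{\ED,h}$ and $\gamma_{{\cal R}}=1$ together with an embedding into $G[S_{0}]$ of length $\lambda_{\ED,h}\cdot h$ and congestion $1/\phi$, certifying $S_{0}$ as a length-constrained expander (output the singleton $\{S_{0}\}$), or the game returns a balanced sparse moving cut $C_{0}$, which I round to an integral $(hs)$-length cut, add to $C$, and recurse on the two sides. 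Balancedness caps the recursion depth at $O(\log n)$, and the moving-cut sizes telescope: each recursion level adds $O(\phi h)$ per unit of node-weighting it touches, so over all levels, all $\poly(1/\epsilon)$ rounds, and all $\kappa_{\NC,\omega}$ top-level pieces, $|C|\le\kappa_{\ED,C}\cdot h\cdot\phi\cdot|A|$ with $\kappa_{\ED,C}=n^{O(\epsilon^{4})}$ absorbing these factors.

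To finish, collect all final expander pieces (over the recursion and over the clusterings of ${\cal N}_{0}$) as ${\cal N}$; take $R^{S}$ for each piece $S$ as above, so $\gamma_{{\cal R}}=1$ and $h_{{\cal R}}=\lambda_{\ED,h}$; and let $\Pi_{{\cal R}\to G-C}$ be the disjoint union of the per-piece embeddings. Pieces inside one clustering are vertex-disjoint, so their embeddings do not interfere, and handling the $O(\kappa_{\PC,\omega})$ clusterings costs only a constant factor in $\phi$ inside each cut-matching call, giving $\gamma_{\Pi}=1/\phi$ and $h_{\Pi}=\lambda_{\ED,h}\cdot h$. The key correctness check is that each embedding path, produced inside a piece \emph{after} every cut affecting that piece was already committed, still has length $\le\lambda_{\ED,h}\cdot h$ in $G-C$, exactly as \Cref{def:WitnessedED} requires. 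The covering-radius property of ${\cal N}$ holds because the initial cover already keeps every $h$-close pair together and the only further cutting happens strictly inside a not-yet-expander piece, where sparse cuts separate only a negligible fraction of demand; a standard ``repeated decomposition'' argument shows the process terminates at a no-sparse-cut state without splitting any $h$-close pair into distinct final pieces. For the running time, each round invokes an $h$-length approximate maxflow costing $|G|\cdot\poly(h)\cdot n^{O(\epsilon)}$; with $\poly(1/\epsilon)$ rounds per recursion node, $O(\log n)$ depth, disjoint pieces per level, and $O(\kappa_{\NC,\omega})$ top-level pieces, the total is $|G|\cdot\poly(h)\cdot n^{O(\epsilon)}$.

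The main obstacle is squarely the second and third steps: getting the cut-matching game to yield a \emph{constant-step} router ($2^{\poly(1/\epsilon)}$ diameter, congestion $1$) while the exposed cut, once rounded to an integral $(hs)$-length moving cut, still has size $O(\phi h|A|)$ up to the $n^{O(\epsilon^{4})}$ slack, and while everything stays deterministic. This is precisely the constant-step length-constrained cut-matching machinery of \cite{haeupler2022cut,HaeuplerHT2023length}; its correctness and efficiency, and the bookkeeping that the scaled integral cut does not blow up, are the substantive parts. The neighborhood-cover localization and the telescoping of cut sizes over a balanced recursion are routine by comparison.
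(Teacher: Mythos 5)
This theorem is stated in the paper as an imported black-box result from \cite{HaeuplerHT2023length}: the paper itself gives no proof for \Cref{thm:WitnessedED}, so there is no in-paper argument for your sketch to be compared against. Your reconstruction --- localize with a neighborhood cover, then on each low-diameter piece run the length-constrained cut-matching game to either certify a router with $2^{\poly(1/\epsilon)}$ diameter or expose a balanced sparse moving cut, then recurse with $O(\log n)$ depth and telescope the cut sizes --- is indeed the shape of the argument in \cite{HaeuplerHT2023length}, and you correctly locate the substantive difficulty in the constant-step cut-matching machinery. One point that your sketch treats too lightly: in \Cref{def:WitnessedED}, ${\cal N}$ must be a neighborhood cover \emph{on $G-C$}, and the moving cut $C$ genuinely lengthens some originally $h$-close pairs past $hs$; the reason the covering radius is preserved is not that ``sparse cuts separate only a negligible fraction of demand'' and that no $h$-close pair gets split, but rather that the pairs which do get lengthened past $hs$ by $C$ are no longer required to be covered at radius $h$ in $G-C$, while pairs that remain $h$-close in $G-C$ land inside a common final expander piece. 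Restating the covering-radius claim against the $G-C$ metric, rather than against the original $G$ metric, is what makes that step go through; as written, your phrasing conflates the two metrics and does not actually establish the covering property of \Cref{def:WitnessedED}.
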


\begin{theorem}[Expander Routing with Witness]
Let $G$ be a graph with node-weighting $A$ and an $(h,h_{\cal R},\gamma_{\cal R},h_{\Pi},\gamma_{\Pi})$-witness $({\cal N}, {\cal R}, \Pi_{{\cal R}\to G})$ of $A$ on $G$. Let $\omega$ be the width of ${\cal N}$. Given an $h$-length $A$-respecting demand $D$, there is an algorithm that computes an integral routing of $D$ on $G$ with length $\lambda_{\rr,h}\cdot h_{\cal R}\cdot h_{\Pi}$ and congestion $\kappa_{\rr,\gamma}\cdot \gamma_{{\cal R}}\cdot \gamma_{\Pi}$, where $\lambda_{\emb,h} = 2^{\poly(1/\epsilon)}$ and $\kappa_{\emb,\gamma} = n^{O(\epsilon^{4})}$ are global parameters from \Cref{thm:RouterRouting}. The running time is $(\sum_{R\in {\cal R}}|R| + \omega\cdot |A|)\cdot\poly(h_{\cal R})\cdot h_{\Pi}\cdot n^{O(\epsilon)}$.\label{thm:WitnessExpanderRouting}
\end{theorem}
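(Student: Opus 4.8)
The plan is to make the existence proof of \Cref{lemma:WitnessToFlow} constructive, replacing its single non-algorithmic step — ``route the cluster-demand $D_S$ inside the router $R^S$'' — by an invocation of \Cref{thm:RouterRouting}, and then literally gluing the per-router routings together along the embedding $\Pi_{{\cal R}\to G}$. First I would preprocess: realize the (integral) $h$-length $A$-respecting demand $D$ as a demand $D'$ between \emph{virtual nodes} of $A$ in which every virtual node participates in at most one demand pair, which is possible exactly because $D$ is $A$-respecting and takes $O(|A|)$ time. Since ${\cal N}$ is a neighborhood cover of covering radius $h$ (distributed factor $1$) and $D'$ is $h$-length, for each pair $(u',v')\in\supp(D')$ the ball cover $S_{u'}\in{\cal N}$ of $u'$ already contains $\Ball_{G,A}(u',h)\ni v'$ together with $u'$; assign $(u',v')$ to $S_{u'}$, and let $D'_{S}$ be the sub-demand assigned to $S$. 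Because $D'$ touches each virtual node at most once, $D'_S$ is $A_S$-respecting for $A_S=A_{\mid S}$.

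Next, for every cluster $S\in{\cal N}$ I would run \Cref{thm:RouterRouting} on the $(h_{\cal R},\gamma_{\cal R})$-router $R^S$ with the demand $D'_S$, obtaining an integral routing $f_S$ of $D'_S$ on $R^S$ of length $h_{\cal R}\cdot\lambda_{\rr,h}$ and congestion $\gamma_{\cal R}\cdot\kappa_{\rr,\gamma}$ in time $(|R^S|+|A_S|)\cdot\poly(h_{\cal R})\cdot n^{O(\epsilon)}$. Each flow path of $f_S$ carries at least one unit of integral demand, so there are $O(|A_S|)$ of them, each with at most $h_{\cal R}\cdot\lambda_{\rr,h}$ router edges. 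Then I build the routing on $G$: for each flow path $P=(e_1,\dots,e_\ell)$ of each $f_S$, substitute every router edge $e_i$ by its embedding path $\Pi_{{\cal R}\to G}(e_i)$ (of length $\le h_{\Pi}$), concatenate into a walk from $u$ to $v$ of length $\le h_{\cal R}\cdot\lambda_{\rr,h}\cdot h_{\Pi}$, and shortcut it to a simple path (which only shrinks length). The union over all clusters is the claimed routing $f$ of $D$. Its length bound is $\lambda_{\rr,h}\cdot h_{\cal R}\cdot h_{\Pi}$ by construction; for congestion, any fixed $e\in E(G)$ lies on $\le\gamma_{\Pi}$ embedding paths $\Pi_{{\cal R}\to G}(e')$, and each such router edge $e'$ is used by $\le\gamma_{\cal R}\cdot\kappa_{\rr,\gamma}$ flow paths in total, giving congestion $\le\kappa_{\rr,\gamma}\cdot\gamma_{\cal R}\cdot\gamma_{\Pi}$.

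For the running time, step one is $O(|A|)$; step two sums to $\bigl(\sum_{R\in{\cal R}}|R|+\sum_S|A_S|\bigr)\cdot\poly(h_{\cal R})\cdot n^{O(\epsilon)}$, and $\sum_S|A_S|=\size({\cal N})\le\omega\cdot|A|$ since each virtual node appears in at most one cluster per clustering; step three writes down, over all routers, $O\bigl(\sum_S|A_S|\bigr)\cdot h_{\cal R}\lambda_{\rr,h}\cdot O(h_{\Pi})$ edges. The only point that needs care — and what I expect to be the main (if mild) obstacle — is checking that the extra $\lambda_{\rr,h}=2^{\poly(1/\epsilon)}$ and $\poly(h_{\cal R})$ factors picked up during path reconstruction are absorbed into the target bound $\bigl(\sum_{R\in{\cal R}}|R|+\omega\cdot|A|\bigr)\cdot\poly(h_{\cal R})\cdot h_{\Pi}\cdot n^{O(\epsilon)}$, which holds because $2^{\poly(1/\epsilon)}\le n^{O(\epsilon)}$ for the global choice of $\epsilon$; a secondary routine detail is the bookkeeping to pass between the vertex-level demand $D$ and the virtual-node-level demands $D'_S$ handed to \Cref{thm:RouterRouting}.
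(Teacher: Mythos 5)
Your proof is correct and takes essentially the same route as the paper's: assign each demand pair to a cluster covering both endpoints, route the sub-demand inside the corresponding router via \Cref{thm:RouterRouting}, and glue the router flow paths to $G$-paths along the embedding $\Pi_{{\cal R}\to G}$, with the length bound multiplying and the congestion bound multiplying through $\gamma_\Pi$. The paper's proof is terser — it assigns each pair ``to an arbitrary cluster $S$ with $u,v\in S$'' and immediately declares $D_S$ to be $A_S$-respecting — whereas you first split the $A$-respecting vertex-level demand into a virtual-node-level demand $D'$ in which each virtual node is used at most once (possible precisely because $D$ is integral and $A$-respecting), and assign $(u',v')$ to the ball cover $S_{u'}$, which makes the $A_S$-respecting claim for $D'_S$ transparent. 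This is a genuine tightening of the argument: without some such splitting, the paper's sub-demand $D_S$ need not literally be $A_S$-respecting (several units of demand at a vertex $v$ could be assigned to one cluster $S$ even if $S$ contains only a few of $v$'s virtual nodes), so your preprocessing is the cleanest way to make that step rigorous. Your explicit shortcutting of the concatenated walks to simple paths is also a detail the paper silently omits, and your running-time accounting — $\sum_S|A_S|\le\omega|A|$ and absorbing $\lambda_{\rr,h}=2^{\poly(1/\epsilon)}\le n^{O(\epsilon)}$ — lands on the same bound.
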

\begin{proof}
We use a similar strategy as the proof of \Cref{lemma:WitnessToFlow}. We assign each demand pair $(u,v)\in\supp(D)$ to an arbitrary cluster $S\in{\cal N}$ s.t. $u,v\in S$. Note that $S$ must exist because $D$ is $h$-length on $G$ and ${\cal N}$ has covering radius $h$. This step takes $\wtilde{O}(|\supp(D)|)\leq \wtilde{O}(|A|)$ time.

For each cluster $S\in{\cal N}$, let $D_{S}$ be the total demand of pairs assigned to $S$. Let $R\in{\cal R}$ be the router corresponding to $S$. Because $D_{S}$ is a $A_{S}$-respecting demand, by applying 
\Cref{thm:RouterRouting} we can compute an integral routing $f_{S}$ of $D_{S}$ on $R$ with congestion $\kappa_{\rr,\gamma}\cdot \gamma_{\cal R}$ and length $\lambda_{\rr,h}\cdot h_{\cal R}$. Doing this for all clusters takes $\sum_{S\in{\cal N}}(|R^{S}| + |A_{S}|)\cdot \poly(h_{\cal R})\cdot n^{O(\epsilon)} = (\sum_{R\in{\cal R}}|R| + \omega\cdot |A|)\cdot\poly(h_{\cal R})\cdot n^{O(\epsilon)}$ time.

Lastly, for each $D_{S}$ and its routing $f_{S}$ on $R^{S}$, we can extend $f_{S}$ to be a routing on $G$. Precisely, for each path $P_{R} \in \path(f_{S})$ on $R^{S}$, we substitute each edge $e\in P_{R}$ with the embedding path $\Pi_{{\cal R}\to G}(e)$. The routing $f$ of $D$ on $G$ is then $f = \bigcup_{S\in{\cal N}} f_{S}$. It obviously has congestion $\kappa_{\rr,\gamma}\cdot \gamma_{\cal R}\cdot \gamma_{\Pi}$ and length $\lambda_{\rr,h}\cdot h_{\cal R}\cdot h_{\Pi}$ on $G$. This step takes time $O(|\path(f)|\cdot \lambda_{\rr,h}\cdot h_{\cal R}\cdot h_{\Pi}) = |A|\cdot h_{\cal R}\cdot h_{\Pi}\cdot 2^{\poly(1/\epsilon)}$ time.
\end{proof}

\subsection{Landmarks}

In this subsection, we introduce a new concept of \emph{landmarks}, which is for reducing the recourse in our dynamic algorithms.

\begin{definition}[Landmark Sets]
Given a graph $G$ and an integral moving cut $C$, a vertex set $L\subseteq V(G)$ is a \textit{landmark set} of $C$ with distortion $\sigma$ if
\begin{itemize}
\item for each $e=(u,v)\in E(G)$ with $C(e)>0$ and $\ell_{G-C}(e)>\sigma$, both endpoints $u,v\in L$;
\item for each $e=(u,v)\in E(G)$ with $C(e)>0$ and $\ell_{G-C}(e)\leq \sigma$, there exists $w_{u},w_{v}\in L$ s.t. $\dist_{G-C}(u,w_{u})\leq \sigma$ and $\dist_{G-C}(v,w_{v})\leq \sigma$.
\end{itemize}

\label{def:Landmark}
\end{definition}

\begin{lemma}[Union of Landmark sets]
Given a graph $G$ and integral moving cuts $C_{1}$ and $C_{2}$. Let $L_{1}$ be a landmark set of $C_{1}$ on $G$ with distortion $\sigma_{1}$, and let $L_{2}$ be a landmark set of $C_{2}$ on $G-C_{1}$ with distortion $\sigma_{2}$. Then $L = L_{1}\cup L_{2}$ is a landmark set of $C = C_{1} + C_{2}$ with distortion $\sigma = \sigma_{1} + \sigma_{2}$.
\label{lemma:LandmarkUnion}
\end{lemma}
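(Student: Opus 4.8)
The plan is to verify the two defining conditions of \Cref{def:Landmark} for the pair $(C,L) = (C_1+C_2,\, L_1\cup L_2)$ directly, by a short case analysis on each cut edge $e=(u,v)$ of $C$. The one fact to set up first is that $G-C$ and $(G-C_1)-C_2$ are the \emph{same} weighted graph, since $\ell_{G-C}(e) = \ell_G(e)+C_1(e)+C_2(e) = \ell_{(G-C_1)-C_2}(e)$ for every edge; consequently $\dist_{G-C} = \dist_{(G-C_1)-C_2}$ pointwise, $\dist_{G-C}\geq \dist_{G-C_1}$, and $L\subseteq V(G)$. Note also that $C(e)>0$ means $C_1(e)>0$ or $C_2(e)>0$, and if exactly one of $C_1(e),C_2(e)$ is positive then $\ell_{G-C}(e)$ equals $\ell_{G-C_1}(e)$ or $\ell_{(G-C_1)-C_2}(e)$ accordingly.

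For the first condition, assume $\ell_{G-C}(e)>\sigma_1+\sigma_2$. If $C_2(e)>0$, then $e$ is a $C_2$-cut edge of $G-C_1$ with $\ell_{(G-C_1)-C_2}(e)=\ell_{G-C}(e)>\sigma_2$, so the first condition for $L_2$ forces $u,v\in L_2\subseteq L$. If instead $C_2(e)=0$, then $C_1(e)>0$ and $\ell_{G-C_1}(e)=\ell_{G-C}(e)>\sigma_1$, so the first condition for $L_1$ forces $u,v\in L_1\subseteq L$. Either way both endpoints lie in $L$.

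For the second condition, assume $\ell_{G-C}(e)\leq \sigma_1+\sigma_2$; I will produce $w_u\in L$ with $\dist_{G-C}(u,w_u)\leq \sigma_1+\sigma_2$ (the argument for $v$ is symmetric). If $C_2(e)>0$: $e$ is a $C_2$-cut edge of $G-C_1$, so applying whichever of the two conditions for $L_2$ holds at $e$ yields $w_u\in L_2\subseteq L$ with $\dist_{(G-C_1)-C_2}(u,w_u)\leq \sigma_2$ (with $w_u=u$ if the first condition applies), hence $\dist_{G-C}(u,w_u)\leq \sigma_2$. If $C_2(e)=0$, so $C_1(e)>0$ and $\ell_{G-C_1}(e)=\ell_{G-C}(e)\leq \sigma_1+\sigma_2$: when $\ell_{G-C_1}(e)>\sigma_1$ the first condition for $L_1$ gives $u\in L_1$, so $w_u=u$ works; when $\ell_{G-C_1}(e)\leq \sigma_1$ the second condition for $L_1$ gives $w^1_u\in L_1$ with $\dist_{G-C_1}(u,w^1_u)\leq \sigma_1$, and the remaining job is to convert this $(G-C_1)$-bound into a $(G-C)$-bound. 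For that, take a shortest $u$-$w^1_u$ path $P$ in $G-C_1$ and walk along it from $u$: if $P$ uses no $C_2$-cut edge then $\ell_{G-C}(P)=\ell_{G-C_1}(P)\leq \sigma_1$ and $w_u=w^1_u$ works; otherwise let $e'=(x,y)$ be the first $C_2$-cut edge of $P$, note its prefix $P'$ (from $u$ to $x$) has $\ell_{G-C}(P')=\ell_{G-C_1}(P')\leq \sigma_1$ so $\dist_{G-C}(u,x)\leq \sigma_1$, and apply (exactly as in the $C_2(e)>0$ case) the landmark property of $L_2$ at the $C_2$-cut edge $e'$ of $G-C_1$ to obtain $w_u\in L_2\subseteq L$ with $\dist_{G-C}(x,w_u)\leq \sigma_2$; the triangle inequality then gives $\dist_{G-C}(u,w_u)\leq \sigma_1+\sigma_2$.

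The only genuinely non-routine step is this last transfer: $L_1$'s guarantee lives in $G-C_1$, and cutting $C_2$ on top could lengthen the short path witnessing it, so one cannot simply reuse $w^1_u$. Truncating the path at its first $C_2$-cut edge and handing off to an $L_2$-landmark there is what saves the argument, and it is precisely why the two distortions \emph{add} rather than, say, take a maximum. Everything else is bookkeeping over the cases $C_1(e)>0$ versus $C_2(e)>0$ and $\ell \gtrless \sigma_i$, together with the identification $G-C = (G-C_1)-C_2$.
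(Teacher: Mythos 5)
Your proof is correct and follows essentially the same route as the paper's: both split on whether $C_2(e)>0$ or $C_1(e)>0$ with $C_2(e)=0$, and both handle the hard subcase (short edge, $C_2(e)=0$) by taking a short witnessing path for $L_1$ in $G-C_1$, truncating it at the first $C_2$-cut edge, and handing off to an $L_2$-landmark there. The only cosmetic difference is that you organize the argument around the two clauses of the landmark definition and spell out each subclause of it explicitly, whereas the paper streamlines this slightly; the content is identical.
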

\begin{proof}

\underline{Case 1.} For edges $e=(u,v)\in \supp(C_{2})$. If $\ell_{G-C_{1}-C_{2}}(e)>\sigma$, then $\ell_{G-C_{1}-C_{2}}(e)>\sigma_{2}$ and both $u,v\in L_{2}$. If $\ell_{G-C_{1}-C_{2}}(e)\leq \sigma$, there exists a landmark $w_{u}$ of $u$ s.t. $\dist_{G-C_{1}-C_{2}}(u,w_{u})\leq \sigma_{2}\leq \sigma$ (the same holds for $v$). 

\underline{Case 2.} Consider each edge $e=(u,v)\in \supp(C_{1})\setminus\supp(C_{2})$. Note that $L_{1}$ is a landmark set of $C_{1}$ on $G$. If $\ell_{G-C_{1}-C_{2}}(e)>\sigma$, then $\ell_{G-C_{1}}(e)>\sigma\geq \sigma_{1}$ (because $C_{2}(e) = 0$) and both $u,v\in L_{1}$. From now we assume $\ell_{G-C_{1}-C_{2}}(e)\leq \sigma$, and in what follows, we will show that there is a landmark $w_{u}\in L$ of $u$ s.t. $\dist_{G-C_{1}-C_{2}}(w_{u},u)\leq \sigma$. The same argument works for $v$.

First, there exists a landmark $w_{1,u}\in L_{1}$ s.t. $\dist_{G-C_{1}}(w_{1,u},u)\leq \sigma_{1}$. Consider the shortest $w_{1,u}$-$u$ path $P_{1}$ in $G-C_{1}$. If all edge $e\in P_{1}$ has $C_{2}(e) = 0$, then $\dist_{G-C_{1}-C_{2}}(w_{1,u},u) \leq \sigma_{1}\leq \sigma$ and we can take $w_{u} = w_{1,u}$ as our desired landmark. Otherwise, let $u'\in P_{1}$ be the vertex closest to $u$ and incident to an edge $e\in P_{1}$ with $C_{2}(e) > 0$. Observe that $\dist_{G-C_{1}-C_{2}}(u,u') = \dist_{G-C_{1}}(u,u')\leq \ell_{G-C_{1}}(P_{1}) = \sigma_{1}$, because all edges $e$ on the subpath of $P'$ from $u$ to $u'$ have $C_{2}(e) = 0$. Also, by the property of $L_{2}$, there exists $w_{2,u}\in L_{2}$ s.t. $\dist_{G-C_{1}-C_{2}}(u',w_{2,u})\leq \sigma_{2}$. Then $\dist_{G-C_{1}-C_{2}}(u,w_{2,u})\leq \dist_{G-C_{1}-C_{2}}(u,u') +\dist_{G-C_{1}-C_{2}}(u',w_{2,u})\leq \sigma_{1} + \sigma_{2} = \sigma$, so we can let $w_{u}$ be $w_{2,u}$ as the desired landmark.

\end{proof}

\begin{lemma}[Update Landmark Sets Under Edge Deletions] Given a graph $G$ and an integral moving cut $C$. Let $L$ be a landmark set of $C$ on $G$ with distortion $\sigma$. Given a batch of edge deletions $F\subseteq E(G)$, $\wtilde{L} = L\cup V(F)$ is a landmark set of $\wtilde{C} = C\setminus F$ on $\wtilde{G} = G\setminus F$ with distortion $\sigma$.%
\label{lemma:LandmarkEdgeDel}
\end{lemma}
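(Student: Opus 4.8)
The plan is to verify the two bullets of \Cref{def:Landmark} directly for the triple $(\wtilde{L},\wtilde{C},\wtilde{G})$, reusing the corresponding properties of $(L,C,G)$. The only genuine point of difficulty is that deleting the edge set $F$ can only \emph{increase} distances, so an old landmark $w$ witnessing $\dist_{G-C}(u,w)\le\sigma$ need no longer satisfy $\dist_{\wtilde{G}-\wtilde{C}}(u,w)\le\sigma$; this is precisely where the extra vertices $V(F)$ are used, via a shortest-path truncation argument.

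First I would record the elementary facts. Since $F$ consists of edges only, $\wtilde{G}=G\setminus F$ has $V(\wtilde{G})=V(G)$ and $E(\wtilde{G})=E(G)\setminus F$, and $\wtilde{C}=C\setminus F$ agrees with $C$ on every surviving edge; hence for each $e\in E(\wtilde{G})$ we have $\ell_{\wtilde{G}-\wtilde{C}}(e)=\ell_{G-C}(e)$, moreover $\supp(\wtilde{C})=\supp(C)\setminus F$, and $\wtilde{C}(e)>0$ implies $C(e)>0$. Also, deleting edges never decreases distances, and any path $P$ that uses no edge of $F$ survives in $\wtilde{G}$ with $\ell_{\wtilde{G}-\wtilde{C}}(P)=\ell_{G-C}(P)$. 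Now take an arbitrary $e=(u,v)\in E(\wtilde{G})$ with $\wtilde{C}(e)>0$, so $C(e)>0$. If $\ell_{\wtilde{G}-\wtilde{C}}(e)=\ell_{G-C}(e)>\sigma$, then the first bullet of \Cref{def:Landmark} applied to $L$ gives $u,v\in L\subseteq\wtilde{L}$, which is exactly what the first bullet requires for $\wtilde{L}$.

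It remains to handle the case $\ell_{G-C}(e)\le\sigma$; I treat $u$, the argument for $v$ being symmetric. The second bullet for $L$ gives $w_u\in L$ with $\dist_{G-C}(u,w_u)\le\sigma$; fix a shortest $u$-$w_u$ path $P_u$ in $G-C$, so $\ell_{G-C}(P_u)\le\sigma$. If $P_u$ uses no edge of $F$, then $P_u$ survives in $\wtilde{G}$, so $\dist_{\wtilde{G}-\wtilde{C}}(u,w_u)\le\sigma$ and $w_u\in L\subseteq\wtilde{L}$ is a valid landmark. Otherwise, walk along $P_u$ starting from $u$ and let $f\in F$ be the first edge of $P_u$ lying in $F$, with $a$ its endpoint reached first along the walk; the prefix of $P_u$ from $u$ to $a$ contains no edge of $F$, hence survives in $\wtilde{G}$ and $\dist_{\wtilde{G}-\wtilde{C}}(u,a)\le\ell_{G-C}(P_u)\le\sigma$, while $a\in V(F)\subseteq\wtilde{L}$ since $a$ is an endpoint of $f\in F$; so $a$ is a valid landmark for $u$. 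This verifies the second bullet for $\wtilde{L}$ and finishes the proof. I do not expect a real obstacle; the only things to be careful about are that the truncation is legitimate — the prefix up to the first deleted edge genuinely avoids $F$ and therefore keeps its $(G-C)$-length — and the degenerate subcases ($P_u$ empty, in which case it trivially avoids $F$; or the first edge of $P_u$ itself lies in $F$, in which case $a=u\in V(F)$ and the distance bound is $0\le\sigma$).
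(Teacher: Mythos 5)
Your proposal is correct and follows essentially the same route as the paper: verify the two bullets of the landmark definition, observe that a surviving edge has the same length in $\wtilde{G}-\wtilde{C}$ as in $G-C$, and for the second bullet truncate the old shortest $u$-to-$w_u$ path at the first point touching $F$ (the paper truncates at the first $V(F)$-vertex on the path, you at the first-reached endpoint of the first $F$-edge on the path — both yield a valid landmark in $V(F)$ at $(G-C)$-distance at most $\sigma$). Your treatment of the degenerate subcases is slightly more explicit than the paper's, but the argument is the same.
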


\begin{proof}

For each edge $e=(u,v)\in \supp(\wtilde{C}) = \supp(C)\setminus F$, if $\ell_{\wtilde{G}-\wtilde{C}}(e) > \sigma$, then $\ell_{G-C}(e) = \ell_{\wtilde{G}-\wtilde{C}}(e)>\sigma$ and $u,v$ are already in $L$.

For each $C$-vertex $v$, there exists $w\in L$ s.t. $\dist_{G-C}(v,w)\leq \sigma$, and let $P$ be the $v$-$w$ shortest path in $G-C$. If $P$ is disjoint with $F$, then $\dist_{\wtilde{G}-\wtilde{C}}(v,w) = \dist_{G-C}(v,w)$ and $w$ is still a valid landmark of $v$. Otherwise, $F$ is on $P$ and let $w'$ be the vertex in $P\cap V(F)$ closest to $v$. Then $\dist_{\wtilde{G}-\wtilde{C}}(v,w')\leq \ell_{G-C}(v,w) = \sigma$ and we can take $w'$ as the new landmark of $v$.

\end{proof}

\section{Local Length-Constrained Flows}
\label{sect:LocalLengthConstrainedFlow}

In this section, we develop a local algorithm in \Cref{thm:LocalLengthConstrainedFlow} for the approximate length-constrained maxflow problem by generalizing the results in \cite{haeupler2023maximum}. With this local flow algorithm, our second result is a local cutmatch algorithm shown in \Cref{thm:cutmatch}, which is an important subroutine for our dynamic expander decomposition algorithms in \Cref{sect:DynamicCertifiedED}.

In the length-constrained maxflow problem, we work with a \textit{directed} graph $G$ with edge lengths $\ell$ and edge capacities $U$, where the lengths and capacities are \emph{positive} integers. During the algorithm, we may also assign \emph{weights} $w$ to edges in $G$, and the weights are non-negative real numbers. For each vertex $v\in V(G)$, we let $E^{+}(v)$ (resp. $E^{-}(v)$) denote the set of outgoing (resp. ingoing) edges of $v$. Let $\deg^{+}_{G}(v) = |E^{+}(v)|$ be the out-degree of $v$ in $G$. For a subset of vertices $B\subseteq V(G)$, we let $E^{+}(B) = \bigcup_{v\in B}E^{+}(v)$ and $E^{-}(B) = \bigcup_{v\in B}E^{-}(v)$. Also, we use $N^{+}(B) = \{v\mid \exists (u,v)\in E(G)\text{ s.t. }u\in B\text{ and }v\in V(G)\setminus B\}$ to denote the outgoing neighbors of $B$, and let $N^{+}[B] = N^{+}(B)\cup B$. Similarly, $N^{-}(B) = \{u\mid \exists (u,v)\in E(G)\text{ s.t. }u\in V(G)\setminus B\text{ and }v\in B\}$ are the ingoing neighbors of $B$. 

Furthermore, we will specify two vertices $s$ and $t$ be the source vertex and sink vertex. For each vertex $v\in V(G)\setminus\{s,t\}$, we let $\Delta(v) = U(s,v)$ be the \textit{source capacity} of $v$, where $U(s,v)$ is the capacity of the edge from $s$ to $v$ (if no such edge, $U(s,v) = 0$). Similarly, $\nabla(v) = U(v,t)$ denote the \textit{sink capacity} of $v$. Edges in $E^{+}(s)$ are called \textit{source edges} and edges in $E^{-}(t)$ are \textit{sink edges}.

\begin{definition}[Single-Commodity Length-Constrained Flows]
Given a graph $G$ with a source vertex $s$, a sink vertex $t$ and a length parameter $h$, let ${\cal P}_{h}(s,t)$ denote the set of simple $s$-to-$t$ paths $P$ with $\ell_{G}(P)\leq h$. An \textit{(feasible) $h$-length $s$-$t$ flow} $f$ on $G$ assigns a non-negative real number $f(P)$ to each $P\in{\cal P}_{h}(s,t)$ s.t. $f(e)\leq U(e)$ for each $e\in E(G)$, where $f(e) = \sum_{P\ni e} f(P)$ denote the total flow value going through $e$. The \textit{value} of $f$ is $\val(f) = \sum_{P\in{\cal P}_{h}(s,t)}$. The flow $f$ is \textit{integral} if $f(P)$ is an integer for each $P$. We use $\path(f) = \{P\in {\cal P}_{h}(s,t)\mid f(P)>0\}$ to denote the \textit{flow paths} of $f$ and let $|\path(f)|$ be the \textit{path count} of $f$.

\end{definition}

\begin{definition}[Fractional Moving Cut]
Given a graph $G$ with a source vertex $s$, a sink vertex $t$ and a length parameter $h$, a \textit{(feasible) $h$-length fractional moving cut} is a function $w:E(G)\to \mathbb{R}_{\geq 0}$ s.t. for each path $P\in{\cal P}_{h}(s,t)$, $w(P)\geq 1$. The size of $w$ is denoted by $|w| = \sum_{e\in E(G)}U(e)\cdot w(e)$.
\end{definition}

Given graph $G$ with source $s$, sink $t$ and length parameter $h$, the (exact) length-constrained maxflow problem asks a feasible $h$-length $s$-$t$ fractional flow $f$ with maximum value, while the (exact) length-constrained mincut problems asks a feasible $h$-length fractional moving cut $w$ with minimum size. These two problems are dual to each other as shown in \cite{haeupler2023maximum}. That is, the value of maximum $h$-length flow is the same as the size of minimum $h$-length moving cut. 

Naturally, the approximate length-constrained maxflow problem asks a pair $(f,w)$ of feasible $h$-length $s$-$t$ maxflow $f$ and minimum moving cut $w$. A solution $(f,w)$ is $\alpha$-approximation if $|w|\leq \alpha\cdot\val(f)$. Namely, the moving cut $w$ certifies the value of the length-constrained flow up to an $\alpha$ factor.

Our result on local approximate length-constrained maxflow problem is formally stated in \Cref{thm:LocalLengthConstrainedFlow}. The running time is local in the sense that if every vertex $v$ has sink capacity $\nabla(v) = U(v,t)\geq \Omega(\deg^{+}_{G}(v))$ at least a constant fraction of its out-degree, then the running time is proportional to the total source capacities, with overhead $\poly(h,1/\delta,\log n)$. This notion of ``local running time'' appeared before in e.g. \cite{SW19}. We point out that the condition $\nabla(v)\geq \Omega(\deg^{+}_{G}(v))$ for all $v\in V(G)$ is kind of crucial to obtain local running time depending on total source capacities instead of the original graph size. Intuitively, imagine a graph in which half of vertices have zero sink capacity. Then in the worst case, even finding a path from $s$ to $t$ may need to scan through all these zero-sink-capacitated vertices. The running time bound in \Cref{thm:LocalLengthConstrainedFlow} is analysed in a general setting that some vertices $v$ may have $\nabla(v)$ much smaller than $\deg^{+}_{G}(v)$, which brings an extra term $\max\{\deg^{+}_{G}(v)-c\cdot\nabla(v)\}$ about the total \textit{deficit}, which, roughly speaking, measures the total differences between $\deg^{+}_{G}(v)$ and $\nabla(v)$ over all vertices. Furthermore, the requirement that all sink edges have equal length is also unavoidable, because otherwise we can easily reduce the following toy problem to an $h$-length maxflow problem: given $n$ (different) integers, answer whether there is an integer at most $h$ or not, which is impossible to be solved in local time. 

In fact, \Cref{thm:LocalLengthConstrainedFlow} includes two solutions with different approximation ratio, one is $(\log n/\delta)$-approximate and the other one is $O(2+\delta)$-approximate. Our local cutmatch algorithm (i.e. \Cref{thm:cutmatch}) will use the $O(\log n)$-approximate and exploit the additional guarantee. We include the $(2+\delta)$-approximate result since it may be of independent interest.

\begin{theorem}[Local Approximate Length-Constrained Maxflow]
Let $G = (V(G),E(G))$ be a directed graph with lengths $\ell$, capacities $U$, source vertex $s$, sink vertex $t$ and parameters $\delta > 0, h\geq 1$, further satisfying that each sink edge $e\in E^{-}(t)$ has $\ell(e) = 1$. With access to the adjacency list of $G$, there is an algorithm that compute 
\begin{enumerate}
\item a feasible $h$-length flow and moving cut pair $(f,w)$ that is $O(\log n/\delta)$-approximate, further satisfying that, for some $\zeta = 1/(6\delta)+3$ and $\eta = \Theta(\delta^{2}/\log n)$, $f/\eta$ is an integral flow and $w(e) = (1+\delta/6)^{f(e)/(\eta\cdot U(e))}/m^{\zeta}$ for each $e\in E(G)$, 
\item a feasible $h$-length flow and moving cut pair $(f,w_{\min})$ that is $(2+\delta)$-approximate.
\end{enumerate}
The running time is
\[
\poly(h,1/\delta,\log n)\cdot\sum_{v\in V(G)\setminus\{s,t\}}(c\cdot\Delta(v) + \max\{\deg_{G}^{+}(v)-c\cdot\nabla(v),0\}),
\]
where $c\geq 1$ is a parameter arbitrarily chosen.

\label{thm:LocalLengthConstrainedFlow}
\end{theorem}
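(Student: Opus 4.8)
The plan is to localize the multiplicative-weights length-constrained maxflow algorithm of \cite{haeupler2023maximum}. Recall its structure: one maintains an edge-weight function $w$ of exactly the stated exponential form, initialized at $w(e)=1/m^{\zeta}$ (the $f\equiv 0$ case), and proceeds in rounds. In each round one computes an \emph{$h$-length lightest path blocker} with respect to $(\ell,w)$ --- roughly, a $1/\eta$-integral $h$-length $s$-$t$ flow $g$ all of whose paths have $w$-weight within a $(1+\delta)$ factor of the current minimum $h$-length $s$-$t$ weight, and such that saturating $g$ raises that minimum by a $(1+\delta)$ factor --- then adds $\eta\cdot g$ to $f$ and updates $w$ multiplicatively along the used edges. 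After $\poly(h,1/\delta,\log n)$ rounds either $\val(f)$ is already within an $O(\log n/\delta)$ factor of $|w|$, giving item~(1), or $w$ has grown so that $w(P)\ge 1$ for every $P\in{\cal P}_{h}(s,t)$, i.e.\ $w$ is a feasible moving cut certifying $f$; the $(2+\delta)$-approximate pair $(f,w_{\min})$ of item~(2) is then obtained via the separate $(2+\delta)$-approximation routine of \cite{haeupler2023maximum}, which is built from the same lightest-path-blocker primitive and therefore localizes by the same argument. So it suffices to (a)~give a \emph{local} algorithm for the $h$-length lightest path blocker and (b)~check that the reductions wrapping it stay local.

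For (a), the non-local blocker operates in a length-expanded (layered) graph with $O(h)$ layers and runs a length-constrained blocking-flow computation there. To make this local I would explore this expanded graph \emph{lazily} from $s$: maintain a frontier of reached (vertex, length-budget) states, build the expanded graph only implicitly, and scan the out-edges $E^{+}(v)$ of a vertex $v$ only once $v$ first becomes \emph{active}, i.e.\ once flow is actually about to be pushed through $v$ toward $t$; the cost of scanning $E^{+}(v)$ is charged to $v$. Since every flow path must terminate with a sink edge of length $1$ and the value routed in one round is at most a $\poly(h,1/\delta,\log n)$ multiple of $\sum_{v}\Delta(v)$, the total flow ever pushed through the blockers is $\poly(h,1/\delta,\log n)\cdot\sum_v\Delta(v)$; a vertex $v$ with $\nabla(v)\ge c^{-1}\deg^{+}_{G}(v)$ can absorb or forward its share without being scanned more than $\poly(h,1/\delta,\log n)$ times, while a vertex with $\nabla(v)$ much smaller than $\deg^{+}_{G}(v)$ contributes its deficit $\max\{\deg^{+}_{G}(v)-c\cdot\nabla(v),0\}$ to the bound. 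Correctness of the blocker (that all near-lightest $h$-length paths are genuinely blocked) is inherited verbatim from \cite{haeupler2023maximum}, since lazy exploration computes exactly the same object, only touching a sub-instance.

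For (b), the MWU reduction to blockers performs, per round, one blocker call plus bookkeeping (weight updates, flow aggregation) whose cost is proportional to the support of the flow routed that round, which is itself local by the argument above; summing over $\poly(h,1/\delta,\log n)$ rounds keeps the total local. The one subtlety is that flow and weight modifications from earlier rounds must not trigger re-exploration of untouched regions later: I would keep the search data structures persistent so that each round re-examines only vertices incident to edges whose weight actually changed, which is again within the routed-flow budget. Finally I would verify the promised output format directly: $f/\eta$ is integral because every round adds an integer multiple of $\eta$ on each path; $w(e)=(1+\delta/6)^{f(e)/(\eta\cdot U(e))}/m^{\zeta}$ holds by the multiplicative update rule with $\zeta=1/(6\delta)+3$ and $\eta=\Theta(\delta^{2}/\log n)$; and the $O(\log n/\delta)$ and $(2+\delta)$ approximation guarantees are exactly those of \cite{haeupler2023maximum}.

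\textbf{Main obstacle.} I expect the crux to be the locality of the lightest-path-blocker subroutine: a priori a zero-sink-capacity vertex of enormous degree could lie on every short $s$-$t$ path and force scanning all of $E(G)$, and the theorem avoids this only through the $\nabla(v)\gtrsim\deg^{+}_{G}(v)$-type assumption together with the deficit term. Making the lazy exploration provably charge every edge scan either to a source-capacity unit traversing that edge or to a deficit unit, simultaneously over all $\poly(h,1/\delta,\log n)$ rounds and across the $O(h)$ implicitly-built layers of the expanded graph, is the delicate accounting on which the whole result hinges.
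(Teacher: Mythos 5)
Your high-level plan is the right one --- localize the lightest-path-blocker primitive and re-run the multiplicative-weights reduction of \cite{haeupler2023maximum} on top of it --- but both the route you take for the blocker and the details you leave open diverge from what the paper actually needs, and there are two substantive gaps.

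\textbf{Different route for the blocker.} You propose to open up the length-expanded layered graph and run a lazily-explored blocking flow directly in it, charging each scan of $E^{+}(v)$ to $v$'s activity. The paper instead keeps the global blocker of \cite{haeupler2023maximum} as a black box: it iterates $i=1,\dots,h$, in iteration $i$ builds a local subgraph $G_{i}$ containing $s$, $t$, and only the vertices reachable from $s$ through non-saturated edges and \emph{sink-saturated} intermediate vertices (plus their one-hop out-neighbourhood), and invokes the global $(1+\delta,1+2\delta)$-blocker on $G_{i}$ with length threshold $i+1$. The crucial Claim (any $(i+1)$-length $s$-$t$ path of low weight not already blocked lies entirely in $G_{i}$) makes correctness follow from the black-box guarantee, and $|G_{i}|$ is bounded by $\sum_{v}(c\Delta(v)+\max\{\deg^{+}(v)-c\nabla(v),0\})$ because a vertex enters the BFS only through a saturated sink edge, and the total saturated sink capacity is at most the total source capacity. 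Your lazy-exploration approach could plausibly be made to work, but you would have to re-prove blocker correctness inside the partially built layered graph, manage which $(\text{vertex},\text{budget})$ states are visited, and prevent dead-end DFS branches from scanning high-degree vertices for free --- exactly the accounting you flag as the ``crux'' and do not resolve. The paper's approach sidesteps all of this by never opening the blocker.

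\textbf{Gap 1: degrading sink capacities under MWU.} During the outer multiplicative-weight loop, a sink edge $(v,t)$ can accumulate weight above $(1+\delta)\lambda$, at which point it cannot lie on any near-lightest path; its nominal capacity $\nabla(v)$ is useless for routing cheap flow, so the locality argument cannot keep charging $v$'s degree against $\nabla(v)$. Your proposal does not notice this. The paper's fix is Corollary~4.4 (zero out the effective sink capacity $\tilde{\nabla}(v)$ when $w(v,t)>(1+\delta)\lambda$, at the cost of a $(1+\delta,2+3\delta)$ rather than $(1+\delta,1+2\delta)$ blocker) together with Observation~4.10, which shows that once $w(v,t)$ is large, $f(v,t)\ge\eta\cdot\nabla(v)$, so $\sum_{v: w(v,t)>(1+\delta)\lambda}\nabla(v)\le \val(f)/\eta\le \sum_{v}\Delta(v)/\eta$. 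Without this observation your charge of ``high-degree, low-effective-sink'' vertices has nowhere to go.

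\textbf{Gap 2: the $(2+\delta)$ factor is a consequence of localization, not an inherited guarantee.} You describe the $(f,w_{\min})$ pair as coming from ``the separate $(2+\delta)$-approximation routine of \cite{haeupler2023maximum}.'' That routine gives $(1+\delta)$; the degradation to $(2+\delta)$ in the present theorem is caused by the weaker blocker $(1+\delta,2+3\delta)$ that Corollary~4.4 produces, which in turn is forced by Gap~1. The two gaps are linked, and attributing the $(2+\delta)$ to the prior work masks the place where the localization actually pays a price. The rest of the proposal --- the invariant $w(e)=(1+\delta/6)^{f(e)/(\eta U(e))}/m^{\zeta}$, the $1/\eta$-integrality of $f$, and the MWU accounting of $|w|$ against $\val(f)$ --- matches the paper.
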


In \Cref{sect:CutMatch}, we will use the local length-constrained flow algorithm to compute \textit{cutmatches}, which was first introduced in \cite{HRG22} with applications on the first efficient length-constrained expander decomposition algorithm. The notion of cutmatch is defined on an \textit{undirected} \textit{uncapacitated} graph $G$ with edge lengths $\ell_{G}$. The formal definition is given in \Cref{thm:cutmatch}. Roughly speaking, cutmatch is made up with a matching $M_{\CM}$ and a moving cut $C_{\CM}$. The matching $M_{\CM}$ will partially matches two given node-weightings via low-congestion $h$-length paths on $G$, and the moving cut certifies that the matching can be hardly extended under the low-congestion requirement. We point out that an efficient (global) cutmatch algorithm has already been shown in \cite{haeupler2023maximum}. Here we design a cutmatch algorithm with local running time. Moreover, it is adapted to be compatible with the algorithms in \Cref{sect:DynamicCertifiedED}. For example, $C_{\CM}$ will be rounded to an integral moving cut, and we also output a landmark set $L_{\CM}$ (see \Cref{def:Landmark}) of $C_{\CM}$.

\begin{theorem}[Local Cutmatches]
Let $G$ be a graph with source node-weighting $A^{\src}$, and sink node-weighting $A^{\sink}$ s.t. $A^{\src} + A^{\sink} \geq \deg_{G} + \mathds{1}(V(G))$. Given parameters $h_{\CM}$ and $\phi_{\CM}$, there is an algorithm that computes the following:
\begin{itemize}
\item a partition of $A^{\src}$ into $A^{\src}_{M}, A^{\src}_{U}\subseteq A^{\src}$,
\item a partition of $A^{\sink}$ into $A^{\sink}_{M}, A^{\sink}_{U}\subseteq A^{\sink}$ s.t. $|A^{\src}_{M}|\leq |A^{\sink}_{M}|$ and $|A^{\sink}_{M}| = O(|A^{\src}_{M}|)$,
\item a set of $h_{\CM}$-length paths which embeds a matching $M_{\CM}$ between $A^{\src}_{M}$ and $A^{\sink}_{M}$ with size $|M_{\CM}| = |A^{\src}_{M}|$ and congestion $\gamma_{\CM}=O(\kappa_{\CM,\gamma}/\phi_{\CM})$ where $\kappa_{\CM,\gamma} = \log^{4} n$,
\item an integral moving cut $C_{\CM}$ with size $|C_{\CM}|\leq O(\phi_{\CM}\cdot h_{\CM}\cdot |A^{\src}|)$ s.t. $A^{\src}_{U}$ and $A^{\sink}_{U}$ are $h_{\CM}$-separated on $G-C_{\CM}$.
\item a landmark set $L_{\CM}$ of $C_{\CM}$ on $G$ with distortion $\sigma_{\CM} = h_{\CM}/\kappa_{\sigma}$ and size $|L_{\CM}|\leq \kappa_{\CM,L}\cdot \phi_{\CM}\cdot |A^{\src}|$, where $\kappa_{\sigma} = n^{\epsilon^{4}}$ and $\kappa_{\CM,L} = O(\kappa_{\sigma}\cdot \log^{3} n)$.
\end{itemize}
The running time is $\poly(h_{\CM})\cdot|A^{\src}|/\phi_{\CM}$.
\label{thm:cutmatch}
\end{theorem}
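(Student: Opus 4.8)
The plan is to reduce one cutmatch computation to a single call of the local length-constrained maxflow algorithm of \Cref{thm:LocalLengthConstrainedFlow}(1), and then read the matching off the (scaled-)integral flow, the moving cut off the fractional moving cut, and the landmark set off a local clustering of the cut. First I would build an auxiliary directed graph $\Gaug$: keep each $G$-edge $e$ with $\ell_G(e)\le h_{\CM}$ as two antiparallel directed edges of length $\ell_G(e)$ and capacity $U_{\iint}=\Theta(\eta\cdot\kappa_{\CM,\gamma}/\phi_{\CM})=\Theta(\log^{3}n/\phi_{\CM})$, where $\eta=\Theta(1/\log n)$ is the integrality scale in \Cref{thm:LocalLengthConstrainedFlow}; add a super-source $s$ with edges $(s,v)$ of capacity $A^{\src}(v)$, a super-sink $t$ with edges $(v,t)$ of capacity $A^{\sink}(v)$ and length $1$; and set the length budget to $h=h_{\CM}+2$ with $\delta$ a small constant. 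The hypotheses of \Cref{thm:LocalLengthConstrainedFlow} hold: sink edges have unit length, and choosing the free parameter $c=1$ the total deficit is $\sum_{v}\max\{\deg^{+}_{\Gaug}(v)-\nabla(v),0\}=\sum_{v}\max\{\deg_{G}(v)+1-A^{\sink}(v),0\}\le\sum_{v}A^{\src}(v)=|A^{\src}|$ by the assumption $A^{\src}+A^{\sink}\ge\deg_{G}+\mathds{1}(V(G))$; hence the flow call runs in $\poly(h_{\CM})\cdot|A^{\src}|/\phi_{\CM}$ time and returns an $O(\log n/\delta)$-approximate pair $(f,w)$ with $f/\eta$ integral and $w(e)=(1+\delta/6)^{f(e)/(\eta U(e))}/m^{\zeta}$ for $\zeta=1/(6\delta)+3$.

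\paragraph{Matching and unmatched parts.} Next I would decompose $f/\eta$ (after discarding cycles) into $\val(f)/\eta$ unit $s$-$t$ paths, each of the form $s\to u\to P_{j}\to v\to t$ with $P_{j}$ an $h_{\CM}$-length path of $G$; set $A^{\src}_{M}(u)=f(s,u)/\eta$, $A^{\sink}_{M}(v)=f(v,t)/\eta$, and $A^{\src}_{U}=A^{\src}-A^{\src}_{M}$, $A^{\sink}_{U}=A^{\sink}-A^{\sink}_{M}$, and let $M_{\CM}$ be the matching that pairs the source-unit of $u$ with the sink-unit of $v$ realized by each unit path, embedded via $P_{j}$. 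Then $|M_{\CM}|=\val(f)/\eta=|A^{\src}_{M}|=|A^{\sink}_{M}|$ (in particular $|A^{\src}_{M}|\le|A^{\sink}_{M}|=O(|A^{\src}_{M}|)$), the embedding paths have length at most $h_{\CM}$, and each $G$-edge carries at most $2U_{\iint}/\eta=O(\kappa_{\CM,\gamma}/\phi_{\CM})$ of them, giving congestion $\gamma_{\CM}$.

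\paragraph{Moving cut and separation.} I would then define $C_{\CM}(e)=\lceil 8h_{\CM}\cdot\max\{w(\overrightarrow{e}),w(\overleftarrow{e})\}\rceil$ on every $G$-edge $e$ whose larger directed weight exceeds $1/(4h_{\CM})$, and $C_{\CM}(e)=0$ otherwise. Since $\sum_{\text{interior }\overrightarrow{e}}U_{\iint}\,w(\overrightarrow{e})\le|w|\le O(\log n/\delta)\,\val(f)\le O(\log n/\delta)\,|A^{\src}|$, the support of $C_{\CM}$ has $O(h_{\CM}\phi_{\CM}|A^{\src}|/\log^{2}n)$ edges and $|C_{\CM}|\le 8h_{\CM}\sum_{\text{interior}}w(\overrightarrow{e})+|\supp(C_{\CM})|=O(\phi_{\CM}h_{\CM}|A^{\src}|)$. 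For separation: if $u\in\supp(A^{\src}_{U})$ then $f(s,u)/\eta\le A^{\src}(u)-1$, so $f(s,u)/(\eta U(s,u))<1$ and $w(s,u)\le(1+\delta/6)/m^{\zeta}\le 2/m^{3}$, and likewise $w(v,t)\le 2/m^{3}$ for $v\in\supp(A^{\sink}_{U})$. For any $G$-path $P$ from such a $u$ to such a $v$ with $\ell_{G}(P)\le h_{\CM}$, the $\Gaug$-path $(s,u)\cdot P\cdot(v,t)$ has length at most $h$, hence $w$-weight at least $1$, so $\sum_{e\in P}w(\overrightarrow{e})\ge 1-4/m^{3}>1/2$; as $P$ uses at most $h_{\CM}$ edges, those with weight $\le 1/(4h_{\CM})$ contribute at most $1/4$, so $\ell_{G-C_{\CM}}(P)\ge\sum_{e\in P}C_{\CM}(e)\ge 8h_{\CM}(1/2-1/4)=2h_{\CM}>h_{\CM}$. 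Therefore $A^{\src}_{U}$ and $A^{\sink}_{U}$ are $h_{\CM}$-separated on $G-C_{\CM}$.

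\paragraph{Landmarks, locality, and the main obstacle.} The remaining and most delicate piece is the landmark set $L_{\CM}$ of $C_{\CM}$ with distortion $\sigma_{\CM}=h_{\CM}/\kappa_{\sigma}$ and size $|L_{\CM}|\le\kappa_{\CM,L}\phi_{\CM}|A^{\src}|=O(\kappa_{\sigma}\log^{3}n)\phi_{\CM}|A^{\src}|$; note $|\supp(C_{\CM})|$ itself carries an $h_{\CM}$ factor, so $L_{\CM}=V(\supp(C_{\CM}))$ is too large whenever $h_{\CM}\gg\kappa_{\sigma}\cdot\polylog n$. The plan is to (i) force both endpoints of every cut edge with $\ell_{G-C_{\CM}}(e)>\sigma_{\CM}$ into $L_{\CM}$ — there are $O(|C_{\CM}|/\sigma_{\CM})$ of these once the (flow-free, hence cut-free) $G$-edges of length $>\sigma_{\CM}$ have been dropped from $\Gaug$ — and (ii) for the cut-vertices touched only by cut edges with $\ell_{G-C_{\CM}}(e)\le\sigma_{\CM}$, compute a $\Theta(\sigma_{\CM})$-radius net of the cut region in $G-C_{\CM}$, for instance by ball growing or by \Cref{thm:ClassicNC} with $\beta=O(\log n)$, keeping one net point per cluster meeting $\supp(C_{\CM})$. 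The size bound should then follow by charging each net point to $\Omega(\sigma_{\CM})$ units of cut value clustered around it — using that neighbouring short cut edges have $\ell_{G-C_{\CM}}$-distance comparable to their cut value — which converts the $h_{\CM}$ in $|\supp(C_{\CM})|$ into the $\kappa_{\sigma}$ in $\kappa_{\CM,L}$ up to the $\log^{3}n$ slack. All steps (building $\Gaug$, path decomposition, cut rounding, net construction) stay within $\poly(h_{\CM})\cdot|A^{\src}|/\phi_{\CM}$ because they only touch the flow support and its $O(\sigma_{\CM})$-neighbourhood. I expect the hard part to be exactly this landmark step: bounding the $\sigma_{\CM}$-covering number of the cut region by $O(|C_{\CM}|/\sigma_{\CM})\cdot\polylog n$ requires carefully handling long cut edges and the way short cut edges clump together in $G-C_{\CM}$, and carrying it out in local time forces the net construction to be confined to an explicitly explored $O(\sigma_{\CM})$-ball around $\supp(C_{\CM})$ rather than a scan of $G$.
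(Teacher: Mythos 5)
Your proposal takes a genuinely different route from the paper: you invoke \Cref{thm:LocalLengthConstrainedFlow} once on an auxiliary graph whose interior edges have capacity $U_{\iint}=\Theta(\eta\,\kappa_{\CM,\gamma}/\phi_{\CM})$, hoping to read the matching, cut and landmarks off a single flow/cut pair. The paper instead iterates with unit-capacity interior edges, recomputing the flow $O(\log n/\phi)$ times, and builds the matching incrementally via a fractional-matching-plus-rounding step (using the Kang--Payor rounding lemma); for the final cut and landmarks it reruns the flow once more on a graph whose source capacities are scaled up by a constant factor $x$. Your single-shot idea is cleaner where it works, but it has two genuine gaps.

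\textbf{The matching construction is inconsistent.} You set $A^{\src}_M(u)=f(s,u)/\eta$, but the feasibility of $f$ only gives $f(s,u)\le U(s,u)=A^{\src}(u)$; since $\eta=\Theta(1/\log n)<1$, the integer $f(s,u)/\eta$ can be as large as $A^{\src}(u)/\eta\gg A^{\src}(u)$, so $A^{\src}_U=A^{\src}-A^{\src}_M$ is not a well-defined partition of the virtual nodes. You must instead treat $M'(u,v)=\sum_{P}f(P)$ as a fractional matching (which does respect the node-weighting) and round it to an integral one exactly as the paper does via its matching-rounding lemma. But once you round, your separation argument breaks: $A^{\src}_U(u)>0$ then only tells you $f(s,u)<A^{\src}(u)$, not $f(s,u)<\eta A^{\src}(u)$, so $f(s,u)/(\eta\,U(s,u))$ can be as large as $1/\eta=\Theta(\log n/\delta^2)$ and $w(s,u)\approx m^{\Theta(1/\delta^{2})}/m^{\zeta}$ is \emph{not} small for $\zeta=1/(6\delta)+3$. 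The paper avoids this precisely by running a second flow on $\bar{H}_i=H(x\cdot A^{\src}_i,A^{\sink})$ with $x$-scaled source capacities: then the additional property of the moving cut ($w(e)\ge 1/m\Rightarrow f(e)\ge\Omega(U(e))$) forces any source vertex with $\bar w_i(s,u)\ge 1/m$ to receive flow $\ge\Omega(xA^{\src}_i(u))\ge A^{\src}_i(u)$, so it is fully matched and excluded from $A^{\src}_U$. Without this $x$-scaling step, there is no way to guarantee the unmatched vertices touch only lightly-weighted dummy edges, and the feasibility of $w$ on $\Gaug$ does not transfer to a moving cut on $G$.

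\textbf{The landmark bound is not established.} You correctly identify that $|\supp(C_{\CM})|$ carries an $h_{\CM}$ factor that must be converted into $\kappa_\sigma$, but the net/ball-growing sketch you give is exactly the hard part, and as written it does not close: bounding the $\sigma_{\CM}$-covering number of $\supp(C_{\CM})$ in $G-C_{\CM}$ by $O(|C_{\CM}|/\sigma_{\CM})\cdot\polylog n$ requires controlling how short cut edges clump, and nothing in the cut-construction you gave supplies this. The paper sidesteps this by building the landmarks from the flow paths of the final call: every edge in $\supp(C_{\CM})$ is covered by some flow path of $\bar f_i$ (because $\bar w_i(e)\ge 1/m^3$ forces $\bar f_i(e)>0$), the number of flow paths is $O(\val(\bar f_i)/\eta)=O(\phi_{\CM}|A^{\src}|\log n)$, and decomposing each path into $\Theta(\sigma')$-length segments directly yields the $\kappa_{\CM,L}$ bound. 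Your approach would need an analogous structural fact, and it is not clear that a net over $\supp(C_{\CM})$ alone gives it.
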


Here we make some further explanations on the terminology in \Cref{thm:cutmatch}. Recall that $A^{\src}$ and $A^{\sink}$ also refer to sets of virtual nodes (see \Cref{def:NodeWeighting}), so partitions on $A^{\src}$ and $A^{\sink}$ are well-defined. The matching $M_{\CM}$ between $A^{\src}_{M}$ and $A^{\sink}_{M}$ is actually an unweighted matching between virtual nodes in $A^{\src}_{M}$ and $A^{\sink}_{M}$. 

\begin{remark}
We want to emphasize that, although in \Cref{thm:cutmatch} and its proof we talk about node-weighting and virtual nodes, \Cref{thm:cutmatch} also holds when the input $A^{\src}$ and $A^{\sink}$ are general integral function of $V(G)$. In particular, in \Cref{thm:NWInsertWithDensity}, we will apply \Cref{thm:cutmatch} with \textit{density functions} and \textit{items} (analogous to node-weightings and virtual nodes) as input.
\label{remark:GeneralCutmatch}
\end{remark}

\subsection{Local $h$-Length Lightest Path Blockers}
\label{sect:PathBlockers}
In \cite{haeupler2023maximum}, it is shown that the approximate length-constrained maxflows can be reduced to computing lightest path blockers via a multiplicative weight update framework. In this subsection, we localize the path blockers subroutine by invoking the (global) path blockers subroutine in \cite{haeupler2023maximum} as a blackbox. 

\begin{definition}[Length-Constrained Lightest Path Blockers]
Let $G=(V(G),E(G))$ be a graph with lengths $\ell$, weights $\omega$ and capacities $U$. Given source vertex $s$, sink vertex $t$, and parameters $\gamma_{2}\geq \gamma_{1}\geq 1$, $h\geq 1$, $\lambda\leq d_{w}^{(h)}(s,t)$, where $d_{w}^{(h)}(s,t)$ denote the minimum weights among $h$-length $s$-$t$ paths, an $h$-length integral $s$-$t$ flow $f$ is an \textit{$h$-length $(\gamma_{1},\gamma_{2})$-lightest path blocker} if
\begin{itemize}
\item[(1)] every path $P\in\supp(f)$ has weight at most $\gamma_{2}\cdot\lambda$,
\item[(2)] for each $h$-length $s$-$t$ path $P'$ with weight at most $\gamma_{1}\cdot\lambda$, there exists $e\in P'$ s.t. $f(e) = U(e)$ (we say that $f$ \textit{blocks} $P'$ in this case).
\end{itemize}
\end{definition}

\begin{theorem}[Theorem 11.1, \cite{haeupler2023maximum}]
Given a directed graph $G=(V(G),E(G))$ with lengths $\ell$, weights $w$, capacities $U$, source vertex $s$, sink vertex $t$ and parameters $\delta>0$, $h\geq 1$ and $\lambda\leq d_{w}^{(h)}(s,t)$, there is an algorithm that compute an $h$-length $(1+\delta, 1+2\delta)$-lightest path blocker $f$.
The running time is $\poly(h,1/\delta,\log n)\cdot |G|$.
\label{thm:GlobalPathBlockers}
\end{theorem}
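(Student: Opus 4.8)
The plan is to reconstruct the standard \emph{length-augmented blocking-flow} argument (this theorem is quoted from \cite{haeupler2023maximum}, so what follows is how one proves it from scratch): compute the blocker as a single blocking-flow computation in an auxiliary directed acyclic graph $H$ whose vertices track, along every path, both the accumulated length and a coarsened accumulated weight.

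\textbf{Building the DAG.} Fix the weight granularity $g=\delta\lambda/h$. Discard every edge $e$ with $w(e)>(1+\delta)\lambda$ (no $s$-$t$ path that needs blocking can use such an edge), and replace each surviving $w(e)$ by $w'(e)$, the largest multiple of $g$ not exceeding $w(e)$; there are $O(h/\delta)$ weight classes inside $[0,(1+\delta)\lambda]$. For every $v\in V(G)$, every length index $j\in\{0,1,\dots,h\}$, and every weight class $k$, create a copy $(v,j,k)$, and for every edge $(u,v)\in E(G)$ add all arcs $(u,j,k)\to(v,\,j+\ell(e),\,k+w'(e)/g)$, declaring all copies of $e$ to share the single capacity $U(e)$. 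Since $\ell(e)\ge 1$, the length coordinate strictly increases along every arc, so $H$ is acyclic with longest path at most $h$ and $|H|=|G|\cdot O(h^2/\delta)$. After identifying all copies of $s$ into one source and all copies of $t$ into one sink, the $s$-$t$ paths of $H$ are exactly the $h$-length $s$-$t$ walks of $G$ of rounded weight at most $(1+\delta)\lambda$, and every $G$-walk of \emph{true} weight at most $(1+\delta)\lambda$ lifts to such a path because rounding only decreases weight.

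\textbf{Shared-capacity blocking flow.} Run a Dinic-style advance/retreat DFS on $H$ that respects the shared capacities: advance forward from the source; on reaching the sink, push the integral amount equal to the minimum residual $G$-capacity along the discovered path, which saturates at least one $G$-edge; delete every copy of each newly saturated $G$-edge and retreat; when a vertex has no usable forward arc, delete it and retreat. There are at most $|E(G)|$ augmentations (each permanently saturates a $G$-edge), and the usual amortization (dead-end retreats charged to deleted vertices/arcs, productive advances charged to the path eventually found) bounds the total work by $O(|E(H)|+|E(G)|\cdot h)=|G|\cdot\poly(h,1/\delta)$, any further polylog factor being absorbed by the incidence data structures. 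Summing the resulting $H$-flow over the copies of each $G$-edge yields an integral $h$-length $s$-$t$ flow $f$ with $f(e)\le U(e)$; $f$ is naturally supported on $h$-length walks, which is the formulation used by the multiplicative-weights maximum-flow routine that calls this subroutine.

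\textbf{Verifying the blocker properties, and the obstacle.} Every path carrying flow has rounded weight at most $(1+\delta)\lambda$, and a length-$\le h$ walk has at most $h$ edges, so its true weight is at most $(1+\delta)\lambda+hg=(1+2\delta)\lambda$, giving property (1). For property (2), the DFS halts precisely when $H$ has no forward $s$-$t$ path all of whose arcs have positive residual $G$-capacity; hence every $s$-$t$ path of $H$ --- in particular the lift of every $h$-length $s$-$t$ walk of $G$ of weight at most $(1+\delta)\lambda$ --- contains an arc whose underlying edge $e$ has $f(e)=U(e)$, so $f$ blocks it. Integrality follows from integral capacities and integral pushes, and the hypothesis $\lambda\le d_w^{(h)}(s,t)$ is only a consistency condition imposed by the caller. \textbf{The main obstacle} is exactly this shared-capacity blocking flow: a black-box DAG blocking-flow routine assumes arcs have independent capacities, whereas here $\Theta(h^2/\delta)$ arcs of $H$ share one capacity $U(e)$, so both the amortized running-time bound and the termination/blocking invariant must be re-derived directly over the capacities of $G$ rather than cited; the weight-discretization bookkeeping and the walk-versus-simple-path point are secondary, and only require $g$ to be chosen as above.
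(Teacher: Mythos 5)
This theorem is stated as ``Theorem 11.1 of \cite{haeupler2023maximum}'' and the paper gives no proof of its own; it is used as a black box inside \Cref{thm:PathBlockers}. So there is no in-paper proof to compare against, and your write-up is a from-scratch reconstruction. With that caveat, the high-level approach is the right one and, I believe, essentially what the cited paper does: lift to a length-and-weight-layered DAG and compute a blocking flow there. Your size accounting ($O(h)$ length layers, $O(h/\delta)$ weight classes under granularity $g=\delta\lambda/h$, total $|G|\cdot O(h^2/\delta)$), the $(1+2\delta)$ rounding-error bound via ``at most $h$ edges each losing $\le g$,'' and the fact that discarding edges of weight $>(1+\delta)\lambda$ is harmless are all correct. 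You also correctly identify the crux --- that standard DAG blocking flow assumes independent arc capacities whereas here $\Theta(h^2/\delta)$ arcs share one $G$-capacity --- and your re-derived amortization (at most $|E(G)|$ augmentations since each permanently saturates a $G$-edge, retreats charged to $H$-vertex and $H$-arc deletions, advances charged to augmenting-path lengths of at most $h$) is sound and gives the claimed $|G|\cdot\poly(h,1/\delta,\log n)$ bound.

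The one genuine gap is the walk-versus-simple-path mismatch, which you acknowledge but do not actually resolve. The paper's definition of an $h$-length $s$-$t$ flow assigns values to \emph{simple} paths in $\mathcal{P}_h(s,t)$, and property (1) of a lightest path blocker is stated over $\supp(f)$ consisting of such paths. A simple path in $H$ can project to a non-simple walk in $G$ --- because distinct length coordinates $j$ keep repeated $G$-vertices as distinct $H$-vertices --- and it can even re-traverse the same $G$-edge. Your dismissal (``the caller works with walks'') does not close the gap as stated: if you post-process by cancelling cycles to obtain a flow on simple paths, $f(e)$ can decrease and previously saturated edges can become unsaturated, so property (2) (blocking) may fail for the converted flow. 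To make this rigorous you need one of two things: an argument that the blocking DFS can be steered to produce only walks that project to simple $G$-paths (e.g., by refusing advances that revisit a $G$-vertex already on the current walk, plus a proof this does not hurt the blocking invariant), or a restatement of the blocker over walks together with a check that every downstream use in \Cref{thm:PathBlockers} and \Cref{algo:LocalFlow} consumes only the edge-wise values $f(e)$ of the walk-flow (which I believe is true, since the weight update on line 10 and the feasibility/approximation claims only touch $f(e)$, but this must be said explicitly rather than asserted). A related detail you omit: when a single walk traverses the same $G$-edge twice, the bottleneck computation in your DFS must divide the residual by the multiplicity of $e$ on the path, otherwise the push can overshoot $U(e)$.

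Everything else --- the discretization, the DAG acyclicity via $\ell(e)\ge 1$, the blocking certificate (every $s$-$t$ $H$-path after termination hits a saturated $G$-edge, hence every light short $G$-path lifts to a blocked $H$-path), and the running time --- is correct.
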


\begin{theorem}
Let $G=(V(G),E(G))$ be a directed graph with lengths $\ell$, weights $w$, capacities $U$, source vertex $s$, sink vertex $t$ and parameters $\delta>0$, $h\geq 1$ and $\lambda\leq d_{w}^{(h)}(s,t)$, further satisfying that, $(s,t)\notin E(G)$ and each sink edge $e\in E^{-}(t)$ has $w(e) = 0$ and $\ell(e)=1$. With access to the adjacency list of $G$, there is an algorithm that compute an $h$-length $(1+\delta, 1+2\delta)$-lightest path blocker $f$. The running time is
\[
\poly(h,1/\delta,\log n)\cdot\sum_{v\in V(G)\setminus\{s,t\}}(c\cdot\Delta(v) + \max\{\deg_{G}^{+}(v)-c\cdot\nabla(v),0\}),
\]
where $c\geq 1$ is a parameter arbitrarily chosen.
\label{thm:PathBlockers}
\end{theorem}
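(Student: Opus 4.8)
The plan is to invoke the global lightest‑path‑blocker algorithm of \Cref{thm:GlobalPathBlockers} as a black box, but only ever on a small subgraph $\hat G\subseteq G$ that is revealed adaptively from the adjacency list of $G$, arranged so that a blocker for $\hat G$ is automatically a blocker for $G$ while $|\hat G|$ stays within the claimed budget. Concretely, I would initialize $\hat G$ to contain $s$, $t$, all source edges $E^{+}(s)$, their endpoints, and for each such endpoint its sink edge (read in $O(1)$ time from the adjacency list). Then I repeat: (i) run \Cref{thm:GlobalPathBlockers} on $\hat G$ with parameters $\delta,h,\lambda$ to obtain an $h$-length $(1+\delta,1+2\delta)$-blocker $f$ of $\hat G$ — legitimate since $\hat G\subseteq G$ forces $d^{(h)}_{w}(s,t)$ in $\hat G$ to be at least $d^{(h)}_{w}(s,t)\ge\lambda$ in $G$; (ii) by a length‑ and weight‑bounded search from $s$ in the residual graph of $f$, compute the set $U_{\mathrm{bad}}$ of vertices reachable from $s$ by a residual path of length $\le h-1$ and weight $\le(1+\delta)\lambda$ whose out‑edges are not all already in $\hat G$; (iii) if $U_{\mathrm{bad}}=\emptyset$, output $f$ (viewed as a flow on $G$, zero outside $\hat G$); otherwise, for every $u\in U_{\mathrm{bad}}$ read $E^{+}_{G}(u)$ and add it — together with the new endpoints and their sink edges — to $\hat G$, then loop.

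For \emph{correctness}, suppose the loop terminates with a blocker $f$ of $\hat G$ and $U_{\mathrm{bad}}=\emptyset$; I claim $f$ is an $h$-length $(1+\delta,1+2\delta)$-lightest path blocker of $G$. Property (1) is immediate because $f$ is supported on $\hat G$, where it is a blocker, and weights agree on $\hat G$ and $G$. For property (2), let $P'$ be a simple $h$-length $s$-$t$ path of $G$ with $w(P')\le(1+\delta)\lambda$, and suppose for contradiction that no edge of $P'$ is $f$-saturated. If $P'\subseteq\hat G$ this contradicts that $f$ blocks $\hat G$; otherwise let $(u,v)$ be the first edge of $P'$ not in $\hat G$. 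The prefix of $P'$ from $s$ to $u$ lies in $\hat G$, consists of unsaturated (hence residual) edges, has length $\le h-\ell(u,v)\le h-1$ and weight $\le(1+\delta)\lambda$, so $u$ is residual‑reachable within the budget; since its out‑edge $(u,v)$ is not in $\hat G$, this puts $u\in U_{\mathrm{bad}}$, contradicting $U_{\mathrm{bad}}=\emptyset$. Hence $f$ blocks every light $h$-length $s$-$t$ path of $G$. (Termination is clear: each non‑final iteration strictly enlarges $\hat G$, which is contained in the finite graph $G$.)

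For the \emph{running time}, I need two facts: that $|\hat G|$ is small and that the loop runs $O(h)$ times. For the first, $\hat G$ only ever accumulates $E^{+}(s)$, the out‑edges of \emph{explored} vertices (those ever placed into $U_{\mathrm{bad}}$), and at most one sink edge per vertex of $\hat G$; hence $|\hat G|=O\bigl(|E^{+}(s)|+\sum_{u\ \mathrm{explored}}\deg^{+}_{G}(u)\bigr)$, and $|E^{+}(s)|\le\sum_{v}\Delta(v)$ since every source edge has capacity $\Delta(v)\ge1$. The key observation for the out‑degree sum is that when a vertex $u$ is explored it is residual‑reachable within the budget in the current blocker $f$, so $s$ cannot reach $t$ through a residual sink edge of $u$ within the budget — that would be an unblocked light $s$-$t$ path of $\hat G$, using $w(u,t)=0$ and $\ell(u,t)=1$ — so either $\nabla(u)=0$ (no sink edge) or the sink edge $(u,t)$ is saturated and $f$ routes $\nabla(u)=U(u,t)$ units through it. Since $\val(f)\le\sum_{v}\Delta(v)$, in each iteration the total $\nabla(u)$ over explored vertices with saturated sink edge is at most $\sum_{v}\Delta(v)$; bounding each $\deg^{+}_{G}(u)$ by $c\nabla(u)+\max\{\deg^{+}_{G}(u)-c\nabla(u),0\}$ and summing over the $O(h)$ iterations gives $\sum_{u\ \mathrm{explored}}\deg^{+}_{G}(u)=\poly(h)\cdot\bigl(c\sum_{v}\Delta(v)+\sum_{v}\max\{\deg^{+}_{G}(v)-c\nabla(v),0\}\bigr)$. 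For the iteration count, a vertex enters $\hat G$ only as the head of an out‑edge of a vertex explored in the previous iteration (or as a source‑edge endpoint at the start), and out‑edges have length $\ge1$, so the frontier reachable from $s$ inside $\hat G$ advances by at least one unit of length per iteration; since we never explore a vertex at $\hat G$-distance exceeding $h-1$ from $s$, the loop halts within $O(h)$ iterations. Each iteration costs $\poly(h,1/\delta,\log n)\cdot|\hat G|$ for the black‑box call, $\poly(h)\cdot|\hat G|$ for the bounded search, and $O(\sum\deg^{+}_{G})$ for reading adjacency lists, yielding the claimed total $\poly(h,1/\delta,\log n)\cdot\sum_{v\in V(G)\setminus\{s,t\}}\bigl(c\cdot\Delta(v)+\max\{\deg^{+}_{G}(v)-c\nabla(v),0\}\bigr)$.

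I expect the main obstacle to be the running‑time accounting, and within it two points: (a) ensuring that the out‑edges we reveal are paid for by flow that genuinely reaches the sink — which is exactly why the algorithm explores only residual‑reachable vertices and why the unit‑length, weightless sink‑edge hypotheses are needed; and (b) bounding the number of re‑runs of the global algorithm by $O(h)$, so that repeated black‑box calls do not inflate the budget (the delicate part here is arguing that a vertex which is residual‑reachable within the budget stays so as $\hat G$ grows, so that no vertex is processed "late"). The correctness direction is comparatively routine, but it does genuinely use the structural hypotheses on the sink edges and on $(s,t)\notin E(G)$ that make $\hat G$ a faithful sub‑instance.
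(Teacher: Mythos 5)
Your correctness argument is sound and the initialization/accounting of $|\hat G|$ within a single iteration is essentially right, but there is a genuine gap in the running‑time analysis, specifically the claim that the outer loop terminates within $O(h)$ iterations. You justify this by saying the frontier advances by at least one unit of length per iteration, but this is not true under your scheme. A vertex $u$ already present in $\hat G$ (say a source‑edge endpoint at distance $1$) may fail to be residual‑reachable in iteration $1$ because the blocker $f_1$ happens to saturate $(s,u)$; it therefore stays out of $U_{\mathrm{bad}}$, its out‑edges stay hidden, and it can only be explored in some later iteration once a freshly recomputed blocker $f_i$ on the larger $\hat G_i$ no longer saturates $(s,u)$. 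Residual‑reachability under the blocker is not monotone in $\hat G$, so there is no bound forcing the ``layers'' to be explored in order. Each iteration is only guaranteed to explore at least one new vertex, so the iteration count could in principle be as large as $|V(G)|$, which breaks the bound: your final $|\hat G|$ is roughly $cT\sum\Delta + \sum\max$ where $T$ is the iteration count, and you pay $\poly(h)\cdot|\hat G|$ per iteration, giving a $T^2$ overhead that the theorem cannot absorb unless $T=\poly(h)$. You flag this as ``delicate'' but do not close it, and I do not see how to close it within your scheme as stated.

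The paper avoids this issue by a structurally different device: rather than re‑running the full $h$-length blocker from scratch on a growing subgraph, it accumulates partial blockers $\hat f_i = \sum_{i'\le i} f_{i'}$ and uses an \emph{increasing} length parameter $h_i = i+1$. The inductive invariant is that after iteration $i$, every $(i+1)$-length light $s$-$t$ path of $G$ is already blocked by $\hat f_i$; the unit‑length, zero‑weight sink‑edge hypotheses are used exactly to propagate this invariant (any non‑sink‑saturated vertex on a short prefix would give a strictly shorter light path, contradicting the previous step). This hard‑wires the iteration count to exactly $h$ and keeps each local graph $G_i$ small, because only sink‑saturated vertices (whose total $\nabla$ is bounded by $\sum\Delta$) need be expanded. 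If you want to salvage your proposal, the fix is essentially to adopt this length‑capping and flow‑accumulation trick; with a fixed budget of $h$ on the path length and a blocker recomputed from scratch each round, the termination bound you need simply does not hold.
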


\begin{proof}
\textbf{The Algorithm.} The algorithm is iterative. In the $i$-th iteration, we will construct some local graph $G_{i}$ and compute a blocker $f_{i}$ on $G_{i}$ using \Cref{thm:GlobalPathBlockers}. The final output is the accumulated flow $f = \sum_{i}f_{i}$.

Suppose we are now in the $i$-th iteration. The graph $G_{i}$ is constructed as follows. Let $\hat{f}_{i-1} = \sum_{i'\leq i-1} f_{i'}$ be the accumulated flow of the previous $i-1$ iterations. In particular, $\hat{f}_{0} = 0$ is the zero flow. Let $U_{i} = U-\hat{f}_{i-1}$ be the remaining capacity function. Let $V_{i-1,\sat}\subseteq \{v\in V(G)\setminus\{s,t\}\mid \hat{f}_{i-1}(v,t) = \nabla(v)\}$ be vertices reachable from $s$ via a BFS which only goes through non-saturated edges (i.e. $e$ s.t. $U_{i}(e) >0$) and sink-saturated vertices (i.e. $v$ s.t. $\hat{f}_{i-1}(v,t) = \nabla(v)$). 
Then the vertex set and edge set of $G_{i}$ are
\[
V(G_{i}) = \{s,t\}\cup N^{+}[\{s\}\cup V_{i-1,\sat}],
\]
and 
\[
E(G_{i}) = (\{(s,v),(v,t)\mid v\in V(G_{i})\setminus\{s,t\}\}\cap E(G))\cup E^{+}(V_{i-1,\sat}).
\]
For each edge $e\in E(G_{i})$, we define its capacity be $U_{i}(e) = U(e) - \hat{f}_{i-1}(e)$ and keep its weight and length unchanged, i.e. $w_{i}(e) = w(e)$ and $\ell_{i}(e) = \ell(e)$. 

After constructing $G_{i}$, we compute a blocker $f_{i}$ on $G_{i}$ using \Cref{thm:GlobalPathBlockers} with parameters $h_{i} = i+1$ and proceed to the next iteration. The whole algorithm will terminate right after the $h$-th iteration, and the output is $f = \hat{f}_{h} = \sum_{i\leq h}f_{i}$.

\

\noindent\textbf{Proof of Correctness.} First, the flow paths in the final flow $f$ indeed have weight at most $(1+2\delta)\cdot\lambda$ because \Cref{thm:GlobalPathBlockers} guarantees that flow paths in each $f_{i}$ has weight at most $(1+2\delta)\cdot\lambda$. Next we show that $f$ blocks all $h$-length path with weight at most $(1+\delta)\lambda$ by induction. Let the induction hypothesis be that, right after each iteration $i$, all $(i+1)$-length $s$-$t$ paths in $G$ with weight at most $(1+\delta)\cdot\lambda$ have been blocked by $\hat{f}_{i}$. 

\

\underline{The Base Case.} Consider the first iteration. Note that every $2$-length path $P$ in $G$ must in the form $s\to v\to t$, because $(s,t)\notin E(G)$ and every edge in $G$ has positive integral length. Because $v\in N^{+}(\{s\})$, $P$ is fully inside the local graph $G_{1}$. By \Cref{thm:GlobalPathBlockers}, the path blocker $\hat{f}_{1} = f_{1}$ will blocked all $2$-length paths in $G_{1}$ with weight at most $(1+\delta)\cdot\lambda$, so every $2$-length path in $G$ with weight at most $(1+\delta)\cdot\lambda$ will be blocked by $\hat{f}_{1}$.

\

\underline{An Inductive Step.} Now consider iteration $i$ with $i\geq 2$. Let $P = s\to v_{1}\to \cdots \to v_{k}\to t$ be an $(i+1)$-length $s$-$t$ path in $G$ with weight at most $(1+\delta)\cdot\lambda$. If $P$ has been blocked by $\hat{f}_{i-1}$ in $G$, the trivially it will blocked by $\hat{f}_{i}$ in $G$. Now suppose $P$ has not been blocked by $\hat{f}_{i-1}$.

\begin{claim}
If $P$ has not been blocked by $\hat{f}_{i-1}$, then $P\subseteq G_{i}$.
\label{claim:4.9}
\end{claim}
\begin{proof}
We assume that $k\geq 2$ without loss of generality. Because if $k=1$, then $P$ is in the form $s\to v\to t$ and it must be a path in $G_{i}$ because $v\in N^{+}(\{s\})$.

We first show that each $v_{k'}$ with $1\leq k'\leq k-1$ is in $V_{i-1,\sat}$. Assume this is not true for contradiction. Let $k'$ be the smallest s.t. $v_{k'}\notin V_{i-1,\sat}$, and we consider the path $P'=s\to v_{1}\to\cdots\to v_{k'}\to t$. We know that $P'$ has not been blocked by $\hat{f}_{i-1}$ because the edges $(s,v_{1}),(v_{1},v_{2}),...,(v_{k'-1},v_{k'})$ are not saturated (since $P$ is not blocked by $\hat{f}_{i-1}$) and $(v_{k'},t)$ is not saturated (since $v_{k'}\notin V_{i-1,\sat}$). However, we know that $P'$ is a $i$-length path because 
\[
\ell_{G}(P') = \ell_{G}(P)-\ell_{G}(v_{k'}\to\cdots\to v_{k}) - \ell_{G}(v_{k},t) + \ell_{G}(v_{k'},t),
\]
where $\ell_{G}(v_{k},t)=\ell_{G}(v_{k'},t)=1$ from the theorem statement and $\ell_{G}(v_{k'}\to\cdots\to v_{k})\geq 1$. The weight of $P'$ can be bounded by 
\[
w(P')=w(P)-w(v_{k'}\to\cdots\to v_{k}) - w(v_{k},t) + w(v_{k'},t)\leq (1+\delta)\cdot\lambda.
\]
because $w(P)\leq (1+\delta)\cdot\lambda$ and $w(v_{k'},t)=0$ from the theorem statement. Therefore, the existence of $P'$ violates the induction hypothesis.

Providing that $v_{k'}\in V_{i-1,\sat}$ for all $1\leq k'\leq k-1$, we know $P\subseteq G_{i}$ because edges $(v_{1},v_{2}),...,(v_{k-1},v_{k})\in E^{+}(V_{i-1,\sat})\subseteq E(G_{i})$, and $(s,v_{1}),(v_{k},t)\in E(G_{i})$ (since $v_{1},v_{k}\in V(G_{i})$ by definition).

\end{proof}

By \Cref{claim:4.9} and \Cref{thm:GlobalPathBlockers}, $f_{i}$ will blocked $P$ in $G_{i}$. Note that the capacity of $G_{i}$ is defined by $U_{i} = U - \hat{f}_{i-1}$, so $\hat{f}_{i} = \hat{f}_{i-1} + f_{i}$ will block $P$ in $G$.

\

\noindent\textbf{Time Analysis.} We now analyse the running time. We first bound the size of each $G_{i}$. Observe that $|V_{i-1,\sat}|\leq \deg^{+}_{G}(\{s\}\cup V_{i-1,\sat})$ by considering the BFS procedure constructing $V_{i-1,\sat}$. Thus, the number of vertices in $G_{i}$ is
\begin{align*}
|V(G_{i})|\leq 2+|V_{i-1,\sat}| + \deg^{+}_{G}(\{s\}\cup V_{i-1,\sat})\leq 2+2\cdot\deg^{+}_{G}(\{s\}\cup V_{i-1,\sat}).
\end{align*}
Also, the number of edges in $G_{i}$ is bounded by
\begin{align*}
|E(G_{i})|&\leq 2|V(G_{i})| + \deg^{+}_{G}(V_{i-1,\sat})\\
&\leq 4+3(\deg_{G}^{+}(s) + \deg^{+}_{G}(V_{i-1,\sat}))\\
&\leq O(\sum_{v\in V(G)\setminus\{s,t\}} \Delta(v)) + O(\sum_{v\in V_{i-1,\sat}}c\cdot \nabla(v)) + O(\sum_{v\in V_{i-1,\sat}}\max\{\deg_{G}^{+}(v)-c\cdot\nabla(v),0\})\\
&\leq O(\sum_{v\in V(G)\setminus\{s,t\}}(c\cdot \Delta(v) + \max\{\deg_{G}^{+}(v)-c\cdot\nabla(v),0\}))
\end{align*}
where the third inequality is by $\sum_{v\in V(G)\setminus\{s,t\}}\Delta(v)\geq \deg^{+}_{G}(s)$ because each source edge has capacity at least $1$ (the capacity is a positive integer), and the last inequality we use that $\sum_{v\in V_{i-1,\sat}} \nabla(v) \le \sum_{v\in V(G)\setminus\{s,t\}} \Delta(v)$ because the total capacity of saturated sinks is at most the total source.%

The running time $i$-th iteration is dominated by applying \Cref{thm:GlobalPathBlockers} on $G_{i}$, which is
\begin{align*}
&~~~~\poly(h,1/\delta,\log n)\cdot|G_{i}| \\
&= \poly(h,1/\delta,\log n)\cdot\sum_{v\in V(G)\setminus\{s,t\}}(c\cdot\Delta(v) + \max\{\deg^{+}_{G}(v)-c\cdot\nabla(v),0\}).
\end{align*}
Summing over $h$ iterations gives the desired bound of total running time.

\end{proof}

\begin{corollary}
Let $G=(V(G),E(G))$ be a directed graph with lengths $\ell$, weights $w$, capacities $U$, source vertex $s$, sink vertex $t$ and parameters $\delta>0$, $h\geq 1$ and $\lambda\leq d_{w}^{(h)}(s,t)$, further satisfying that each sink edge $e\in E^{-}(t)$ has $\ell(e) = 1$. 
For each vertex $v\in V(G)\setminus\{s,t\}$, let 
\begin{equation*}
\wtilde{\nabla}(v) = 
\left\{
\begin{aligned}
&0,\ \text{if $e=(v,t)$ does not exist or $w(v,t)>(1+\delta)\lambda$}\\
&\nabla(v),\ \text{if $w(v,t)\leq (1+\delta)\lambda$}
\end{aligned}
\right.
\end{equation*}
With access to the adjacency list of $G$, There is an algorithm that compute an $h$-length $(1+\delta, 2+3\delta)$-lightest path blocker $f$ with $|\path(f)|\leq\sum_{v\in V(G)\setminus\{s,t\}} \Delta(v)$.
The running time is
\[
\poly(h,1/\delta,\log n)\cdot\sum_{v\in V(G)\setminus\{s,t\}}(c\cdot\Delta(v) + \max\{\deg_{G}^{+}(v)-c\cdot\wtilde{\nabla}(v),0\}),
\]
where $c\geq 1$ is a parameter arbitrarily chosen.
\label{Coro:LocalPathBlockers}
\end{corollary}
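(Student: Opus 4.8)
The plan is to re-run the iterative construction from the proof of \Cref{thm:PathBlockers} on a lightly preprocessed version of $G$, tracking where the relaxed sink-edge hypothesis forces the worse multiplier $2+3\delta$ in place of $1+2\delta$. \emph{Preprocessing:} delete the edge $(s,t)$ if present and delete every sink edge $(v,t)$ with $w(v,t)>(1+\delta)\lambda$; call the result $G'$. Then $(s,t)\notin E(G')$, every sink edge of $G'$ has length $1$ and weight at most $(1+\delta)\lambda$, and — what we need for the time bound — $\nabla_{G'}(v)=\wtilde{\nabla}(v)$, $\deg^{+}_{G'}(v)\le\deg^{+}_{G}(v)$ and $\Delta_{G'}(v)=\Delta(v)$ for all $v$; deleting edges only raises $d^{(h)}_{w}(s,t)$, so it stays $\ge\lambda$. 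The deleted objects are irrelevant for the blocking requirement, since every $s$-$t$ path through a deleted sink edge has weight $>(1+\delta)\lambda$; the single path $s\to t$ (when $w(s,t)\le(1+\delta)\lambda$) is handled directly by pushing $U(s,t)$ units along it (one flow path, weight $\le(1+\delta)\lambda$).

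\emph{The local iteration.} On $G'$, in iteration $i$ I would build the local graph $G_i$ exactly as in the proof of \Cref{thm:PathBlockers} (BFS from $s$ through non-saturated edges and $\nabla_{G'}$-sink-saturated vertices), and compute the per-iteration blocker $f_i$ by invoking \Cref{thm:GlobalPathBlockers} on $G_i$ \emph{after re-weighting all its sink edges to $0$} (call these weights $w^{(0)}$), with length parameter $h_i=i+1$; set $\hat f_i=\hat f_{i-1}+f_i$ and output $f=\hat f_h$ as a flow decomposition, together with the $s\to t$ flow. The invariant to maintain is: after iteration $i$, every $(i+1)$-length $s$-$t$ path $P$ in $G'$ whose \emph{prefix weight} $w^{(0)}(P)$ — the weight of $P$ excluding its last (sink) edge — is at most $(1+\delta)\lambda$ has been blocked. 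The inductive step mirrors \Cref{claim:4.9}: for a still-unblocked such $P$ and the first vertex $v_{k'}\notin V_{i-1,\sat}$, the edge $(v_{k'},t)$ must exist and be non-saturated, and the shortcut $P'=(s\rightsquigarrow v_{k'})+(v_{k'},t)$ is an $i$-length non-saturated $s$-$t$ path; the point that legitimizes the zeroing is that passing from $P$ to $P'$ only \emph{removes} the non-negative-weight segment $v_{k'}\rightsquigarrow v_{k}$, so $w^{(0)}(P')\le w^{(0)}(P)\le(1+\delta)\lambda$ and $P'$ contradicts the invariant at $i-1$, forcing $P\subseteq G_i$, where $f_i$ blocks it. At $i=h$ this yields condition (2) with $\gamma_1=1+\delta$ (a $w$-weight-$\le(1+\delta)\lambda$ path has prefix weight $\le(1+\delta)\lambda$ as well). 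For condition (1), each flow path of $f_i$ has $w^{(0)}$-weight $\le(1+2\delta)\lambda$ by \Cref{thm:GlobalPathBlockers} and picks up at most $(1+\delta)\lambda$ more from its sink edge, so its $w$-weight is $\le(2+3\delta)\lambda$.

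\emph{Time and path count.} The running time is the analysis in the proof of \Cref{thm:PathBlockers} applied verbatim to $G'$ — zeroing sink weights alters neither the BFS structure nor the graph sizes, only the numbers passed to \Cref{thm:GlobalPathBlockers} — and substituting $\nabla_{G'}=\wtilde{\nabla}$, $\deg^{+}_{G'}\le\deg^{+}_{G}$, $\Delta_{G'}=\Delta$ turns its bound into $\poly(h,1/\delta,\log n)\cdot\sum_{v}\big(c\cdot\Delta(v)+\max\{\deg^{+}_{G}(v)-c\cdot\wtilde{\nabla}(v),0\}\big)$. The path count is immediate: $f$ is integral with value at most the total source-edge capacity of $G'$, namely $\sum_{v}\Delta(v)$, so a flow decomposition of it has $|\path(f)|\le\sum_{v}\Delta(v)$.

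\emph{The main obstacle} is the precondition $\lambda\le d^{(h)}_{w^{(0)}}(s,t)$ needed to invoke \Cref{thm:GlobalPathBlockers} on each $G_i$ with the zeroed weights: our hypothesis only gives $\lambda\le d^{(h)}_{w}(s,t)$, and zeroing the sink edges can lower the lightest $h$-length path weight arbitrarily (to $0$ when its non-sink edges are all weightless). I would cope with this by calling \Cref{thm:GlobalPathBlockers} with parameter $\lambda_i:=\min\{\lambda,\,d^{(h_i)}_{w^{(0)}}(s,t)\text{ in }G_i\}$ and then repeating the blocker step on the current residual graph: each repetition blocks all lightest $h_i$-length paths and hence multiplies $d^{(h_i)}_{w^{(0)}}(s,t)$ by at least $1+\delta$, so after $O(\poly(1/\delta)\cdot\log(n\lambda))$ repetitions it reaches $\lambda$ and the invariant carries forward as above. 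Interleaving these ``catch-up'' rounds with the length-by-length iteration while keeping the total number of \Cref{thm:GlobalPathBlockers} calls (hence the running time) inflated by only a $\polylog$-type factor is the step I expect to be most delicate; the remainder is a routine adaptation of the proof of \Cref{thm:PathBlockers}.
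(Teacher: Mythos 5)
Your proposal takes a genuinely different---and substantially more complicated---route than the paper. The paper's proof is a two-step black-box reduction: form $\wtilde{G}$ from $G$ by (1) deleting every sink edge with weight larger than $(1+\delta)\lambda$ and (2) setting the weight of every surviving sink edge to zero; then apply \Cref{thm:PathBlockers} once to $\wtilde{G}$ with the unchanged $\delta,h,\lambda$, noting that the sink capacity of $\wtilde{G}$ is exactly $\wtilde\nabla$ so the running-time bound reads off directly. Transferring the output to $G$ is immediate: a flow path's $G$-weight exceeds its $\wtilde{G}$-weight by only its sink-edge weight, at most $(1+\delta)\lambda$, which gives the $(2+3\delta)\lambda$ bound; and any $h$-length $s$-$t$ path of $G$-weight at most $(1+\delta)\lambda$ contains no deleted sink edge, so it lives in $\wtilde{G}$ with at most the same weight and is blocked there. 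You instead reopen the iterative local construction inside the proof of \Cref{thm:PathBlockers} and re-prove its per-iteration invariant with a modified ``prefix weight,'' duplicating the entire content of that theorem when a single black-box invocation suffices; splitting ``delete heavy sink edges'' (done once) from ``zero the remaining sink weights'' (done inside each $G_i$) adds moving parts without buying anything.

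You do correctly flag a subtlety that the paper's proof does not explicitly address: invoking \Cref{thm:PathBlockers} on $\wtilde{G}$ formally needs $\lambda\le d^{(h)}_{\wtilde{w}}(s,t)$ in $\wtilde{G}$, whereas the corollary only assumes $\lambda\le d^{(h)}_{w}(s,t)$ in $G$, and zeroing sink weights can push the lightest $h$-length path weight in $\wtilde{G}$ below $\lambda$ (take $s\to v\to t$ with all weight on $(v,t)$). However, the ``catch-up'' fix you sketch does not repair this. The assertion that each blocker repetition ``multiplies $d^{(h_i)}_{w^{(0)}}(s,t)$ by at least $1+\delta$'' conflates what a blocker does (saturate residual capacity) with the multiplicative weight updates that happen only in the outer \Cref{algo:LocalFlow}; the blocker on its own rescales no weight, and if $d^{(h_i)}_{w^{(0)}}(s,t)=0$ it stays $0$ forever, so the claimed $\poly(1/\delta)\cdot\log(n\lambda)$ bound on repetitions is unjustified and the loop need not terminate. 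The intended reading of the paper is that the algorithms underlying \Cref{thm:PathBlockers,thm:GlobalPathBlockers} return a flow satisfying both blocker conditions even when $\lambda$ exceeds the lightest $h$-length path weight---the constraint $\lambda\le d^{(h)}_w(s,t)$ in the definition of a lightest path blocker is what makes the output meaningful, not what makes the procedure terminate---and if one wishes to close the gap rigorously, the route is to verify or restate that fact, not to bolt on extra blocker rounds whose progress measure can stall at zero.
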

\begin{proof}

We consider a graph $\wtilde{G}$ obtained by (1) removing all sink edges of $G$ with weight larger than $(1+\delta)\lambda$ and then (2) setting the weight of each remaining sink edge (with original weight at most $(1+\delta)\lambda$ in $G$) to be zero. Note that for each $v\in V(\wtilde{G})\setminus\{s,t\}$, its source capacity is $\wtilde{\Delta}(v) = \Delta(v)$ and its sink capacity is exactly $\wtilde{\nabla}(v)$ defined in the statement of this corollary. 

Now we apply \Cref{thm:PathBlockers} on $G'$ with the same parameters $\delta, h$ and $\lambda$. We will show that the output $f$ is an $h$-length $(1+\delta,2+3\delta)$-lightest path blocker of $G$. First, every path $\wtilde{P}\in\supp(f)$ has weight at most $(1+2\delta)\lambda$ on $\wtilde{G}$. Every edge on $\wtilde{P}$ has the same weight on $G$ and $\wtilde{G}$, except that the only sink edge of $\wtilde{P}$ has weight zero on $\wtilde{G}$ but at most $(1+\delta)\lambda$ on $G$. Hence $\wtilde{P}$ has weight at most $(1+2\delta)\lambda + (1+\delta)\lambda = (2+3\delta)\lambda$ on $G$. For the second property, consider an $h$-length $s$-$t$ path $P$ with weight at most $(1+\delta)\lambda$ on $G$. The path $P$ is totally inside $\wtilde{G}$ because we only remove those sink edges with weight larger than $(1+\delta)\lambda$ and they will not belong to $P$. Also, the weight of $P$ on $\wtilde{G}$ is still at most $(1+\delta)\lambda$, so $P$ is blocked by $f$ as desired.

The running time bound is straightforward by \Cref{thm:PathBlockers}, and the path count $|\path(f)|$ of $f$ is at most the total source capacities $\sum_{v\in V(G)\setminus\{s,t\}}\Delta(v)$ because $f$ is an integral flow.
\end{proof}

\subsection{Local Approximate Length-Constrained Maxflows: Proof of \Cref{thm:LocalLengthConstrainedFlow}}
\label{sect:LCMaxflow}

In this subsection, we will adapt the multiplicative weight update framework in \cite{haeupler2023maximum} and use our local path blockers algorithm to achieve local running time for the length-constrained max flow problem.

\begin{algorithm}[H]
\caption{Local Algorithms for Approximate Length-Constrained Maxflows and Moving Cuts}
\label{algo:LocalFlow}
\begin{algorithmic}[1]
\Require A directed graph $G=(V(G),E(G))$ with length $\ell$, capacities $U$, source vertex $s$, sink vertex $t$ and an parameters $\delta > 0$, $h\geq 1$.
\Ensure An $O(\log n/\delta)$-approximate pair $(f,w)$ and a $(2+O(\delta))$-approximate pair $(f,w_{\min})$.
\State Let $\delta_0 = \frac{\delta}{6}$, let $\zeta = \frac{1+2 \delta_0}{\delta_0} + 1$ and let $\eta = \frac{\delta_0}{(1 + \delta_0) \cdot \zeta} \cdot \frac{1}{\log m}$.
\State Initialize $w(e) = 1/m^{\zeta}$ for all edges $e$ in $G$.%
\State Initialize $\lambda = 1/m^{\zeta}$.
\State Initialize $w_{\min} = w/\lambda$.
\State Initialize $f$ to be zero flow.
\While{$\lambda < 1$}
\For{$\Theta(h\log_{1+\delta_{0}}(n)/\delta_{0})$ iterations}
\State Compute an $h$-length $(1+\delta_{0}, 2+3\delta_{0})$-lightest path blocker $\hat{f}$ in $G$ by applying \Cref{Coro:LocalPathBlockers}, with length $\ell$, weight $w$, capacity $U$, and parameters $h,\lambda$ as input.
\State Update $f\gets f + \eta\cdot\hat{f}$.
\State For every $e\in E(G)$ where $\hat{f}(e) >0$, $w(e)\gets (1+\delta_{0})^{\hat{f}(e)/U(e)}\cdot w(e)$.
\State If $|w|/\lambda<|w_{\min}|$, $w_{\min}\gets w/\lambda$.
\EndFor
\State $\lambda \gets (1+\delta_{0})\cdot\lambda$
\EndWhile
\State Return $(f,w)$ and $(f,w_{\min})$.
\end{algorithmic}
\end{algorithm}

\paragraph{The Algorithm.} The detailed algorithm is shown in \Cref{algo:LocalFlow}. Roughly speaking, the algorithm just iteratively compute a path blocker $\hat{f}$ on $G$ by applying \Cref{Coro:LocalPathBlockers}. The input functions are the original edge length function $\ell$, the edge weight function $w$ (i.e. the current moving cut) and the original capacity function $U$ as input functions. The input parameters are $h,\lambda$. After computing $\hat{f}$, we then update the flow solution $f$ additively and update the moving cut solution $w$ multiplicatively.

In what follows, we will discuss some implementation details to achieve local running time.
\begin{itemize}
\item The flow solution $f$ is stored in two representations. The first one is the \textit{flow path representation}, which just stores each flow path $P\in\path(f)$ and its flow value $f(P)$ explicitly. The second one is the \textit{edge function representation}, which stores all edges $e$ s.t. $f(e)>0$ and its flow value $f(e)$ in binary search tree. Note that the space to store $f$ in this way is obviously $O(h\cdot |\path(f)|)$, because each flow path has at most $h$ edges (recall that each flow path is $h$-length and each edge has length at least 1). By the same reason, each additive update to $f$ (line 8) just takes time $\wtilde{O}(h\cdot\path(\hat{f}))$.

\item The moving cut solution $w$ will be stored in a similar \textit{edge function representation}. we store all edges $e$ and its weight $w(e)$ s.t. $w(e)$ is large than the default value $1/m^{\zeta}$ into a binary search tree. Actually, edges $e$ with $w(e)>1/m^{\zeta}$ are exactly those edges $e$ with $f(e)>0$ by the simple invariant $w(e) = (1+\delta_{0})^{f(e)/(\eta\cdot U(e))}$, so the space to store $w$ is still $O(h\cdot|\path(f)|)$. The multiplicative update of $w$ (line 9) can be done in $\wtilde{O}(h\cdot|\path(\hat{f})|)$ time\footnote{Actually, a cleverer implementation is to use to the same binary search tree as the edge function representations of both $f$ and $w$.} with access to the edge function representation of $\hat{f}$.

\end{itemize}

\paragraph{Proof of Correctness.} Intuitively, $w$ is the cut solution at the end, while $w_{\min}$ is the best cut solution over the course of the algorithm. By adapting the proof in \cite{haeupler2023maximum}, we can show that $(f,w_{\min})$ is indeed a $(2+O(\delta))$-approximate solution. We omit the proof here because our local cutmatch algorithm does not use this solution. The approximation is $2+O(\delta)$ instead of $1+\delta$ in \cite{haeupler2023maximum} because our local path blockers (i.e. \Cref{Coro:LocalPathBlockers}) have quality worse than the path blockers in \cite{haeupler2023maximum}.

In the remaining proof, we will formally prove that $(f,w)$ is an $O(\log n/\delta)$-approximate solution. Our proof basically follows the presentation in \cite{haeupler2023maximum}. In particular, \Cref{claim:LocalFlowLambdaBounds} and \Cref{claim:LocalFlowFeasible} are almost identical to Lemmas 12.1 and 12.2 in \cite{haeupler2023maximum}, but \Cref{claim:LocalFlowApprox} is new.

We refer to the outer loop of \Cref{algo:LocalFlow} as one \emph{phase}, and refer to the inner loop as one \emph{iteration}. We will say $\lambda$ is unchanged in one iteration, and $\lambda$ is updated right before the end of one phase. Strictly speaking, for each phase, the end of this phase and the beginning of the next phase represent the same moment. However, for one iteration, the end of this iteration and the beginning of the next iteration may represent different moments, and $\lambda$ may be updated between them. 

\begin{claim}
At the beginning and the end of \Cref{algo:LocalFlow}, and at the end of each iteration, we have $\lambda\leq d^{(h)}_{w}(s,t)\leq (2+3\delta_{0})(1+\delta_{0})\lambda$.
\label{claim:LocalFlowLambdaBounds}
\end{claim}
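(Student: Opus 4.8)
This is a standard invariant-maintenance claim for a multiplicative-weight-update (MWU) flow algorithm, so the proof will be by induction over the iterations and phases of \Cref{algo:LocalFlow}, tracking how $\lambda$ and $d^{(h)}_{w}(s,t)$ evolve. The lower bound $\lambda \le d^{(h)}_{w}(s,t)$ must be an invariant that is preserved (a) at initialization, (b) after each weight update inside an iteration, and (c) after each multiplicative update $\lambda \gets (1+\delta_0)\lambda$ at the end of a phase. The upper bound $d^{(h)}_w(s,t) \le (2+3\delta_0)(1+\delta_0)\lambda$ is the ``exit condition'' side: it should hold because we only increase $\lambda$ when the current lightest $h$-length path is already long (heavy) enough relative to $\lambda$, and the blocker guarantee from \Cref{Coro:LocalPathBlockers} controls exactly this.

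\textbf{First steps.} I would begin with the base case. Initially $w(e) = 1/m^\zeta$ on all edges and $\lambda = 1/m^\zeta$. Any $h$-length $s$-$t$ path has at least one edge (in fact the sink edge, which exists since a finite-value flow is sought), so $d^{(h)}_w(s,t) \ge 1/m^\zeta = \lambda$, and since path weights are sums of at least one term of size $1/m^\zeta$, the lower bound holds; for the upper bound one checks $d^{(h)}_w(s,t)$ is small enough at the start — here one uses that weights only ever increase and $\lambda$ is chosen so the inequality is slack at initialization (a single-edge path of weight $1/m^\zeta$ witnesses $d^{(h)}_w(s,t) = 1/m^\zeta$, which is trivially $\le (2+3\delta_0)(1+\delta_0)\lambda$). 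Next, for the inductive step \emph{within a phase}: weights only increase when we run a blocker, so $d^{(h)}_w(s,t)$ is monotone nondecreasing within a phase, which immediately preserves the lower bound $\lambda \le d^{(h)}_w(s,t)$ since $\lambda$ is unchanged. The real content is the upper bound: I would argue that after the $\Theta(h\log_{1+\delta_0}(n)/\delta_0)$ iterations of a phase, every $h$-length $s$-$t$ path has weight $> (1+\delta_0)\lambda$ — this is the ``$\lambda$ is a blocking certificate'' claim. The blocker $\hat f$ returned by \Cref{Coro:LocalPathBlockers} blocks every $h$-length path of weight $\le (1+\delta_0)\lambda$, meaning every such path contains a saturated edge; saturating an edge means routing $U(e)$ units through it, which multiplies $w(e)$ by $(1+\delta_0)$. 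A potential/counting argument (each short light path must absorb a weight increase, and after enough iterations no light path can survive because its weight would have grown past $(1+\delta_0)\lambda$ — the number of iterations is chosen precisely so the weight of any persistently-light path is forced up by a factor more than $(1+\delta_0)$ relative to the $m^{-\zeta}$-scale, using that there are at most $m$ edges and lengths are $\ge 1$) shows $d^{(h)}_w(s,t) > (1+\delta_0)\lambda$ at the end of the phase. Then the update $\lambda \gets (1+\delta_0)\lambda$ restores $\lambda < d^{(h)}_w(s,t)$, i.e. the lower bound (now with the new $\lambda$), and I'd check the new $\lambda$ still satisfies the upper bound: we had $d^{(h)}_w(s,t) \le (2+3\delta_0)\cdot(\text{old }\lambda)$ just before the last blocker computation... actually more carefully, the sharpest point is that just before the \emph{first} iteration of a phase the upper bound held, and a single blocker call raises $d^{(h)}_w(s,t)$ by at most a $(1+\delta_0)$ factor because it raises each edge weight by at most $(1+\delta_0)$ on each saturated edge and a path's weight is a sum — wait, one blocker can saturate many edges on one path. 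The correct bound: by property (1) of the lightest path blocker, every flow path of $\hat f$ has weight $\le (2+3\delta_0)\lambda$ \emph{at the time $\hat f$ was computed}; since $\hat f$ is an integral $h$-length flow, the total weight increase on any single $h$-length path during one iteration is at most the weight of... I would instead track $d^{(h)}_w$ directly: at the moment just before the \emph{last} iteration that $d^{(h)}_w(s,t) \le (1+\delta_0)\lambda$ still held, the blocker computed then had all its flow paths of weight $\le (2+3\delta_0)\lambda$; after that iteration, the lightest $h$-length path's weight is at most (old value, $\le (1+\delta_0)\lambda$) times $(1+\delta_0)$ (since each of its edges' weights grew by at most $(1+\delta_0)^{\hat f(e)/U(e)} \le 1+\delta_0$, as $\hat f(e) \le U(e)$), giving $\le (1+\delta_0)^2\lambda \le (2+3\delta_0)(1+\delta_0)\lambda$; after that point it never drops, and when the phase ends $\lambda$ scales up by $(1+\delta_0)$, which only helps the upper bound. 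Finally, at termination ($\lambda \ge 1$, loop exits), the inequality is inherited from the end of the last completed phase, since neither $\lambda$ nor $w$ changes after.

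\textbf{The main obstacle.} The delicate part is pinning down the upper-bound half with exactly the claimed constant $(2+3\delta_0)(1+\delta_0)$, not something sloppier. One must be careful about \emph{when} within the sequence of iterations the quantity $d^{(h)}_w(s,t)$ first exceeds $(1+\delta_0)\lambda$, and argue that the ``overshoot'' in that single iteration is controlled: the bound $\hat f(e) \le U(e)$ (each edge saturated at most to capacity in one blocker, since $\hat f$ is integral and feasible w.r.t. $U$) is what keeps the per-edge multiplier at most $(1+\delta_0)$, and property (1) of the blocker ($2+3\delta_0$-lightness of flow paths) is the reserve that prevents the lightest-path weight from jumping too far. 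Matching these two facts to land precisely on the stated constants — rather than, say, $(2+3\delta_0)^2$ — is the crux; it parallels Lemma 12.1 of \cite{haeupler2023maximum}, and I would follow that argument's structure closely, substituting our $(1+\delta_0, 2+3\delta_0)$-blocker in place of their sharper $(1+\delta, 1+2\delta)$-blocker and propagating the weaker constant through.
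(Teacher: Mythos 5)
Your lower-bound argument tracks the paper's closely (monotonicity of $d^{(h)}_{w}(s,t)$ within a phase, plus a per-phase contradiction/averaging argument forcing some edge weight past $m^2$ if a light path survives the whole phase), and this part is essentially fine modulo the details you waved at. But the upper-bound half has a genuine gap, centered on the threshold you chose. You identify the \emph{last} iteration at whose start $d^{(h)}_{w}(s,t)\le (1+\delta_{0})\lambda$, bound the value at the end of that iteration by $(1+\delta_{0})^{2}\lambda$, and then assert the upper bound persists ``after that point.'' That does not follow. The blocker returned by \Cref{Coro:LocalPathBlockers} is only \emph{forced} to be the zero flow when there is no $h$-length $s$-$t$ path of weight at most $\gamma_{2}\lambda=(2+3\delta_{0})\lambda$, because property (1) constrains the support of $\hat f$ to $(2+3\delta_{0})\lambda$-light paths. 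It is \emph{not} forced to be zero merely because $d^{(h)}_{w}(s,t)>(1+\delta_{0})\lambda$: property (2) becomes vacuous, but the blocker may still route flow on paths of weight up to $(2+3\delta_{0})\lambda$, and those flow paths may share edges with the lightest $h$-length path. So in any iteration with $(1+\delta_{0})\lambda < d^{(h)}_{w}(s,t)\le(2+3\delta_{0})\lambda$ the value $d^{(h)}_{w}(s,t)$ can keep rising, and your argument leaves that growth unbounded.

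The correct threshold is $(2+3\delta_{0})\lambda$, and the paper's proof is a clean invariant maintenance across iterations using exactly that dichotomy: at the start of each iteration, either $d^{(h)}_{w}(s,t)>(2+3\delta_{0})\lambda$, in which case $\hat f$ must be the zero flow (no path is eligible to carry flow), weights are unchanged, and the bound at the end of the \emph{previous} iteration is simply inherited; or $d^{(h)}_{w}(s,t)\le(2+3\delta_{0})\lambda$, in which case the per-edge multiplier $(1+\delta_{0})^{\hat f(e)/U(e)}\le 1+\delta_{0}$ (feasibility $\hat f(e)\le U(e)$) gives $d^{(h)}_{w}(s,t)\le(2+3\delta_{0})(1+\delta_{0})\lambda$ at the end. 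Both cases land on the stated constant directly, without ever needing to locate the ``last light'' iteration, which is the step that made your version fragile.
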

\begin{proof}

The statement trivially holds at the very beginning of the algorithm.

First, we prove $\lambda\leq d^{(h)}_{w}(s,t)$. Note that $d^{(h)}_{w}(s,t)$ will only increase, so it suffices to show that at the beginning of each phase, $\lambda\leq d^{(h)}_{w}(s,t)$ holds. We will show this by induction. Consider a phase. At the beginning and the end of this phase, we let $\lambda_{1} = \lambda$ and $\lambda_{2}=(1+\delta_{0})\lambda$ denote the value of $\lambda$, and similarly, we let $d_{1}$ and $d_{2}$ denote the value of $d^{(h)}_{w}(s,t)$. To make the induction work, it suffices to show $\lambda_{2}\leq d_{2}$ given $\lambda_{1}\leq d_{1}$. Assume the opposite for contradiction. From $\lambda_{2}>d_{2}$, we know there exists a $h$-length $s$-$t$ path $P$ with weight at most $\lambda_{2}$ at the end of this phase, which means at the beginning of each of the $\Theta(h\log n/\delta_{0}^{2})$ iterations in this phase, $P$ has weight at most $\lambda_{2} = (1+\delta_{0})\lambda$. Then, in each of these iterations, because by the definition of path blockers, at least one edge $e\in P$ will be saturated by $\hat{f}$, which means the weight of $e$ increases by a factor $(1+\delta_{0})$. Because there are at most $h$ edges on $P$ (recall that edge lengths are positive integers), in this phase, there is an edge $e\in P$ has its weight increasing by a factor of $(1+\delta_{0})^{\Theta(h\log n/\delta_{0}^{2})/h}$ by an averaging argument. However, at the beginning of this phase, the weight of $e$ is at least $1/m^{\zeta}$ (since this is the initial weight), so at the end of this phase, $w(e)\geq (1/m^{1/\zeta})\cdot (1+\delta_{0})^{\Theta(h\log n/\delta_{0}^{2})/h}>m^{2}$ by setting the constant hidden in $\Theta$ to be large enough, and it contradicts that $P$ has weight at most $\lambda_{2}=(1+\delta_{0})\lambda\leq 1+\delta_{0}$ (the outter loop ensures $\lambda\leq 1$).

Next, we proof prove $d^{(h)}_{w}(s,t)\leq (2+3\delta_{0})(1+\delta_{0})\lambda$ at the end of each iteration. Again we use induction. Consider one iteration. Assume that $d^{(h)}_{w}(s,t)\leq (2+3\delta_{0})(1+\delta_{0})\lambda$ holds at the end of the previous iteration. Then this inequality holds at the beginning of this iteration, because $\lambda$ may keep unchanged or increase. If $d^{(h)}_{w}(s,t)>(2+3\delta_{0})\lambda$ at the beginning of this iteration, then actually we do nothing in this iteration because the path blocker $\hat{f}$ will be a zero flow, so $d^{(h)}_{w}(s,t)\leq (2+3\delta_{0})(1+\delta_{0})\lambda$ holds at the end of this iteration trivially. Otherwise, we have $d^{(h)}_{w}(s,t)\leq (2+3\delta_{0})\lambda$ at the beginning of this iteration. By our update rule of weights and the fact that $\hat{f}$ is a feasible flow, the weight of each edge will only increase by at most a $1+\delta_{0}$ factor in this iteration. Therefore, we also have $d^{(h)}_{w}(s,t)\leq (2+3\delta_{0})(1+\delta_{0})\lambda$ at the end of this iteration.

\end{proof}

\begin{claim}
The pair $(f,w)$ returned by \Cref{algo:LocalFlow} is feasible.
\label{claim:LocalFlowFeasible}
\end{claim}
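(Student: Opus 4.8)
The plan is to check the two defining requirements of feasibility separately: that $f$ is a feasible $h$-length $s$-$t$ flow, and that $w$ is a feasible $h$-length fractional moving cut. The moving-cut side is immediate from \Cref{claim:LocalFlowLambdaBounds}: the while-loop of \Cref{algo:LocalFlow} terminates only after $\lambda$ has been multiplied up to a value $\geq 1$, so at the end of the algorithm $1 \leq \lambda \leq d^{(h)}_{w}(s,t)$, which says precisely that every $h$-length $s$-$t$ path $P$ has $w(P) \geq 1$. For the flow side, observe that each path blocker $\hat f$ returned by \Cref{Coro:LocalPathBlockers} in line~8 is, by definition, a feasible integral $h$-length $s$-$t$ flow (in particular $\hat f(e) \leq U(e)$ for all $e$, and all flow paths lie in $\mathcal{P}_{h}(s,t)$). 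Hence $f = \sum_{\text{iterations}} \eta\,\hat f$ is a finite nonnegative combination of $h$-length $s$-$t$ flows, so it is again an $h$-length $s$-$t$ flow — flow conservation and the $h$-length property of its flow paths are preserved under nonnegative combination. The only nontrivial point left is the capacity constraint $f(e) \leq U(e)$ for every $e \in E(G)$.

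To prove $f(e) \leq U(e)$ I would follow the multiplicative-weights analysis (this mirrors Lemma~12.2 of \cite{haeupler2023maximum}). First, maintain throughout the algorithm the invariant $w(e) = (1+\delta_0)^{\,f(e)/(\eta\, U(e))}/m^{\zeta}$ for every edge $e$. This holds at initialization ($f(e)=0$, $w(e)=1/m^{\zeta}$) and is preserved by line~9: whenever $f(e)$ is increased by $\eta\,\hat f(e)$, the weight $w(e)$ is multiplied by exactly $(1+\delta_0)^{\hat f(e)/U(e)}$, and $f,w$ are modified nowhere else (lines~10 and~13 touch only $w_{\min}$ and $\lambda$). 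Next, bound the terminal value of $w(e)$: if $e$ never receives flow then $f(e)=0\leq U(e)$; otherwise let $j$ be the last iteration in which $\hat f(e)>0$. In iteration $j$ the edge $e$ lies on some $P\in\path(\hat f)$, and since $\hat f$ is a $(1+\delta_0,\,2+3\delta_0)$-lightest path blocker, $P$ has weight at most $(2+3\delta_0)\lambda$ with respect to the weights $w$ just before line~9; as weights are nonnegative, $w(e)\leq(2+3\delta_0)\lambda$ at that moment. Line~9 multiplies $w(e)$ by at most $1+\delta_0$ (using $\hat f(e)\leq U(e)$), no later iteration changes $w(e)$, and $\lambda<1$ holds whenever we are inside the loop; hence the final value satisfies $w(e)\leq(2+3\delta_0)(1+\delta_0)$.

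Combining the invariant with this bound gives $(1+\delta_0)^{\,f(e)/(\eta U(e))} = w(e)\,m^{\zeta} \leq (2+3\delta_0)(1+\delta_0)\,m^{\zeta}$, so $f(e)/(\eta U(e)) \leq \zeta\log_{1+\delta_0} m + \log_{1+\delta_0}\!\bigl((2+3\delta_0)(1+\delta_0)\bigr)$, and it remains to verify that this is at most $1/\eta = (1+\delta_0)\zeta\log m/\delta_0$. This is exactly where the choices $\zeta = (1+2\delta_0)/\delta_0 + 1$ and $\eta = \tfrac{\delta_0}{(1+\delta_0)\zeta}\cdot\tfrac{1}{\log m}$ are designed to work: using $\ln(1+\delta_0) \geq \delta_0 - \delta_0^{2}/2$ one gets $(1+\delta_0)\ln(1+\delta_0)/\delta_0 \geq 1 + \tfrac{\delta_0}{2}(1-\delta_0)$, so $\zeta\log_{1+\delta_0} m$ falls short of $1/\eta$ by a $\Theta(\delta_0\cdot(1-\delta_0)\log m)$ margin, which absorbs the $O(1/\delta_0)$ lower-order term $\log_{1+\delta_0}((2+3\delta_0)(1+\delta_0))$ (with the harmless standing assumption that $m$ is at least a suitable constant, and $\delta$ bounded, which can be assumed without loss of generality). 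I expect this last piece of constant-chasing to be the only real obstacle; everything else is routine bookkeeping. Once $f(e)\leq U(e)$ is established for all $e$, the pair $(f,w)$ is feasible and the claim follows.
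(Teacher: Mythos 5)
Your proof is correct and follows essentially the same approach as the paper: the moving-cut side is dispatched via \Cref{claim:LocalFlowLambdaBounds}, and the flow side is handled by maintaining the invariant $w(e) = (1+\delta_0)^{f(e)/(\eta U(e))}/m^{\zeta}$, bounding the terminal weight $w(e) \leq (2+3\delta_0)(1+\delta_0)$ via the lightest-path-blocker guarantee, and solving for $f(e)$. The only difference is that you work through the final constant verification somewhat more explicitly than the paper, which simply asserts the chain $\hat F(e)\leq U(e)\cdot 10\log m/\delta_0^2 \leq U(e)/\eta$; the structure is otherwise identical.
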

\begin{proof}
The moving cut $w$ is feasible because we have $\lambda\geq 1$ and $\lambda \leq d^{(h)}_{w}(s,t)$ at the end of the algorithm (by \Cref{claim:LocalFlowLambdaBounds}). To see that the flow $f$ is feasible, we let $\hat{F} = \sum \hat{f} = f/\eta$ be the summation of path blockers without scaling. By our update rule, at the end, an edge $e$ will have weight $w(e) = (1+\delta_{0})^{\hat{F}/U(e)}/m^{\zeta}$. Furthermore, we know that $w(e)\leq (1+\delta_{0})(2+3\delta_{0})$ at the end because right before the last update of $w(e)$, $w(e)$ must be at most $(2+3\delta_{0})\lambda$ (otherwise, the path blocker $\hat{f}$ cannot assign non-zero flow on $e$) and $w(e)$ will increase by at most a $1+\delta_{0}$ factor in this update. Therefore, from the inequality 
\[
(1+\delta_{0})^{\hat{F}/U(e)}/m^{\zeta}\leq (1+\delta_{0})(2+3\delta_{0}),
\]
we have $\hat{F}(e)\leq U(e)\cdot(\log(1+\delta_{0})(2+3\delta_{0})+\log m/\delta_{0})/\log\zeta\leq U(e)\cdot(10\log m/\delta_{0}^{2}) \leq U(e)/\eta$. Lastly, we have $f(e) = \eta\cdot\hat{F}(e) \leq U(e)$ as desired.

\end{proof}

\begin{claim}
The pair $(f,w)$ returned by \Cref{algo:LocalFlow} satisfies $|w|\leq O(\log n/\delta)\cdot \val(f)$.
\label{claim:LocalFlowApprox}
\end{claim}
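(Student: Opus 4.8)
The plan is to run the standard multiplicative-weights potential argument, using the cut size $|w|=\sum_{e}U(e)\,w(e)$ as a potential and charging its growth against the growth of $\val(f)$. Recall that throughout \Cref{algo:LocalFlow} the flow and the cut stay coupled by $w(e)=(1+\delta_0)^{\hat F(e)/U(e)}/m^{\zeta}$, where $\hat F:=\sum\hat f=f/\eta$ is the unscaled accumulated flow; so it suffices to control how $|w|$ changes each time a new path blocker $\hat f$ is incorporated.

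For the per-iteration step, fix one iteration and let $w^{\old},w^{\new}$ be the weights right before and right after the update $w(e)\gets(1+\delta_0)^{\hat f(e)/U(e)}w(e)$. Since $\hat f$ is a feasible $h$-length flow, $\hat f(e)/U(e)\in[0,1]$, and by convexity $(1+\delta_0)^{x}\le 1+\delta_0 x$ on $[0,1]$; hence $U(e)\,w^{\new}(e)\le U(e)\,w^{\old}(e)+\delta_0\,\hat f(e)\,w^{\old}(e)$ for every $e$. Summing over all edges and using $\sum_e\hat f(e)\,w^{\old}(e)=\sum_{P\in\path(\hat f)}\hat f(P)\,w^{\old}(P)$ (where $w^{\old}(P)$ denotes the $w^{\old}$-length of $P$), we get
\[
|w^{\new}|\;\le\;|w^{\old}|+\delta_0\sum_{P\in\path(\hat f)}\hat f(P)\,w^{\old}(P).
\]
Now invoke property (1) of the $(1+\delta_0,2+3\delta_0)$-lightest path blocker from \Cref{Coro:LocalPathBlockers}: every flow path $P$ of $\hat f$ has $w^{\old}(P)\le(2+3\delta_0)\lambda$, and since every blocker is computed inside the \textbf{while}-loop we have $\lambda<1$ at that moment, so $w^{\old}(P)\le 2+3\delta_0=O(1)$. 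Combining this with $\sum_{P}\hat f(P)=\val(\hat f)$ and with the flow update $f\gets f+\eta\hat f$,
\[
|w^{\new}|-|w^{\old}|\;\le\;\delta_0(2+3\delta_0)\,\val(\hat f)\;=\;\frac{\delta_0(2+3\delta_0)}{\eta}\bigl(\val(f^{\new})-\val(f^{\old})\bigr).
\]
Iterations in which $\hat f$ is the zero flow contribute nothing, so this holds for every iteration.

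Telescoping over all iterations of \Cref{algo:LocalFlow} yields $|w|\le |w_0|+\frac{\delta_0(2+3\delta_0)}{\eta}\val(f)$, where $|w_0|=\sum_e U(e)/m^{\zeta}$ is the initial potential. Plugging in $\eta=\frac{\delta_0}{(1+\delta_0)\zeta\log m}$ gives $\frac{\delta_0(2+3\delta_0)}{\eta}=(1+\delta_0)(2+3\delta_0)\zeta\log m$, and since $\delta_0=\delta/6\le 1/6$ and $\zeta=\frac{1+2\delta_0}{\delta_0}+1=O(1/\delta)$, this coefficient is $O(\log m/\delta)=O(\log n/\delta)$ (as $m=n^{O(1)}$). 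It then remains to absorb the additive term $|w_0|$: because $m$ and the capacities are polynomially bounded while $m^{\zeta}$ dominates that polynomial, $|w_0|=o(1)$; and in the only nontrivial case (an $h$-length $s$-$t$ path exists, so the algorithm terminates with $\lambda\ge 1$ and the final $w$ is feasible by \Cref{claim:LocalFlowFeasible}), \Cref{claim:LocalFlowLambdaBounds} forces some edge on a shortest $h$-length path to carry weight $\ge 1/h$, which back through the coupling formula forces $\hat F$, hence $\val(f)=\eta\,\val(\hat F)$, to be $\Omega(1)$. Thus $|w_0|\le\val(f)$ (say), and $|w|\le O(\log n/\delta)\,\val(f)$ as claimed; if no such path exists both sides may be taken to be $0$.

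The argument is essentially routine once assembled; the one place that genuinely needs care — and the main obstacle — is legitimizing the bound $w^{\old}(P)=O(1)$ at the moment each blocker is computed. This rests on two facts that must be matched against the loop structure of \Cref{algo:LocalFlow}: first, $\lambda$ is rescaled only at the very end of a \textbf{while}-iteration, so during all of the \textbf{for}-loop passes that compute blockers one has $\lambda<1$; and second, property (1) of \Cref{Coro:LocalPathBlockers} bounds flow-path weights by $(2+3\delta_0)\lambda$ with respect to precisely the weight function used to compute that blocker. The only other subtlety is the bookkeeping for the initial potential $|w_0|$ and the degenerate no-path instance, handled as above.
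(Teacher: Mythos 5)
Your proof is correct and follows essentially the same potential argument as the paper's: bound the additive growth of $|w|$ at each stage by the blocker property $w(P)\le(2+3\delta_0)\lambda$ with $\lambda<1$, then trade $\delta_0$ against $\eta$. The only structural difference is that the paper processes the flow paths of each $\hat f$ one at a time (which forces it to track an intermediate weight $w_P$ and absorb an extra $(1+\delta_0)$ factor because earlier paths in the same iteration may already have bumped some edge weights), whereas you apply $(1+\delta_0)^{x}\le 1+\delta_0x$ once per iteration using the aggregate $\hat f(e)\le U(e)$. Your version is marginally cleaner and avoids that slack factor; both yield the same $O(\log n/\delta)$. You are also more explicit than the paper about the initial potential $|w_0|=\sum_e U(e)/m^\zeta$, which the paper's proof silently drops; your handling (noting $|w_0|$ is tiny because $\zeta=\Theta(1/\delta)$, and $\val(f)\ge\eta$ whenever the output is nontrivial) is a reasonable way to close that gap, though it implicitly relies on the capacities being polynomially bounded, which the section never quite states but is true in every application in the paper.
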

\begin{proof}
Again we use $\hat{F} = \sum \hat{f}$ to denote the accumulated path blockers without scaling. Because $f = \eta\cdot\hat{F}$ and $\eta = \Theta(\delta^{2}/\log m)$, it suffices to show that $|w|\leq O(\delta)\val(\hat{F})$.

We consider iterations one by one according to the time order. Fix an iteration, we then consider the flow paths in $\path(\hat{f})$ one by one in some certain order. Our update rule on weights is equivalent to the following procedure: consider the flow paths in $\hat{F}$ in the above order, and then for each path $P$, update the weights using $P$ (i.e. for each $e\in P$, multiply $w(e)$ by $(1+\delta_{0})^{\hat{f}(P)/U(e)}$).

Now we will show that, after updating the weights using some flow path $P$, the value of $|w|$ increases by at most $O(\delta \hat{f}(P))$ additively. To see this, note that at the beginning of this iteration, we have $w(P)\leq (2+3\delta_{0})\lambda$ guaranteed by the path blocker. Let $w_{P}$ denote the weight function right before the moment we update weights using $P$. We have $\sum_{e\in P}w_{P}(e)\leq (2+3\delta_{0})(1+\delta_{0})\lambda$ because each edge weight can increase by at most a $(1+\delta_{0})$ factor in this iteration. Therefore, updating the weights using $P$ will increase $|w|$ by at most
\begin{align*}
&~~~~\sum_{e\in P} (1+\delta_{0})^{\hat{f}(P)/U(e)}\cdot w_{P}(e)\cdot U(e) - \sum_{e\in P} w_{P}(e)\cdot U(e)\\
&\leq \sum_{e\in P} (1+\delta_{0}\hat{f}(P)/U(e))\cdot w_{P}(e)\cdot U(e) - \sum_{e\in P} w_{P}(e)\cdot U(e)\\
&= \sum_{e\in P} (\delta_{0}\hat{f}(P)/U(e))\cdot w_{P}(e)\cdot U(e)\\
&\leq \delta_{0}(2+3\delta_{0})(1+\delta_{0})\lambda\cdot \hat{f}(P)\\
&\leq O(\delta_{0}\hat{f}(P)).
\end{align*}

Therefore, the final $|w|$ is at most $\sum_{P\in\path(\hat{F})} O(\delta_{0})\hat{f}(P) = O(\delta)\val(\hat{F})$ as desired.
\end{proof}

\paragraph{Time Analysis.} By \Cref{Coro:LocalPathBlockers}, the running time for computing the blocker in one iteration is
\[
t_{\blocker} = \poly(h,1/\delta,\log n)\cdot\sum_{v\in V(G)\setminus\{s,t\}}(\Delta(v) + \max\{\deg_{G}^{+}(v)-\wtilde{\nabla}(v),0\}).
\]
Recall that for each $v\in V\setminus\{s,t\}$, $\wtilde{\nabla}(v) = \nabla(v)$ if currently $w(v,t)\leq (1+\delta_{0})\lambda$ (or there is no such edge $e=(v,t)$), and $\wtilde{\nabla}(v) = 0$ if $w(v,t)>(1+\delta_{0})\cdot\lambda$. Thus, for each $v$ with $w(v,t)> (1+\delta_{0}\lambda)$, we have 
\[
\max\{\deg_{G}^{+}(v)-\wtilde{\nabla}(v)\} = \deg_{G}^{+}(v)\leq \max\{\deg_{G}^{+}(v)-\nabla(v),0\} + \nabla(v).
\]
Therefore, we can bound $t_{\blocker}$ by
\[
t_{\blocker}\leq \poly(h,1/\delta,\log n)\cdot\left(\sum_{v\in V(G)\setminus\{s,t\}}(\Delta(v) + \max\{\deg_{G}^{+}(v)-\nabla(v),0\})+\sum_{v\text{ s.t. }w(v,t)>(1+\delta_{0})\lambda}\nabla(v)\right).
\]

\begin{observation}
For each edge $e$, if at some moment $w(e)>(1+\delta_{0})\lambda$ holds, then finally we have $f(e)\geq \eta\cdot U(e)$.
\label{ob:InLocalFlow}
\end{observation}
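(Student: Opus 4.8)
The plan is to exploit the explicit multiplicative relation that \Cref{algo:LocalFlow} maintains between the weight function $w$ and the accumulated blocker flow, together with the fact that both $w(e)$ and $\lambda$ only ever increase during the execution.

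First I would record the invariant. Write $\hat{F} = \sum\hat{f} = f/\eta$ for the accumulation of all blocker flows $\hat{f}$ without the scaling factor $\eta$. The two updates of an iteration, namely $f\gets f+\eta\hat{f}$ and $w(e)\gets (1+\delta_{0})^{\hat{f}(e)/U(e)}\cdot w(e)$, preserve the equality
\[
w(e) \;=\; \frac{(1+\delta_{0})^{\hat{F}(e)/U(e)}}{m^{\zeta}} \;=\; \frac{(1+\delta_{0})^{f(e)/(\eta\, U(e))}}{m^{\zeta}},
\]
which holds at initialization (both sides are $1/m^{\zeta}$ while $f$ is the zero flow) and therefore at every moment of the algorithm. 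I would then note two monotonicity facts: $w(e)$ is non-decreasing in time, since $\hat{f}(e)\ge 0$ makes every multiplier $(1+\delta_{0})^{\hat{f}(e)/U(e)}$ at least $1$; and $\lambda$ is non-decreasing with initial value $1/m^{\zeta}$, since it is only updated by $\lambda\gets(1+\delta_{0})\lambda$ starting from $1/m^{\zeta}$. (Here $U(e)\ge 1$ is a positive integer, so all the divisions are harmless.)

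Now suppose that at some moment of the execution $w(e) > (1+\delta_{0})\lambda$, where $\lambda$ denotes its value at that moment. By the second monotonicity fact $\lambda\ge 1/m^{\zeta}$ at that moment, so $w(e) > (1+\delta_{0})/m^{\zeta}$ there, and by the first monotonicity fact this lower bound on $w(e)$ still holds at termination. Plugging this into the invariant at the end of the algorithm gives $(1+\delta_{0})^{f(e)/(\eta\, U(e))}/m^{\zeta} > (1+\delta_{0})/m^{\zeta}$, i.e.\ $(1+\delta_{0})^{f(e)/(\eta\, U(e))} > 1+\delta_{0}$; since $1+\delta_{0}>1$ this forces $f(e)/(\eta\, U(e)) > 1$, hence $f(e) > \eta\, U(e)$ and in particular $f(e)\ge \eta\, U(e)$, which is the claim.

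I do not expect a genuine obstacle: the statement is essentially the monotone counterpart of the bound $w(e)\le(1+\delta_{0})(2+3\delta_{0})$ used in the proof of \Cref{claim:LocalFlowFeasible}, and the only points requiring care are stating the invariant with the correct $1/m^{\zeta}$ normalization and the $f=\eta\hat{F}$ identification, and keeping in mind that the hypothesis concerns an intermediate moment while the conclusion concerns the final flow — the gap being bridged exactly by the monotonicity of $w(e)$ and $\lambda$.
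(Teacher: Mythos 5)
Your proof is correct and follows essentially the same route as the paper's: establish the invariant $w(e) = (1+\delta_{0})^{f(e)/(\eta\,U(e))}/m^{\zeta}$, use $\lambda\geq 1/m^{\zeta}$ to get $w(e)>(1+\delta_{0})/m^{\zeta}$, and solve for $f(e)$. The only difference is cosmetic: you transport the bound on $w(e)$ to the final moment via monotonicity of $w$, whereas the paper implicitly transports the resulting lower bound on $f(e)$ via monotonicity of $f$; both bridge the same "at some moment" versus "finally" gap.
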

\begin{proof}
We have $w(e)>(1+\delta_{0})\lambda\geq (1+\delta_{0})/m^{\zeta}$ because $\lambda \geq 1/m^{\zeta}$. Combining the invariant $w(e) = (1+\delta_{0})^{f(e)/(\eta\cdot U(e))}/m^{\zeta}$, we immediately get $f(e)\geq \eta\cdot U(e)$.
\end{proof}

Next, we bound the total sink capacities of vertices $v$ s.t. $w(v,t)>(1+\delta_{0})\lambda$ by
\begin{align*}
\sum_{v\text{ s.t. }w(v,t)>(1+\delta_{0})\lambda}\nabla(v) &\leq \sum_{e=(v,t)\text{ s.t. }w(e)>(1+\delta_{0})\lambda} f(e)/\eta\\
&\leq \val(f)/\eta\\
&\leq \sum_{v\in V(G)\setminus\{s,t\}} \Delta(v)/\eta\\
&\leq \poly(1/\delta,\log n)\cdot \sum_{v\in V(G)\setminus\{s,t\}}\Delta(v),
\end{align*}
where the first inequality is by \Cref{ob:InLocalFlow}, the second inequality is because each flow path of $f$ will go through exactly one sink edge, the third inequality is because the value of $f$ is bounded by the total source capacities, and the last inequality is by $\eta = \Theta(\delta^{2}/\log m)$.

Lastly, we have
\[
t_{\blocker}\leq \poly(h,1/\delta,\log n)\cdot\sum_{v\in V(G)\setminus\{s,t\}}(\Delta(v) + \max\{\deg_{G}^{+}(v)-\nabla(v),0\}),
\]
and the final running time bound follows that the total number of iteration is at most $\poly(h,1/\delta,\log n)$.

\subsection{Local Cutmatch: Proof of \Cref{thm:cutmatch}}
\label{sect:CutMatch}

Providing the local length-constrained maxflow subroutine in \Cref{thm:LocalLengthConstrainedFlow}, we are ready to prove \Cref{thm:cutmatch}. In this proof, When applying \Cref{thm:LocalLengthConstrainedFlow} on some graph $H$, we will always set $\delta = 0.01$. Our proof will exploit that the flow-cut pair $(f,w)$ returned by \Cref{thm:LocalLengthConstrainedFlow} is $O(\log n)$-approximate with additional properties stated in \Cref{lemma:AdditionalProperties}.

\begin{lemma}
For $\delta = 0.01$, $\zeta = 1/(6\delta)+3$ and $\eta = O(\delta^{2}/\log n)$. The output $(f,w)$ given by \Cref{thm:LocalLengthConstrainedFlow} satisfies the following.
\begin{enumerate}
\item $f/\eta$ is an integral flow.
\item For each edge $e\in E(H)$ with $w(e)\geq 1/m$, there is $f(e)\geq \Omega_{\delta}(U(e))$.
\item For each edge $e\in E(H)$ with $w(e)\geq 1/m^{3}$, there is $f(e)>0$.
\end{enumerate}
\label{lemma:AdditionalProperties}
\end{lemma}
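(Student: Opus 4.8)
The plan is to derive all three claims from the single multiplicative-weights \emph{invariant} maintained by \Cref{algo:LocalFlow}: at every moment of the execution,
\[
w(e)=\frac{(1+\delta_{0})^{\,f(e)/(\eta\cdot U(e))}}{m^{\zeta}}\qquad\text{for every }e\in E(G),
\]
where $\delta_{0}=\delta/6$. So the first step is to prove this invariant by induction on the number of completed iterations. It holds at initialization, where $f$ is the zero flow and $w(e)=1/m^{\zeta}$. For the inductive step, one iteration updates $f(e)\gets f(e)+\eta\,\hat f(e)$ and $w(e)\gets(1+\delta_{0})^{\hat f(e)/U(e)}\cdot w(e)$; substituting the old value of $w(e)$ shows the new $w(e)$ equals $(1+\delta_{0})^{(f(e)+\eta\hat f(e))/(\eta U(e))}/m^{\zeta}$, which is exactly the invariant for the updated $f$. (This is essentially the invariant already asserted in part~1 of \Cref{thm:LocalLengthConstrainedFlow}.) Property~(1) then follows immediately: $f$ is only ever changed by $f\gets f+\eta\hat f$, and each $\hat f$ is the integral $h$-length $(1+\delta_{0},2+3\delta_{0})$-lightest path blocker produced by \Cref{Coro:LocalPathBlockers}; hence at termination $f=\eta\cdot\hat F$ with $\hat F=\sum\hat f$ a sum of integral flows, so $f/\eta=\hat F$ is integral.

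For property~(3), suppose $w(e)\ge 1/m^{3}$. If we had $f(e)=0$, the invariant would give $w(e)=1/m^{\zeta}$, contradicting $w(e)\ge 1/m^{3}$ since $\zeta=1/(6\delta)+3>3$ and $m\ge 2$; hence $f(e)>0$. For property~(2), suppose $w(e)\ge 1/m$. The invariant yields $(1+\delta_{0})^{f(e)/(\eta U(e))}=w(e)\,m^{\zeta}\ge m^{\zeta-1}$, so $f(e)\ge \eta\,U(e)\,(\zeta-1)\,\frac{\log m}{\log(1+\delta_{0})}$. Plugging in $\eta=\frac{\delta_{0}}{(1+\delta_{0})\,\zeta\,\log m}$ and using $\log(1+\delta_{0})=\Theta(\delta_{0})$ together with $\zeta\ge 3$ gives
\[
f(e)\ \ge\ \Omega\!\left(\frac{\zeta-1}{\zeta}\right)\cdot U(e)\ =\ \Omega(U(e)),
\]
which is in particular $\Omega_{\delta}(U(e))$, proving~(2).

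There is no genuine obstacle here; the lemma is a direct unfolding of the exponential-weights invariant. The only points needing (minor) care are that the invariant is preserved \emph{exactly} rather than approximately across the interaction of the additive flow update with the multiplicative weight update, and that $\zeta>3$ and $\zeta=\Theta(1/\delta)$, so that $1/m^{\zeta}$ is polynomially smaller than $1/m^{3}$ and $\eta=\Theta(\delta^{2}/\log m)=\Theta(\delta^{2}/\log n)$ (using the standing convention $\log m=\Theta(\log n)$); all of the remaining inequalities then reduce to elementary estimates on $\log(1+\delta_{0})$.
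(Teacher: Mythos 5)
Your proof is correct and follows essentially the same approach as the paper: establish the exponential invariant $w(e)=(1+\delta_0)^{f(e)/(\eta U(e))}/m^{\zeta}$ by induction on iterations, then read off all three properties from it. In fact your quantitative derivation of property~(2) is a bit more careful than the paper's, which informally drops from $m^{\zeta-1}$ to $m$ before taking logs; you keep the $\zeta-1$ factor explicit and cancel it against the $\zeta$ hidden in $\eta$.
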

\begin{proof}
Property (1) has been stated in \Cref{thm:LocalLengthConstrainedFlow}. Properties (2) and (3) rely on the guarantee that, for each edge $e\in E(G)$, $w(e) = (1+\delta/6)^{f(e)/(\eta\cdot U(e))}/m^{\zeta}$. Regarding property (2), for each edge $e$ with $w(e)>1/m$, we have
\begin{align*}
(1+\delta/6)^{f(e)/(\eta\cdot U(e))}/m^{\zeta} &\geq 1/m\\
(1+\delta/6)^{f(e)/(\eta\cdot U(e))} &\geq m\\
\frac{f(e)}{\eta\cdot U(e)}&\geq\frac{\log m}{\log (1+\delta/6)}\\
f(e)&\geq \Omega(U(e)).
\end{align*}
Here the third inequality is because $\zeta> 3$. For property (3), the invariant and $\zeta > 3$ immediately imply each edge $e$ with $w(e)\geq 1/m^{3}> 1/m^{\zeta}$ has $f(e)>0$.
\end{proof}

\paragraph{The Algorithm.} We now show the local algorithm for computing cutmatch in \Cref{thm:cutmatch}. Recall that $M_{\CM}$ is a matching between virtual nodes in $A^{\src}$ and $A^{\sink}$.

We start with some notations. First, we define a directed auxiliary graph $H(A^{\src},A^{\sink})$, given source and sink node-weightings $A^{\src}$ and $A^{\sink}$.
\begin{enumerate}
\item Start from the unit-capacitated directed version of $G$ (i.e. substitute each undirected edge with two opposite directed edges with capacity $1$ and the same length).
\item Add a source vertex $s$ and a sink vertex $t$.
\item For each vertex $v^{\src}\in \supp(A^{\src})$, add a \textit{source edge} $(s,v^{\src})$ with length $\ell(s,v^{\src}) = 1$\footnote{We set the lengths of source and sink edges to be $1$ because the lengths should be positive integers.} and capacity $U(s,v^{\src}) = A^{\src}(v^{\src})$. For each vertex $v^{\sink}\in\supp(A^{\sink})$, add a \textit{sink edge} $(v^{\sink},t)$ with length $\ell(v^{\sink},t) = 1$ and capacity $U(v^{\sink},t) = A^{\sink}(v^{\sink})$. 
\end{enumerate}

We define some notations on \textit{matchings}. For a matching $M$ between two sets $V^{\src}$ and $V^{\sink}$ ($V^{\src}$ and $V^{\sink}$ may intersect), each edge $e\in E(M)$ will connect a \textit{source endpoint} in $V^{\src}$ and \textit{sink endpoint} in $V^{\sink}$, and its weight is $M(e)$ if $M$ is weighted. For each element $v\in V^{\src}$, we let $M^{\src}(v)$ denote the total weights of matching edges with $v$ as their source endpoint. Similarly, for each $v\in V^{\sink}$, $M^{\sink}(v)$ denote the total weights of edges with $v$ as their sink endpoint. Furthermore, we use $M^{\src}$ (resp. $M^{\sink}$) refer to the set of elements in $V^{\src}$ (resp. $V^{\sink}$) matched by $M$.

The cutmatch algorithm is iterative. We initialize the matching $M_{\CM}$ to be empty. In the $i$-th iteration, we first define the remaining source and sink node-weightings $A^{\src}_{i}$ and $A^{\sink}_{i}$. Concretely, for each vertex $v\in \supp(A^{\src})$, $A^{\src}_{i}(v) = A^{\src}(v) - M^{\src}_{\CM}(v)$\footnote{Strictly speaking, we slightly abuse the notations here because $M_{\CM}$ is between virtual nodes but $v$ is a vertex. Here $M^{\src}_{\CM}(v)$ is naturally the sum of $M^{\src}_{\CM}(v_{\virtual})$ over all $v$'s virtual nodes $v_{\virtual}$ in $A^{\src}$.}. Similarly, for each vertex $v\in \supp(A^{\sink})$, $A^{\sink}_{i}(v) = A^{\sink}(v) - M^{\sink}_{\CM}(v)$. We run the local flow algorithm in \Cref{thm:LocalLengthConstrainedFlow} on the graph $H_{i} =H(A^{\src}_{i},A^{\sink}_{i})$ with $\delta = 0.01$ and length parameter $h = h_{\CM}+2$, and let the output be $(f_{i},w_{i})$. Let $\phi = \phi_{\CM}/\log^{2} n$. We now consider two cases.

\paragraph{The first case $\val(f_{i})\geq \phi |A^{\src}_{i}|$.} In this case, we will compute a partial unweighted matching $M_{i}$ between virtual nodes of $A^{\src}_{i}$ and $A^{\sink}_{i}$. 

We first construct a fractional weighted matching $M'_{i}$ between vertices in $\supp(A^{\src}_{i})$ and $\supp(A^{\sink}_{i})$. By our construction of $H_{i}$, each flow path $P_{H}\in \path(f_{i})$ on $H_{i}$ corresponds to a path $P$ on $G$ by removing the source and sink vertices $s$ and $t$ and the source and sink edges on $P$. Note that $P$ now connects two vertices $v^{\src}$ and $v^{\sink}$ on $G$. We add a matching edge with $v^{\src}$ and $v^{\sink}$ as source and sink endpoints. The weight of this edge is $f_{i}(P_{H})$. Furthermore, we \textit{simplify} $M'_{i}$ to a simple graph. That is, for each pair of vertices $v^{\src}\in \supp(A^{\src}_{i})$ and $v^{\sink}\in\supp(A^{\sink}_{i})$, if currently there are multiple matching edges connecting $v^{\src}$ and $v^{\sink}$ in $M'_{i}$, we merge them into a single matching edge with weight the total weight of these edges. 

From the construction of $M'_{i}$, for each vertex $v^{\src}\in \supp(A^{\src}_{i})$, $M_{i}'^{,\src}(v^{\src})$ is at most the capacity of the source edge $(s,v^{\src})$, i.e. $A^{\src}_{i}(v^{\src})$. Because $A^{\src}_{i}(v^{\src})$ is integral, we further have $\lceil M_{i}'^{,\src}(v^{\src})\rceil\leq A^{\src}_{i}(v^{\src})$. Similarly, each vertex $v^{\sink}\in\supp(A^{\sink}_{i})$ has $\lceil M_{i}'^{,\sink}(v^{\sink})\rceil\leq A^{\sink}_{i}(v^{\sink})$. The size of $M'_{i}$ is bounded by $|M'_{i}|\leq \val(f_{i})\leq \phi|A^{\src}_{i}|$.

The next step is to round $M'_{i}$ into an integral weighted matching $M''_{i}$ between vertices in $\supp(A^{\src}_{i})$ and $\supp(A^{\sink}_{i})$ by applying \Cref{lemma:MatchingRounding}. 

\begin{lemma}[\cite{KP15}]
Given a fractional matching $M$ between two sets $V^{\src}$ and $V^{\sink}$, there is an algorithm that rounds it to an integral matching $M_{\iint}$ s.t. 
\begin{itemize}
\item $E(M_{\iint})\subseteq E(M)$, $|M_{\iint}|=\lceil |M| \rceil$;
\item each edge $e\in E(M_{\iint})$ has integral weight $M_{\iint}(e)$ s.t. $\lfloor M(e)\rfloor\leq M_{\iint}(e)\leq \lceil M(e) \rceil$
\item each $v\in V^{\src}$ has $\lfloor M^{\src}(v)\rfloor\leq M_{\iint}^{\src}(v)\leq \lceil M^{\src}(v) \rceil$ and each $v\in V^{\sink}$ has $\lfloor M^{\sink}(v)\rfloor\leq M_{\iint}^{\sink}(v)\leq \lceil M^{\sink}(v) \rceil$. 
\end{itemize}
The running time is $\tilde{O}(|E(M)|)$.
\label{lemma:MatchingRounding}
\end{lemma}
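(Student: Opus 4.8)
The plan is to obtain this as the standard \emph{cycle‑cancelling} rounding for circulations, which conveniently handles the edge bounds, both one‑sided vertex bounds, and the exact total all at once.

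First I would recast $M$ as a fractional circulation. Split every vertex $v\in V^{\src}\cup V^{\sink}$ into a source copy $v^{\src}$ and a sink copy $v^{\sink}$, and for each $e\in E(M)$ with source endpoint $a$ and sink endpoint $b$ put a directed edge $a^{\src}\to b^{\sink}$ carrying $M(e)$. Add a super‑source $\hat s$ with an edge $\hat s\to v^{\src}$ carrying $M^{\src}(v)$, a super‑sink $\hat t$ with an edge $v^{\sink}\to\hat t$ carrying $M^{\sink}(v)$, and a return edge $\hat t\to\hat s$ carrying $|M|$. This is a circulation $g$ obeying conservation at every node, with the return edge carrying exactly $|M|$. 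Give each edge $e$ the target interval $I_e=[\lfloor g(e)\rfloor,\lceil g(e)\rceil]$ (a unit interval unless $g(e)\in\mathbb Z$). Once we have an \emph{integral} circulation $g'$ with $g'(e)\in I_e$ for all edges and with return flow $\lceil|M|\rceil$, we are done: restricting $g'$ to the real edges gives $M_{\iint}$, with $E(M_{\iint})\subseteq E(M)$ (no new edges) and $\lfloor M(e)\rfloor\le M_{\iint}(e)\le\lceil M(e)\rceil$ (the box); conservation at $v^{\src}$ forces $M^{\src}_{\iint}(v)=g'(\hat s\to v^{\src})\in[\lfloor M^{\src}(v)\rfloor,\lceil M^{\src}(v)\rceil]$, and symmetrically for the sink side; and $|M_{\iint}|$ is the return flow $=\lceil|M|\rceil$.

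The rounding uses two moves. \textbf{Cycle cancelling.} While $g$ has a non‑integral edge, conservation at each node forces that node to have at least two incident non‑integral edges, so the support of the non‑integral edges contains an (undirected) cycle $Z$. Orienting $Z$ and pushing $\pm\theta$ around it keeps $g$ a circulation, and for $\theta$ chosen maximal subject to every edge staying inside its interval, at least one more edge reaches an endpoint of $I_e$, hence becomes integral. An edge that becomes integral is never touched again (subsequent cycles lie in the then‑current non‑integral support), so there are at most $|E(M)|$ cancellations, after which $g$ is an integral circulation inside all intervals. \textbf{Fixing the total.} The return flow of this integral $g$ is $\lfloor|M|\rfloor$ or $\lceil|M|\rceil$; in the former (non‑integral $|M|$) case, $\sum_v\lceil M^{\src}(v)\rceil-\lfloor|M|\rfloor\ge 1$, so some source copy and some sink copy still have slack to their ceilings, and an augmenting path (forward along real edges below their ceiling, backward along real edges above their floor) lets us flip by one unit and add one to the return edge while staying inside every interval; the absence of such a path would give a cut contradicting the slack bound. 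This lifts the return flow to $\lceil|M|\rceil$.

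For the running time, the ``fixing the total'' step and all the bookkeeping are clearly $\tilde O(|E(M)|)$; the delicate point is running cycle cancelling in near‑linear total time rather than re‑scanning for a cycle each step, which I would do by maintaining a dynamic spanning forest of the non‑integral support (link–cut trees), inserting the non‑integral edges one by one and cancelling immediately along the tree path whenever an inserted edge closes a cycle, each cancellation costing $\tilde O(1)$ amortized. I expect the main obstacle in a full write‑up to be exactly this — implementing cycle cancelling in $\tilde O(|E(M)|)$ while keeping the three families of bounds and the exact total $\lceil|M|\rceil$ all under control — and the circulation reformulation, which makes conservation \emph{guarantee} a cycle and turns the total into just another edge, is the idea that makes everything go through simultaneously. (For mere existence of $M_{\iint}$ one could instead invoke total unimodularity of the circulation polytope with integral bounds, but that does not meet the near‑linear time requirement.)
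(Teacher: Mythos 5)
The paper does not give a proof of this lemma at all—it simply cites \cite{KP15} (Kang and Payor, ``Flow Rounding'')—so there is no paper proof to compare against. Your reconstruction is correct and is in fact the argument underlying the cited reference: recast the fractional matching as a circulation with a super-source, super-sink, and return edge carrying $|M|$; round it by cancelling cycles in the non-integral support using link-cut trees; and top up the return flow to $\lceil|M|\rceil$ with a single augmentation. The reduction to a circulation is the right move precisely because it turns the per-edge bounds, the per-vertex degree bounds (via conservation at the split copies $v^{\src}$, $v^{\sink}$), and the total $|M|$ into a uniform family of box constraints, so cycle cancelling handles them all simultaneously.

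One step you state a bit too loosely is the existence of the augmenting $\hat s$--$\hat t$ path when the post-cancelling return flow is $\lfloor|M|\rfloor$. The slack inequality $\sum_v\lceil M^{\src}(v)\rceil - \lfloor|M|\rfloor\ge 1$ by itself only says some source edge has residual capacity; it does not by itself give a path to $\hat t$. The clean justification is the min-cut argument with the original \emph{fractional} circulation $g_0$ as a witness: if no residual $\hat s$--$\hat t$ path exists, let $S$ be the nodes reachable from $\hat s$ in the residual graph. Every edge leaving $S$ is at its upper bound $\lceil g_0\rceil$ in the integral circulation $g'$, every edge entering $S$ is at its lower bound $\lfloor g_0\rfloor$, and conservation forces the return flow $g'(\hat t\to\hat s)$ to equal the net outflow across this cut. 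Since $g_0$ obeys the same box constraints and is also a circulation, its net outflow across the cut is no larger, giving $|M|\le\lfloor|M|\rfloor$, a contradiction when $|M|\notin\mathbb Z$. You gesture at this (``a cut contradicting the slack bound''), but it is worth making the witness explicit, since the slack bound alone is not the contradiction. With that spelled out, the proof is complete, and the running-time accounting with link-cut trees (each edge inserted once and cut once, $\tilde O(1)$ amortized per operation, plus one $O(|E(M)|)$ augmentation at the end) is exactly the bound claimed.
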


Finally, we \textit{refine} $M''_{i}$ to be our unweighted matching $M_{i}$ between virtual nodes in $A^{\src}_{i}$ and $A^{\sink}_{i}$. That is, we first transform $M''_{i}$ into an unweighted matching by splitting each edge $e$ into $M''_{i}(e)$ many unweighted copies. Then for each vertex $v\in \supp(A^{\src}_{i})$, we assign each edge with $v$ as its source endpoint to a distinct virtual node in $A^{\src}_{i}(v)$. Note that because  $M_{i}''^{,\src}(v)\leq \lceil M_{i}'^{,\src}(v)\rceil\leq A^{\src}_{i}(v)$, there are enough virtual nodes in $A^{\src}_{i}(v)$ to receive matching edges of $v$. We do the similar things for each $v\in \supp(A^{\sink}_{i})$.

We add $M_{i}$ to $M_{\CM}$ and proceed to the next iteration. The property of $M_{i}$ summarized in \Cref{lemma:PropertyOfMi}

\begin{lemma}
The matching $M_{i}$ can be embedded into $G$ with congestion $O(1/\eta)$ and length $h_{\CM}$. Furthermore, $|M_{i}|\geq \lceil\phi\cdot|A^{\src}_{i}|\rceil$.
\label{lemma:PropertyOfMi}
\end{lemma}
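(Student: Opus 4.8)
The plan is to trace the matching $M_i$ back to the flow $f_i$ and to use the feasibility and near-integrality of $f_i$ (guaranteed by \Cref{thm:LocalLengthConstrainedFlow}) to read off both the size lower bound and an explicit embedding.

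For the size bound, I would first observe that every flow path $P_H\in\path(f_i)$ on $H_i$ runs from $s$ to $t$ and hence contributes, after deleting $s,t$ and the incident source/sink edges, a matching edge of $M'_i$ carrying weight $f_i(P_H)$; thus the total weight of $M'_i$, which is unchanged by the simplification step, equals $\val(f_i)$. Since \Cref{lemma:MatchingRounding} guarantees $|M''_i|=\lceil|M'_i|\rceil$ and the refinement to virtual nodes preserves total weight, $|M_i|=\lceil\val(f_i)\rceil$. In the case under consideration $\val(f_i)\ge\phi\,|A^{\src}_i|$, and since $|M_i|$ is an integer, $|M_i|=\lceil\val(f_i)\rceil\ge\lceil\phi\,|A^{\src}_i|\rceil$.

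For the embedding, since $f_i/\eta$ is an integral flow on $H_i$ and each edge of $G$ (in the unit-capacitated directed form used inside $H_i$) has $f_i(e)\le 1$, I would apply standard flow decomposition to $f_i/\eta$, discard cycles, and delete the source/sink portion to obtain a collection $\mathcal{Q}$ of $\val(f_i)/\eta$ simple paths in $G$, each from some $v^{\src}\in\supp(A^{\src}_i)$ to some $v^{\sink}\in\supp(A^{\sink}_i)$, with each directed edge of $G$ on at most $f_i(e)/\eta\le 1/\eta$ of them, and each of length at most $h-2=h_{\CM}$ (the source and sink edges have length $1$ and $h=h_{\CM}+2$). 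For a fixed pair $(v^{\src},v^{\sink})$ the sub-collection $\mathcal{Q}_{v^{\src},v^{\sink}}$ has size exactly $M'_i(v^{\src},v^{\sink})/\eta$, which I would check is at least $M''_i(v^{\src},v^{\sink})\le\lceil M'_i(v^{\src},v^{\sink})\rceil$: this holds since $\eta\le 1/2$ when $M'_i(v^{\src},v^{\sink})\ge 1$, and since $M'_i(v^{\src},v^{\sink})\ge\eta$ (there is at least one flow path) when $0<M'_i(v^{\src},v^{\sink})<1$. Hence each matched pair on the edge $(v^{\src},v^{\sink})$ can be assigned a distinct path of $\mathcal{Q}_{v^{\src},v^{\sink}}$; refining $M''_i$ to $M_i$ only relabels virtual nodes and does not change these paths. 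As distinct edges of $M''_i$ draw from disjoint pools and within an edge the chosen paths are distinct, the embedding uses each path of $\mathcal{Q}$ at most once, so its congestion is bounded by the maximum number of $\mathcal{Q}$-paths through an edge, i.e.\ $O(1/\eta)$ (a factor $2$ for the two orientations of an undirected edge), and its length is at most $h_{\CM}$.

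The main obstacle is the per-pair rounding mismatch: $M''_i(v^{\src},v^{\sink})$ can exceed the actual $v^{\src}$--$v^{\sink}$ flow $M'_i(v^{\src},v^{\sink})$ by up to $1$, so the embedding cannot be literally read off a sub-flow of $f_i$; the resolution is precisely the observation above that $f_i/\eta$ decomposes into strictly more than $M''_i(v^{\src},v^{\sink})$ unit paths between $v^{\src}$ and $v^{\sink}$ because each such unit carries only weight $\eta$, which is where near-integrality and the smallness of $\eta$ enter. A secondary routine point, which I would note in passing, is that even though the per-edge rounding is wasteful, the per-vertex guarantees $M''^{,\src}_i(v)\le A^{\src}_i(v)$ and $M''^{,\sink}_i(v)\le A^{\sink}_i(v)$ (from \Cref{lemma:MatchingRounding} together with the integrality of $A^{\src}_i,A^{\sink}_i$) ensure the refinement step has enough virtual nodes to receive the matching edges.
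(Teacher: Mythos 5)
Your proposal is correct and follows essentially the same route as the paper: both establish $|M_i|=\lceil|M'_i|\rceil=\lceil\val(f_i)\rceil\ge\lceil\phi|A^{\src}_i|\rceil$, and both construct the embedding by decomposing the integral flow $\hat f_i=f_i/\eta$ into unit paths, deleting source/sink edges, and observing the resulting path collection has congestion $O(1/\eta)$ and length $h_{\CM}$. The only micro-step where you diverge is in showing there are enough unit paths per pair, i.e.\ $M''_i(e)\le M'_i(e)/\eta$: you do casework on whether $M'_i(e)\ge 1$ (using $\eta\le 1/2$) or $M'_i(e)\in(0,1)$ (using $M'_i(e)\ge\eta$), whereas the paper notes directly that $M'_i(e)$ is a positive integer multiple of $\eta$, from which $\lceil M'_i(e)\rceil\le M'_i(e)/\eta$ is immediate; both yield the same conclusion.
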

\begin{proof}
We can bound the size of $M_{i}$ by $|M_{i}| = |M''_{i}| = \lceil M'_{i}\rceil\geq \lceil\phi\cdot|A^{\src}_{i}|\rceil$.

For the embedding, we consider the flow $\hat{f}_{i} = (1/\eta)\cdot f_{i}$, where $\eta = \Theta(1/\log m)$ is the parameter from \Cref{lemma:AdditionalProperties}. By (1) in \Cref{lemma:AdditionalProperties}, $\hat{f}_{i}$ is an integral flow. Let $f_{i}^{G}$ (resp. $\hat{f}_{i}^{G}$) be the flow on $G$ by removing source edges and sink edges on flow paths of $f_{i}$ (resp. $\hat{f}_{i}$). 

Note that $f_{i}^{G}$ and $\hat{f}_{i}^{G}$ have congestion $2$ and $2/\eta$ respectively on $G$ because each undirected unweighted edge on $G$ corresponds to two directed edges with capacity $1$ on $H_{i}$ (from step 1 in the construction of $H(A^{\src}_{i}, A^{\sink}_{i})$). Also, $f_{i}^{G}$ and $\hat{f}_{i}^{G}$ are $h_{\CM}$-length on $G$ because $f_{i}$ and $\hat{f}_{i}$ are $(h_{\CM}+2)$-length on $H_{i}$ and the source and sink edges removed have length $1$.

Now we construct an embedding of $M_{i}$ into $G$. Because $M_{i}$ is a refinement of $M''_{i}$, it is equivalent to consider $M''_{i}$. For each matching edges $e=(v^{\src},v^{\sink})\in M''_{i}$, we have $M''_{i}(e)\leq \lceil M'_{i}(e)\rceil$. Furthermore, because $M'_{i}(e) = \sum_{P\in \path(f_{i}^{G})\text{ connecting $v^{\src}$ and $v^{\sink}$}} f_{i}^{G}(P)$ and all $f^{G}_{i}(P)$ are multiples of $\eta$, we have $M''_{i}(e)\leq \lceil 
M'_{i}(e) \rceil\leq M'_{i}(e)/\eta\leq \sum_{P\in \path(\hat{f}_{i}^{G})\text{ connecting $v^{\src}$ and $v^{\sink}$}} \hat{f}_{i}^{G}(P)$. Hence we can obtain an embedding $\Pi_{M_{i}\to G}$ of $M_{i}$ into $G$ by decomposing each flow path $P$ of $\hat{f}^{G}_{i}$ into $\hat{f}^{G}_{i}(P)$ many unweighted copies. The congestion and length of $\Pi_{M_{i}\to G}$ are at most the congestion and length of $\hat{f}_{i}^{G}$ respectively.

\end{proof}

\paragraph{The second case $\val(f_{i})<\phi|A^{\src}_{i}|$.}

This will be the last iteration of the algorithm. We will compute the last unweighted matching $M_{i}$ between virtual nodes in $A^{\src}_{i}$ and $A^{\sink}_{i}$, and finalize $M_{\CM}$ by adding $M_{i}$. Also, we will compute the integral moving cut $C_{\CM}$ and the partitions $(A^{\src}_{M}, A^{\src}_{U})$ of $A^{\src}$ and $(A^{\sink}_{M}, A^{\sink}_{U})$ of $A^{\sink}$.

A natural idea to let $A^{\src}_{M}$ and $A^{\sink}_{M}$ be the virtual nodes matched by $M_{\CM}$, and then compute $C_{\CM}$ to $h_{\CM}$-separate $A^{\src}_{i}$ and $A^{\sink}_{i}$ by rounding the fractional moving cut $h_{\CM}\cdot w_{i}$. Roughly speaking, each $(h_{\CM}+2)$-length path on $H_{i}$ (which corresponding to an $h_{\CM}$-length path on $G$ between $A^{\src}_{i}$ and $A^{\sink}_{i}$) will have total $w_{i}$-weight at least $1$, so multiplying $w_{i}$ by $h_{\CM}$ will bring $h_{\CM}$ length increasing to this path, which gives the desired separation. However, a big issue of this idea is that, $w_{i}$ may assign large cut value to the source and sink edges of $H_{i}$ but these edges do not exist in $G$ (so we call them \textit{dummy edges}). This means some $h_{\CM}$-length path on $G$ between $A^{\src}_{i}$ and $A^{\sink}_{i}$ may not get enough length increase. This is the reason why we compute the last matching $M_{i}$. By further matching $A^{\src}_{i}$ and $A^{\sink}_{i}$, we can somehow exclude all $(h_{\CM}+2)$-length paths with large cut value on dummy edges, where we heavily exploit the additional property (2) of \Cref{thm:LocalLengthConstrainedFlow}.

Concretely, we consider another auxiliary graph $\bar{H}_{i} = H(x\cdot A^{\src}_{i}, A^{\sink})$ for some large constant $x$. Let $\bar{U}_{i}$ be the capacity function of $\bar{H}_{i}$. We run the local flow algorithm from \Cref{thm:LocalLengthConstrainedFlow} on $\bar{H}_{i}$ with $\delta = 0.01$ and length parameter $h = h_{\CM} +2$ to get $(\bar{f}_{i},\bar{w}_{i})$. Recall that property (2) in \Cref{lemma:AdditionalProperties} shows that each edge with non-negligible $\bar{w}_{i}$-weight ($\bar{w}(e) \geq 1/m$) will be almost saturated $(\bar{f}(e)\geq \Omega(\bar{U}_{i}(e)))$.

To construct $M_{i}$, we let $V^{\src}_{\sat}$ collect all vertices $v^{\src}\in\supp(A^{\src}_{i})$ whose source edges have $\bar{w}_{i}(s,v^{\src})\geq 1/m$, and let $V^{\sink}_{\sat}$ collect all vertices $v^{\sink}\in \supp(A^{\sink}_{i})$ s.t. $\bar{w}_{i}(v^{\sink},t)\geq 1/m$. 
Similar to the first case, we transform $f_{i}$ to a fractional matching $M'_{i}$ between vertices in $\supp(A^{\src}_{i})$ and $\supp(A^{\sink}_{i})$. Each pair of vertices $v^{\src}\in \supp(A^{\src}_{i})$ and $v^{\sink}\in \supp(A^{\sink}_{i})$ has a matching edge with weight $\sum_{P\in\path(f_{i})\text{ connecting $v^{\src}$ and $v^{\sink}$}}f_{i}(P)$. Again, by applying \Cref{lemma:MatchingRounding}, we round $M'_{i}$ to an integral matching $M''_{i}$ between vertices in $\supp(A^{\src}_{i})$ and $\supp(A^{\sink}_{i})$. Next, we refine $M''_{i}$ to be a matching between virtual nodes (similar to the refinement operation in the first case), and let $M_{i}$ be a submatching of the refined $M''_{i}$ s.t. each vertex $v^{\src}\in \supp(A^{\src}_{i})$ has exactly $\min\{M''^{,\src}_{i}(v^{\src}),A^{\src}_{i}(v^{\src})\}$ many source virtual nodes matched by $M_{i}$.

\begin{claim}
For each vertex $v^{\src}\in V^{\src}_{\sat}$, we have $M_{i}^{\src}(v^{\src})= A^{\src}_{i}(v^{\src})$.
\label{claim:4.19}
\end{claim}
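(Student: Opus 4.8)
The plan is to track the source-degree $M_{i}^{\src}(v^{\src})$ backwards through the construction of the last matching $M_{i}$ and pin it to $A^{\src}_{i}(v^{\src})$ by combining the ``almost saturation'' guarantee of the local maxflow solution on $\bar{H}_{i}$ (property (2) of \Cref{lemma:AdditionalProperties}) on the source edge of $v^{\src}$ with flow conservation and the rounding guarantee of \Cref{lemma:MatchingRounding}. The one free parameter that makes this work is the constant $x$ in $\bar{H}_{i} = H(x\cdot A^{\src}_{i}, A^{\sink})$: we take $x$ larger than the reciprocal of the (fixed, since $\delta=0.01$) constant hidden in the $\Omega_{\delta}(\cdot)$ of \Cref{lemma:AdditionalProperties}(2).

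Concretely, I would proceed as follows. \textbf{(i)} Since $v^{\src}\in V^{\src}_{\sat}\subseteq\supp(A^{\src}_{i})$, by definition its source edge in $\bar{H}_{i}$ satisfies $\bar{w}_{i}(s,v^{\src})\ge 1/m$; as $(\bar{f}_{i},\bar{w}_{i})$ is the output of \Cref{thm:LocalLengthConstrainedFlow}, property (2) of \Cref{lemma:AdditionalProperties} yields $\bar{f}_{i}(s,v^{\src})\ge \Omega_{\delta}(\bar{U}_{i}(s,v^{\src})) = \Omega_{\delta}(x\cdot A^{\src}_{i}(v^{\src}))$. \textbf{(ii)} Choosing $x$ so that the hidden constant times $x$ is at least $1$, and using that $A^{\src}_{i}(v^{\src})$ is a positive integer, we get $\bar{f}_{i}(s,v^{\src})\ge A^{\src}_{i}(v^{\src})$. \textbf{(iii)} By flow conservation of $\bar{f}_{i}$ at $v^{\src}$, every unit of flow on $(s,v^{\src})$ continues on a simple flow path that exits again through a sink edge; deleting the source/sink edges turns that path into a $G$-path from $v^{\src}$ to some sink vertex, whose value is added to a matching edge of $M'_{i}$ with source endpoint $v^{\src}$, so $M_{i}'^{,\src}(v^{\src}) = \bar{f}_{i}(s,v^{\src}) \ge A^{\src}_{i}(v^{\src})$. \textbf{(iv)} By \Cref{lemma:MatchingRounding}, the rounded matching satisfies $M''^{,\src}_{i}(v^{\src})\ge \lfloor M_{i}'^{,\src}(v^{\src})\rfloor \ge A^{\src}_{i}(v^{\src})$ (the last inequality uses integrality of $A^{\src}_{i}(v^{\src})$). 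Hence $\min\{M''^{,\src}_{i}(v^{\src}), A^{\src}_{i}(v^{\src})\} = A^{\src}_{i}(v^{\src})$, and since $M_{i}$ is by construction the submatching of the virtual-node refinement of $M''_{i}$ in which each source vertex keeps exactly $\min\{M''^{,\src}_{i}(v^{\src}), A^{\src}_{i}(v^{\src})\}$ matched source virtual nodes, we conclude $M_{i}^{\src}(v^{\src}) = A^{\src}_{i}(v^{\src})$.

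I expect the argument to be short, since the heavy lifting is already done in \Cref{lemma:AdditionalProperties} and \Cref{lemma:MatchingRounding}. The points that need care, rather than a genuine obstacle, are: (a) making the dependence of the $\Omega_{\delta}(\cdot)$ constant on $\delta = 0.01$ explicit so that ``$x$ large enough'' is well defined; (b) the routine but easy-to-misstate bookkeeping between a vertex $v^{\src}$ and its virtual nodes (handled by the convention that $M^{\src}(v)$ sums over $v$'s virtual nodes and that $M_{i}$ caps each source vertex's matched virtual-node count); and (c) verifying that the identification in step (iii) of $M_{i}'^{,\src}(v^{\src})$ with $\bar{f}_{i}(s,v^{\src})$ does not lose flow, i.e.\ that essentially all of $v^{\src}$'s outgoing $\bar{f}_{i}$-flow is recorded as matching weight rather than discarded at a sink --- this is precisely why $\bar{H}_{i}$ is built with the inflated source capacities $x\cdot A^{\src}_{i}$ and with the full sink weighting $A^{\sink}$.
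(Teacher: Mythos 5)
Your proposal is correct and follows essentially the same chain as the paper's own proof: use property (2) of Lemma~\ref{lemma:AdditionalProperties} to get $\bar{f}_{i}(s,v^{\src})\ge\Omega(\bar{U}_{i}(s,v^{\src}))=\Omega(x\cdot A^{\src}_{i}(v^{\src}))$, choose $x$ large enough, identify $M_{i}'^{,\src}(v^{\src})=\bar{f}_{i}(s,v^{\src})$ from the construction, apply the floor guarantee of Lemma~\ref{lemma:MatchingRounding}, and finish with the $\min\{\cdot,\cdot\}$ capping in the definition of $M_{i}$. Your extra remarks (making the $\delta$-dependence of the $\Omega(\cdot)$ explicit, flow conservation at $v^{\src}$, and the vertex-to-virtual-node bookkeeping) are just spelled-out versions of steps the paper treats as immediate; there is no substantive difference.
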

\begin{proof}
We first show $M_{i}''^{,\src}(v^{\src})\geq A^{\src}_{i}(v^{\src})$ by
\[
M_{i}''^{,\src}(v^{\src})\geq \lfloor M_{i}'^{,\src}(v^{\src}) \rfloor = \lfloor \bar{f}_{i}(s,v^{\src}) \rfloor \geq \Omega(\bar{U}_{i}(s,v^{\src})) = \Omega(x\cdot A^{\src}_{i}(v^{\src}))\geq A^{\src}_{i}(v^{\src}),
\]
where $M_{i}''^{,\src}(v^{\src})\geq \lfloor M_{i}'^{,\src}(v^{\src}) \rfloor$ is by \Cref{lemma:MatchingRounding}, $\lfloor M_{i}'^{,\src}(v^{\src}) \rfloor = \lfloor \bar{f}_{i}(s,v^{\src}) \rfloor$ is from the construction of $M'_{i}$, $\lfloor \bar{f}_{i}(s,v^{\src}) \rfloor \geq \Omega(\bar{U}_{i}(s,v^{\src}))$ is by $\bar{w}_{i}(s,v^{\src})$ and property (2) in \Cref{lemma:AdditionalProperties}, and $\Omega(x\cdot A^{\src}_{i}(v^{\src}))\geq A^{\src}_{i}(v^{\src})$ is because we choose a sufficiently large constant $x$.

Therefore, by our construction of $M_{i}$, we have $M^{\src}_{i}(v^{\src}) = \min\{M''^{,\src}(v^{\src}),A^{\src}_{i}(v^{\src})\} = A^{\src}_{i}(v^{\src})$.
\end{proof}

\paragraph{Finalizing $M_{\CM}$.} We add $M_{i}$ to $M_{\CM}$, and $M_{\CM}$ is now finalized. $M_{\CM}$ can be embedded into $G$ with congestion $\gamma_{\CM} = O(\log^{2} n/\phi) = O(\log^{4}n/\phi_{\CM})$ and length $h_{\CM}$. The reason is that the number of iterations is $O(\log n/\phi)$ by \Cref{lemma:IterationsOfCutmatch}, and the matching $M_{i}$ in each iteration can be embedded into $G$ with congestion $O(\log n)$ and length $h_{\CM}$ by \Cref{lemma:PropertyOfMi} (it also holds for the $M_{i}$ in the last iteration by the same proof of \Cref{lemma:PropertyOfMi}). Furthermore, this embedding $\Pi_{M_{\CM}\to G}$ can be computed explicitly following the proof of \Cref{lemma:PropertyOfMi}. 

\begin{lemma}
The number of iterations is $O(\log n/\phi)$.
\label{lemma:IterationsOfCutmatch}
\end{lemma}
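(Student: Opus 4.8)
The plan is to track a potential function that measures the "remaining" source node-weighting and show that it shrinks by a constant factor in each iteration that falls into the first case ($\val(f_i)\ge \phi|A^{\src}_i|$). Concretely, let $a_i := |A^{\src}_i|$ be the total remaining source node-weighting at the start of iteration $i$. By \Cref{lemma:PropertyOfMi}, in a first-case iteration the matching $M_i$ added to $M_{\CM}$ has size $|M_i|\ge \lceil \phi\cdot a_i\rceil \ge \phi\cdot a_i$, and every matched source virtual node is removed from $A^{\src}$ when forming $A^{\src}_{i+1}$ (since $A^{\src}_{i+1}(v) = A^{\src}(v) - M^{\src}_{\CM}(v)$ and $M_i$ uses distinct source virtual nodes by the refinement step). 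Hence $a_{i+1} \le a_i - |M_i| \le a_i - \phi\cdot a_i = (1-\phi)\cdot a_i$.

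Next I would iterate this bound: after $j$ consecutive first-case iterations, $a_{j+1} \le (1-\phi)^j\cdot a_1 \le e^{-\phi j}\cdot |A^{\src}|$. Since $a_i$ is a nonnegative integer and each source virtual node contributes at least $1$, as soon as $a_i < 1$ we must have $a_i = 0$, at which point $\val(f_i) = 0 < \phi|A^{\src}_i|$ and we are (at the latest) in the second case. Therefore the number of first-case iterations is at most the smallest $j$ with $e^{-\phi j}\cdot |A^{\src}| < 1$, i.e. $j > \ln|A^{\src}|/\phi$. Since $|A^{\src}|$ is polynomially bounded in $n$, this gives $O(\log n/\phi)$ first-case iterations, and the second case occurs exactly once (it is the terminating iteration), so the total number of iterations is $O(\log n/\phi)$.

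The only subtlety — and the place to be slightly careful — is to confirm that the algorithm genuinely makes monotone progress, namely that matched virtual nodes never re-enter the remaining node-weightings and that the first-case bound $|M_i|\ge\phi a_i$ is exactly the quantity subtracted from the potential. This follows directly from the definitions: $A^{\src}_i$ is defined by subtracting $M^{\src}_{\CM}$ from $A^{\src}$, and $M_{\CM}$ only grows, so $a_i$ is nonincreasing; and the refinement step ensures $M_i$'s source endpoints are distinct virtual nodes of $A^{\src}_i$, so $|M_i|$ virtual nodes are genuinely consumed. I do not anticipate a real obstacle here; the argument is a standard geometric-decay counting argument, and all the ingredients ($|M_i|\ge\lceil\phi a_i\rceil$ from \Cref{lemma:PropertyOfMi}, integrality of the node-weightings, polynomial bound on $|A^{\src}|$) are already in place.
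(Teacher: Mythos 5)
Your proof is correct and uses essentially the same argument as the paper: invoke \Cref{lemma:PropertyOfMi} to get $|M_i|\geq\lceil\phi\cdot|A^{\src}_i|\rceil$, conclude $|A^{\src}_{i+1}|\leq(1-\phi)|A^{\src}_i|$, and bound the iteration count by geometric decay to $O(\log n/\phi)$. The paper states this more tersely via $\log_{1/(1-\phi)}\poly(n)$, while you spell out the exponential estimate and the integrality-based termination, but the underlying reasoning is identical.
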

\begin{proof}
By \Cref{lemma:PropertyOfMi}, each first-case $i$-th iteration will add a matching $M_{i}$ with size $|M_{i}|\geq \lceil\phi\cdot  |A^{\src}_{i}|\rceil$. Also, recall that the $i$-th iteration have 
\[
|A^{\src}_{i}| = |A^{\src}|-\sum_{1\leq i'\leq i-1}|M_{i'}| = |A^{\src}_{i-1}| - |M_{i-1}|\leq \lfloor 
(1-\phi)\cdot|A^{\src}_{i-1}| \rfloor.
\]
Thus the number of iterations is at most $\log_{1/(1-\phi)}\poly(n) = O(\log n/\phi)$.
\end{proof}

\

\noindent\textbf{Partitions of $A^{\src}$ and $A^{\sink}$.} 
Let $M^{\src}_{\CM}$ (resp. $M^{\sink}_{\CM}$) denote the set of all source (resp. sink) virtual nodes matched by $M_{\CM}$.
The set of source virtual nodes $A^{\src}$ is partitioned into 
\[
A^{\src}_{M} = M_{\CM}^{\src}\text{ and }A^{\src}_{U} = A^{\src}\setminus M_{\CM}^{\src}.
\]
Slightly differently, $A^{\sink}$ is partitioned into
\[
A^{\sink}_{M} = M_{\CM}^{\sink}\cup \bigcup_{v^{\sink}\in V^{\sink}_{\sat}} A^{\sink}_{i}(v^{\sink})\text{ and }A^{\sink}_{U} = A^{\sink}\setminus A^{\sink}_{M}.
\]
That is, $A^{\sink}_{M}$ is the union of matched sink virtual node and all remaining sink virtual nodes of vertices in $V^{\sink}_{\sat}$.

To see the correctness, we have $|A^{\src}_{M}|\leq |A^{\sink}_{M}|$ because the matching $M_{\CM}$ match each virtual node in $A^{\src}_{M}$ to a distinct virtual node in $A^{\sink}_{M}$. Next we show $|A^{\sink}_{M}|\leq O(|A^{\src}_{M}|)$. Because $|M^{\sink}_{\CM}| = |M^{\src}_{\CM}| = |A^{\src}_{M}|$, it suffices to argue that $\sum_{v^{\sink}\in V^{\sink}_{\sat}} A^{\sink}_{i}(v^{\sink})\leq O(|A^{\src}_{M}|)$. First, we have
\begin{align*}
\sum_{v^{\sink}\in V^{\sink}_{\sat}} A^{\sink}_{i}(v^{\sink})= \sum_{v^{\sink}\in V^{\sink}_{\sat}} \bar{U}_{i}(v^{\sink},t)\leq \sum_{v^{\sink}\in V^{\sink}_{\sat}} O(\bar{f}_{i}(v^{\sink},t))\leq O(\val(\bar{f}_{i})),
\end{align*}
Note that for each $v^{\sink}\in V^{\sink}_{\sat}$, $\bar{U}_{i}(v^{\sink},t) \leq O(\bar{f}_{i}(v^{\sink},t))$ by $\bar{w}_{i}(v^{\sink},t)>1/m$ and property (2) in \Cref{lemma:AdditionalProperties}. We further have
\begin{align*}
\val(\bar{f}_{i}) &= |M'_{i}| \leq |M''_{i}|\\
&\leq \sum_{\substack{v^{\src}\in \supp(A^{\src}_{i})\text{ 
s.t. }\\M''^{,\src}_{i}(v^{\src})\leq A^{\src}_{i}(v^{\src})}} M''^{,\src}_{i}(v^{\src}) + \sum_{\substack{v^{\src}\in \supp(A^{\src}_{i})\text{ 
s.t. }\\M''^{,\src}_{i}(v^{\src})> A^{\src}_{i}(v^{\src})}} \bar{U}_{i}(s,v^{\src})\\
&\leq \sum_{\substack{v^{\src}\in \supp(A^{\src}_{i})\text{ 
s.t. }\\M''^{,\src}_{i}(v^{\src})\leq A^{\src}_{i}(v^{\src})}} M^{\src}_{i}(v^{\src}) + \sum_{\substack{v^{\src}\in \supp(A^{\src}_{i})\text{ 
s.t. }\\M''^{,\src}_{i}(v^{\src})> A^{\src}_{i}(v^{\src})}} O(M^{\src}_{i}(v^{\src}))\\
& = O(|M_{i}|)\leq O(|A^{\src}_{M}|).
\end{align*}
In the second line, we exploit that $M''^{,\src}_{i}(v^{\src})\leq \bar{U}_{i}(s,v^{\src})$ for each $v^{\src}\in \supp(A^{\src}_{i})$. In the third line, we exploit that, for each $v^{\src}\in \supp(A^{\src}_{i})$ s.t. $M''^{,\src}_{i}(v^{\src})>A^{\src}_{i}(v^{\src})$, $M^{\src}_{i}(v^{\src}) = A^{\src}_{i}(v^{\src})$ by the construction of $M_{i}$, which implies $\bar{U}_{i}(s,v^{\src})\leq O(M^{\src}_{i}(v^{\src}))$ combining $\bar{U}_{i}(s,v^{\src}) = x\cdot A^{\src}_{i}(v^{\src})$.

The observation below is useful when constructing the moving cut $C_{\CM}$.

\begin{observation}
$\supp(A^{\src}_{U})$ is disjoint from $V^{\src}_{\sat}$, and $\supp(A^{\sink}_{U})$ is disjoint from $V^{\sink}_{\sat}$.
\label{ob:NoSatVertices}
\end{observation}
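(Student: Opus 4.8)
The plan is to unwind the definitions of the two partitions produced in the final (second-case) iteration, call it iteration $i$, and show that every vertex of $V^{\src}_{\sat}$ (resp. $V^{\sink}_{\sat}$) has \emph{all} of its virtual nodes placed on the matched side $A^{\src}_M$ (resp. $A^{\sink}_M$). The observation then follows immediately, since the unmatched side is just the complement.

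\textbf{Source side.} Fix $v^{\src}\in V^{\src}_{\sat}$. By construction $A^{\src}_i(v^{\src})$ is exactly the set of virtual nodes of $v^{\src}$ that are still unmatched by $M_{\CM}$ at the start of iteration $i$, of size $A^{\src}(v^{\src}) - M^{\src}_{\CM}(v^{\src})$ at that moment. Now invoke \Cref{claim:4.19}: since $v^{\src}\in V^{\src}_{\sat}$, it gives $M_i^{\src}(v^{\src}) = A^{\src}_i(v^{\src})$, i.e. $M_i$ matches all the still-unmatched virtual nodes of $v^{\src}$ (recall $M_i$ is the submatching in which each $v^{\src}$ receives $\min\{M_i^{\prime\prime,\src}(v^{\src}),A^{\src}_i(v^{\src})\}$ matched source virtual nodes, and the proof of \Cref{claim:4.19} shows $M_i^{\prime\prime,\src}(v^{\src})\ge A^{\src}_i(v^{\src})$). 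Hence after $M_i$ is added to $M_{\CM}$, every virtual node of $v^{\src}$ lies in $M^{\src}_{\CM} = A^{\src}_M$, so none lies in $A^{\src}_U = A^{\src}\setminus M^{\src}_{\CM}$, and therefore $v^{\src}\notin\supp(A^{\src}_U)$.

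\textbf{Sink side.} This is purely definitional and does not even use \Cref{claim:4.19}. Fix $v^{\sink}\in V^{\sink}_{\sat}$ and split its virtual nodes in $A^{\sink}$ into those ever matched by $M_{\CM}$ and those never matched. The former are contained in $M^{\sink}_{\CM}\subseteq A^{\sink}_M$. A never-matched virtual node is in particular unmatched before iteration $i$, hence belongs to $A^{\sink}_i(v^{\sink})$, which lies in $A^{\sink}_M$ by the very definition $A^{\sink}_M = M^{\sink}_{\CM}\cup\bigcup_{v^{\sink}\in V^{\sink}_{\sat}}A^{\sink}_i(v^{\sink})$. Thus all virtual nodes of $v^{\sink}$ are in $A^{\sink}_M$, so $v^{\sink}\notin\supp(A^{\sink}_U)$.

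The only point needing care, and the closest thing to an obstacle, is the bookkeeping convention that $A^{\src}_i(\cdot)$ and $A^{\sink}_i(\cdot)$ denote precisely the virtual nodes that remain unmatched at the start of the final iteration, together with a correct application of \Cref{claim:4.19} on the source side; the sink side is immediate from the definition of $A^{\sink}_M$. I do not foresee any real difficulty beyond that.
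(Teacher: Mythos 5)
Your proof is correct and takes essentially the same route as the paper: the source-side argument combines the fact that $A^{\src}\setminus A^{\src}_i$ was already matched in earlier iterations with Claim~\ref{claim:4.19} (which gives $M_i^{\src}(v^{\src})=A^{\src}_i(v^{\src})$ for $v^{\src}\in V^{\src}_{\sat}$), and the sink-side disjointness is read off directly from the definition of $A^{\sink}_M$.
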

\begin{proof}
We have $\supp(A^{\src}_{U})$ is disjoint from $V^{\src}_{\sat}$ because for each vertex $v^{\src}\in V^{\src}_{\sat}$, virtual nodes in $A^{\src}(v^{\src})\setminus A^{\src}_{i}(v^{\src})$ have been matched in the previous iterations, and virtual nodes in $A^{\src}_{i}(v^{\src})$ are matched by $M_{i}$ by \Cref{claim:4.19}. Also $\supp(A^{\sink}_{U})$ is disjoint from $V^{\sink}_{\sat}$ just by definition. 
\end{proof}

\

\noindent\textbf{Construction of $C_{\CM}$.} For each edge $e\in E(G)$ corresponding to two directed edges $e_{1},e_{2}\in H_{i}$, if $\max\{\bar{w}_{i}(e_{1}),\bar{w}_{i}(e_{2})\}\geq 1/m^{3}$ (this threshold is from property (3) in \Cref{lemma:AdditionalProperties}), we let $C_{\CM}(e) = \lfloor 3h_{\CM}\cdot\max\{\bar{w}_{i}(e_{1}),\bar{w}_{i}(e_{2})\}\rfloor$, otherwise $C_{\CM}(e) = 0$. 

Now we show the correctness of $C_{\CM}$. Consider an arbitrary pair of vertices $v^{\src}\in \supp(A^{\src}_{U})$ and $v^{\src}\in \supp(A^{\sink}_{U})$, and an $h_{\CM}$-length path $P_{G}$ connecting them on $G$. $P_{G}$ corresponds to a path $P_{H} = s\to v^{\src}\to\cdots P_{G}\cdots\to v^{\sink}\to t$ on $\bar{H}_{i}$ (by adding back the source and sink edges $(s,v^{\src})$ and $(v^{\sink},t)$) with length at most $h_{\CM}+2$, so $\bar{w}_{i}(P_{H}) \geq 1$ by the feasibility of $\bar{w}_{i}$. Because $v^{\src}\notin V^{\src}_{\sat}$ and $v^{\sink}\notin V^{\sink}_{\sat}$ by \Cref{ob:NoSatVertices}, we have $w'_{i}(s,v^{\src})<1/m$ and $w'_{i}(v^{\sink},t)<1/m$. Therefore,
\begin{align*}
C_{\CM}(P_{G}) &\geq \sum_{e\in P_{G}}\lfloor 3\cdot h_{\CM}\cdot(\max\{\bar{w}_{i}(e_{1}),\bar{w}_{i}(e_{2})\} - 1/m^{3}) \rfloor\\
&\geq \sum_{e\in P_{G}}(3\cdot h_{\CM}\cdot \max\{\bar{w}_{i}(e_{1}),\bar{w}_{i}(e_{2})\}-1-3h_{\CM}/m^{3})\\
&\geq (3\cdot h_{\CM}\cdot\sum_{e\in P_{G}} \max\{\bar{w}_{i}(e_{1}),\bar{w}_{i}(e_{2})\}) - (1+3h_{\CM}/m^{3})\cdot|E(P_{G})|\\
&\geq 3h_{\CM}\cdot(1-2/m)- h_{\CM}\cdot (1+3/m^{2})\\
&> h_{\CM}.
\end{align*}
Here the first inequality is by the definition of $C_{\CM}$. About the fourth inequality, note that
\[
\sum_{e\in P_{G}}\max\{\bar{w}_{i}(e_{1}),\bar{w}_{i}(e_{2})\}\geq \bar{w}_{i}(P_{H}) - 2/m \geq 1 - 2/m
\]
because the source and sink edges on $P_{H}$ have $\bar{w}_{i}$-weight at most $1/m$. Moreover, 
\[
(1+3h_{\CM}/m^{3})\cdot |E(P_{G})| \leq |E(P_{G})| + 3\cdot h_{\CM}\cdot |E(P_{G})|/m^{3}\leq h_{\CM}\cdot(1+3/m^{2})
\]
because $|E(P_{G})|\leq h_{\CM}$ and $|E(P_{G})|\leq m$.

Furthermore, $C_{\CM}$ has size 
\begin{align*}
|C_{\CM}|&\leq O(h_{\CM}\cdot |\bar{w}_{i}|)\leq O(h_{\CM}\cdot \val(\bar{f}_{i})\cdot\log n)\leq O(h_{\CM}\cdot \val(f_{i})\cdot\log^{2} n)\\
&\leq O(h_{\CM}\cdot \phi\cdot |A_{\src}|\cdot \log^{2} n) = O(h_{\CM}\cdot \phi_{\CM}\cdot |A_{\src}|).
\end{align*}
Here $|\bar{w}_{i}|\leq O(\val(\bar{f}_{i}))$ is by the approximation guarantee of the local flow algorithm on $\bar{H}_{i}$. Also, 
$\val(\bar{f}_{i})\leq O(\val(f_{i}))$ is by the following reasons. First, $\bar{f}_{i}$ is a feasible $(h_{\CM}+2)$-length flow in $\bar{H}_{i}$, so it is a feasible $(h_{\CM}+2)$-length flow in $x\cdot H_{i}$ (recall that $H_{i} = H(A^{\src}_{i}, A^{\sink}_{i})$ and $\bar{H}_{i} = H(x\cdot A^{\src}_{i},A^{\sink}_{i})$). This means $\val(\bar{f}_{i})$ is at most $x$ times the value of maximum $(h_{\CM}+2)$-length flow in $H_{i}$. Hence $\val(\bar{f}_{i})\leq O(x\cdot \val(f_{i})) = O(\val(f_{i})\log n)$ because $f_{i}$ is a $O(\log n)$-approximation $(h_{\CM}+2)$-length maxflow and $x$ is a constant.

\paragraph{Construction of Landmarks.} We still use $i$ to denote the last iteration of the local flow algorithm. %

The first step is to construct a set ${\cal P}$ of $h_{\CM}$-length paths on $G$ that \textit{covers} $\supp(C_{\CM})$ by taking all flow paths of $\bar{f}_{i}$ with source and sink edges removed, where ``cover'' means that, for each edge $e\in E(G)$ with $C_{\CM}(e)>0$, there exists a path $P\in {\cal P}$ s.t. $e\in P$. The reason is as follows. Each edge $e\in E(G)$ (corresponding to $e_{1},e_{2}\in \bar{H}_{i}$) with $C_{\CM}(e)>0$ has $\max\{\bar{w}_{i}(e_{1}),\bar{w}_{i}(e_{2})\}\geq 1/m^{3}$, so either $\bar{f}_{i}(e_{1})>0$ or $\bar{f}_{i}(e_{2})>0$ by property (3) in \Cref{lemma:AdditionalProperties}, which implies $e\in P$ for some $P\in{\cal P}$. 

Next, we decompose paths in ${\cal P}$ into \textit{heavy edges} and \textit{segments}. Let $\sigma' = h_{\CM}/(3\cdot \kappa_{\sigma})$. Heavy edges are edges $e\in E({\cal P})$ s.t. $\ell_{G-C_{\CM}}(e)\geq \sigma'$. Segments are constructed by processing each path $P\in{\cal P}$ as follows. First, remove heavy edges from $P$. Second, for each remaining subpath $\wtilde{P}$, further decompose it into segments (i.e. shorter subpaths) $P'$ s.t. $\sigma'\leq \ell_{G-C_{\CM}}(P')\leq 3\sigma'$. This can be done by picking a prefix $P'$ of $\wtilde{P}$ with $\sigma'\leq \ell_{G-C_{\CM}}(P')\leq 2\sigma'$ repeatedly until the last subpath of $\wtilde{P}$ has $(G-C_{\CM})$-length less than $\sigma'$, and then concatenating the last subpath to its previous segment.

\begin{claim}
For some $\eta = \Theta(\log m)$, the total number of heavy edges is at most $O(\val(\bar{f}_{i})\cdot h_{\CM}/(\eta\cdot\sigma') + |C_{\CM}|/\sigma')$, and the total number of segments is at most $O(\val(\bar{f}_{i})\cdot h_{\CM}/(\eta\cdot\sigma') + |C_{\CM}|/(\eta\cdot\sigma'))$. 
\end{claim}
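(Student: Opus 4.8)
The plan is to count heavy edges and segments separately by a charging argument against two quantities: the total "flow-length volume" $\sum_{P\in\mathcal{P}}\ell_{G-C_{\CM}}(P)$ and the cut size $|C_{\CM}|$. The key structural fact to exploit is that $\mathcal{P}$ consists of the flow paths of $\bar{f}_i$ with source/sink edges removed, so $\bar f_i$ is an integral flow (property (1) of \Cref{lemma:AdditionalProperties}) with $f/\eta = \hat F$ integral, meaning $\hat f_i = (1/\eta)\bar f_i$ is integral; hence the number of paths in $\mathcal{P}$ (counted with multiplicity) is $|\mathcal{P}| \le \val(\bar f_i)/\eta$, since $\bar f_i$ is already integral after scaling by $1/\eta$ — actually more precisely $|\path(\bar f_i)| \le \val(\bar f_i)/\eta$ because each flow path carries at least $\eta$ units. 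Each path in $\mathcal{P}$ has $\ell_G$-length at most $h_{\CM}$, so its $\ell_{G-C_{\CM}}$-length is at most $h_{\CM} + (\text{contribution of } C_{\CM} \text{ along } P)$.

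**First I would** bound the number of heavy edges. A heavy edge $e$ has $\ell_{G-C_{\CM}}(e) \ge \sigma'$. Along a single path $P\in\mathcal{P}$, the edges of $P$ satisfy $\ell_G(P) \le h_{\CM}$, and the total "cut mass" along $P$ is $\sum_{e\in P} C_{\CM}(e)$. Since $\ell_{G-C_{\CM}}(P) = \ell_G(P) + \sum_{e\in P} C_{\CM}(e) \le h_{\CM} + \sum_{e\in P} C_{\CM}(e)$, the number of heavy edges on $P$ is at most $\ell_{G-C_{\CM}}(P)/\sigma' \le h_{\CM}/\sigma' + (\sum_{e\in P}C_{\CM}(e))/\sigma'$. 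Summing over all $P\in\mathcal{P}$ (with multiplicity), the first term contributes $|\mathcal{P}|\cdot h_{\CM}/\sigma' \le \val(\bar f_i)\cdot h_{\CM}/(\eta\sigma')$, and for the second term I would use that each edge of $G$ is covered by at most $O(1/\eta)$ paths of $\mathcal{P}$ (the embedding/flow congestion bound — $\hat f_i^G$ has congestion $O(1/\eta)$), so $\sum_{P\in\mathcal{P}}\sum_{e\in P}C_{\CM}(e) \le O(1/\eta)\cdot |C_{\CM}|$... wait, but the claim states $O(|C_{\CM}|/\sigma')$ without the $1/\eta$ for heavy edges — so I should instead bound $\sum_{P}\sum_{e\in P}C_{\CM}(e)$ more carefully, or note that each distinct heavy edge with $C_{\CM}(e)>0$ is counted once per distinct edge, giving at most $|C_{\CM}|/\sigma'$ distinct heavy edges from cut mass plus the per-path-length contribution. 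The cleanest route: distinct heavy edges are either (a) heavy because $C_{\CM}(e)$ is large, of which there are at most $|C_{\CM}|/\sigma'$, or (b) heavy because $\ell_G(e)$ alone is $\ge \sigma'/2$ say, of which each path has at most $2h_{\CM}/\sigma'$, total $\le |\mathcal{P}|\cdot 2h_{\CM}/\sigma' = O(\val(\bar f_i)h_{\CM}/(\eta\sigma'))$.

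**Then I would** count segments. Each maximal cut-free subpath $\tilde P$ of a path $P$ (after removing heavy edges) is chopped into segments of $\ell_{G-C_{\CM}}$-length in $[\sigma', 3\sigma']$, so the number of segments from $\tilde P$ is at most $\ell_{G-C_{\CM}}(\tilde P)/\sigma' + 1 \le 2\ell_{G-C_{\CM}}(\tilde P)/\sigma'$ when $\ell_{G-C_{\CM}}(\tilde P) \ge \sigma'$, and at most $1$ otherwise. The "+1" / "otherwise $1$" contributions are bounded by the number of heavy edges plus $|\mathcal{P}|$ (each path splits into at most (#heavy edges on it)$+1$ pieces), giving $O(\val(\bar f_i)h_{\CM}/(\eta\sigma') + |C_{\CM}|/\sigma')$ from the heavy-edge count — but the claimed bound has $|C_{\CM}|/(\eta\sigma')$, which is \emph{larger}, so this is fine (or I note the segment count directly absorbs heavy-edge count). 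The main term $\sum 2\ell_{G-C_{\CM}}(\tilde P)/\sigma' \le 2\sum_P \ell_{G-C_{\CM}}(P)/\sigma' \le 2(|\mathcal{P}|h_{\CM} + \sum_P\sum_{e\in P}C_{\CM}(e))/\sigma'$; here I use the congestion bound $O(1/\eta)$ on $\sum_P\sum_e C_{\CM}(e) \le O(|C_{\CM}|/\eta)$, yielding exactly $O(\val(\bar f_i)h_{\CM}/(\eta\sigma') + |C_{\CM}|/(\eta\sigma'))$ as claimed.

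**The hard part will be** getting the congestion bound on $\mathcal{P}$ exactly right — specifically, pinning down that the (scaled) flow $\hat f_i = \bar f_i/\eta$ decomposes into integral flow paths each of value $\ge 1$ on $G$ (after stripping source/sink edges) with congestion $O(1/\eta)$ per edge of $G$, so that $\sum_{P\in\mathcal{P}}(\text{anything edge-additive along } P) \le O(1/\eta)\times(\text{that quantity summed over } E(G))$. This is where properties (1) of \Cref{lemma:AdditionalProperties} and the construction of $\bar H_i$ (unit-capacity directed version of $G$, so congestion $2$ for $\bar f_i^G$ hence $2/\eta$ for $\hat f_i^G$, exactly as in the proof of \Cref{lemma:PropertyOfMi}) come in. Everything else is the routine length-accounting above. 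Note also that $\sigma' = h_{\CM}/(3\kappa_\sigma)$ so $h_{\CM}/\sigma' = 3\kappa_\sigma = O(\kappa_\sigma)$ is a "clean" factor, but I keep it as $h_{\CM}/(\eta\sigma')$ in the statement since that is what later applications want.
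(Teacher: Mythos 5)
Your argument is correct and uses essentially the same ingredients as the paper: the bound $|\mathcal P|\le \val(\bar f_i)/\eta$ from property~(1) of \Cref{lemma:AdditionalProperties}, the congestion bound that each edge of $G$ lies on $O(1/\eta)$ paths of $\mathcal P$ so $\sum_P C_{\CM}(P)\le O(|C_{\CM}|/\eta)$, the per-path budget $\ell_G(P)\le h_{\CM}$, and the dichotomy for heavy edges ($\ell_G(e)\ge\sigma'/2$ or $C_{\CM}(e)\ge\sigma'/2$, counting the latter as distinct edges so no $1/\eta$ appears). For segments the paper carves via the same dichotomy applied to segments (each has $\ell_{G-C_{\CM}}(P')\ge\sigma'$, hence $\ell_G(P')\ge\sigma'/2$ or $C_{\CM}(P')\ge\sigma'/2$, both counted by edge-disjointness within a path), whereas you bound the segment count directly by $\sum_P\ell_{G-C_{\CM}}(P)/\sigma'$ and then split the sum; this is an equivalent accounting. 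Your initial attempt to bound heavy edges by $\ell_{G-C_{\CM}}(P)/\sigma'$ per path would indeed have produced the unwanted $|C_{\CM}|/(\eta\sigma')$ term for heavy edges, and your self-correction to counting distinct heavy edges by the cause of their heaviness is exactly what the paper does; keep only the corrected version. The ``$+1$'' corrections you worry about for short subpaths are dominated by $|\mathcal P| + (\text{number of heavy edges})$, which is absorbed by the claimed bound since $h_{\CM}/\sigma'\ge1$ and $1/\eta\ge1$, so this is fine.
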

\begin{proof}

First we have $|{\cal P}|\leq \val(\bar{f}_{i})/\eta$ because each path $P_{H}\in \supp(\bar{f}_{i})$ has $\bar{f}_{i}(P_{H})\geq \eta$ by property (1) in \Cref{lemma:AdditionalProperties}. Again by this property, each edge in $G$ will only appear in $O(1/\eta)$ paths in ${\cal P}$, which means $\sum_{P\in{\cal P}}C_{\CM}(P)\leq O(|C_{\CM}|/\eta)$.

To bound the number of heavy edges, note that each heavy edge $e$ has either $\ell_{G}(e)\geq \sigma'/2$ or $C_{\CM}(e)\geq \sigma'/2$ because $\ell_{G-C_{\CM}}(e)\geq \sigma'$. For the heavy edges with $\ell_{G}(e)\geq \sigma'/2$, there are at most $O(|{\cal P}|\cdot h_{\CM}/\sigma')$ of them  because each path in ${\cal P}$ has length at most $h_{\CM}$ on $G$. For the heavy edges with $C_{\CM}(e)\geq \sigma'/2$, the number of them is obviously $O(|C_{\CM}|/\sigma')$. Therefore, the total number of heavy edges is $O(|{\cal P}|\cdot h_{\CM}/\sigma' + |C_{\CM}|/\sigma') = O(\val(\bar{f}_{i})\cdot h_{\CM}/(\eta\cdot\sigma') + |C_{\CM}|/\sigma')$.

We can bound the number of segments similarly. Each segment $P'$ has either $\ell_{G}(P') \geq \sigma'/2$ or $C_{\CM}(P')\geq \sigma'/2$. There are at most $O(|{\cal P}|\cdot h_{\CM}/\sigma')$ segments with $\ell_{G}(P')\geq \sigma'/2$. The number of segments with $C_{\CM}(P')\geq \sigma'/2$ is at most $O(\sum_{P\in{\cal P}}C_{\CM}(P)/\sigma')$. Thus the total number of segments is
\[
O(|{\cal P}|\cdot h_{\CM}/\sigma' + \sum_{P\in{\cal P}}C_{\CM}(P)/\sigma') = O(\val(\bar{f}_{i})\cdot h_{\CM}/(\eta\cdot\sigma') + |C_{\CM}|/(\eta\cdot\sigma')).
\]
\end{proof}

Given the decomposition, we construct the landmark set $L$ as follows. For each heavy edge $e=(u,v)$, we add both $u$ and $v$ into $L$ (we say $u$ and $v$ are landmarks \textit{created} by $e$). For each segment $P'$, we add an arbitrary vertex $v\in P'$ into $L$ (similarly, $v$ is the landmark \textit{created} by $P'$). The size of $L$ is linear to the number of heavy edges and segments, so 
\begin{align*}
|L| &= O(\val(\bar{f}_{i})\cdot h_{\CM}/(\eta\cdot\sigma') + |C_{\CM}|/\sigma') + O(\val(\bar{f}_{i})\cdot h_{\CM}/(\eta\cdot\sigma') + |C_{\CM}|/(\eta\cdot\sigma'))\\
&\leq O(h_{\CM}\cdot\phi_{\CM}\cdot |A_{\src}|\cdot\log^{2}n/(\eta\cdot\sigma'))\\
&\leq O(\phi_{\CM}\cdot |A^{\src}|\cdot \log^{3} n\cdot\kappa_{\sigma}),
\end{align*}
Here the first inequality is because $\val(\bar{f}_{i})\leq O(x\cdot \val(f_{i}))\leq O(\val(f_{i})\log n)$ (we have shown this when bounding $|C_{\CM}|$). Moreover, recall that $\val(f_{i})\leq \phi\cdot|A^{\src}_{i}| = \phi_{\CM}\cdot|A^{\src}_{i}|/\log^{2}n$ by the condition of the second case, and $|C_{\CM}|\leq O(h_{\CM}\cdot\phi_{\CM}\cdot |A_{\src}|)$. The last inequality is because $\eta = \Theta(\log n)$ and $\sigma' = h_{\CM}/(3\cdot\kappa_{\sigma})$.

$L$ is a landmark set of $C_{\CM}$ on $G$ with distortion $\sigma_{\CM} = 3\sigma' = h_{\CM}/\kappa_{\sigma}$ by the following reasons. All edges $e\in E(G)$ with $\ell_{G-C_{\CM}}(e)>\sigma_{\CM}$ are heavy edges, so their endpoints are landmarks. For those $e\in E(G)$ with $C_{\CM}(e)>0$ and $\ell_{G-C_{\CM}}(e)\leq \sigma_{\CM}$, it must belong to a segment (or it is a heavy edge itself), so the landmark created by this segment (or this heavy edge) has $(G-C_{\CM})$-distance at most $3\sigma'=\sigma_{\CM}$ to endpoints of $e$.

\paragraph{Time Analysis.} The running time is dominated by the local flow subroutine of each iteration. For iteration $i$, the local flow subroutine will be run on graph $H_{i}$ s.t. for each $v\in V(G)$, $\deg_{H_{i}}^{+}(v) \leq \deg_{G}(v) + 2$ (because each vertex may incident to the source edge and sink edge added to $H_{i}$), $\Delta_{H_{i}}(v) = A^{\src}_{i}(v) \leq A^{\src}(v)$ and $\nabla_{H_{i}}(v) = A^{\sink}_{i}(v) \geq A^{\sink}(v) - M^{\sink}_{\CM}(v)$. 
By \Cref{thm:LocalLengthConstrainedFlow}, the running time of local flow on $H_{i}$ is (pick parameter $c = 3$) 
\begin{align*}
&~~~~\poly(h_{\CM},\log n)\cdot\sum_{v\in V(H_{i})\setminus\{s,t\}}(3\cdot\Delta_{H_{i}}(v) + \max\{\deg^{+}_{H_{i}}(v)-3\cdot\nabla_{H_{i}}(v),0\})\\
&\leq\poly(h_{\CM}, \log n)\cdot\sum_{v\in V(G)}(3\cdot A^{\src}(v) + \max\{\deg_{G}(v)+2-3\cdot(A^{\sink}(v)-M^{\sink}_{\CM}(v)),0\})\\
&\leq \poly(h_{\CM}, \log n)\cdot\\
&~~~~\sum_{v\in V(G)}(3\cdot A^{\src}(v) + \max\{\deg_{G}(v)+2-3\cdot(A^{\sink}(v)+A^{\src}(v)),0\} + 3\cdot A^{\src}(v) + 3\cdot M_{\CM}^{\sink}(v))\\
&\leq \poly(h_{\CM}, \log n)\cdot (9\cdot |A^{\src}| + \sum_{v\in V(G)}\max\{\deg_{G}(v) - (A^{\src}(v) + A^{\sink}(v)), 0\})\\
&\leq\poly(h_{\CM}, \log n)\cdot|A^{\src}|.
\end{align*}
The third inequality is because $\sum_{v\in V(G)}M^{\sink}_{\CM}(v) = |M_{\CM}| \leq |A^{\src}|$ and $A^{\src}(v) + A^{\sink}(v)\geq 1$ and the last inequality is by $A^{\src}(v) + A^{\sink}(v)\geq \deg_{G}(v)$ for all $v\in V(G)$. The last iteration will run one more local flow subroutine on graph $\bar{H}_{i}$. The running time bound is the same by the same analysis. Finally, the total running time is $\poly(h_{\CM},\log n)\cdot |A^{\src}|/\phi_{\CM}$ because the number of iterations is $O(\log n/\phi) = O(\log^{3}/\phi_{\CM})$ by \Cref{lemma:IterationsOfCutmatch}.

\section{Dynamic Routers with Path Reporting}
\label{sect:DynamicRouter}
In this section, we introduce a dynamic router algorithm, which is a key tool of the dynamic length-constrained expander decomposition in \Cref{sect:DynamicCertifiedED,sect:DynDenseED}. As we mentioned in \Cref{sect:GlobalParameters}, the number of batched updates received by a router (denoted by $t_{\local}$ in \Cref{thm:Router}) might not be the same with the global time parameter $t$. We now specify the relations between $t_{\local}$ and $t$. In fact, when we apply \Cref{thm:DynamicED} in \Cref{sect:DynamicCertifiedED}, we have $t_{\local} = O(t) = O(1/\epsilon)$. When we apply \Cref{thm:Router} in \Cref{sect:DynDenseED}, we have $t_{\local} = O(t\cdot\lambda_{\insLM,t}) = O(1/\epsilon^{5})$, where $\lambda_{\insLM,t} = O(1/\epsilon^{4})$ is from \Cref{lemma:LandmarkClosure}. To avoid clutter, we manually set an upper bound $t_{\local} = \poly(1/\epsilon)$ in the statement of \Cref{thm:Router}.

\begin{theorem}[Dynamic Routers]
Given a set $V$, there is an algorithm that initializes a $(h^{(0)}_{\rou},\gamma^{(0)}_{rt})$-router $R^{(0)}$ with $V(R^{(0)}) = V$ for node-weighting $\mathds{1}(V)$ with $h^{(0)}_{\rou}=\lambda_{\rt,h}(0)$, $\gamma^{(0)}_{\rt} = \kappa_{\rt,\gamma}(0)$ and maximum degree $\kappa_{\rou,\deg} = n^{O(\epsilon^{4})}$, where $\lambda_{\rt,h}(i) = 2^{O(1/\epsilon^{4})}+O(i)$ and $\kappa_{\rt,\gamma}(i) = 2^{O(1/\epsilon^{4}+i)}$ are global parameters (with respect to $i$). For some $t_{\local} = \poly(1/\epsilon)$, the algorithm can also maintain $R$ under a sequence $\pi^{(1)},\pi^{(2)},...,\pi^{(t_{\local})}$ of the following batched updates. 
\begin{itemize}
\item (Edge Deletion) If $\pi_{i}=E_{F}\subseteq E(R^{(i-1)})$ denotes edge deletions, the algorithm computes a set of pruned vertices $V_{\prune}\subseteq V(R^{(i-1)})$ and a $(h^{(i)}_{\rou},\gamma^{(i)}_{rt})$-router $R^{(i)}$ which is the subgraph of $R^{(i-1)}$ induced by $V(R^{(i-1)})\setminus V_{\prune}$ s.t. $R^{(i)}$ has no edge in $E_{F}$,
\[
h^{(i)}_{\rt}=\lambda_{\rt,h}(i),\ \gamma^{(i)}_{\rt} = \kappa_{\rt,\gamma}(i)
\text{ and }  |V_{\prune}|=\lambda_{\rt,\prune}(i)\cdot|\pi_{i}|
\]
where $\lambda_{\rt,\prune}(i) = t_{\local}^{O(1/\epsilon^{4})}\cdot 2^{O(i)} = (1/\epsilon)^{O(1/\epsilon^{4})}\cdot 2^{O(i)}$.

\item (Matching Insertion) If $\pi_{i}=M$ denotes a matching between $V(R^{(i-1)})$ and a set of new vertices $V_{\new}$, the new router $R^{(i)} = R^{(i-1)}\cup M$ (in particular, $V(R^{(i)}) = V(R^{(i-1)})\cup V_{\new}$) is a $(h^{(i)}_{\rt},\gamma^{(i)}_{\rt})$-router with
\[
h^{(i)}_{\rou}=\lambda_{\rt,h}(i),\text{ and }\gamma^{(i)}_{\rt} = \kappa_{\rt,\gamma}(i).
\]
\end{itemize}
The initialization time is $|V(R^{(0)})|\cdot n^{O(\epsilon)}$ and the update time for $\pi^{(i)}$ is $t_{\local}^{O(1/\epsilon^{4})}\cdot 2^{O(i)}\cdot |\pi^{(i)}|= (1/\epsilon)^{O(1/\epsilon^{4})}\cdot 2^{O(i)}\cdot |\pi^{(i)}|$.

Furthermore, at any moment $0\leq i\leq t$, given two vertices $u,v\in V(R^{(i)})$, there is an algorithm that outputs a path $P$ connecting $u$ and $v$ in $R^{(i)}$ with length at most $h^{(i)}_{\rt} = \lambda_{\rt,h}(i)$ in time $O(|P|/\epsilon^{4})$.
\label{thm:Router}
\label{thm:RouterPathReport}
\end{theorem}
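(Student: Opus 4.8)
# Proof Proposal for Theorem~\ref{thm:Router}

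The plan is to build the dynamic router by mimicking the standard static router construction (a recursive cut-matching game on $\Omega(1)$-step length-constrained expanders, as in \Cref{thm:RouterRouting}) but tracking the embedding of each recursive layer into the host so that edge deletions can be propagated and pruned in a controlled way. Concretely, I would first set up the initialization: on the set $V$ I build an $(h^{(0)}_{\rt},\gamma^{(0)}_{\rt})$-router $R^{(0)}$ via the recursive length-constrained cut-matching game (this is where the $2^{O(1/\epsilon^4)}$ in $\lambda_{\rt,h}(0)$ and the $n^{O(\epsilon^4)}$ maximum degree come from — the cut-matching game takes $\mathrm{poly}(1/\epsilon)$ rounds at each of $O(1/\epsilon)$ levels of recursion, each round adding a bounded-degree matching certified by a length-constrained flow). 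During this construction I retain, at each recursion node, the certificate (a sub-router plus an embedding $\Pi$ with congestion controlled by the flow algorithm), so the whole object is a hierarchical witness for $R^{(0)}$ being a router.

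Next I handle the two update types. For \textbf{matching insertion} $\pi_i = M$ between $V(R^{(i-1)})$ and new vertices $V_{\new}$: adding a perfect (or near-perfect) matching to a router keeps it a router — any unit demand on the enlarged vertex set can first be routed across $M$ to land on $V(R^{(i-1)})$ (one hop, congestion $1$) and then routed inside $R^{(i-1)}$; this costs one extra unit of length and leaves congestion essentially unchanged, which is exactly the $+O(i)$ in $\lambda_{\rt,h}(i)$ and the $2^{O(i)}$ slack in $\kappa_{\rt,\gamma}(i)$. No pruning is needed. For \textbf{edge deletion} $\pi_i = E_F$: I delete $E_F$ from $R^{(i-1)}$ and from every recursive layer that embedded through those edges. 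Using the bounded congestion of the internal embeddings, $|E_F|$ deletions in $R^{(i-1)}$ translate to a bounded number of "broken" matching edges at each recursion level; I then run an \emph{expander pruning} step (the length-constrained / cut-matching analogue of \cite{SW19}) at each level to carve out a small vertex set so that the surviving induced subgraph is still a router. Composing the pruned sets across the $O(1/\epsilon)$ recursion levels and the history of $i$ prior updates is what yields $|V_{\prune}| = \lambda_{\rt,\prune}(i)\cdot|\pi_i|$ with $\lambda_{\rt,\prune}(i) = t_{\local}^{O(1/\epsilon^4)}\cdot 2^{O(i)}$: each level multiplies the pruned-size bound by a $\mathrm{poly}(1/\epsilon)$ factor and each elapsed update step contributes a constant factor (because the quality parameters $h_{\rt},\gamma_{\rt}$ themselves degrade geometrically with $i$, forcing a geometrically larger pruning allowance). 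The running time for $\pi^{(i)}$ is dominated by re-running pruning/cut-matching on subgraphs whose size is proportional to $|\pi^{(i)}|$ times the accumulated slack, giving $t_{\local}^{O(1/\epsilon^4)}\cdot 2^{O(i)}\cdot|\pi^{(i)}|$.

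For the \textbf{path-reporting} claim, I would store, alongside the router, the recursive witness structure explicitly. To answer a query $(u,v)$, route the single demand pair $(u,v)$ through the hierarchy: at the top level the cut-matching certificate routes it across a short router path; recursively unfold each matching edge into a path in the layer below, and at the bottom unfold into actual $R^{(i)}$-edges. Since every layer contributes at most $\lambda_{\rt,h}(i)$ total length and the unfolding is linear in the output size, this runs in $O(|P|/\epsilon^4)$ time, the $1/\epsilon^4$ accounting for the recursion depth and per-level bookkeeping. Correctness of the length bound is immediate from the router guarantee applied to the unit demand $\mathds{1}\{(u,v)\}$.

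The main obstacle I anticipate is making the pruning \emph{compose cleanly across the recursion levels and across time} while keeping the length/congestion degradation at only $2^{O(i)}$ and the pruned-set inflation at only $t_{\local}^{O(1/\epsilon^4)}\cdot 2^{O(i)}$ — in particular, when a lower-level pruning removes vertices, the matchings at higher levels that touched those vertices become partial, which may cascade further deletions, and one must argue this cascade terminates with the claimed bounds rather than blowing up. A secondary subtlety is ensuring the bounded-degree property ($\kappa_{\rt,\deg} = n^{O(\epsilon^4)}$) is preserved under matching insertions and that deletions/pruning never force us to \emph{add} edges (the theorem demands $R^{(i)}$ be an \emph{induced} subgraph of $R^{(i-1)}$ after a deletion batch), so all repair must happen purely by removing vertices, never by rewiring — this is the key structural constraint that forces the pruning-based approach rather than a rebuild.
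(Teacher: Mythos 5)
Your approach is genuinely different from the paper's, and it has a gap that the paper avoids entirely. The paper does not use a cut-matching game, a length-constrained flow, or an ``expander pruning'' subroutine \`a la \cite{SW19}. Instead it builds $R^{(0)}$ as an explicit combinatorial object: recursively partition $V$ into $b = \Theta(n^{\epsilon^4})$ groups per level (giving $O(1/\epsilon^4)$ levels), add a clique inside each leaf group, and add a near-perfect matching between every pair of sibling children of a non-leaf group. The router's edges \emph{are} these matchings and cliques---there is no host graph and no embedding, so the cut-matching-game picture you draw (with ``each round adding a bounded-degree matching certified by a length-constrained flow'') is importing machinery that does not apply here and whose running time would be hard to bound by $|V|\cdot n^{O(\epsilon)}$. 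This also changes the pruning story completely: instead of recursively pruning length-constrained expanders, the paper uses a simple one-pass threshold rule from the leaves to the root of the group hierarchy, where a group $B$ is wholly pruned as soon as the (batch-local) number of endangered vertices in $B$ exceeds $|B|/(100 t_{\local})$. Because the total number of batches is $t_{\local}$, this guarantees every surviving group still has at least a $0.99$-fraction of its original vertices across the whole update history, which is what makes the routing argument go through. This directly resolves the cascading-pruning issue you flag as the ``main obstacle'' but do not settle; your recursive-pruning scheme would need a separate argument for why cascades terminate with the right bound.

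You also miss a structural idea that is where the $i$-dependence in $\lambda_{\rt,h}(i)$ and $\kappa_{\rt,\gamma}(i)$ actually comes from. On a matching insertion, new vertices are attached as \emph{affiliated vertices} hanging off core vertices, forming trees of depth at most $i$; the core graph (the group-hierarchy object) is never touched by insertions. Routing first forwards demands from affiliated vertices up to their core roots (costing $+O(i)$ length and $2^{O(i)}$ congestion because the affiliated trees have depth $\le i$ and each node may have a constant fan-out per update), then routes inside the core. When an affiliated vertex or a core vertex is pruned, all of its tree descendants must also be pruned, which is why $\lambda_{\rt,\prune}(i)$ carries the $2^{O(i)}$ factor. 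Your proposal explains the $+O(i)$ and $2^{O(i)}$ slacks as degradation from routing ``across $M$ then inside $R^{(i-1)}$''; without the explicit affiliated-tree invariant, that informal argument would compound multiplicatively over $t_{\local}$ insertions and would not give the stated bounds, and it also leaves unanswered how a later deletion batch knows which vertices to prune to keep $R^{(i)}$ an induced subgraph. Your path-reporting idea (recursive unfolding) is broadly compatible with the paper's, but the actual algorithm is a direct recursion on the group hierarchy rather than an unfolding of cut-matching certificates.
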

\begin{proof}
At any time $0\leq i\leq t$, the router $R^{(i)}$ is always the union of a \textit{core graph} $J^{(i)}$ and an \textit{affiliated forest} ${\cal T}^{(i)}$. Vertices in $V(J^{(i)})$ are \textit{core vertices} and other vertices in $V(R^{(i)})\setminus V(J^{(i)})$ are \textit{affiliated vertices}. Each core vertex $v$ one-one corresponds to an \textit{affiliated tree} $T_{v} \in {\cal T}^{(i)}$, which means each tree $T_{v}\in{\cal T}^{(i)}$ only contains one core vertex $v$ and the other vertices in $T_{v}$ are affiliated vertices. Although the tree edges are undirected, we manually set the cored vertex $v$ to be the root of $T_{v}$ to establish the notions of ancestors and descendants on affiliated trees. Let $\tau^{(i)}$ denote the maximum tree size in ${\cal T}^{(i)}$.

\paragraph{Initialization.} We first introduce a recursive partition of $V$ called the \emph{group hierarchy}. Let $b = \Theta(n^{\epsilon^{4}})$ be an integral parameter only used in this proof. We decompose the vertex set $V$ into \emph{groups} (i.e. subsets of vertices) recursively. We use ${\cal B}^{(0)}$ to denote the initial set of all groups and use ${\cal B}^{(0)}_{j}$ to denote the initial set of groups at level $j$, and the first level only has a single group containing all vertices $V$, i.e. ${\cal B}^{(0)}_{1} = \{V\}$. For each level $j$ and each group $B\in{\cal B}^{(0)}_{j}$ with size $|B|\geq 10\cdot b$, we partition $B$ into $b$ subgroups, each of which has size between $\lfloor|B|/b\rfloor$ and $\lceil|B|/b\rceil$, and then we add these subgroups into level $j+1$. We call these subgroups the \emph{children} of $B$ and we call $B$ the \emph{parent} of them. Observe that the groups form a tree structure. Furthermore, the number of levels is at most $O(\log_{b}|V|) = O(1/\epsilon^{4})$.

Initially, the core graph $J^{(0)}$ has vertices $V(J^{(0)}) = V$, and we construct the edges $E(J^{(0)})$ based on the groups. Concretely, consider each group $B\in{\cal B}$. If $B$ has no children (namely, $B$ is a leaf in the hierarchy), we add a clique among vertices in $B$. If $B$ has children, for each pair of its children $B_{1}$ and $B_{2}$, we add a matching $M_{B_{1},B_{2}}$ between $B_{1}$ and $B_{2}$ with size $\min\{|B_{1}|,|B_{2}|\}$. Recall that $|B_{1}|$ and $|B_{2}|$ differ by at most one, so at most one vertex in $B_{1}$ or $B_{2}$ is not incident to this matching. The edge set $E(J^{(0)})$ is the union of the cliques and matchings mentioned above. 

We set the initial router $R^{(0)} = J^{(0)}$. Namely, there is no affiliated vertices and each affiliated tree contains just a single core vertex.

The maximum degree of $R^{(0)}$ is $n^{O(\epsilon^{4})}$ by the following reasons. For each vertex $v\in V(R^{(0)})$ and each group $B$ containing $v$, the clique or matchings added by $B$ will bring at most $b$ incident edges to $v$. Because the number of group $B$ containing $v$ is at most $O(1/\epsilon^{4})$ (i.e. the number of levels), we have $\deg_{R^{(0)}}(v)\leq O(1/\epsilon^{4})\cdot b\leq n^{O(\epsilon^{4})}$. 

The initialization time is obviously proportional to the graph size, i.e. $|R^{(0)}| = |V|\cdot n^{O(\epsilon^{4})}$.

\paragraph{The Edge Deletion Update.} When $\pi^{(i)} = E_{F}\subseteq E(R^{(i-1)})$ is a batch of edge deletions, we will compute the set $V_{\prune}$ of pruned vertices as follows. We let $V_{\aff} = V(E_{F})$ collect all endpoints of deleted edges, called \emph{affected vertices}. We then call $V_{\aff,J} = V_{\aff}\cap V(J^{(i-1)})$ \emph{affected core vertices} and naturally $V_{\aff,{\cal T}} = V_{\aff}\setminus V(J^{(j-1)})$ is the set of \emph{affected affiliated vertices}. 

We construct the set $V_{\prune,J}$ of \emph{pruned core vertices} by the following procedure. At the beginning, we let $V_{\prune,J} = V_{\aff,J}$. Then, we iterate $j$ from the maximum level to the first level of ${\cal B}^{(i-1)}$. For each group $B\in {\cal B}^{(i-1)}_{j}$, if $|B\cap V_{\prune,J}|\geq |B|/(100t_{\local})$, we add the whole group $B$ into $V_{\prune,J}$. Finally, the new group hierarchy ${\cal B}^{(i)}$ is obtained by dropping from ${\cal B}^{(i-1)}$ all groups added into $V_{\prune,J}$.

The whole set $V_{\prune}$ of pruned vertices contains all pruned core vertices $V_{\prune,J}$, all affected affiliated vertices $V_{\aff,{\cal T}}$, and the descendants of $V_{\prune,J}\cup V_{\aff,{\cal T}}$ in the affiliated forest.

\paragraph{The Size of $V_{\prune}$.} We first bound the number of pruned core vertices (i.e. $|V_{\prune,J}|$) as follows. Recall the procedure constructing $V_{\prune,J}$. At the beginning of this procedure, $|V_{\prune,J}| = |V_{\aff,J}|\leq |V_{\aff}| \leq 2|E_{F}|$. Then the procedure goes from the maximum level to the first level. At each level $j$ and each group $B\in {\cal B}_{j}^{(i-1)}$, we will add the whole group $B$ into $V_{\prune,J}$ if currently $|B\cap V_{\prune,J}|\geq |B|/(100t)$. Because the groups at level $j$ are disjoint, $|V_{\prune,J}|$ will blow up by a factor of at most $100t$ after processing level $j$. Finally, we have $|V_{\prune,J}|\leq (100t_{\local})^{O(1/\epsilon^{4})}|E_{F}| \leq t_{\local}^{O(1/\epsilon^{4})}|E_{F}|$ because the number of levels is at most $O(1/\epsilon^{4})$.

Next, we can easily bound the number of pruned vertices by $|V_{\prune}|\leq (|V_{\aff,{\cal T}}| + |V_{\prune,J}|)\cdot \tau^{(i)}\leq t_{\local}^{O(1/\epsilon^{4})}\cdot 2^{O(i)}\cdot|E_{F}|$. Here $\tau^{(i)}$ is the maximum size of affiliated trees in ${\cal T}^{(i)}$ (trivially $\tau^{(i)}\leq 2^{i}$). The first inequality is because each affected affiliated vertex and pruned core vertex has at most $\tau^{(i)}$ descendants.

\paragraph{The Matching Insertion Update.} When $\pi^{(i)} = M$ is a matching between new vertices $V_{\new}$ and a subset of old vertices, we let all new vertices be affiliated vertices, and keep the core graph $J^{(i)} = J^{(i-1)}$ unchanged (also ${\cal B}^{(i)} = {\cal B}^{(i-1)}$). Then we add these new affiliated vertices to the affiliated forest, i.e. ${\cal T}^{(i)} = {\cal T}^{(i-1)}\cup M$. In other words, we just update $R^{(i)}$ to be $R^{(i-1)}\cup M$. %

\paragraph{The Update Time.} The matching insertion update obviously takes $O(|\pi^{(i)}|)$ time. The edge deletion update time can be upper bounded by $O(|V_{\prune,J}|/\epsilon^{4})$ because at each level we just need to scan the current $V_{\prune,J}$, and the number of levels is at most $O(1/\epsilon^{4})$. We have shown $|V_{\prune,J}| \leq t_{\local}^{O(1/\epsilon^{4})}\cdot 2^{O(i)}\cdot|\pi^{(i)}|$ and the number of levels is $O(1/\epsilon^{4})$, so the update time is $V_{\prune,J}\cdot O(1/\epsilon^{4}) = t_{\local}^{O(1/\epsilon^{4})}\cdot 2^{O(i)}\cdot|\pi^{(i)}|$.

\paragraph{The Quality of the Router.} We now show that $R^{(i)}$ is an $(h^{(i)}_{\rt},\gamma^{(i)}_{\rt})$-router for node-weighting $\mathds{1}(V)$ with $h^{(i)}_{\rt} = 2^{O(1/\epsilon^{4})} + O(i)$ and $\gamma^{(i)}_{\rt} = 2^{O(1/\epsilon^{4}+i)}$. Before proving this, we first show that each group $B\in B^{(i)}$ only has a small fraction of pruned vertices (i.e. \Cref{ob:PathReportGroupSurvive}), which implies a matching between two surviving groups still has a large fraction of surviving matching edges (i.e. \Cref{ob:PathReportMatchingSurvive}).

\begin{observation}
For each group $B\in B^{(i)}$, $|B\setminus V(J^{(i)})|\leq 0.01|B|$.
\label{ob:PathReportGroupSurvive}
\end{observation}
\begin{proof}
Because $B\in B^{(i)}$, by our update algorithm, $B$ is not dropped in all edge deletion update so far, which means at most $|B|/(100t)$ many vertices are pruned in each previous edge deletion update. Therefore, the total number of pruned vertices in $B$ is at most $i\cdot|B|/(100t)\leq 0.01|B|$.
\end{proof}

\begin{observation}
For each pair of children $(B_{1},B_{2})$ of some group $B\in {\cal B}^{(i)}$, $|M_{B_{1},B_{2}}\cap J^{(i)}|\geq 0.8\max\{|B_{1}|,|B_{2}|\}$.
\label{ob:PathReportMatchingSurvive}
\end{observation}
\begin{proof}
By \Cref{ob:PathReportGroupSurvive}, we can show that $M_{B_{1},B_{2}}$ still has large fraction of surviving matching edges, i.e.
\begin{align*}
|M_{B_{1},B_{2}}\cap J^{(i)}|&\geq |M_{B_{1},B_{2}}| - \max\{|B_{1}\setminus V(J^{(i)})|,|B_{2}\setminus V(J^{(i)})|\}\\
&\geq \min\{|B_{1}|,|B_{2}|\} - 0.01\max\{|B_{1}|,|B_{2}|\}\\
&\geq 0.8\max\{|B_{1}|,|B_{2}|\},
\end{align*}
where the last inequality is because $|B_{1}|$ and $|B_{2}|$ differ by at most 1 and $|B_{1}|,|B_{2}|\geq 10$ (since $B$ has children which implies $|B|\geq 10 b$).
\end{proof}

\begin{lemma}
An arbitrary $\mathds{1}(V(J^{(i)}))$-respecting demand can be routed on the core graph $J^{(i)}$ with length $2^{O(1/\epsilon^{4})}$ and congestion $2^{O(1/\epsilon^{4})}$.
\label{lemma:CoreGraphRouting}
\end{lemma}
\begin{proof}
We will show this by induction. Let $\bar{j} = O(1/\epsilon^{4})$ be the number of levels. Let the induction hypothesis at level $j$ be that, for each group $B\in{\cal B}^{(i)}_{j}$, an arbitrary $\mathds{1}(V(J^{(i)})\cap B)$-respecting demand can be routed on the subgraph $J^{(i)}\cap B$ with length $4^{\bar{j}-j+1}$ and congestion $4^{\bar{j}-j+1}$. In the base case when $j$ is the maximum level, the hypothesis trivially holds because each group $B$ at the maximum level is a leaf in the hierarchy, which means the subgraph $J^{(i)}\cap B$ contains a clique on vertices $V(R^{(i)})\cap B$. 

Consider an arbitrary level $j$ and assume the hypothesis holds for level $j+1$. For each group $B\in {\cal B}^{(i)}_{j}$, if $B$ is a leaf, then the hypothesis holds for $B$ by the same reason. Otherwise, $B$ has several children in $B^{(i-1)}$, and all surviving vertices $V(J^{(i)})\cap B$ are contained in these children. Let $D$ be an arbitrary $\mathds{1}(V(J^{(i)})\cap B)$-respecting demand on $V(J^{(i)})\cap B$. We decompose $D$ into inner-children demands and inter-children demands, i.e.
\[
D = \sum_{\text{children }B_{1}\text{ of }B} D_{\mid B_{1}} + \sum_{\text{children pairs }(B_{1},B_{2})}D_{\mid (B_{1},B_{2})},
\]
where $D_{\mid B_{1}}$ collect all demand pairs of $D$ with both endpoints in $B_{1}$, and $D_{B_{1},B_{2}}$ collect demand pairs with one endpoint in $B_{1}$ and the other one in $B_{2}$. 

For each inner-children demand $D_{\mid B_{1}}$, the hypothesis tells that $D_{\mid B_{1}}$ can be routed on $J^{(i)}\cap B_{1}$ with length $4^{\bar{j}-j}$ and congestion $4^{\bar{j}-j}$. Because the children of $B$ are disjoint, routing all inner-children demands $\sum D_{\mid B_{1}}$ has length $4^{\bar{j}-j}$ and congestion $4^{\bar{j}-j}$ on $J^{(i)}\cap B$.

For each inter-children demand $D_{\mid(B_{1},B_{2})}$ and each demand pair $(u_{1},u_{2})\in \supp(D_{\mid(B_{1},B_{2})})$ (s.t. $u_{1}\in B_{1}$ and $u_{2}\in B_{2}$), we will route it through the surviving matching edges in $M_{B_{1},B_{2}}\cap J^{(i)}$ evenly. Concretely, for each $e=(w_{1},w_{2})\in M_{B_{1},B_{2}}\cap J^{(i)}$ (where $w_{1}\in B_{1}$ and $w_{2}\in B_{2}$), we will route $D_{\mid(B_{1},B_{2})}(u_{1},u_{2})/|M_{B_{1},B_{2}}\cap J^{(i)}|$ units of demand through $e$, and to complete the routing path, we will request the children $B_{1},B_{2}$ to route $D_{\mid(B_{1},B_{2})}(u_{1},u_{2})/|M_{B_{1},B_{2}}\cap J^{(i)}|$ units from $u_{1}$ to $w_{1}$ and $w_{2}$ to $u_{2}$ in $J^{(i)}\cap B_{1}$ and $J^{(i)}\cap B_{2}$ respectively. 

Now we show the length and congestion of the above routing of the total inter-children demand $\sum D_{\mid(B_{1},B_{2})}$ on $J^{(i)}\cap B$. For each matching $M_{B_{1},B_{2}}$, the congestion of each surviving matching edge can be bounded by
\begin{align*}
\sum_{(u_{1},u_{2})\in \supp(D_{\mid(B_{1},B_{2})})} D_{\mid(B_{1},B_{2})}(u_{1},u_{2})/|M_{B_{1},B_{2}}\cap J^{(i)}|\leq \min\{|B_{1}|,|B_{2}|\}/(0.8\max\{|B_{1}|,|B_{2}|\})\leq 2,
\end{align*}
where the first inequality uses \Cref{ob:PathReportMatchingSurvive}.
For each child $B_{1}$ and each surviving vertex $v_{1}\in J^{(i)}\cap B_{1}$, the newly requested inner-children demand involving $v_{1}$ can be bounded as follows. When $v$ acts as $u_{1}$, the contribution is at most
\[
\sum_{\text{children }B_{2}\text{ of }B}\ \sum_{v_{2}\in B_{2}\cap V(J^{(i)})} D_{\mid(B_{1},B_{2})}(v_{1},v_{2}) \leq \sum_{v_{2}\in (B\setminus B_{1})\cap V(J^{(i)})} D(v_{1},v_{2})\leq 1 
\]
because $D$ is a $\mathds{1}(B\cap V(J^{(i)}))$-respecting demand. When $v_{1}$ acts as $w_{1}$ (i.e. $v_{1}$ incident to some surviving matching edges), the contribution is at most
\begin{align*}
&~~~~\sum_{\text{children }B_{2}\text{ of }B}\ \sum_{(u_{1},u_{2})\in \supp(D_{\mid(B_{1},B_{2})})} D_{\mid(B_{1},B_{2})}(u_{1},u_{2})/|M_{B_{1},B_{2}}\cap J^{(i)}|\\
&\leq \sum_{u_{1}\in B_{1}\cap V(J^{(i)})}\ \sum_{u_{2}\in (B\setminus B_{1})\cap V(J^{(i)})} D(u_{1},u_{2})/(0.8|B_{1}|)\\
&\leq 2.
\end{align*}
In the first-line expression, the first summation is enumerating the possible surviving matching edges incident to $v$ (at most one for each other children $B_{2}$), and the second summation is bounding the demands routed through this edge. The first inequality use \Cref{ob:PathReportMatchingSurvive}. The above argument shows that the newly request inner-children demand for each child $B_{1}$ is $3\cdot \mathds{1}(B_{1}\cap V(J^{(i)}))$-respecting, so it can be routed on $J^{(i)}\cap B_{1}$ with length $4^{\bar{j}-j}$ and congestion $3\cdot 4^{\bar{j}-j}$. Finally, concatenate the routing on surviving matching edges and the routing of newly requested inner-children demands, the inter-children demand $\sum D_{\mid(B_{1},B_{2})}$ can be routed on $J^{(i)}\cap B$ with length $2\cdot 4^{\bar{j}-j}+1\leq 4^{\bar{j}-j+1}$ and congestion $3\cdot 4^{\bar{j}-j}$.

Combining two parts, $D$ can be routed on $B\cap J^{(i)}$ with length $4^{\bar{j}-j+1}$ and congestion $(1+3)\cdot 4^{\bar{j}-j} = 4^{\bar{j}-j+1}$ as desired.

\end{proof}

We now ready to show the quality of the router. Recall that $\tau^{(i)}$ denote the maximum size of trees in ${\cal T}^{(i)}$, and trivially we have $\tau^{(i)}\leq 2^{i}$. Consider an arbitrary $\mathds{1}(V(R^{(i)}))$-respecting demand $D$. We will route $D$ in two phases. In the first phase, we route all demands on affiliated vertices to core vertices along the affiliated trees, which will cause length $h_{\cal T} = i$ and congestion $\gamma_{\cal T} = \tau^{(i)} = 2^{i}$ because the affiliated trees in ${\cal T}^{(i)}$ have maximum depth $i$. The first phase will reduce the original demand $D$ to a new demand $D_{\core}$ on core vertices. Note that $D_{\core}$ is a $\tau^{(i)}\cdot \mathds{1}(V(J^{(i)}))$-respecting demand on $J^{(i)}$. In the second phase, we route the demand $D_{\core}$ on the core graph $J^{(i)}$ using \Cref{lemma:CoreGraphRouting}, which has length $h_{J} = 2^{O(1/\epsilon^{4})}$ and congestion $\gamma_{J} = 2^{O(1/\epsilon^{4}+i)}$. By concatenating the routings in two phases, we can obtain a routing of $D$ on $R^{(i)}$ with length $h_{\rt} = 2\cdot h_{\cal T} + h_{J} = 2^{O(1/\epsilon^{4})} + O(i)$ and congestion $\gamma_{\rt} = \gamma_{\cal T} + \gamma_{J} = 2^{O(1/\epsilon^{4} + i)}$.

\paragraph{The Path Reporting Query.} Finally, we describe the path reporting query algorithm. Let $u,v\in V(R^{(i)})$ be the input vertices. Let $u_{J},v_{J}$ be the roots of the affiliated trees containing $u$ and $v$. The path between $u$ and $u_{J}$ and the path between $v$ and $v_{J}$ can be easily outputted by working on the affiliated trees, so it remains to output a path connecting $u_{J}$ and $v_{J}$ on $J^{(i)}$.

The idea to output a path connecting $u_{J}$ and $v_{J}$ is similar (but simpler) to the proof of the quality of the router. The algorithm is recursive. In each recursive step, our input is a group $B$ and two vertices $u_{1},u_{2}\in B\cap V(J^{(i)})$ (the input of the initial recursive step is $u_{J},v_{J}$ and the unique group $B=V\in {\cal B}^{(i)}_{1}$, which obviously satisfies the requirement), and the task is to output a path connecting $u_{1}$ and $v_{2}$ in $B\cap J^{(i)}$. If $B$ is a leaf in the group hierarchy, there is a clique on $B\cap V(J^{(i)})$, so we can just return the edge connecting $u_{1}$ and $v_{2}$. Otherwise, $B$ has several children. If there is a child $B_{1}$ of $B$ containing both $u_{1},u_{2}$, it suffices to invoke a recursive step with input $(B_{1},u_{1},v_{2})$. From now suppose $u_{1}$ and $u_{2}$ are inside two different children $B_{1}$ and $B_{2}$. We pick an arbitrary surviving matching edge $e=(w_{1},w_{2})\in M_{B_{1},B_{2}}\cap J^{(i)}$ (s.t. $w_{1}\in B_{1}$ and $w_{2}\in B_{2}$). Then we invoke two recursive steps with inputs $(B_{1},u_{1},w_{1})$ and $(B_{2},w_{2},u_{2})$ respectively, which output two paths from $u_{1}$ to $w_{1}$ and from $w_{2}$ to $u_{2}$. Concatenating these two paths and the matching edge $e$, we obtain a path connecting $u_{1}$ and $u_{2}$ in $B\cap J^{(i)}$. 

The above algorithm compute a path connecting $u_{J}$ and $v_{J}$ on $J^{(i)}$ with length $2^{O(1/\epsilon^{4})}$, because the group hierarchy has at most $O(1/\epsilon^{4})$ levels. Hence finally we obtain a path connecting $u$ and $v$ with length at most $2^{O(1/\epsilon^{4})} + O(i)$. The path-reporting time is $O(|P|/\epsilon^{4})$ because finding each edge takes time proportional to the number of levels.

\end{proof}

\section{Dynamic Length-Constrained Expander Decomposition}
\label{sect:DynamicCertifiedED}

In this section, we establish an algorithm to maintain a (weak) length-constrained expander decomposition $C$ on an online-batch dynamic graph $G$. In fact, instead of just the integral moving cut $C$, we will maintain an object called \emph{certified expander decomposition} (see \Cref{def:CertifiedED}), which is slightly different with the notion of witnessed-ED in \Cref{def:WitnessedED}. The main theorem of this section is \Cref{thm:DynamicED}.

\begin{definition}[Certified Expander Decomposition]
Let $G$ be a graph with a node-weighting $A$. A \textit{certified expander decomposition} (\textit{certified-ED} for short) of $A$ on $G$ is denoted by $(C, L,{\cal N},{\cal R},\Pi_{{\cal R}\to G})$ where

\begin{itemize}
\item $C$ is an integral moving cut on $G$.
\item $L$ is a landmark set of $C$ on $G$ with distortion $\sigma$.%
\item ${\cal N}$ denotes a $b$-distributed $(h_{\cov}, h_{\sep}, \omega)$-neighborhood cover of $A$ on $G-C$ with $h_{\cov} = h_{\sep} = h$.
\item ${\cal R}=\{R^{S}\mid S\in{\cal N}\}$ is a collection of routers s.t. for each cluster $S\in{\cal N}$, $R^{S}$ is a router with $V(R^{S})\supseteq S$, and it is maintained by \Cref{thm:Router} under at most $f$ update batches. In particular, each vertex in $V(R^{S})$ is either \textit{redundant} or corresponding to a distinct virtual node in $S$. 
\item $\Pi_{{\cal R}\to G}$ denotes an embedding of all routers (i.e. $\bigcup_{R\in{\cal R}} R$) into $G$ with length $h_{\emb}$ and congestion $\gamma_{\emb}$. In this embedding, each router vertex $v_{R}$ is mapped to the vertex $v_{G}\in V(G)$ which owns the corresponding virtual node of $v_{R}$ (if $v_{R}$ is a redundant vertex, then it can be mapped to an arbitrary vertex $v_{G}\in V(G)$).
\end{itemize}
The parameters $\sigma,h_{\trace},\gamma_{\trace},b,h,\omega,f,h_{\emb},\gamma_{\emb}$ are called \textit{quality parameters}, which are part of the parameters of the certified expander decomposition, and we will store them explicitly. In particular, we call $h$ the \emph{length parameter} of this certified-ED. The theorem statements in this section will refer to these parameters. We omit them in the tuple $(C,L,{\cal N},{\cal R},\Pi_{{\cal R}\to G})$ just to avoid clutter.
\label{def:CertifiedED}
\end{definition}

By exploiting elementary data structures like binary search trees, all accesses to certified-EDs in our update algorithm can be done in polylogarithmic time, which is negligible because we allow $n^{O(\epsilon)}$ overhead in the update time. However, in \Cref{sect:FastAccessToCertifiedED} we will design access interfaces with constant access time, in order to achieve 
sublogarithmic query time.

The main differences between our certified-ED and the witnessed-ED in \Cref{def:WitnessedED} are as follows. First, the clusters in our neighborhood cover 
are sets of virtual nodes in $A$ rather than vertices in $V(G)$. Second, the routers in certified-EDs have router vertex corresponding to virtual nodes, and they are instances of our dynamic router algorithm \Cref{thm:Router}. Third, we store an additional landmark set of $C$, which is for reducing the recourse.

For better understanding, we point out that a certified-ED essentially implies some kind of flow characterization of the graph $G$ with moving cut $C$ by \Cref{lemma:CertifiedEDToFlow}. However, Strictly speaking, $C$ is \emph{not} a length-constrained expander decomposition of $G$, because the flow characterization from \Cref{lemma:CertifiedEDToFlow} cannot guarantee that $G-C$ is a length-constrained expander. Precisely, \Cref{lemma:CertifiedEDToFlow} only talks about routability on $G$, while a length-constrained expander decomposition $C$ requires routability on $G-C$. Nonetheless, this \emph{weak} form of length-constrained expander decomposition is sufficient in our scenario, because indeed we only need short paths on $G$ (not $G-C$) to certify pairwise distances. In fact, we only require the length guarantee from \Cref{lemma:CertifiedEDToDist} to obtain distance approximations, and the congestion guarantee of the embedding $\Pi_{{\cal R}\to G}$ is purely for fast update time.

\begin{lemma}
Let $G$ be a graph with a node-weighting $A$. If there exists a certified-ED $(C,L,{\cal N},{\cal R},\Pi_{{\cal R}\to G})$ of $A$ on $G$, then any $h$-length $A$-respecting demand on $G-C$ can be routed on $G$ with length $\lambda_{\rt,h}(f)\cdot h_{\emb}$ and congestion $\kappa_{\rt,\gamma}(f)\cdot \gamma_{\emb}$. 
\label{lemma:CertifiedEDToFlow}
\end{lemma}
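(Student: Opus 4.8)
The plan is to reuse the two-level routing strategy of \Cref{lemma:WitnessToFlow}: first route the demand through the collection of routers $\mathcal{R}$, then push that routing down to $G$ through the embedding $\Pi_{\mathcal{R}\to G}$. The only new ingredients relative to \Cref{lemma:WitnessToFlow} are that the neighborhood cover and the routers of a certified-ED live on $G-C$ and are dynamic-router instances after at most $f$ batches, so the per-router length and congestion become $\lambda_{\rt,h}(f)$ and $\kappa_{\rt,\gamma}(f)$.

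Concretely, let $D$ be an $h$-length $A$-respecting demand on $G-C$. Since $D$ is $A$-respecting, I would first view it as a demand $D^{\virtual}$ between virtual nodes of $A$ in which every virtual node carries total load at most $1$ and each pair of virtual nodes $u,v\in\supp(D^{\virtual})$ still satisfies $\dist_{G-C}(u,v)\le h$ (split the load at each vertex among its virtual nodes by a standard bipartite assignment, using $h$-length-ness and $A$-respecting-ness). By the covering-radius property of $\mathcal{N}$ on $G-C$ with $h_{\cov}=h$, each such pair lies in some cluster $S\in\mathcal{N}$; assign it to one such $S$, producing for every cluster a demand $D^{\virtual}_{S}$ supported on $S$ that is $\mathds{1}(S)$-respecting (each pair is charged to a single cluster). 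Extending $D^{\virtual}_{S}$ by zero on the redundant vertices of $R^{S}$ makes it a $\mathds{1}(V(R^{S}))$-respecting demand on the router $R^{S}$.

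By \Cref{def:CertifiedED} each $R^{S}$ is maintained by \Cref{thm:Router} under at most $f$ batched updates, and since the global parameters $\lambda_{\rt,h}(\cdot)$ and $\kappa_{\rt,\gamma}(\cdot)$ of \Cref{thm:Router} are nondecreasing in the batch index, $R^{S}$ is an $(\lambda_{\rt,h}(f),\kappa_{\rt,\gamma}(f))$-router for $\mathds{1}(V(R^{S}))$; hence $D^{\virtual}_{S}$ routes on $R^{S}$ with length $\lambda_{\rt,h}(f)$ and congestion $\kappa_{\rt,\gamma}(f)$. The routers in $\mathcal{R}$ are edge-disjoint, so concatenating these routings yields a routing of $D^{\virtual}$, hence of $D$, on $\bigcup_{R\in\mathcal{R}}R$ with length $\lambda_{\rt,h}(f)$ and congestion $\kappa_{\rt,\gamma}(f)$. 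Finally I would substitute every router edge used by this routing with its embedding path under $\Pi_{\mathcal{R}\to G}$ — which has length $h_{\emb}$, congestion $\gamma_{\emb}$, and maps each router vertex to the vertex of $G$ owning the corresponding virtual node — to obtain a routing of $D$ on $G$ with length $\lambda_{\rt,h}(f)\cdot h_{\emb}$ and congestion $\kappa_{\rt,\gamma}(f)\cdot\gamma_{\emb}$ (shortcutting any non-simple concatenated path, which increases neither length nor congestion).

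I do not expect a genuine obstacle here; the argument is essentially a transcription of \Cref{lemma:WitnessToFlow}. The only points requiring a little care are the virtual-node bookkeeping — passing from the $A$-respecting demand $D$ on $V(G)$ to the per-cluster $\mathds{1}(S)$-respecting demands $D^{\virtual}_{S}$ and handling the redundant router vertices allowed by \Cref{def:CertifiedED} — and remembering to invoke the covering radius of $\mathcal{N}$ on $G-C$ rather than on $G$.
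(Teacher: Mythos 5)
Your proposal is correct and is exactly the argument the paper intends: the paper explicitly omits this proof with a pointer to \Cref{lemma:WitnessToFlow}, and your write-up is a careful transcription of that proof with the two adaptations the change of setting forces — using the covering radius of $\mathcal{N}$ on $G-C$ (rather than $G$) and invoking \Cref{thm:Router} with the monotone functions $\lambda_{\rt,h}(\cdot)$, $\kappa_{\rt,\gamma}(\cdot)$ for routers after at most $f$ batches. The extra virtual-node bookkeeping (splitting the $A$-respecting demand into a $\mathds{1}(A)$-respecting demand on virtual nodes, and extending by zero on redundant router vertices) is the right technical glue, and your handling of those points is sound.
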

The proof of \Cref{lemma:CertifiedEDToFlow} is quite similar to the proof of \Cref{lemma:WitnessToFlow}, so we omit it here.

\begin{lemma}
Let $G$ be a graph with a node-weighting $A$ and a certified-ED $(C,L,{\cal N},{\cal R},\Pi_{{\cal R}\to G})$ of $A$ on $G$. For any pair of vertices $u,v\in V(G)$ s.t. there exists a cluster $S\in{\cal N}$ with $u,v\in \supp(S)$, we have $\dist_{G}(u,v)\leq \lambda_{\rt,h}(f)\cdot h_{\emb}$.
\label{lemma:CertifiedEDToDist}
\end{lemma}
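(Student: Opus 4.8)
The plan is to lift a short path inside the router $R^{S}$ to a short walk in $G$ via the embedding $\Pi_{{\cal R}\to G}$, mirroring the single-pair version of the argument in \Cref{lemma:WitnessToFlow}.

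First I would use the hypothesis $u,v\in\supp(S)$ to fix virtual nodes $\hat u,\hat v\in S$ owned by $u$ and $v$ respectively; since $V(R^{S})\supseteq S$, these are (non-redundant) vertices of the router $R^{S}$. Because $R^{S}$ is maintained by \Cref{thm:Router} under at most $f$ update batches, I would invoke the path-reporting guarantee of \Cref{thm:RouterPathReport}: it outputs a path $P_{R}$ in $R^{S}$ connecting $\hat u$ and $\hat v$ with $|P_{R}|\le \lambda_{\rt,h}(i)$, where $i\le f$ is the number of batches processed so far; since $\lambda_{\rt,h}(i)=2^{O(1/\epsilon^{4})}+O(i)$ is nondecreasing in $i$, this is at most $\lambda_{\rt,h}(f)$. (Alternatively, one can observe that the unit demand placing one unit between $\hat u$ and $\hat v$ is $\mathds{1}(V(R^{S}))$-respecting, hence routable in $R^{S}$ with length $\lambda_{\rt,h}(f)$ by \Cref{def:router}, and then extract a single flow path of that routing.)

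Next I would replace each edge $e=(a,b)$ of $P_{R}$ by its embedding path $\Pi_{{\cal R}\to G}(e)$, which by definition of the embedding is a path in $G$ from $\tau(a)$ to $\tau(b)$ of length at most $h_{\emb}$. Concatenating these along $P_{R}$ yields a walk $W$ in $G$. Since the endpoints $\hat u,\hat v$ of $P_{R}$ are non-redundant, the mapping $\tau$ sends them to the vertices owning their virtual nodes, i.e.\ $\tau(\hat u)=u$ and $\tau(\hat v)=v$; hence $W$ connects $u$ to $v$ in $G$, and $\ell_{G}(W)\le |P_{R}|\cdot h_{\emb}\le \lambda_{\rt,h}(f)\cdot h_{\emb}$. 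Therefore $\dist_{G}(u,v)\le \lambda_{\rt,h}(f)\cdot h_{\emb}$, as claimed.

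There is no serious obstacle here; the only points needing a little care are (i) checking the endpoint-mapping convention so that the lifted walk really starts at $u$ and ends at $v$ — the interior vertices of $P_{R}$ may be redundant router vertices, but this does not affect connectivity of $W$; and (ii) using the correct, monotone "$f$ batches" value $\lambda_{\rt,h}(f)$ of the router length parameter, together with the fact (crucial to the whole approach) that $\Pi_{{\cal R}\to G}$ is an embedding into $G$ itself rather than into $G-C$, so that the resulting bound is on $\dist_{G}$.
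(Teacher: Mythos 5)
Your proof is correct and takes the same approach the paper intends: the paper's one-line proof ("each router has diameter at most $\lambda_{\rt,h}(f)$ and it can be embedded into $G$ with length $h_{\emb}$") is exactly the argument you spell out — lift a short router path connecting the virtual nodes of $u$ and $v$ through the embedding $\Pi_{{\cal R}\to G}$ and multiply the bounds. Your two side remarks (the monotonicity of $\lambda_{\rt,h}(\cdot)$ and the fact that the embedding is into $G$ rather than $G-C$) are the right details to check, and both are correctly resolved.
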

\begin{proof}
This is because each router has diameter at most $\lambda_{\rt,h}(f)$ and it can be embedded into $G$ with length $h_{\emb}$.
\end{proof}

\Cref{thm:DynamicED} concludes our online-batch dynamic algorithm for a dynamic graph $G$ with a dynamic node-weighting $A$. We note that the batched updates for $(G,A)$ are defined slightly different from those just for a graph. Precisely, in \Cref{thm:DynamicED}, each batched update $\pi^{(i)}$ contains five types of unit updates: edge insertions/deletions, isolated vertex insertions/deletions and node-weighting insertions. Without loss of generality, we assume the unit updates in $\pi^{(i)}$ have the same type.
\begin{itemize}
\item When $\pi^{(i)}$ represents batched edge insertions, we are given an edge set $E_{\new}$, which updates $E(G^{(i)}) = E(G^{(i-1)})\cup E_{\new}$, and we say $\pi^{(i)}$ has size $|\pi^{(i)}| = |E_{\new}|$.
\item When $\pi^{(i)}$ represents batched edge deletions, we are given an edge set $F\subseteq E(G^{(i-1)})$, which updates $E(G^{(i)}) = E(G^{(i-1)})\setminus F$. The size is $|\pi^{(i)}| = |F|$.
\item When $\pi^{(i)}$ represents batched isolated vertex insertions, the input is a vertex set $V_{\new}$ with a node-weighting $A_{\new}$ on it, which updates $V(G^{(i)}) = V(G^{(i-1)})\cup V_{\new}$ and $A^{(i)} = A^{(i-1)}\uplus A_{\new}$.\footnote{Recall the definition of operation $\uplus$ from \Cref{sect:PairwiseCovers}.} The size is $|\pi^{(i)}| = \sum_{v\in V_{\new}} A_{\new}(v)$.
\item When $\pi^{(i)}$ represents batched isolated vertex deletions, the input is a set $V_{\del}\subseteq V(G^{(i-1)})$ of isolated vertices on $G^{(i-1)}$, which updates $V(G^{(i)}) = V(G^{(i-1)})\setminus V_{\new}$ and $A^{(i)} = (A^{(i-1)})_{\mid V(G^{(i)})}$. The size is $|\pi^{(i)}| = \sum_{v\in V_{\del}} A^{(i-1)}(v)$.
\item When $\pi^{(i)}$ represents batched node-weighting insertions, the input is a node-weighting $A_{\new}$ on $V(G^{(i-1)})$, which updates $A^{(i)} = A^{(i-1)} + A_{\new}$. The size is $\pi^{(i)} = |A_{\new}|$.
\end{itemize}

Particularly, a batched isolated vertex insertions/deletions update of $(G,A)$ has size depending on the node weighting. However, when we apply \Cref{thm:DynamicED} in the following sessions, the node-weighting $A$ is actually always bounded by a constant, so the size of batched isolated vertex insertions/deletions will still be linear to the size of vertex set as usual.

\begin{theorem}
Let $G$ be a dynamic graph with an incremental node-weighting $A$ s.t. $A^{(0)}\geq \mathds{1}(V(G^{(0)}))$ under $t$ batches of updates $\pi^{(1)},\pi^{(2)},...,\pi^{(t)}$ of edge insertions/deletions, isolated vertex insertions/deletions and node-weighting insertions. 
Given parameters $h$ and $\phi$, there is an algorithm that maintains a certified-ED $(C,L,{\cal N},{\cal R},\Pi_{{\cal R}\to G})$ of some $\bar{A}$ on $G$ satisfying the following.
\begin{itemize}
\item At any time $0\leq i\leq t$, 
$\bar{A}^{(i)} = A^{(i)} + \deg_{G^{(i)}}\geq \deg_{G^{(i)}} + \mathds{1}(V(G^{(i)}))$.
\item The integral moving cut $C$ has initial size $|C^{(0)}| \leq O(|A^{(0)}|\cdot\phi\cdot\lambda_{\dynED,h}\cdot h)$, where \[
\lambda_{\dynED,h} = 2^{O(t)} = 2^{O(1/\epsilon)}.
\]
\item The landmark set $L$ has initial size $|L^{(0)}| \leq O(|A^{(0)}|\cdot\phi)$. Over all time, $L$ has distortion $\sigma = O(h\cdot\lambda_{\dynED,h}/\kappa_{\sigma})$, where $\kappa_{\sigma} = n^{O(\epsilon^{4})}$ is from \Cref{thm:cutmatch}.

\item ${\cal N}$ is a $b$-distributed $(h_{\cov},h_{\sep},\omega)$-neighborhood cover of $\bar{A}$ on $G-C$ with
\[
b = O(t),\ 
h_{\cov} = h_{\sep} = h,\ \omega = t\cdot\kappa_{\PC,\omega},
\]
where $\kappa_{\NC,\omega} = n^{O(\epsilon^{4})}$ is from \Cref{thm:NeighborhoodCovers}.
\item Each router in ${\cal R}$ is maintained by \Cref{thm:Router} under at most $f = O(t)$ batched updates.
\item $\Pi_{{\cal R}\to G}$ is a $(h_{\emb},\gamma_{\emb})$-embedding with
\begin{align*}
&h_{\emb} = O(\lambda_{\init,h}\cdot\lambda_{\dynED,h}\cdot h),\text{ and}\\
&\gamma_{\emb} = O(\kappa_{\dynED,\gamma}/\phi),
\end{align*}
where $\lambda_{\init,h} = 2^{\poly(1/\epsilon)}$ will be defined in \Cref{thm:InitCertifiedED} and
\[
\kappa_{\dynED,\gamma} = (\lambda_{\rt,\prune}(t) \cdot (t\cdot \kappa_{\PC,\omega}\cdot\kappa_{\CM,\gamma}+\kappa_{\init,\gamma})\cdot(\kappa_{\CM,L} + \kappa_{\init,L}))^{O(t)} = n^{O(\epsilon^{3})}.
\]
\end{itemize}
Furthermore, at each update step $\pi^{(i)}$,
\begin{itemize}
\item $C^{(i)} = (C^{(i-1)})_{\mid E(G^{(i)})} + C^{(i)}_{\new}$ with $|C^{(i)}_{\new}| \leq O(|\pi^{(i)}|\cdot\lambda_{\dynED,h}\cdot h)$.
\item $L^{(i)} = (L^{(i-1)}\cap V(G^{(i)})) \cup L^{(i)}_{\new}$ with $|L^{(i)}_{\new}|\leq O(|\pi^{(i)}|)$.
\item 
the recourse from ${\cal N}^{(i-1)}$ and ${\cal N}^{(i)}$ is
\[
\recourse({\cal N}^{(i-1)}\to {\cal N}^{(i)}) = O(\lambda_{\rt,\prune}(t)\cdot t\cdot\kappa_{\PC,\omega}\cdot\kappa_{\dynED,\gamma}\cdot|\pi^{(i)}|/\phi) = n^{O(\epsilon^{3})}\cdot|\pi^{(i)}|/\phi.
\]
\end{itemize}
The initialization time is $(|G^{(0)}| + |A^{(0)}|)\cdot \poly(h)\cdot n^{O(\epsilon)}$. The update time for a batch $\pi^{(i)}$ is
$|\pi^{(i)}|\cdot\poly(h)\cdot n^{O(\epsilon)}/\phi$.

\label{thm:DynamicED}
\end{theorem}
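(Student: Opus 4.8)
We maintain the certified-ED $(C,L,{\cal N},{\cal R},\Pi_{{\cal R}\to G})$ of $\bar A := A+\deg_G$ at all times, handling each \emph{pure} batch type separately (a mixed batch splits into $O(1)$ pure ones, see \Cref{sect:OnelineBatch}). The initial structure is produced by the static initialization of certified-EDs (\Cref{thm:InitCertifiedED}, together with the landmark construction of \Cref{thm:InitLandmark}) applied with node-weighting $\bar A^{(0)}$; this already supplies the claimed bounds on $|C^{(0)}|$, $\omega$, $h_{\emb}$, $\gamma_{\emb}$, $|L^{(0)}|$, $\sigma$, in time $(|G^{(0)}|+|A^{(0)}|)\cdot\poly(h)\cdot n^{O(\epsilon)}$. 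Insertions are easy: for an edge-insertion batch $E_{\new}$ we set $C_{\new}(e)=\lambda_{\dynED,h}\cdot h$ on every $e\in E_{\new}$ so that no new edge can lie on an $h$-length path of $G-C$ (hence ${\cal N},{\cal R},\Pi$ stay a valid certified-ED of the new $\bar A$ on the new graph), we add $V(E_{\new})$ to $L$ (valid by \Cref{def:Landmark}, as these are the heavy new cut edges), and $|C_{\new}|\le O(|\pi^{(i)}|\lambda_{\dynED,h}h)$, $|L_{\new}|\le O(|\pi^{(i)}|)$ with $O(|\pi^{(i)}|)$ recourse on ${\cal N}$; isolated-vertex insertions create singleton clusters with trivial one-vertex routers; isolated-vertex deletions drop those clusters; a node-weighting insertion at $v$ is folded into the reattachment mechanism below, treating the new virtual nodes of $v$ as a fresh node-weighting to be matched into the structure.

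\textbf{Edge deletions.} Let $\pi^{(i)}=F\subseteq E(G^{(i-1)})$. Since $\Pi_{{\cal R}\to G}$ has congestion $\gamma_{\emb}=O(\kappa_{\dynED,\gamma}/\phi)$, deleting $F$ destroys at most $|F|\gamma_{\emb}$ embedding paths, i.e.\ that many router edges across $\bigcup_{R\in{\cal R}}R$. We feed these destroyed edges, router by router, as one batched edge deletion into \Cref{thm:Router}; it returns a pruned set $V_{\prune}$ with $|V_{\prune}|\le\lambda_{\rt,\prune}(f)\cdot|F|\gamma_{\emb}$ and leaves $(C',L',{\cal N}',{\cal R}',\Pi'_{{\cal R}'\to G})$, with $C'=C$, which is a certified-ED of $\bar A':=\bar A-V_{\prune}$ on $G-C$ (the neighborhood-cover, router, and embedding properties all survive restriction by \Cref{ob:CoversOnSubgraph} and by the structure of \Cref{thm:Router}). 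We then run the local cutmatch of \Cref{thm:cutmatch} with source node-weighting counting pruned virtual nodes per vertex and sink node-weighting counting surviving ones — their sum is $\bar A\ge\deg_G+\mathds 1(V(G))$, meeting the hypothesis — with parameters $h_{\CM}=\Theta(h)$, $\phi_{\CM}=\Theta(\phi)$, obtaining: a matching $M_{\CM}$ of $V_{\prune,M}$ into $V'_M$ of length $O(h)$ and congestion $O(\kappa_{\CM,\gamma}/\phi)$; an integral cut $C_{\CM}$ of size $O(\phi h|V_{\prune}|)$ that $h_{\CM}$-separates $V_{\prune,U}$ from $V'_U$; and a landmark set $L_{\CM}$ of $C_{\CM}$. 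The matched part is reattached by embedding $M_{\CM}$ via the cutmatch paths and performing, for each cluster receiving matched virtual nodes, a matching-insertion into the corresponding router (\Cref{thm:Router}), extending $\Pi'$ accordingly. For the unmatched part we initialize a \emph{fresh} certified-ED via \Cref{thm:InitCertifiedED} on the small graph $G[\supp(V_{\prune})\cup\supp(V'_M)]-C$, which is legitimate because $V_{\prune,U}$ is $h_{\CM}$-separated from the rest. Finally we compose the two certified-EDs (concatenating covers/router collections/embeddings, increasing the width by an additive $\kappa_{\PC,\omega}$ and the batch count $f$ by one), set $C\gets C+C_{\CM}$, and $L\gets L'\cup L_{\CM}\cup(\text{landmarks of the fresh ED})$, which is a valid landmark set of the new $C$ by \Cref{lemma:LandmarkUnion} and \Cref{lemma:LandmarkEdgeDel}.

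\textbf{Bookkeeping.} Each edge-deletion batch degrades the length parameters (router lengths, $h_{\emb}$) by a constant factor and increments $f$, so after $t=O(1/\epsilon)$ batches the length slack is $\lambda_{\dynED,h}=2^{O(t)}$; the embedding congestion is multiplied per batch by a factor $\lambda_{\rt,\prune}(t)\cdot(t\kappa_{\PC,\omega}\kappa_{\CM,\gamma}+\kappa_{\init,\gamma})\cdot(\kappa_{\CM,L}+\kappa_{\init,L})$ arising from router pruning, from re-embedding the cutmatch and the fresh ED, and from promoting the extra landmarks, yielding $\gamma_{\emb}=O(\kappa_{\dynED,\gamma}/\phi)$ with $\kappa_{\dynED,\gamma}=(\cdots)^{O(t)}=n^{O(\epsilon^{3})}$. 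The recourse of ${\cal N}$ in one batch is $O(|V_{\prune}|\cdot\omega)$ (removal and reinsertion) plus $O(|V_{\prune,U}|\cdot\kappa_{\PC,\omega})$ (the fresh ED), i.e.\ $O(\lambda_{\rt,\prune}(t)\,t\,\kappa_{\PC,\omega}\,\kappa_{\dynED,\gamma}\,|\pi^{(i)}|/\phi)=n^{O(\epsilon^{3})}|\pi^{(i)}|/\phi$, while $|C_{\new}|,|L_{\new}|$ inherit the cutmatch size bounds. The update time is dominated by the router updates on $O(|\pi^{(i)}|\poly(h)/\phi)$ data (\Cref{thm:Router}), the $\poly(h_{\CM})|V_{\prune}|/\phi_{\CM}$ cutmatch (\Cref{thm:cutmatch}), and the $\poly(h)n^{O(\epsilon)}$-per-vertex re-initialization on the $O(|\pi^{(i)}|\poly(h)/\phi)$-size local graph, all bounded by $|\pi^{(i)}|\poly(h)n^{O(\epsilon)}/\phi$.

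\textbf{Main obstacle.} The crux — and the source of the $(\cdot)^{O(t)}$ blowups — is the composition step. The hard part will be verifying that splicing the reattached structure with the freshly initialized local certified-ED genuinely yields a \emph{single} certified-ED of the new $\bar A$ satisfying all four witness properties simultaneously: covering radius $h$ (now on $G-C$, crucially using that $C_{\CM}$ was \emph{added into} $C$, so that a pair straddling $V_{\prune}$ and $V'$ that fails to be covered by either piece is instead $h$-separated and hence needs no common cluster), routers with $V(R^S)\supseteq S$ maintained under $\le f$ batches, a \emph{distributed} neighborhood cover whose distributed factor remains $O(t)$ after $t$ rounds of local surgery, and a single global $(h_{\emb},\gamma_{\emb})$-embedding. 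Interleaving the landmark maintenance so that the factor $h$ never enters the recourse of ${\cal N}$ (the reason $L$ is tracked alongside $C$ from the outset) and showing that no single batch touches more than $\poly(h)|\pi^{(i)}|/\phi$ of the graph are the remaining delicate points.
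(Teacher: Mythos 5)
Your proposal follows essentially the same strategy as the paper's proof: initialize via \Cref{thm:InitCertifiedED}; block inserted edges with a large moving cut; under deletions, use the low-congestion embedding to prune routers via \Cref{thm:Router}, reattach the matched part via a local cutmatch (\Cref{thm:cutmatch}), initialize a fresh certified-ED for the unmatched part on the induced local subgraph, and compose; track landmarks via \Cref{lemma:LandmarkUnion} and \Cref{lemma:LandmarkEdgeDel}; and bound everything by accumulating multiplicative factors over $t=O(1/\epsilon)$ batches. The one presentational difference is modularity: the paper factors the ``prune then reattach via cutmatch plus fresh local ED'' pattern into a single reusable primitive, batched node-weighting insertion (\Cref{thm:NodeWeightingInsertion}), which both the edge-deletion (\Cref{thm:EdgeDeletion}) and edge-insertion (\Cref{thm:EdgeInsertion}) handlers invoke as a black box, whereas you carry out the cutmatch and local re-initialization inline inside the edge-deletion branch and treat edge insertions somewhat separately. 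Your stated $O(|\pi^{(i)}|)$ recourse for the edge-insertion batch is slightly too optimistic — once the node-weighting insertion $\deg_{E_{\new}}$ needed to preserve $\bar A\geq\deg_G+\mathds 1(V(G))$ is accounted for, the recourse is $O(|\pi^{(i)}|\cdot(\omega+\kappa_{\PC,\omega}))$, though this is subsumed by the final $n^{O(\epsilon^3)}\cdot|\pi^{(i)}|/\phi$ bound, so it does not affect the conclusion. Your identified ``main obstacle'' — verifying that composition yields a single certified-ED with covering radius $h$ and distributed factor $O(t)$, using that $C_{\CM}$ was added into $C$ so cross-boundary pairs either get covered or get $h$-separated — is precisely what the paper's \Cref{lemma:ProperiesOfTildeN} does via a five-case analysis.
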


We establish the proof of \Cref{thm:DynamicED} as follows. In \Cref{sect:InitED}, we show the initialize algorithm of certified-EDs. In \Cref{sect:DynEDNodeWeightingInsertion,sect:DynEDEdgeDeletion,sect:DynEDEdgeInsertion}, we show the update algorithms for one batched update of pure node-weighting insertions, edge deletions and edge insertions respectively. Finally, in \Cref{sect:ProofOfDynED}, we complete the proof of \Cref{thm:DynamicED}.

\subsection{Initialization}
\label{sect:InitED}

Our initialization algorithm of certified-EDs is \Cref{thm:InitCertifiedED}, which basically first computes a witnessed-ED by \Cref{thm:WitnessedED}, embeds our dynamic routers on it using expander routing \Cref{thm:WitnessExpanderRouting}, and initialzes the landmark set using \Cref{thm:InitLandmark}.%

\begin{lemma}[Initialization of Landmarks]
Let $G$ be a graph with a node-weighting $A\geq \mathds{1}(V(G))$ and an $(h,h_{\cal R},\gamma_{\cal R},h_{\Pi},\gamma_{\Pi})$-witnessed-ED $(C,{\cal N}_{\wit},{\cal R}_{\wit},\Pi_{{\cal R}_{\wit}\to G-C})$. Let $\omega_{\wit}$ be the width of ${\cal N}_{\wit}$. Given a parameter $\sigma\leq h$, there is an algorithm that computes a new integral moving cut $C'=C_{\mid\supp(C')}$, i.e. $C'$ is the restriction of $C$ on some edge set $\supp(C')\subseteq \supp(C)$, and a landmark set $L$ satisfying the following.
\begin{itemize}
\item $L$ is a landmark set of $C'$ with distortion $\sigma$ and size $|L| = O(\lambda_{\PC,\diam}\cdot \kappa_{\PC,\omega}\cdot |C'|/\sigma)$.%
\item For any simple path $P$ on $G$ with $\ell_{G-C}(P)\geq \sigma$, we have $\ell_{G-C}(P)/2\leq \ell_{G-C'}(P)\leq \ell_{G-C}(P)$.
\end{itemize}
The running time is $(|G| + |A|)\cdot n^{O(\epsilon)}$.%
\label{thm:InitLandmark}
\end{lemma}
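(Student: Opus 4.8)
The plan is to construct the landmark set $L$ and the restricted cut $C'$ by a path-covering/decomposition argument, closely mirroring the landmark construction inside the local cutmatch proof (the "Construction of Landmarks" paragraph in \Cref{sect:CutMatch}). First I would extract from the witness $({\cal N}_{\wit},{\cal R}_{\wit},\Pi_{{\cal R}_{\wit}\to G-C})$ a small family ${\cal P}$ of low-congestion paths on $G-C$ that \emph{covers} $\supp(C)$: since every router $R^S$ has diameter $\le h_{\cal R}$ and embeds into $G-C$ with congestion $\gamma_\Pi$, and since ${\cal N}_{\wit}$ is a neighborhood cover with covering radius $h$ of $A\ge\mathds 1(V(G))$, for every $C$-vertex $v$ there is a cluster $S$ containing $v$ together with (a representative of) one of its neighbors, hence an embedding path of some router edge passing through $v$. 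Collecting these embedding paths (one per router edge, with congestion $\gamma_\Pi\le\gamma_{\mathrm{emb}}$ and length $\le h_{\Pi}$) gives ${\cal P}$ with $\sum_{P\in{\cal P}}|P|\le |E({\cal R}_{\wit})|\cdot h_\Pi = |G|\cdot n^{O(\epsilon)}$, and every edge of $\supp(C)$ lies on some $P\in{\cal P}$.

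Next I would decompose each $P\in{\cal P}$ relative to the length function $\ell_{G-C}$ into \emph{heavy edges} (edges $e$ with $\ell_{G-C}(e)\ge\sigma/3$, say) and \emph{segments} (maximal subpaths of $G-C$-length in $[\sigma/3,\sigma]$), exactly as in \Cref{sect:CutMatch}; concatenate the final short leftover piece of each subpath onto its predecessor so every segment has $G-C$-length in $[\sigma/3,\sigma]$. Because each $P$ has $\ell_{G-C}(P)\le h_\Pi$, the number of heavy edges plus segments on $P$ is $O(h_\Pi/\sigma)$, and summing over $|{\cal P}|\le|E({\cal R}_{\wit})|$ paths with width $\omega_{\wit}=\kappa_{\PC,\omega}$ gives the claimed bound $|L|=O(\lambda_{\PC,\diam}\cdot\kappa_{\PC,\omega}\cdot|C'|/\sigma)$ once we relate $|{\cal P}|\cdot h_\Pi$ to $|C'|$ — here I expect to need $|C'|$ proportional to the number of cut edges actually "witnessed" by ${\cal P}$, which is why we pass to $C'=C_{\mid\supp(C')}$ rather than all of $C$. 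Then $L$ collects both endpoints of every heavy edge and one arbitrary interior vertex of every segment; by construction every $C'$-vertex $v$ either is an endpoint of a heavy edge (so $v\in L$) or lies on a segment whose chosen landmark is within $G-C'$-distance $\le\sigma$ of $v$ (using $\ell_{G-C'}\le\ell_{G-C}$), which is precisely the landmark condition of \Cref{def:Landmark} at distortion $\sigma$.

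For the second bullet, I would set $C'(e)=C(e)$ exactly on the edges retained (and $0$ elsewhere), so $\ell_{G-C'}(e)\le\ell_{G-C}(e)$ trivially gives the upper bound on any path; for the lower bound $\ell_{G-C'}(P)\ge\ell_{G-C}(P)/2$ on paths $P$ with $\ell_{G-C}(P)\ge\sigma$, the point is that the only edges where $C'$ and $C$ differ are edges $e$ not covered by ${\cal P}$, and for such edges one argues they cannot carry more than half the $G-C$-length of any $\sigma$-long path — this will follow from choosing which edges to keep in $\supp(C')$ greedily so that every edge with $\ell_{G-C}(e)\ge\sigma/3$ or with a "large" $C$-value that could dominate a path is retained, i.e. the discarded edges each have $\ell_{G-C}(e)<\sigma/3\le\ell_{G-C}(P)/3$, and there are not too many of them on a single path. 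Concretely, $P$ decomposes into heavy edges (retained, hence unchanged under $C'$) and segments of $G-C$-length $\le\sigma$; a path of total length $\ge\sigma$ meeting only segments and retained edges loses at most a bounded fraction of its length when passing to $C'$. The main obstacle I anticipate is exactly this bookkeeping: pinning down the definition of $\supp(C')$ so that \emph{simultaneously} (i) $|C'|$ is large enough that the $|L|$ bound $O(\lambda_{\PC,\diam}\kappa_{\PC,\omega}|C'|/\sigma)$ holds (i.e. ${\cal P}$ really does witness $\Omega(|C'|/h_\Pi)$ units of cut, exploiting the length slack $\lambda_{\PC,\diam}$ relating cluster diameter to $h$), and (ii) the discarded part $C-C'$ is negligible on every $\sigma$-long path. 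Getting a clean choice that threads both requirements, together with checking the running time is dominated by the single-source shortest-path / embedding-path scans at $(|G|+|A|)\cdot n^{O(\epsilon)}$, is where the real work lies; everything else is a routine adaptation of the cutmatch landmark construction.
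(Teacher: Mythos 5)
Your approach is genuinely different from the paper's, and unfortunately it has a structural flaw that the paper's approach is specifically designed to avoid.

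The critical gap is the coverage claim at the start: that the family $\mathcal{P}$ of embedding paths of router edges covers $\supp(C)$. This will not hold. The embedding $\Pi_{\mathcal{R}_{\wit}\to G-C}$ maps each router edge to a path in $G-C$ with total $G-C$-length at most $h_\Pi$. But edges in $\supp(C)$ are precisely the edges whose $G-C$-length has been inflated by the cut; a short low-congestion embedding path in $G-C$ will generically route \emph{around} such edges, not through them. Your reasoning (``there is a cluster $S$ containing $v$ together with a neighbor of $v$, hence an embedding path through $v$'') has two holes: (i) a neighbor $w$ of $v$ across a cut edge $(v,w)$ need not be within $G-C$-distance $h$ of $v$, so no cluster need contain both; and (ii) even if $v,w\in S$ and some router edge in $R^S$ has endpoints mapped to $v$ and $w$, its embedding path connects $v$ to $w$ in $G-C$ but need not use the edge $(v,w)$ at all, and in fact is incentivised not to. In the cutmatch setting this coverage \emph{is} guaranteed, but only because of the specific property (3) in Lemma 4.15 that the flow–cut pair $(\bar f_i,\bar w_i)$ returned by the local flow algorithm saturates every edge with nonnegligible cut weight; nothing analogous is available for a generic witnessed-ED. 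Consequently you cannot argue that the dropped cut $C-C'$ is small on $\sigma$-long paths, and the relationship $|\mathcal{P}|\cdot h_\Pi = \Theta(|C'|)$ that you need for the size bound on $L$ also has no source.

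The paper avoids this entirely: it never touches the witness's routers or embedding. Instead it builds a \emph{fresh} neighborhood cover $\mathcal{N}$ of $V(G)$ on $G-C$ at scale $h_{\cov}=\sigma/\lambda_{\PC,\diam}$ (via Theorem 3.7), and then works cluster by cluster. A cluster $S$ keeps all of its internal cut mass $C(E_{C,S})$ in $C'$ iff that mass is at least $h_{\cov}/10$, contributing $O(C(E_{C,S})/h_{\cov})$ landmarks drawn from $S$; separately, individual edges that are long and heavily cut are kept with their endpoints as landmarks. The distortion bound then comes from the cluster diameter $\sigma$, the size bound from the cluster width $\kappa_{\PC,\omega}$ (each cut edge is counted at most $\omega$ times), and the path-length bound from a charging argument: for an edge $e$ on $P$ with small $\ell_{G-C}(e)$, its dropped cut value $C(e)$ is charged to a window of $P$ of $G-C$-length about $h_{\cov}$ around $e$, and each window is covered by at most two clusters of $\mathcal{N}$, whose retained/discarded decision bounds the charge. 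If you want to salvage your approach you would need a source of paths that provably visit every edge of $\supp(C)$ with controlled multiplicity, which the witnessed-ED object does not supply; switching to the cover-based construction is the more natural fix.
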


\begin{proof}
We first construct a neighborhood cover ${\cal N}$ of vertices $V(G)$ (i.e. of node-weighting $\mathds{1}(V(G))$) on $G-C$ with 
\[
h_{\cov} = \sigma/\lambda_{\PC,\diam},\  h_{\diam} = \sigma\text{ and }\omega = \kappa_{\PC,\omega}
\]
by applying \Cref{thm:NeighborhoodCovers}. Next, we construct $C'$ and $L$ from $C$ as follows. We will select some edges $e\in\supp(C)$ (and set $C'(e) = C(e)$) into the new cut $C'$ by the following two steps.
\begin{enumerate}
\item\label{line:InitLM1} For each cluster $S\in{\cal N}$, let $E_{C,S}$ denote the edges in $\supp(C)$ with \emph{both} endpoints in $S$. If their total cut values $C(E_{C,S})\geq h_{\cov}/10$, we select $E_{C,S}$ into $C'$. Then we pick arbitrary $d(S) = \lceil|V(E_{C,S})|/(h_{\cov}/10)\rceil$ many vertices in $S$ as landmarks into $L$ (we say these $d(S)$ landmarks are \textit{corresponding} to $S$). Note that we can choose such many vertices in $S$ because $|S|\geq |V(E_{C,S})|\geq \lceil|V(E_{C,S})|/(h_{\cov}/10)\rceil$.
\item\label{line:InitLM2} Also, for each edge $e=(u,v)$ with $\ell_{G-C}(e)\geq h_{\cov}/10$ and $C(e)\geq \ell_{G-C}(e)/10$, we select $e$ into $C'$ with $C'(e) = C(e)$ and add its endpoints $u$ and $v$ into $L$. 
\end{enumerate}
In other words, for each cut edge $e\in\supp(C)$ not satisfying $\ell_{G-C}(e)\geq h_{\cov}/10$ and $C(e)\geq \ell_{G-C}(e)/10$ (i.e. the conditions in step \ref{line:InitLM2}), $e$ will be excluded from $C'$ if all clusters $S$ covering $e$ have small total cut values $C(E_{C,S})< h_\cov/10$. 

\paragraph{Correctness.} The distortion and size of the $L$ constructed above is shown by \Cref{claim:InitLandmarkDistortion}. The second guarantee is shown by \Cref{claim:InitLMSimilarDistanceMetric}.

\begin{claim}
$L$ is a landmark set of $C'$ on $G$ with distortion $\sigma = h/\kappa_{\sigma}$. The size of $L$ is bounded by $|L| \leq O(\lambda_{\PC,\diam}\cdot\kappa_{\PC,\omega}\cdot\kappa_{\sigma}\cdot |C'|/h)$.
\label{claim:InitLandmarkDistortion}
\end{claim}
\begin{proof}
For each edge $e\in \supp(C')$ with $\ell_{G-C'}(e)> \sigma$, note that $e$ must be added in step \ref{line:InitLM2} because $\ell_{G-C}(e) = \ell_{G-C'}(e)> \sigma\geq h_{\diam}$. Hence its endpoints will be added to $L$.

For each edge $e\in \supp(C')$ with $\ell_{G-C'}(e)\leq \sigma$, if it is added to $C'$ in step \ref{line:InitLM2}, then trivially both its endpoints have themselves as landmarks. If $e$ is added in step \ref{line:InitLM1} via some cluster $S$, then the landmarks from this cluster will have $(G-C)$-distance at most $h_{\diam}\leq \sigma$ to both $e$'s endpoints. 

Next we bound the size of $L$. In step \ref{line:InitLM1}, each cluster $S\in {\cal N}$ will add $d(S)$ landmarks if $C(E_{C,S})\geq h_{\cov}/10$. Because $C(E_{C,S})\geq h_{\cov}/10$ and $|V(E_{C,S})|\leq 2|E_{C,S}|\leq 2\cdot C(E_{C,S})$, we have $d(S)\leq O(C(E_{C,S})/h_{\cov})$. Furthermore, because each edge $e\in \supp(C)$ will appear in the set $E_{C,S}$ of at most $\omega$ clusters $S$ (one from each clustering), we have 
\[
\sum_{\substack{S\in{\cal N}\text{ 
s.t. }\\C(E_{C,S})\geq h_{\cov}/10}}C(E_{C,S}) = \sum_{\substack{S\in{\cal N}\text{ 
s.t. }\\C(E_{C,S})\geq h_{\cov}/10}}C'(E_{C,S}) \leq \omega\cdot |C'|,
\]
where the first equality is because we add $E_{C,S}$ to $C'$ with the same cut values for all $S$ s.t. $C(E_{C,S})\geq h_{\cov/10}$.
Therefore, the total number of landmarks added in step \ref{line:InitLM1} can be bounded by
\[
\sum_{\substack{S\in{\cal N}\text{ 
s.t. }\\C(E_{C,S})\geq h_{\cov}/10}} d(S)
\leq O(\sum_{\substack{S\in{\cal N}\text{ 
s.t. }\\C(E_{C,S})\geq h_{\cov}/10}}C(E_{C,S})/h_{\cov}) \leq O(\omega\cdot|C'|/h_{\cov}).
\]
For step \ref{line:InitLM2}, each edge that puts its endpoints into $L$ have $C'(e) = C(e) \geq \ell_{G-C}(e)/10\geq h_{\cov}/100$. Hence the number of landmarks added in step \ref{line:InitLM2} is bounded by $O(|C'|/h_{\cov})$. In conclusion
\[
|L|\leq O(\omega\cdot|C'|/h_{\cov}) + O(|C'|/h_{\cov}) = O(\lambda_{\PC,\diam}\cdot\kappa_{\PC,\omega}\cdot |C'|/\sigma).
\]
\end{proof}

\begin{claim}
For an arbitrary simple path $P$ with $\ell_{G-C}(P)\geq h_{\cov}$, we have $\ell_{G-C}(P)/2\leq \ell_{G-C'}(P)\leq \ell_{G-C}(P)$. 
\label{claim:InitLMSimilarDistanceMetric}
\end{claim}
\begin{proof}
First, $\ell_{G-C'}(P)\leq \ell_{G-C}(P)$ holds trivially because $C'$ is a subset of $C$. To show $\ell_{G-C}(P)/2\leq \ell_{G-C'}(P)$, consider the moving cut we drop on $P$, denoted by $C_{P,\drop} = (C-C')_{\mid P}$, and we will show it has size at most a constant fraction of $\ell_{G-C}(P)$, i.e. $C_{P,\drop}(P)\leq 1/2\cdot \ell_{G-C}(P)$. 

For each $e\in \supp(C_{P,\drop})$, if $\ell_{G-C}(e)\geq h_{\cov}/10$, it must have $C(e)<\ell_{G-C}(e)/10$, Otherwise it has been included in $C'$ in step \Cref{line:InitLM2}. Therefore,
\[
\sum_{\substack{e\in\supp(C_{P,\drop}),\\\text{ s.t. }\ell_{G-C}(e)\geq h_{\cov/10}}} C(e)\leq \ell_{G-C}(P)/10.
\]
For those edges $e\in\supp(C_{P,\drop})$ s.t. $\ell_{G-C}(e)<h_{\cov}/10$, we use a charging argument. For each such edge $e$, we charge $g(e',e) = C(e)/h_{\cov}$ dollar to each edge $e'\in P$ s.t. \underline{one} endpoint of $e'$ has $P$-distance at most $h_{\cov}$ to \underline{both} endpoints of $e$, where the $P$-distance between two vertices on $P$ is the 
$(G-C)$-length of subpath between them.  Now, let $g(e')$ be the total dollars charging to $e'$ in the whole charging process. We have
\begin{align*}
\sum_{\substack{e\in\supp(C_{P,\drop}),\\\text{s.t. }\ell_{G-C}(e)<h_{\cov}/10}} C(e)&\leq \sum_{\substack{e\in\supp(C_{P,\drop}),\\\text{s.t. }\ell_{G-C}(e)<h_{\cov}/10}}\sum_{e'\in P\text{ charged by $e$}}(C(e)/h_{\cov}))\cdot \ell_{G-C}(e')\leq \sum_{e'\in P}\ell_{G-C}(e')\cdot g(e')
\end{align*}
where the first inequality holds because the edges $e'$ charged by $e$ have total $(G-C)$-length at least $h_{\cov}$, which is true because $\ell_{G-C}(P)\geq h_{\cov}$ and $\ell_{G-C}(e)<h_{\cov}/10 \le h_\cov$\footnote{For better understanding, here we need $\ell_{G-C}(e)\leq h_{\cov}$ because there would have been no edge $e'$ charged by $e$ if $\ell_{G-C}(e) > h_{\cov}$.}.

Then, all we need is an upper bound on $g(e')$ for each $e'=(u',v')\in P$. By our charging process, each edge $e=(u,v)$ that charges $e'$ must have both $u$ and $v$ with $(G-C)$-distance to $u'$ (or $v'$) at most $h_{\cov}$. Let $S_{u'}$ and $S_{v'}$ be the clusters in ${\cal N}$ covering $\Ball_{G-C}(u',h_{\cov})$ and $\Ball_{G-C}(v',h_{\cov})$ respectively. Then either $e\in E_{C,S_{u'}}$ or $e\in E_{C,S_{v'}}$. Therefore,
\begin{align*}
g(e')&\leq \sum_{e\in E_{C,S_{u'}}\cap \supp(C_{P,\drop})} g(e',e) + \sum_{e\in E_{C,S_{v'}}\cap \supp(C_{P,\drop})} g(e',e)\\
& =  (\sum_{e\in E_{C,S_{u'}\cap \supp(C_{P,\drop})}} C(e) + \sum_{e\in E_{C,S_{v'}}\cap \supp(C_{P,\drop})} C(e))/h_{\cov}\\
&\leq (2h_{\cov}/10)/h_{\cov} = 1/5.
\end{align*}
The last inequality is because each cluster $S$ in ${\cal N}$ will either keep all moving cuts in $C'$ or drop all of them if $C(E_{C,S})<h_{\cov}/10$.

Finally, we can conclude 
\[
\sum_{e\in\supp(C_{P,\drop})}C(e)\leq \ell_{G-C}(P)\cdot(1/10+1/5)\leq \ell_{G-C}(P)/2,
\]
so $\ell_{G-C'}(P) = \ell_{G-C}(P)-C_{P,\drop}(P)\geq \ell_{G-C}(P)/2$.

\end{proof}

\paragraph{The Running Time.} The construction of ${\cal N}$ takes $(|G| + |A|)\cdot n^{O(\epsilon)}$ time by \Cref{thm:NeighborhoodCovers}. The construction of $C'$ and $L$ just needs to scan ${\cal N}$ and $|E(G)|$, which takes $\wtilde{O}(\size({\cal N}) + |E(G)|) \leq \wtilde{O}(\kappa_{\PC,\omega}\cdot |G|) = |G|\cdot n^{O(\epsilon)}$ time. %

\end{proof}

\begin{lemma}[Initialization of Certified ED]
Given a graph $G$ with a node-weighting $A\geq \mathds{1}(V(G))$ and parameters $h$ and $\phi$, there is an algorithm that computes a certified-ED $(C, L,{\cal N},{\cal R},\Pi_{{\cal R}\to G})$ of $A$ on $G$ satisfying the following.
\begin{itemize}
\item $|C|\leq  O(\kappa_{\init,C}\cdot\phi\cdot h\cdot|A|)$ where $\kappa_{\init,C} = \kappa_{\ED,C}\cdot\lambda_{\PC,\diam} = n^{O(\epsilon^{4})}$.
\item $L$ has distortion $\sigma = h/\kappa_{\sigma}$ and size $|L| = O(\kappa_{\init,L}\cdot\phi\cdot|A|)$%
, where
\begin{align*}
\kappa_{\init,L} &= \lambda_{\PC,\diam}\cdot\kappa_{\PC,\omega}\cdot\kappa_{\init,C}\cdot\kappa_{\sigma} = n^{O(\epsilon^{4})}.
\end{align*}
\item ${\cal N}$ is a $b$-distributed $(h_{\cov},h_{\sep},h_{\diam},\omega)$-neighborhood cover of $A$ on $G-C$ with 
\[b = 1,\ h_{\cov}=h,\ h_{\sep}=h,\text{ and }\omega = \kappa_{\PC,\omega},
\]
where $\kappa_{\PC,\omega} = n^{O(\epsilon^{4})}$ is from \Cref{thm:NeighborhoodCovers}.
Furthermore, if a cluster $S$ includes a virtual node of some vertex $v$, then $S$ includes all virtual nodes of $v$.
\item Routers ${\cal R}$ are initialized by \Cref{thm:Router} (undergoing $f=0$ batched update).%
\item Embedding $\Pi_{{\cal R}\to G}$ has length $h_{\emb} = O(\lambda_{\init,h}\cdot h)$ and congestion $\gamma_{\emb} = O(\kappa_{\init,\gamma}/\phi)$ where
\begin{align*}
\kappa_{\init,\gamma} &= \kappa_{\PC,\omega}\cdot\kappa_{\rou,\deg}\cdot \kappa_{\emb,\gamma} = n^{O(\epsilon^{4})},\\
\lambda_{\init,h} &= \lambda_{\emb,h}\cdot \lambda^{2}_{\ED,h}\cdot \lambda_{\PC,\diam} = 2^{\poly(1/\epsilon)}.
\end{align*}
\end{itemize}
The running time is $(|G|+|A|)\cdot \poly(h)\cdot n^{O(\epsilon)}$.
\label{thm:InitCertifiedED}
\end{lemma}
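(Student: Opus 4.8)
The plan is to implement exactly the three-ingredient strategy sketched before the statement, with one extra bookkeeping step for separation: (i) extract a witnessed-ED via \Cref{thm:WitnessedED}, (ii) sparsify its cut and attach a landmark set via \Cref{thm:InitLandmark}, (iii) build a \emph{separated} neighborhood cover on the residual graph via \Cref{thm:NeighborhoodCovers}, install a fresh dynamic router (\Cref{thm:Router}) on each of its clusters, and (iv) embed the union of these routers back into $G$ using the expansion certified by the witness through \Cref{thm:WitnessExpanderRouting}. Throughout I will inflate the length scale handed to \Cref{thm:WitnessedED} so that the witness covers the diameter of the cover built in step (iii), even after the factor-$2$ metric distortion caused by step (ii).

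Concretely, first I would run \Cref{thm:WitnessedED} on $G$ with node-weighting $A$, congestion $\phi$, and inflated length parameter $h' := 2\lambda_{\PC,\diam}\cdot h$, obtaining a cut $C$ with $|C|\le\kappa_{\ED,C}h'\phi|A| = O(\kappa_{\init,C}\phi h|A|)$ and a witness $({\cal N}_{\wit},{\cal R}_{\wit},\Pi_{{\cal R}_{\wit}\to G-C})$ of $A$ on $G-C$ with covering radius $h'$, $h_{\cal R}=\lambda_{\ED,h}$, $\gamma_{\cal R}=1$, $h_{\Pi}=\lambda_{\ED,h}h'$, $\gamma_{\Pi}=1/\phi$, width $\kappa_{\PC,\omega}$, and $|E({\cal R}_{\wit})|\le|E(G)|\cdot n^{O(\epsilon)}$. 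Feeding this witnessed-ED together with distortion $\sigma := h/\kappa_{\sigma}$ into \Cref{thm:InitLandmark} produces a restriction $C':=C_{\mid\supp(C')}$ and a landmark set $L$ of $C'$ with distortion $\sigma$ and $|L|=O(\lambda_{\PC,\diam}\kappa_{\PC,\omega}|C'|/\sigma)=O(\kappa_{\init,L}\phi|A|)$, together with the guarantee $\ell_{G-C}(P)/2\le\ell_{G-C'}(P)\le\ell_{G-C}(P)$ for every path $P$ with $\ell_{G-C}(P)\ge\sigma$. From here on the certified-ED's cut is $C'$. I then apply \Cref{thm:NeighborhoodCovers} to $G-C'$ with $A$ and length parameter $h$ to get the $1$-distributed neighborhood cover ${\cal N}$ with $h_{\cov}=h_{\sep}=h$, $h_{\diam}=\lambda_{\PC,\diam}h$, width $\kappa_{\PC,\omega}$, and the "a cluster contains all or none of a vertex's virtual nodes" property required by \Cref{def:CertifiedED}; for each cluster $S\in{\cal N}$ I initialize via \Cref{thm:Router} a dynamic router $R^{S}$ for $\mathds{1}(S)$ (one vertex per virtual node of $S$), giving $h_{\rt}=\lambda_{\rt,h}(0)$, $\gamma_{\rt}=\kappa_{\rt,\gamma}(0)$, maximum degree $\kappa_{\rou,\deg}$, under $f=0$ batched updates.

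The embedding step is the delicate one. Reading the edge set of $\bigcup_{S\in{\cal N}}R^{S}$ as a demand $D_{\cal R}$ on $G-C'$ — each router edge $(u_{\virtual},v_{\virtual})$ contributing one unit of demand between the vertices owning $u_{\virtual}$ and $v_{\virtual}$ — this demand is $h_{\diam}$-length on $G-C'$ (clusters have $(G-C')$-diameter $h_{\diam}$) and $\kappa_{\rou,\deg}\cdot\kappa_{\PC,\omega}\cdot A$-respecting (a virtual node of a vertex lies in at most $\kappa_{\PC,\omega}$ clusters, with router degree at most $\kappa_{\rou,\deg}$ in each). Since clusters within one clustering are disjoint, $D_{\cal R}$ splits into $N:=O(\kappa_{\rou,\deg}\kappa_{\PC,\omega})$ sub-demands by edge-colouring the degree-$\kappa_{\rou,\deg}$ union of routers of each clustering, so each sub-demand is $\mathds{1}(V(G))$-respecting, hence $A$-respecting. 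For any path $P$ with $\ell_{G-C'}(P)\le h_{\diam}$ the \Cref{thm:InitLandmark} inequality gives $\ell_{G-C}(P)\le 2h_{\diam}=h'$ (or $\ell_{G-C}(P)<\sigma\le h'$ trivially), so each sub-demand is $h'$-length on $G-C$ and thus covered by ${\cal N}_{\wit}$; I route it on $G-C$ via \Cref{thm:WitnessExpanderRouting}, obtaining an integral routing of length $\lambda_{\emb,h}h_{\cal R}h_{\Pi}=\lambda_{\emb,h}\lambda_{\ED,h}^{2}h'=O(\lambda_{\init,h}h)$ and congestion $\kappa_{\emb,\gamma}\gamma_{\cal R}\gamma_{\Pi}=\kappa_{\emb,\gamma}/\phi$. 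Since $G$ and $G-C$ share edges and $\ell_{G}\le\ell_{G-C}$, these flow paths are paths in $G$ of the same length, and the routing's endpoint assignment is exactly the "router vertex $\mapsto$ owner vertex" map demanded by \Cref{def:CertifiedED}. Unioning over the $N$ sub-demands yields $\Pi_{{\cal R}\to G}$ with $h_{\emb}=O(\lambda_{\init,h}h)$ and $\gamma_{\emb}=N\kappa_{\emb,\gamma}/\phi=O(\kappa_{\init,\gamma}/\phi)$. Outputting $(C',L,{\cal N},{\cal R},\Pi_{{\cal R}\to G})$ finishes; the running time is dominated by \Cref{thm:WitnessedED} ($|G|\cdot\poly(h)\cdot n^{O(\epsilon)}$) and the $N=n^{O(\epsilon)}$ calls to \Cref{thm:WitnessExpanderRouting} (each $(|G|+|A|)\cdot\poly(h)\cdot n^{O(\epsilon)}$, using $|E({\cal R}_{\wit})|\le|E(G)|\cdot n^{O(\epsilon)}$, $h_{\cal R}=2^{\poly(1/\epsilon)}$), plus $(|G|+|A|)\cdot n^{O(\epsilon)}$ for \Cref{thm:InitLandmark}, \Cref{thm:NeighborhoodCovers} and all router initializations, for a total of $(|G|+|A|)\cdot\poly(h)\cdot n^{O(\epsilon)}$.

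The main obstacle I anticipate is managing the interaction between the two nested cuts $C\supseteq C'$: the only expansion available for embedding lives on $G-C$ (the witness), whereas the certified-ED's neighborhood cover must live on $G-C'$, and the two metrics differ by the factor-two distortion of \Cref{thm:InitLandmark}. This is precisely what forces the length inflation $h'=2\lambda_{\PC,\diam}h$ in the call to \Cref{thm:WitnessedED} and demands the careful argument that a cluster of ${\cal N}$ (defined on $G-C'$) remains covered by ${\cal N}_{\wit}$ (defined on $G-C$) so that \Cref{thm:WitnessExpanderRouting} applies; a secondary nuisance is controlling the $\kappa_{\rou,\deg}\cdot\kappa_{\PC,\omega}$ multiplicity of the router demand and slicing it into genuinely $A$-respecting pieces without inflating the congestion past $\kappa_{\init,\gamma}/\phi$.
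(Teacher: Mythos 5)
Your proof matches the paper's in all essential respects: you call \Cref{thm:WitnessedED} at the inflated scale $h'=2\lambda_{\PC,\diam}h$, pass the result through \Cref{thm:InitLandmark} to get the reduced cut $C'$ and landmarks, build the separated cover on $G-C'$ with \Cref{thm:NeighborhoodCovers}, install fresh dynamic routers, and embed them by routing the router-edge demand through the original witness on $G-C_{\wit}$ via \Cref{thm:WitnessExpanderRouting}, using exactly the factor-two distortion from \Cref{thm:InitLandmark} to translate cluster diameters between the two metrics.

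The one local divergence is how you handle the inflated multiplicity of the router demand. You edge-color the routers of each clustering into $O(\kappa_{\rou,\deg})$ colour classes, make $N=O(\kappa_{\rou,\deg}\kappa_{\PC,\omega})$ separate calls to \Cref{thm:WitnessExpanderRouting}, and union the routings. The paper instead observes that the witness of $A$ is automatically a $(h_{\wit},h_{\cal R},\kappa_{\rt,\deg}\kappa_{\PC,\omega}\gamma_{\cal R},h_{\Pi},\gamma_{\Pi})$-witness of $(\kappa_{\rt,\deg}\kappa_{\PC,\omega})\cdot A$ and routes the whole thing once; both give the same $O(\kappa_{\init,\gamma}/\phi)$ congestion and the same running-time order. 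Be aware of one small imprecision in your version: each colour class is a matching on \emph{virtual nodes}, so after projecting to vertices it is $A$-respecting but not $\mathds{1}(V(G))$-respecting when $A(v)>1$; you state ``$\mathds{1}(V(G))$-respecting, hence $A$-respecting'', and the correct claim is just the latter. Since \Cref{thm:WitnessExpanderRouting} only needs $A$-respecting, the conclusion and the final bound are unaffected.
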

\begin{proof}
By applying \Cref{thm:WitnessedED} on $G$ and $A$ with parameters $h_{\wit} = 2\cdot h_{\diam} = 2\cdot \lambda_{\PC,\diam}\cdot h$ and $\phi$, we can compute an $(h_{\wit},h_{\cal R},\gamma_{\cal R},h_{\Pi},\gamma_{\Pi})$-witnessed-ED $(C_{\wit},{\cal N}_{\wit},{\cal R}_{\wit}, \Pi_{{\cal R}_{\wit}\to G-C_{\wit}})$ satisfying the following.
\begin{itemize}
\item $|C_{\wit}|\leq \kappa_{\ED,C}\cdot h_{\wit}\cdot \phi \cdot |A|$.
\item ${\cal N}_{\wit}$ is a neighborhood cover of $V(G)$ on $G-C_{\wit}$ with cover radius $h_{\wit}$ and width $\omega_{\wit} = \kappa_{\PC,\omega}$.
\item $h_{\cal R} = \lambda_{\ED,h}$, $h_{\Pi} = \lambda_{\ED,h}\cdot h_{\wit}$, $\gamma_{\cal R} = 1$ and $\gamma_{\Pi} = 1/\phi$.
\end{itemize}

Next, we apply \Cref{thm:InitLandmark} on $G$ and the witnessed-ED $(C_{\wit},{\cal N}_{\wit},{\cal R}_{\wit}, \Pi_{{\cal R}_{\wit}\to G-C_{\wit}})$. The output is a new integral moving cut $C = (C_{\wit})_{\mid \supp(C)}$ and a landmark set $L$ of $C$ on $G$ with distortion $\sigma = h/\kappa_{\sigma}$ and size 
\[|L| = O(\lambda_{\PC,\diam}\cdot \kappa_{\PC,\omega}\cdot |C|/\sigma) \leq O(\lambda_{\PC,\diam}\cdot\kappa_{\PC,\omega}\cdot\kappa_{\sigma}\cdot|C_{\wit}|/h) = O(\kappa_{\init,L}\cdot\phi\cdot|A|)\]

The neighborhood cover ${\cal N}$ can be initialized by invoking \Cref{thm:NeighborhoodCovers} on graph $G-C$ with node-weighting $A$ and length parameter $h$, so it will have $h_{\cov} = h$, $h_{\sep} = h$, $\omega = \kappa_{\PC,\omega}$ and diameter $h_{\diam} = \lambda_{\PC,\diam}\cdot h$. Next, we use \Cref{thm:Router} to initialize an router $R^{S}$ with $V(R^{S}) = S$ for each cluster in $S\in {\cal N}$. In particular, initially there is no redundant vertex in the router $R^{S}$. Let ${\cal R} = \{R^{S}\mid S\in{\cal N}\}$ collect all the routers.

Lastly, we need to compute the embedding $\Pi_{{\cal R}\to G}$. To clarify, recall that each cluster $S\in {\cal N}$ is a subset of virtual nodes in $A$, so each router vertex $v_{\virtual}\in V(R^{S})$ will corresponding to a virtual node $v_{\virtual}\in A$, and when defining the embedding, we will map $v$ to the vertex $v_{\vertex}\in V(G)$ that owns $v$. Then we define a demand $D$ on vertices $V(G)$ by adding one unit of demand to $D(u_{\vertex},v_{\vertex})$ for each router edge $e=(u_{\virtual},v_{\virtual})\in E({\cal R})$. We apply \Cref{thm:WitnessExpanderRouting} on $G-C_{\wit}$ and its witness $({\cal N}_{\wit},{\cal R}_{\wit},\Pi_{{\cal R}_{\wit}\to G-C_{\wit}})$ with $D$ as the input demand. Note that the input satisfies the requirements of \Cref{thm:WitnessExpanderRouting}, because by definition $({\cal N}_{\wit},{\cal R}_{\wit},\Pi_{{\cal R}_{\wit}\to G-C_{\wit}})$ is also a $(h_{\wit},h_{\cal R},\kappa_{\rt,\deg}\cdot\kappa_{\NC,\omega}\cdot\gamma_{\cal R},\gamma_{\Pi})$-witness of $(\kappa_{\rt,\deg}\cdot \kappa_{\PC,\omega})\cdot A$ on $G-C_{\wit}$, and by \Cref{claim:DInitED}, $D$ is an $h_{\wit}$-length $((\kappa_{\rt,\deg}\cdot \kappa_{\PC,\omega})\cdot A)$-respecting demand on $G-C_{\wit}$.
The output is an integral routing of $D$ on $G-C_{\wit}$ with 
\begin{align*}
h_{\emb}&=O(\lambda_{\rr,h}\cdot h_{\cal R}\cdot h_{\Pi}) = O(\lambda_{\rr,h}\cdot\lambda^{2}_{\ED,h}\cdot\lambda_{\PC,\diam}\cdot h),\\
\gamma_{\emb}&=O(\kappa_{\rt,\deg}\cdot\kappa_{\PC,\omega}\cdot \kappa_{\rr,\gamma}\cdot\gamma_{\cal R}\cdot \gamma_{\Pi}) = O(\kappa_{\rt,\deg}\cdot\kappa_{\PC,\omega}\cdot \kappa_{\rr,\gamma}/\phi).
\end{align*}
We can simply assign flow paths of the integral routing of $D$ to each unit of demand, and then obtain the embedding $\Pi_{{\cal R}\to G}$ with the same bounds on length and congestion.

\begin{claim}
$D$ is a $((\kappa_{\rt,\deg}\cdot \kappa_{\PC,\omega})\cdot A)$-respecting demand on $G-C_{\wit}$ with length $h_{\wit} = 2\cdot h_{\diam}$.\label{claim:DInitED}
\end{claim}
\begin{proof}
For each virtual node $v_{\virtual}\in A$, it will appear in at most $\omega = \kappa_{\PC,\omega}$ many clusters of ${\cal N}$. For each cluster $S\ni v_{\virtual}$, the router vertex in $R^{S}$ corresponding to $v_{\virtual}$ has degree at most $\kappa_{\rt,\deg}$. Hence there are totally $\kappa_{\rt,\deg}\cdot \kappa_{\PC,\omega}$ many router edges connecting a virtual node $v$ summing over all routers. Then for each vertex $v_{\vertex}\in V(G)$, there are at most $\kappa_{\rt,\deg}\cdot\kappa_{\PC,\omega}\cdot A(v)$ many router edges with endpoint mapped to $v_{\vertex}$, so $D$ is $((\kappa_{\rt,\deg}\cdot\kappa_{\PC,\omega})\cdot A)$-respecting. 

Now we show that for each demand pair $(u_{\vertex},v_{\vertex})\in \supp(D)$, $\dist_{G-C_{\wit}}(u_{\vertex},v_{\vertex})\leq 2\cdot h_{\diam}$. Consider a router edge $e=(u_{\virtual},v_{\virtual})$ which adds one unit of demand to $D(u_{\vertex},v_{\vertex})$. Then $u_{\virtual}\in A(u_{\vertex})$ and $v_{\virtual}\in A(v_{\vertex})$ are virtual nodes owned by $u_{\vertex}$ and $v_{\vertex}$ respectively. Because $e$ is a router edge, there is a cluster $S\in{\cal N}$ s.t. $u_{\virtual},v_{\virtual}\in S$, which implies $\dist_{G-C}(u_{\vertex},v_{\vertex})\leq h_{\diam}$. Recall that \Cref{thm:InitLandmark} guarantees, for any simple path $P$ on $G$ with $\ell_{G-C_{\wit}}(P)\geq \sigma$, we have $\ell_{G-C_{\wit}}(P)\leq 2\cdot \ell_{G-C}(P)$. Consider the shortest $u_{\vertex}$-to-$v_{\vertex}$ path $P$ on $G-C$. If $\ell_{G-C_{\wit}}(P)\geq \sigma$, then $\dist_{G-C_{\wit}}(u_{\vertex},v_{\vertex})\leq \ell_{G-C_{\wit}}(P)\leq 2\cdot\ell_{G-C}(P)\leq 2\cdot h_{\diam}$, otherwise trivially $\dist_{G-C_{\wit}}(u_{\vertex},v_{\vertex})\leq \ell_{G-C_{\wit}}(P)<\sigma \leq 2\cdot h_{\diam}$.
\end{proof}

\paragraph{The Running Time.} 
Computing the witnessed-ED takes $|G|\cdot \poly(2\cdot h_{\diam})\cdot n^{O(\epsilon)}$ time by \Cref{thm:WitnessedED}.
Computing $C$ and its landmark set $L$ also takes $|G|\cdot n^{O(\epsilon)}$ 
time by \Cref{thm:InitLandmark}.
Constructing the neighborhood cover ${\cal N}$ takes $(|G|+|A|)\cdot n^{O(\epsilon)}$ time by \Cref{thm:NeighborhoodCovers}. Initializing the routers takes $(|A|\cdot \kappa_{\NC,\omega})\cdot n^{O(\epsilon)}$ time by \Cref{thm:Router}, because the total size of clusters is $\size({\cal N})\leq O(|A|\cdot \kappa_{\NC,\omega})$. Lastly, computing the embedding $\Pi_{{\cal R}\to G}$ takes $(|A|\cdot\kappa_{\NC,\omega} + |A|\cdot\kappa_{\rt,\deg}\cdot\kappa_{\NC,\omega}\cdot\omega_{\wit})\cdot \poly(h_{\diam})\cdot \lambda_{\ED,h}\cdot n^{O(\epsilon)}$ time by \Cref{thm:WitnessExpanderRouting}, because this step we works with node-weighting $(\kappa_{\rt,\deg}\cdot\kappa_{\NC,\omega})\cdot A$. In conclusion, the total time is $(|G|+|A|)\cdot \poly(h)\cdot n^{O(\epsilon)}$.

\end{proof}

\subsection{Batched Node-Weighting Insertion}
\label{sect:DynEDNodeWeightingInsertion}

To insert a node-weighting $A_{\new}$ into a certified-ED of an old node-weighting $A$, our strategy is to compute a cutmatch between $A_{\new}$ and $A$ using our local cutmatch algorithm by \Cref{thm:cutmatch}. $A_{\new}$ is partitioned into a matching part $A_{\new,M}$ and an unmatched part $A_{\new,U}$. Intuitively, we can just add the matching part $A_{\new,M}$ into the old certified-ED because the matching has low length and congestion. For the unmatched part, we can initialize a new certified-ED of $A_{\new,U}$ on a \emph{local} subgraph, because $A_{\new,U}$ is far from most of the old virtual nodes by adding a small new cut.

\begin{lemma}
Let $G$ be a graph with a node-weighting $A$ and a certified-ED $(C,{L},{\cal N}, {\cal R},\Pi_{{\cal R}\to G})$ of $A$ on $G$ in \Cref{def:CertifiedED}. 
Given a node-weighting $A_{\new}$ s.t. $\wtilde{A} = A + A_{\new}\geq \deg_{G} + \mathds{1}(V(G))$ and parameters $h$ and $\phi$, there is an algorithm that computes a certified-ED $(\wtilde{C},\wtilde{L},\wtilde{N},\wtilde{R},\Pi_{\wtilde{\cal R}\to G})$ of $\wtilde{A}$ on $G$ s.t.
\begin{itemize}
\item $\wtilde{C}=C+C_{\new}$ with $|C_{\new}| \leq O(\kappa_{\init,C}\cdot \phi\cdot h\cdot |A_{\new}|)$.
\item $\wtilde{L} = L \cup {L}_{\new}$ with distortion $\wtilde{\sigma} = \sigma + O(h/\kappa_{\sigma})$ and $|{L}_{\new}| = O((\kappa_{\CM,L} + \kappa_{\init,L})\cdot \phi\cdot|A_{\new}|)$.
\item $\wtilde{\cal N}$ is a $\wtilde{b}$-distributed $(\wtilde{h}_{\cov},\wtilde{h}_{\sep},\wtilde{\omega})$-neighborhood cover of $\wtilde{A}$ on $G-\wtilde{C}$ with 
\[\wtilde{b} = b+1,\ \wtilde{h}_{\cov} = h/3,
\ \wtilde{h}_{\sep} = h/3,
\text{ and }\wtilde{\omega} = \omega + \kappa_{\PC,\omega}.
\]
\item Routers in $\wtilde{\cal R}$ are maintained by \Cref{thm:Router} under $\wtilde{f} = f+1$ update batches. 
\item $\Pi_{\wtilde{\cal R}\to G}$ is a $(\wtilde{h}_{\emb},\wtilde{\gamma}_{\emb})$-embedding with
\begin{align*}
&\wtilde{h}_{\emb} = \max\{h_{\emb},O(\lambda_{\init,h}\cdot h)\},\\
&\wtilde{\gamma}_{\emb} = \gamma_{\emb} + O((\omega\cdot\kappa_{\CM,\gamma} + \kappa_{\init,\gamma})/\phi).
\end{align*}
\end{itemize}
The recourse from ${\cal N}$ to $\wtilde{\cal N}$ is $\recourse({\cal N}\to\wtilde{\cal N}) = O(|A_{\new}|\cdot(\omega + \kappa_{\PC,\omega}))$.
The running time is $|A_{\new}|\cdot \poly(h)\cdot n^{O(\epsilon)}\cdot (1/\phi +2^{O(f)}\cdot \omega)$.

\label{thm:NodeWeightingInsertion}

\end{lemma}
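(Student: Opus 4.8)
The plan is to realize the informal strategy sketched right before the lemma statement: compute a cutmatch between the new node-weighting $A_{\new}$ and the old one $A$ via \Cref{thm:cutmatch}, split $A_{\new}$ into a matched part $A_{\new,M}$ and an unmatched part $A_{\new,U}$, fold the matched part directly into the existing certified-ED by inserting a matching into every router that gets touched, and then recurse on a \emph{local} subgraph for the unmatched part. First I would set up the cutmatch instance. I would run \Cref{thm:cutmatch} on $G$ with source node-weighting $A^{\src} = A_{\new}$ and sink node-weighting $A^{\sink} = \bar A := A + \deg_G$ (here I use that $\wtilde A = A + A_{\new} \ge \deg_G + \mathds{1}(V(G))$, which after adding $A_{\new}$ to both sides gives $A^{\src} + A^{\sink} = \wtilde A + \deg_G \ge \deg_G + \mathds{1}(V(G))$, matching the hypothesis of \Cref{thm:cutmatch} — and I will need \Cref{remark:GeneralCutmatch} since these are general integral functions rather than strict node-weightings), with length parameter $h_{\CM} = h/3$ and congestion parameter $\phi_{\CM} = \phi$. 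This yields partitions $A_{\new} = A_{\new,M}\uplus A_{\new,U}$ and $\bar A = A^{\sink}_M \uplus A^{\sink}_U$, an embedded matching $M_{\CM}$ of $A_{\new,M}$ into $A^{\sink}_M$ with congestion $O(\kappa_{\CM,\gamma}/\phi)$ and length $h/3$, an integral moving cut $C_{\CM}$ with $|C_{\CM}| \le O(\phi\cdot h\cdot |A_{\new}|)$ that $h/3$-separates $A_{\new,U}$ from $A^{\sink}_U$ on $G - C_{\CM}$, and a landmark set $L_{\CM}$ of $C_{\CM}$ with distortion $h/(3\kappa_\sigma)$ and $|L_{\CM}| \le \kappa_{\CM,L}\cdot\phi\cdot|A_{\new}|$.

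Next I would handle the matched part. Each matched virtual node $v \in A_{\new,M}$ is paired by $M_{\CM}$ with a virtual node $w$ in some cluster $S \in {\cal N}$ (more precisely, $w$ lies in the restriction $A^{\sink}_M$, and since ${\cal N}$ covers $\bar A$ restricted to the old vertices, $w$ sits inside some $S\in{\cal N}$). For every cluster $S$ that receives at least one such new virtual node, I would feed the induced matching into the dynamic router $R^S$ as a \emph{matching insertion} update via \Cref{thm:Router}; this is the $(f+1)$-st batched update, which is why $\wtilde f = f+1$. The router quality after this update is $(\lambda_{\rt,h}(f+1), \kappa_{\rt,\gamma}(f+1))$, still within the $2^{O(1/\epsilon^4)+O(t)}$ regime. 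I would extend the neighborhood cover by adding each new virtual node to its target cluster $S$; the covering radius worsens from $h$ to $h/3$ because a new node $v$ is only guaranteed to be within $h/3$ of its matched partner $w$, and then within $h_{\diam}$ of everything else in $S$ — so two matched-in nodes in the same cluster are within $h/3 + h/3 + h_{\diam}$, hence I keep $\wtilde h_{\cov} = \wtilde h_{\sep} = h/3$ after rescaling, and the distributed factor picks up $+1$ (a new node may be genuinely within $h/3$ of nodes originally in two different clusters of the same clustering, so I may instead need to put it in one extra cluster; that is the source of $\wtilde b = b+1$). The embedding $\Pi_{\wtilde{\cal R}\to G}$ is obtained by composing each new router edge with the corresponding $h/3$-length cutmatch path; since $M_{\CM}$ has congestion $O(\kappa_{\CM,\gamma}/\phi)$ and there are at most $\omega$ clusters containing a given vertex, the congestion increment is $O(\omega\cdot\kappa_{\CM,\gamma}/\phi)$.

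For the unmatched part I would invoke \Cref{thm:InitCertifiedED} on the \emph{local} subgraph $G' := G[\,V(A_{\new,U})\cup \supp(A^{\sink}_M)\cup N[\cdots]\,]$ — essentially the vertices reachable through the cutmatch flow — with node-weighting $A_{\new,U}$ (padded by $\deg$ so the hypothesis $\ge \mathds{1}$ holds), length parameter $h$, and congestion $\phi$. This produces a certified-ED $(C_{\init}, L_{\init}, {\cal N}_{\init}, {\cal R}_{\init}, \Pi_{{\cal R}_{\init}\to G'})$ with $|C_{\init}| \le O(\kappa_{\init,C}\cdot\phi\cdot h\cdot|A_{\new,U}|)$, $|L_{\init}| \le O(\kappa_{\init,L}\cdot\phi\cdot|A_{\new,U}|)$, width $\kappa_{\PC,\omega}$, embedding length $O(\lambda_{\init,h}\cdot h)$ and congestion $O(\kappa_{\init,\gamma}/\phi)$. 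I would then \emph{compose}: set $\wtilde C = C + C_{\CM} + C_{\init}$, $\wtilde L = L\cup L_{\CM}\cup L_{\init}$ (distortion $\sigma + h/(3\kappa_\sigma) + h/\kappa_\sigma = \sigma + O(h/\kappa_\sigma)$, using \Cref{lemma:LandmarkUnion} twice), $\wtilde{\cal N} = {\cal N}_{\mathrm{updated}} \cup {\cal N}_{\init}$, $\wtilde{\cal R}$ the union of the updated old routers and the new ones, and $\wtilde\Pi$ the union of embeddings (the new one naturally embeds into $G'\subseteq G$). The key correctness point to check is that $\wtilde{\cal N}$ is a genuine $(h/3)$-covering neighborhood cover of $\wtilde A$ on $G - \wtilde C$: for a pair $u,v$ at $(G-\wtilde C)$-distance $\le h/3$ where at least one, say $u$, is in $A_{\new,U}$, I would argue that $v$ must also be "local" — if $v\in A^{\sink}_U$ then the $h/3$-separation from $C_{\CM}$ forces $\dist_{G-\wtilde C}(u,v) > h/3$, a contradiction; so $v$ is matched or in $V^{\sink}_{\sat}$-type vertices captured by $G'$, and then ${\cal N}_{\init}$ or the extended ${\cal N}$ covers the pair. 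Width adds: $\wtilde\omega = \omega + \kappa_{\PC,\omega}$.

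The main obstacle is precisely this covering-radius verification for the composed neighborhood cover — ensuring that \emph{every} pair of $\wtilde A$-virtual-nodes within distance $h/3$ on $G - \wtilde C$ lands in a common cluster, across the three regimes (both old, one new-matched, one or both new-unmatched), while keeping the local subgraph $G'$ large enough to contain all the relevant witnessing structure but small enough (size $O(|A_{\new}|/\phi)\cdot\poly$) for the running-time bound $|A_{\new}|\cdot\poly(h)\cdot n^{O(\epsilon)}(1/\phi + 2^{O(f)}\cdot\omega)$. The $2^{O(f)}\cdot\omega$ term comes from touching up to $\omega$ routers, each of tree-size up to $2^{O(f)}$, during the matching insertions; the $1/\phi$ term from the cutmatch and the local initialization. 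The recourse bound $O(|A_{\new}|(\omega+\kappa_{\PC,\omega}))$ follows since each of the $|A_{\new,M}|$ matched nodes is inserted into $O(b+1)=O(\omega)$ clusters and each of the $|A_{\new,U}|$ unmatched nodes into $O(\kappa_{\PC,\omega})$ clusters of ${\cal N}_{\init}$, and no old virtual node is moved.
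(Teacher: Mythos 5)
Your proposal captures the right high-level skeleton — cutmatch between the new and old node-weightings, match the matched part into the existing routers, initialize a fresh certified-ED on a local subgraph for the unmatched part, then compose — and this is indeed the paper's strategy. However there are three concrete errors that break the argument.

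First and most seriously, you run the cutmatch on $G$, but it must be run on $G-C$. The cutmatch only guarantees that its matching paths have length $\le h_{\CM}$ \emph{in the graph it is run on}. The covering-radius and separation analysis for the composed cover $\wtilde{\cal N}$ needs the bound $\dist_{G-C}(u,u')\le h_{\CM}$ between a matched new node $u$ and its partner $u'$ (e.g.\ to conclude $\dist_{G-C}(u',v)\le h_{\CM}+\wtilde h_{\cov}\le h_{\cov}$ and invoke ${\cal N}$'s covering radius on $G-C$, and similarly to show separation $\dist_{G-C}(u,v)>h_{\sep}-2h_{\CM}$). A path of length $h_{\CM}$ on $G$ can have arbitrarily large length on $G-C$, since $\ell_{G-C}=\ell_G+C\ge\ell_G$, so none of these bounds survive your choice.

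Second, you take the sink node-weighting to be $A+\deg_G$ rather than $A$. The matched sink virtual nodes must live inside $A$, because the rule for extending clusters is to add a matched new node to every cluster of ${\cal N}$ that contains its matched partner — but ${\cal N}$ covers $A$, not $A+\deg_G$. A source node matched to a $\deg_G$-virtual-node has no cluster to be inserted into, and since your fresh certified-ED only covers $A_{\new,U}$ (not $A_{\new,M}$), that node is then covered by nothing. The hypothesis $\wtilde A\ge \deg_G+\mathds{1}(V(G))$ is exactly what you need to set $A^{\sink}=A$: it gives $A^{\src}+A^{\sink}=\wtilde A\ge \deg_G+\mathds{1}(V(G))$ directly.

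Third, the local initialization should not be on node-weighting $A_{\new,U}$ alone. You must cover pairs $(u,v)$ at $(G-\wtilde C)$-distance $\le h/3$ where $u\in A_{\new,U}$ and $v$ is an \emph{old} virtual node whose vertex still lies in the local region; your ${\cal N}_{\init}$ of $A_{\new,U}$ does not contain $v$, and the extended old ${\cal N}$ does not contain $u$ (since $u$ is unmatched), so neither covers the pair. The fix is to initialize on the larger node-weighting $A_2 := \wtilde A\setminus\{\text{virtual nodes of vertices in }\supp(A^{\sink}_U)\}$ over the induced subgraph $(G-C-C_{\CM})[A_2]$; this is still of size $O(|A_{\new}|)$ (the crucial fact being $|A^{\sink}_M|=O(|A^{\src}_M|)$ from the cutmatch) and it, together with the $h/3$-separation between $A^{\src}_U$ and $A^{\sink}_U$ on $G-C-C_{\CM}$, guarantees that every short path witnessing a covering requirement stays inside the local subgraph. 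Your placeholder $G[\,\dots\cup N[\cdots]\,]$ is too vague to check and, with node-weighting only $A_{\new,U}$, would not produce the needed clusters.
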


\begin{proof}

We first apply the cutmatch subroutine, \Cref{thm:cutmatch}, on $G-C$ with source node-weighting $A^{\src}=A_{\new}$, sink node-weighting $A_{\sink}=A$ and parameters $h_{\CM}=h/3$ and $\phi_{\CM} = \phi$. The output is 
\begin{itemize}
\item a partition $A^{\src}_{M},A^{\src}_{U}$ of $A^{\src}$,
\item a partition $A^{\sink}_{M},A^{\sink}_{U}$ of $A^{\sink}$ with $|A^{\sink}_{M}| = O(|A^{\src}_{M}|)$,
\item a matching $M_{\CM}$ between $A^{\src}_{M}$ and $A^{\sink}_{M}$ with size $|M_{\CM}| = |A^{\src}_{M}|$ and its $h_{\CM}$-length embedding $\Pi_{M_{\CM}\to G-C}$ which has congestion $\gamma_{\CM} = O(\kappa_{\CM,\gamma}/\phi_{\CM})$, and
\item a moving cut $C_{\CM}$ with size $|C_{\CM}|\leq \phi_{\CM}\cdot h_{\CM}\cdot |A^{\src}|$ s.t. $A^{\src}_{U}$ is $h_{\CM}$-separated from $A^{\sink}_{U}$ in $G-C-C_{\CM}$.
\item a landmark set $L_{\CM}$ of $C_{\CM}$ on $G-C$ with distortion $\sigma_{\CM} = O(h/\kappa_{\sigma})$ and size $|L_{\CM}|\leq O(\kappa_{\CM,L}\cdot\phi_{\CM}\cdot |A^{\src}|)$%
\end{itemize}

We define two node-weightings $A_{1}=A^{\sink}\cup A^{\src}_{M}=A\cup A^{\src}_{M}$ and 
\[
A_{2} =\wtilde{A}\setminus\{\text{virtual nodes of vertices in }\supp(A_{U}^{\sink})\}.
\]
For better understanding, equivalently $A_{2} = (A^{\src}\cup A_{M}^{\sink})\setminus\{\text{virtual nodes of vertices in }\supp(A_{U}^{\sink})\}$. 
Note that $A^{\src}_{U}\subseteq A_{2}$ because $\supp(A^{\src}_{U})$ is $h_{\CM}$-separated from $\supp(A^{\sink}_{U})$ (it means $\supp(A^{\src}_{U})$ and $\supp(A^{\sink}_{U})$ are disjoint). $A_{2}$ has size $|A_{2}|\leq|A_{\new}|+|A^{\sink}_{M}|\leq |A_{\new}| + O(|A^{\src}_{M}|)\leq O(|A_{\new}|)$.

We first compute $({\cal N}_{1}, {\cal R}_{1}, \Pi_{{\cal R}_{1}\to G})$ by updating $({\cal N}, {\cal R}, \Pi_{{\cal R}\to G})$. Concretely, for each original cluster $S\in{\cal N}$ and its router $R\in{\cal R}$, let $M_{S} = \{(u,v)\in M_{\CM}\mid \text{the sink virtual node $v\in S$}\}$
collect all matching edges in $M_{\CM}$ with one endpoint falling in $S$. Then we update $S$ and $R$ to $S_{1} = S\cup M^{\src}_{S}$ (recall that $M^{\src}_{S}$ denotes the source virtual nodes in $M_{S}$) and $R_{1} = R\cup M_{S}$ via a matching insertion update in \Cref{thm:Router}. Moreover, the new router $R_{1}$ has embedding $\Pi_{R_{1}\to G} = \Pi_{R\to G}\cup \Pi_{M_{S}\to G}$, where $\Pi_{M_{S}\to G}$ can be obtained from $\Pi_{M_{\CM}\to G-C}$ (note that the graphs $G$ and $G-C$ have the same structure). 

Next, we initialize a certified-ED on $(G-C-C_{\CM})[A_{2}]$ by applying \Cref{thm:InitCertifiedED} with node-weighting $A_{2}$ and parameters $\phi_{\ED} = \phi$ and $h_{\ED} = h/3$. The output $(C_{\ED}, L_{\ED}, {\cal N}_{2},{\cal R}_{2},\Pi_{{\cal R}_{2}\to G-C-C_{\CM}})$ satisfies the following.
\begin{itemize}
\item $|C_{\ED}|\leq O(\kappa_{\init,C}\cdot\phi_{\ED}\cdot h_{\ED}\cdot |A_{2}|)$.
\item $L_{\ED}$ is a landmark set of $C_{\ED}$ on $(G-C-C_{\CM})[A_{2}]$ with distortion $\sigma_{\ED} = O(h_{\ED}/\kappa_{\sigma})$ and size $|L_{\ED}| = O(\kappa_{\init,L}\cdot\phi_{\ED}\cdot |A_{2}|)$. %
Note that $L_{\ED}$ is automatically a landmark set of $C_{\ED}$ on $G-C-C_{\CM}$ with the same distortion and size.
\item ${\cal N}_{2}$ is an $(h_{\cov,2},h_{\sep,2},\omega_{2})$-neighborhood cover of $A_{2}$ on $G-C-C_{\CM}-C_{\ED}$ with $h_{\cov,2} = h_{\ED}, h_{\sep,2} = h_{\ED}, h_{\diam} = \lambda_{\PC,\diam}\cdot h_{\ED}$ and $\omega_{2} = \kappa_{\PC,\omega}$.
\item Routers in ${\cal R}_{2}$ are initialized by \Cref{thm:Router}.
\item The embedding $\Pi_{{\cal R}_{2}\to (G-C-C_{\CM})[A_{2}]}$ has congestion $\gamma_{\emb,2} = O(\kappa_{\init,\gamma}/\phi_{\ED})$ and dilation $h_{\emb,2} = O(\lambda_{\init,h}\cdot h_{\ED})$.
\end{itemize}

The new certified-ED $(\wtilde{C},\wtilde{L},\wtilde{\cal N},\wtilde{\cal R},\Pi_{\wtilde{\cal R}\to G})$ is defined by
\begin{gather*}
\wtilde{C} = C+C_{\new}\text{ where }C_{\new} = C_{\CM} + C_{\ED},\\ 
\wtilde{L} = L\cup {L}_{\new}\text{ where }{L}_{\new} = L_{\CM}\cup L_{\ED},\\
(\wtilde{\cal N}, \wtilde{\cal R}, \Pi_{\wtilde{\cal R}\to G}) = ({\cal N}_{1},{\cal R}_{1},\Pi_{{\cal R}_{1}\to G})\cup ({\cal N}_{2},{\cal R}_{2},\Pi_{{\cal R}_{2}\to G-C-C_{\CM}}).
\end{gather*}

\paragraph{The Quality.} The quality of $(\wtilde{C},\wtilde{L},\wtilde{\cal N},\wtilde{\cal R},\Pi_{\wtilde{\cal R}\to G})$ can be verified as follows. 
The new moving cut has size
\[|C_{\new}|\leq O(\phi_{\CM}\cdot h_{\CM}\cdot |A^{\src}|) + O(\kappa_{\init,C}\cdot\phi_{\ED}\cdot h_{\ED}\cdot |A_{2}|) \leq O(\kappa_{\init,C}\cdot\phi \cdot h\cdot |A_{\new}|)\]
because $h_{\ED} = h_{\CM} = h/3$, $|A^{\src}| = |A_{\new}|$ and $|A_{2}|\leq O(|A_{\new}|)$. By \Cref{lemma:LandmarkUnion}, $\wtilde{L} = L\cup L_{\CM}\cup L_{\ED}$ is a landmark set of $\wtilde{C} = C + C_{\CM} + C_{\ED}$ on $G$ with distortion $\wtilde{\sigma} = \sigma + \sigma_{\CM} + \sigma_{\ED} = \sigma + O(h/\kappa_{\sigma})$, size
\[
|L_{\new}| \leq O(\kappa_{\CM,L}\cdot \phi_{\CM}\cdot |A^{\src}|) + O(\kappa_{\init,L}\cdot\phi_{\ED}\cdot |A_{2}|)\leq O((\kappa_{\CM,L} +\kappa_{\init,L})\cdot\phi\cdot |A_{\new}|).
\]
The quality of $\wtilde{\cal N}$ is shown in \Cref{lemma:ProperiesOfTildeN}. In particular, the ball cover ${\cal S}_{v}$ of each $v\in \wtilde{A}$ can be maintained explicitly following the proof of \Cref{lemma:ProperiesOfTildeN}. The routers from ${\cal R}_{1}$ are updated from ${\cal R}$ via one batched update, and ${\cal R}_{2}$ are just initialized with no update. Lastly, the length and congestion of $\Pi_{\wtilde{\cal R}\to G}$ are 
\begin{align*}
\wtilde{h}_{\emb} &= \max\{h_{\emb}, h_{\CM}, h_{\emb,2}\} = \max\{h_{\emb},O(\lambda_{\init,h}\cdot h)\}\\
\wtilde{\gamma}_{\emb} &= \gamma_{\emb} + \omega\cdot\gamma_{\CM} + \gamma_{\emb,2} = \gamma_{\emb} + O((\omega\cdot\kappa_{\CM,\gamma}+\kappa_{\init,\gamma})/\phi).
\end{align*}
The term $\gamma_{\CM}$ is multiplied by $\omega$ because each matching edge will be added to routers at most $\omega$ times.

\begin{lemma}
$\wtilde{\cal N}$ is a $\wtilde{b}$-distributed $(\wtilde{h}_{\cov},\wtilde{h}_{\sep},\wtilde{\omega})$-pairwise cover of $\wtilde{A}$ on $G-\wtilde{C}$ with $\wtilde{b} = b+1$, $\wtilde{h}_{\cov} = h/3$, $\wtilde{h}_{\sep} = h/3$ and $\wtilde{\omega} = \omega + \kappa_{\PC,\omega}$.
\label{lemma:ProperiesOfTildeN}
\end{lemma}
\begin{proof}
It is relatively simple to see the bounds on $\wtilde{h}_{\sep}$ and $\wtilde{\omega}$. The width $\wtilde{\omega} = \omega + \omega_{2} = \omega + \kappa_{\PC,\omega}$. Regarding the separation, note that $\wtilde{h}_{\sep} = \max\{h_{\sep,1},h_{\sep,2}\}$, where $h_{\sep,1}$ denotes the separation of ${\cal N}_{1}$, and $h_{\sep,2}$ is already known to be $h_{\ED} = h/3$. To see $h_{\sep,1}= h/3$, we need to show that for two virtual nodes $u,v\in A_{1}$ belonging to two different clusters $S_{u},S_{v}\in {\cal N}_{1}$, we have $\dist_{G-\wtilde{C}}(u,v)> h/3$. Consider the worst case that $u,v\in A_{1}\cap A_{\new}$ are both new virtual nodes. Let $u'\in S_{u}$ and $v'\in S_{v}$ be the old virtual nodes matched to $u$ and $v$ in $M_{\CM}$ respectively. The separation of ${\cal N}$ gives $\dist_{G-C}(u',v')\geq h_{\sep} = h$. Therefore,
\begin{align*}
\dist_{G-\wtilde{C}}(u,v) &\geq \dist_{G-C}(u,v)\\
&\geq \dist_{G-C}(u',v')-\dist_{G-C}(u,u')-\dist_{G-C}(v,v')\\
&> h_{\sep} - 2h_{\CM}
= h/3.
\end{align*}
The case that at most one of $u,v$ belongs to $A_{1}\cap A_{\new}$ can be argued in a similar way.

We now show that $\wtilde{h}_{\cov} = h/3$ and $\wtilde{b} = b+1$. Consider an arbitrary virtual node $v\in \wtilde{A}$.

\underline{Case 1.} Suppose $v\in A\setminus A_{2}$. That is, $v$ is a virtual node of some vertex in $\supp(A^{\sink}_{U})$. By definition of ${\cal N}$, we know $\Ball_{G-C,A}(v,h_{\cov})$ is covered by a collection ${\cal S}_{v}=\{S_{1},...,S_{b}\} \subseteq {\cal N}$ of $b$ clusters. We will show that $\Ball_{G-\wtilde{C},\wtilde{A}}(v,\wtilde{h}_{\cov})$ is covered by $b$ many clusters $\wtilde{\cal S}_{v} = \{\wtilde{S}_{1},...,\wtilde{S}_{b}\}\subseteq {\cal N}_{1}$, where each new cluster $\wtilde{S}_{b'}$ originates from the old cluster $S_{b'}$. Let $u\in\wtilde{A}$ be an arbitrary virtual node s.t. $\dist_{G-\wtilde{C}}(u,v)\leq \wtilde{h}_{\cov}$.

\underline{Subcase 1(a).} If $u\in A$, then $\dist_{G-C}(u,v)\leq \dist_{G-\wtilde{C}}(u,v)\leq\wtilde{h}_{\cov}\leq h_{\cov}$, so $u$ is inside some old cluster $S\in{\cal S}_{v}$ and the new cluster $\wtilde{S}\in \wtilde{\cal S}_{v}$.

\underline{Subcase 1(b).} Suppose $u\in A^{\src}_{M}$. Let $u'\in A^{\sink}_{M}\subseteq A$ be the virtual node matched to $u$ by $M_{\CM}$. Then
\begin{align*}
\dist_{G-C}(u',v)&\leq \dist_{G-C}(u',u)+\dist_{G-C}(u,v)\\
&\leq \dist_{G-C}(u',u)+\dist_{G-\wtilde{C}}(u,v)\\
&\leq h_{\CM}+\wtilde{h}_{\cov}\leq h_{\cov},
\end{align*}
which means $u'$ belongs to some old cluster $S\in{\cal S}_{v}$. The new cluster $\wtilde{S}\in\wtilde{S}_{v}$ will include $u$ by our construction.

\underline{Subcase 1(c).} Suppose $u\in A^{\src}_{U}$. However, this is impossible because $\dist_{G-\wtilde{C}}(u,v)\leq \wtilde{h}_{\cov} = h_{\CM}$ contradicts that $A^{\src}_{U}$ is $h_{\CM}$-separated from $A^{\sink}_{U}$ in $G-\wtilde{C}$.

\underline{Case 2.} Suppose $v\in A\cap A_{2}$. Again, $\Ball_{G-C}(v,h_{\cov})$ is covered by a size-$b$ collection ${\cal S}_{v}\subseteq {\cal N}$. Furthermore, $\Ball_{(G-\wtilde{C})[A_{2}]}(v,h_{\ED})$ is covered by a cluster $\wtilde{S}_{v,2}\in{\cal N}_{2}$ because ${\cal N}_{2}$, is a (1-distributed) neighborhood cover. We now show that $\Ball_{G-\wtilde{C}}(v,\wtilde{h}_{\cov})$ is covered by $b+1$ many clusters in $\wtilde{\cal N}$, i.e. the $\wtilde{S}_{v,2}\in{\cal N}_{2}$ and the new clusters $\wtilde{\cal S}_{v,1}\subseteq {\cal N}_{1}$ original from ${\cal S}_{v}$. Consider an arbitrary $u\in \Ball_{G-\wtilde{C},\wtilde{A}}(v,\wtilde{h}_{\cov})$.

\underline{Subcases 2(a) and 2(b).} If $u\in A$ or $u\in A^{\src}_{M}$, then $u$ belongs to some cluster in $\wtilde{\cal S}_{v}$ by the same arguments as subcases 1(a) and 1(b) respectively.

\underline{Subcase 2(c).} Suppose $u\in A^{\src}_{U}\subseteq A_{2}$. Then $\dist_{(G-\wtilde{C})[A_{2}]}(u,v) = \dist_{G-\wtilde{C}}(u,v) \leq \wtilde{h}_{\cov} = h_{\ED}$ by \Cref{ob:Subcase2c}, so $u$ belongs to $\wtilde{S}_{v}$.
\begin{observation}
For $u\in A^{\src}_{U}$ and $v\in A_{2}$ s.t. $\dist_{G-\wtilde{C}}(u,v)\leq \wtilde{h}_{\cov}$, we have $\dist_{(G-\wtilde{C})[A_{2}]}(u,v)= \dist_{G-\wtilde{C}}(u,v)$.
\label{ob:Subcase2c}
\end{observation}
\begin{proof}
Consider the shortest $u$-$v$ path $P$ on $G-\wtilde{C}$. We know $\ell_{G-\wtilde{C}}(P) = \dist_{G-\wtilde{C}}(u,v) \leq \wtilde{h}_{\cov} = h_{\CM}$, so $P$ cannot go through any vertex $w\in\supp(\wtilde{A}\setminus A_{2}) = \supp(A^{\sink}_{U})$ because $w$ is $h_{\CM}$-separated from $u$ even on $G-C-C_{\CM}$ (let alone $G-\wtilde{C}$). Therefore, $P$ is totally inside $(G-\wtilde{C})[A_{2}]$ and the observation holds.
\end{proof}

\underline{Case 3.} Suppose $v\in A^{\src}_{M}\setminus A_{2}$. This infers $v$ is owned by some vertex in $\supp(A^{\sink}_{U})$. Let $v'\in A^{\sink}_{M}\subseteq A$ be the virtual node matched to $v$ by $M_{\CM}$. Let ${\cal S}_{v'}\subseteq {\cal N}$ be those $b$ clusters covering $\Ball_{G-C,A}(v',h_{\cov})$ and let $\wtilde{\cal S}_{v',1} \subseteq {\cal N}_{1}$ be the clusters original from ${\cal S}_{v'}$. Observe that, for every $S \in \wtilde{\cal S}_{v',1}$, we have $v \in S$ because $v$ is matched to $v'$ by $M_\CM$ and so our algorithm adds $v$ to every cluster $S \in \wtilde{\cal S}_{v',1}$. We will next show that $\Ball_{G-\wtilde{C},\wtilde{A}}(v,\wtilde{h}_{\cov})$ is covered by the new clusters $\wtilde{\cal S}_{v',1}$. Consider an arbitrary virtual node $u\in\Ball_{G-\wtilde{C},\wtilde{A}}(v,\wtilde{h}_{\cov})$.

\underline{Subcase 3(a).} If $u\in A$, then
\begin{align*}
\dist_{G-C}(u,v')&\leq \dist_{G-C}(u,v) + \dist_{G-C}(v,v')\\
&\leq \dist_{G-\wtilde{C}}(u,v) + \dist_{G-C}(v,v')\\
&\leq \wtilde{h}_{\cov} + h_{\CM}\leq h_{\cov},
\end{align*}
and ${\cal S}_{v'}$ covers $u$, so does $\wtilde{\cal S}_{v',1}$.

\underline{Subcase 3(b).} If $u\in A^{\src}_{M}$, let $u'\in A^{\sink}_{M}\subseteq A$ be the virtual node matched to $u$. Similarly,
\begin{align*}
\dist_{G-C}(u',v')&\leq \dist_{G-C}(u',u)+\dist_{G-C}(u,v)+\dist_{G-C}(v,v')\\
&\leq \dist_{G-C}(u',u)+\dist_{G-\wtilde{C}}(u,v)+\dist_{G-C}(v,v')\\
&\leq h_{\CM}+\wtilde{h}_{\cov}+h_{\CM}\leq h_{\cov},
\end{align*}
so $u'$ is covered by some cluster $S\in {\cal S}_{v'}$. By the construction, $u$ will be added to $S$, so $u$ is covered by $\wtilde{\cal S}_{v',1}$. 

\underline{Subcase 3(c).} It is impossible that $u\in A^{\src}_{U}$ because $v$ is owned by some vertex in $\supp(A^{\sink}_{U})$ and $\supp(A^{\src}_{U})$ is $(h_{\CM}=\wtilde{h}_{\cov})$-separated from $\supp(A^{\sink}_{U})$ in $G-\wtilde{C}$.

\underline{Case 4.} Suppose $v\in A^{\src}_{M}\cap A_{2}$ which matches $v'\in A^{\sink}_{M}\subseteq A$ in $M_{\CM}$. Again, let ${\cal S}_{v'}\subseteq {\cal N}$ be the $b$ clusters covering $\Ball_{G-C,A}(v',h_{\cov})$. Moreover, let $\wtilde{S}_{v,2}\in {\cal N}_{2}$ be the cluster covering $\Ball_{(G-\wtilde{C})[A_{2}],A_{2}}(v,h_{\ED})$. Then we will show that $\Ball_{G-\wtilde{C},\wtilde{A}}(v,\wtilde{h}_{\cov})$ is covered by $\wtilde{S}_{v}$ and the new clusters $\wtilde{\cal S}_{v'}\subseteq {\cal N}_{1}$ original from ${\cal S}_{v',1}$.

\underline{Subcases 4(a) and 4(b).} If $u\in A$ and $u\in A^{\src}_{M}$, $\wtilde{\cal S}_{v'}$ covers $u$ by the same argument as subcases 3(a) and 3(b) respectively.

\underline{Subcase 4(c).} If $u\in A^{\src}_{U}\subseteq A_{2}$, then $u\in \wtilde{S}_{v}$ by the same argument as subcase 2(c).

\underline{Case 5.} Suppose $v\in A^{\src}_{U}$. Consider an arbitrary $u\in \Ball_{G-\wtilde{C},\wtilde{A}}(v,\wtilde{h}_{\cov})$. We know $u$ is not owned by any vertex in $\supp(A^{\sink}_{U})$ by the $h_{\CM}$-separation, so $u\in A_{2}$. By \Cref{ob:Subcase2c}, the cluster $\wtilde{S}_{v}\in{\cal N}_{2}$ covering $\Ball_{(G-\wtilde{C})[A_{2}],A_{2}}(v,h_{\ED})$ will cover $\Ball_{G-\wtilde{C},\wtilde{A}}(v,\wtilde{h}_{\cov})$. 

\end{proof}

\paragraph{The Recourse from ${\cal N}$ to $\wtilde{\cal N}$.} First, ${\cal N}_{1}$ can be updated from ${\cal N}$ by at most $|A_{\new}|\cdot \omega$ virtual node insertions, because each virtual node in $A_{\new}$ can be inserted at most $\omega$ times (at most once for each clustering). Second, $\wtilde{\cal N}$ can be updated from ${\cal N}_{1}$ by adding ${\cal N}_{2}$, where ${\cal N}_{2}$ has at most $\kappa_{\PC,\omega}\cdot |A_{2}|$ virtual nodes summing over all clusters. Therefore, the recourse from ${\cal N}$ to $\wtilde{\cal N}$ is 
\[
\recourse({\cal N}\to \wtilde{\cal N}) \leq |A_{\new}|\cdot \omega + |A_{2}|\cdot \kappa_{\PC,\omega} = O(|A_{\new}|\cdot (\omega + \kappa_{\PC,\omega})).
\]

\paragraph{The Running Time.} The running time of the cutmatch subroutine is $\poly(h_{\CM})\cdot|A^{\src}|/\phi_{\CM}$ by \Cref{thm:cutmatch}. When we update $({\cal N},{\cal R},\Pi_{{\cal R}\to G})$ to $({\cal N}_{1}, {\cal R}_{1}, \Pi_{{\cal R}_{1}\to G})$, the running time is dominated by updating the routers. 
The total size of matchings added to routers is $\sum_{S\in{\cal N}}|M_{S}| \leq \omega\cdot |M_{\CM}|\leq \omega\cdot |A_{\new}|$ because each matching edge appears at most $\omega$ times. By \Cref{thm:Router}, the total time to update routers is $2^{O(f)}\cdot\omega\cdot|A_{\new}|\cdot 2^{\poly(1/\epsilon)}$. By \Cref{thm:InitCertifiedED}, the construction time of $(C_{\ED},L_{\ED},{\cal N}_{2},{\cal R}_{2},\Pi_{{\cal R}_{2}\to G-C-C_{\CM}})$ is $|A_{\new}|\cdot \poly(h_{\ED})\cdot n^{O(\epsilon)}$, because the size of subgraph $(G-C-C_{\CM})[A_{2}]$ is bounded by $\deg_{G}(\supp(A_{2})) \leq |A_{2}|\leq O(|A_{\new}|)$, where $\deg_{G}(\supp(A_{2})) \leq |A_{2}|$ follows that each vertex $v\in \supp(A_{2})$ has $A_{2}(v)\geq \deg_{G}(v)$. In conclusion, the total running time is bounded by $|A_{\new}|\cdot \poly(h)\cdot n^{O(\epsilon)}\cdot (1/\phi + 2^{O(f)}\cdot\omega)$.

\end{proof}

\subsection{Batched Edge Deletions}
\label{sect:DynEDEdgeDeletion}

A batched edge deletion update to the certified-ED can be reduced to a batched node-weighting insertion. The key point is that we have a low-congestion embedding of routers on $G$, which means deleting one edge will only destroy a small number of router edges. Removing these router edges from our dynamic routers will only generate a small set of pruned virtual nodes. Then we can reinsert the pruned virtual node by a batched node-weighint insertion.

\begin{lemma}
Let $G$ be a graph with a node-weighting $A\geq \deg_{G} + \mathds{1}(V(G))$ and a certified-ED $(C,L,{\cal N}, {\cal R},\Pi_{{\cal R}\to G})$ of $A$ on $G$ in \Cref{def:CertifiedED}.
Given parameters $\phi,h$, and a batch of edge deletions $F\subseteq E(G)$, there is an algorithm that computes a certified-ED $(\wtilde{C},\wtilde{L},\wtilde{N},\wtilde{R},\Pi_{\wtilde{\cal R}\to G})$ of $\wtilde{A} = A - \deg_{F}$ on $\wtilde{G} = G\setminus F$ s.t.
\begin{itemize}
\item $\wtilde{C}=C+C_{\new}$ with 
$|C_{\new}| \leq O(\kappa_{\init,C}\cdot\phi\cdot h\cdot \lambda_{\rt,\prune}(f)\cdot \gamma_{\emb}\cdot |F|)$.
\item 
$\wtilde{L} = L\cup L_{\new}$ with distortion $\wtilde{\sigma} = \sigma + O(h/\kappa_{\sigma})$ and
\[
|L_{\new}| = O((\kappa_{\CM,L} + \kappa_{\init,L})\cdot\phi\cdot \lambda_{\rt,\prune}(f)\cdot\gamma_{\emb}\cdot|F|).
\]
\item $\wtilde{\cal N}$ is a $\wtilde{b}$-distributed $(\wtilde{h}_{\cov},\wtilde{h}_{\sep},\wtilde{\omega})$-neighborhood cover of $A$ on $G-\wtilde{C}$ with 
\[
\wtilde{b} = b+1,\ 
\wtilde{h}_{\cov} = h/3,
\ \wtilde{h}_{\sep} = h/3,
\text{ and }\wtilde{\omega} = \omega + \kappa_{\PC,\omega}.
\]
\item Routers in $\wtilde{\cal R}$ is maintained by \Cref{thm:Router} undergoing at most $\wtilde{f} = f+2$ batched updates. %
\item $\Pi_{\wtilde{\cal R}\to \wtilde{G}}$ is a $(\wtilde{h}_{\emb},\wtilde{\gamma}_{\emb})$-embedding with
\begin{align*}
&\wtilde{h}_{\emb} = \max\{h_{\emb},O(\lambda_{\init,h}\cdot h)\},\\
&\wtilde{\gamma}_{\emb} = \gamma_{\emb} + O((\omega\cdot\kappa_{\CM,\gamma} + \kappa_{\init,\gamma})/\phi).
\end{align*}
\end{itemize}
The recourse from ${\cal N}$ to $\wtilde{\cal N}$ is \[\recourse({\cal N}\to \wtilde{\cal N}) = O(\kappa_{\rt,\prune(f)}\cdot\gamma_{\emb}\cdot(\omega + \kappa_{\PC,\omega})\cdot|F|).\]
The running time is
$\lambda_{\rt,\prune}(f)\cdot\gamma_{\emb}\cdot |F|\cdot\poly(h)\cdot n^{O(\epsilon)}\cdot(1/\phi + 2^{O(f)}\cdot \omega)$.

\label{thm:EdgeDeletion}
\end{lemma}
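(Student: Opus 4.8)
The plan is to reduce a batched edge deletion to a batched node-weighting insertion, using the fact that the router embedding $\Pi_{{\cal R}\to G}$ has low congestion $\gamma_{\emb}$, so deleting $F$ destroys only $O(\gamma_{\emb}\cdot|F|)$ embedding paths and hence only $O(\gamma_{\emb}\cdot|F|)$ router edges. First I would identify, for each deleted edge $e\in F$, the set of embedding paths through $e$; by the congestion bound this is at most $\gamma_{\emb}$ paths per edge, and each corresponds to one router edge. Collect all such router edges into a batch $E_F^{{\cal R}}$ of size $|E_F^{{\cal R}}|\leq \gamma_{\emb}\cdot|F|$, and group them by the router $R^S$ they belong to. For each affected router $R^S$, apply the edge-deletion update of \Cref{thm:Router} with $E_F^{{\cal R}}\cap E(R^S)$ as input; this is (at most) the $(f+1)$-th batched update to that router, and it returns a set $V_{\prune}^S$ of pruned virtual nodes with $|V_{\prune}^S|\leq \lambda_{\rt,\prune}(f+1)\cdot|E_F^{{\cal R}}\cap E(R^S)|$. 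Let $V_{\prune}=\bigcup_S V_{\prune}^S$, so $|V_{\prune}|\leq \lambda_{\rt,\prune}(f)\cdot\gamma_{\emb}\cdot|F|$ (absorbing the additive constant in $f$ into the $O(\cdot)$). Removing these virtual nodes from their clusters and deleting $F$ from $G$, we claim the remaining tuple $(C,L\cup V(F),{\cal N}',{\cal R}',\Pi_{{\cal R}'\to \wtilde G})$ is a valid certified-ED of $\bar A := \wtilde A - \mathds{1}(V_{\prune})$ on $\wtilde G=G\setminus F$: the neighborhood cover properties (covering radius, separation, diameter, width) on $\wtilde G - C$ are inherited via \Cref{ob:CoversOnSubgraph} restricted to the surviving node-weighting, the routers are still routers on the surviving vertex sets by \Cref{thm:Router}, the embedding paths that survive avoid $F$ by construction so $\Pi_{{\cal R}'\to \wtilde G}$ is valid with the same length/congestion, and $L\cup V(F)$ is a landmark set of $C$ on $\wtilde G-C$ with the same distortion by \Cref{lemma:LandmarkEdgeDel}.

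Next I would reinsert the pruned virtual nodes. Set $A_{\new}=\mathds{1}(V_{\prune})\uplus$ (the virtual nodes needed to restore $\bar A$ up to $\wtilde A$) — more precisely, $A_{\new}$ is chosen so that $\bar A + A_{\new} = \wtilde A$, and note $|A_{\new}|\leq |V_{\prune}| + (\text{correction for }\deg_F)\leq O(\lambda_{\rt,\prune}(f)\cdot\gamma_{\emb}\cdot|F|)$ since $\deg_F\leq 2|F|$. Then apply \Cref{thm:NodeWeightingInsertion} to the current certified-ED of $\bar A$ on $\wtilde G$ with node-weighting $A_{\new}$ and parameters $h/3$, $\phi$ (I must check the precondition $\bar A + A_{\new} = \wtilde A \geq \deg_{\wtilde G} + \mathds{1}(V(\wtilde G))$, which follows from the hypothesis $A \geq \deg_G + \mathds{1}(V(G))$ and the fact that deleting edges only decreases degrees). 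This yields the final $(\wtilde C,\wtilde L,\wtilde {\cal N},\wtilde{\cal R},\Pi_{\wtilde{\cal R}\to\wtilde G})$. The parameter bounds then follow by plugging $|A_{\new}| = O(\lambda_{\rt,\prune}(f)\cdot\gamma_{\emb}\cdot|F|)$ into the conclusions of \Cref{thm:NodeWeightingInsertion}: $|C_{\new}|\leq O(\kappa_{\init,C}\cdot\phi\cdot h\cdot|A_{\new}|)$, $|L_{\new}|\leq O((\kappa_{\CM,L}+\kappa_{\init,L})\cdot\phi\cdot|A_{\new}|)$, the neighborhood cover gains $\wtilde b = (b+1)$ (the $b$ here already accounts for the restriction step, which preserves the distributed factor by \Cref{ob:CoversOnSubgraph}) and $\wtilde\omega = \omega + \kappa_{\PC,\omega}$, the routers undergo $\wtilde f = f+2$ batches (one for the pruning deletion, one for the matching insertion inside \Cref{thm:NodeWeightingInsertion}), the embedding parameters $\wtilde h_{\emb},\wtilde\gamma_{\emb}$ are exactly as in \Cref{thm:NodeWeightingInsertion}, and $\recourse({\cal N}\to\wtilde{\cal N})$ is the recourse from pruning (at most $O(\omega\cdot|V_{\prune}|)$ virtual node deletions, since each pruned node lies in at most $\omega$ clusters) plus the recourse $O(|A_{\new}|\cdot(\omega+\kappa_{\PC,\omega}))$ from \Cref{thm:NodeWeightingInsertion}, giving the claimed $O(\lambda_{\rt,\prune}(f)\cdot\gamma_{\emb}\cdot(\omega+\kappa_{\PC,\omega})\cdot|F|)$.

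For the running time, the work splits into: finding the destroyed embedding paths (linear in the total length of those paths, $O(\gamma_{\emb}\cdot|F|\cdot h_{\emb}) = \gamma_{\emb}\cdot|F|\cdot\poly(h)\cdot n^{O(\epsilon)}$, assuming the embedding is stored so that each edge points to the paths through it), the router pruning updates ($\sum_S t_{\local}^{O(1/\epsilon^4)}\cdot 2^{O(f)}\cdot|E_F^{{\cal R}}\cap E(R^S)| = 2^{O(f)}\cdot\gamma_{\emb}\cdot|F|\cdot n^{O(\epsilon)}$ by \Cref{thm:Router}), and the node-weighting insertion ($|A_{\new}|\cdot\poly(h)\cdot n^{O(\epsilon)}\cdot(1/\phi + 2^{O(f)}\cdot\omega)$ by \Cref{thm:NodeWeightingInsertion}). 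Summing and using $|A_{\new}| = O(\lambda_{\rt,\prune}(f)\cdot\gamma_{\emb}\cdot|F|)$ gives the stated bound $\lambda_{\rt,\prune}(f)\cdot\gamma_{\emb}\cdot|F|\cdot\poly(h)\cdot n^{O(\epsilon)}\cdot(1/\phi + 2^{O(f)}\cdot\omega)$.

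\textbf{Main obstacle.} The delicate point is verifying that after removing the pruned virtual nodes the tuple really is a certified-ED of the \emph{restricted} node-weighting $\bar A$ — in particular that the neighborhood-cover axioms (covering radius and the distributed-factor/ball-cover structure) survive restriction to a sub-node-weighting on a subgraph, and that the surviving router vertices exactly match the surviving virtual nodes in each cluster (so the "each vertex is redundant or a distinct virtual node" invariant of \Cref{def:CertifiedED} is maintained). This is where \Cref{ob:CoversOnSubgraph} does the heavy lifting, but one must be careful that \Cref{thm:Router}'s pruning guarantees (the pruned router is an induced subgraph with the claimed quality) line up with the bookkeeping of which virtual nodes remain in which cluster. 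A secondary subtlety is the exact accounting of how many batched updates each router has seen (it must stay $\leq f+2$ so the downstream $\lambda_{\rt,\prune}$, $\kappa_{\rt,\gamma}$, $\lambda_{\rt,h}$ bounds remain of the right form), which forces the reduction to use only one pruning batch plus the single matching batch hidden inside \Cref{thm:NodeWeightingInsertion}.
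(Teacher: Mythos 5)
Your overall strategy — use the low-congestion embedding to bound the number of destroyed router edges, prune routers via \Cref{thm:Router}, strip the pruned virtual nodes to get a smaller certified-ED on $\wtilde G$, and then recover the target via \Cref{thm:NodeWeightingInsertion} — is exactly the paper's approach, and the bookkeeping for the routers ($\wtilde f = f+2$: one pruning batch, one matching batch inside the node-weighting insertion), the cover inheritance via \Cref{ob:CoversOnSubgraph}, the landmark update via \Cref{lemma:LandmarkEdgeDel}, and the running-time accounting are all sound.

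The gap is in your definition of $\bar A$ and $A_{\new}$. After removing the pruned virtual nodes from the clusters, the certified-ED you hold is of the node-weighting $A - A_{\prune}$ (where $A_{\prune}$ collects the pruned virtual nodes), \emph{not} of $\wtilde A - \mathds 1(V_{\prune})$. You then want $A_{\new}$ with $(A - A_{\prune}) + A_{\new} = \wtilde A = A - \deg_F$, i.e.\ $A_{\new} = A_{\prune} - \deg_F$, which can be negative at vertices where few virtual nodes happened to be pruned, so it is not a legal node-weighting to feed \Cref{thm:NodeWeightingInsertion}. You try to patch this with "the virtual nodes needed to restore $\bar A$ up to $\wtilde A$," but that presupposes $\bar A \le \wtilde A$, which you neither arrange nor verify. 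The fix requires an extra step you do not mention: before (or after) the insertion, explicitly remove $\deg_F(v)$ virtual nodes at each $v\in V(F)$ from the clusters and mark their router vertices redundant. The paper's ordering sidesteps the issue cleanly: it first calls \Cref{thm:NodeWeightingInsertion} with $A_{\prune}$ to recover a certified-ED of the \emph{original} $A$ on $\wtilde G$ (so $A_{\new}$ is trivially well-defined and $|A_{\new}|=|A_{\prune}|$), and only then peels off the $\deg_F$ virtual nodes as a cheap $O((\omega+\kappa_{\PC,\omega})|F|)$ post-processing step that needs no cutmatch. Your route can be made to work by removing a suitable subset of $\deg_F$ virtual nodes first and inserting $[A_{\prune}-\deg_F]_+$, but as written the node-weighting algebra does not close.
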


\begin{proof}

Let ${\cal R}_{\edge,F} = \Pi^{-1}_{{\cal R}\to G}(F)$ collect all edges in routers with embedding paths destroyed by $F$. 
Then for each cluster $S\in{\cal N}$ and its router $R\in{\cal R}$, let $R_{\edge,F} = {\cal R}_{\edge,F}\cap E(R)$ and we perform an edge deletion update $R_{\edge,F}$ on $R$ by applying \Cref{thm:Router}, which outputs a prune set $R_{\prune}\subseteq V(R)$ and a new router $R' = R[V(R)\setminus R_{\prune}]$. Let $S_{\prune} = R_{\prune}\cap S$ be the set of pruned virtual nodes of $S$ (namely, we obtain $S_{\prune}$ by excluding redundant vertices in $R_{\prune}$). Let $A_{\prune} = \bigcup_{S\in{\cal N}} S_{\prune}$ collect all pruned virtual nodes.

Now we partition $A$ into $A_{\prune}$ and $A' = A\setminus A_{\prune}$. Then we update $(C,L,{\cal N},R,\Pi_{{\cal R}\to G})$ to a certified-ED $(C',L',{\cal N}',{\cal R}',\Pi_{{\cal R}'\to\wtilde{G}})$ of $A'$ on $\wtilde{G}$ by removing $A_{\prune}$. Concretely, for each original cluster $S\in{\cal S}$ and its router $R\in{\cal R}$, we will update them to a new cluster $S'$ corresponding the new router $R' = R[V(R)\setminus R_{\prune}]$ by the following steps.
\begin{enumerate}
\item Remove $A_{\prune}\cap S$ from $S$ and denote this new cluster by $S'$. Note that the new router $R' = R[V(R)\setminus R_{\prune}]$ satisfies $S' = S\setminus A_{\prune}\subseteq V(R)\setminus R_{\prune}= V(R')$, so $R'$ is a valid router for $S'$. Technically, some vertices in $V(R')$ may still correspond to some virtual nodes in $A_{\prune}$ (precisely, these virtual nodes are $A_{\prune}\setminus S_{\prune}$), so we will mark all such vertices in $R'$ redundant.

\item For each edge removed from $R$ due to the previous update via \Cref{thm:Router}, remove its embedding path from $\Pi_{{\cal R}\to G}$.
\end{enumerate}
Lastly, we let $C' = C_{\mid E(\wtilde{G})}$ by removing cut values on deleted edges (this step is just to make $C'$ well-defined on $E(\wtilde{G})$). Also, we let $L' = L\cup L'_{\new}$ with $L'_{\new} = V(F)$.%

Observe that $(C',L',{\cal N}',{\cal R}',\Pi_{{\cal R}'\to \wtilde{G}})$ is a certified-ED of $A'$ on $\wtilde{G}$, and it has the same quality parameters with those of $(C,L,{\cal N},{\cal R},\Pi_{{\cal R}\to G})$, except that routers in ${\cal R}'$ undergo one more batched update. Concretely, the quality of $L'$ is guaranteed by \Cref{lemma:LandmarkEdgeDel}. The quality of ${\cal N'}$ follows \Cref{ob:CoversOnSubgraph} and the fact that ${\cal N}'$ is the restriction of ${\cal N}$ on $A'$. $\Pi_{{\cal R}'\to\wtilde{G}}$ is a valid embedding on $\wtilde{G}$ because we remove all router edges with embedding paths going through deleted edges by \Cref{thm:Router}, and their embedding paths are removed in step 2.

Next, we can obtain a certified-ED $(C'',L'',{\cal N}'',{\cal R}'',\Pi_{{\cal R}''\to \wtilde{G}})$ of $A$ on $\wtilde{G}$ by applying \Cref{thm:NodeWeightingInsertion} on graph $\wtilde{G}$ and $(C',L',{\cal N}',{\cal R}',\Pi_{{\cal R}'\to \wtilde{G}})$ with the new node-weighting $A_{\prune}$. By \Cref{claim:SizeOfAPrune} and \Cref{thm:NodeWeightingInsertion}, this certified-ED has the desired quality, except that it is for node-weighting $A$ instead of $\wtilde{A} = A - \deg_{F}$, but this can be easily fixed. We update 
${\cal N}''$ to $\wtilde{\cal N}$ as follows. For each vertex $v\in V(F)$ incident to some deleted edges, we choose arbitrary $\deg_{F}(v)$ virtual nodes in $A(v)$, remove these virtual nodes from $\wtilde{\cal N}$, and mark the router vertices corresponding to them redundant. The final certified-ED is $(\wtilde{C},\wtilde{L},\wtilde{\cal N},\wtilde{\cal R},\Pi_{\wtilde{\cal R}\to\wtilde{G}}) = (C'',L'',\wtilde{\cal N},{\cal R}'',\Pi_{{\cal R}''\to \wtilde{G}})$.

\begin{claim}
$|A_{\prune}|\leq \lambda_{\rt,\prune}(f)\cdot\gamma_{\emb}\cdot|F|$.
\label{claim:SizeOfAPrune}
\end{claim}
\begin{proof}
First we have $|{\cal R}_{\edge,F}| \leq |F|\cdot\gamma_{\emb}$ because each deleted edge can destroy at most $\gamma_{\emb}$ embedding paths. By definition, $\sum_{R\in{\cal R}} |R_{\edge,F}| = {\cal R}_{\edge,F}$. By \Cref{thm:Router}, each router $R\in {\cal R}$ has $|R_{\prune}|\leq \lambda_{\rt,\prune}(f)\cdot |R_{\edge,F}|$. Therefore, 
\begin{align*}
|A_{\prune}| &\leq \sum_{S\in{\cal S}} |S_{\prune}| \leq \sum_{R\in{\cal R}}|R_{\prune}|
\leq \sum_{R\in{\cal R}}\lambda_{\rt,\prune}(f)\cdot|R_{\edge,F}|
\leq \lambda_{\rt,\prune}(f)\cdot\gamma_{\emb}\cdot|F|.
\end{align*}
\end{proof}

\paragraph{The Recourse from ${\cal N}$ to $\wtilde{\cal N}$.} The recourse $\recourse({\cal N}\to \wtilde{\cal N})$ can be bounded as follows. From ${\cal N}$ to ${\cal N}'$, the number of virtual node deletions is at most $|A_{\prune}|\cdot \omega$ (each virtual node in $A_{\prune}$ may be removed once for each clustering). The recourse from ${\cal N'}$ to ${\cal N}''$ is at most $O((\omega + \kappa_{\PC,\omega})\cdot|A_{\prune}|)$ by \Cref{thm:NodeWeightingInsertion}. Lastly, the recourse from ${\cal N}''$ to $\wtilde{\cal N}$ is at most $O((\omega + \kappa_{\NC,\omega})\cdot|F|)$ because each virtual node in $\deg_{F}$ may appear in at most $\omega'' = \omega + \kappa_{\NC,\omega}$ clusters. Therefore, we have
\begin{align*}
\recourse({\cal N}\to \wtilde{\cal N}) &= |A_{\prune}|\cdot \omega + O(|A_{\prune}|\cdot(\omega + \kappa_{\PC,\omega})) + O((\omega + \kappa_{\NC,\omega})\cdot|F|)\\
&= O(\lambda_{\rt,\prune}(f)\cdot\gamma_{\emb}\cdot(\omega + \kappa_{\PC,\omega})\cdot |F|).
\end{align*}

\paragraph{The Running Time.} Now we analyse the running time. The update from $(C,L,{\cal N},{\cal R},\Pi_{{\cal R}\to G})$ to $(C',L',{\cal N}',{\cal R}',\Pi_{{\cal R}'\to\wtilde{G}})$ is dominated by the edge deletion updates to routers in ${\cal R}$ via \Cref{thm:Router}. This step takes $2^{O(f)}\cdot \gamma_{\emb}\cdot |F|\cdot 2^{\poly(1/\epsilon)}$ time because the total size of edge deletions updates is $|{\cal R}_{\edge,F}|\leq |F|\cdot\gamma_{\emb}$ (see the proof of \Cref{claim:SizeOfAPrune}). From $(C',L',{\cal N}',{\cal R}',\Pi_{{\cal R}'\to\wtilde{G}})$ to $(C'',L'',{\cal N}'',{\cal R}'',\Pi_{{\cal R}''\to \wtilde{G}})$, it takes 
$|A_{\prune}|\cdot\poly(h)\cdot n^{O(\epsilon)}\cdot(1/\phi + 2^{O(f+1)}\cdot\omega)$
time by \Cref{thm:NodeWeightingInsertion}. Finally, updating ${\cal N}''$ to $\wtilde{\cal N}$ takes $O((\omega + \kappa_{\NC,\omega})\cdot |F|)$. Therefore, the total running time is
$\lambda_{\rt,\prune}(f)\cdot\gamma_{\emb}\cdot |F|\cdot\poly(h)\cdot n^{O(\epsilon)}\cdot(1/\phi + 2^{O(f)}\cdot \omega)$.

\end{proof}

\subsection{Batched Edge Insertions}
\label{sect:DynEDEdgeInsertion}

The update algorithm for a batched edge insertion is just a simple corollary of the node-weighting insertion algorithm, because we can just assign cut value $h$ to all new edges to ``block'' them. A small point is that we need to add a new node-weighting $\deg_{E_{\new}}$ to preserve the invariant $A\geq \deg_{G} + \mathds{1}(V(G))$.

\begin{lemma}
Let $G$ be a graph with a node-weighting $A\geq \deg_{G} + \mathds{1}(V(G))$ and a certified-ED $(C,L,{\cal N},{\cal R},\Pi_{{\cal R}\to G})$ of $A$ on $G$ in \Cref{def:CertifiedED}.
Given a batch of edge insertions $E_{\new}$ and parameters $\phi$ and $h$, there is an algorithm that computes a certified-ED $(\wtilde{C},\wtilde{L}, \wtilde{\cal N},\wtilde{\cal R},\Pi_{\wtilde{\cal R}\to G})$ of $\wtilde{A} = A + \deg_{E_{\new}}\geq \deg_{\wtilde{G}}$ on $\wtilde{G} = G\cup E_{\new}$ s.t. 
\begin{itemize}
\item $\wtilde{C}=C+C_{\new}$ with $|C_{\new}| \leq O((\kappa_{\init,C}\cdot \phi + 1)\cdot h\cdot |E_{\new}|)$.
\item $\wtilde{L} = L \cup L_{\new}$ with distortion $\wtilde{\sigma} = \sigma + O(h/\kappa_{\sigma})$ and
\[
|L_{\new}| = O(((\kappa_{\CM,L} + \kappa_{\init,L})\cdot \phi + 1)\cdot|E_{\new}|).
\]
\item $\wtilde{\cal N}$ is a $\wtilde{b}$-distributed $(\wtilde{h}_{\cov},\wtilde{h}_{\sep},\wtilde{\omega})$-neighborhood cover of $\wtilde{A}$ on $G-\wtilde{C}$ with 
\[
\wtilde{b} = b+1,\ \wtilde{h}_{\cov} = h/3,
\ \wtilde{h}_{\sep} = h/3,
\text{ and }\wtilde{\omega} = \omega + \kappa_{\PC,\omega}.
\]
\item Routers in $\wtilde{\cal R}$ are maintained by \Cref{thm:Router} under $f+1$ updates. 
\item $\Pi_{\wtilde{\cal R}\to G}$ is a $(\wtilde{h}_{\emb},\wtilde{\gamma}_{\emb})$-embedding with
\begin{align*}
&\wtilde{h}_{\emb} = \max\{h_{\emb},O(\lambda_{\init,h}\cdot h)\},\\
&\wtilde{\gamma}_{\emb} = \gamma_{\emb} + O((\omega\cdot\kappa_{\CM,\gamma} + \kappa_{\init,\gamma})/\phi).
\end{align*}
\end{itemize}
The recourse from ${\cal N}$ to $\wtilde{\cal N}$ is $\recourse({\cal N}\to\wtilde{\cal N}) = O(|E_{\new}|\cdot(\omega + \kappa_{\PC,\omega}))$.
The running time is
$|E_{\new}|\cdot\poly(h)\cdot n^{O(\epsilon)}\cdot (1/\phi + 2^{O(f)}\cdot\omega)$.
\label{thm:EdgeInsertion}
\end{lemma}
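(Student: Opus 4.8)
The plan is to reduce batched edge insertion to the batched node-weighting insertion lemma, \Cref{thm:NodeWeightingInsertion}, which we already have. The key observation stated in the lemma preamble is that a freshly inserted edge $e$ can be ``blocked'' by assigning it cut value $h$; after this, $e$ behaves, as far as distances on $G-C$ are concerned, as if it does not exist, so the quality of the certified-ED is essentially preserved on the augmented graph. The only subtlety is maintaining the structural invariant $A \geq \deg_G + \mathds{1}(V(G))$, which breaks when we add $E_{\new}$ because the degrees of the endpoints of the new edges go up. We fix this by simultaneously raising the node-weighting via $\deg_{E_{\new}}$, and this is exactly what a node-weighting insertion does.

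Concretely, first I would set $C^{+}_{\new}(e) = h$ for every $e \in E_{\new}$ and extend the current moving cut to $C^{+} = C + C^{+}_{\new}$ on $\wtilde{G} = G \cup E_{\new}$. Since the new edges carry cut value $h$ and ${\cal N}$ has covering radius/separation $h$, one checks that $(C^{+}, L, {\cal N}, {\cal R}, \Pi_{{\cal R}\to G})$ is still a valid certified-ED of $A$ on $\wtilde{G}$ with the same quality parameters (the new edges are never traversed by any $h$-length path in $\wtilde{G} - C^{+}$ because a single new edge already has length $\ge h+1$ there, so $\dist_{\wtilde{G}-C^{+}}$ restricted to old virtual nodes coincides with $\dist_{G-C}$; also $\Pi_{{\cal R}\to G}$ is unchanged and still uses only old edges, so it is still an embedding into $\wtilde{G}$). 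We also need $L$ to remain a landmark set of $C^{+}$; since we only added edges $e$ with $\ell_{\wtilde{G}-C^{+}}(e) = \ell(e) \le h$... wait — actually $\ell_{\wtilde{G}-C^{+}}(e) = \ell_{\wtilde G}(e) + h \ge h+1 > \sigma$, so by \Cref{def:Landmark} we would need both endpoints of each new edge in the landmark set. Hence I would add $V(E_{\new})$ to $L$ as part of $L_{\new}$ (this contributes the ``$+1$'' additive term, i.e. $O(|E_{\new}|)$ landmarks, matching the stated bound), giving a valid landmark set $L^{+} = L \cup V(E_{\new})$ of $C^{+}$ with distortion unchanged.

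Next, I would apply \Cref{thm:NodeWeightingInsertion} to the certified-ED $(C^{+}, L^{+}, {\cal N}, {\cal R}, \Pi_{{\cal R}\to \wtilde{G}})$ of $A$ on $\wtilde{G}$, inserting the node-weighting $A_{\new} = \deg_{E_{\new}}$, with the same parameters $h, \phi$. Note $|A_{\new}| = |\deg_{E_{\new}}| = 2|E_{\new}| = O(|E_{\new}|)$, and the hypothesis $\wtilde{A} = A + \deg_{E_{\new}} \geq \deg_{\wtilde{G}} + \mathds{1}(V(\wtilde{G}))$ holds because $A \geq \deg_G + \mathds{1}(V(G))$ and $\deg_{\wtilde{G}} = \deg_G + \deg_{E_{\new}}$. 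The lemma outputs a certified-ED of $\wtilde{A}$ on $\wtilde{G}$ whose parameters I read off directly: the moving-cut increase is $|C^{+}_{\new}| + O(\kappa_{\init,C}\cdot\phi\cdot h\cdot|A_{\new}|) = h\cdot|E_{\new}| + O(\kappa_{\init,C}\cdot\phi\cdot h\cdot|E_{\new}|) = O((\kappa_{\init,C}\cdot\phi + 1)\cdot h\cdot|E_{\new}|)$; the landmark increase is $|V(E_{\new})| + O((\kappa_{\CM,L}+\kappa_{\init,L})\cdot\phi\cdot|A_{\new}|) = O(((\kappa_{\CM,L}+\kappa_{\init,L})\cdot\phi + 1)\cdot|E_{\new}|)$; the neighborhood-cover quality, router update count $f+1$, embedding length/congestion, recourse $O(|A_{\new}|\cdot(\omega+\kappa_{\PC,\omega})) = O(|E_{\new}|\cdot(\omega+\kappa_{\PC,\omega}))$, and running time $|A_{\new}|\cdot\poly(h)\cdot n^{O(\epsilon)}\cdot(1/\phi + 2^{O(f)}\cdot\omega) = |E_{\new}|\cdot\poly(h)\cdot n^{O(\epsilon)}\cdot(1/\phi + 2^{O(f)}\cdot\omega)$ all transfer verbatim. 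Finally $\wtilde{L} = L^{+} \cup L^{\text{NWI}}_{\new} = L \cup L_{\new}$ collecting all freshly added landmarks, and one last check confirms $\wtilde{A} = A + \deg_{E_{\new}} \geq \deg_{\wtilde G}$.

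I do not expect a genuine obstacle here — this is the easiest of the three update lemmas, essentially bookkeeping on top of \Cref{thm:NodeWeightingInsertion}. The one place to be slightly careful is the reduction step's correctness claim, namely that blocking the new edges with cut value $h$ genuinely leaves $({\cal N},{\cal R},\Pi_{{\cal R}\to G})$ a valid certified-ED on $\wtilde{G}$: this needs that neighborhood-cover covering/separation properties, which are statements about $\dist_{\wtilde{G}-C^{+}}$ between virtual nodes of $A$, are unaffected, which follows because any $h_{\cov}$-bounded path in $\wtilde{G}-C^{+}$ between virtual nodes of $A$ cannot use a new edge (its length there is $> h \ge h_{\cov}$), hence lies in $G-C$; and that the embedding is still valid, which is immediate since its paths use only old edges. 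Everything else is substitution of parameters.
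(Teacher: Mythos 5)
Your proposal is correct and follows essentially the same route as the paper's proof: block the new edges with cut value $h$, add $V(E_{\new})$ to the landmark set to keep it valid, and then invoke \Cref{thm:NodeWeightingInsertion} with $A_{\new}=\deg_{E_{\new}}$, reading off all parameter bounds. The self-correction about why the endpoints must become landmarks (new edges have $\ell_{\wtilde G-C^{+}}(e)\ge h+1>\sigma$) is exactly the right justification, and the parameter bookkeeping matches the paper.
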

\begin{proof}

We first construct a certified-ED $(C',L',{\cal N}',{\cal R}',\Pi_{{\cal R}'\to \wtilde{G}})$ of $A$ on $\wtilde{G}$ by putting large cut values at the new edges. Precisely, we let
\begin{itemize}
\item $C'=C+C_{E_{\new}}$, where $C_{E_{\new}}$ assigns value $C_{E_{\new}}(e) = h$ to each $e\in E_{\new}$,
\item $L' = L\cup L_{E_{\new}}$, where $L_{E_{\new}} = V(E_{\new})$ so that the $C_{E_{\new}}$-vertices have themselves as landmarks,
\item $({\cal N}',{\cal R}',\Pi_{{\cal R}'\to \wtilde{G}}) = ({\cal N},{\cal R},\Pi_{{\cal R}\to G})$.
\end{itemize}
It can be easily checked that $(C',L',{\cal N}',{\cal R}',\Pi_{{\cal R}'\to \wtilde{G}})$ has the same quality parameters with those of $(C,L,{\cal N},{\cal R},\Pi_{{\cal R}\to G})$. Intuitively, the reason is that the new cut $C_{E_{\new}}$ blocks all new edges (each $e\in E_{\new}$ has $\ell_{\wtilde{G}-C'}(e) = \ell_{\wtilde{G}}(e) + C'(e)\geq h+1$, since the original edge length is at least $1$), so ${\cal N}$ is still a valid distributed neighborhood cover in the graph $\wtilde{G}-C'$.

Then we apply \Cref{thm:NodeWeightingInsertion} on $(\wtilde{G},A)$ and $(C',L',{\cal N}',{\cal R}',\Pi_{{\cal R}'\to \wtilde{G}})$ with node-weighting increasing $A_{\new} = \deg_{E_{\new}}$. The output is exactly a certified-ED $(\wtilde{C},\wtilde{L},\wtilde{\cal N},\wtilde{\cal R},\Pi_{\wtilde{\cal R}\to \wtilde{G}})$ of $\wtilde{A} = A + \deg_{E_{\new}}$ with desired quality parameters. In particular, 
$\wtilde{C} = C' + C'_{\new}$ with $|C'_{\new}|\leq O(\kappa_{\init,C}\cdot\phi\cdot h\cdot |A_{\new}|)$ by \Cref{thm:NodeWeightingInsertion}, so we can rewrite $\wtilde{C} = C + C_{\new}$ where 
$C_{\new} = C_{E_{\new}} + C'_{\new}$ with size $|C_{\new}| = O((\kappa_{\init,C}\cdot\phi + 1)\cdot h\cdot |E_{\new}|)$.
Similarly, $\wtilde{L} = L' \cup L'_{\new}$ with $|L'_{\new}| \leq O((\kappa_{\CM,L} + \kappa_{\init,L})\cdot \phi\cdot |A_{\new}|)$ by \Cref{thm:NodeWeightingInsertion}, so $\wtilde{L} = L \cup L_{\new}$ where 
$L_{\new} = L_{E_{\new}} \cup L'_{\new}$ with size $|L_{\new}|\leq O(((\kappa_{\CM,L} + \kappa_{\init,L})\cdot \phi + 1)\cdot |A_{\new}|)$.

By \Cref{thm:NodeWeightingInsertion}, the recourse from ${\cal N}$ to $\wtilde{\cal N}$ is $O(|A_{\new}|\cdot(\omega + \kappa_{\PC,\omega})) = O(|E_{\new}|\cdot(\omega + \kappa_{\PC,\omega}))$. The running time is dominated by the subroutine \Cref{thm:NodeWeightingInsertion}, which takes time $|E_{\new}|\cdot\poly(h)\cdot n^{O(\epsilon)}\cdot(1/\phi + 2^{O(f)}\cdot\omega)$.

\end{proof}

\subsection{Proof of \Cref{thm:DynamicED}}
\label{sect:ProofOfDynED}
During the whole process, we will maintain a certified-ED $(C^{(i)}, L^{(i)}, {\cal N}^{(i)}, {\cal R}^{(i)}, \Pi_{{\cal R}^{(i)}\to G^{(i)}})$ of node-weighting $\bar{A}^{(i)}$ on $G^{(i)}$, where
\[
\bar{A}^{(i)} = A^{(i)} + \deg_{G^{(i)}}.
\]

\paragraph{Initialization.} We initialize $(C^{(0)},L^{(0)},{\cal N}^{(0)},{\cal R}^{(0)},\Pi_{{\cal R}^{(0)}\to G^{(0)}})$ of $\bar{A}^{(0)}$ on $G^{(0)}$ by applying \Cref{thm:InitCertifiedED} with parameters $\phi^{(0)} = \phi/\kappa_{\init,L}$ and $h^{(0)} = h\cdot \lambda_{\dynED,h}$. The output satisfies that 
\begin{itemize}
    \item $|C^{(0)}|\leq O(\kappa_{\init,C}\cdot\phi^{(0)}\cdot h^{(0)}\cdot |\bar{A}^{(0)}|)\leq O(|A^{(0)}|\cdot\phi\cdot\lambda_{\dynED,h}\cdot h)$ because $\kappa_{\init,C}\leq \kappa_{\init,L}$.
    \item $L^{(0)}$ has distortion $\sigma^{(0)} = O(h^{(0)}/\kappa_{\sigma}) = O(\lambda_{\dynED,h}\cdot h/\kappa_{\sigma})$ and \[|L^{(0)}|\leq O(\kappa_{\init,L}\cdot\phi^{(0)}\cdot|\bar{A}^{(0)}|)\leq O(|A^{(0)}|\cdot\phi).
    \]
    \item ${\cal N}^{(0)}$ is a $b^{(0)}$-distributed $(h^{(0)}_{\cov},h^{(0)}_{\sep},\omega)$-pairwise cover of $\bar{A}^{(0)}$ of $G^{(0)} - C^{(0)}$ with
    \[
    b^{(0)} = 1,\ 
    h^{(0)}_{\cov} = h^{(0)}_{\sep} = h^{(0)} = \lambda_{\dynED,h}\cdot h,\text{ and }\omega = \kappa_{\PC,\omega}.
    \]
    \item Routers in ${\cal R}^{(0)}$ are just initialized by \Cref{thm:Router}, so they are under $f^{(0)} = 0$ batched update.
    \item $\Pi_{{\cal R}^{(0)}\to G^{(0)}}$ is a $(h^{(0)}_{\emb},\gamma^{(0)}_{\emb})$-embedding with
    \begin{align*}
    h_{\emb} &= O(\lambda_{\init,h}\cdot h^{(0)}) = O(\lambda_{\init,h}\cdot\lambda_{\dynED,h}\cdot h),\\
    \gamma_{\emb} &= O(\kappa_{\init,\gamma}/\phi^{(0)}) = O(\kappa_{\init,\gamma}\cdot\kappa_{\init,L}/\phi).
    \end{align*}
\end{itemize}

\paragraph{Batched Edge Insertions/Deletions and Node-Weighting Insertions.} For now we assume $\pi^{(i)}$ represents edge insertions/deletions or node-weighting insertions (the case $\pi^{(i)}$ represents isolated vertex insertions/deletions will be discussed later). We say $\pi^{(i)}$ is \textit{incremental} if it represents node-weighting increasing or edge insertions, and $\pi^{(i)}$ is \textit{decremental} if it represents edge deletions. 

For each update $\pi^{(i)}$, we will apply \Cref{thm:NodeWeightingInsertion}, \Cref{thm:EdgeDeletion}, or \Cref{thm:EdgeInsertion}, depending on whether $\pi^{(i)}$ represents node-weighting insertion, edge deletions or edge insertions respectively, on $(C^{(i-1)},L^{(i-1)},{\cal N}^{(i-1)},{\cal R}^{(i)},\Pi_{{\cal R}^{(i-1)}\to G^{(i-1)}})$ with length parameter $h^{(i)} = h^{(i-1)}/3$ and congestion parameter
\[
\phi^{(i)} = \left\{
\begin{aligned}
&1/(\kappa_{\CM,L} + \kappa_{\init,L}),\text{ if $\pi^{(i)}$ is incremental}\\
&1/((\kappa_{\CM,L} + \kappa_{\init,L})\cdot\lambda_{\rt,\prune}(f^{(i-1)})\cdot\gamma_{\emb}^{(i-1)}),\text{ if $\pi^{(i)}$ is decremental}
\end{aligned}
\right.
\]
Directly, we have
\begin{itemize}
\item $C^{(i)} = C^{(i-1)}+ C_{\new}^{(i)}$ with
\begin{align*}
|C_{\new}^{(i)}| &\leq 
\left\{
\begin{aligned}
&O((\kappa_{\init,C}\cdot\phi^{(i)}+1)\cdot h^{(i-1)}\cdot|\pi^{(i)}|),\text{ if $\pi^{(i)}$ is incremental}\\
&O(\kappa_{\init,C}\cdot\phi^{(i)}\cdot h^{(i-1)}\cdot \lambda_{\rt,\prune}(f^{(i-1)})\cdot\gamma^{(i-1)}_{\emb}\cdot |\pi^{(i)}|),\text{ if $\pi^{(i)}$ is decremental}
\end{aligned}
\right.\\
& = O(h^{(i-1)}\cdot|\pi^{(i)}|)\\
&\leq O(\lambda_{\dynED,h}\cdot h\cdot |\pi^{(i)}|).
\end{align*}

\item $L^{(i)} = L^{(i-1)}\cup L_{\new}^{(i)}$ with distortion $\sigma^{(i)} = \sigma^{(i-1)} + O(h^{(i-1)}/\kappa_{\sigma})$ and
\begin{align*}
|L^{(i)}_{\new}| &\leq \left\{
\begin{aligned}
&O(((\kappa_{\CM,L} + \kappa_{\init,L})\cdot\phi^{(i)} + 1)\cdot |\pi^{(i)}|),\text{ if $\pi^{(i)}$ is incremental}\\
&O((\kappa_{\CM,L} + \kappa_{\init,L})\cdot\phi^{(i)}\cdot\lambda_{\rt,\prune}(f^{(i-1)})\cdot\gamma^{(i-1)}_{\emb}\cdot|\pi^{(i)}|),\text{ if $\pi^{(i)}$ is decremental}
\end{aligned}
\right.\\
&= O(|\pi^{(i)}|).
\end{align*}

\item ${\cal N}$ is a $b^{(i)}$-distributed $(h^{(i)}_{\cov},h^{(i)}_{\sep},\omega^{(i)})$-pairwise cover with
\[
b^{(i)} = b^{(i-1)} + 1,\ h^{(i)}_{\cov} = h^{(i)}_{\sep} = h^{(i)} = h^{(i-1)}/3,\ \omega^{(i)} = \omega^{(i-1)} + \kappa_{\PC,\omega}.
\]
\item Routers in ${\cal R}^{(i)}$ are maintained under $f^{(i)} = f^{(i-1)} + O(1)$ updates.
\item $\Pi_{{\cal R}^{(i)}\to G^{(i)}}$ has 
\begin{align*}
h^{(i)}_{\emb} &= \max\{h^{(i-1)}_{\emb}, \lambda_{\init,h}\cdot h^{(i-1)}\}\text{ and }\\
\gamma^{(i)}_{\emb} &= \gamma^{(i-1)}_{\emb} + O((\omega^{(i-1)}\cdot\kappa_{\CM,\gamma} + \kappa_{\init,\gamma})/\phi^{(i)}).
\end{align*}
\end{itemize}

\paragraph{The Quality of Certified-EDs.} Now we show that the certified-ED $(C^{(i)},L^{(i)},{\cal N}^{(i)},{\cal R}^{(i)},\Pi_{{\cal R}^{(i)}\to G^{(i)}})$ meet the requirements. Some of them are easy to see (or they have already shown above), so we will omit some proofs.
\begin{itemize}
\item The bounds of $|C_{\new}^{(i)}|$ and $|L_{\new}^{(i)}|$ have already shown above, and obviously we have $\omega^{(i)}\leq t\cdot\kappa_{\PC,\omega}$, $f^{(i)}\leq O(t)$ and $b^{(i)}\leq O(t)$.
\item The bounds of %
$h^{(i)}_{\cov}, h^{(i)}_{\sep}, h^{(i)}_{\emb}$ follows the fact that, for 
each $i$, $h^{(i)}\leq h^{(0)} = \lambda_{\dynED,h}\cdot h$ and $h^{(i)}\geq h^{(t)} = h^{(0)}/3^{t}\geq h$.
\item About the congestion of $\Pi_{{\cal R}^{(i)}\to G^{(i)}}$, we have
\begin{align*}
\gamma^{(i)}_{\emb} &= \gamma^{(i-1)}_{\emb} + O((\omega^{(i-1)}\cdot\kappa_{\CM,\gamma} + \kappa_{\init,\gamma})/\phi^{(i)})\\
&\leq \gamma^{(i-1)}_{\emb} + O((\omega^{(i-1)}\cdot\kappa_{\CM,\gamma} + \kappa_{\init,\gamma})\cdot(\kappa_{\CM,L} + \kappa_{\init,L})\cdot\lambda_{\rt,\prune}(f^{(i-1)})\cdot\gamma_{\emb}^{(i-1)})\\
&\leq O((t\cdot\kappa_{\PC,\omega}\cdot\kappa_{\CM,\gamma} + \kappa_{\init,\gamma})\cdot(\kappa_{\CM,L} + \kappa_{\init,L})\cdot \lambda_{\rt,\prune}(t))\cdot\gamma^{(i-1)}_{\emb}.
\end{align*}

Combining $\gamma^{(0)}_{\emb} = O(\kappa_{\init,\gamma}\cdot\kappa_{\init,L}/\phi)$, we get
\[
\gamma^{(i)}_{\emb} = (\lambda_{\rt,\prune}(t) \cdot (t\cdot \kappa_{\PC,\omega}\cdot\kappa_{\CM,\gamma}+\kappa_{\init,\gamma})\cdot(\kappa_{\CM,L} + \kappa_{\init,L}))^{O(t)}/\phi.
\]
\end{itemize}

\paragraph{Isolated Vertex Insertions/Deletions} It is easy to handle $\pi^{(i)}$ if it represents isolated vertex insertions/deletions. 

Suppose that $\pi^{(i)}$ represents isolated vertex insertions $V_{\new}$ with $A_{\new}$ as their node-weighting (i.e. after this update, $A^{(i)}(v) = A_{\new}(v)$ for each $v\in V_{\new}$). For each $v\in V_{\new}$, we create a cluster $S = A_{\new}(v)$ collecting all virtual nodes of $v$, and add $S$ into an arbitrary clustering ${\cal S}\in{\cal N}^{(i-1)}$. The router $R$ corresponding to $S$ can be initialized by \Cref{thm:Router}. $R$ can be embedded into $G$ trivially (i.e. the embedding path of each router edge degenerates to the singleton vertex $v$), because each virtual node in $V(R)$ are owned by the same vertex $v\in V(G)$. There is no change to $C$ and $L$. Obviously the certified-ED after update is of $\bar{A}^{(i)}$ on $G^{(i)}$ with quality unchanged (note that for each $v\in V_{\new}$, $\bar{A}^{(i)}(v) = A^{(i)}(v) + \deg_{G^{(i)}}(v) = A_{\new}(v)$).

Suppose that $\pi^{(i)}$ represents isolated vertex deletions $V_{\del}\subseteq V(G^{(i-1)})$. For each vertex $v\in V^{\del}$, we remove all clusters $S\in{\cal N}^{(i-1)}$ s.t. $S\subseteq \bar{A}^{(i-1)}(v)$. In fact, we have $\bigcup_{S\subseteq \bar{A}^{(i-1)}(v)} = \bar{A}^{(i-1)}(v)$ (since $v$ is isolated), so this will remove all virtual nodes $\bar{A}^{(i-1)}(v)$ from ${\cal N}^{(i-1)}$. Conceptually, for each cluster $S$ that is removed, we will further remove its corresponding router $R$ from ${\cal R}^{(i-1)}$ and its embedding, but in the actual implementation, it is safe to leave them unchanged, because our algorithm will not access $R$ and its embedding after the isolated vertex $v$ and cluster $S$ are removed. Note that because $R$ is connected and some virtual nodes in $V(R)$ are owned by the isolated vertex $v$, the embedding of $R$ into $G^{(i-1)}$ is trivial (i.e. the embedding path of each router edge degenerates to the singleton vertex $v$). Finally, we also need to remove $V_{\del}$ from the landmark set $L$.

\paragraph{The Recourse from ${\cal N}^{(i-1)}$ to ${\cal N}^{(i)}$.} 
When $\pi^{(i)}$ represents isolated vertex insertions, the recourse is trivially at most $\sum_{v\in V_{\new}} A_{\new}(v) = |\pi^{(i)}|$. For isolated vertex deletions, the recourse is at most $\sum_{v\in V_{\del}} A^{(i-1)}(v) = |\pi^{(i)}|$.
Combining \Cref{thm:NodeWeightingInsertion,thm:EdgeDeletion,thm:EdgeInsertion}, the recourse is
\[
O(\lambda_{\rt,\prune}(f^{(i-1)})\cdot\gamma^{(i-1)}_{\emb}\cdot(\omega^{(i-1)} + \kappa_{\PC,\omega})\cdot |\pi^{(i)}|)\leq \lambda_{\rt,\prune}(t)\cdot t\cdot\kappa_{\PC,\omega}\cdot\kappa_{\dynED,\gamma}\cdot|\pi^{(i)}|/\phi.
\]
because the bottleneck is when $\pi^{(i)}$ represents batched edge deletions. 

\paragraph{The Running Time.} By \Cref{thm:InitCertifiedED}, the initialization time is $(|G^{(0)}| + |\bar{A}^{(0)}|)\cdot\poly(h^{(0)})\cdot n^{O(\epsilon)} = (|G^{(0)}| + |A^{(0)}|)\cdot\poly(h)\cdot n^{O(\epsilon)}$ because $h^{(0)} = h\cdot\lambda_{\dynED,h}$. By \Cref{thm:NodeWeightingInsertion,thm:EdgeDeletion,thm:EdgeInsertion} and the update algorithm for batched isolated vertex insertions/deletions, the bottleneck is batched edge deletions and the update time for batch $\pi^{(i)}$ is 
\begin{align*}
\lambda_{\rt,\prune}(f^{(i-1)})\cdot\gamma^{(i-1)}_{\emb}\cdot |\pi^{(i)}|\cdot\poly(h^{(i)})\cdot n^{O(\epsilon)}\cdot (1/\phi^{(i)} + 2^{O(f^{(i-1)})}\cdot \omega^{(i-1)})
= |\pi^{(i)}|\cdot\poly(h)\cdot n^{O(\epsilon)}/\phi.
\end{align*}

\section{Dynamic Length-Constrained Expander Decomposition with Density}
\label{sect:DynDenseED}

In this section, we will generalize the notion of certified-EDs in \Cref{def:CertifiedED} to $\rho$-dense certified-EDs defined in \Cref{def:DenseCertifiedED}, and extend the online-batch dynamic certified-ED algorithm to $\rho$-dense certified-EDs. Roughly speaking, the \emph{density} $\rho$ (defined in \Cref{def:Density}) is essentially weight functions on node-weightings (in other words, the ``weights of weights'' of vertices). The motivation of defining this second level of weights is to reduce the recourse on the pairwise cover ${\cal N}$ in the dynamic certified-ED algorithm in \Cref{sect:DynamicCertifiedED}. 

To be specific, the recourse on ${\cal N}$ is roughly $1/\phi$ by \Cref{thm:DynamicED}, and the bottleneck is the subroutine \Cref{thm:EdgeDeletion} for handling batched edge deletions. However, to obtain the low-recourse dynamic vertex sparsifier in \Cref{sect:DynSparsifier}, we will require dynamic certified-ED algorithms with recourse on ${\cal N}$ independent of $1/\phi$. To achieve this, the high level idea is to put $1/\phi$ \emph{items} (see \Cref{def:Density}) on each virtual node and define the routers on items instead of virtual nodes. Then, when handling batched edge deletions, roughly speaking, each unit of vertex pruning on routers now cause only amortized $\phi$ (instead of $1$) unit of virtual node pruning, which reduces the recourse of ${\cal N}$ by a factor $1/\phi$ if we keep the congestion of the embedding unchanged. On the other hand, routing more things with the same congestion requires more routability of the certified-ED, so the size of the cut and landmark set will now depend on the total density instead of the number of virtual nodes.

\begin{definition}[$\rho$-Dense Node-Weighting]
Given a graph $G$ and a node-weighting $A$ on $G$, the \emph{density} of $A$ is a function $\rho:A\to\mathbb{N}_{>0}$. The \emph{density} of each virtual node $v\in A$ is $\rho(v)$ and let $\rho(A) = \sum_{v\in A}\rho(v)$. Equivalently, we say a virtual node $v$ owns $\rho(v)$ many \emph{items}. Without ambiguity, we also use $\rho(v)$ to denote the set of \emph{items} owned by $v$, and use $\rho(A):=\bigcup_{v\in A}\rho(v)$ to denote the whole set of items. A node-weighting $A$ is unit-dense if each virtual node $v\in A$ has $\rho(v)=1$. 
\label{def:Density}
\end{definition}

\begin{definition}[$\rho$-Dense Pairwise Cover]

Let $G$ be a graph with $\rho$-dense node-weighting $A$. A pairwise cover ${\cal N}$ of $A$ on $G$ is further $\rho$-dense, if each cluster $S\in{\cal N}$ is assigned a density function (called the \textit{$S$-density}) $\rho_{S}:S\to \mathbb{N}_{>0}$ s.t. $\rho_{S}\geq \rho_{\mid S}$.
\label{def:PairwiseCoverDensity}
\end{definition}

\Cref{def:DenseCertifiedED} generalizes the notion of certified-EDs in \Cref{def:CertifiedED}, but with the following main differences. 
(1) We only need ${\cal N}$ to be a pairwise cover, which is weaker than the notion of $b$-distributed neighborhood cover in \Cref{def:CertifiedED}, but ${\cal N}$ should be $\rho$-dense. 
(2) Each router $R$ of $S\in{\cal N}$ has vertices corresponding to items in $\rho_{S}(S)$ instead of virtual nodes in $S$.

\begin{definition}[$\rho$-Dense Certified-ED]

Let $G$ be a graph with $\rho$-dense node-weighting $A$. A $\rho$-dense certified-ED $(C,L,{\cal N},{\cal R},\Pi_{{\cal R}\to G})$ of $A$ on $G$ satisfies that 
\begin{itemize}
\item $C$ is an integral moving cut on $G$,
\item $L$ is a landmark set of $C$ on $G$ with distortion $\sigma$,
\item ${\cal N}$ denotes a $\rho$-dense $(h_{\cov}, h_{\sep}, \omega)$-pairwise cover of $A$ on $G-C$ with $h_{\cov} = h_{\sep} = h$
\item ${\cal R}=\{R^{S}\mid S\in{\cal N}\}$ is a collection of routers s.t. for each cluster $S\in{\cal N}$, $R^{S}$ is a router maintained by \Cref{thm:Router} under at most $f$ update batches s.t. $V(R^{S})\supseteq \rho_{S}(S)$. Precisely, each vertex $v\in V(R^{S})\cap \rho_{S}(S)$ is corresponding to the item in $\rho_{S}(S)$, and each vertex $v\in V(R^{S})\setminus \rho_{S}(S)$ is a \emph{redundant} vertex corresponding to nothing. 

This is equivalent to the following 
\emph{density property}: for each virtual node $v\in S$, its $S$-density $\rho_{S}(v)$ is the number of vertices in $R^{S}$ corresponding to items of $v$.

\item $\Pi_{{\cal R}\to G}$ denotes an embedding of all routers into $G$ with length $h_{\emb}$ and congestion $\gamma_{\emb}$. In this embedding, each router vertex $v_{R}$ is mapped to the vertex $v_{G}\in V(G)$ which owns the corresponding item of $v_{R}$ (if $v_{R}$ is a redundant router vertex, then it can be mapped to an arbitrary vertex $v_{G}\in V(G)$).
\end{itemize}

\label{def:DenseCertifiedED}
\end{definition}

The organization of this section is similar to \Cref{sect:DynamicCertifiedED}. In \Cref{sect:InitDenseCertifiedED}, we show the initialization algorithm for $\rho$-dense certified-EDs. In \Cref{sect:DynDenseEDNWI,sect:DynDenseEDEdgeDeletion,sect:DynDenseEDEdgeInsertion}, we discuss how to maintain a $\rho$-dense certified-ED under batched update of node-weighting insertion, edge deletion and edge insertion respectively. 

Although we can obtain an online-batch dynamic $\rho$-dense certified-ED algorithm (analogous to \Cref{thm:DynamicED} for certified-ED) combining subroutines in \Cref{sect:InitDenseCertifiedED,sect:DynDenseEDNWI,sect:DynDenseEDEdgeDeletion,sect:DynDenseEDEdgeInsertion}, we did not formalize it here, because we need a specialized version of dynamic $\rho$-dense certified-EDs algorithm for our application in \Cref{sect:DynSparsifier}. Roughly speaking, in \Cref{sect:DynSparsifier}, we need to maintain a collection of dynamic $\rho$-dense certified-EDs of a same node-weighting $A$ with different length scales simultaneously, and further enforce that the node-weighting $A$ should include all landmark sets $L$ of this collection. Therefore, in \Cref{sect:DynDenseEDLandmarkInsertion}, we design an additional subroutine which can insert the newly generated landmarks to the node-weighting. This specialized version of dynamic $\rho$-dense certified-EDs algorithm is integrated in the proof of \Cref{thm:NonHopReducingEmulator}.

\subsection{Initialization}
\label{sect:InitDenseCertifiedED}

\begin{lemma}
Given a graph $G$ with a $\rho$-dense positive node-weighting $A$ and parameters $h$ and $\phi$, there is an algorithm that computes a $\rho$-dense certified-ED $(C, L,{\cal N},{\cal R},\Pi_{{\cal R}\to G})$ of $A$ on $G$ satisfying the following.
\begin{itemize}
\item $|C|\leq  O(\kappa_{\init,C}\cdot\phi\cdot h\cdot \rho(A))$.
\item $L$ has distortion $\sigma = h/\kappa_{\sigma}$ and size $|L| = O(\kappa_{\init,L}\cdot\phi\cdot \rho(A))$.

\item ${\cal N}$ is a $\rho$-dense $(h_{\cov},h_{\sep},\omega)$-pairwise cover of $A$ on $G-C$ with 
\[b = 1,\ h_{\cov}=h,\ h_{\sep}=h\text{ and }\omega = \kappa_{\PC,\omega}.
\]
\item Routers ${\cal R}$ are initialized by \Cref{thm:Router} undergoing $f=0$ batched update.%
\item Embedding $\Pi_{{\cal R}\to G}$ has length $h_{\emb} = O(\lambda_{\init,h}\cdot h)$ and congestion $\gamma_{\emb} = O(\kappa_{\init,\gamma}/\phi)$.
\end{itemize}
The running time is $(|G|+\rho(A))\cdot \poly(h)\cdot n^{O(\epsilon)}$.
\label{thm:InitDenseCertifiedED}
\end{lemma}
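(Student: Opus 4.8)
The plan is to mimic the proof of \Cref{thm:InitCertifiedED} (the unit-dense case) but feed the dense structure through the subroutines with items playing the role that virtual nodes played there. First I would reduce to the unit-dense world: given the $\rho$-dense node-weighting $A$, define an auxiliary node-weighting $A^{\diamond}$ on $G$ with $A^{\diamond}(v) = \sum_{u \in A(v)} \rho(u) = \rho$-weight of $v$, so that $|A^{\diamond}| = \rho(A)$ and $A^{\diamond} \geq \mathds{1}(V(G))$ since $A$ is positive. Intuitively each item of $A$ becomes its own virtual node of $A^{\diamond}$. Then I would run \Cref{thm:WitnessedED} on $G$ with $A^{\diamond}$ and parameters $h_{\wit} = 2 \lambda_{\PC,\diam} \cdot h$, $\phi$, obtaining $(C_{\wit}, {\cal N}_{\wit}, {\cal R}_{\wit}, \Pi_{{\cal R}_{\wit} \to G - C_{\wit}})$ with $|C_{\wit}| \leq \kappa_{\ED,C} \cdot h_{\wit} \cdot \phi \cdot \rho(A)$, width $\kappa_{\PC,\omega}$, and embedding congestion $1/\phi$. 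Next I would apply \Cref{thm:InitLandmark} to this witnessed-ED with distortion parameter $\sigma = h/\kappa_{\sigma}$, producing the sub-cut $C = (C_{\wit})_{\mid \supp(C)}$ and landmark set $L$ with $|L| = O(\lambda_{\PC,\diam} \cdot \kappa_{\PC,\omega} \cdot |C|/\sigma) \leq O(\kappa_{\init,L} \cdot \phi \cdot \rho(A))$, and the key distance-preservation property that $\ell_{G-C}(P)/2 \leq \ell_{G-C'}(P)$ for long paths (here written with $C_{\wit}, C$).

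Then I would build the dense pairwise cover. Apply \Cref{thm:NeighborhoodCovers} on $G - C$ with node-weighting $A$ itself (not $A^{\diamond}$) and length parameter $h$, giving a neighborhood cover ${\cal N}$ of $A$ on $G-C$ with $h_{\cov} = h_{\sep} = h$, width $\kappa_{\PC,\omega}$, diameter $\lambda_{\PC,\diam} \cdot h$, and the property that a cluster containing one virtual node of $v$ contains all of them. Now I would equip ${\cal N}$ with densities: for each cluster $S \in {\cal N}$, set $\rho_S = \rho_{\mid S}$, so the density property is satisfied with equality and ${\cal N}$ becomes $\rho$-dense. For each cluster $S$ I would initialize, via \Cref{thm:Router}, a router $R^S$ on vertex set $\rho_S(S) = \bigcup_{v \in S} \rho(v)$ (the items of $S$), with no redundant vertices initially; let ${\cal R}$ collect these. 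The total vertex count over all routers is $\sum_{S} |\rho_S(S)| \leq \omega \cdot \rho(A) = O(\kappa_{\PC,\omega} \cdot \rho(A))$.

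Finally I would compute the embedding $\Pi_{{\cal R} \to G}$ exactly as in \Cref{thm:InitCertifiedED}: form a demand $D$ on $V(G)$ putting one unit on $(u_{\vertex}, v_{\vertex})$ for each router edge $(u_R, v_R) \in E({\cal R})$, where $u_{\vertex}$ owns the item corresponding to $u_R$. As in \Cref{claim:DInitED}, each item of $A$ appears in at most $\omega = \kappa_{\PC,\omega}$ clusters and each corresponding router vertex has degree at most $\kappa_{\rt,\deg}$, so $D$ is $((\kappa_{\rt,\deg} \cdot \kappa_{\PC,\omega}) \cdot A^{\diamond})$-respecting; and it is $h_{\wit}$-length on $G - C_{\wit}$ because a router edge forces its endpoints into a common cluster of ${\cal N}$ (diameter $\leq \lambda_{\PC,\diam} \cdot h = h_{\diam}$ on $G-C$), and the \Cref{thm:InitLandmark} property upgrades this to distance $\leq 2 h_{\diam} = h_{\wit}$ on $G - C_{\wit}$. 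Apply \Cref{thm:WitnessExpanderRouting} on $G - C_{\wit}$ with its witness scaled by $\kappa_{\rt,\deg} \cdot \kappa_{\PC,\omega}$ and demand $D$; the integral routing gives $\Pi_{{\cal R} \to G}$ with length $O(\lambda_{\rr,h} \cdot \lambda_{\ED,h}^2 \cdot \lambda_{\PC,\diam} \cdot h) = O(\lambda_{\init,h} \cdot h)$ and congestion $O(\kappa_{\rt,\deg} \cdot \kappa_{\PC,\omega} \cdot \kappa_{\rr,\gamma}/\phi) = O(\kappa_{\init,\gamma}/\phi)$. Running-time bookkeeping is routine: \Cref{thm:WitnessedED} costs $|G| \cdot \poly(h) \cdot n^{O(\epsilon)}$, \Cref{thm:InitLandmark} and \Cref{thm:NeighborhoodCovers} cost $(|G| + \rho(A)) \cdot n^{O(\epsilon)}$, router initialization costs $(\rho(A) \cdot \kappa_{\PC,\omega}) \cdot n^{O(\epsilon)}$, and the embedding costs $\rho(A) \cdot \kappa_{\rt,\deg} \cdot \kappa_{\PC,\omega} \cdot \omega_{\wit} \cdot \poly(h) \cdot n^{O(\epsilon)}$ by \Cref{thm:WitnessExpanderRouting}, summing to $(|G| + \rho(A)) \cdot \poly(h) \cdot n^{O(\epsilon)}$.

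The main obstacle I anticipate is a bookkeeping subtlety rather than a conceptual one: making sure the "two levels of weights" line up correctly when invoking the unit-dense machinery. Specifically, \Cref{thm:WitnessedED} and \Cref{thm:WitnessExpanderRouting} are stated for node-weightings, so I must consistently treat items as the atomic units there ($A^{\diamond}$), while \Cref{thm:NeighborhoodCovers} and the $\rho$-dense pairwise cover definition operate at the level of virtual nodes ($A$) with $\rho_S$ recording how many items each one carries. The density property in \Cref{def:DenseCertifiedED} demands that $\rho_S(v)$ equal the number of $R^S$-vertices corresponding to $v$'s items; since I build $R^S$ directly on $\rho_S(S)$ with $\rho_S = \rho_{\mid S}$ this holds by construction, but I should double-check that \Cref{thm:Router} accepts an arbitrary vertex set of size $|\rho_S(S)|$ (it does — it initializes on any given set $V$) and that the embedding's vertex map sends each router vertex to the vertex of $G$ owning the corresponding item, which matches \Cref{def:DenseCertifiedED}. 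No new ideas beyond \Cref{thm:InitCertifiedED} should be needed.
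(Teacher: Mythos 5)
Your proposal is correct and follows the same underlying idea as the paper's proof: treat items as virtual nodes, so that the unit-dense machinery can be applied with the auxiliary node-weighting $A^{\diamond}(v) = \rho(A(v))$. The only difference is presentational. The paper's proof is a genuine black-box reduction: it invokes \Cref{thm:InitCertifiedED} directly on $(G, A^{\diamond})$ to get a certified-ED $(C, L, {\cal N}', {\cal R}, \Pi_{{\cal R}\to G})$ whose clusters are subsets of items, and then simply re-interprets each cluster $S' \in {\cal N}'$ as the cluster $S = \{v \in A : \text{some item of } v \text{ lies in } S'\}$ with $\rho_S = \rho_{\mid S}$, keeping the routers and embedding untouched; the guarantee that a cluster containing one virtual node of $v$ contains all of them (which also holds for the item-level cover ${\cal N}'$) ensures the density property. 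You instead unroll the proof of \Cref{thm:InitCertifiedED} step by step — WitnessedED on $A^{\diamond}$, then InitLandmark, then a fresh NeighborhoodCovers call on $A$, then routers on $\rho_S(S)$, then the embedding via WitnessExpanderRouting — which is more verbose but reaches the same structure with the same parameters. Both are sound; the black-box version is shorter and delegates all parameter tracking to \Cref{thm:InitCertifiedED}. Your attention to where $A^{\diamond}$ versus $A$ feeds into each subroutine is exactly the subtlety the paper sidesteps by working purely with $A^{\diamond}$ and only re-interpreting at the end.
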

\begin{proof}
In fact, this is a simple corollary of \Cref{thm:InitCertifiedED}. We define a node-weighting $A'$ s.t. for each vertex $v\in V(G)$, $A'(v) = \rho(A(v))$ be the number of items of virtual nodes of $v$. Note that items in $\rho(A)$ one-one corresponding to virtual nodes in $A'$. Then we invoke \Cref{thm:InitCertifiedED} on graph $G$ and node-weighting $A'$ with the same parameters. The output certified-ED $(C,L,{\cal N}',{\cal R},\Pi_{{\cal R}\to G})$ is almost what we desire. 

We just need to modify ${\cal N}'$ to ${\cal N}$ to make it fit with our definition of dense certified-ED. Precisely, ${\cal N}'$ can be viewed as a pairwise cover of items in $\rho(A)$. For each cluster $S'\in{\cal N}'$, we define a new cluster $S=\{v\in A\mid \text{some item of $v$ is included by $S'$}\}$, and set the $S$-density function be $\rho_{S} = \rho_{\mid S}$. The density property of $(C,L,{\cal N},{\cal R},\Pi_{{\cal R}\to G})$ holds because $\{v\in A\mid\text{some item of $v$ is included by $S'$}\} = \{v\in A\mid\text{all items of $v$ are included by $S'$}\}$ for each $S'\in{\cal N}'$ (this is from the guarantee of \Cref{thm:InitCertifiedED}: if a cluster $S'$ includes a virtual node in $A'$ of some vertex $v\in V(G)$, then $S'$ includes all virtual nodes in $A'$ of $v$). 
\end{proof}

\subsection{Batched Node-Weighting Insertion}
\label{sect:DynDenseEDNWI}

\begin{theorem}
Let $G$ be a graph with a $\rho$-dense node-weighting $A$ and a 
$\rho$-dense certified-ED $(C, L, {\cal N}, {\cal R}, \Pi_{{\cal R}\to G})$ of $A$ on $G$ in \Cref{def:DenseCertifiedED}. Given a $\rho_{\new}$-dense node-weighting $A_{\new}$ s.t. $\wtilde{A} = A+A_{\new}\geq \deg_{G} + \mathds{1}(V(G))$, and parameter $\phi$, there is an algorithm that computes a $\wtilde{\rho}$-dense certified-ED $(\wtilde{C}, \wtilde{L}, \wtilde{\cal N}, \wtilde{\cal R},\Pi_{\wtilde{\cal R}\to G})$ of $\wtilde{A}$ with density $\wtilde{\rho}=\rho \uplus \rho_{\new}$\footnote{Recall the definition of operation $\uplus$ from \Cref{sect:PairwiseCovers}.} on $G$ s.t.
\begin{itemize}
\item $\wtilde{C}=C+C_{\new}$ with 
$|C_{\new}|\leq O(\kappa_{\init,C}\cdot\omega\cdot\phi\cdot h\cdot\rho_{\new}(A_{\new}))$
\item $\wtilde{L} = L\cup L_{\new}$ with distortion $\wtilde{\sigma} = \sigma + O(h/\kappa_{\sigma})$ and
\[
|L_{\new}|\leq O((\kappa_{\CM,L} + \kappa_{\init,L})\cdot\omega\cdot\phi\cdot\rho_{\new}(A_{\new})),
\]
\item $\wtilde{\cal N}$ is a $\wtilde{\rho}$-dense $(\wtilde{h}_{\cov},\wtilde{h}_{\sep},\wtilde{\omega})$-pairwise cover with
\[
\wtilde{h}_{\cov}=h/3,\ \wtilde{h}_{\sep} = h/3\text{ and }\wtilde{\omega} = \omega + \kappa_{\PC,\omega}
\]
\item Routers in $\wtilde{\cal R}$ are maintained by \Cref{thm:Router} under $\wtilde{f} = f+1$ updates.
\item $\Pi_{\wtilde{\cal R}\to G}$ is a $(\wtilde{h}_{\emb},\wtilde{\gamma}_{\emb})$-embedding with 
\begin{align*}
\wtilde{h}_{\emb} &= \max\{h_{\emb}, O(\lambda_{\init,h}\cdot h)\}\\
\wtilde{\gamma}_{\emb} &= \gamma_{\emb} + O((\omega\cdot\kappa_{\CM,\gamma} + \kappa_{\init,\gamma})/\phi)
\end{align*}
\end{itemize}
Given a node-weighting $\wtilde{A}^{\star}\subseteq \wtilde{A}$, the recourse from ${\cal N}$ to $\wtilde{\cal N}$ restricted on $\wtilde{A}^{\star}$ is
\[
\recourse_{\mid \wtilde{A}^{\star}}({\cal N}\to\wtilde{\cal N}) =O(\rho_{\new}(A_{\new})\cdot\omega\cdot\kappa_{\PC,\omega}/\min_{v\in \wtilde{A}^{\star}}\wtilde{\rho}(v)).
\]
The running time is $\omega\cdot\rho_{\new}(A_{\new})\cdot\poly(h)\cdot n^{O(\epsilon)}\cdot(1/\phi + 2^{O(f)}\cdot\omega)$.

\label{thm:NWInsertWithDensity}
\end{theorem}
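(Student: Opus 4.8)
The plan is to follow the proof of \Cref{thm:NodeWeightingInsertion} almost verbatim, but to carry out the cutmatch at the level of \emph{items} rather than virtual nodes, using \Cref{thm:cutmatch} together with \Cref{remark:GeneralCutmatch}. First I would run the local cutmatch on $G-C$ with the source function $v\mapsto\rho_{\new}(A_{\new}(v))$ counting all items of $A_{\new}$ (after $\omega$-fold inflation, see below) and the sink function $v\mapsto\rho(A(v))$ counting all items of $A$, with parameters $h_{\CM}=h/3$ and $\phi_{\CM}=\phi$; the hypothesis $\wtilde A\ge\deg_{G}+\mathds{1}(V(G))$ together with positivity of $\rho,\rho_{\new}$ gives the required $A^{\src}+A^{\sink}\ge\deg_{G}+\mathds{1}(V(G))$ condition. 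This yields a partition of the items of $A_{\new}$ into a matched part $A^{\src}_{M}$ and an unmatched part $A^{\src}_{U}$, a partition of the items of $A$ into $A^{\sink}_{M},A^{\sink}_{U}$ with $|A^{\sink}_{M}|=O(|A^{\src}_{M}|)$, a matching $M_{\CM}$ with an $h_{\CM}$-length embedding of congestion $O(\kappa_{\CM,\gamma}/\phi)$, a moving cut $C_{\CM}$ of size $O(\phi\cdot h\cdot\omega\cdot\rho_{\new}(A_{\new}))$, and a landmark set $L_{\CM}$ of $C_{\CM}$ with distortion $O(h/\kappa_{\sigma})$ and size $O(\kappa_{\CM,L}\cdot\phi\cdot\omega\cdot\rho_{\new}(A_{\new}))$.

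Second, I would fold the matched items back into the existing $\rho$-dense certified-ED, exactly as in \Cref{thm:NodeWeightingInsertion}: for every cluster $S\in{\cal N}$ and its router $R^{S}$, and every matched item $j$ of an old virtual node $v_{\old}\in S$, the item $i$ of a new virtual node $v_{\new}$ matched to $j$ is inserted into $R^{S}$ as a new router vertex via the matching-insertion operation of \Cref{thm:Router}, with embedding path taken from $\Pi_{M_{\CM}\to G-C}$; simultaneously $v_{\new}$ is added to $S$ and the $S$-density $\rho_{S}$ is updated. The delicate point here, and the main obstacle, is preserving the \emph{density property} of \Cref{def:DenseCertifiedED}: once $v_{\new}$ is placed into a cluster $S$ it must contribute \emph{all} of its $\rho_{\new}(v_{\new})$ items to $R^{S}$, even those items of $v_{\new}$ that the cutmatch matched into a different cluster or left unmatched. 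I resolve this by keeping items of a single virtual node together and by inflating the new item set $\omega$-fold before the cutmatch, so that each $v_{\new}$ acquires $\omega\cdot\rho_{\new}(v_{\new})$ matched items, enough to supply a full bundle of $\rho_{\new}(v_{\new})$ items to each of the (up to $\omega$) clusters it must join to preserve the covering radius with respect to the close neighbors of its matched partner. This inflation is exactly the origin of the $\omega$ factors in $|C_{\new}|$, $|L_{\new}|$, and the congestion $\wtilde\gamma_{\emb}$ (the latter also picking up the usual $\omega$ from reusing a single cutmatch path across the $\omega$ clusters, just as in \Cref{thm:NodeWeightingInsertion}).

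Third, for the unmatched items I would build a fresh $\rho$-dense certified-ED on the \emph{local} graph $(G-C-C_{\CM})[A_{2}]$, where $A_{2}$ is the $\wtilde\rho$-dense node-weighting obtained from $\wtilde A$ by discarding the virtual nodes owned by vertices in $\supp(A^{\sink}_{U})$ (so $A_{2}$ contains $A^{\src}_{U}$ and the matched sink items), by invoking \Cref{thm:InitDenseCertifiedED} with $h_{\ED}=h/3$, $\phi_{\ED}=\phi$; the $h_{\CM}$-separation of $A^{\src}_{U}$ from $A^{\sink}_{U}$ in $G-C-C_{\CM}$ localizes distances exactly as in \Cref{ob:Subcase2c}, and $A_{2}$ has density mass $O(\omega\cdot\rho_{\new}(A_{\new}))$, so $|C_{\ED}|=O(\kappa_{\init,C}\cdot\omega\cdot\phi\cdot h\cdot\rho_{\new}(A_{\new}))$ and $|L_{\ED}|=O(\kappa_{\init,L}\cdot\omega\cdot\phi\cdot\rho_{\new}(A_{\new}))$. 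Then I set $\wtilde C=C+C_{\CM}+C_{\ED}$, $\wtilde L=L\cup L_{\CM}\cup L_{\ED}$ (distortion bounded by \Cref{lemma:LandmarkUnion}), $\wtilde{\cal N}$ equal to the updated clusters together with the new local cover (width $\omega+\kappa_{\PC,\omega}$), and $\wtilde{\cal R},\Pi_{\wtilde{\cal R}\to G}$ the union of the two router systems and embeddings; the routers from the updated part are under $f+1$ batches, those from the local part under $0$. The quality of $\wtilde{\cal N}$ (covering radius $h/3$, separation $h/3$) is proved by the same case analysis as \Cref{lemma:ProperiesOfTildeN}, splitting on where a virtual node of $\wtilde A$ lies (old, matched-new, unmatched-new; inside or outside $A_{2}$), with ``items matched via $M_{\CM}$'' playing the role of ``virtual nodes matched via $M_{\CM}$''.

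Finally, for the restricted recourse I would observe that the only virtual nodes whose membership in ${\cal N}$ changes are (i) the new virtual nodes of $A_{\new}$, each added to at most $\omega$ clusters of the updated part plus at most $\kappa_{\PC,\omega}$ clusters of the new local cover, and (ii) the $O(|A^{\sink}_{M}|)$ old virtual nodes re-inserted into the local cover, each into at most $\kappa_{\PC,\omega}$ clusters; the key point that makes this small when restricted to $\wtilde A^{\star}$ is that every virtual node of $\wtilde A^{\star}$ carries at least $\min_{v\in\wtilde A^{\star}}\wtilde\rho(v)$ items, hence at most $O(\rho_{\new}(A_{\new})/\min_{v\in\wtilde A^{\star}}\wtilde\rho(v))$ of the changed virtual nodes lie in $\wtilde A^{\star}$, and multiplying by the $O(\omega\cdot\kappa_{\PC,\omega})$ cluster-changes per virtual node gives the claimed $\recourse_{\mid\wtilde A^{\star}}({\cal N}\to\wtilde{\cal N})=O(\rho_{\new}(A_{\new})\cdot\omega\cdot\kappa_{\PC,\omega}/\min_{v\in\wtilde A^{\star}}\wtilde\rho(v))$. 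The running time is dominated by the cutmatch ($\poly(h)\cdot\omega\cdot\rho_{\new}(A_{\new})/\phi$ by \Cref{thm:cutmatch}), the router matching-insertions ($2^{O(f)}\cdot\omega\cdot\rho_{\new}(A_{\new})$ edges total by \Cref{thm:Router}), and the local initialization ($\omega\cdot\rho_{\new}(A_{\new})\cdot\poly(h)\cdot n^{O(\epsilon)}$ by \Cref{thm:InitDenseCertifiedED}), which sums to the stated bound.
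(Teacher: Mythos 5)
Your proposal follows the paper's argument essentially verbatim: item-level cutmatch on $G-C$ with $\omega$-fold inflation of the source items, folding the matched part back into the existing routers via matching-insertion, a fresh local certified-ED on $(G-C-C_{\CM})[A_2]$ for the unmatched-sink region, and a restricted-recourse bound driven by the fact that every virtual node in $\wtilde{A}^\star$ carries at least $\min_{v\in\wtilde{A}^\star}\wtilde\rho(v)$ items. The one point worth tightening is your description of the density constraint: the density property of \Cref{def:DenseCertifiedED} only requires $\rho_S(v_{\new})\ge\rho_{\new}(v_{\new})$, not that \emph{all} items of $v_{\new}$ land in $R^S$; the paper accordingly \emph{filters} the restricted matching $M'_S$ to $M_S$, retaining (and hence adding $v_{\new}$ to $S$) only when $S$ already receives at least $\rho_{\new}(v_{\new})$ matched items, and dropping $v_{\new}$'s matched items from clusters that get too few. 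The $\omega$-fold inflation is what makes the averaging work so that each close partner $v$ has \emph{some} cluster containing $v$ that receives a full bundle of $v_{\new}$'s items; it does not arrange for every cluster that touches $v_{\new}$ to receive one. Your recourse and running-time accounting is slightly reallocated between factors (e.g.\ the total matching-edge count added to routers is $\sum_S|M_S|=O(\omega^2\cdot\rho_{\new}(A_{\new}))$ because each matching edge can appear in up to $\omega$ clusters, and the changed-node mass in $\wtilde{A}^\star$ is $O(\omega\cdot\rho_{\new}(A_{\new}))/\min\wtilde\rho$ rather than $O(\rho_{\new}(A_{\new}))/\min\wtilde\rho$), but these reallocations cancel and land on the same final bounds as the paper.
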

\begin{proof}
We first set up notations in this proof to avoid confusion. We will use subscripts $v_\vertex$ to refer to a (real) vertex, $v$ without subscript to refer to a virtual node, and $v_\ite$ to refer to an item. Also, for a set $\rho'$ of items, we use $\supp_{\virtual}(\rho')$ to denote the set of virtual nodes owning some items in $\rho'$, and naturally, $\supp_{\vertex}(\rho') = \supp(\supp_{\virtual}(\rho'))$ denote the set of vertices owning some items in $\rho'$. 

The first step is to apply the local cutmatch subroutine, i.e. \Cref{thm:cutmatch}. We define source density function $\rho^{\src}=\omega\cdot\rho_{\new}(A_{\new})$ and sink density-function $\rho^{\sink} = \rho(A)$. To elaborate, for each virtual node $v\in A_{\new}$, $\rho^{\src}(v) = \omega\cdot\rho_{\new}(v)$ is the total number of items owned by $v$ scaled up by $\omega$, but for each virtual node 
$v\in A$, $\rho^{\sink}(v) = \rho(v)$ is not scaled up. Furthermore, we can extend the domain of density functions $\rho^{\src}$ and $\rho^{\sink}$ to \emph{vertices} (originally, the domain of a density function is a virtual node set) in a natural way, so they are valid inputs to a general form (see \Cref{remark:GeneralCutmatch}) of \Cref{thm:cutmatch}. That is, for each vertex $v_{\vertex}\in V(G)$, $\rho^{\src}(v_{\vertex}) = \sum_{\text{virtual node $v\in A_{\new}(v_{\vertex})$}} \rho^{\src}(v)$ is the total number of items of all $A_{\new}$-virtual nodes at $v_{\vertex}$, scaled up by $\omega$ at the end. Similarly, $\rho^{\src}(v_{\vertex}) = \sum_{v\in A(v_{\vertex})}\rho(v)$ is the total number of items of all $A$-virtual nodes at $v_{\vertex}$.

We apply \Cref{thm:cutmatch} on $G-C$ with source density function $\rho^{\src}$, sink density function $\rho^{\sink}$, parameters $h_{\CM}=h/3$ and $\phi_{\CM} = \phi$. The output is
\begin{itemize}
\item a partition $\rho^{\src}_{M},\rho^{\src}_{U}$ of $\rho^{\src}$,
\item a partition $\rho^{\sink}_{M},\rho^{\sink}_{U}$ of $\rho^{\sink}$ with $|\rho^{\sink}_{M}| = O(|\rho^{\src}_{M}|)$,
\item a matching $M_{\CM}$ between $\rho^{\src}_{M}$ and $\rho^{\sink}_{M}$ with size $|M_{\CM}| = |\rho^{\src}_{M}|$, and an embedding $\Pi_{M_{\CM}\to G-C}$ with length $h_{\CM}$ and congestion $\gamma_{\CM}=O(\kappa_{\CM,\gamma}/\phi_{\CM})$
\item a moving cut $C_{\CM}$ with size $|C_{\CM}|\leq \phi_{\CM}\cdot h_{\CM}\cdot |\rho^{\src}|$ s.t. $\rho^{\src}_{U}$ is $h_{\CM}$-separated from $\rho^{\sink}_{M}$ in $G-C-C_{\CM}$.
\item a landmark set $L_{\CM}$ of $C_{\CM}$ on $G-C$ with distortion $\sigma_{\CM} = O(h/\kappa_{\sigma})$ and size $|L_{\CM}| = O(\kappa_{\CM,L}\cdot \phi_{\CM}\cdot |\rho^{\src}|)$.
\end{itemize}
We define node-weightings $A_{1}$ and $A_{2}$ as follows. Let $\wtilde{A} = A + A_{\new}$ be the total node-weighting with density $\wtilde{\rho} = \rho\uplus \rho_{\new}$. Then 
\[
A_{1}=A+\{v\in A_{\new}\mid \rho^{\src}(v)\subseteq \rho^{\src}_{M}\}\text{ with density }\rho_{1}=\wtilde{\rho}_{\mid A_{1}},
\]
and
\[
A_{2} = \wtilde{A}\setminus \wtilde{A}(\supp_{\vertex}(\rho^{\sink}_{U}))\text{ with density }\rho_{2} = \wtilde{\rho}_{\mid A_{2}}.
\]
That is, $A_{2}$ excludes virtual nodes of vertices in $\supp_{\vertex}(\rho^{\sink}_{U})$. 
\begin{claim}
$\rho_{2}(A_{2})\leq O(\omega\cdot\rho_{\new}(A_{\new}))$.
\end{claim}
\begin{proof}
By the definition of $A_{2}$, each virtual node $v\in A_{2}$ has $\rho^{\sink}_{U}(v) = 0$. If $v\in A_{\new}$ is a new virtual node, $\rho_{2}(v) = \wtilde{\rho}(v) = \rho_{\new}(v) \leq \rho^{\src}(v)$. If $v\in A$ is an old virtual node, $\rho_{2}(v) = \wtilde{\rho}(v) = \rho(v) = \rho^{\sink}(v) = \rho^{\sink}_{M}(v) + \rho^{\sink}_{U}(v) = \rho^{\sink}_{M}(v)$. Therefore, we have $\rho_{2}(A_{2})\leq |\rho^{\src}| + |\rho^{\sink}_{M}|$. Combining $|\rho^{\sink}_{M}| = O(|\rho^{\src}_{M}|)\leq O(|\rho^{\src}|)$, we can conclude that $\rho_{2}(A_{2})\leq O(|\rho^{\src}|) = O(\omega\cdot \rho_{\new}(A_{\new}))$.
\end{proof}

\paragraph{Obtain a Certified-ED of $A_{1}$ by Update.} We will update $({\cal N},{\cal R},\Pi_{{\cal R}\to G})$ to $({\cal N}_{1},{\cal R}_{1},\Pi_{{\cal R}_{1}\to G})$ in what follows. For better understanding, with the $({\cal N}_{1},{\cal R}_{1},\Pi_{{\cal R}_{1}\to G})$ constructed below, $(C + C_{\CM},L\cup L_{\CM},{\cal N}_{1},{\cal R}_{1},\Pi_{{\cal R}_{1}\to G})$ is actually a $\rho_{1}$-dense certified-ED of $A_{1}$ on $G$ with slightly worse quality parameters (but we will not prove this explicitly). 

First, for each cluster $S\in{\cal N}$ and its router $R\in{\cal R}$, we will update them to be $S_{1}$ and $R_{1}$. Concretely, for each virtual node $v\in S$, we first establish an injection from $v$'s items in $\rho^{\sink}$ to $v$'s items in $R$, so that we can formally talk about adding matching edges in $M_{\CM}$ to $R$ later. Precisely, $v$ has $\rho^{\sink}(v)$ many items in $\rho^{\sink}$, while it has $\rho_{S}(v)\geq \rho(v) = \rho^{\sink}(v)$ many items in $R$, so we just select arbitrary $\rho^{\sink}(v)$ many $v$'s items in $R$ and one-one correspond them with items in $\rho^{\sink}(v)$.

Next, we let 
\[
M'_{S} = \{(u_{\ite},v_{\ite})\in M_{\CM}\mid \text{sink item $v_{\ite}$ belongs to some virtual node in $S$}\}
\]
be the sub-matching restricted on $S$. However, simply adding the whole $M'_{S}$ into $R$ may violate the density property from \Cref{def:PairwiseCoverDensity} (it states that $\rho_S(u) \ge \wtilde{\rho}(u)$ for all virtual nodes $u \in S$). We will explain soon why the violation might happen. %
But to fix this issue, we take 
\[
M_{S} = \{(u_{\ite},v_{\ite})\in M'_{S}\mid \text{the virtual node $u$ owning $u_{\ite}$ has at least $\rho_{\new}(u)$ source items in $M'_{S}$}\}
\]

We update $S$ to $S_{1}=S\cup\supp_{\virtual}(M_{S}^{\src})$\footnote{Recall that $M^{\src}_{S}$ denote the set of source items of $M^{S}$, and for each virtual node $u\in\supp_{\virtual}(M_{S}^{\src})$, $M^{\src}_{S}(u)$ denote the number of $u$'s items in $M^{\src}_{S}$.} and update $R$ to $R_{1} = R\cup M_{S}$ via \Cref{thm:Router}. For each new virtual node $u \in \supp_{\virtual}(M_{S}^{\src})$, set its $S$-density $\rho_{S}(u)=M^{\src}_{S}(u)$. The new router $R_{1}$ has embedding $\Pi_{R_{1}\to G} = \Pi_{R\to G}\cup \Pi_{M_{S}\to G}$, where $M_{S\to G}$ can be easily obtained from $\Pi_{M_{\CM}\to G-C}$.

With this update to $S$, we can verify that, for each virtual node $u \in S$, $\rho_S(u) \ge \wtilde{\rho}(u)$. If $u \in A$ is an old virtual node, then this condition clearly holds, i.e., $\rho_S(u) \ge \rho(u)$ even before update and $\rho_S(u)$ does not change. 
Otherwise, we have $u \in A_\new$. Then the condition $\rho_S(u) \ge \rho_\new(u) = \wtilde{\rho}_{u}$ holds by the construction of $M_S$. Note that if we instead added $M'_S$ to $R$, this condition may not hold because the number of source items owned by $u$ that is matched by $M'_S$ may be less than $\rho_{\new}(u)$.

\paragraph{Compute a Certified-ED of $A_{2}$ from Scratch.} Next, we initialize a $\rho_{2}$-dense certified-ED of $A_{2}$ on $(G-C-C_{\CM})[A_{2}]$ using \Cref{thm:InitDenseCertifiedED}. The output is $(C_{\ED}, L_{\ED}, {\cal N}_{2}, {\cal R}_{2}, \Pi_{{\cal R}_{2}\to (G-C-C_{\CM})[A_{2}]})$ satisfying that
\begin{itemize}
\item $|C_{\ED}|\leq O(\kappa_{\init,C}\cdot\phi_{\ED}\cdot h_{\ED}\cdot \rho_{2}(A_{2}))$.
\item $L_{\ED}$ is a landmark set of $C_{\ED}$ on $(G-C-C_{\CM})[A_{2}]$ with distortion $\sigma_{\ED} = h_{\ED}/\kappa_{\sigma}$ and size $|L_{\ED}| = O(\kappa_{\init,L}\cdot\phi_{\ED}\cdot \rho_{2}(A_{2}))$.
\item ${\cal N}_{2}$ is a $\rho_{2}$-dense $(h_{\cov,2},h_{\sep,2},\omega_{2})$-pairwise cover of $A_{2}$ on $G-C-C_{\CM}-C_{\ED}$ with $h_{\cov,2} = h_{\ED}, h_{\sep,2} = h_{\ED}$ and $\omega_{2} = \kappa_{\PC,\omega}$.
\item Routers ${\cal R}_{2}$ are initialized by \Cref{thm:Router}.
\item The embedding $\Pi_{{\cal R}_{2}\to (G-C-C_{\CM})[A_{2}]}$ has congestion $\gamma_{\emb,2} = O(\kappa_{\init,\gamma}/\phi_{\ED})$ and length $h_{\emb,2} = O(\lambda_{\init,h}\cdot h_{\ED})$.
\end{itemize}

\paragraph{The New Certified-ED and Its Quality.} The new certified-ED $(\wtilde{C},\wtilde{L},\wtilde{\cal N},\wtilde{\cal R},\Pi_{\wtilde{\cal R}\to G})$ is given by
\begin{gather*}
\wtilde{C} = C+C_{\new}\text{ 
where }C_{\new} = C_{\CM} + C_{\ED},\\ 
\wtilde{L} = L\cup L_{\new}\text{ 
where }L_{\new} = L_{\CM}\cup L_{\ED},\\
(\wtilde{\cal N}, \wtilde{\cal R}, \Pi_{\wtilde{\cal R}\to G}) = ({\cal N}_{1},{\cal R}_{1},\Pi_{{\cal R}_{1}\to G})\cup ({\cal N}_{2},{\cal R}_{2},\Pi_{{\cal R}_{2}\to G-C-C_{\CM}}).
\end{gather*}

Towards its quality, the size of the new cut $C_{\new}$ is given by
\begin{align*}
|C_{\new}|\leq O(\phi_{\CM}\cdot h_{\CM}\cdot |\rho^{\src}|) +O(\kappa_{\init,C}\cdot\phi_{\ED}\cdot h_{\ED}\cdot\rho_{2}(A_{2}))\leq O(\omega\cdot\kappa_{\init,C}\cdot\phi\cdot h\cdot \rho_{\new}(A_{\new})).
\end{align*}
because $\rho_{2}(A_{2})\leq O(\omega\cdot\rho_{\new}(A_{\new}))$.
The distortion of $\wtilde{L}$ is $\wtilde{\sigma} = \sigma + \sigma_{\CM} + \sigma_{\ED} = \sigma + O(h/\kappa_{\sigma})$ by \Cref{lemma:LandmarkUnion}, and the new landmark set $L_{\new}$ has size
\begin{align*}
|L_{\new}|\leq O(\kappa_{\CM,L}\cdot\phi_{\CM}\cdot|\rho^{\src}|) + O(\kappa_{\init,L}\cdot\phi_{\ED}\cdot\rho_{2}(A_{2}))\leq O((\kappa_{\CM,L} + \kappa_{\init,L})\cdot\omega\cdot\phi\cdot\rho_{\new}(A_{\new})).
\end{align*}
The quality of $\wtilde{\cal N}$ is given by \Cref{lemma:DenseTildeN}. Each router $R_{1}\in {\cal R}_{1}$ is updated from some $R\in{\cal R}$ via one matching insertion, so $\wtilde{f} = f+1$. For embedding $\Pi_{\wtilde{\cal R}\to G}$, we have
\begin{align*}
\wtilde{h}_{\emb} &= \max\{h_{\emb}, h_{\CM}, h_{\emb, 2}\} = \max\{h_{\emb}, O(\lambda_{\init,h}\cdot h)\},\\
\wtilde{\gamma}_{\emb} &= \gamma_{\emb} + \omega\cdot\gamma_{\CM} + \gamma_{\emb,2} = \gamma_{\emb} = O((\omega\cdot\kappa_{\CM,\gamma} +\kappa_{\init,\gamma})\cdot\phi),
\end{align*}
where the $\omega$ factor of $\gamma_{\CM}$ is because each matching edge will be added into routers at most $\omega$ times (at most one in each clustering of ${\cal N}$).

\begin{lemma}
$\wtilde{\cal N}$ is a $\wtilde{\rho}$-dense pairwise cover of $\wtilde{A}$ on $G-\wtilde{C}$ with $\wtilde{h}_{\cov} = \wtilde{h}_{\sep} = h/3$ and $\wtilde{\omega} = \omega + \kappa_{\PC,\omega}$.\label{lemma:DenseTildeN}
\end{lemma}
\begin{proof}

The proof of the separation and the width is exactly the same with that in the proof of \Cref{lemma:ProperiesOfTildeN}, so we omit it here. 

${\wtilde{\cal N}}$ is $\wtilde{\rho}$-dense by the following reasons. First, when computing ${\cal N}_{1}$, each time we add a new virtual node $v$ to some original cluster $S$, we will assign $\rho_{S}(v) \geq \rho_{\new}(v)$, so each cluster $S_{1}\in{\cal S}_{1}$ has $\rho_{S_{1}}\geq \wtilde{\rho}(S_{1})$. Also, clusters in ${\cal N}_{2}$ respect the density function $\wtilde{\rho}$ by construction, so we can conclude that $\wtilde{\cal N}$ is $\wtilde{\rho}$-dense.

In the remaining proof, we will show that $\wtilde{\cal N}$ has cover radius $\wtilde{h}_{\cov}$. It is equivalent to show for each pair of virtual nodes $u,v\in \wtilde{A}$ s.t. $\dist_{G-\wtilde{C}}(u,v)\leq \wtilde{h}_{\cov}$, there is a cluster $S\in\wtilde{\cal N}$ containing both $u$ and $v$. We will study by cases. One can check that the five cases below will cover all pairs of virtual nodes in $\wtilde{A}$. 

\underline{Case 1.} If $u,v\in A$, then because $\dist_{G-C}(u,v)\leq \dist_{G-\wtilde{C}}(u,v)\leq \wtilde{h}_{\cov}\leq h_{\cov}$, there is a cluster $S\in {\cal N}$ containing both $u,v$. By the construction of ${\cal N}_{1}$, the cluster $S_{1}$ originated from $S$ will have $u,v\in S_{1}$.

\underline{Case 2.} Suppose $u\in A_{\new}\cap A_{1}$ and $v\in A$. For each item $u_{\ite}\in \omega\cdot\rho_{\new}(u)$ of $u$, it is matched to an item of some virtual node $u'$ via $M_{\CM}$. Observe that
\begin{align*}
\dist_{G-C}(u',v)&\leq \dist_{G-C}(u',u) + \dist_{G-C}(u,v)\\
&\leq \dist_{G-C}(u',u) + \dist_{G-\wtilde{C}}(u,v)\\
&\leq h_{\CM}+\wtilde{h}_{\cov}\leq h_{\cov},
\end{align*}
which means each $u$'s item $u_{\ite}$ is matched to some cluster containing $v$. Note that there are at most $\omega$ clusters containing $v$ in ${\cal N}$, and there are totally $\rho^{\src}(u) = \omega\cdot\rho_{\new}(u)$ items of $u$ in the matching $M_{\CM}$ by the definition of $A_{1}$. By averaging, there must be a cluster $S\in{\cal N}$ with $v\in S$ that receives at least $\rho_{\new}(u)$ items of $u$. Therefore, by the construction algorithm, $u$ will be added to the cluster $S_{1}\in{\cal N}_{1}$ originated from $S$.

\underline{Case 3.} Suppose both virtual nodes $u,v\in A_{\new}\cap A_{1}$. For each pair of units $u_{\ite}\in \omega\cdot\rho_{\new}(u)$ of $u$ and $v_{\ite}\in\omega\cdot\rho_{\new}(v)$ of $v$, $M_{\CM}$ will match $u_{\ite}$ and $v_{\ite}$ to items of some virtual nodes $u'$ and $v'$ respectively. Similarly,
\begin{align*}
\dist_{G-C}(u',v')&\leq \dist_{G-C}(u',u) + \dist_{G-C}(u,v) + \dist_{G-C}(v',v)\\
&\leq \dist_{G-C}(u',u) + \dist_{G-\wtilde{C}}(u,v) + \dist_{G-C}(v',v)\\
&\leq h_{\CM}+\wtilde{h}_{\cov}+h_{\CM}\\
&\leq h_{\cov},
\end{align*}
which means for each item pair $u_{\ite}$ and $v_{\ite}$ of $u$ and $v$, there is at least a cluster that both $u_{\ite}$ and $v_{\ite}$ are matched to. Note that in each clustering of ${\cal N}$, there is at most one cluster that can receive items of $u$ via $M_{\CM}$, because ${\cal N}$ has separation $h_{\sep}>2 h_{\CM}$. Hence the number of clusters that can receive both items of $u$ and $v$ via $M_{\CM}$ is at most $\omega$ (at most one from each clustering). The total number of such pairs of $x$ and $y$ is $\omega^{2}\rho_{\new}(u)\rho_{\new}(v)$. By averaging, there exists a cluster that receives at least $\omega\rho_{\new}(u)\rho_{\new}(v)$ pairs, which means it receives at least $\rho_{\new}(u)$ items of $u$ and at least $\rho_{\new}(v)$ items of $v$, because a cluster will receive at most $A^{\src}(u) = \omega\cdot\rho_{\new}(u)$ and $A^{\src}(v) = \omega\cdot\rho_{\new}(v)$ units of $u$ and $v$ respectively. Therefore, by the construction, both $u$ and $v$ will be added to this cluster. 

\underline{Case 4.} Suppose $u\in A_{\new}\setminus A_{1}$ and $v\in A_{2}$. By definition, the virtual node $u$ has at least one unit in $A^{\src}_{U}$, so the vertex $u$ is $h_{\CM}$-separated from vertices $\supp(A^{\sink}_{U})$ on $G-\wtilde{C}$. This means $\dist_{(G-\wtilde{C})[A_{2}]}(u,v) = \dist_{G-\wtilde{C}}(u,v)$ because the shortest $u$-$v$ path on $G-\wtilde{C}$ cannot go through any vertex in $\supp(A^{\sink}_{U}) = \supp(\wtilde{A}\setminus A_{2})$. Therefore, a cluster from ${\cal N}_{2}$ will contain both $u$ and $v$.

\underline{Case 5.} Suppose $u\in A_{\new}\setminus A_{1}$ and $v\in A\setminus A_{2}$. This case is impossible because we have argued that $u$ is $h_{\CM}$-separated from the vertex $v\in \supp(A\setminus A_{2}) = \supp(A^{\src}_{U})$ on $G-\wtilde{C}$, contradicting that $\dist_{G-\wtilde{C}}(u,v)\leq h/3 = h_{\CM}$.

\end{proof}

\paragraph{The Recourse from ${\cal N}$ to $\wtilde{\cal N}$ Restricted on $\wtilde{A}^{\star}$.} From ${\cal N}$ to ${\cal N}_{1}$, each virtual node in $A_{\new}$ will be added into ${\cal N}$ at most $\omega$ times (at most one for each clustering), so 
\[
\recourse_{\mid \wtilde{A}^{\star}}({\cal N}\to {\cal N}_{1})\leq \omega\cdot|A_{\new}\cap \wtilde{A}^{\star}|\leq \omega \cdot \rho_{\new}(A_{\new})/\min_{v\in\wtilde{A}^{\star}}\wtilde{\rho}(v).
\]
Because $\wtilde{\cal N}$ is the union of ${\cal N}_{1}$ and ${\cal N}_{2}$, trivially $\recourse_{\mid\wtilde{A}^{\star}}({\cal N}_{1}\to\wtilde{\cal N})$ is at most the size of ${\cal N}_{2}$ restricted on $\wtilde{A}^{\star}$, i.e. $\size({\cal N}_{2,\mid \wtilde{A}^{\star}})$. Recall that ${\cal N}_{2}$ is a pairwise cover of $A_{2}$ with width $\kappa_{\PC,\omega}$, so $\size({\cal N}_{2,\mid \wtilde{A}^{\star}})\leq \kappa_{\PC,\omega}\cdot|A_{2}\cap \wtilde{A}^{\star}|$, so.
\begin{align*}
\recourse_{\mid\wtilde{A}^{\star}}({\cal N}_{1}\to\wtilde{\cal N}) 
&\leq \kappa_{\PC,\omega}\cdot|A_{2}\cap \wtilde{A}^{\star}|
\leq \kappa_{\PC,\omega}\cdot \wtilde{\rho}(A_{2})/\min_{v\in \wtilde{A}^{\star}}\wtilde{\rho}(v).
\end{align*}
Recall that we have shown $\wtilde{\rho}(A_{2}) = \rho_{2}(A_{2})\leq O(\omega\cdot \rho_{\new}(A_{\new}))$, so
\begin{align*}
\recourse_{\mid\wtilde{A}^{\star}}({\cal N}\to\wtilde{\cal N})
&\leq \recourse_{\mid \wtilde{A}^{\star}}({\cal N}\to {\cal N}_{1}) + \recourse_{\mid\wtilde{A}^{\star}}({\cal N}_{1}\to\wtilde{\cal N}) \leq O(\kappa_{\PC,\omega}\cdot\omega\cdot\rho_{\new}(A_{\new})/\min_{v\in\wtilde{A}^{\star}}\wtilde{\rho}(v)).
\end{align*}

\paragraph{The Running Time.} The running time of the cutmatch subroutines is $|\rho^{\src}|\cdot\poly(h_{\CM},\log n)/\phi_{\CM}$.
When we update $({\cal N}, {\cal R}, \Pi_{{\cal R}\to G})$ to $({\cal N}_{1},{\cal R}_{1}, \Pi_{{\cal R}_{1}\to G})$, the running time is dominated by the update to routers. Because $\sum_{S\in{\cal S}}|M_{S}|\leq \omega \cdot|M_{\CM}|\leq O(\omega^{2}\cdot \rho_{\new}(A_{\new}))$, the total running time for this step is $2^{O(f)}\cdot\omega^{2}\cdot \rho_{\new}(A_{\new})\cdot 2^{\poly(1/\epsilon)}$
by \Cref{thm:Router}. The construction time of $(C_{\ED}, L_{\ED}, {\cal N}_{2}, {\cal R}_{2}, \Pi_{{\cal R}_{2}\to (G-C-C_{\CM})[A_{2}]})$ is
$\omega\cdot\rho_{\new}(A_{\new})\cdot\poly(h_{\ED})\cdot n^{O(\epsilon)}$,
because the size of subgraph $(G-C-C_{\CM})[A_{2}]$ is bounded by $\deg_{G}(\supp(A_{2}))\leq \rho_{2}(A_{2})\leq O(\omega\cdot\rho_{\new}(A_{\new}))$, where the first inequality is because, for each vertex $v_{\vertex}\in\supp(A_{2})$, we have $\wtilde{A}(v_{\vertex})\subseteq A_{2}$ and $\deg_{G}(v_{\vertex})\leq \wtilde{A}(v_{\vertex})\leq \sum_{v\in \wtilde{A}(v_{\vertex})}1\leq \sum_{v\in \wtilde{A}(v_{\vertex})}\rho_{2}(v)$.
In conclusion, the total running time is
\[
\omega\cdot\rho_{\new}(A_{\new})\cdot \poly(h)\cdot n^{O(\epsilon)}\cdot (1/\phi + 2^{O(f)}\cdot\omega).
\]

\end{proof}

\subsection{Batched Edge Deletions}
\label{sect:DynDenseEDEdgeDeletion}

When we apply \Cref{thm:EdgeDelWithDensity} in the proof of 
\Cref{thm:NonHopReducingEmulator} (i.e. \Cref{sect:ProofOfDynSparsifier}), we will set the parameter $\mu = 2t$, where $t = \Theta(1/\epsilon)$ is the global time parameter.

\begin{theorem}
Let $G$ be a graph with a $\rho$-dense node-weighting $A\geq \deg_{G} + \mathds{1}(V(G))$ and a $\rho$-dense certified-ED $(C,{\cal N},{\cal R},\Pi_{{\cal R}\to G})$ of $A$ on $G$ in \Cref{def:DenseCertifiedED}. Given a set of edge deletions $F\subseteq E(G)$ with parameters $\phi$ and $\mu$, there is an algorithm that computes a $\tilde{\rho}$-dense certified ED $(C, L, {\cal N}, {\cal R}, \Pi_{{\cal R}\to G})$ of $A$ on $\wtilde{G} = G\setminus F$ with $\wtilde{\rho}=\lceil(1-1/\mu)\rho\rceil$ s.t.
\begin{itemize}
\item $\wtilde{C}=C_{\mid E(\wtilde{G})}+C_{\new}$ with 
$|C_{\new}|\leq O(\kappa_{\init,C}\cdot\omega\cdot\phi\cdot \mu \cdot \lambda_{\rt,\prune}(f)\cdot\gamma_{\emb}\cdot|F|\cdot h)$.

\item $\wtilde{L} = L\cup L_{\new}$ with distortion distortion $\wtilde{\sigma} = \sigma + O(h/\kappa_{\sigma})$ and
\[
|L_{\new}| \leq O((\kappa_{\CM,L} + \kappa_{\init,L})\cdot\omega\cdot\phi\cdot\mu\cdot \lambda_{\rt,\prune}(f)\cdot\gamma_{\emb}\cdot|F|)
\]

\item ${\cal N}$ is a $\wtilde{\rho}$-dense $(\wtilde{h}_{\cov},\wtilde{h}_{\sep},\wtilde{\omega})$-pairwise cover with
\[
\wtilde{h}_{\cov} = \wtilde{h}_{\sep} =  h/3,\text{ and }\wtilde{\omega} = \omega + \kappa_{\PC,\omega}.
\]
\item Each router $\wtilde{R}\in\wtilde{\cal R}$ is maintained by \Cref{thm:Router} under $\wtilde{f} = f+O(1)$ updates.
\item $\Pi_{\wtilde{\cal R}\to G}$ is a $(\wtilde{h}_{\emb},\wtilde{\gamma}_{\emb})$-embedding with
\begin{align*}
\wtilde{h}_{\emb} &= \max\{h_{\emb},\lambda_{\init,h}\cdot h\}\\
\wtilde{\gamma}_{\emb} &= \gamma_{\emb} + O((\omega\cdot\kappa_{\CM,\gamma} + \kappa_{\init,\gamma})/\phi)
\end{align*}
\end{itemize}
Given a node-weighting $\wtilde{A}^{\star}\subseteq \wtilde{A}$, the recourse from ${\cal N}$ to $\wtilde{\cal N}$ restricted on $\wtilde{A}^{\star}$ is
\[
\recourse_{\mid \wtilde{A}^{\star}}({\cal N}\to\wtilde{\cal N}) =  O(\omega\cdot\kappa_{\PC,\omega}\cdot\mu\cdot\lambda_{\rt,\prune}(f)\cdot\gamma_{\emb}\cdot |F|/\min_{v\in \wtilde{A}^{\star}}\rho(v)).
\]
The running time is $\omega\cdot\mu\cdot \lambda_{\rt,\prune}(f)\cdot\gamma_{\emb}\cdot |F|\cdot\poly(h)\cdot n^{O(\epsilon)}\cdot(1/\phi + 2^{O(f)}\cdot \omega)$.
\label{thm:EdgeDelWithDensity}
\end{theorem}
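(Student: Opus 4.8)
\textbf{Proof plan for \Cref{thm:EdgeDelWithDensity}.}
The plan is to mirror the structure of the proof of \Cref{thm:EdgeDeletion}, with the density bookkeeping of \Cref{thm:NWInsertWithDensity} woven in. First I would reduce a batched edge deletion to a batched density-lowering operation followed by a batched node-weighting insertion. Concretely, let ${\cal R}_{\edge,F} = \Pi^{-1}_{{\cal R}\to G}(F)$ collect the router edges whose embedding paths use an edge of $F$; since $\Pi_{{\cal R}\to G}$ has congestion $\gamma_{\emb}$, we have $|{\cal R}_{\edge,F}|\le\gamma_{\emb}\cdot|F|$. For each cluster $S\in{\cal N}$ with router $R^{S}$, feed the edge-deletion batch $R^{S}_{\edge,F}={\cal R}_{\edge,F}\cap E(R^{S})$ to the dynamic router algorithm \Cref{thm:Router}, obtaining a pruned vertex set $R^{S}_{\prune}$ with $|R^{S}_{\prune}|\le\lambda_{\rt,\prune}(f)\cdot|R^{S}_{\edge,F}|$ and a new router $R^{S}[V(R^{S})\setminus R^{S}_{\prune}]$ that avoids $R^{S}_{\edge,F}$. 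Unlike the unit-dense case, pruned router vertices here correspond to \emph{items}, not virtual nodes, so summing over all clusters the total number of pruned items is at most $\lambda_{\rt,\prune}(f)\cdot\gamma_{\emb}\cdot|F|\cdot\omega$ (the extra $\omega$ since each router edge of $F$ may appear across $\omega$ clusterings in the worst case, cf.\ the accounting in \Cref{thm:NWInsertWithDensity}).

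The second step is to account for how many \emph{virtual nodes} actually leave the pairwise cover. This is where the parameter $\mu$ enters: I would only remove a virtual node $v$ from a cluster $S$ once it has lost at least a $1/\mu$ fraction of its $S$-items, i.e.\ once $\lfloor\rho_{S}(v)/\mu\rfloor$ of its items in $R^{S}$ have been pruned; this is exactly why the output density becomes $\wtilde\rho=\lceil(1-1/\mu)\rho\rceil$. Under this rule, each item pruning causes at most $1/\lfloor\rho(v)/\mu\rfloor \le O(\mu/\rho(v))$ virtual-node removals amortized, so restricted to any sub-node-weighting $\wtilde A^{\star}$ the number of virtual-node deletions from ${\cal N}$ is $O(\omega\cdot\lambda_{\rt,\prune}(f)\cdot\gamma_{\emb}\cdot\mu\cdot|F|/\min_{v\in\wtilde A^{\star}}\rho(v))$. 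Let $A_{\prune}$ denote the set of virtual nodes thereby removed; the intermediate object $(C_{\mid E(\wtilde G)},L\cup V(F),{\cal N}',{\cal R}',\Pi_{{\cal R}'\to\wtilde G})$ on $A'=A\setminus A_{\prune}$ is a valid $\wtilde\rho$-dense certified-ED on $\wtilde G$ (landmark validity by \Cref{lemma:LandmarkEdgeDel}; pairwise-cover validity by \Cref{ob:CoversOnSubgraph}; embedding validity because every router edge routed through $F$ was deleted and its embedding path removed; density property restored since we only dropped $v$ after enough of its items vanished).

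The third step reinstates the dropped mass. Apply \Cref{thm:NWInsertWithDensity} on $\wtilde G$ to the certified-ED of $A'$ with the $\rho$-dense node-weighting increment $A_{\prune}$ (with its original densities). Since $\rho(A_{\prune})\le\lambda_{\rt,\prune}(f)\cdot\gamma_{\emb}\cdot|F|\cdot\omega$, the cut, landmark, recourse, and running-time bounds from \Cref{thm:NWInsertWithDensity} immediately give the claimed estimates: $|C_{\new}|=O(\kappa_{\init,C}\cdot\omega\cdot\phi\cdot h\cdot\rho(A_{\prune}))=O(\kappa_{\init,C}\cdot\omega^{2}\cdot\phi\cdot\mu\cdot\lambda_{\rt,\prune}(f)\cdot\gamma_{\emb}\cdot|F|\cdot h)$ and similarly for $|L_{\new}|$, the width grows by $\kappa_{\PC,\omega}$, the batch count grows by $O(1)$, and the embedding length/congestion bounds compose as stated. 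Finally, one handles the degree-bookkeeping: deleting $F$ decreases $\deg_G$, so a few extra virtual nodes of endpoints in $V(F)$ are removed (at cost $O((\omega+\kappa_{\PC,\omega})|F|)$ recourse, absorbed into the bound) and $V(F)$ is added to the landmark set. The running time is dominated by the router edge-deletion updates ($2^{O(f)}\cdot\gamma_{\emb}\cdot|F|\cdot\omega\cdot 2^{\poly(1/\epsilon)}$ via \Cref{thm:Router}) and the invocation of \Cref{thm:NWInsertWithDensity} on $A_{\prune}$, summing to $\omega\cdot\mu\cdot\lambda_{\rt,\prune}(f)\cdot\gamma_{\emb}\cdot|F|\cdot\poly(h)\cdot n^{O(\epsilon)}\cdot(1/\phi+2^{O(f)}\cdot\omega)$.

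\textbf{Main obstacle.} The delicate point is the $1/\mu$-threshold rule for evicting virtual nodes and proving it keeps \Cref{def:DenseCertifiedED}'s density property intact while the reduced node-weighting $\wtilde\rho=\lceil(1-1/\mu)\rho\rceil$ stays a valid (positive, integral) density — in particular verifying that a virtual node is never forced out before reaching the threshold, and that the amortized $O(\mu/\rho(v))$ virtual-node-removal rate per pruned item is tight enough to kill the $1/\phi$ dependence in the final recourse. The rest is careful but routine parameter composition, inheriting slack from \Cref{thm:Router}, \Cref{thm:cutmatch}, and \Cref{thm:NWInsertWithDensity}.
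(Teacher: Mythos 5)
Your overall plan matches the paper's proof exactly: compute ${\cal R}_{\edge,F}=\Pi^{-1}_{{\cal R}\to G}(F)$, feed the induced edge-deletion batches to \Cref{thm:Router}, evict a virtual node $v$ from a cluster $S$ only once at least $\lceil\rho(v)/\mu\rceil$ of its items in $R^{S}$ have been pruned (the paper's set $S_{\prune}=\{v\in S\mid R_{\prune}(v)\geq\lceil\rho(v)/\mu\rceil\}$), carry the survivors forward with degraded density $\rho'=\lceil(1-1/\mu)\rho_{\mid A'}\rceil$, and then reinsert $A_{\prune}$ with its original densities via \Cref{thm:NWInsertWithDensity}. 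You have correctly identified the $1/\mu$-threshold rule as the crux, and the landmark, width, batch-count and embedding bookkeeping are all composed in the right way.

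However, there is a concrete quantitative error in your accounting of the pruned mass. You claim that ``summing over all clusters the total number of pruned items is at most $\lambda_{\rt,\prune}(f)\cdot\gamma_{\emb}\cdot|F|\cdot\omega$ (the extra $\omega$ since each router edge of $F$ may appear across $\omega$ clusterings),'' and this extra $\omega$ then propagates to give you $|C_{\new}|=O(\kappa_{\init,C}\cdot\omega^{2}\cdot\phi\cdot\mu\cdot\lambda_{\rt,\prune}(f)\cdot\gamma_{\emb}\cdot|F|\cdot h)$, one $\omega$ worse than the stated theorem. The $\omega$ factor does not belong here: the routers $\{R^{S}\}$ are pairwise disjoint multigraphs, so a router edge lives in exactly one $R^{S}$ and $\sum_{S}|R^{S}_{\edge,F}|=|{\cal R}_{\edge,F}|\leq\gamma_{\emb}\cdot|F|$ with no width loss. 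The fact that a single graph edge $e\in F$ may support embedding paths of router edges drawn from up to $\omega$ different clusterings is already fully captured by the congestion bound $\gamma_{\emb}$; counting it again is double counting. The correct bound, matching the paper's Claim ``$\rho(A_{\prune})\leq\mu\cdot\lambda_{\rt,\prune}(f)\cdot\gamma_{\emb}\cdot|F|$,'' follows from $\rho(v)\leq\mu\cdot R_{\prune}(v)$ for each evicted $v$, so
\[
\rho(A_{\prune})\leq\sum_{S}\sum_{v\in S_{\prune}}\rho(v)\leq\mu\sum_{S}|R_{\prune}|\leq\mu\cdot\lambda_{\rt,\prune}(f)\cdot|{\cal R}_{\edge,F}|\leq\mu\cdot\lambda_{\rt,\prune}(f)\cdot\gamma_{\emb}\cdot|F|.
\]
Plugging this into \Cref{thm:NWInsertWithDensity} yields the single $\omega$ factor of the theorem. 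As written, your argument proves a strictly weaker statement.

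Two smaller points. First, the expression $1/\lfloor\rho(v)/\mu\rfloor$ is undefined when $\rho(v)<\mu$; the threshold must use $\lceil\rho(v)/\mu\rceil$, which is always at least $1$, and the amortized rate then follows from $\rho(v)\leq\mu\cdot R_{\prune}(v)$ rather than from an informal ``per-item'' charge. Second, the concluding step about removing $\deg_{F}$ virtual nodes to compensate for the drop in $\deg_{G}$ is imported from \Cref{thm:EdgeDeletion} but does not belong in this proof: \Cref{thm:EdgeDelWithDensity} outputs a certified-ED of the same node-weighting $A$, not of $A-\deg_{F}$, and the requirement $A\geq\deg_{\wtilde G}+\mathds{1}(V(\wtilde G))$ is preserved automatically since deleting edges only decreases degrees. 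That step is harmless but unnecessary and slightly obscures the argument.
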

\begin{proof}
We first compute the set ${\cal R}_{\edge,F} = \Pi^{-1}_{{\cal R}\to G} (F)$ which collects all edges in routers with underlying embedding paths going through some deleted edges in $F$. For each router $R\in{\cal R}$ corresponding to a cluster $S\in{\cal N}$, let $R_{\edge,F}={\cal R}_{\edge,F}\cap E(R)$. We apply an edge-deletions update to $R$ with $R_{\edge,F}$ as inputs by \Cref{thm:Router}, which returns a set $R_{\prune}\subseteq V(R)$ of pruned vertices. Recall that each router vertex in $R_{\prune}$ is either a redundant vertex or corresponding to an item in $\rho_{S}(S)$. For each virtual node $v\in S$, let $R_{\prune}(v)$ denote the number of $v$'s item in $R_{\prune}$. We define the pruned virtual nodes of $S$ to be 
\[
S_{\prune}=\{v\in S\mid R_{\prune}(v)\geq \lceil \rho(v)/\mu\rceil\}.
\]
Finally, let $A_{\prune}=\bigcup_{S\in{\cal N}}S_{\prune}$ collect all pruned virtual nodes. 

Now we partition $A$ into $A_{\prune}$ and $A' = A\setminus A_{\prune}$, and let the density of $A_{\prune}$ be $\rho_{\prune} = \rho_{\mid A_{\prune}}$ and the density of $A'$ be $\rho' = \lceil(1-1/\mu)\rho_{\mid A'}\rceil$. Note that we can obtain a $\rho'$-dense certified-ED $(C',L',{\cal N}',{\cal R}',\Pi_{{\cal R}'\to \wtilde{G}})$ of $A'$ on $\wtilde{G}$ by simply removing $A_{\prune}$ from $(C,L,{\cal N},{\cal R},\Pi_{{\cal R}\to G})$. Concretely, for each cluster $S\in{\cal N}$ and its router $R$, we do the following steps. 
\begin{enumerate}
\item Remove virtual nodes $A_{\prune}\cap S$ from $S$. Note that the new $R$ after the pruning from \Cref{thm:Router} may still have vertices corresponding to items of some virtual node in $A_{\prune}\cap S$ and we will mark all such router vertices redundant. 
\item For each remaining virtual node $v$ in $S$, subtract its $S$-density $\rho_{S}(v)$ by $R_{\prune}(v)$, because these many items of $v$ were pruned from $R$ via \Cref{thm:Router} before.
\item For each edge removed from $R$ due to the previous update via \Cref{thm:Router}, remove its embedding path from $\Pi_{{\cal R}\to G}$. 
\end{enumerate}
Lastly, we take $C' = C_{\mid E(\wtilde{G})}$ by removing cut values on deleted edges. Also, let $L' = L\cup V(F)$.%

We observe that the $\rho'$-dense certified-ED $(C',L',{\cal N}',{\cal R}',\Pi_{{\cal R}'\to \wtilde{G}})$ of $A'$ on $\wtilde{G}$ has the same quality parameters as those of the original $\rho$-dense certified-ED $(C,L,{\cal N},{\cal R},\Pi_{{\cal R}\to G})$ of $A$ on $G$, except that each router in ${\cal R'}$ suffers one more batched update. In particular, $\Pi_{{\cal R}'\to\wtilde{G}}$ is a valid embedding on $\wtilde{G}$ because we remove all router edges with embedding paths going through deleted edges by \Cref{thm:Router}, and these embedding paths are removed in step (3). Also, the new cluster ${\cal N}'$ is $\rho'$-dense because for each cluster $S'\in{\cal N'}$ (updated from $S\in{\cal N}$) and virtual node $v\in S'$, we have 
\[\rho_{S'}(v) = \rho_{S}(v)-R_{\prune}(v)\geq \rho(v) - R_{\prune}(v)\geq \lceil (1-1/\mu)\cdot\rho(v)\rceil=\rho'(v)
\]
because $R_{\prune}(v)\leq\lfloor\rho(v)/\mu\rfloor$ (otherwise $v$ will be selected into $S_{\prune}\subseteq A_{\prune}$ and it will not belong to $S'$). 

We can obtain a $\wtilde{\rho}$-dense certified-ED $(\wtilde{C},\wtilde{L},\wtilde{\cal N},\wtilde{\cal R},\Pi_{\wtilde{\cal R}\to G})$ of $A$ on $\wtilde{G}$ by applying \Cref{thm:NWInsertWithDensity} on $(C',L',{\cal N}',{\cal R}',\Pi_{{\cal R}'\to \wtilde{G}})$ with the new $\rho_{\prune}$-dense node weighting $A_{\prune}$. The output has the desired quality by plugging the bound of $\rho_{\prune}(A_{\prune})$ (see \Cref{claim:BoundOfAprune}) into the quality bounds of \Cref{thm:NWInsertWithDensity}.

\begin{claim}
$\rho_{\prune}(A_{\prune})=\rho(A_{\prune})\leq \mu \cdot\lambda_{\rt,\prune}(f)\cdot\gamma_{\emb}\cdot|F|$
\label{claim:BoundOfAprune}
\end{claim}
\begin{proof}
First, the number of router edges with embedding paths going through deleted edges, i.e. $|{\cal R}_{\edge,F}|$, is at most $|F|\cdot\gamma_{\emb}$. By definition, $\sum_{R\in{\cal R}}|R_{\edge,F}| = |{\cal R}_{\edge,F}|$. By \Cref{thm:Router}, each router $R\in{\cal R}$ has $|R_{\prune}|\leq \lambda_{\rt,\prune}(f)\cdot |R_{\edge,F}|$. Therefore, we have
\begin{align*}
\rho_{\prune}(A_{\prune}) &= \rho(A_{\prune}) = \sum_{v\in A_{\prune}}\rho(v) 
\leq \sum_{S\in{\cal N}}\sum_{v\in S_{\prune}}\rho(v) 
\leq \sum_{S\in{\cal N}}\sum_{v\in S_{\prune}}\mu\cdot R_{\prune}(v)\\
&\leq \mu\cdot\sum_{S\in{\cal N}} |R_{\prune}|\leq \mu\cdot\lambda_{\rt,\prune}(f)\cdot\sum_{S\in{\cal N}} |R_{\edge,F}|= \mu\cdot\lambda_{\rt,\prune}(f)\cdot|{\cal R}_{\edge,F}|\\
&\leq \mu\cdot\lambda_{\rt,\prune}(f)\cdot\gamma_{\emb}\cdot |F|.
\end{align*}
\end{proof}
\paragraph{The Recourse from ${\cal N}$ to $\wtilde{\cal N}$ Restricted on $\wtilde{A}^{\star}$.} Because ${\cal N}'$ is obtained from ${\cal N}$ by removing virtual nodes in $A_{\prune}$ and the width is $\omega$, we have
\[
\recourse_{\mid \wtilde{A}^{\star}}({\cal N}\to{\cal N}')\leq \omega\cdot|A_{\prune}\cap \wtilde{A}^{\star}|\leq \rho(A_{\prune})/\min_{v\in \wtilde{A}^{\star}}\rho(v). 
\]
From ${\cal N'}$ to $\wtilde{\cal N}$, by \Cref{thm:NWInsertWithDensity}, we have
\[
\recourse_{\mid \wtilde{A}^{\star}}({\cal N}'\to\wtilde{\cal N}) \leq O(\rho_{\prune}(A_{\prune})\cdot\omega\cdot\kappa_{\PC,\omega}/\min_{v\in \wtilde{A}^{\star}}\rho(v)).
\]
Therefore, by \Cref{claim:BoundOfAprune}, the total recourse is 
\begin{align*}
\recourse_{\mid \wtilde{A}^{\star}}({\cal N}\to\wtilde{\cal N})
&\leq O(\rho(A_{\prune})\cdot\omega\cdot\kappa_{\PC,\omega}/\min_{v\in \wtilde{A}^{\star}}\rho(v))\\
&\leq O(\omega\cdot\kappa_{\PC,\omega}\cdot\mu\cdot\lambda_{\rt,\prune}(f)\cdot\gamma_{\emb}\cdot |F|/\min_{v\in \wtilde{A}^{\star}}\rho(v)).
\end{align*}

\paragraph{The Running Time.} The running time to update $(C,L,{\cal N},{\cal R},\Pi_{{\cal R}\to G})$ to $(C',L',{\cal N}',{\cal R}',\Pi_{{\cal R}'\to\wtilde{G}})$ is dominated the edge deletion updates to the routers, which take totally 
$2^{O(f)}\cdot |{\cal R}_{\edge,F}|\cdot 2^{\poly(1/\epsilon)}\leq 2^{O(f)}\cdot |F|\cdot\gamma_{\emb}\cdot 2^{\poly(1/\epsilon)}$
time. From $(C',L',{\cal N}',{\cal R}',\Pi_{{\cal R}'\to\wtilde{G}})$ to $(\wtilde{C},\wtilde{L},\wtilde{\cal N},\wtilde{\cal R},\Pi_{\wtilde{\cal R}\to\wtilde{G}})$, it takes 
$\omega\cdot\rho_{\prune}(A_{\prune})\cdot\poly(h)\cdot n^{O(\epsilon)}\cdot(1/\phi + 2^{O(f+1)}\cdot\omega)$
time. Therefore, the total running time is
\[
\omega\cdot\mu\cdot \lambda_{\rt,\prune}(f)\cdot\gamma_{\emb}\cdot |F|\cdot\poly(h)\cdot n^{O(\epsilon)}\cdot(1/\phi + 2^{O(f)}\cdot \omega).
\]

\end{proof}

\subsection{Batched Edge Insertions}
\label{sect:DynDenseEDEdgeInsertion}

\begin{theorem}
Let $G$ be a graph with a $\rho$-dense node-weighting $A\geq \deg_{G} + \mathds{1}(V(G))$ and a $\rho$-dense certified-ED $(C,L,{\cal N}, {\cal R}, \Pi_{{\cal R}\to G})$ of $A$ on $G$ in \Cref{def:DenseCertifiedED}. Given a batch of edge insertions $E_{\new}$ with parameter $\phi$, there is an algorithm that computes a certified-ED $(\wtilde{C}, \wtilde{L}, \wtilde{{\cal N}}, \wtilde{{\cal R}}, \Pi_{\wtilde{\cal R}\to \wtilde{G}})$ of a $\wtilde{\rho}$-dense $\wtilde{A}$ on $\wtilde{G} = G\cup E_{\new}$ satisfying that
\begin{itemize}
\item $\wtilde{A} = A + A_{\new}$ and $\wtilde{\rho} = \rho\uplus \rho_{\new}$, where $A_{\new} = \deg_{E_{\new}}$ with unit density $\rho_{\new}$.

\item $\wtilde{C}=C+C_{\new}$ with $|C_{\new}|\leq O((\omega \cdot \kappa_{\init,C}\cdot\phi + 1)\cdot h\cdot |E_{\new}|)$.

\item $\wtilde{L} = L \cup L_{\new}$ with distortion $\wtilde{\sigma} = \sigma + O(h/\kappa_{\sigma})$ and
\[
|L_{\new}|\leq O((\omega\cdot(\kappa_{\CM,L} + \kappa_{\init,L})\cdot\phi + 1)\cdot|E_{\new}|);
\]
\item $\wtilde{N}$ is a $\wtilde{\rho}$-dense $(\wtilde{h}_{\cov},\wtilde{h}_{\sep},\wtilde{\omega})$-pairwise cover with
\[
\wtilde{h}_{\cov}=h/3,\ \wtilde{h}_{\sep} = h/3\text{ and }\wtilde{\omega} = \omega + \kappa_{\PC,\omega}
\]
\item Routers in $\wtilde{\cal R}$ are maintained by \Cref{thm:Router} under $f+1$ updates.

\item $\Pi_{\wtilde{\cal R}\to G}$ is a $(\wtilde{h}_{\emb},\wtilde{\gamma}_{\emb})$-embedding with 
\begin{align*}
\wtilde{h}_{\emb} &= \max\{h_{\emb}, O(\lambda_{\init,h}\cdot h)\}\\
\wtilde{\gamma}_{\emb} &= \gamma_{\emb} + O((\omega\cdot\kappa_{\CM,\gamma} + \kappa_{\init,\gamma})/\phi)
\end{align*}

\end{itemize}
The recourse from ${\cal N}$ to $\wtilde{\cal N}$ is $\recourse({\cal N}\to\wtilde{\cal N}) = O(|E_{\new}|\cdot\omega\cdot\kappa_{\PC,\omega})$.
The running time is $|E_{\new}|\cdot\poly(h)\cdot n^{O(\epsilon)}\cdot (1/\phi + 2^{O(f)}\cdot\omega)$.

\label{thm:EdgeInsertWithDensity}
\end{theorem}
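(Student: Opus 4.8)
The plan is to reduce \Cref{thm:EdgeInsertWithDensity} to the batched node-weighting insertion subroutine \Cref{thm:NWInsertWithDensity}, exactly mirroring the reduction used in \Cref{thm:EdgeInsertion} for non-dense certified-EDs. The key observation is that adding edges $E_{\new}$ to $G$ does not immediately threaten the expander decomposition, because we can preemptively ``block'' every new edge with a large moving cut value; after that, we only need to accommodate the extra degree that the new edges contribute, which is a node-weighting insertion of $A_{\new} = \deg_{E_{\new}}$.

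First I would construct an intermediate $\rho$-dense certified-ED $(C',L',{\cal N}',{\cal R}',\Pi_{{\cal R}'\to\wtilde{G}})$ of the \emph{old} node-weighting $A$ on the \emph{new} graph $\wtilde{G} = G\cup E_{\new}$ by setting $C' = C + C_{E_{\new}}$ where $C_{E_{\new}}(e) = h$ for each $e\in E_{\new}$, setting $L' = L\cup V(E_{\new})$ (so every new $C_{E_{\new}}$-vertex is trivially its own landmark), and leaving $({\cal N}',{\cal R}',\Pi_{{\cal R}'\to\wtilde{G}}) = ({\cal N},{\cal R},\Pi_{{\cal R}\to G})$ untouched. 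I would then verify that this intermediate object is genuinely a valid $\rho$-dense certified-ED with unchanged quality parameters: since each $e\in E_{\new}$ has $\ell_{\wtilde{G}-C'}(e)\geq h+1 > h$, these edges cannot lie on any $h$-length path in $\wtilde{G}-C'$, so ${\cal N}'$ remains a $\rho$-dense $(h_{\cov},h_{\sep},\omega)$-pairwise cover of $A$ on $\wtilde{G}-C'$ (covering radius, separation, and density all carry over because distances in $\wtilde{G}-C'$ restricted to the old vertex set agree with distances in $G-C$); the embedding $\Pi_{{\cal R}'\to\wtilde{G}}$ is still valid because its embedding paths never use the new edges; and $L'$ is a landmark set of $C'$ on $\wtilde{G}$ by \Cref{lemma:LandmarkUnion} applied to $C$ and $C_{E_{\new}}$ (the distortion of the $C_{E_{\new}}$ part is $0$ since endpoints are landmarks of themselves).

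Next I would apply \Cref{thm:NWInsertWithDensity} to $(\wtilde{G}, A)$ and the intermediate certified-ED $(C',L',{\cal N}',{\cal R}',\Pi_{{\cal R}'\to\wtilde{G}})$ with the new unit-dense node-weighting $A_{\new} = \deg_{E_{\new}}$ and the same congestion parameter $\phi$. The hypothesis $\wtilde{A} = A + A_{\new} = A + \deg_{E_{\new}} \geq \deg_{\wtilde{G}} + \mathds{1}(V(\wtilde{G}))$ holds since $A\geq \deg_G + \mathds{1}(V(G))$ and the new edges contribute exactly $\deg_{E_{\new}}$ to the degree. The output is a $\wtilde{\rho}$-dense certified-ED of $\wtilde{A}$ on $\wtilde{G}$ with $\wtilde{\rho} = \rho\uplus\rho_{\new}$, and I would package it as $(\wtilde{C},\wtilde{L},\wtilde{\cal N},\wtilde{\cal R},\Pi_{\wtilde{\cal R}\to\wtilde{G}})$ where $\wtilde{C} = C' + C'_{\new}$ rewritten as $C + (C_{E_{\new}} + C'_{\new})$, and similarly $\wtilde{L} = L' \cup L'_{\new}$ rewritten as $L \cup (V(E_{\new}) \cup L'_{\new})$. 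The bookkeeping step is then to plug $\rho_{\new}(A_{\new}) = |A_{\new}| = \sum_v \deg_{E_{\new}}(v) = 2|E_{\new}| = O(|E_{\new}|)$ into the quality bounds of \Cref{thm:NWInsertWithDensity}: the cut size bound becomes $|C_{E_{\new}}| + |C'_{\new}| \leq h|E_{\new}| + O(\kappa_{\init,C}\cdot\omega\cdot\phi\cdot h\cdot|E_{\new}|) = O((\omega\kappa_{\init,C}\phi + 1)\cdot h\cdot|E_{\new}|)$, the landmark bound becomes $|V(E_{\new})| + |L'_{\new}| = O((\omega(\kappa_{\CM,L}+\kappa_{\init,L})\phi + 1)\cdot|E_{\new}|)$, the pairwise cover, router-update-count, and embedding parameters transfer verbatim, the recourse is $O(|A_{\new}|\cdot\omega\cdot\kappa_{\PC,\omega}) = O(|E_{\new}|\cdot\omega\cdot\kappa_{\PC,\omega})$, and the running time is $|A_{\new}|\cdot\poly(h)\cdot n^{O(\epsilon)}\cdot(1/\phi + 2^{O(f)}\cdot\omega) = |E_{\new}|\cdot\poly(h)\cdot n^{O(\epsilon)}\cdot(1/\phi + 2^{O(f)}\cdot\omega)$ (the $\omega$ multiplier inside $\rho^{\src}$ in \Cref{thm:NWInsertWithDensity} is already absorbed since $\rho_{\new}$ is unit and $|A_{\new}|$ is linear in $|E_{\new}|$).

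The main obstacle — and it is a mild one — is confirming that the intermediate object really satisfies \emph{all} clauses of \Cref{def:DenseCertifiedED} with the claimed parameters, in particular that marking the new edges as blocked does not disturb the pairwise-cover quality or the density property. This is where one must be careful that distances only go up when passing from $G-C$ to $\wtilde{G}-C'$ on the common vertex set (so covering radius and separation are preserved) and that the router/density structure is literally unchanged; no new routing work is needed because $A_{\new}$-virtual nodes are handled entirely inside \Cref{thm:NWInsertWithDensity}. Everything else is routine arithmetic substitution, so the proof will be short.
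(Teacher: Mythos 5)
Your proposal is correct and follows essentially the same route as the paper: construct the intermediate $\rho$-dense certified-ED $(C', L', {\cal N}, {\cal R}, \Pi_{{\cal R}\to\wtilde{G}})$ of $A$ on $\wtilde{G}$ by blocking every new edge with cut value $h$ and adding $V(E_{\new})$ to the landmark set, then invoke \Cref{thm:NWInsertWithDensity} with $A_{\new}=\deg_{E_{\new}}$ at unit density. The bookkeeping you carry out to split $\wtilde{C}=C+(C_{E_{\new}}+C'_{\new})$ and $\wtilde{L}=L\cup(V(E_{\new})\cup L'_{\new})$, and the verification that new edges of post-cut length $\geq h+1$ cannot appear on any $h$-length path so all quality parameters of ${\cal N}$ carry over verbatim, are exactly what the paper does.
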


\begin{proof}
We first update $(C,L,{\cal N},{\cal R},\Pi_{{\cal R}\to G})$ to $(C',L',{\cal N},{\cal R},\Pi_{{\cal R}\to \wtilde{G}})$ as follows.
\begin{enumerate}
\item Let $C' = C + C_{E_{\new}}$, where $C_{E_{\new}}$ assigns $C_{E_{\new}}(e) = h$ to each $e\in E_{\new}$.
\item Let $L' = L\cup L_{E_{\new}}$, where $L_{E_{\new}} = V(E_{\new})$ includes all endpoints of new edges, so the $C_{E_{\new}}$-vertices have themselves as landmarks.
\item Keep $\Pi_{{\cal R}\to \wtilde{G}} = \Pi_{{\cal R}\to G}$ unchanged.
\end{enumerate}
Obviously, $(C',L',{\cal N},{\cal R},\Pi_{{\cal R}\to \wtilde{G}})$ is a $\rho$-dense certified-ED of $A$ on $\wtilde{G}$ with quality parameters the same as those of $(C,L,{\cal N},{\cal R},\Pi_{{\cal R}\to G})$.

Next, let $A_{\new} = \deg_{E_{\new}}$ with density $\rho_{\new}(v) = 1$ for each virtual node $v\in A_{\new}$. Then apply \Cref{thm:NWInsertWithDensity} on the $\rho$-dense certified-ED $(C',L',{\cal N},{\cal R},\Pi_{{\cal R}\to \wtilde{G}})$ of $A$ on $\wtilde{G}$ with the new $\rho_{\new}$-dense node-weighing $A_{\new}$. The output $(\wtilde{C},\wtilde{L},\wtilde{\cal N},\wtilde{\cal R},\Pi_{\wtilde{\cal R}\to \wtilde{G}})$ is exactly a $\wtilde{\rho}$-dense certified-ED of $\wtilde{A} = A + A_{\new}$ on $\wtilde{G}$ with density $\wtilde{\rho} = \rho\uplus \rho_{\new}$ and the desired quality parameters. 

By \Cref{thm:NWInsertWithDensity}, the recourse from ${\cal N}$ to $\wtilde{\cal N}$ is at most
$O(\rho_{\new}(A_{\new})\cdot\omega \cdot \kappa_{\PC,\omega})\leq O(|E_{\new}|\cdot\omega\cdot\kappa_{\PC,\omega})$, and the running time is $|E_{\new}|\cdot\poly(h)\cdot n^{O(\epsilon)}\cdot(1/\phi + 2^{O(f)}\cdot\omega)$.

\end{proof}

\subsection{Batched Landmark Insertions into Node-Weighting}
\label{sect:DynDenseEDLandmarkInsertion}

\begin{lemma}
Let $G$ be a graph with node-weighting $A\geq \deg_{G}$ and parameter $\phi$. For each $1\leq j\leq \bar{j}$, suppose we are given a $\rho$-dense certified-ED $(C_{j},L_{j},{\cal N}_{j},{\cal R}_{j},\Pi_{{\cal R}_{j}\to G})$ of $A$ on $G$ s.t.
\begin{itemize}
\item $L_{j}$ has distortion $\sigma_{j}$ and a subset $L^{\star}_{j}\subseteq L_{j}$ specified, and let $\tau = \sum_{1\leq j\leq \bar{j}}|L^{\star}_{j}|$;
\item ${\cal N}_{j}$ is a $(h_{\cov,j},h_{\sep,j},\omega_{j})$-pairwise cover of $A$ on $G$ with $h_{\cov,j} = h_{\sep,j} = h_{j}$;
\item ${\cal R}_{j}$ are maintained by \Cref{thm:Router} under $f_{j}$ updates;
\item $\Pi_{{\cal R}_{j}\to G}$ is a $(h_{\emb,j},\gamma_{\emb,j})$-embedding.
\end{itemize}
We define global parameters $\kappa_{\insLM,\gamma} = n^{O(\epsilon^{4})}$ and $\lambda_{\insLM,t} = \log_{\kappa_{\insLM,\gamma}}n =  O(1/\epsilon^{4})$. There is an algorithm that computes, for each $1\leq j\leq \bar{j}$, a $\wtilde{\rho}$-dense certified-ED $(\wtilde{C}_{j},\wtilde{L}_{j}, \wtilde{\cal N}_{j},\wtilde{\cal R}_{j},\Pi_{\wtilde{\cal R}_{j}\to G})$ of $\wtilde{A}$ on $G$ s.t.
\begin{itemize}
\item $\wtilde{C}_{j} = C_{j} + C_{\new,j}$ with $|C_{\new,j}| = O(h_{j}\cdot\tau)$
\item $\wtilde{L}_{j} = L_{j}\cup L_{\new,j}$ with distortion $\wtilde{\sigma}_{j} = \sigma_{j} + O(h_{j}/\kappa_{\sigma})$ and $|L_{\new,j}| = O(\tau)$.
\item $\wtilde{A} = A + A_{\new}$ with density $\wtilde{\rho} = \rho \uplus \rho_{\new}$, where \[
A_{\new} = \mathds{1}(\bigcup_{j}L^{\star}_{j}) + \mathds{1}(\bigcup_{j}L_{\new,j}\setminus \bigcup_{j}L_{j}),
\]
and $A_{\new}$ has $1/\phi$-uniform density $\rho_{\new}$.
\item $\wtilde{N}_{j}$ is a $(\wtilde{h}_{\cov,j},\wtilde{h}_{\sep,j},\wtilde{\omega}_{j})$-pairwise cover of $\wtilde{A}$ on $G$ with
\[
\wtilde{h}_{\cov,j} = \wtilde{h}_{\sep,j} = h_{j}/2^{O(\lambda_{\insLM,t})}\text{ and }\wtilde{\omega}_{j} = \omega_{j} + \lambda_{\insLM,t}\cdot\kappa_{\PC,\omega}.
\]
\item Routers in $\wtilde{\cal R}_{j}$ are maintained by \Cref{thm:Router} under at most $\wtilde{f}_{j} = f_{j} + O(\lambda_{\insLM,t})$ update batches.

\item $\Pi_{\wtilde{\cal R}_{j}\to G}$ is a $(\wtilde{h}_{\emb,j},\wtilde{\gamma}_{\emb,j})$-embedding with 
\begin{align*}
\wtilde{h}_{\emb,j} &= \max\{h_{\emb,j}, O(\lambda_{\init,h}\cdot h_{j})\},\\
\wtilde{\gamma}_{\emb,j} &= \gamma_{\emb,j} +  O( \wtilde{\omega}_{j}\cdot(\kappa_{\CM,L} + \kappa_{\init,L})\cdot(\wtilde{\omega}_{j}\cdot\kappa_{\CM,\gamma} + \kappa_{\init,\gamma})\cdot\kappa_{\insLM,\gamma}\cdot\lambda_{\insLM,t}/\phi)
\end{align*}

\end{itemize}
Given an arbitrary set $\wtilde{A}^{\star}$ of virtual nodes, the recourse from ${\cal N}_{j}$ to $\wtilde{\cal N}_{j}$ restricted on $\wtilde{A}^{\star}$ is
\[
\recourse_{\mid\wtilde{A}^{\star}}({\cal N}_{j}\to \wtilde{\cal N}_{j}) = O(\tau\cdot\wtilde{\omega}_{j}\cdot\kappa_{\PC,\omega}/(\phi\cdot\min_{v\in \wtilde{A}^{\star}\cap \wtilde{A}}\wtilde{\rho}(v))).
\]
The running time is $(\tau/\phi)\cdot n^{O(\epsilon)}\cdot \sum_{j}(\wtilde{\omega}_{j}^{2}\cdot\poly(h_{j})\cdot(1/\phi + 2^{O(\wtilde{f}_{j})}))$.

\label{lemma:LandmarkClosure}
\end{lemma}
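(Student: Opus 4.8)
The plan is to produce the output by iterating the batched node-weighting insertion subroutine \Cref{thm:NWInsertWithDensity} a bounded number of times, each iteration raising to $1/\phi$ the density of a set of ``fresh'' landmarks that contracts geometrically. Fix $\phi' := \phi/((\kappa_{\CM,L}+\kappa_{\init,L})\cdot\wtilde\omega_{\max}\cdot\kappa_{\insLM,\gamma})$, where $\wtilde\omega_{\max}=\max_j\wtilde\omega_j$ and $\kappa_{\insLM,\gamma}=n^{O(\epsilon^4)}$ is chosen large enough to swallow the hidden constants of \Cref{thm:NWInsertWithDensity} and the number $\bar j$ of certified-EDs (at most $\poly\log n \le n^{O(\epsilon^4)}$ in the applications). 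Starting from $A^{(0)}:=A$, the given certified-EDs, and the fresh set $L^{(0)}:=\bigcup_j L^\star_j$ (of size $\le\tau$), round $k$ does the following: form the $(1/\phi)$-uniformly dense increment $A_{\new}^{(k)}:=\mathds{1}(L^{(k)}\setminus\supp(A^{(k)}))$, so $\rho_{\new}^{(k)}(A_{\new}^{(k)})\le|L^{(k)}|/\phi$; for every $j$ apply \Cref{thm:NWInsertWithDensity} to the current $j$-th certified-ED with increment $A_{\new}^{(k)}$ and congestion parameter $\phi'$, obtaining a certified-ED of $A^{(k+1)}:=A^{(k)}+A_{\new}^{(k)}$ and a new landmark set $L_{\new,j}^{(k)}$; then set $L^{(k+1)}:=\bigcup_j L_{\new,j}^{(k)}$ and repeat until $L^{(k)}=\emptyset$. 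The lemma's output is $\wtilde A:=A^{(\final)}$, $L_{\new,j}:=\bigcup_k L_{\new,j}^{(k)}$, $C_{\new,j}:=\sum_k C_{\new,j}^{(k)}$, and the last-round certified-EDs. Note $\bigcup_j L_{\new,j}=\bigcup_{k\ge1}L^{(k)}$, and that density is set exactly once per vertex (to its original value, or to $1/\phi$ when first inserted), which is the fact pinning down the stated shape of $A_{\new}$.

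Termination and the ``size''-type bounds fall out of the choice of $\phi'$. By \Cref{thm:NWInsertWithDensity}, $|L_{\new,j}^{(k)}|\le O((\kappa_{\CM,L}+\kappa_{\init,L})\cdot\wtilde\omega_j\cdot\phi'\cdot\rho_{\new}^{(k)}(A_{\new}^{(k)}))\le|L^{(k)}|/(\bar j\cdot\kappa_{\insLM,\gamma})$, hence $|L^{(k+1)}|\le|L^{(k)}|/\kappa_{\insLM,\gamma}$; since $|L^{(0)}|\le\tau\le n$ the process stops after $\le\lambda_{\insLM,t}=\log_{\kappa_{\insLM,\gamma}}n=O(1/\epsilon^4)$ rounds, and $\sum_k|L^{(k)}|=O(\tau)$, giving $|L_{\new,j}|=O(\tau)$. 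Likewise round $k$ adds $|C_{\new,j}^{(k)}|\le O(\kappa_{\init,C}\cdot\wtilde\omega_j\cdot\phi'\cdot h_j\cdot\rho_{\new}^{(k)}(A_{\new}^{(k)}))=O(h_j|L^{(k)}|)$ (with $\kappa_{\insLM,\gamma}$ also dominating $\kappa_{\init,C}$), so $|C_{\new,j}|=O(h_j\tau)$.

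The remaining quality parameters of \Cref{def:DenseCertifiedED} are propagated mechanically: each round divides $h_{\cov,j}=h_{\sep,j}$ by $3$, adds $\kappa_{\PC,\omega}$ to the width, $1$ to the router update count, $O(h_j/\kappa_\sigma)$ to the distortion, and caps the embedding length, which over $O(\lambda_{\insLM,t})$ rounds yields the claimed $\wtilde h_{\cov,j}=\wtilde h_{\sep,j}=h_j/2^{O(\lambda_{\insLM,t})}$, $\wtilde\omega_j=\omega_j+\lambda_{\insLM,t}\kappa_{\PC,\omega}$, $\wtilde f_j=f_j+O(\lambda_{\insLM,t})$, $\wtilde\sigma_j=\sigma_j+O(h_j/\kappa_\sigma)$; the per-round congestion increment $O((\wtilde\omega_j\kappa_{\CM,\gamma}+\kappa_{\init,\gamma})/\phi')$ accumulates over the rounds to $\wtilde\gamma_{\emb,j}$; the per-round restricted recourse $O(\rho_{\new}^{(k)}(A_{\new}^{(k)})\cdot\wtilde\omega_j\kappa_{\PC,\omega}/\min_{v\in\wtilde A^\star\cap\wtilde A}\wtilde\rho(v))$ sums (using $\sum_k\rho_{\new}^{(k)}(A_{\new}^{(k)})=O(\tau/\phi)$ and that a $v$ present at round $k$ already has its final density) to the stated bound; and the running time is the sum over $O(\lambda_{\insLM,t})$ rounds and $\bar j$ indices of \Cref{thm:NWInsertWithDensity}'s cost, folding $1/\phi'=n^{O(\epsilon^4)}/\phi$ and $\lambda_{\insLM,t}$ into $n^{O(\epsilon)}$ and bounding $\rho_{\new}^{(k)}(A_{\new}^{(k)})\le\tau/\phi$.

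The hard part is the self-referential structure: raising a vertex's density is implemented inside \Cref{thm:NWInsertWithDensity} by a cutmatch, which itself manufactures fresh landmarks whose density must then also be raised. The construction hinges on driving the cutmatch parameter $\phi'$ a \emph{polynomial} factor $\kappa_{\insLM,\gamma}=n^{O(\epsilon^4)}$ below the target density $1/\phi$, so the fresh-landmark count shrinks by $\kappa_{\insLM,\gamma}$ per round and the recursion bottoms out in only $\lambda_{\insLM,t}=\log_{\kappa_{\insLM,\gamma}}n=O(1/\epsilon^4)$ rounds; the subtle check is that paying the factor $1/\kappa_{\insLM,\gamma}$ in embedding congestion on each of these rounds still totals to an $n^{O(\epsilon^4)}$ overhead rather than exploding to $n^{\Omega(1)}$ — which is exactly why the round count must be measured in base $\kappa_{\insLM,\gamma}$, not base $2$. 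Everything else is the lengthy-but-routine bookkeeping summarized above.
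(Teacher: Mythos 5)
Your high-level approach is the same as the paper's: iterate \Cref{thm:NWInsertWithDensity} with a congestion parameter that is a polynomial factor $\kappa_{\insLM,\gamma}$ smaller than $\phi$, so the set of freshly manufactured landmarks shrinks by $\kappa_{\insLM,\gamma}$ per round and the recursion stops after $\lambda_{\insLM,t}=\log_{\kappa_{\insLM,\gamma}}n$ rounds. The per-round quality-parameter bookkeeping, the congestion accumulation, and the recourse/running-time accounting all match the paper's argument up to cosmetic normalization of $\phi'$.

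However, there is a genuine gap in your definition of the per-round increment. You set $A_{\new}^{(k)}:=\mathds{1}\bigl(L^{(k)}\setminus\supp(A^{(k)})\bigr)$, excluding vertices already in the support of the node-weighting. This is the wrong exclusion. In the application (and indeed already under the lemma's own hypothesis $A\geq\deg_G$, and a fortiori $A\geq\deg_G+\mathds{1}(V(G))$ when the lemma is invoked in \Cref{thm:NonHopReducingEmulator}), every non-isolated landmark vertex lies in $\supp(A)$, so $A_{\new}^{(0)}=\emptyset$ and the algorithm does nothing. The point is that a landmark $v\in L^\star_j$ must receive a \emph{new} virtual node of density $1/\phi$ even though $v$ already owns low-density virtual nodes in $A$ — the whole purpose of the lemma is to attach a high-density item to $v$ on top of whatever $A(v)$ already contains. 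The exclusion must be against the set of landmarks already handled by \emph{this} lemma, not against $\supp(A^{(k)})$. Concretely, the paper uses $T_{\new}^{(i)}=\bigcup_j L^{\star,(i-1)}_j\setminus\bigcup_j L^{(i-2)}_j$: subtract the union of all landmark sets from two iterations ago, which (by incrementality of the $L_j$) is exactly the set of vertices either originally a landmark or already processed in an earlier round. This is also what makes the stated shape $A_{\new}=\mathds{1}(\bigcup_j L^\star_j)+\mathds{1}(\bigcup_j L_{\new,j}\setminus\bigcup_j L_j)$ come out correctly. With that one fix your proof goes through; as written, the algorithm terminates immediately with an empty increment and does not produce the claimed output.
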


\begin{proof}

Roughly speaking, this subroutine wants to add the landmarks $\bigcup_{j}L^{\star}_{j}$ into the node-weighting. This can be done by calling the dynamic ED subroutine, at a cost of creating more landmarks. Hence our algorithm is iterative. Let $L^{\star,(i)}_{j}$ be the new landmarks created in iteration $i$ of the level-$j$ certified-ED (initially $L^{\star,(0)}_{j} = L^{\star}_{j}$). Then at each iteration $i$, we add $\bigcup_{j}L^{\star,(i-1)}_{j}$ (excluding repetitive landmarks) into the node-weighting of all certified-EDs, which will create new landmarks $L^{\star,(i)}_{j}$ for the next iteration. We will set proper parameters to make sure the number of new landmarks drops by a factor roughly $\kappa_{\insLM,L}$ in each iteration, so the number of iteration is bounded by $\lambda_{\insLM,t}$.

Formally, let $(C_{j}^{(0)},L_{j}^{(0)},{\cal N}_{j}^{(0)},{\cal R}_{j}^{(0)},\Pi_{{\cal R}_{j}^{(0)}\to G})$ be the initial certified-ED for each $j$. In each iteration $i\geq 1$, we let
\[
T_{\new}^{(i)} = \bigcup_{j} L^{\star,(i-1)}_{j}\setminus \bigcup_{j} L^{(i-2)}_{j}\text{ (initially $T^{(1)}_{\new} = \bigcup_{j} L^{\star,(0)}_{j}$)} %
\]
If $T^{(i)}_{\new}$ is empty, we terminate the algorithm.

Otherwise, let $A_{\new}^{(i)} = \mathds{1}(T^{(i)}_{\new})$ be the unit node-weighting over $T^{(i)}_{\new}$ with $1/\phi$-uniform density $\rho^{(i)}_{\new}$. Then we call \Cref{thm:NWInsertWithDensity} on $(C_{j}^{(i-1)}, L_{j}^{(i-1)}, {\cal N}_{j}^{(i-1)}, {\cal R}_{j}^{(i-1)}, \Pi_{{\cal R}_{j}^{(i-1)}\to G})$ with the $\rho^{(i)}_{\new}$-dense node-weighting $A^{(i)}_{\new}$ and parameter 
\[
\phi_{j}^{(i)} = \phi/(\kappa_{\insLM,\gamma}\cdot \omega^{(i-1)}_{j}\cdot(\kappa_{\CM,L} + \kappa_{\init,L})).
\]
Let $h^{(i-1)}_{j}$ refer to $h^{(i-1)}_{\cov,j}$. The output is a $\rho^{(i)}$-dense certified-ED $(C_{j}^{(i)}, L_{j}^{(i)}, {\cal N}_{j}^{(i)}, {\cal R}_{j}^{(i)}, \Pi_{{\cal R}_{j}^{(i)}\to G})$ of $A^{(i)} = A^{(i-1)} + A^{(i)}_{\new}$ on $G$ with density $\rho^{(i)} = \rho^{(i-1)}\uplus \rho^{(i)}_{\new}$. It further satisfies that
\begin{itemize}
\item $C_{j}^{(i)} = C_{j}^{(i-1)} + C^{(i)}_{\new,j}$ with
\begin{align*}
|C^{(i)}_{\new,j}| \leq O( \kappa_{\init,C}\cdot \omega_{j}^{(i-1)}\cdot\phi_{j}^{(i)}\cdot h^{(i-1)}_{j}\cdot \rho^{(i)}_{\new}(A^{(i)}_{\new}))\leq O(h^{(i-1)}_{j}\cdot |A^{(i)}_{\new}|/\kappa_{\insLM,\gamma}),
\end{align*}
where the second inequality is by $\rho^{(i)}_{\new}(A^{(i)}_{\new}) = |A^{(i)}_{\new}|/\phi$ and $\kappa_{\CM,L} + \kappa_{\init,L}\geq \kappa_{\init,C}$.

\item $L_{j}^{(i)} = L_{j}^{(i-1)}\cup L^{\star,(i)}_{j}$ has distortion $\sigma^{(i)}_{j} = \sigma^{(i-1)}_{j} + O(h^{(i-1)}_{j}/\kappa_{\sigma})$ and
\[
|L^{\star,(i)}_{j}|\leq O((\kappa_{\CM,L} + \kappa_{\init,L})\cdot\omega^{(i-1)}_{j}\cdot\phi_{j}^{(i)}\cdot \rho^{(i)}_{\new}(A^{(i)}_{\new}))\leq O(|A^{(i)}_{\new}|/\kappa_{\insLM,\gamma}).
\]
\item ${\cal N}^{(i)}_{j}$ has quality parameters
\[
h^{(i)}_{\cov,j} = h^{(i)}_{\sep,j} = h^{(i-1)}_{\cov,j}/3,\text{ and }\omega^{(i)}_{j} = \omega^{(i-1)}_{j} + \kappa_{\PC,\omega}.
\]
\item Routers in ${\cal R}^{(i)}_{j}$ are maintained by \Cref{thm:Router} under $f^{(i)}_{j} = f^{(i-1)}_{j} + O(1)$ batched updates.
\item The embedding $\Pi_{{\cal R}^{(i)}_{j}\to G}$ has
\begin{align*}
h^{(i)}_{\emb,j} &= \max\{h^{(i-1)}_{\emb,j}, O(\lambda_{\init,h}\cdot h^{(i)}_{j})\}\\
\gamma^{(i)}_{\emb,j} &= \gamma^{(i-1)}_{\emb,j} + O((\omega^{(i-1)}_{j}\cdot \kappa_{\CM,\gamma} + \kappa_{\init,\gamma})/\phi_{j}^{(i)})\\
&= \gamma^{(i-1)}_{\emb,j} + O(\omega^{(i-1)}_{j}\cdot(\kappa_{\CM,L}+\kappa_{\init,L})\cdot(\omega^{(i-1)}_{j}\cdot\kappa_{\CM,\gamma} + \kappa_{\init,\gamma})\cdot\kappa_{\insLM,\gamma}/\phi)
\end{align*}
\end{itemize}

\begin{claim}
The number of iteration is at most $\lambda_{\insLM,t}$, and $\sum_{i} |A^{(i)}_{\new}|\leq O(\tau)$. 
\label{claim:NumberOfIterationsInsLM}
\end{claim}
\begin{proof}
For each $i\geq 2$, $|A^{(i)}_{\new}|\leq \sum_{j}|L^{\star,(i-1)}_{j}|\leq O(\bar{j}\cdot |A^{(i-1)}_{\new}|/\kappa_{\insLM,\gamma})$. Because $|A^{(1)}_{\new}|\leq n$, $\kappa_{\insLM,\gamma} = n^{O(\epsilon)}$ and $\bar{j} = O(\log h) = O(\log n)$, after at most $O(1/\epsilon)\leq \lambda_{\insLM,t}$ iterations, $|L^{(i)}_{\new}| = |A^{(i)}_{\new}| < 1$ and the algorithm terminates. Because $|A^{(1)}_{\new}|\leq \sum_{j}|L^{\star,(0)}_{j}| = \tau$ and $|A^{(i)}_{\new}|$ decreases exponentially over time, we have $\sum_{i}|A^{(i)}_{\new}| = O(\tau)$.
\end{proof}

\paragraph{Quality of the Final Certified-EDs.} For each $j$, the final (after the last iteration) certified-EDs $(\wtilde{C}_{j},\wtilde{L}_{j},\wtilde{\cal N}_{j},\wtilde{\cal R}_{j},\Pi_{\wtilde{\cal R}_{j}\to G})$ has qualtiy parameters as follows.
\begin{itemize}
\item $\wtilde{C}_{j} = C_{j} + C_{\new,j}$, where $C_{\new,j} = \sum_{i} C_{\new,j}^{(i)}$ with size
\begin{align*}
|C_{\new,j}| &= \sum_{i} |C_{\new,j}^{(i)}|\leq \sum_{i}O(h_{j}^{(i-1)}\cdot |A^{(i)}_{\new}|/\kappa_{\insLM,\gamma})\leq O(h_{j}\cdot\tau/\kappa_{\insLM,\gamma}).
\end{align*}

\item $\wtilde{L}_{j} = L_{j}\cup L_{\new,j}$, where $L_{\new,j} = \bigcup_{i}L_{j}^{\star,(i)}$ with size
\[
|L_{\new,j}| = \sum_{i}|L_{j}^{\star,(i)}|\leq\sum_{i}O(|A^{(i)}_{\new}|/\kappa_{\insLM,\gamma})\leq O(\tau/\kappa_{\insLM,\gamma}).
\]

\item $\wtilde{A} = A + A_{\new}$, where
\begin{align*}
A_{\new} &= \sum_{i} A^{(i)}_{\new} = \sum_{i} \mathds{1}(T^{(i)}_{\new})\\
& = \mathds{1}(\bigcup_{j}L^{\star,(0)}_{j}) + \sum_{i\geq 2} \mathds{1}(\bigcup_{j} L^{\star,(i-1)}_{j}\setminus \bigcup_{j}L^{(i-2)}_{j})\\
&= \mathds{1}(\bigcup_{j}L^{\star,(0)}_{j}) + \mathds{1}(\bigcup_{i\geq 2,j}L^{\star,(i-1)}_{j}\setminus \bigcup_{j}L^{(0)}_{j})\\
&= \mathds{1}(\bigcup_{j}L^{\star}_{j}) + \mathds{1}(\bigcup_{j}L_{\new,j}\setminus\bigcup_{j}L_{j})
\end{align*}
and $A_{\new}$ has $1/\phi$-uniform density $\rho_{\new}$.

\item The bounds on $\wtilde{h}_{\cov,j},\wtilde{h}_{\sep,j},\wtilde{\omega}_{j}$ and $\wtilde{f}_{j}$ are trivial by \Cref{claim:NumberOfIterationsInsLM}.

\item The embedding $\Pi_{\wtilde{\cal R}_{j}\to G}$ Trivially has length $\wtilde{h}_{\emb,j} = \max\{h_{\emb,j},O(\lambda_{\init,h}\cdot h_{j})\}$. 
Regarding the congestion, because in each iteration $i$,
\[
\gamma^{(i)}_{\emb,j} = \gamma^{(i-1)}_{\emb,j} + O(\omega^{(i-1)}_{j}\cdot(\kappa_{\CM,L}+\kappa_{\init,L})\cdot(\omega^{(i-1)}_{j}\cdot\kappa_{\CM,\gamma} + \kappa_{\init,\gamma})\cdot\kappa_{\insLM,\gamma}/\phi),
\]
and the number of iteration at most $\lambda_{\insLM,t}$, the final congestion is
\begin{align*}
\wtilde{\gamma}_{\emb,j}= \gamma_{\emb,j} +  O( \wtilde{\omega}_{j}\cdot(\kappa_{\CM,L} + \kappa_{\init,L})\cdot(\wtilde{\omega}_{j}\cdot\kappa_{\CM,\gamma} + \kappa_{\init,\gamma})\cdot\kappa_{\insLM,\gamma}\cdot\lambda_{\insLM,t}/\phi).
\end{align*}

\end{itemize}

\paragraph{The Recourse from ${\cal N}_{j}$ to $\wtilde{\cal N}_{j}$ Restricted on $\wtilde{A}^{\star}$.} We have
\begin{align*}
&\recourse_{\mid \wtilde{A}^{\star}}({\cal N}_{j}\to\wtilde{\cal N}_{j}) \leq \sum_{i\geq 1} \recourse_{\mid (\wtilde{A}^{\star}\cap A^{(i)})}({\cal N}_{j}^{(i-1)}\to {\cal N}_{j}^{(i)})\\
&\leq \sum_{i\geq 1}O(\rho_{\new}^{(i)}(A_{\new}^{(i)})\cdot\omega^{(i-1)}_{j}\cdot\kappa_{\PC,\omega}/\min_{v\in \wtilde{A}^{\star}\cap A^{(i)}}\rho^{(i)}(v))\\
&\leq \sum_{i\geq 1}O(|A_{\new}^{(i)}|\cdot\omega^{(i-1)}_{j}\cdot\kappa_{\PC,\omega}/(\phi\cdot\min_{v\in \wtilde{A}^{\star}\cap A^{(i)}}\rho^{(i)}(v)))\\
&\leq O(\tau\cdot\wtilde{\omega}_{j}\cdot\kappa_{\PC,\omega}/(\phi\cdot\min_{v\in \wtilde{A}^{\star}\cap \wtilde{A}}\wtilde{\rho}(v))),
\end{align*}
where the second inequality is by \Cref{thm:NWInsertWithDensity}.

\paragraph{The Running Time.}
At each iteration $i$ and for each $j$, the time to get $(C_{j}^{(i)}, L_{j}^{(i)}, {\cal N}_{j}^{(i)}, {\cal R}_{j}^{(i)}, \Pi_{{\cal R}_{j}^{(i)}\to G})$ is 
$\omega_{j}^{(i-1)}\cdot \rho_{\new}^{(i)}(A_{\new}^{(i)})\cdot\poly(h^{(i-1)}_{j})\cdot (1/\phi_{j}^{(i)} + 2^{O(f_{j}^{(i-1)})}\cdot\omega_{j}^{(i-1)})$. 
Summing over all $i$ and $j$, the total running time is bounded by
\begin{align*}
&~~~~\lambda_{\insLM,t}\cdot\sum_{j}\wtilde{\omega}_{j}\cdot(\tau/\phi)\cdot \poly(h_{j})\cdot n^{O(\epsilon)}\cdot (\kappa_{\insLM,\gamma}\cdot\wtilde{\omega}_{j}\cdot(\kappa_{\CM,L}+\kappa_{\init,L})/\phi + 2^{O(\wtilde{f}_{j})}\cdot\wtilde{\omega}_{j})\\
&\leq (\tau/\phi)\cdot n^{O(\epsilon)}\cdot\sum_{j}(\wtilde{\omega}_{j}^{2}\cdot\poly(h_{j})\cdot(1/\phi + 2^{O(\wtilde{f}_{j})}))
\end{align*}

\end{proof}

\section{Dynamic Vertex Sparsifiers for Bounded Distances}
\label{sect:DynSparsifier}

In this section, we introduce our online-batch dynamic vertex sparsifier algorithm, where the sparsifier will preserve pairwise distance between a given terminal set. The main theorem of this section is \Cref{thm:NonHopReducingEmulator}.

\begin{definition}[Vertex Sparsifiers]
Let $G$ be a graph with terminals $T\subseteq V(G)$, an $(\alpha_{\low},\alpha_{\up},h)$-sparsifier $H$ of $T$ on $G$ is a weighted graph satisfying the following.
\begin{enumerate}
\item\label{property:Sparsifier1} $T\subseteq V(H)$.
\item\label{property:Sparsifier2} For each $u,v\in T$, there is $\dist_{H}(u,v)\cdot\alpha_{\low}\geq \dist_{G}(u,v)$.
\item\label{property:Sparsifier3} For each $u,v\in T$ s.t. $\dist_{G}(u,v)\leq h$, there is $\dist_{H}(u,v)\leq \alpha_{\up}\cdot\dist_{G}(u,v)$.
\end{enumerate}
The \emph{stretch/approximation} of the sparsifier $H$ is $\alpha_{\low}\cdot\alpha_{\up}$. In particular, we say $H$ is an $(\alpha_{\low},\alpha_{\up})$-sparsifer of $T$ on $G$ if property 3 holds for every pair of $u,v\in T$ without restriction.
\label{def:Sparsifier}
\end{definition}

Before stating \Cref{thm:NonHopReducingEmulator}, we first explain the term of \emph{fresh} vertices, which appears in \Cref{Item:SparsifierVertexIdentifier} of \Cref{thm:NonHopReducingEmulator}. In a dynamic algorithm, we say a vertex is fresh, if at the moment it is created, we assign it a \emph{global} vertex identifier different from the identifiers of all vertices appear before. Equivalently, we can image that there is a large \emph{global} pool of vertex identifiers, and we request an unpicked identifier when creating this vertex. Whether a newly created vertex is fresh or not is its inherent property, and we are not allowed to change the global identifier of a vertex once it has been created. We note that in \Cref{thm:NonHopReducingEmulator}, when we say a vertex $v$ in the sparsifer $H$ of $G$ is fresh, besides $v\notin V(G)$, we also know that $v$ has identifier different from any vertex created before in an algorithm that uses \Cref{thm:NonHopReducingEmulator} as a subroutine.

It is important to specify the global identifiers of vertices, because in some following sections (e.g. \Cref{sect:DynHopEmu}), we will work on several graphs simultaneously and define some new graph using vertices from different graphs. The new graph is well-defined only when global identifiers exist, because we need to know whether two vertices from two different graphs correspond to the same vertex in the new graph or not. This is also the reason why we cannot allow change the identifier of a vertex, because changing the identifier may cause vertex split operation to the new graph, which is expensive. 

\begin{theorem}[Dynamic vertex sparsifiers]
\label{thm:NonHopReducingEmulator}

Let $G$ be a dynamic graph with an incremental terminal set $T\subseteq V(G)$ and parameters $h$ and $\phi\leq 1/t$ under $t$ batches of updates $\pi^{(1)},\pi^{(2)},...\pi^{(t)}$ of edge insertions/deletions, isolated vertex insertions/deletions and terminals insertions s.t. at each time $1\leq i\leq t$,
\[
|\pi^{(i)}| \leq \phi\cdot |G^{(i)}|.
\]
There is an algorithm that maintains an $(\lambda_{H,\low},\lambda_{H,\up},h)$-sparsifier $H$ of $T$ on $G$, where
\[
\lambda_{H,\low} = \lambda_{\rt,h}(t\cdot\lambda_{\insLM,t})\cdot\lambda_{\init,h}\cdot 2^{O(t\cdot\lambda_{\insLM,t})} = 2^{\poly(1/\epsilon)}\text{ and }\lambda_{H,\up} = O(1).
\]
The algorithm further guarantee the following.
\begin{enumerate}
\item\label{Item:SparsifierVertexIdentifier} $H$ is further a $(\lambda_{H,\low},\lambda_{H,\up},h)$-sparsifier of extended terminals $\bar{T}$ s.t. $\bar{T} = V(H)\cap V(G)$ and $T\subseteq \bar{T}$. Vertices in $V(H)\setminus \bar{T}$ are fresh vertices.

\item\label{Item:SparisiferSize} The size of $H$ is bounded by $|H|\leq \kappa_{H,\size}\cdot(|T| + \phi\cdot|G|)$
where
\[
\kappa_{H,\size} = \kappa_{j}^{3}\cdot 2^{O(t\cdot\lambda_{\insLM,t})}\cdot \lambda_{\rt,\prune}(t\cdot\lambda_{\insLM,t}) \cdot \kappa_{\PC,\omega}^{3}\cdot (\kappa_{\CM,L} +\kappa_{\init,L})\cdot(\kappa_{\CM,\gamma} + \kappa_{\init,\gamma}) = n^{O(\epsilon^{4})},
\]
and $\kappa_{j} = O(\log n)$ is a global upper bound of $\lceil 100\cdot\log h\rceil$.

\item The number of vertices in $H$ is bounded by $|V(H)|\leq n^{O(\epsilon^{4})}\cdot |V(G)|$.

\item The recourse from $H^{(i-1)}$ to $H^{(i)}$ is $\recourse(H^{(i-1)}\to H^{(i)})\leq \kappa_{H,\rcs}\cdot|\pi^{(i)}|$, where
\[
\kappa_{H,\rcs} = \kappa_{j}^{3}\cdot 2^{O(t\cdot\lambda_{\insLM,t})}\cdot \lambda_{\rt,\prune}(t\cdot\lambda_{\insLM,t}) \cdot \kappa_{\PC,\omega}^{4}\cdot (\kappa_{\CM,L} +\kappa_{\init,L})\cdot(\kappa_{\CM,\gamma} + \kappa_{\init,\gamma})\cdot \kappa_{\insLM,\gamma} = n^{O(\epsilon^{4})}.
\]

\item The initialization time is $(|G^{(0)}| + |T^{(0)}|/\phi)\cdot\poly(h)\cdot n^{O(\epsilon)}/\phi$ and the time to handle the batch $\pi^{(i)}$ is $|\pi^{(i)}|\cdot\poly(h)\cdot n^{O(\epsilon)}/\phi^{2}$.

\item At any time, given a path $P_{H}$ on $H$ connecting some vertices $u,v\in \bar{T}$, the algorithm can compute a $u$-$v$ path $P_{G}$ on $G$ with $\ell_{G}(P_{G})\leq \lambda_{H,\low}\cdot \ell_{H}(P_{H})$ and $|P_{G}|\geq |P_{H}|/2$. The path-unfolding time is $|P_{G}|\cdot 2^{O(1/\epsilon^{4})}$.
\end{enumerate}
\end{theorem}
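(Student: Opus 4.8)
\textbf{Proof proposal for \Cref{thm:NonHopReducingEmulator}.}

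The plan is to implement the static sparsifier construction sketched in the overview (stars over clusters of certified-EDs, plus original cut edges, with extended terminals being the union of the original terminals and the landmark sets), and make every component dynamic using the machinery developed in \Cref{sect:DynDenseED}. Concretely, I would maintain, for each length scale $h_j = 2^j$ with $1 \le j \le \bar j := \lceil 100 \log h \rceil$, a $\rho$-dense certified-ED $(C_j, L_j, {\cal N}_j, {\cal R}_j, \Pi_{{\cal R}_j \to G})$ of a common node-weighting $\bar A$ on $G$, where $\bar A$ is forced to contain all the landmark sets $\bigcup_j L_j$ (this is exactly what \Cref{lemma:LandmarkClosure} is designed to enforce, after each batched update that creates new landmarks). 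The extended terminal set is $\bar T = T \cup \bigcup_j L_j$, and the sparsifier $H$ is the union, over all $j$ and all clusters $S \in {\cal N}_j$, of a star $H^\star_S$ centered at a fresh vertex $v_S$ with leaves $\bar T \cap \supp(S)$ and all edges of length $h_j$, together with the edges $\bigcup_j \supp(C_j)$ carrying their $G$-lengths (or, to keep recourse low, only the ``heavy'' cut edges that are not already summarized by landmarks — this is the subtlety flagged around \Cref{thm:CoversToEmulators} in the overview).

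First I would set up the parameters: choose the internal congestion parameter of the certified-EDs to be $\phi$ itself (the theorem's $\phi$), use uniform density $1/\phi$ on $\bar T$ and density $1$ elsewhere so that $\rho(\bar A) = O(|V(G)|)$ stays linear, and track how the quality parameters $h_{\emb,j}, \gamma_{\emb,j}, \omega_j, f_j$ of each certified-ED evolve over the $t$ batches and the $\le t\cdot\lambda_{\insLM,t}$ sub-batches induced by \Cref{lemma:LandmarkClosure}. Then, for the \emph{correctness of the stretch}, I would argue the lower bound $\dist_H(u,v)\cdot\lambda_{H,\low} \ge \dist_G(u,v)$ by decomposing a shortest $H$-path into star-subpaths and single cut-edges: a $2$-hop subpath through $H^\star_S$ for $S \in {\cal N}_j$ has $H$-length $2h_j$, while its endpoints lie in a common cluster $S$ of a certified-ED with length parameter related to $h_j$, so by \Cref{lemma:CertifiedEDToDist} their $G$-distance is at most $\lambda_{\rt,h}(f)\cdot h_{\emb} \le \lambda_{H,\low}\cdot h_j$; cut edges contribute their true $G$-length. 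For the upper bound (for terminal pairs at $G$-distance $\le h$), I would decompose a shortest $G$-path into maximal cut-free segments and cut edges, and for each cut-free segment $P'$ of $G$-length in $(h_{j^\star-1}, h_{j^\star}]$ use the covering radius of ${\cal N}_{j^\star}$ to find a cluster containing both endpoints, giving an $H$-detour of length $2h_{j^\star} \le 4\ell_G(P')$; cut edges are handled either directly (if kept in $H$) or via the landmark detour (landmarks are $O(h_j/\kappa_\sigma)$-close in $G-C_j$, hence their star-neighbors in $H$ are within $O(1)$ stars), where the argument is more delicate and is where I expect to spend real effort — it must replace ``$u,v$ are $C_j$-vertices'' by ``$u,v$ are near landmarks'' while keeping $\alpha_{\up} = O(1)$.

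The recourse and update-time accounting would then follow by composition: the recourse of $H$ is, up to $\bar j = O(\log n)$ and the width factors $\kappa_{\PC,\omega}$, the sum over $j$ of the recourse of the pairwise cover ${\cal N}_j$ \emph{restricted to} $\bar T$, plus the recourse of $\bigcup_j L_j$; the density-$1/\phi$ trick is exactly what makes $\recourse_{\mid \bar A^\star}({\cal N}_j)$ independent of $1/\phi$ via \Cref{thm:EdgeDelWithDensity,thm:NWInsertWithDensity} (set $\mu = 2t$ there so that a $(1-1/\mu)$-fraction of density survives each of the $\le t$ rounds), and \Cref{lemma:LandmarkClosure} contributes only the extra $\kappa_{\insLM,\gamma}$ and $\lambda_{\insLM,t}$ factors. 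Size and vertex-count bounds are immediate from $\size({\cal N}_j) \le \omega_j \cdot \rho(\bar A)$ and $|\bar T| = O(|T| + \phi|G|)$ (using the initial/incremental landmark-size bounds). Finally, \emph{path unfolding}: given an $H$-path $P_H$, each star hop $u \to v_S \to w$ is replaced by routing the unit demand $(u,w)$ inside the router $R^S$ (two router-vertices, since both $u,w$ carry items in $\rho_S$) using the path-reporting query of \Cref{thm:RouterPathReport}, then substituting each router edge by its embedding path in $\Pi_{{\cal R}_j \to G}$; each cut edge of $P_H$ maps to itself in $G$. This yields $\ell_G(P_G) \le \lambda_{H,\low}\cdot\ell_H(P_H)$ and $|P_G| \ge |P_H|/2$ (each star $2$-hop becomes a nonempty $G$-path, so length does not shrink by more than half), in time $|P_G|\cdot 2^{O(1/\epsilon^4)}$ by the router path-reporting time. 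The main obstacle, as noted, is the $O(1)$ upper-bound stretch when cut edges are represented only through landmarks rather than explicitly — getting the constant right while also ensuring the heavy cut edges kept in $H$ do not blow up the recourse.
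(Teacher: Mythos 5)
Your overall plan matches the paper's proof closely: a collection of $\rho$-dense certified-EDs at length scales $h_j = 2^j$ sharing a common node-weighting, with $1/\phi$-uniform density on the extended terminal set $\bar T = T \cup \bigcup_j L_j$, landmark closure via \Cref{lemma:LandmarkClosure} to keep $\bar T$ inside the node-weighting, stars over clusters of the pairwise covers restricted to $\mathds{1}(\bar T)$, only the \emph{heavy} cut edges kept explicitly (whose endpoints automatically lie in $\bar T$ because $C_j(e) \geq h_j/10 \geq \sigma_j$ forces both endpoints into $L_j$), routers and embedding paths for path unfolding, and $\mu = 2t$ in the density-pruning step. The bookkeeping you sketch (recourse restricted to $\bar T$ drops by a factor $\phi$ thanks to density; $t\cdot\lambda_{\insLM,t}$ sub-batches; $\kappa_j = O(\log n)$ levels; and the internal congestion parameters scaled down by width and landmark-size constants rather than being $\phi$ verbatim) is the right skeleton.

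The gap is exactly where you flag it, and it is genuine: a one-shot decomposition of the $G$-shortest path into ``cut-free segments'' plus cut edges does not give $\alpha_{\up} = O(1)$. Cut-free segment endpoints are $C_j$-vertices that need not lie in $\bar T$, so they cannot serve as star leaves directly; detouring to a nearby landmark gives a vertex only $\sigma_j$-close \emph{on $G-C_j$}, and the detour path may itself cross new cut edges at other scales $j'$, so the decomposition must be re-applied. Iterating naively threatens to accumulate a stretch that grows with the number of detours rather than staying $O(1)$. The paper's \Cref{thm:CoversToEmulators} resolves this with a recursive ``roads'' tree: the initial $u$-$v$ path is repeatedly split (case 2: at a $C_j$-heavy endpoint already in $\bar T$; case 3: around a $C_j$-vertex by a landmark detour), with the invariant $5\ell_G(P) \le h_{j(P)} \le 20\ell_G(P)$ and a monotone $j$-value that never increases down the tree. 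A charging argument shows each case-3 split drops the $(G-C_j)$-length of the road by a factor $3/4$ while inflating $G$-length by only $(1 + O(1/\log^2 n))$, so the case-3 depth per $j$-value is $O(\log n)$, total depth $O(\log^2 n)$, and the accumulated $G$-length inflation is $(1 + O(1/\log^2 n))^{O(\log^2 n)} = O(1)$. This is the mechanism that makes the constant stretch work with landmark distortion $\sigma_j = h_j/\log^2 n$, and without it or an equivalent error-control device your sketch does not close; it is the correct place to ``spend real effort.''
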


We emphasize that an important advantage of our dynamic sparsifier algorithm is that it has \emph{low recourse}, independent of $1/\phi$. This is crucial for our dynamic length-constrained expander hierarchy algorithm in \Cref{sect:ExpanderHierarchy}, where we recursively apply \Cref{thm:NonHopReducingEmulator} to build some kind of hierarchy of sparsifiers with roughly $\log n/\log(1/\phi)$ levels but still require small final recourse.

However, one may notice that \Cref{thm:NonHopReducingEmulator} has some drawbacks, and we now discuss how to fix them. Nonetheless, \Cref{thm:NonHopReducingEmulator} itself is sufficient for its application in \Cref{sect:ExpanderHierarchy}.

\begin{itemize}
\item First, each batched update $\pi^{(i)}$ is required to have size no more than $\phi\cdot |E(G^{(i)})|$. This can be easily fixed by a standard rebuilding technique. Namely, we rebuild the sparsifier if the batched update is too large. Indeed, we use this technique when we apply \Cref{thm:NonHopReducingEmulator} in \Cref{sect:ExpanderHierarchy}. 
\item Second, the size of the sparsifier has an additive term $\phi\cdot |E(G)|$, which means when the number of terminals is small, the sparsifier only sparsifies $G$ by a factor $\phi$. We can fix this by recursively applying \Cref{thm:NonHopReducingEmulator} until the sparsifier has size roughly $|T|$. See \Cref{sect:ExtendedSparsifier} for a formal proof.
\item Third, the update time polynomially depends on $h$, which means this algorithm will only be efficient when the sparsifier preserves bounded distance. We will see how to fix this in \Cref{sect:ExtendedSparsifier} by exploiting the stacking technique in \Cref{sect:Stacking}.
\end{itemize}

In \Cref{sect:EDsToSparsifier}, we first show a static construction of the sparsifier from a collection of (specialized) certified-EDs with different length parameters, and then make the sparsifier dynamic by assuming these certified-EDs are dynamic. In \Cref{sect:ProofOfDynSparsifier}, we complete the proof of \Cref{thm:NonHopReducingEmulator} by showing a dynamic algorithm for these certified-EDs. In order to achieve low recourse, we will exploit $\rho$-dense certified-EDs in \Cref{sect:DynDenseED}.

\subsection{Construction Based on Landmarks}
\label{sect:EDsToSparsifier}

\Cref{thm:CoversToEmulators} shows a static construction of the sparsifier, given a collection of \emph{specialized} certified-EDs at different levels $j$ with length parameters $2,4,..,2^{j},...,h$. Precisely, the static construction will only access the moving cuts, landmark sets and pairwise covers (the routers and embedding are only for controlling the diameter of pairwise covers on $G$). The certified-EDs are specialized in the sense that they should be with respect to an \emph{extended} terminal set $\bar{T}$ including the original terminal set $T$ and all the landmarks. 

In fact, the sparsifier we construct is also with respect to this extended terminal set $\bar{T}$. The construction is given in the first paragraph in the proof of \Cref{thm:CoversToEmulators}. The intuition that the sparsifier will have low stretch is as follows. Consider a pair of extended terminals $u$ and $v$. We look at the level $j$ with length parameter $2^{j}$ slightly larger than $\dist_{G}(u,v)$. If $\dist_{G-C_{j}}(u,v)$ only increases a little bit compared to $\dist_{G}(u,v)$, then the pairwise cover at level $j$ (with cover radius $2^{j}$ on $G-C_{j}$) will have a cluster covering $u$ and $v$ and the corresponding star graph will preserve the $u$-$v$ distance. Otherwise, $\dist_{G-C_{j}}(u,v)$ increases quite a lot, which means there is a $C_{j}$-vertex $w'$ on the middle of the $u$-$v$ shortest path on $G$. Then we look at the landmark $w\in L_{j}\subseteq \bar{T}$ of $w'$ (which is quite close to $w'$), and consider pairs $(u,w)$ and $(w,v)$ recursively.

\begin{lemma}
Let $G$ be a graph with terminal set $T$ and a parameter $h$. For each integer $1\leq j\leq \bar{j}=\lceil \log(100\cdot h) \rceil$, let $h_{j} = 2^{j}$ and suppose we are given an integral moving cut $C_{j}$, a landmark set $L_{j}$ and a pairwise cover ${\cal N}_{j}$ satisfying the following.
\begin{itemize}
\item $L_{j}$ is a landmark set of $C_{j}$ on $G$ with distortion $\sigma_{j} = h_{j}/\log^{2} n$.
\item ${\cal N}_{j}$ is a pairwise cover of $\bar{T} = T\cup(\bigcup_{j'}L_{j'})$ on $G-C_{j}$ with cover radius $h_{\cov,j} = h_{j}$ and width $\omega_{j}$. Furthermore, the diameter of ${\cal N}_{j}$ \underline{on $G$} is $h_{\Gdiam,j}\leq h_{j}\cdot\alpha_{\low}$ for some fixed parameter $\alpha_{\low}\geq 1$.
\end{itemize}
There is an algorithm that computes an $(\alpha_{\low},\alpha_{\up},h)$-sparsifier $H$ of $\bar{T}$ on $G$ with $\alpha_{\up} = O(1)$ and $|H| = O(\sum_{j}(|C_{j}|/h_{j} + \omega_{j}\cdot|\bar{T}|))$. The running time is $O(\sum_{j}(|C_{j}| + \omega_{j}\cdot|\bar{T}|))$

\label{thm:CoversToEmulators}
\end{lemma}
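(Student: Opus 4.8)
The plan is to give an explicit construction of $H$ as the union of "star gadgets" over all clusters in all the pairwise covers ${\cal N}_j$, together with a small set of original graph edges witnessing the cuts, and then verify the three sparsifier properties. Concretely, set $\bar{T} = T \cup \bigcup_{j'} L_{j'}$. For each level $1 \le j \le \bar j$ and each cluster $S \in {\cal N}_j$, create a fresh center vertex $v_S$ and add a star $H^\star_S$ with leaves $\bar{T} \cap S$ and center $v_S$, assigning length $h_j$ to every edge of $H^\star_S$. Additionally, for each level $j$ and each edge $e = (x,y) \in \supp(C_j)$ with $\ell_{G-C_j}(e)$ "large" (say $\ge \sigma_j$), include $e$ in $H$ with its original length $\ell_G(e)$; for cut edges with small residual length, the landmark property already places nearby landmarks into $\bar T$, so these get absorbed by the star gadgets. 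The final graph is $H = \bigcup_{j,\, S \in {\cal N}_j} H^\star_S \;\cup\; (\text{selected original edges})$. The size bound $|H| = O(\sum_j (|C_j|/h_j + \omega_j |\bar T|))$ follows because each cluster contributes $O(|S|+1)$ edges and $\sum_{S \in {\cal N}_j}|S| = \size({\cal N}_j) \le \omega_j |\bar T|$, while the number of selected original edges per level is $O(|C_j|/\sigma_j) = O(|C_j| \log^2 n / h_j)$ — and we absorb the $\log^2 n$ into the $\omega_j$ term (or state the bound with that factor; since $\omega_j$ is already $n^{O(\epsilon^4)}$ this is harmless, matching how the calling theorem uses it).

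The lower-bound side ($\dist_H(u,v) \cdot \alpha_{\low} \ge \dist_G(u,v)$) is the easier direction. I would decompose a shortest $u$-$v$ path $P_H$ in $H$ into maximal subpaths that lie inside a single star $H^\star_S$ (each such subpath is a $2$-hop path through a center $v_S$, connecting two leaves $u', v' \in S$) and single edges that are selected original graph edges. A $2$-hop subpath through $v_S$ has $H$-length $2h_j$, and since $u', v' \in S \in {\cal N}_j$ with ${\cal N}_j$ having real diameter $h_{\Gdiam,j} \le h_j \alpha_{\low}$ on $G$, we get $\dist_G(u',v') \le h_j \alpha_{\low} = (\alpha_{\low}/2) \cdot \ell_H(\text{subpath})$. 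Original-edge subpaths contribute exactly their $G$-length. Summing via triangle inequality along $P_H$ yields $\dist_G(u,v) \le \alpha_{\low} \cdot \dist_H(u,v)$. This also gives the path-unfolding statement needed in item 6 of the main theorem, since each star-subpath unfolds to a bounded-length path in $G$ realizing the diameter of the cluster.

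The upper-bound side ($\dist_G(u,v) \le h \Rightarrow \dist_H(u,v) \le \alpha_{\up} \dist_G(u,v)$ with $\alpha_{\up}=O(1)$) is the main obstacle, and here the landmark machinery does the work via an inductive/recursive argument on $\dist_G(u,v)$. Given $u,v \in \bar T$ with $d := \dist_G(u,v) \le h$, pick the minimal $j^\star$ with $h_{j^\star} \ge 100 d$; then examine $\dist_{G-C_{j^\star}}(u,v)$. In the good case where $\dist_{G-C_{j^\star}}(u,v) \le h_{j^\star}$, the covering radius of ${\cal N}_{j^\star}$ gives a cluster $S \ni u,v$, and the star $H^\star_S$ yields $\dist_H(u,v) \le 2 h_{j^\star} = O(d)$. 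In the bad case, the shortest $u$-$v$ path $P$ in $G$ (with $\ell_G(P) = d$) must contain an edge $e \in \supp(C_{j^\star})$ — otherwise its $G-C_{j^\star}$ length equals $d \le h_{j^\star}$. Let $w'$ be the endpoint of such a cut edge; the landmark set $L_{j^\star}$ provides $w \in L_{j^\star} \subseteq \bar T$ with $\dist_{G-C_{j^\star}}(w',w) \le \sigma_{j^\star} = h_{j^\star}/\log^2 n \ll d$, hence $\dist_G(u,w), \dist_G(w,v) \le d + o(d)$; but I must arrange that the recursion makes genuine progress. The clean way is to induct on the level $j^\star$ (equivalently, on $\lceil \log d \rceil$): since $w'$ lies on $P$ and $e \in \supp(C_{j^\star})$, we can split $P$ at $e$ and charge each half to the landmark $w$, getting two subproblems whose $G$-distances are at most the original two halves plus the tiny $\sigma_{j^\star}$ detour; because the two halves of $P$ have total length $d$ and each is strictly smaller (after accounting for the cut edge having positive length), the recursion depth is $O(\log d) = O(\log h) = O(\bar j)$, and summing the additive $O(\sigma_{j^\star})$ detours over all recursion levels contributes only $\sum_{j} O(h_j/\log^2 n) = O(h/\log^2 n) \cdot O(1) \le O(d)$ provided we are careful, while the "good case" leaf contributions sum geometrically to $O(d)$. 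The delicate points — which I would handle carefully rather than gesture at — are (a) showing the cut edge on $P$ actually reduces the remaining distance enough to bound the recursion depth, (b) ensuring the landmark $w$ we route through really is at small $G$-distance (not just $G-C_{j^\star}$-distance) from the split point, which needs the landmark definition's two cases on whether $\ell_{G-C_{j^\star}}(e)$ exceeds $\sigma_{j^\star}$, and (c) bookkeeping the constants so that $\alpha_{\up}$ stays $O(1)$ independent of $\bar j$ — this last is why we chose $h_{j^\star} \ge 100 d$ with a large constant, giving enough slack that the geometric series over levels converges to an absolute constant. The running time is linear in the total description size of the inputs, $O(\sum_j (|C_j| + \omega_j |\bar T|))$, since building the stars and scanning the cut supports are both linear-time operations.
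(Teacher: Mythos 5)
Your construction and lower-bound argument essentially match the paper's. But there are two genuine gaps, one small and one central.

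The small one is the edge-selection criterion. You include edges $e\in\supp(C_j)$ with $\ell_{G-C_j}(e)\ge\sigma_j$, and claim the count per level is $O(|C_j|/\sigma_j)$. That bound does not follow: such an edge can have $C_j(e)=1$ while $\ell_G(e)\ge\sigma_j$, so the number of selected edges is only bounded by $|\supp(C_j)|\le |C_j|$, not $|C_j|/\sigma_j$. The paper instead keeps edges with $C_j(e)\ge h_j/10$ (which forces both endpoints into $L_j$ since $h_j/10\gg\sigma_j$), and there are trivially at most $10|C_j|/h_j$ of them — this is what makes the $\sum_j |C_j|/h_j$ term come out. Your criterion both overcounts and, downstream, would break the running-time and size bounds stated in the lemma.

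The central gap is in the upper-bound recursion. You split a shortest $u$-$v$ path $P$ at an endpoint of some arbitrary cut edge $e\in\supp(C_{j^\star})\cap P$, replace it with a nearby landmark, and claim recursion depth $O(\log d)$ because ``each half is strictly smaller.'' That only gives depth $O(d)$, not $O(\log d)$: one half of $P$ can retain almost all of the $G$-length at every split. With depth $\Theta(d)$ the accumulated $\sigma_{j^\star}$-detours do not obviously stay $O(d)$, and $\alpha_\up$ is no longer clearly $O(1)$. The paper avoids this by making the potential decrease \emph{in the $(G-C_j)$-metric, not the $G$-metric}, and only when it cannot split at a landmark endpoint directly. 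Concretely: if some $e\in P$ has $\ell_{G-C_j}(e)>\sigma_j$ it splits at $e$'s endpoint (which is automatically a landmark, this is their Case~2) at no $(G-C_j)$-length increase; otherwise (all residual lengths small, Case~3) it picks the split point $w'$ so that \emph{both} sides have $\ell_{G-C_j}$ at most $(2/3)\ell_{G-C_j}(P)$, then detours $\le\sigma_j$ to a landmark. Only Case~3 incurs detour, only Case~3 increases $G$-length, and Case~3 can happen $O(\log n)$ times per $j$-level because of the geometric decay, hence $O(\log^2 n)$ times total — which is exactly why $\sigma_j=h_j/\log^2 n$ is the right distortion to make $(1+O(1/\log^2 n))^{O(\log^2 n)}=O(1)$. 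Your proposal gestures at these issues as ``delicate points (a), (b), (c)'' but the specific fix — tracking $(G-C_j)$-length as the potential and distinguishing the two residual-length regimes — is where the whole lemma lives. Also missing is the one-edge base case: when $P$ is a single edge with huge $C_{j^\star}(e)$, there is nothing to split at, and only the explicitly retained heavy edge rescues you, which is precisely what motivates the paper's $C_j(e)\ge h_j/10$ threshold.
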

\begin{proof}
First, we define a set $E_{\heavy}$ of \textit{heavy edges}, which collect all edges $e\in E(G)$ s.t. $C_{j}(e)\geq h_{j}/10$ for some $j$. Note that each edge $e=(u,v)\in E_{\heavy}$ must have endpoints $u,v\in \bar{T}$, because $C_{j}(e)\geq h_{j}/10\geq \sigma_{j}$ implies $u,v\in L_{j}\subseteq \bar{T}$. Second, for each pairwise cover ${\cal N}_{j}$ and each cluster $S\in{\cal N}_{j}$, let $H^{\star}_{S}$ be a star graph with $V(H^{\star}_{S}) = S\cup\{v_{S}\}$, where $v_{S}$ is the \emph{artificial center}, and $E(H^{\star}_{S}) = \{(v_{S},v)\mid v\in \supp(S)\}$ in which each edge has length $h_{j}$. Exceptionally, when $|S| = 1$, its star graph $H^{\star}_{S}$ degenerate to an isolated vertex, which is the unique vertex $v\in S$. We call $H^{\star} = \bigcup_{1\leq j\leq \bar{j},S\in{\cal N}_{j}} H^{\star}_{S}$ the \textit{star union} of the collection $\{{\cal N}_{j}\mid 1\leq j\leq\bar{j}\}$. At last, the sparsifier $H$ is simply the union of the star union and heavy edges, i.e. $H = H^{\star}\cup E_{\heavy}$.

The quality of $H$ is shown by \Cref{lemma:SparsifierCorrectness}. The number of vertices in $H$ is at most $O(\sum_{j}\size({\cal N}_{j}))$. The number of edges in $H$ is at most $|E(H)|\leq \sum_{j}(O(|C_{j}|/h_{j}) + \size({\cal N}_{j}))\leq O(\sum_{j}(|C_{j}|/h_{j} + \omega_{j}\cdot|\bar{T}|))$. The bound on running time is straightforward.

\begin{lemma}
$H$ is a $(\alpha_{\low},\alpha_{\up},h)$-sparsifer of $\bar{T}$ on $G$ with $\alpha_{\up} = O(1)$.
\label{lemma:SparsifierCorrectness}
\end{lemma}
\begin{proof}
We show the three properties of sparsifers in \Cref{def:Sparsifier} one by one.

\paragraph{Property \ref{property:Sparsifier1}.} From the definition of pairwise cover (i.e. \Cref{def:PairwiseCover}), for each ${\cal N}_{j}$, we have $\bigcup_{S\in{\cal N}_{j}}S = \bar{T}$. By the construction of $H$, we have $\bar{T}\subseteq V(H^{\star})= V(H)$.

\paragraph{Property \ref{property:Sparsifier2}.} Let $u,v$ be an arbitrary pair of vertices in $\bar{T}$, and let $P_{H}$ be the shortest $u$-$v$ path in $H$. Consider an arbitrary pair of consecutive $\bar{T}$-vertices $x,y$ on $P_{H}$. That is, the subpath of $P_{H}$ from $x$ to $y$, denoted by $P_{H}(x,y)$, is internally disjoint with $\bar{T}$. By the construction of $H$, there are two cases.

In the first case, $P_{H}(x,y)$ is made up with a single edge $e=(x,y)\in E_{\heavy}$. This edge has the same length in both $G$ and $H$, so $\dist_{G}(x,y)\leq \ell_{H}(P_{H}(x,y))$. In the second case, $P_{H}(x,y)$ is made up with two edges and a internal vertex $w$, where $w$ is the artificial center of some cluster $S\in {\cal N}_{j}$ for some $j$. We have $x,y\in S$ and $\dist_{G}(x,y)\leq \alpha_{\low}\cdot h_{j}$. Moreover, $\ell_{H}(P_{H}(x,y)) = 2h_{j}$ because these two edges have length $h_{j}$. Therefore, $\dist_{G}(x,y)\leq \alpha_{\low}\cdot \ell_{H}(P_{H}(x,y))$.

Finally, by taking the concatenation of these subpaths $P_{H}(x,y)$, we have
\[
\dist_{G}(u,v)\leq \sum_{(x,y)}\dist_{G}(x,y)\leq \sum_{(x,y)}\alpha_{\low}\cdot \ell_{H}(P_{H}(x,y)) = \alpha_{\low}\cdot \dist_{H}(u,v).
\]

\paragraph{Property \ref{property:Sparsifier3}.} Consider a pair of vertices $u,v\in\bar{T}$ with $\dist_{G}(u,v)\leq h$. We now show that $\dist_{H}(u,v)\leq O(1)\cdot \dist_{G}(u,v)$. Let $P^{\star}$ be the shortest $u$-$v$ path on $G$.

Our strategy is to construct a set ${\cal P}$ of paths $P_{1},P_{2},...,P_{\bar{x}}$ called \textit{roads} s.t. each road $P_{x}\in{\cal P}$ has endpoints $u_{x},v_{x}\in\bar{T}$. These roads should further have the following properties. (1) First, $u_{1} = u$, $v_{\bar{x}} = v$ and $v_{x} = u_{x+1}$ for each $1\leq x\leq \bar{x}-1$. That is, the roads in ${\cal P}$ are laid end to end, where the first road $P_{1}$ starts from $u$ and the last road $P_{\bar{x}}$ ends at $v$. (2) Second, $\sum_{P_{x}\in {\cal P}} \ell_{G}(P_{x})\leq O(1)\cdot \dist_{G}(u,v)$. Namely, the total $G$-length of roads in ${\cal P}$ is at most a constant multiple of the real $\dist_{G}(u,v)$. (3) Each $(u_{x},v_{x})$ has $\dist_{H}(u_{x},v_{x})\leq O(1)\cdot\ell_{G}(P_{x})$. 

These three properties are enough to certify that $\dist_{H}(u,v)\leq O(1)\cdot \dist_{G}(u,v)$, because they imply
\begin{align*}
\dist_{H}(u,v) \leq \sum_{1\leq x\leq \bar{x}}\dist_{H}(u_{x},v_{x})
\leq \sum_{P_{x}\in{\cal P}}O(1)\cdot\ell_{G}(P_{x})
\leq O(1)\cdot \dist_{G}(u,v).
\end{align*}

\textbf{Construction of ${\cal P}$.} We construct ${\cal P}$ in an algorithmic way. We start with an initial set ${\cal P}_{1}$ which only contains a single road $P_{1,1} = P^{\star}$. The $y$-th phase will process a road set ${\cal P}_{y}$. Roughly speaking, each road $P_{y,x}\in {\cal P}_{y}$ may be split\footnote{The split operation \textit{may not} simply choose a vertex  on $P_{y,x}$ as the breakpoint and break $P_{y,x}$ into two subpaths.} into two new roads $P_{y+1,x'},P_{y+1,x''}\in {\cal P}_{y+1}$, or added into ${\cal P}_{y+1}$ with no change (i.e. it becomes $P_{y+1,x'} = P_{y,x}\in {\cal P}_{y+1}$). Intuitively, all the roads generated by this procedure will form a tree structure ${\cal T}$. Naturally, we can define the \textit{ancestor-descendant} relationships between roads. In particular, we call the road $P_{y,x}$ the \textit{parent} of $P_{y+1,x'}$ and $P_{y+1,x''}$, and the road $P_{1,1} = P^{\star}$ is the \textit{root road} of ${\cal T}$.

Now we describe the algorithm in details. In the $y$-th iteration, we do the following for each road $P_{y,x}\in {\cal P}_{y}$. Let $P_{y-1,\hat{x}}$ be its parent if $y\geq 2$. Let $u_{y,x}, v_{y,x}\in \bar{T}$ be the endpoints of $P_{y,x}$. Let $j^{\star}(P_{y,x})$ be the unique level s.t. $10\cdot\ell_{G}(P_{y,x})< h_{j^{\star}(P_{y,x})} \leq 20\cdot\ell_{G}(P_{y,x})$, and let $j(P_{y,x}) = \min\{j^{\star}(P_{y,x}), j(P_{y-1,\hat{x}})\}$ (if $P_{y,x}$ has no parent, then $j(P_{y,x}) = j^{\star}(P_{y,x})$). We will study by cases. If every road $P_{y,x}\in{\cal P}_{y}$ falls in case 1, we stop the algorithm with ${\cal P} = {\cal P}_{y}$ and let $\bar{y} = y$ denote the index of the last iteration. For simplicity, let $P = P_{y,x}$, $j = j(P_{y,x})$, $u = u_{y,x}$ and $v = v_{y,x}$. 

\underline{The Invariant (\Cref{lemma:Invariant}).} Before going into the case study, we first assume this important invariant
\[
5\cdot\ell_{G}(P)\leq h_{j}\leq 20\cdot\ell_{G}(P).
\] 
It is proven in \Cref{lemma:Invariant} after finishing the case study because the proof depends on some notations and claims below. However, we emphasize that there is no circular argument (see the discussion in \Cref{remark:NoCircularArgument}). 

Intuitively, this invariant says, although we enforce that the $j$-value of $P$ cannot be larger than the $j$-value of its parent $\hat{P}$, $j(P)$ will only slightly smaller than the best $j^{\star}(P)$ and it is still fine to work with $j(P)$ in the case study.

\underline{Case 1.} The first case is that $P$ has only one edge or $\ell_{G-C_{j}}(P)\leq h_{j}$. We add $P'=P$ into ${\cal P}_{y+1}$ with $P$ as its parent.

\begin{claim}
In case 1, we have $\dist_{H}(u,v)\leq 40\cdot \ell_{G}(P)$.
\label{claim:Case1}
\end{claim}
\begin{proof}
Suppose $P$ has only one edge $e$. If $C_{j}(e)<h_{j}/10$, then $\dist_{G-C_{j}}(u,v)\leq\ell_{G-C_{j}}(e) \leq \ell_{G}(P)+h_{j}/10\leq h_{j}/5 + h_{j}/10 \leq h_{j}$, which means there is a cluster $S\in{\cal N}_{j}$ containing both $u$ and $v$. Hence $\dist_{H}(u,v)\leq \dist_{H^{\star}_{S}}(u,v)\leq 2h_{j}\leq 40\cdot\ell_{G}(P)$. If $C_{j}(e)\geq h_{j}/10$, then $e\in E_{\heavy}$ and $\dist_{H}(u,v) \leq \ell_{G}(e) = \ell_{G}(P)$.

Suppose $\ell_{G-C_{j}}(P)\leq h_{j}$. Then similarly, $(u,v)$ is covered by a cluster $S\in{\cal N}_{j}$, and $\dist_{H}(u,v)\leq 40\cdot\dist_{G}(u,v)$.
\end{proof}

\underline{Case 2.} If $P$ has at least two edges and there is an edge $e\in P$ s.t. $\ell_{G-C_{j}}(e)> \sigma_{j}$, we let $w$ be an endpoint of $e$ that is not the same with $u$ or $v$ ($w$ must exist since $P$ has at least two edges). Let $P'$ and $P''$ be subpaths of $P$ from $u$ to $w$ and from $w$ to $v$ respectively, and then add $P'$ and $P''$ into ${\cal P}_{y+1}$ with $P$ as their parent.

\begin{claim}
In case 2, we have (\rmnum{1}) $\ell_{G-C_{j}}(P'),\ell_{G-C_{j}}(P'')\leq \ell_{G-C_{j}}(P)$, (\rmnum{2}) $\ell_{G}(P'),\ell_{G}(P'')\leq \ell_{G}(P)$, and (\rmnum{3}) $\ell_{G}(P') + \ell_{G}(P'') = \ell_{G}(P)$. Furthermore, the endpoints of $P'$ and $P''$ are inside $\bar{T}$.
\label{claim:Case2}
\end{claim}
\begin{proof}
The claims on lengths are trivial. To see that endpoints of $P'$ and $P''$ are inside $\bar{T}$, we just need to show $w\in\bar{T}$ because $u$ and $v$ are already known to be in $\bar{T}$. Recall that $w$ is an endpoint of some $e\in P$ with $\ell_{G-C_{j}}(e)>\sigma_{j}$. Because $L_{j}$ is a landmark set of $C_{j}$ on $G$ with distortion $\sigma_{j}$, we have $w\in L_{j}\subseteq \bar{T}$.
\end{proof}

\underline{Case 3.} In this case, $P$ has at least two edges and all edges $e\in P$ have $\ell_{G-C_{j}}(e)\leq \sigma_{j} = h_{j}/\log^{2} n$. Because $\ell_{G}(P)\leq h_{j}/5$ by the invariant, we have $C_{j}(P) \geq 4h_{j}/5$. Therefore, we can find an edge $e\in P\cap\supp(C_{j})$ with an endpoint $w'$ s.t. $\ell_{G-C_{j}}(P(u,w')),\ell_{G-C_{j}}(P(w',v))\leq 2\ell_{G-C_{j}}(P)/3$, where $P(u,w')$ and $P(w',v)$ are subpaths of $P$ from $u$ to $w'$ and from $w'$ to $v$ respectively. Let $w\in L_{j}\subseteq \bar{T}$ be the landmark of $w'$ s.t. $\dist_{G-C_{j}}(w,w')\leq \sigma_{j} = h_{j}/\log^{2}n$. 

Now we construct two roads $P'$ and $P''$ from $u$ to $w$ and from $w$ to $v$ as follows. Let $P_{w}$ be the shortest $w'$-$w$ path on $G-C_{j}$ with $\ell_{G-C_{j}}(P_{w}) = \dist_{G-C_{j}}(w',w)\leq \sigma_{j}$. Then $P'$ is the concatenation of $P(u,w')$ and $P_{w}$, and $P''$ is the concatenation of $P_{w}$ and $P(w',v)$ (and we enforce $P'$ and $P''$ to be simple paths by removing redundant parts). 

\begin{claim}
In case 3, we have (\rmnum{1}) $\ell_{G-C_{j}}(P'),\ell_{G-C_{j}}(P'')\leq (3/4)\cdot \ell_{G-C_{j}}(P)$, (\rmnum{2}) $\ell_{G}(P'),\ell_{G}(P'')\leq (1+O(1/\log^{2} n))\ell_{G}(P)$, and (\rmnum{3}) $\ell_{G}(P') + \ell_{G}(P'') \leq (1+O(1/\log^{2}n))\ell_{G}(P)$.
Furthermore, endpoints of $P'$ and $P''$ are inside $\bar{T}$.
\label{claim:Case3}
\end{claim}
\begin{proof}
For (\rmnum{1}), we have 
\begin{align*}
\ell_{G-C_{j}}(P')&\leq \ell_{G-C_{j}}(P(u,w')) + \ell_{G-C_{j}}(P_{w})\\
&\leq (2/3)\cdot\ell_{G-C_{j}}(P) + 20\cdot \ell_{G}(P)/\log^{2} n\\
&\leq (3/4)\cdot\ell_{G-C_{j}}(P),
\end{align*}
where the second inequality is because $\ell_{G-C_{j}}(P_{w})\leq \sigma_{j}\leq h_{j}/\log^{2}n\leq 20\cdot\ell_{G}(P)/\log^{2}n$ (recall the invariant $h_{j}\leq 20\cdot\ell_{G}(P)$). Similarly, we have $\ell_{G-C_{j}}(P'')\leq (3/4)\cdot\ell_{G-C_{j}}(P)$.

For (\rmnum{2}) and (\rmnum{3}), we have $\ell_{G}(P') + \ell_{G}(P'') \leq \ell_{G}(P) + 2\ell_{G}(P_{w})
\leq \ell_{G}(P) + 2\ell_{G-C_{j}}(P_{w})\leq \ell_{G}(P) + 2\sigma_{j}\leq \ell_{G}(P) + h_{j}/\log^{2} n\leq (1+40/\log^{2} n)\ell_{G}(P)$.
\end{proof}

\begin{lemma}
Let $P_{y}$ denote $P$. For an arbitrary ancestor $P_{\hat{y}}\in {\cal P}_{\hat{y}}$ of $P_{y}$ s.t. $j(P_{\hat{y}}) = j$, the tree-path from $P_{\hat{y}}$ to $P_{y}$ on ${\cal T}$ has at most $O(\log n)$ case-3 roads.
\label{lemma:NumberOfCase3Roads}
\end{lemma}
\begin{proof}
Let ${\cal T}(P_{\hat{y}},P_{y})$ denote the tree-path from $P_{\hat{y}}$ to $P_{y}$. For each 
$\hat{y}\leq y'\leq y$, let $P_{y'}\in{\cal P}_{y'}$ denote the ancestor of $P_{y}$ on level $y'$. We have $j(P_{y'}) = j$ for all $\hat{y}\leq y'\leq y$ because $j(P_{\hat{y}})=j(P_{y})$. Note that the case-1 roads on ${\cal T}(P_{\hat{y}},P_{y})$ are located consecutively as a suffix. Hence we only need to consider the prefix of ${\cal T}(P_{\hat{y}},P_{y})$ including only case-2 and case-3 roads. 

For each road $P_{y'}$ (with $y'\leq y-1$) on this prefix. If $P_{y'}$ is a case-2 road, we have $\ell_{G-C_{j}}(P_{y'+1})\leq \ell_{G-C_{j}}(P_{y'})$ by \Cref{claim:Case2}, otherwise $P_{y'}$ is a case-3 road and $\ell_{G-C_{j}}(P_{y'+1})\leq (3/4)\cdot \ell_{G-C_{j}}(P_{y'})$. Suppose there are $i$ many case-3 roads. Then, the $i$-th case-3 road on ${\cal T}(P_{\hat{y}},P_{y})$, denoted by $\wtilde{P}_{i}$ has $\ell_{G-C_{j}}(\wtilde{P}_{i})\leq (3/4)^{(i-1)}\cdot\ell_{G-C_{j}}(P_{\hat{y}})$. On the other hand, $\ell_{G-C_{j}}(\wtilde{P}_{i})\geq h_{j}\geq 1$ because it is a case-3 road. Therefore, there must be $i\leq O(\log n)$ because $\ell_{G-C_{j}}(P_{y_{1}})$ is polynomially bounded.

\end{proof}

\begin{lemma}
$5\cdot \ell_{G}(P)\leq h_{j}\leq 40\cdot \ell_{G}(P)$.
\label{lemma:Invariant}
\end{lemma}
\begin{proof}
Recall that $j^{\star}(P)$ is defined to be such that $10\cdot \ell_{G}(P)< h_{j^{\star}(P)}\leq 20\cdot\ell_{G}(P)$ and $j\leq j^{\star}(P)$, so trivially $h_{j}\leq h_{j^{\star}(P)}\leq 20\cdot\ell_{G}(P)$.

It remains to show $5\cdot \ell_{G}(P)\leq h_{j}$. Let $\wtilde{P}$ be the farthest ancestor of $P$ s.t. $j(\wtilde{P}) = j$. Then the parent of $\wtilde{P}$, denoted by $\widehat{\wtilde{P}}$, has $j(\widehat{\wtilde{P}}) > j(\wtilde{P})$, so combining $j(\wtilde{P}) = \min\{j(\widehat{\wtilde{P}}), j^{\star}(\wtilde{P})\}$, we have $j^{\star}(\wtilde{P}) = j(\wtilde{P}) = j$ . From the definition of $j^{\star}(\wtilde{P})$, we have $10\cdot\ell_{G}(\wtilde{P})<h_{j}$. We will see in a moment that $\ell_{G}(P)\leq 2\cdot\ell_{G}(\wtilde{P})$, which immediately implies $5\cdot\ell_{G}(P)<h_{j}$.

To show $\ell_{G}(P)\leq 2\cdot\ell_{G}(\wtilde{P})$, consider the tree-path ${\cal T}(\wtilde{P},P)$ from $\wtilde{P}$ to $P$. By \Cref{lemma:NumberOfCase3Roads} this tree-path has at most $O(\log n)$ case-3 roads. We walk from $\wtilde{P}$ to $P$ on this tree-path. By \Cref{claim:Case2}, passing a case-1 or case-2 road will not increase the road length on $G$. By \Cref{claim:Case3}, passing a case-3 road will increase the road length on $G$ by a factor $(1+O(\log^{2} n))$ multiplicatively. Therefore, We can conclude that $\ell_{G}(P)\leq (1+O(1/\log^{2} n))^{O(\log n)}\cdot\ell_{G}(\wtilde{P})\leq 2\cdot\ell_{G}(\wtilde{P})$.
\end{proof}

\begin{remark}
We emphasize that we \textit{did not} make a circular argument by clarifying the dependency between (a) \Cref{lemma:NumberOfCase3Roads,lemma:Invariant}, and (b) \Cref{claim:Case2} and \Cref{claim:Case3}. First, \Cref{claim:Case2} and \Cref{claim:Case3} for the current road $P$ depend on \Cref{lemma:Invariant} for $P$, because the algorithmic step processing $P$ needs \Cref{lemma:Invariant} for $P$. Second, \Cref{lemma:NumberOfCase3Roads} and \Cref{lemma:Invariant} for the current $P$ depend on \Cref{claim:Case2} and \Cref{claim:Case3} for all \textit{ancestor} $\wtilde{P}$ of $P$. Therefore, because ${\cal T}$ is a tree and \Cref{lemma:Invariant} holds for the root road trivially, all these claims and lemmas can be established inductively. 
\label{remark:NoCircularArgument}
\end{remark}

\textbf{Properties of ${\cal P}$.} For property (1), it is straightforward to see that the roads in ${\cal P}$ are laid end to end from $u$ to $v$. For property (3), because all roads in ${\cal P}$ are case-1 (by the terminate condition), each $P_{x}\in{\cal P}$ has $\dist_{H}(u_{x},v_{x})\leq 40\cdot \ell_{G}(P_{x})$ by \Cref{claim:Case1}. 

We now prove property (2), $\sum_{P_{x}\in{\cal P}} \ell_{G}(P_{x})\leq O(1)\cdot \dist_{G}(u,v)$, by induction. We call each road in ${\cal P} = {\cal P}_{\bar{y}}$ a \textit{leaf road}. For each road $P\in {\cal T}$, let $d(P)$ denote the maximum number of case-3 roads on the tree-path from $P$ to an arbitrary descendant leaf road of $P$, and let $g(P) = \sum_{\text{$P$'s descendant leaf road $P'$}}\ell_{G}(P')$ be the sum of $G$-length over all descendant leaf road $P'$ of $P$. Let the induction hypothesis be 
\[
g(P)\leq (1+O(1/\log^{2}n))^{d(P)}\cdot \ell_{G}(P),
\]
which trivially holds for all leaf road $P$. Now consider a non-leaf road $P$ and assume the induction hypothesis holds for $P$'s children. If $P$ is a case-1 road, the hypothesis also holds for $P$ trivially. If $P$ is a case-2 road with children $P'$ and $P''$, we have $d(P) = \max\{d(P'),d(P'')\}$ and $g(P) = g(P') + g(P'')$. Then
\begin{align*}
g(P) &= g(P') + g(P'')\\
&\leq (1+O(1/\log^{2}n))^{d(P')}\cdot \ell_{G}(P') + (1+O(1/\log^{2}n))^{d(P'')}\cdot \ell_{G}(P'')\\
&\leq (1+O(1/\log^{2}n))^{d(P)}\cdot (\ell_{G}(P') + \ell_{G}(P''))\\
&= (1+O(1/\log^{2}n))^{d(P)}\cdot \ell_{G}(P),
\end{align*}
where the last equation is by \Cref{claim:Case2}. Similarly, if $P$ is a case-3 road with children $P'$ and $P''$, we have $d(P) = \max\{d(P'),d(P'')\}+1$, and then
\begin{align*}
g(P) &= g(P') + g(P'')\\
&\leq (1+O(1/\log^{2}n))^{\max\{d(P'),d(P'')\}}\cdot (\ell_{G}(P') + \ell_{G}(P''))\\
&\leq (1+O(1/\log^{2}n))^{\max\{d(P'),d(P'')\}}\cdot (1+O(1/\log^{2}n))\cdot\ell_{G}(P)\\
& = (1+O(1/\log^{2}n))^{d(P)}\cdot \ell_{G}(P),
\end{align*}
where the second inequality is by \Cref{claim:Case3}.

Now consider the root road $P^{\star}$, the induction shows 
\[
g(P^{\star}) = (1+O(1/\log^{2}n))^{d(P^{\star})}\cdot \ell_{G}(P^{\star}).
\]
We will see in a moment that $d(P^{\star}) = O(\log^{2}n)$, which implies $\sum_{P_{x}\in{\cal P}}\ell_{G}(P_{x})=g(P^{\star}) \leq O(1)\cdot \ell_{G}(P^{\star}) = O(1)\cdot \dist_{G}(u,v)$.

To see $d(P^{\star}) = O(\log^{2}n)$, consider an arbitrary path on ${\cal T}$ from the root road to a leaf road $P_{\rm leaf}$, denoted by ${\cal T}(P^{\star},P_{\rm leaf})$. Because the $j$-value is monotonically decreasing, the roads on ${\cal T}(P^{\star},P_{\rm leaf})$ with the same $j$-value will located consecutively. Therefore, for each $1\leq j\leq \bar{j}$, the number of case-3 roads $P\in {\cal T}(P^{\star}, P_{\rm leaf})$ with $j(P) = P$ is at most $O(\log n)$ by \Cref{lemma:NumberOfCase3Roads}. This means the total number of case-3 roads on ${\cal T}(P^{\star},P_{\rm leaf})$ is at most $O(\bar{j}\cdot \log n) = O(\log^{2} n)$ because $h$ is polynomially bounded.

\end{proof}

\end{proof}

\Cref{lemma:DynamicCoverToSparsifier} is a dynamic version of \Cref{thm:NonHopReducingEmulator}, which is straightforward from the static construction of sparsifiers. 

\begin{lemma}
Let $G$ be a dynamic graph with an incremental terminal set $T$ under $t$ batches of updates $\pi^{(1)},...,\pi^{(t)}$ of edge insertions/deletions, isolated vertex insertions/deletions and terminals insertions. Given a parameter $h$, for each level $1\leq j\leq \bar{j}=\lceil\log(100\cdot h)\rceil$, let $h_{j} = 2^{j}$ and suppose we are given a sequence of $(C_{j}^{(i)}, L_{j}^{(i)}, {\cal N}_{j}^{(i)})$ over time period $0\leq i\leq t$ satisfying the following.
\label{lemma:DynamicCoverToSparsifier}
\begin{itemize}
\item The integral moving cut $C_{j}^{(i)} = (C_{j}^{(i-1)})_{\mid E(G^{(i)})} + C_{\new,j}^{(i)}$.
\item $L^{(i)}_{j}$ is a landmark set of $C^{(i)}_{j}$ on $G^{(i)}$ with distortion $\sigma_{j} = h_{j}/\log^{2} n$;
\item ${\cal N}_{j}^{(i)}$ is a pairwise cover of $\bar{T}^{(i)} = T^{(i)}\cup(\bigcup_{j'}L_{j'}^{(i)})$ on $G^{(i)}-C^{(i)}_{j}$ with cover radius $h_{\cov,j} \geq h_{j}$, width $\omega_{j}^{(i)}$. Furthermore, 
${\cal N}_{j}^{(i)}$ has diameter $h_{\Gdiam,j}\leq h_{j}\cdot\alpha_{\low}$ on $G^{(i)}$ for some fixed parameter $\alpha_{\low}$.
\end{itemize}
There is an algorithm that initializes and maintains an $(\alpha_{\low},\alpha_{\up},h)$-sparsifier $H$ of $\bar{T}$ on $G$ with $\alpha_{\up} = O(1)$, $|V(H^{(i)})| = O(\sum_{j}\omega^{(i)}_{j}\cdot|\bar{T}^{(i)}|)$ and size
\begin{align*}
|H^{(i)}|
&= O(\sum_{j}\omega_{j}^{(i)}\cdot |\bar{T}^{(i)}| + t\cdot\sum_{j}|C_{j}^{(i)}|/h_{j}).
\end{align*}
Furthermore, vertices in $V(H)\setminus \bar{T}$ are fresh.

The recourse from $H^{(i-1)}$ to $H^{(i)}$ is 
\[
O(|\pi^{(i)}| + \sum_{j}t\cdot|C^{(i)}_{\new,j}|/h_{j} + \recourse({\cal N}^{(i-1)}_{j}\to {\cal N}^{(i)}_{j})).
\]
The initialization time is $\wtilde{O}(\sum_{j}\omega^{(0)}_{j}\cdot|\bar{T}^{(0)}|+ |C_{j}^{(0)}|)$ and for each batched update $\pi^{(i)}$, the update time is $\wtilde{O}(\sum_{j}|C_{\new,j}^{(i)}| + \recourse({\cal N}_{j}^{(i-1)}\to{\cal N}_{j}^{(i)}))$.

\end{lemma}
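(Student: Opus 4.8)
The plan is to dynamize the static construction of \Cref{thm:CoversToEmulators} in the most direct way: at every time $i$ we let $H^{(i)}$ be precisely the graph that \Cref{thm:CoversToEmulators} would produce from the snapshot $(C_{j}^{(i)},L_{j}^{(i)},{\cal N}_{j}^{(i)})_{1\le j\le\bar{j}}$, namely the star union $(H^{\star})^{(i)}=\bigcup_{j}\bigcup_{S\in{\cal N}_{j}^{(i)}}H^{\star}_{S}$ (one artificial center $v_{S}$ per cluster $S\in{\cal N}_{j}^{(i)}$, leaf edges $(v_{S},v)$ of length $h_{j}$ for $v\in\supp(S)$) together with the heavy edges $E_{\heavy}^{(i)}=\{e\in E(G^{(i)}):C_{j}^{(i)}(e)\ge h_{j}/10\text{ for some }j\}$ carrying their original $G^{(i)}$-lengths. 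The per-snapshot hypotheses we are handed are exactly those \Cref{thm:CoversToEmulators} asks for: $L_{j}^{(i)}$ has distortion $\sigma_{j}=h_{j}/\log^{2}n$, ${\cal N}_{j}^{(i)}$ covers $\bar{T}^{(i)}$ on $G^{(i)}-C_{j}^{(i)}$ with cover radius $h_{\cov,j}\ge h_{j}$ (a larger cover radius is only stronger, so it is in particular a cover with radius $h_{j}$), and — the crucial real-diameter condition — ${\cal N}_{j}^{(i)}$ has diameter $h_{\Gdiam,j}\le h_{j}\cdot\alpha_{\low}$ on $G^{(i)}$ itself. Hence for every $i$ the graph $H^{(i)}$ is an $(\alpha_{\low},\alpha_{\up},h)$-sparsifier of $\bar{T}^{(i)}$ on $G^{(i)}$ with $\alpha_{\up}=O(1)$, and since $T^{(i)}\subseteq\bar{T}^{(i)}$ also of $T$. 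The counting is the same as in \Cref{thm:CoversToEmulators}: $|V(H^{(i)})|=O(\sum_{j}\size({\cal N}_{j}^{(i)}))=O(\sum_{j}\omega_{j}^{(i)}\cdot|\bar{T}^{(i)}|)$, the star part has $O(\sum_{j}\omega_{j}^{(i)}\cdot|\bar{T}^{(i)}|)$ edges, and the vertices of $V(H)\setminus\bar{T}$ are exactly the artificial centers, which we create by drawing unused global identifiers, so they are fresh. Thus correctness, vertex count, and freshness are immediate, and only incremental maintenance and the recourse/size accounting remain.

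For maintenance I would work along the split $H=H^{\star}\cup E_{\heavy}$. The star part is updated by replaying the change list that turns ${\cal N}_{j}^{(i-1)}$ into ${\cal N}_{j}^{(i)}$: a virtual-node insertion of $v$ into a cluster $S$ becomes an insertion of the leaf edge $(v_{S},v)$ of length $h_{j}$ (creating a fresh center $v_{S}$ when $S$ is a brand-new cluster), a virtual-node deletion becomes removal of that edge (and of $v_{S}$ once $S$ empties), so level $j$ contributes $O(\recourse({\cal N}_{j}^{(i-1)}\to{\cal N}_{j}^{(i)}))$ edits and $\wtilde{O}$ of the same running time. For the heavy part, note that $C_{j}^{(i)}$ agrees with $C_{j}^{(i-1)}$ on every edge outside $\supp(C_{\new,j}^{(i)})$ and outside the edges inserted or deleted in $\pi^{(i)}$, so we only re-test membership in $E_{\heavy,j}$ for those edges: an edge deleted in $\pi^{(i)}$ is dropped (charged to $|\pi^{(i)}|$), and an edge touched by the new cut is inserted or removed according to whether $C_{j}^{(i)}(e)\ge h_{j}/10$. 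Summing the two parts gives recourse $O(|\pi^{(i)}|+\sum_{j}t\cdot|C_{\new,j}^{(i)}|/h_{j}+\recourse({\cal N}_{j}^{(i-1)}\to{\cal N}_{j}^{(i)}))$ and update time $\wtilde{O}(\sum_{j}|C_{\new,j}^{(i)}|+\recourse({\cal N}_{j}^{(i-1)}\to{\cal N}_{j}^{(i)}))$, while initialization is a single call to \Cref{thm:CoversToEmulators} on the $i=0$ data in time $\wtilde{O}(\sum_{j}\omega_{j}^{(0)}\cdot|\bar{T}^{(0)}|+|C_{j}^{(0)}|)$.

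The one genuinely delicate point — and the reason the size and recourse bounds carry the extra factor $t$ over the naive $\sum_{j}|C_{j}^{(i)}|/h_{j}$ and $\sum_{j}|C_{\new,j}^{(i)}|/h_{j}$ — is bounding the number of heavy-edge edits at level $j$ by $O(t\cdot|C_{\new,j}^{(i)}|/h_{j}+|\pi^{(i)}|)$ rather than by $|\supp(C_{\new,j}^{(i)})|$. The key facts I would use are that on a surviving edge $C_{j}$ only grows (it loses mass solely when the edge itself is deleted), so "heavy at level $j$" is sticky and the only nontrivial edits are edges becoming heavy for the first time; that the width-$\omega_{j}$ covers and the at-most-$t$ batch structure let one charge these first-time heavy edges against the new cut mass $|C_{\new,j}^{(i)}|$ divided by $h_{j}$ (with a $t$ slack), and correspondingly bound $|E_{\heavy,j}^{(i)}|=O(t\cdot|C_{j}^{(i)}|/h_{j})$. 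Everything else is a mechanical translation of the static construction, so I expect this heavy-edge bookkeeping to be the main obstacle and the rest to be routine.
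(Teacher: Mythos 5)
Your proposal correctly reproduces the star-part maintenance and correctly isolates the heavy-edge bookkeeping as the delicate point — but the mechanism you sketch for it does not close the gap, and in fact cannot with the construction you describe.

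You keep the static threshold and define $E_{\heavy,j}^{(i)} = \{e : C_j^{(i)}(e)\geq h_j/10\}$, then claim that edges \emph{entering} $E_{\heavy,j}$ at step $i$ can be ``charged against the new cut mass $|C_{\new,j}^{(i)}|$ divided by $h_j$ with a $t$ slack.'' That charging fails: an edge can accumulate cut mass gradually over many batches and cross the threshold $h_j/10$ in a batch where its \emph{increment} $C_{\new,j}^{(i)}(e)$ is just $1$. Concretely, if $n$ edges each have $C_j^{(i-1)}(e)=h_j/10-1$ and each receives one more unit at step $i$, then $n$ edges become heavy while $|C_{\new,j}^{(i)}|=n$, so $t\cdot|C_{\new,j}^{(i)}|/h_j = tn/h_j \ll n$ whenever $h_j>10t$, and the claimed recourse bound is violated. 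Stickiness of the predicate and width $\omega_j$ do not help; no per-batch charging to $|C_{\new,j}^{(i)}|/h_j$ can recover from this.

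The paper's fix is to change the construction, not the analysis: it maintains $E_{\heavy}^{(i)}$ as a monotone \emph{superset} of the static heavy set by \emph{lowering the threshold to} $h_j/(20t)$ \emph{and testing it on the increment}. That is, it initially includes every $e$ with $C_j^{(0)}(e)\geq h_j/(20t)$, and at step $i$ adds every surviving $e$ with $C_{\new,j}^{(i)}(e)\geq h_j/(20t)$ (and drops edges removed from $G$). Then (a) the recourse at step $i$ on the heavy part is at most $20t\cdot|C_{\new,j}^{(i)}|/h_j + |\pi^{(i)}|$ because every newly added edge carries at least $h_j/(20t)$ units of the disjointly supported increment; (b) the total size is $|E_{\heavy}^{(i)}|\leq 20t\sum_j|C_j^{(i)}|/h_j$ because every included edge carries at least $h_j/(20t)$ of $C_j^{(i)}$; (c) the true heavy set $\{e:C_j^{(i)}(e)\geq h_j/10\}$ is contained in $E_{\heavy}^{(i)}$, since $C_j^{(i)}(e)=\sum_{i'}C_{\new,j}^{(i')}(e)$ is a sum of at most $t$ terms and hence some term is $\geq h_j/(10t)\geq h_j/(20t)$; and (d) adding \emph{more} $G$-edges with their original lengths to $H$ preserves both directions of the sparsifier guarantee, so $H^{(i)}$ being a supergraph of the static Theorem~\ref{thm:CoversToEmulators} output is harmless. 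This threshold-lowering step is the missing idea in your argument.
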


\begin{proof}

Providing the static construction of sparsifiers in the proof of \Cref{thm:CoversToEmulators}, we just need it dynamic. Concretely, we will maintain the invariant $H^{(i)} = H^{\star,(i)}\cup E^{(i)}_{\heavy}$ over all time $0\leq i\leq t$. Here $H^{\star,(i)}$ is exactly the same as the star union in the proof of \Cref{thm:CoversToEmulators} given $\{(C^{(i)}_{j},L^{(i)}_{j},{\cal N}^{(i)}_{j})\}$, but $E^{(i)}_{\heavy}$ is a super set of the heavy edge set from \Cref{thm:CoversToEmulators}. We will generate the batched update $\pi^{(i)}_{H}$ to $H^{(i)}$ at time $i$ as follows.

\paragraph{Generating $\pi_{H}^{(i)}$.} For the heavy edge set, initially we let $E^{(0)}_{\heavy}$ collect all edges $e\in E(G^{(0)})$ s.t. $C_{j}^{(0)}(e)\geq h_{j}/(20\cdot t)$ for some $j$. Afterwards, at each time $i\geq 1$, let $E_{G,\delete}^{(i)}$ be the edges removed from $G^{(i-1)}$ to $G^{(i)}$, and let $E_{\heavy,\new}^{(i)}$ collect all edges $e\in G^{(i)}$ s.t. $C_{\new,j}^{(i)}(e)\geq h_{j}/(20\cdot t)$. Then we update $E^{(i-1)}_{\heavy}$ to $E^{(i)}_{\heavy} = E^{(i-1)}_{\heavy}\cup E^{(i)}_{\heavy,\new}\setminus E_{G,\delete}^{(i)}$, so the batched edge insertions $E^{(i)}_{\heavy,\new}$ and batched edge deletions $E^{(i)}_{G,\delete}$ will also apply to $H^{(i-1)}$, and we add them into $\pi_{H}^{(i)}$.

For the star union, we keep $H^{\star,(i)} = \bigcup_{1\leq j\leq \bar{j},S\in {\cal N}_{j}^{(i)}} H^{\star}_{S}$ to be the star union of $\{{\cal N}_{j}^{(i)}\mid 1\leq j\leq \bar{j}\}$ at all times. Note that $H^{\star,(i)}$ can be updated from $H^{\star,(i-1)}$ as follows. Consider the batched update sequence from ${\cal N}_{j}^{(i-1)}$ to ${\cal N}_{j}^{(i)}$ for all $j$. We add the following unit updates to $\pi_{H}^{(i)}$.
\begin{itemize}
\item Consider a vertex insertion which adds $v$ to a cluster $S$. We first apply an isolated vertex insertion $v$ if $v\in V(G^{(i)})\setminus V(G^{(i-1)})$ is a newly isolated vertex. If $|S|\geq 2$ before insertion, we add an edge $(v,v_{S})$ with length $h_{j}$ to $H^{\star,(i-1)}$. If $|S|=1$ (say $S=\{v'\}$) before, we add an isolated vertex $v_{S}$ and then add two edges $(v,v_{S})$ and $(v',v_{S})$ with lengths $h_{j}$. 
\item Similarly, for each virtual node deletions which removes $v$ from a cluster $S$, if $|S|\geq 2$ after the deletion, we delete the edge $(v,v_{S})$. If $|S|=1$ (say $S=\{v'\}$) after the deletion, we delete two edges $(v',v_{S}),(v,v_{S})$ and then the isolated vertex $v_{S}$. Finally, if $v\in V(G^{(i-1)})\setminus V(G^{(i)})$ is a removed isolated vertex, we apply an isolated vertex deletion $v$.
\end{itemize}

\paragraph{Correctness.} Note that at each moment $0\leq i\leq t$, the $H^{(i)}$ constructed here is a super graph of the sparsifier $H$ constructed in the proof of \Cref{thm:CoversToEmulators} given $\{(C_{j}^{(i)},L_{j}^{(i)},{\cal N}_{j}^{(i)})\mid 1\leq j\leq \bar{j}\}$, because the star union is exactly the same, and the heavy edge set from \Cref{thm:CoversToEmulators} is a subset of $E^{(i)}_{\heavy}$. The latter holds by the following reason. For each edge $e\in E(G^{(i)})$ with $C_{j}^{(i)}(e)\geq h_{j}/10$, let $\hat{i}$ denote the first time it is added into $G$. Then $C_{j}^{(i)}(e) = \sum_{\hat{i}\leq i'\leq i}C_{\new,j}^{(i)}(e)$ (we let $C^{(0)}_{\new,j}(e) = C^{(0)}_{j}(e)$ for simplicity), and by averaging, $C_{\new,j}^{(i')}(e)\geq h_{j}/(t\cdot 20)$ for some moment $\hat{i}\leq i'\leq i$, so $e$ was added to $E_{\heavy}^{(i')}$ at that time and remains in $E_{\heavy}^{(i)}$ because $e$ did not get deleted.

Therefore, we can conclude that $H^{(i)}$ is an $(\alpha_{\low},\alpha_{\up},h)$-sparsifier of $\bar{T}^{(i)}$ on $G^{(i)}$ with $\alpha_{\up} = O(1)$. 
The number of vertices in $H^{(i)}$ is at most $|V(H^{(i)})| \leq \sum_{j}\size({\cal N}^{(i)}_{j})\leq O(\sum_{j}\omega^{(i)}_{j}\cdot|\bar{T}^{(i)}|)$. The number of edges in $H^{(i)}$ is bounded by 
\begin{align*}
|E(H^{(i)})| &= |E(H^{\star,(i)})| + |E_{\heavy}^{(i)}|\\
&\leq \sum_{j}\size({\cal N}_{j}^{(i)}) + \sum_{j}|C^{(i)}_{j}|/(h_{j}/(20\cdot t))\\
&\leq 
O(\sum_{j}\omega_{j}^{(i)}\cdot |\bar{T}^{(i)}| + t\cdot\sum_{j}|C_{j}^{(i)}|/h_{j}).
\end{align*}
Note that $|E^{(i)}_{\heavy}|\leq \sum_{j}|C^{(i)}_{j}|/(h_{j}/(20\cdot t))$ because an edge $e$ with $C^{(i)}_{j}<h_{j}/(20\cdot t)$ will not be select in $E^{(i)}_{\heavy}$. 

Vertices in $V(H^{(i)})\setminus \bar{T}^{(i)}$ are fresh because they are the artificial centers of stars, and we can definitely assign totally new global identifers to them.

The recourse from $H^{(i-1)}$ to $H^{(i)}$ is bounded by the number of unit updates in $\pi_{H}^{(i)}$. By our construction of $\pi_{H}^{(i)}$, we have
\begin{align*}
&~~~~|E^{(i)}_{\heavy,\new}| + |E^{(i)}_{G,\delete}| + \sum_{j}O(\recourse({\cal N}_{j}^{(i-1)}\to {\cal N}_{j}^{(i)}))\\
&= O(|\pi^{(i)}| + \sum_{j}(t\cdot|C^{(i)}_{\new,j}|/h_{j} + \recourse({\cal N}^{(i-1)}_{j}\to {\cal N}^{(i)}_{j}))).
\end{align*}
The running time is straightforward from the algorithm description.

\end{proof}

\subsection{Proof of \Cref{thm:NonHopReducingEmulator}}
\label{sect:ProofOfDynSparsifier}

We will prove \Cref{thm:NonHopReducingEmulator} combining \Cref{lemma:DynamicCoverToSparsifier} and an online-batch dynamic algorithm for the collection of specialized certified-EDs shown below. The latter is quite similar to the algorithm in \Cref{thm:DynamicED}, except the following. First, we work on $\rho$-dense certified-EDs instead to reduce the recourse of pairwise covers (so the recourse of the sparsifier). Second, recall that the specialized certified-EDs should be with respect to the union of original terminals and all landmarks. Hence, after a usual update step (or the usual initialization step), we will invoke \Cref{lemma:LandmarkClosure} to add the newly generated landmarks to the node-weighting.

Again, in the following proof, we assume without loss of generality each batched update contains only one type of unit updates.

\begin{proof}

We have argued in \Cref{lemma:DynamicCoverToSparsifier} that to maintain a sparsifier, it suffices to maintain $\bar{j} = \lceil 100\cdot h\rceil$ levels of pairwise covers. To do this, we will maintain certifed-EDs instead. At the end of each update $i$ (and the initialization with $i=0$) and for each level $1\leq j\leq \bar{j}$, the up-to-date dense certified-ED is a $\bar{\rho}^{(i)}$-dense certified-ED $(\bar{C}_{j}^{(i)}, \bar{L}_{j}^{(i)}, \bar{{\cal N}}_{j}^{(i)}, \bar{{\cal R}}_{j}^{(i)}, \Pi_{\bar{{\cal R}}_{j}^{(i)}\to G^{(i)}})$ of $\bar{A}^{(i)}$ on $G^{(i)}$ where 
\begin{equation*}
\bar{A}^{(i)} = \deg_{G^{(i)}} + \mathds{1}(V(G^{(i)})) + \mathds{1}(T^{(i)}) + \mathds{1}(\bigcup_{j'} \bar{L}^{(i)}_{j'})
\end{equation*}
with density function
\begin{equation*}
\bar{\rho}^{(i)} \geq \left\{
\begin{aligned}
&1,\ v\in \deg_{G^{(i)}} + \mathds{1}(V(G^{(i)}))\\% + \sum_{\substack{i\text{ s.t. }\pi^{(i)} = E^{(i)}_{\new}\\\text{is edge insertion}}} \deg_{E^{(i)}_{\new}}\\
&\frac{(1-1/\mu)^{i}}{\phi},\ v\in \mathds{1}(T^{(i)}) + \mathds{1}(\bigcup_{j'} \bar{L}^{(i)}_{j'})
\end{aligned}
\right.,
\end{equation*}
where we fix $\mu = 2t$ and it is the parameter fed to \Cref{thm:EdgeDelWithDensity}. We let $\bar{h}^{(i)}_{j} = \bar{h}^{(i)}_{\cov,j}$ be the cover radius of $\bar{\cal N}^{(i)}_{j}$ (namely, the length parameter of this dense certified-ED).

The framework of maintaining certified-EDs is as follows. The whole initialization step will invoke \Cref{thm:InitDenseCertifiedED} (paragraph \textbf{Initialization}) and then \Cref{lemma:LandmarkClosure} (paragraph \textbf{Insert New Landmarks to the Node-Weighting}). For each update step for $\pi^{(i)}$, if it represents isolated vertex insertions or deletions, we go to paragraph \textbf{Isolated Vertex Insertions/Deletions}. Otherwise, we will invoke one of \Cref{thm:EdgeDelWithDensity}, \Cref{thm:NWInsertWithDensity} and \Cref{thm:EdgeInsertWithDensity} (paragraphs \textbf{Edge Deletion}, \textbf{Edge Insertion} and \textbf{Terminals Insertion} respectively) depends on the type of $\pi^{(i)}$, and then again invoke \Cref{lemma:LandmarkClosure} (paragraph \textbf{Insert New Landmarks to the Node-Weighting}).

\paragraph{Initialization.}

Let $A^{(0)}$ be a node-weighting defined by
\begin{equation*}
A^{(0)}=\deg_{G^{(0)}}+\mathds{1}(T^{(0)})\text{ with density function }\rho^{(0)} = \left\{
\begin{aligned}
&1,\ v\in \deg_{G^{(0)}}\\
&1/\phi,\ v\in \mathds{1}(T^{(0)})
\end{aligned}
\right..
\end{equation*}

For each level $1\leq j\leq \bar{j}$, we let \[
h_{j} = 2^{j},\ 
h_{j}^{(0)}= h_{j}\cdot 2^{O(t\cdot\lambda_{\insLM,t})}.
\]
Then we apply \Cref{thm:InitDenseCertifiedED} on the initial graph $G^{(0)}$ with the $\rho^{(0)}$-dense node-weighting $A^{(0)}$, length parameter $h^{(0)}_{j}$ and congestion parameter $\phi^{(0)}_{j} = \phi/\kappa_{\init,L}$.

The output is a $\rho^{(0)}$-dense certified-ED $(C^{(0)}_{j},L^{(0)}_{j}, {\cal N}^{(0)}_{j},{\cal R}^{(0)}_{j},\Pi_{{\cal R}^{(0)}_{j}\to G^{(0)}})$ of $A^{(0)}$ on $G^{(0)}$ satisfying that 
\begin{itemize}
\item The initial moving cut $C^{(0)}_{j}$ has size
\begin{align*}
|C^{(0)}_{j}|&\leq O(\kappa_{\init,C}\cdot \phi^{(0)}_{j}\cdot h^{(0)}_{j}\cdot \rho^{(0)}(A^{(0)}))\leq O(\phi\cdot h^{(0)}_{j}\cdot \rho^{(0)}(A^{(0)}))\leq O(h^{(0)}_{j}\cdot (\phi\cdot|E(G^{(0)})| + |T^{(0)}|)),
\end{align*}
where the second inequality is by $\kappa_{\init,L}\geq \kappa_{\init,C}$.
\item $L^{(0)}_{j}$ has distortion $\sigma^{(0)}_{j} = O(h_{j}^{(0)}/\kappa_{\sigma})$ and size 
\begin{align*}
|L^{(0)}_{j}| &\leq O(\kappa_{\init,L}\cdot \phi^{(0)}_{j}\cdot\rho^{(0)}(A^{(0)}))\leq O(\phi\cdot\rho^{(0)}(A^{(0)}))\leq O(\phi\cdot |E(G^{(0)})| + |T^{(0)}|).
\end{align*}

\item ${\cal N}^{(0)}_{j}$ is a $\rho^{(0)}$-dense $(h^{(0)}_{\cov,j},h^{(0)}_{\sep,j},\omega^{(0)}_{j})$-pairwise cover of $A^{(0)}$ on $G^{(0)}$ with $h^{(0)}_{\cov,j}=h^{(0)}_{\sep,j} = h_{j}^{(0)}$ and $\omega_{j}^{(0)} = \kappa_{\PC,\omega}$.
\item Routers in ${\cal R}_{j}$ are initialized by \Cref{thm:Router} and suffer $f^{(0)} = 0$ batched update.
\item $\Pi_{{\cal R}^{(0)}_{j}\to G^{(0)}}$ has length and congestion
\begin{align*}
h^{(0)}_{\emb,j} &= O(\lambda_{\init,h}\cdot h_{j}^{(0)})\\
\gamma^{(0)}_{\emb,j} &= O(\kappa_{\init,\gamma}/\phi^{(0)}_{j}) = O(\kappa_{\init,L}\cdot \kappa_{\init,\gamma}/\phi)
\end{align*}
\end{itemize}

\paragraph{Edge Deletion.} 

Suppose the update $i$ is a batched edge deletion $\pi^{(i)} = F^{(i)}\subseteq E(G^{(i-1)})$ and the graph becomes $G^{(i)} = G^{(i-1)}\setminus F^{(i)}$.

For each level $j$ and its certified-ED $(\bar{C}_{j}^{(i-1)}, \bar{L}_{j}^{(i-1)}, \bar{{\cal N}}_{j}^{(i-1)}, \bar{{\cal R}}_{j}^{(i-1)}, \Pi_{\bar{{\cal R}}_{j}^{(i-1)}\to G^{(i-1)}})$ of $\bar{A}^{(i-1)}$ on $G^{(i-1)}$, we apply \Cref{thm:EdgeDelWithDensity} on it with edge deletions $F_{i}$ and parameters $\mu = 2t$ and
\[
\phi^{(i)}_{j} = t\cdot (\bar{\omega}^{(i-1)}_{j}\cdot\kappa_{\CM,\gamma} + \kappa_{\init,\gamma})/\bar{\gamma}_{\emb,j}^{(i-1)}
\]
The output is a $\rho^{(i)}$-dense certified-ED $(C_{j}^{(i)},L_{j}^{(i)},{\cal N}_{j}^{(i)}, {\cal R}_{j}^{(i)},\Pi_{{\cal R}_{j}^{(i)}\to G^{(i)}})$ of $A^{(i)}$ satisfying the following. To avoid clutter, we define a parameter $\chi_{j}^{(i)}$ for each $0\leq i\leq t,1\leq j\leq \bar{j}$, which will only be used in this proof.
\[
\chi_{j}^{(i)} = \bar{\omega}^{(i-1)}_{j}\cdot(\kappa_{\CM,L} + \kappa_{\init,L})\cdot\mu\cdot \lambda_{\rt,\prune}(\bar{f}^{(i-1)})\cdot t\cdot(\bar{\omega}^{(i-1)}\cdot\kappa_{\CM,\gamma} + \kappa_{\init,\gamma}).
\]

\begin{itemize}

\item $A^{(i)} = \bar{A}^{(i-1)} - \deg_{F^{(i)}}$ with density $\rho^{(i)} = \lceil(1-1/\lambda)\cdot\bar{\rho}^{(i-1)}\rceil$, where the virtual nodes $\deg_{F^{(i)}}$ we remove are from $\deg_{G^{(i-1)}}\subseteq \bar{A}^{(i-1)}$ (each of them has density $1$). Strictly speaking, the output of \Cref{thm:EdgeDelWithDensity} is a dense certified-ED of $\bar{A}^{(i-1)}$ instead of $A^{(i)}$, but this can be easily fixed by removing virtual nodes $\deg_{F^{(i)}}$ from ${\cal N}^{(i)}_{j}$, which will not change the quality.

\item $C_{j}^{(i)} = (\bar{C}_{j}^{(i-1)})_{\mid E(G^{(i)})}+ C_{\new,j}^{(i)}$ with 
\begin{align*}
|C^{(i)}_{\new,j}|&\leq O(\bar{\omega}_{j}^{(i-1)} \cdot\kappa_{\init,C} \cdot \phi^{(i)}_{j}\cdot \mu \cdot \lambda_{\rt,\prune}(\bar{f}^{(i-1)}_{j})\cdot\bar{\gamma}^{(i-1)}_{\emb,j}\cdot|\pi^{(i)}|\cdot \bar{h}^{(i-1)}_{j})\leq O(\chi_{j}^{(i-1)}\cdot |\pi^{(i)}|\cdot\bar{h}_{j}^{(i-1)}),
\end{align*}
by the definition of $\phi^{(i)}_{j}$ and $\kappa_{\init,C}\leq \kappa_{\CM,L} + \kappa_{\init,L}$.

\item $L_{j}^{(i)} = \bar{L}_{j}^{(i-1)}\cup L_{\new,j}^{(i)}$ with distortion $\sigma_{j}^{(i)} = \bar{\sigma}_{j}^{(i-1)} + O(\bar{h}_{j}^{(i-1)}/\kappa_{\sigma})$ and 
\begin{align*}
|L^{(i)}_{\new,j}|&\leq O(\bar{\omega}^{(i-1)}_{j}\cdot(\kappa_{\CM,L} + \kappa_{\init,L})\cdot\phi^{(i)}_{j}\cdot\mu\cdot \lambda_{\rt,\prune}(\bar{f}^{(i-1)}_{j})\cdot\bar{\gamma}_{\emb,j}^{(i-1)}\cdot|\pi^{(i)}|)\leq O(\chi_{j}^{(i-1)}\cdot|\pi^{(i)}|).
\end{align*}
\item The pairwise cover ${\cal N}_{j}^{(i)}$ of $A^{(i)}$ on $G^{(i)}$ has
\[
h^{(i)}_{\cov,j} = h^{(i)}_{\sep,j} = \bar{h}_{j}^{(i-1)}/3,\text{ and }\omega^{(i)}_{j} = \bar{\omega}^{(i-1)}_{j} + \kappa_{\PC,\omega}.
\]
\item Routers in ${\cal R}_{j}^{(i)}$ are maintained by \Cref{thm:Router} under at most $f^{(i)}_{j} = \bar{f}^{(i-1)}_{j} + O(1)$ updates.
\item $\Pi_{{\cal R}^{(i)}_{j}\to G^{(i)}}$ has
\begin{align*}
h^{(i)}_{\emb,j} &= \max\{\bar{h}^{(i-1)}_{\emb,j},O(\lambda_{\init,h}\cdot \bar{h}^{(i-1)}_{j})\}\\
\gamma^{(i)}_{\emb,j} &= \bar{\gamma}^{(i-1)}_{\emb,j} + O((\bar{\omega}_{j}^{(i)}\cdot\kappa_{\CM,\gamma} + \kappa_{\init,\gamma})/\phi^{(i)}_{j}) = (1+O(1/t))\cdot\bar{\gamma}^{(i-1)}_{\emb,j}.
\end{align*}
\end{itemize}

\paragraph{Terminals Insertion.}

Suppose the update $\pi^{(i)}$ is a batched-terminal insertion $T^{(i)}_{\new}$. Let $A^{(i)}_{\new} = \mathds{1}(T^{(i)}_{\new})$ with $1/\phi$-uniform density $\rho^{(i)}_{\new}$.

For each level $j$, we apply \Cref{thm:NWInsertWithDensity} on the certified-ED $(\bar{C}^{(i-1)}_{j}, \bar{L}^{(i-1)}_{j}, \bar{\cal N}^{(i-1)}_{j}, \bar{\cal R}^{(i-1)}_{j}, \Pi_{\bar{\cal R}^{(i-1)}_{j}\to G^{(i-1)}})$ of $\bar{A}^{(i-1)}_{j}$ on $G^{(i-1)}$ with the $\rho^{(i)}_{\new}$-dense node weighting $A^{(i)}_{\new}$ and parameter
\[
\phi^{(i)}_{j} = \phi/(\bar{\omega}^{(i-1)}_{j}\cdot( \kappa_{\CM,L} + \kappa_{\init,L})).
\]
The output is a $\rho^{(i)}$-dense certified-ED $(C^{(i)}_{j}, L^{(i)}_{j}, {\cal N}^{(i)}_{j}, {\cal R}^{(i)}_{j}, \Pi_{{\cal R}^{(i)}_{j}\to G^{(i)}})$ of $A^{(i)}$ on $G^{(i)}$ satisfying the following.

\begin{itemize}

\item $A^{(i)} = \bar{A}^{(i-1)} + A^{(i)}_{\new}$ with density $\rho^{(i)} = \bar{\rho}^{(i-1)}\uplus \rho^{(i)}_{\new}$.

\item $C^{(i)}_{j} = \bar{C}^{(i-1)}_{j} + C^{(i)}_{\new, j}$ with
\begin{align*}
|C^{(i)}_{\new, j}| &\leq O(\kappa_{\init,C}\cdot\bar{\omega}^{(i-1)}_{j}\cdot \phi^{(i)}_{j}\cdot \bar{h}^{(i-1)}_{j}\cdot\rho_{\new}(A^{(i)}_{\new}))\leq O(|\pi^{(i)}|\cdot \bar{h}^{(i-1)}_{j}),
\end{align*}
where the second inequality is because $\rho_{\new}(A^{(i)}_{\new}) = |A^{(i)}_{\new}|/\phi$ and $\kappa_{\init,C}\leq \kappa_{\CM,L} + \kappa_{\init,L}$.

\item $L^{(i)}_{j} = \bar{L}^{(i-1)}_{j} \cup L^{(i)}_{\new, j}$ with distortion $\sigma^{(i)}_{j} = \bar{\sigma}^{(i-1)}_{j} + O(\bar{h}^{(i-1)}_{j}/\kappa_{\sigma})$.
\begin{align*}
|L^{(i)}_{\new, j}|&\leq O(\bar{\omega}^{(i-1)}_{j}\cdot(\kappa_{\CM,L} + \kappa_{\init,L})\cdot \phi_{j}^{(i)}\cdot\rho_{\new}(A_{\new}))\leq O(|\pi^{(i)}|).
\end{align*}

\item The pairwise cover ${\cal N}_{j}^{(i)}$ of $A^{(i)}$ on $G^{(i)}$ has
\[
h^{(i)}_{\cov,j} = h^{(i)}_{\sep,j} = \bar{h}_{j}^{(i-1)}/3,\text{ and }\omega^{(i)}_{j} = \bar{\omega}^{(i-1)}_{j} + \kappa_{\PC,\omega}.
\]
\item Routers in ${\cal R}_{j}^{(i)}$ are maintained by \Cref{thm:Router} under at most $f^{(i)}= \bar{f}^{(i-1)} + O(1)$ updates.

\item $\Pi_{{\cal R}^{(i)}_{j}\to G^{(i)}}$ has length and congestion
\begin{align*}
h^{(i)}_{\emb,j} &= \max\{\bar{h}^{(i-1)}_{\emb,j},O(\lambda_{\init,h}\cdot \bar{h}^{(i-1)}_{j})\}\\
\gamma^{(i)}_{\emb,j} &= \bar{\gamma}^{(i-1)}_{\emb,j} + O((\bar{\omega}^{(i-1)}_{j}\cdot\kappa_{\CM,\gamma} + \kappa_{\init,\gamma})/\phi^{(i)}_{j})\\
&\leq \bar{\gamma}^{(i-1)}_{\emb,j} + O(\bar{\omega}^{(i-1)}_{j}\cdot\kappa_{\CM,\gamma} + \kappa_{\init,\gamma})\cdot \bar{\omega}^{(i-1)}_{j}\cdot(\kappa_{\CM,L} + \kappa_{\init,L})/\phi.
\end{align*}
\end{itemize}

\paragraph{Edge Insertion.}

If the update $\pi^{(i)}$ is a batched edge insertion $E^{(i)}_{\new}$, let $G^{(i)} = G^{(i-1)}\cup E^{(i)}_{\new}$. Similarly, we apply \Cref{thm:EdgeInsertWithDensity} on $(\bar{C}^{(i-1)}_{j}, \bar{L}^{(i-1)}_{j}, \bar{\cal N}^{(i-1)}_{j}, \bar{\cal R}^{(i-1)}_{j}, \Pi_{\bar{\cal R}^{(i-1)}_{j}\to G^{(i-1)}})$ of $\bar{A}^{(i-1)}_{j}$ on $G^{(i-1)}$ with new edges $E^{(i)}_{\new}$ and parameter
\[
\phi^{(i)}_{j} = \phi/(\bar{\omega}^{(i-1)}_{j}\cdot\kappa_{\CM,L} + \kappa_{\init,L}).
\]
The output is a $\rho^{(i)}$-dense certified-ED $(C^{(i)}_{j}, L^{(i)}_{j}, {\cal N}^{(i)}_{j}, {\cal R}^{(i)}_{j}, \Pi_{{\cal R}^{(i)}_{j}\to G^{(i)}})$ of $A^{(i)}$ on $G^{(i)}$ satisfying that $A^{(i)} = \bar{A}^{(i-1)} + A_{\new}^{(i)}$ and $\rho^{(i)} = \bar{\rho}^{(i-1)}\uplus \rho^{(i)}_{\new}$ where $A_{\new}^{(i)} = \deg_{E^{(i)}_{\new}}$ and $\rho^{(i)}_{\new} = \mathds{1}(A_{\new}^{(i)})$. All quality parameters have the same asymtotic bounds as those in the terminal insertion case.

\paragraph{Insert New Landmarks to the Node-Weighting.}

In each iteration $i$ (including the initialization with $i=0$), after computing $(C^{(i)}_{j}, L^{(i)}_{j}, {\cal N}^{(i)}_{j}, {\cal R}^{(i)}_{j}, \Pi_{{\cal R}^{(i)}_{j}\to G^{(i)}})$, we still need \Cref{lemma:LandmarkClosure} to add the new landmarks into the node-weighting. Precisely, let the new landmarks for each $j$ be $L^{\star,(i)}_{j} = L_{\new,j}^{(i)}\setminus \bigcup_{j'}\bar{L}_{j'}^{(i-1)}$ (if $i=0$, $L^{\star,(0)}_{j} = L^{(0)}_{j}$). Then we apply \Cref{lemma:LandmarkClosure} on the collection of certified-EDs $\{(C^{(i)}_{j}, L^{(i)}_{j}, {\cal N}^{(i)}_{j}, {\cal R}^{(i)}_{j}, \Pi_{{\cal R}^{(i)}_{j}\to G^{(i)}})\mid 1\leq j\leq \bar{j}\}$ of $A^{(i)}$ on $G^{(i)}$ with parameter $\phi$ and new landmarks $L^{\star,(i)}_{j}\subseteq L^{(i)}_{j}$ specified for each $j$.

The output is a collection of $\bar{\rho}^{(i)}$-dense certified-EDs $\{(\bar{C}_{j}^{(i)}, \bar{L}_{j}^{(i)}, \bar{\cal N}_{j}^{(i)}, \bar{\cal R}_{j}^{(i)},\Pi_{\bar{\cal R}_{j}^{(i)}\to G^{(i)}})\mid 1\leq j\leq \bar{j}\}$ of $\bar{A}^{(i)}$ on $G$ satisfying the following. Let $h^{(i)}_{j} = h^{(i)}_{\cov,j}$ be the cover radius of ${\cal N}_{j}^{(i)}$ for each $j$. 
Let $\tau^{(i)} = \sum_{j}|L^{\star,(i)}_{j}|$.

\begin{itemize}
\item $\bar{C}_{j}^{(i)} = C_{j}^{(i)} + \bar{C}_{\new,j'}^{(i)}$ with $|\bar{C}_{\new,j}^{(i)}| = O(h_{j}^{(i)} \cdot \tau^{(i)})$ 
\item $\bar{L}_{j}^{(i)} = L_{j}^{(i)}\cup \bar{L}_{\new,j}^{(i)}$ with distortion $\bar{\sigma}_{j}^{(i)} = \sigma_{j}^{(i)} + O(h_{j}^{(i)}/\kappa_{\sigma})$ and $|\bar{L}_{\new,j}^{(i)}| = O(\tau^{(i)})$.
\item $\bar{A}^{(i)} = A^{(i)} + \bar{A}_{\new}^{(i)}$ and $\bar{\rho}^{(i)} = \rho^{(i)}\uplus \bar{\rho}^{(i)}_{\new}$, where $\bar{A}_{\new}^{(i)} = \mathds{1}(\bigcup_{j}L^{\star,(i)}_{j}) + \mathds{1}(\bigcup_{j}\bar{L}^{(i)}_{\new,j}\setminus \bigcup_{j} L^{(i)}_{j})$ and $\bar{\rho}^{(i)}_{\new} = (1/\phi)\cdot \mathds{1}(\bar{A}^{(i)}_{\new})$.

\item $\bar{\cal N}^{(i)}_{j}$ is a $\bar{\rho}^{(i)}$-dense $(\bar{h}_{\cov,j}^{(i)},\bar{h}_{\sep,j}^{(i)},\bar{\omega}_{j}^{(i)})$-pairwise cover of $\bar{A}^{(i)}$ on $G^{(i)}$ with
\[
\bar{h}_{\cov,j}^{(i)} = \bar{h}_{\sep,j}^{(i)} = h_{j}^{(i)}/3^{\lambda_{\insLM,t}},\text{ and }\bar{\omega}_{j}^{(i)} = \omega_{j}^{(i)} + O(\lambda_{\insLM,t}\cdot\kappa_{\PC,\omega}).
\]
\item Routers in $\bar{\cal R}_{j}^{(i)}$ is maintained by \Cref{thm:Router} under at most $\bar{f}^{(i)} = f^{(i)} + O(\lambda_{\insLM,t})$ update batches.
\item $\Pi_{\bar{\cal R}_{j}^{(i)}\to G^{(i)}}$ has length $\bar{h}_{\emb,j}^{(i)} = \max\{h_{\emb,j}^{(i)}, O(\lambda_{\init,h}\cdot h_{j}^{(i)})\}$ and congestion $\bar{\gamma}_{\emb,j}^{(i)} = \gamma_{\emb,j}^{(i)} + \Gamma_{j}^{(i)}/\phi$
where we define
\begin{align*}
\Gamma^{(i)}_{j} = O( \bar{\omega}^{(i)}_{j}\cdot(\kappa_{\CM,L} + \kappa_{\init,L})\cdot(\bar{\omega}^{(i)}_{j}\cdot\kappa_{\CM,\gamma} + \kappa_{\init,\gamma})\cdot\kappa_{\insLM,\gamma}\cdot\lambda_{\insLM,t})
\end{align*}
as a parameter only used in this proof.
\end{itemize}

\paragraph{Isolated Vertex Insertions/Deletions.} Similar to the proof of \Cref{thm:DynamicED}, it is easy to handle isolated vertex insertions and deletions.

Regarding isolated vertex insertions $V_{\new}$, for each $v\in V_{\new}$, we assign it node-weighting $\bar{A}^{(i)}(v)=1$ and the density of the unique virtual node $v_{\virtual}\in \bar{A}^{(i)}(v)$ is $\bar{\rho}^{(i)}(v_{\virtual}) = 1$. For all level $j$, we create a singleton cluster $S = \{v_{\virtual}\}$ and add $S$ into an arbitrary clustering of $\bar{\cal N}^{(i-1)}_{j}$ (also initialize the corresponding router $R$ of $S$, and $R$ can be embedded into $G^{(i)}$ trivially).

Regarding isolated vertex deletions $V_{\del}\subseteq V(G^{(i-1)})$, we know each $v\in V_{\del}$ has $\bar{A}^{(i-1)}(v)\leq 3$, because $\deg_{G^{(i-1)}}(v) = 0$. Therefore, we just need to remove up to $3$ virtual nodes $\bar{A}^{(i-1)}(v)$ from $\bar{\cal N}^{(i-1)}_{j}$.

Obviously these two types of batched updates will not affect the quality of the certified-EDs. Trivially, the total recourse is $\sum_{j}\recourse(\bar{\cal N}^{(i-1)}_{j}\to \bar{\cal N}^{(i)}_{j})\leq O(\bar{\omega}^{(i)}_{j}\cdot\bar{j}\cdot |\pi^{(i)}|)$, and the update time is $O(\bar{\omega}^{(i)}_{j}\cdot\bar{j}\cdot |\pi^{(i)}|) = |\pi^{(i)}|\cdot n^{O(\epsilon)}$. As we will see, the bounds on the recourse and update time are asymptotically smaller than those of other types of batched updates. Therefore, to avoid clutter, we will ignore these two types in the following proof. 

\paragraph{Quality of Certified-EDs.} At the end of the initialization step $i=0$ and each update step $1\leq i\leq t$, for each level $1\leq j\leq \bar{j}$, the certified-ED $(\bar{C}_{j}^{(i)}, \bar{L}_{j}^{(i)}, \bar{\cal N}_{j}^{(i)}, \bar{\cal R}^{(i)}_{j},\Pi_{\bar{\cal R}_{j}^{(i)}\to G^{(i)}})$ of $\bar{A}^{(i)}$ on $G^{(i)}$ satisfies the following.

\underline{The moving cut $\bar{C}^{(i)}_{j}$.} Initially, $\bar{C}_{j}^{(0)}$ has size 
\begin{align*}
|\bar{C}^{(0)}_{j}| &= |C^{(0)}_{j}| + |\bar{C}^{(0)}_{\new,j}|\leq |C^{(0)}_{j}| + O(h_{j}^{(0)}\cdot\sum_{1\leq j'\leq \bar{j}}|L^{(0)}_{j'}|)\leq \kappa_{j}\cdot 2^{O(t\cdot\lambda_{\insLM,t})}\cdot h_{j}\cdot (\phi\cdot |E(G^{(0)})| + |T^{(0)}|).
\end{align*}

For each time $i\geq 1$, $\bar{C}^{(i)}_{j} = \bar{C}^{(i-1)}_{j} + C^{(i)}_{\new,j} + \bar{C}^{(i)}_{\new, j}$. Before deriving the size of the new moving cut, we first give a \emph{global} upper bound on $\chi_{j'}^{(i)}$ over all $0\leq i\leq t$ and $1\leq j'\leq \bar{j}$ (recall that $\chi_{j'}^{(i)}$ is defined in the paragraph \textbf{Edge Deletion}), which is
\[
\chi = O(\lambda_{\rt,\prune}(t\cdot\lambda_{\insLM,t})\cdot t^{4}\cdot\lambda_{\insLM,t}^{2}\cdot\kappa_{\PC,\omega}^{2}\cdot(\kappa_{\CM,L} +\kappa_{\init,L})\cdot(\kappa_{\CM,\gamma} + \kappa_{\init,\gamma})),
\]
and then upper bound $\tau^{(i)}$ for all $1\leq i\leq t$. Concretely, 
\[
\tau^{(i)} = \sum_{1\leq j'\leq\bar{j}}|L^{\star,(i)}_{j'}| \leq \sum_{1\leq j'\leq \bar{j}}|L^{(i)}_{\new,j'}|\leq \sum_{1\leq j'\leq\bar{j}}O(\chi_{j}^{(i-1)}\cdot|\pi^{(i)}|)\leq O(\kappa_{j}\cdot\chi\cdot |\pi^{(i)}|),
\]
because the bottleneck is the batched edge deletion update. Providing the bound on $\tau^{(i)}$, the size of new moving cut is
\begin{align*}
|C^{(i)}_{\new,j}| + |\bar{C}^{(i)}_{\new, j}|&\leq |C^{(i)}_{\new,j}| + O(h^{(i)}_{j}\cdot \tau^{(i)})\leq O(\chi_{j}^{(i-1)}\cdot |\pi^{(i)}|\cdot\bar{h}_{j}^{(i-1)}) + O(\kappa_{j}\cdot\chi\cdot|\pi^{(i)}|\cdot h^{(i)}_{j})\\
&\leq \kappa_{j}\cdot\chi\cdot|\pi^{(i)}|\cdot h_{j}\cdot 2^{O(t\cdot\lambda_{\insLM,t})},
\end{align*}
because $\bar{h}_{j}^{(i)}$ and $h_{j}^{(i)}$ have an upper bound $h_{j}^{(0)} = h_{j}\cdot 2^{O(t\cdot\lambda_{\insLM,t})}$ over all $i$.

\

\underline{The Landmark Set $\bar{L}^{(i)}_{j}$.} Similar to the calculation of the moving cut, the inital landmark set has size $|\bar{L}_{j}^{(0)}| = O(\kappa_{j}\cdot(\phi\cdot|E(G^{(0)})| + |T^{(0)}|))$. For each update step $i\geq 1$, $\bar{L}^{(i)}_{j} = \bar{L}^{(i-1)}_{j} \cup L^{(i)}_{\new, j} \cup \bar{L}^{(i)}_{\new, j}$ and the number of new landmarks is at most
\begin{align*}
&~~~~|L^{(i)}_{\new, j}| + |\bar{L}^{(i)}_{\new, j}| \leq |L^{(i)}_{\new,j}| + O(\tau^{(i)})\leq O(\kappa_{j}\cdot\chi\cdot |\pi^{(i)}|).
\end{align*}
The distortion $\bar{\sigma}^{(i)}_{j} = \sum_{0\leq i'\leq i}O(h^{(i)}_{j} + \bar{h}^{(i)}_{j})/\kappa_{\sigma}$ is always bounded by $2^{O(t\cdot\lambda_{\insLM,t})}\cdot h_{j}/\kappa_{\sigma}\leq h_{j}/\log^{2} n$ because $t = O(1/\epsilon), \lambda_{\insLM,t} = O(1/\epsilon)$ and $\kappa_{\sigma} = n^{\epsilon^{4}}$. 

\

\underline{The Node-Weighting $\bar{A}_{i}$ and Density $\bar{\rho}_{i}$.} Initially 
\begin{align*}
\bar{A}^{(0)} &= A^{(0)} + \bar{A}_{\new}^{(0)}\\
&= \deg_{G^{(0)}} + \mathds{1}(T^{(0)}) + \mathds{1}(\bigcup_{j}L_{j}^{(0)}) + \mathds{1}(\bigcup_{j}\bar{L}^{(0)}_{\new,j}\setminus \bigcup_{j}L^{(0)}_{j}) \\
&= \deg_{G^{(0)}} + \mathds{1}(T^{(0)}) + \mathds{1}(\bigcup_{j}\bar{L}_{j}^{(0)})
\end{align*}
For each time $i\geq 1$, by a simple induction, we have
\begin{align*}
\bar{A}^{(i)} &= \bar{A}^{(i-1)} + A^{(i)}_{\new} + \bar{A}^{(i)}_{\new} \\
&= \bar{A}^{(i-1)} + \mathds{1}(T^{(i)}_{\new}) - \deg_{F^{(i)}} + \deg_{E_{\new}^{(i)}} + \mathds{1}(\bigcup_{j}L^{(i)}_{\new,j}\setminus \bigcup_{j}\bar{L}^{(i-1)}_{j}) + \mathds{1}(\bigcup_{j}\bar{L}_{\new,j}^{(i)}\setminus \bigcup_{j}L^{(i)}_{j})\\
&= \deg_{G^{(i)}} + \mathds{1}(T^{(i)}) + \mathds{1}(\bigcup_{j}\bar{L}^{(i)}_{j})
\end{align*}

The density function is
\begin{equation*}
\bar{\rho}^{(i)} \geq \left\{
\begin{aligned}
&1,\ v\in \deg_{G^{(i)}}\\
&\frac{(1-1/\mu)^{i}}{\phi}\geq \Omega(1/\phi),\ v\in \mathds{1}(T^{(i)}) + \mathds{1}(\bigcup_{j}\bar{L}_{j}^{(i)})
\end{aligned}
\right.,
\end{equation*}
because $(1-1/\mu)^{i}\geq \Omega(1)$ (recall that we set $\mu = 2t$).

\

\underline{The Pairwise Cover $\bar{\cal N}^{(i)}_{j}$.} $\bar{\cal N}_{j}^{(i)}$ has quality parameters
\[
\bar{h}_{\cov,j}^{(i)} = \bar{h}_{\sep,j}^{(i)} =  h_{j}^{(0)}/2^{O(i\cdot\lambda_{\insLM,t})}\geq 2^{j}=h_{j},\ \bar{\omega}_{j}^{(i)} = O(t\cdot\lambda_{\insLM,t}\cdot\kappa_{\PC,\omega}),
\]
because we choose $h^{(0)}_{j} = 2^{O(t\cdot\lambda_{\insLM,t})}$ with sufficiently large constant hided in the exponent.

\

\underline{The Routers.} Routers in $\bar{\cal R}_{j}^{(i)}$ are maintained by \Cref{thm:Router} undergoing at most $\bar{f}^{(i)} = O(t\cdot\lambda_{\insLM,t})$ batched updates.

\

\underline{The Embedding.} $\Pi_{\bar{\cal R}_{j}^{(i)}\to G^{(i)}}$ has length, by induction,
\begin{align*}
\bar{h}^{(i)}_{\emb,j} &= \max\{\bar{h}^{(i-1)}_{\emb,j},O(\lambda_{\init,h}\cdot \bar{h}^{(i-1)}_{j}),O(\lambda_{\init,h}\cdot h^{(i)}_{j})\} \leq  \lambda_{\init,h}\cdot 2^{O(t\cdot\lambda_{\insLM,t})}\cdot h_{j}.
\end{align*}

The congestion $\bar{\gamma}_{\emb,j}^{(i)}$ can be bounded as follows. For the initialization and each $\pi^{(i)}$ representing terminal insertions or edge insertions, from $\bar{\gamma}^{(i-1)}_{\emb,j}$ to $\bar{\gamma}^{(i)}_{\emb,j}$, the congestion  increases additively and the increasing is dominated by the landmark insertion step. Formally, we have $\bar{\gamma}_{\emb,j}^{(0)} = O(\Gamma^{(0)}_{j}/\phi)$ and $\bar{\gamma}_{\emb,j}^{(i)} = \bar{\gamma}^{(i-1)}_{\emb,j} + O(\Gamma_{j}^{(i)}/\phi)$ if $\pi^{(i)}$ represents terminal insertion or edge insertion. We further define 
\[
\Gamma = O(t^{2}\cdot\lambda_{\insLM,t}^{3}\cdot\kappa_{\PC,\omega}^{2}\cdot (\kappa_{\CM,L} + \kappa_{\init,L})\cdot (\kappa_{\CM,\gamma} + \kappa_{\init,\gamma})\cdot\kappa_{\insLM,\gamma})
\]
to be a global upper bound of $\Gamma^{(i)}_{j}$ over all $0\leq i\leq t$ and $1\leq j\leq \bar{j}$. Then we can rewrite $\bar{\gamma}_{\emb,j}^{(0)} = O(\Gamma/\phi)$ and $\bar{\gamma}_{\emb,j}^{(i)} = \bar{\gamma}^{(i-1)}_{\emb,j} + O(\Gamma/\phi)$.

When $\pi^{(i)}$ represents a batched edge deletion, the congestion increases multiplicatively by a factor $(1+O(1/t))$ and then increases additively in the following landmark insertion step. Namely, $\bar{\gamma}^{(i)}_{\emb,j} = (1+O(1/t))\cdot \bar{\gamma}^{(i-1)}_{\emb,j} + O(\Gamma/\phi)$. Therefore,
\begin{align*}
\bar{\gamma}^{(i)}_{\emb,j} &= O(t\cdot\Gamma\cdot (1+O(1/t))^{t}/\phi) = O(t\cdot \Gamma/\phi).
\end{align*}

\paragraph{Construction of the sparsifier $H^{(i)}$.} For each $i\geq 0$, let $\bar{T}^{(i)} = T^{(i)}\cup \bigcup_{j}\bar{L}^{(i)}_{j}$. For each $1\leq j\leq \bar{j}$, the restriction of $\bar{\cal N}_{j}^{(i)}$ on $\mathds{1}(\bar{T}^{(i)})\subseteq \bar{A}^{(i)}$, denoted by $\bar{\cal N}^{(i)}_{\tmn,j}$, is exactly a pairwise cover of terminal set $\bar{T}^{(i)}$ on $G^{(i)}-\bar{C}^{(i)}_{j}$ with the same quality parameters
\[
\bar{h}^{(i)}_{\cov,j} = h_{j} = 2^{j},\ \bar{\omega}^{(i)}_{j} = O(t\cdot\lambda_{\insLM,t}\cdot\kappa_{\PC,\omega}),
\]
and by \Cref{lemma:CertifiedEDToDist}, its diameter on $G^{(i)}$ is 
\[
\bar{h}^{(i)}_{\Gdiam,j} = \lambda_{\rt,h}(\bar{f}^{(i)})\cdot\bar{h}^{(i)}_{\emb,j} \leq \lambda_{\rt,h}(t\cdot\lambda_{\insLM,t})\cdot \lambda_{\init,h} \cdot 2^{O(t\cdot\lambda_{\insLM,t})}\cdot h_{j}.
\]

Note that for the dynamic graph $G$ and the dynamic terminal set $T$, $\{\bar{C}_{j}^{(i)},\bar{L}_{j}^{(i)},\bar{\cal N}_{\tmn,j}^{(i)}\mid 1\leq j\leq \bar{j}\}$ is a valid input to \Cref{lemma:DynamicCoverToSparsifier} with $\alpha_{\low} = \lambda_{\rt,h}(t\cdot\lambda_{\insLM,t})\cdot\lambda_{\init,h}\cdot 2^{O(t\cdot\lambda_{\insLM,t})}
$. Then \Cref{lemma:DynamicCoverToSparsifier} will initialize and maintain a $(\alpha_{\low},\alpha_{\up},h)$-sparsifier $H$ of some $\bar{T}\supseteq T$ with $\alpha_{\up} = O(1)$ and the $\alpha_{\low}$ above. Furthermore, \Cref{lemma:DynamicCoverToSparsifier} guarantees that vertices in $V(H)\setminus \bar{T}$ are fresh.

\paragraph{The Size of $H^{(i)}$.} To bound the size of $H^{(i)}$, we will exploit the condition that $|\pi^{(i')}|\leq \phi\cdot |G^{(i')}|$ for each $1\leq i'\leq t$. We first give upper bounds of $|\bar{C}^{(i)}_{j}|$ and $|\bar{L}^{(i)}_{j}|$ for each $1\leq j\leq \bar{j}$. For the moving cut,
\begin{align*}
|\bar{C}^{(i)}_{j}| &\leq |\bar{C}^{(0)}_{j}| + \sum_{1\leq i'\leq i}(|C^{(i')}_{\new,j}| + |\bar{C}^{(i')}_{\new,j}|)\\
&\leq \kappa_{j}\cdot h_{j}\cdot 2^{O(t\cdot\lambda_{\insLM,t})}\cdot(\phi\cdot E(G^{(0)}) + |T^{(0)}|) + \kappa_{j}\cdot\chi\cdot h_{j}\cdot 2^{O(t\cdot\lambda_{\insLM,t})}\cdot\sum_{1\leq i'\leq i}|\pi^{(i')}|\\
&\leq \kappa_{j}\cdot\chi\cdot h_{j}\cdot 2^{O(t\cdot\lambda_{\insLM,t})}\cdot(|T^{(0)}| + \phi\cdot\sum_{0\leq i'\leq i}|E(G^{(i')})|)\\
&\leq \kappa_{j}\cdot\chi\cdot t\cdot h_{j}\cdot 2^{O(t\cdot\lambda_{\insLM,t})}\cdot(|T^{(i)}| + \phi\cdot E(G^{(i)})).
\end{align*}
Here, the third inequality is by $|\pi^{(i')}|\leq \phi\cdot |G^{(i')}|$ for all $1\leq i'\leq t$. The last inequality is because $|G^{(i')}|$ is almost unchanged during the time period $0\leq i'\leq i$. Precisely, for each $0\leq i'\leq i-1$, we have $|G^{(i')}| \leq |G^{(i'+1)}| + |\pi^{(i'+1)}| \leq (1+\phi)\cdot |G^{(i'+1)}|$, which implies $|G^{(i')}|\leq (1+\phi)^{i-i'}\cdot|G^{(i)}|\leq O(|G^{(i)}|)$ because we require that $\phi \leq 1/t$ in the statement of \Cref{thm:NonHopReducingEmulator}.

By the same argument, the number of landmarks in $\bar{L}^{(i)}_{j}$ is given by
\begin{align*}
|\bar{L}^{(i)}_{j}| &\leq |\bar{L}^{(0)}_{j}| + \sum_{1\leq i'\leq i} |L^{(i)}_{\new,j}| + |\bar{L}^{(i)}_{\new,j}|\\
&\leq O(\kappa_{j}\cdot(\phi\cdot |E(G^{(0)})| +|T^{(0)}|)) + O(\kappa_{j}\cdot\chi\cdot \sum_{1\leq i'\leq i}|\pi^{(i')}|)\\
&\leq O(\kappa_{j}\cdot\chi\cdot t\cdot (|T^{(i)}| + \phi\cdot |E(G^{(i)})|)).
\end{align*}

Now we are ready to bound $|V(H^{(i)})|$ and $|H^{(i)}|$. By \Cref{lemma:DynamicCoverToSparsifier}, we have
\[
|V(H^{(i)})| = O(\sum_{j}\bar{\omega}^{(i)}_{j}\cdot|\bar{T}^{(i)}|) \leq O(\kappa_{j}\cdot t\cdot\lambda_{\insLM,t}\cdot\kappa_{\NC,\omega}\cdot |V(G^{(i)})|)\leq n^{O(\epsilon^{4})}\cdot|V(G^{(i)})|
\]
because $|\bar{T}^{(i)}|$ has a trivial upper bound $|V(G^{(i)})|$ (since $\bar{T}^{(i)}\subseteq V(G^{(i)})$).
Regarding the size of $H^{(i)}$, we have
\begin{align*}
|H^{(i)}| &= O(\sum_{j}\bar{\omega}^{(i)}_{j}\cdot |\bar{T}^{(i)}| + t\cdot \sum_{j}|\bar{C}^{(i)}_{j}|/h_{j})\\
&\leq O(\kappa_{j}\cdot t\cdot\lambda_{\insLM,t}\cdot \kappa_{\PC,\omega}\cdot(|T^{(i)}| + \sum_{j}|\bar{L}^{(i)}_{j}|) + t\cdot \sum_{j}|\bar{C}^{(i)}_{j}|/h_{j})\\
&\leq \kappa_{j}^{3}\cdot\chi\cdot \kappa_{\PC,\omega}\cdot 2^{O(t\cdot\lambda_{\insLM,t})}\cdot(|T^{(i)}| + \phi\cdot |E(G^{(i)})|).
\end{align*}
Recall that $\chi =  O(\lambda_{\rt,\prune}(t\cdot\lambda_{\insLM,t})\cdot t^{4}\cdot\lambda_{\insLM,t}^{2}\cdot\kappa_{\PC,\omega}^{2}\cdot(\kappa_{\CM,L} +\kappa_{\init,L})\cdot(\kappa_{\CM,\gamma} + \kappa_{\init,\gamma}))$.

\paragraph{The Recourse from $H^{(i-1)}$ to $H^{(i)}$}. We first bound the recourse from $\bar{\cal N}_{j}^{(i-1)}$ to $\bar{\cal N}_{j}^{(i)}$ restricted on $\bar{T}^{(i)}$. Trivially, we have
\[
\recourse_{\mid \mathds{1}(\bar{T}^{(i)})}(\bar{\cal N}_{j}^{(i-1)}\to\bar{\cal N}_{j}^{(i)}) \leq \recourse_{\mid \mathds{1}(\bar{T}^{(i)})}(\bar{\cal N}^{(i-1)}_{j}\to{\cal N}^{(i)}_{j}) + \recourse_{\mid\mathds{1}(\bar{T}^{(i)})}({\cal N}^{(i)}_{j}\to\bar{\cal N}^{(i)}_{j}),
\]
and then we bound the two terms on the right hand side respectively. Recall that $\bar{T}^{(i)} = T^{(i)} \cup \bigcup_{j}\bar{L}^{(i)}_{j}$. The key point is that virtual nodes corresponding to terminals and landmarks have density $\Omega(1/\phi)$ at all time, so roughly speaking, the recourse of pairwise covers will drop by a factor $\Omega(1/\phi)$ when restricting on $\mathds{1}(\bar{T}^{(i)})$. Precisely, we have
\[
\min_{v\in \mathds{1}(\bar{T}^{(i)})\cap A^{(i)}} \rho^{(i)}(v)\geq \Omega(1/\phi)\text{ and }\min_{v\in \mathds{1}(\bar{T}^{(i)})\cap \bar{A}^{(i)}}\bar{\rho}^{(i)}(v)\geq \Omega(1/\phi).
\]
Therefore, by \Cref{thm:EdgeDelWithDensity} (since the batched edge deletion is the bottleneck),
\begin{align*}
\recourse_{\mid \mathds{1}(\bar{T}^{(i)})}(\bar{\cal N}^{(i-1)}_{j}\to{\cal N}^{(i)}_{j}) &\leq O(\bar{\omega}^{(i-1)}_{j}\cdot\kappa_{\PC,\omega}\cdot \mu\cdot \lambda_{\rt,\prune}(\bar{f}^{(i-1)}_{j})\cdot \bar{\gamma}_{\emb,j}^{(i-1)}\cdot |\pi^{(i)}|/\Omega(1/\phi))\\
&\leq O(t^{3}\cdot \lambda_{\insLM,t}\cdot \kappa^{2}_{\PC,\omega}\cdot \lambda_{\rt,\prune}(t\cdot\lambda_{\insLM,t})\cdot \Gamma\cdot |\pi^{(i)}|).
\end{align*}
where we use $\bar{\omega}^{(i-1)}_{j} = O(t\cdot\lambda_{\insLM,t}\cdot\kappa_{\NC,\omega})$, $\bar{f}^{(i-1)}_{j} = O(t\cdot \lambda_{\insLM,t})$ and $\bar{\gamma}^{(i)}_{\emb,j}\leq O(t\cdot \Gamma/\phi)$.
Also, by \Cref{lemma:LandmarkClosure}, 
\begin{align*}
\recourse_{\mid\mathds{1}(\bar{T}^{(i)})}({\cal N}^{(i)}_{j}\to\bar{\cal N}^{(i)}_{j})&\leq O(\tau^{(i)}\cdot\bar{\omega}^{(i)}_{j}\cdot\kappa_{\PC,\omega}/(\phi\cdot \Omega(1/\phi)))\\
&\leq O(\kappa_{j}\cdot \chi\cdot t\cdot \lambda_{\insLM,t}\cdot\kappa_{\PC,\omega}^{2}\cdot |\pi^{(i)}|).
\end{align*}
where the second inequality uses $\tau^{(i)}\leq O(\kappa_{j}\cdot\chi\cdot|\pi^{(i)}|)$.

Now we are ready to show the recourse from $H^{(i-1)}$ to $H^{(i)}$. By \Cref{lemma:DynamicCoverToSparsifier}, 
\begin{align*}
&~~~~\recourse(H^{(i-1)}\to H^{(i)})\\
&\leq O(|\pi^{(i)}| + \sum_{j}(t\cdot (|C^{(i)}_{\new,j}| + |\bar{C}^{(i)}_{\new,j}|)/h_{j} + \recourse(\bar{\cal N}_{\tmn,j}^{(i-1)}\to\bar{\cal N}_{\tmn,j}^{(i)})))\\
&\leq O(|\pi^{(i)}| + \sum_{j}(t\cdot (|C^{(i)}_{\new,j}| + |\bar{C}^{(i)}_{\new,j}|)/h_{j} + \bar{\omega}^{(i-1)}_{j}\cdot\recourse(\bar{T}^{(i-1)}\to\bar{T}^{(i)}) +\recourse_{\mid \mathds{1}(\bar{T}^{(i)})}(\bar{\cal N}_{j}^{(i-1)}\to\bar{\cal N}_{j}^{(i)})))\\
&\leq O(|\pi^{(i)}| + \sum_{j}(\kappa_{j}\cdot\chi\cdot 2^{O(t\cdot\lambda_{\insLM,t})}\cdot|\pi^{(i)}|
+t\cdot\lambda_{\insLM,t}\cdot\kappa_{\PC,\omega}\cdot \kappa^{2}_{j}\cdot\chi\cdot|\pi^{(i)}|\\
&~~~~~~~~~~~~~~~~~~~ 
+t^{3}\cdot \lambda_{\insLM,t}\cdot \kappa^{2}_{\PC,\omega}\cdot \lambda_{\rt,\prune}(t\cdot\lambda_{\insLM,t})\cdot \Gamma\cdot |\pi^{(i)}|\\
&~~~~~~~~~~~~~~~~~~~ 
+ \kappa_{j}\cdot \chi\cdot t\cdot \lambda_{\insLM,t}\cdot\kappa_{\PC,\omega}^{2}\cdot |\pi^{(i)}|))\\
&\leq O(\kappa_{j}^{3}\cdot 2^{O(t\cdot\lambda_{\insLM,t})}\cdot \lambda_{\rt,\prune}(t\cdot\lambda_{\insLM,t}) \cdot \kappa_{\PC,\omega}^{4}\cdot (\kappa_{\CM,L} +\kappa_{\init,L})\cdot(\kappa_{\CM,\gamma} + \kappa_{\init,\gamma})\cdot\kappa_{\insLM,\gamma}\cdot|\pi^{(i)}|)
\end{align*}
Here the second inequality is because, to update $(\bar{\cal N}^{(i-1)}_{j})_{\mid \bar{T}^{(i-1)}}$ to $(\bar{\cal N}^{(i)}_{j})_{\mid \bar{T}^{(i)}}$, we will first update $(\bar{\cal N}^{(i-1)}_{j})_{\mid \bar{T}^{(i-1)}}$ to $(\bar{\cal N}^{(i-1)}_{j})_{\mid \bar{T}^{(i)}}$ by adding vertices in $\bar{T}^{(i)}\setminus \bar{T}^{(i-1)}$, each of which takes $\bar{\omega}_{j}^{(i-1)}$ virtual node insertions to the pairwise cover. The third inequality is because $\recourse(\bar{T}^{(i-1)}\to \bar{T}^{(i)})\leq |\pi^{(i)}| + \sum_{j}(|L^{(i)}_{\new,j}| + \bar{L}^{(i)}_{\new,j})\leq O(\kappa^{2}_{j}\cdot\chi\cdot|\pi^{(i)}|)$ and $\bar{\omega}_{j}^{(i-1)} = O(t\cdot\lambda_{\insLM,t}\cdot\kappa_{\NC,\omega})$. We obtain the last inequality by plugging in the bounds of $\chi$ and $\Gamma$.

\paragraph{The Running Time.} 

\underline{The Initialization Step.} At the initialization step with $i=0$, summing over the initialization of dense certified-EDs (\Cref{thm:InitDenseCertifiedED}), the landmark insertion (\Cref{lemma:LandmarkClosure}) and the initialization step of \Cref{lemma:DynamicCoverToSparsifier}, the initialization time is 
\begin{align*}
&~~~~\sum_{j}(|G^{(0)}| + \rho^{(0)}(A^{(0)}))\cdot \poly(h_{j}^{(0)})\cdot n^{O(\epsilon)}\\
&+\sum_{j}(\tau^{(0)}/\phi)\cdot (\bar{\omega}^{(0)}_{j})^{2}\cdot \poly(h_{j}^{(0)})\cdot (1/\phi + 2^{O(\bar{f}^{(0)}_{j})})\\
&+\tilde{O}(\sum_{j}\bar{\omega}^{(0)}_{j}\cdot |\bar{T}^{(0)}| + |\bar{C}_{j}^{(0)}|)\\
&= (|G^{(0)}| + |T^{(0)}|/\phi)\cdot\poly(h)\cdot n^{O(\epsilon)}/\phi
\end{align*}

\underline{Update Steps.} For a update step $i\geq 1$, the time to obtain all $(C^{(i)}_{j}, L^{(i)}_{j}, {\cal N}^{(i)}_{j}, {\cal R}^{(i)}_{j}, \Pi_{{\cal R}^{(i)}_{j}\to G^{(i)}})$ is at most (the bottleneck is when $\pi^{(i)}$ represents batched edge deletions)
\begin{align*}
&~~~~\sum_{j}(\bar{\omega}^{(i-1)}_{j}\cdot \lambda\cdot\lambda_{\rt,\prune}(\bar{f}^{(i-1)}_{j})\cdot\bar{\gamma}_{\emb,j}^{(i-1)}\cdot |\pi^{(i)}|)\cdot \poly(\bar{h}^{(i-1)}_{j})\cdot n^{O(\epsilon)}\cdot (1/\phi^{(i)}_{j} + 2^{O(\bar{f}^{(i-1)}_{j})}\cdot\bar{\omega}^{(i-1)}_{j}).
\end{align*}
The time to compute $(\bar{C}^{(i)}_{j}, \bar{L}^{(i)}_{j}, {\cal N}^{(i)}_{j}, {\cal R}^{(i)}_{j}, \Pi_{{\cal R}^{(i)}_{j}\to G^{(i)}})$ is, by \Cref{lemma:LandmarkClosure},
\begin{align*}
&~~~~\sum_{j} (\tau^{(i)}/\phi)\cdot(\bar{\omega}_{j}^{(i-1)})^{2}\cdot \poly(h_{j})\cdot n^{O(\epsilon)}\cdot (1/\phi + 2^{O(\bar{f}_{j}^{(i-1)})}).
\end{align*}
The time to update $H^{(i)}$ to $H^{(i+1)}$ is, by \Cref{lemma:DynamicCoverToSparsifier},
\begin{align*}
&~~~~\tilde{O}(\sum_{j}(|C^{(i)}_{\new,j}| + |\bar{C}^{(i)}_{\new,j}|) + \recourse(\bar{\cal N}_{\tmn,j}^{(i-1)}\to\bar{\cal N}_{\tmn,j}^{(i)}))
\end{align*}
Summing over these three steps, the total update time is $|\pi^{(i)}|\cdot\poly(h)\cdot n^{O(\epsilon)}/\phi^{2}$.

\

\noindent\textbf{Path Unfolding.} At any time $i$, given a path $P_{H}$ in $H$ connecting some $u,v\in \bar{T}$, we construct a $u$-$v$ path $P_{G}$ in $G$ as follows. 

From the proof of \Cref{lemma:DynamicCoverToSparsifier}, we know $H = H^{\star,(i)}\cup E_{\heavy}$. Here $H^{\star}$ is the union of stars $H^{\star}_{S_{\tmn}}$ of those clusters $S_{\tmn}\in \bigcup_{j} \bar{\cal N}_{\tmn,j}$, and $E_{\heavy}\subseteq E(G)$ are some original edges. Therefore, we can decompose $P_{H}$ into original edges and \emph{star subpaths}, where a star subpath has exactly two edges from the same star. 

Consider a star subpath $P'_{H}$. Let $H^{\star}_{S_{\tmn}}$ (where $S_{\tmn}\in \bar{\cal N}_{\tmn,j}$ for some $j$) be the star containing $P'_{H}$, and let $u',v'\in S_{\tmn}$ denote the endpoints of $P'_{H}$. For now we assume $u',v'$ are two different vertices, and we will discuss the case $u'=v'$ later. We have $\ell_{H}(P'_{H}) = 2h_{j}$ because the star of a level-$j$ cluster has edge length $h_{j}$ (see the definition of stars in the proof of \Cref{thm:CoversToEmulators}). We now construct a $u'$-$v'$ path $P'_{G}$ on $G$. Note that there is a cluster $S\in \bar{\cal N}_{j}$ s.t. $S_{\tmn}\subseteq S$ (recall that $\bar{\cal N}_{\tmn,j}$ is the restriction of $\bar{\cal N}_{j}$ on $\mathds{1}(\bar{T})$). Let $R\in \bar{\cal R}_{j}$ be the router corresponding to $S$, and let $u'_{\virtual}$ and $v'_{\virtual}\in S\subseteq V(R)$ be arbitrary virtual nodes of $u'$ and $v'$ respectively. By \Cref{thm:RouterPathReport}, we can obtain a $u'_{\virtual}$-$v'_{\virtual}$ path $P'_{R}$ on $R$ with $|P'_{R}| \leq \lambda_{\rt,h}(\bar{f})$ in $O(|P'_{R}|)$ time. In the embedding $\Pi_{\bar{R}_{j}\to G}$, each edge $e=(x_{\virtual},y_{\virtual})\in P'_{R}$ (where $x_{\virtual},y_{\virtual}\in S$ are endpoints of $e$) is mapped into a path $P'_{G,e}$ on $G$ connecting vertices $x_{\vertex}$ and $y_{\vertex}$ (where $x_{\vertex},y_{\vertex}$ are the original vertex corresponding to virtual nodes $x_{\virtual},y_{\virtual}$) with length $\ell_{G}(P'_{G,e})\leq \bar{h}_{\emb,j}$. We construct $P'_{G}$ by concatenating of these $P'_{G,e}$ according to the order of edges $e$ on $P'_{R}$. Obviously, $P'_{G}$ is a $u'$-$v'$ path on $G$, and we have 
\[
\ell_{G}(P'_{G})\leq |P'_{R}|\cdot \bar{h}_{\emb,j}\leq \lambda_{\rt,h}(t\cdot \lambda_{\insLM,t})\cdot\lambda_{\init,h}\cdot 2^{O(t\cdot \lambda_{\insLM},t)}\cdot h_{j}.
\]
Because $u'$ and $v'$ are two different vertices, trivially we know $P'_{G}$ has at least one edge, i.e. $|P'_{G}|\geq 1$.

In the case that the endpoints $u',v'\in S_{\tmn}$ of $P'_{H}$ are the same (this may happen because $P'_{H}$ may not be a simple path). The reason why this case is special is that the above construction of $P'_{G}$ may give a $P'_{G}$ with no edge. This can be simply fixed as follows. Because the star $H^{\star}_{S_{\tmn}}$ has at least one edge, we have $|S_{\tmn}|\geq 2$ (otherwise $H^{\star}_{S_{\tmn}}$ will degenerate). we let $w'$ be an arbitrary vertex in $S_{\tmn}$ other than $u'$. We construct $P''_{G}$ for the vertex pair $(u',w')$ as above, and then let the real $P'_{G}$ be the concatenation of the forward $P''_{G}$ (from $u'$ to $w'$) and the backward $P'_{G}$ (from $w'$ to $u'$). Now $\ell_{G}(P'_{G})\leq |P'_{R}|\cdot \bar{h}_{\emb,j}\leq 2\cdot\lambda_{\rt,h}(t\cdot \lambda_{\insLM,t})\cdot\lambda_{\init,h}\cdot 2^{O(t\cdot \lambda_{\insLM},t)}\cdot h_{j}$ and $|P'_{G}|\geq 2$. 

Finally, we obtain $P_{G}$ from $P_{H}$ by substituting each star subpath $P'_{H}$ on $P_{H}$ with $P'_{G}$. We have $\ell_{G}(P_{G})\leq \lambda_{\rt,h}(t\cdot \lambda_{\insLM,t})\cdot\lambda_{\init,h}\cdot 2^{O(t\cdot \lambda_{\insLM},t)}\cdot \ell_{H}(P_{H})$ because we did not change the original edges on $P_{H}$, and each star subpath $P'_{H}$ with length $\ell_{H}(P'_{H}) = 2h_{j}$ is replaced by $P'_{G}$ with length $\ell_{G}(P'_{G})\leq 2\cdot\lambda_{\rt,h}(t\cdot \lambda_{\insLM,t})\cdot\lambda_{\init,h}\cdot 2^{O(t\cdot \lambda_{\insLM},t)}\cdot h_{j}$. Analogously, we can show $|P_{G}|\geq |P_{H}|/2$.

The path-unfolding time is $O(|P_{H}|\cdot\lambda_{\rt,h}(t\cdot\lambda_{\insLM,t})/\epsilon^{4}) + O(|P_{G}|) = 2^{O(1/\epsilon^{4})}\cdot |P_{G}|$. There is a term $O(|P_{H}|\cdot\lambda_{\rt,h}(t\cdot\lambda_{\insLM,t})/\epsilon^{4})$ because we will invoke the path-reporting algorithm in \Cref{thm:RouterPathReport} $O(|P_{H}|)$ times, each time it return some router path $P'$ with $|P'|\leq \lambda_{\rt,h}$ in time $O(|P'|/\epsilon^{4})$.

\end{proof}

\section{Dynamic Length-Constrained Expander Hierarchy}
\label{sect:ExpanderHierarchy}

The notion of \emph{expander hierarchy} was first introduced by \cite{GRST21} with applications on maintaining tree-flow sparsifiers and congestion-competitive oblivious routing scheme. Later, in \cite{HRG22}, by developing the concept of lengh-constrained expander, they further introduced \emph{length-constrained expander hierarchy} and applied it on distributed oblivious routing with routing paths under certain length constraint.

This section shows a dynamic algorithm for the length-constrained expander hierarchy. Roughly speaking, the algorithm will invoke dynamic certified-EDs and dynamic vertex sparsifiers on top of each other alternatively. This hierarchy will be used in both \Cref{sect:DynHopEmu} and \Cref{sect:OracleShort} but with different input length parameters. 

We note that in the statement of \Cref{thm:ExpanderHierarchy}, we use the global vertex identifiers (see the discussion at the beginning of \Cref{sect:DynSparsifier}) to formally define the set intersection/union operations involving vertices from different vertex sparsifiers. Furthermore, we manually set an upper bound $\phi<o(1/(t\cdot\kappa_{H,\size}))$ just for simplifying the calculation.

\begin{theorem}[Expander Hierarchy]
Let $G$ be a dynamic graph under $t$ batched updates $\pi^{(i)},...,\pi^{(t)}$ of edge insertions/deletions and isolated vertex insertions/deletions. Given parameters $\phi\leq o(1/(t\cdot\kappa_{H,\size}))$ and $\bar{k}\leq \lambda_{k}$, where $\lambda_{k} = O(1/\epsilon^{2})$ is a sufficiently large global parameter, there is an algorithm that maintains an expander hierarchy $({\cal H}, {\cal D})$ with $\bar{k}$ levels, where ${\cal H}$ is a collection of vertex sparsifiers $\{H_{k}\mid 1\leq k\leq \bar{k}\}$, and ${\cal D}$ is a collection of certified-EDs $(C_{k},L_{k},{\cal N}_{k},{\cal R}_{k},\Pi_{{\cal R}_{k}\to H_{k}})$ of some node-weighting $A_{k}$ on $H_{k}$. For each level $1\leq k\leq \bar{k}$, let $h_{k}$ and $\bar{h}_{k}$ be additional given length parameters. The vertex sparsifiers satisfy that $H_{1} = G$ and for each $1\leq k\leq \bar{k}-1$, 
\begin{itemize}
\item $H_{k+1}$ is a $(\lambda_{H,\low},\lambda_{H,\up},\bar{h}_{k})$-sparsifier of $V(H_{k+1})\cap V(H_{k})$ on $H_{k}$ s.t. $L_{k}\subseteq V(H_{k+1})\cap V(H_{k})$. 
\item The size of $H_{k+1}$ is $|H_{k+1}|\leq O(t\cdot\kappa_{H,\size}\cdot \phi\cdot |H_{k}|)$, and the number of vertices in $H_{k+1}$ is $|V(H_{k+1})|\leq n^{O(\epsilon^{4})}\cdot |V(H_{k})|$. 
\item $V(H_{k})\cap V(H_{k+1}) = V(H_{k})\cap \bigcup_{k+1\leq k'\leq\bar{k}} V(H_{k'})$.
\item Given a path $P_{H_{k+1}}$ on $H_{k+1}$ connecting some vertices $u,v\in V(H_{k+1})\cap V(H_{k})$, the algorithm can compute a $u$-$v$ path $P_{H_{k}}$ on $H_{k}$ with $\ell_{H_{k}}(P_{H_{k}})\leq \lambda_{H,\low}\cdot \ell_{H_{k+1}}(P_{H_{k+1}})$ and $|P_{H_{k}}|\geq |P_{H_{k+1}}|/2$. The path-reporting time is $|P_{H_{k}}|\cdot 2^{O(1/\epsilon^{4})}$.
\end{itemize}
For each level $1\leq k\leq \bar{k}$, the certified-ED satisfies the following.
\begin{itemize}
    \item $A_{k}\geq \deg_{H_{k}} + \mathds{1}(V(H_{k}))$.
    \item $|C_{k}|\leq O(t\cdot \lambda_{\dynED,h}\cdot h_{k}\cdot \phi\cdot |H_{k}|)$.
    \item $L_{k}$ has size $|L_{k}|\leq O(t\cdot \phi\cdot |H_{k}|)$ and distrotion $\sigma_{k} = O(\lambda_{\dynED,h}\cdot h_{k}/\kappa_{\sigma})$.
    \item ${\cal N}_{k}$ is a $b_{k}$-distributed $(h_{\cov,k}, h_{\sep,k},\omega_{k})$-pairwise cover of $A_{k}$ on $H_{k}-C_{k}$ with
    \[
    b_{k} = O(t),\ h_{\cov,k} = h_{\sep,k} = h_{k},\ \omega_{k} = O(t\cdot\kappa_{\PC,\omega})
    \]
    \item Each router in ${\cal R}_{k}$ is maintained by \Cref{thm:Router} under at most $f_{k} = O(t)$ batched updates.
    \item $\Pi_{{\cal R}_{k}\to H_{k}}$ has $h_{\emb,k} = O(\lambda_{\init,h}\cdot \lambda_{\dynED,h}\cdot h_{k})$ and $\gamma_{\emb,k} = O(\kappa_{\dynED,\gamma}/\phi)$.
    \item At each time $1\leq i\leq t$, we have
    \begin{align*}
    \sum_{1\leq k\leq \bar{k}}\recourse({\cal N}^{(i-1)}_{k}\to{\cal N}^{(i)}_{k}) &\leq \kappa_{\EH,\PC,\rcs}\cdot |\pi^{(i)}|/\phi,
    \end{align*}
    where $\kappa_{\EH,\PC,\rcs} = \lambda_{\rt,\prune}(t)\cdot t\cdot \kappa_{\PC,\omega}\cdot \kappa_{\dynED,\gamma}\cdot 2^{O(\lambda_{k})}\cdot \kappa^{\lambda_{k}}_{H,\rcs} = n^{O(\epsilon^{2})}$.
\end{itemize}

The initialization time is $|G^{(0)}|\cdot\poly(\max_{k}\{h_{k},\bar{h}_{k}\})\cdot n^{O(\epsilon)}/\phi$,
and the update time for batch $\pi^{(i)}$ is $|\pi^{(i)}|\cdot \poly(\max_{k}\{h_{k},\bar{h}_{k}\})\cdot n^{O(\epsilon)}/\phi^{2}$.

\label{thm:ExpanderHierarchy}
\end{theorem}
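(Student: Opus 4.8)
The plan is to build the hierarchy bottom-up, alternating the dynamic certified-ED algorithm of \Cref{thm:DynamicED} with the dynamic vertex sparsifier of \Cref{thm:NonHopReducingEmulator}. Set $H_{1}=G$. For each level $1\le k\le\bar{k}$, maintain the certified-ED $(C_{k},L_{k},{\cal N}_{k},{\cal R}_{k},\Pi_{{\cal R}_{k}\to H_{k}})$ on $H_{k}$ by invoking \Cref{thm:DynamicED} with length parameter $h_{k}$, congestion parameter $\phi$, and initial node-weighting $\mathds{1}(V(H_{k}^{(0)}))$; and for $1\le k\le\bar{k}-1$, maintain $H_{k+1}$ as the dynamic sparsifier of the terminal set $L_{k}$ on $H_{k}$ via \Cref{thm:NonHopReducingEmulator} with length parameter $\bar{h}_{k}$ and congestion parameter $\phi$. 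Whenever $H_{k}$ receives an update batch, it is forwarded both to the level-$k$ certified-ED and, together with the landmark insertions that certified-ED emits, to the level-$k$ sparsifier, which in turn emits an update batch for $H_{k+1}$; thus $\pi^{(i)}$ propagates up the hierarchy. Since $L_{k}\subseteq\bar{T}_{k+1}:=V(H_{k+1})\cap V(H_{k})$ by \Cref{thm:NonHopReducingEmulator} and every vertex of $H_{k+1}$ outside $\bar{T}_{k+1}$ is globally fresh, the identity $V(H_{k})\cap V(H_{k+1})=V(H_{k})\cap\bigcup_{k'\ge k+1}V(H_{k'})$ holds: $\subseteq$ is trivial, and if $v\in V(H_{k})$ also lies in some $V(H_{k'})$ with $k'>k+1$, then $v$ is not fresh at any level above $k$, so $v\in\bar{T}_{k'}\subseteq V(H_{k'-1})$, and descending one level at a time yields $v\in V(H_{k+1})$.

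\textbf{Per-level guarantees.} Most items are read off directly. The certified-ED bounds at level $k$ ($A_{k}\ge\deg_{H_{k}}+\mathds{1}(V(H_{k}))$; the bounds on $|C_{k}|$, $|L_{k}|$, $\sigma_{k}$; the quality of ${\cal N}_{k}$; $f_{k}=O(t)$; the embedding bounds) come from \Cref{thm:DynamicED}, using that $|C_{k}|\le O(|A_{k}^{(0)}|\phi\lambda_{\dynED,h}h_{k})+\sum_{i}O(u_{k}^{(i)}\lambda_{\dynED,h}h_{k})$ where $u_{k}^{(i)}$ is the batch size applied to $H_{k}$, together with $|A_{k}^{(0)}|\le O(|H_{k}^{(0)}|)$ and $\sum_{i}u_{k}^{(i)}=O(|H_{k}|)$ (once the rebuilds discussed below keep $H_{k}$ from growing), giving $|C_{k}|\le O(t\lambda_{\dynED,h}h_{k}\phi|H_{k}|)$ and likewise $|L_{k}|\le O(t\phi|H_{k}|)$. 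For the sparsifier chain, "$H_{k+1}$ is a $(\lambda_{H,\low},\lambda_{H,\up},\bar{h}_{k})$-sparsifier of $V(H_{k+1})\cap V(H_{k})$ on $H_{k}$ with $L_{k}$ among the terminals" and the path-unfolding statement are items 1 and 6 of \Cref{thm:NonHopReducingEmulator}; the size bound $|H_{k+1}|\le\kappa_{H,\size}(|L_{k}|+\phi|H_{k}|)\le O(t\kappa_{H,\size}\phi|H_{k}|)$ and $|V(H_{k+1})|\le n^{O(\epsilon^{4})}|V(H_{k})|$ are items 2 and 3. Iterating the size bound over $\bar{k}\le\lambda_{k}$ levels and using $\phi\le o(1/(t\kappa_{H,\size}))$ keeps $|H_{k}|\le|G|$ throughout.

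\textbf{Recourse, the crux.} Write $u_{k}^{(i)}$ for the size of the update batch applied to $H_{k}$ at step $i$, so $u_{1}^{(i)}=|\pi^{(i)}|$. By \Cref{thm:DynamicED} the level-$k$ certified-ED emits landmark increments of size $O(u_{k}^{(i)})$ and pairwise-cover recourse at most $(\lambda_{\rt,\prune}(t)\cdot t\cdot\kappa_{\PC,\omega}\cdot\kappa_{\dynED,\gamma}/\phi)\cdot u_{k}^{(i)}$; the level-$k$ sparsifier then sees an input batch of size $O(u_{k}^{(i)})$ and, by item 4 of \Cref{thm:NonHopReducingEmulator}, emits $u_{k+1}^{(i)}\le O(\kappa_{H,\rcs})\cdot u_{k}^{(i)}$, so $u_{k}^{(i)}\le\kappa_{H,\rcs}^{O(k)}|\pi^{(i)}|$. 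Summing the pairwise-cover recourse over all $\bar{k}\le\lambda_{k}$ levels gives $\sum_{k}\recourse({\cal N}_{k}^{(i-1)}\to{\cal N}_{k}^{(i)})\le(\lambda_{\rt,\prune}(t)t\kappa_{\PC,\omega}\kappa_{\dynED,\gamma}/\phi)\cdot\sum_{k}u_{k}^{(i)}$, which is $\kappa_{\EH,\PC,\rcs}|\pi^{(i)}|/\phi$ since $\bar{k}\cdot\kappa_{H,\rcs}^{O(\bar{k})}\le 2^{O(\lambda_{k})}\kappa_{H,\rcs}^{\lambda_{k}}$ up to the conventions on constants inside exponents. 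The running times follow by summing the per-level update costs $u_{k}^{(i)}\cdot\poly(h_{k},\bar{h}_{k})\cdot n^{O(\epsilon)}/\phi^{2}$ over all levels and absorbing the $n^{O(\epsilon^{2})}$ blow-up of $\sum_{k}u_{k}^{(i)}$ into $n^{O(\epsilon)}$; the initialization time is analogous with $u_{k}^{(0)}$ replaced by $|H_{k}^{(0)}|\le|G^{(0)}|$.

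\textbf{Main obstacle.} The delicate point is the precondition $|\pi^{(i)}|\le\phi|G^{(i)}|$ of \Cref{thm:NonHopReducingEmulator}: when the batch forwarded to $H_{k}$ exceeds $\phi|H_{k}^{(i)}|$, the level-$k$ sparsifier and everything above it must be reinitialized from scratch. I would argue this is affordable: the trigger $u_{k}^{(i)}>\phi|H_{k}^{(i)}|$ forces $|H_{k+1}|\le O(t\kappa_{H,\size}\phi|H_{k}|)=O(t\kappa_{H,\size})\cdot u_{k}^{(i)}\le\kappa_{H,\rcs}^{O(k)}|\pi^{(i)}|$, so both the recourse and the time of the rebuild (the latter $O(|H_{k}|)\cdot\poly/\phi\le|\pi^{(i)}|\cdot\poly/\phi^{2}$) stay within the stated budgets, and a cascade of rebuilds up the hierarchy costs only an extra factor $\bar{k}\le\lambda_{k}$. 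Two bookkeeping points remain: after each rebuild the vertex-set identity is re-established (the new artificial centers are again globally fresh, so the descent argument still applies), and the terminal set $L_{k}$ fed to the level-$k$ sparsifier is genuinely incremental — vertices leaving $L_{k}$ are exactly vertices deleted from $H_{k}$, forwarded as isolated-vertex deletions rather than terminal deletions.
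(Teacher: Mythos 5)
Your proposal takes essentially the same approach as the paper's proof: build the hierarchy by alternating \Cref{thm:DynamicED} (with length parameter $h_k$, congestion $\phi$, initial node-weighting $\mathds{1}(V(H_k))$) and \Cref{thm:NonHopReducingEmulator} (with terminals $L_k$ and length parameter $\bar{h}_k$), reinitialize levels $k^\star$ and above whenever the forwarded batch exceeds $\phi|H_k^{(i)}|$, propagate certified-ED landmark insertions into the sparsifier's terminal set, and bound the recourse and time by splitting into levels below and at/above $k^\star$. The paper also triggers a cascading reinitialization when the level-$(k-1)$ certified-ED reinitializes, which is what your ``everything above it'' handles; and for levels at or above $k^\star$ it bounds $\recourse({\cal N}_k^{(i-1)}\to{\cal N}_k^{(i)})$ by $\size({\cal N}_k^{(i-1)})+\size({\cal N}_k^{(i)})\le O(t\kappa_{\PC,\omega})\cdot O(\kappa_{H,\rcs})^{k^\star-1}|\pi^{(i)}|/\phi$ via the trigger inequality $|H^{(i)}_{k^\star}|\le|\pi^{(i)}_{k^\star}|/\phi$, which is the same argument as your ``Main obstacle'' paragraph. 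One small inaccuracy: the cumulative-update claim $\sum_i u_k^{(i)}=O(|H_k|)$ should read $\sum_{i'\ge i^\star_k} u_k^{(i')}\le O(t\phi|H_k^{(i)}|)$ (using $|H_k^{(i')}|=O(|H_k^{(i)}|)$ on the window since the last rebuild), which is what produces the $t$ factor in the bound $|C_k|\le O(t\lambda_{\dynED,h}h_k\phi|H_k|)$; as written you implicitly drop this $t$ by over-absorbing. This is bookkeeping rather than a conceptual gap.
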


\begin{proof} \

\paragraph{The Algorithm for Certifie-EDs.} The certified-EDs are maintained by \Cref{thm:DynamicED}. Precisely, at each level $k$, we will work on graph $H_{k}$ under some batched updates $\pi^{(1)}_{k},...,\pi^{(t)}_{k}$. Initially, $H_{1} = G$ and $\pi^{(1)}_{1} = \pi^{(1)},...,\pi^{(t)}_{1} = \pi^{(t)}$. The working graph and batched updates for $k\geq 2$ will be specified later. 

By applying \Cref{thm:DynamicED} on $H_{k}$ with length parameter $h_{k}$ and congestion parameter $\phi$, we can initialize and maintain a certified-ED $(C_{k},L_{k},{\cal N}_{k},{\cal R}_{k},\Pi_{{\cal R}_{k}\to H_{k}})$ of $A_{k}$ on $H_{k}$ satisfying the following.
\begin{itemize}
\item \underline{The Node-Weighting $A^{(i)}_{k}$.} At any time $i\geq 0$,
\[A_{k}^{(i)} = \deg_{H^{(i)}_{k}} + \mathds{1}(V(H^{(i)}_{k})).
\]

\item \underline{The Moving Cut $C^{(i)}_{k}$.} Initially, 
\[
|C^{(0)}_{k}| = O(\lambda_{\dynED,h}\cdot \phi\cdot |A^{(0)}_{k}|\cdot h_{k}) = O(\lambda_{\dynED,h}\cdot h_{k}\cdot \phi\cdot |H^{(0)}_{k}|).
\]
For each time $i\geq 1$, $C^{(i)}_{k} = (C^{(i-1)}_{k})_{\mid E(H^{(i)}_{k})} + C^{(i)}_{\new,k}$ where
\[
|C^{(i)}_{\new,k}| = O(\lambda_{\dynED,h}\cdot h_{k}\cdot |\pi^{(i)}_{k}|).
\]

\item \underline{The Landmark Set $L^{(i)}_{k}$.} Initially,
\[
|L^{(0)}_{k}| = O(\phi\cdot |A^{(0)}_{k}|) = O(\phi\cdot |H^{(0)}_{k}|).
\]
For each time $i\geq 1$, $L^{(i)}_{k} = (L^{(i-1)}_{k}\cap V(H^{(i)}_{k}))\cup L^{(i)}_{\new,k}$ where
\[
|L^{(i)}_{\new,k}| = O(|\pi^{(i)}_{k}|).
\]

\item The distortion of $L^{(i)}_{k}$, the distributed neighborhood cover ${\cal N}^{(i)}_{k}$, the routers ${\cal R}^{(i)}_{k}$ and the embedding $\Pi_{{\cal R}^{(i)}_{k}\to H^{(i)}_{k}}$ have quality parameters $\sigma_{k}$, $h_{\trace,k}$, $\gamma_{\trace,k}$, $b_{k}$, $h_{\cov,k}$, $h_{\sep,k}$, $\omega_{k}$, $f_{k}$, $h_{\emb,k}$ and $\gamma_{\emb,k}$ as shown in the theorem statement and we will not repeat them here.
\end{itemize}

\noindent\underline{The Reinitialization Condition.} At each time $i\geq 1$, we will reinitialize \Cref{thm:DynamicED} at this level $k$ if either \begin{itemize}
\item the dynamic certified-ED algorithm (\Cref{thm:DynamicED}) at level $k-1$ is reinitialized at time $i$, or
\item $|\pi^{(i)}_{k}|\geq \phi\cdot |H^{(i)}_{k}|$.
\end{itemize}

With the reinitialization steps, we can claim additional properties of $(C_{k}^{(i)}, L^{(i)}_{k}, {\cal N}^{(i)}_{k}, {\cal R}^{(i)}_{k}, \Pi_{{\cal R}^{(i)}_{k}\to H^{(i)}_{k}})$ at each time $i\geq 1$. If \Cref{thm:DynamicED} is not reinitialized at level $k$, we know $|\pi^{(i)}_{k}|\leq \phi\cdot |H^{(i)}_{k}|$, so we have 
\[
|C^{(i)}_{\new,k}| = O(\lambda_{\dynED,h}\cdot h_{k}\cdot |\pi^{(i)}_{k}|)\leq O(\lambda_{\dynED,h}\cdot h_{k}\cdot \phi\cdot |H^{(i)}_{k}|)
\]
and 
\[
|L^{(i)}_{\new,k}| = O(|\pi^{(i)}_{k}|)\leq O(\phi\cdot |H^{(i)}_{k}|). 
\]
Therefore, at any time $i\geq 0$, let $i^{\star}_{k}$ be the most recent reinitialization step of \Cref{thm:DynamicED} at level $k$, we can claim the following upper bounds on $|C^{(i)}_{k}|$ and $|L^{(i)}_{k}|$:
\[
|C^{(i)}_{k}| \leq O(\lambda_{\dynED,h}\cdot h_{k}\cdot\phi\cdot\sum_{i^{\star}_{k}\leq i'\leq i}|H^{(i')}_{k}|)\leq O(t\cdot\lambda_{\dynED,h}\cdot h_{k}\cdot\phi\cdot |H^{(i)}_{k}|),
\]
and 
\[
|L^{(i)}_{k}|\leq O(\phi\cdot \sum_{i^{\star}_{k}\leq i'\leq i}|H^{(i')}_{k}|)\leq O(t\cdot\phi\cdot |H^{(i)}_{k}|).
\]
Note that for each $i^{\star}_{k}\leq i'\leq i-1$, $|H^{(i')}_{k}|\leq |H^{(i'+1)}_{k}| + |\pi^{(i'+1)}_{k}|\leq (1+\phi)\cdot |H^{(i'+1)}_{k}|$, so $|H^{(i')}_{k}|\leq (1+\phi)^{i-i'}|H^{(i)}_{k}|\leq O(|H^{(i)}_{k}|)$ because the parameter $\phi$ is required to be at most $1/t$.

\paragraph{The Algorithm for Sparsifiers.} Given the certified-ED $(C_{k},L_{k},{\cal N}_{k},{\cal R}_{k},\Pi_{{\cal R}_{k}\to H_{k}})$ over time, we prepare the next-level graph $H_{k+1}^{(0)},...,H_{k+1}^{(t)}$ and the update batches $\pi^{(1)}_{k+1},...,\pi^{(t)}_{k+1}$ over time as follows. 

We maintain $H_{k+1}$ as a sparsifier of terminals $L_{k}$ on $H_{k}$ over time by applying \Cref{thm:NonHopReducingEmulator} with length parameter $\bar{h}_{k}$ and congestion parameter $\phi' = \phi/O(1)$, where $O(1)$ represents some large constant. Moreover, at each time $i\geq 1$, the batched update $\bar{\pi}^{(i)}_{k}$ fed to \Cref{thm:NonHopReducingEmulator} includes
\begin{itemize}
\item the batched edge insertions/deletions and isolated vertex insertions/deletions from $H^{(i-1)}_{k}$ to $H^{(i)}_{k}$, i.e. $\pi^{(i)}_{k}$, and
\item the terminal insertion from $L^{(i-1)}_{k}$ and $L^{(i)}_{k}$, i.e. $L^{(i)}_{\new,k}$,
\end{itemize}
if the dynamic certified-ED algorithm is \textit{not} reinitialized at level $k$ (we will discuss in a moment how to handle the opposite). Then $|\bar{\pi}^{(i)}_{k}|$ has size
\[
|\bar{\pi}^{(i)}_{k}| = |\pi^{(i)}_{k}| + |L^{(i)}_{\new,k}| \leq O(|\pi^{(i)}_{k}|)=O(\phi\cdot|H^{(i)}_{k}|)\leq \phi'\cdot |H^{(i)}_{k}|,
\]
so it meets the requirements of \Cref{thm:NonHopReducingEmulator}.

Also, $H^{(i)}_{k+1}$ is guaranteed to be a $(\lambda_{H,\low},\lambda_{H,\up},\bar{h}_{k})$-sparsifier of $V(H^{(i)}_{k})\cap V(H^{(i)}_{k+1})$ on $H^{(i)}_{k}$ s.t. $L^{(i)}_{k}\subseteq V(H^{(i)}_{k})\cap V(H^{(i)}_{k+1})$, and all vertices in $V(H^{(i)}_{k+1})\setminus V(H^{(i)}_{k})$ are fresh. The number of vertices in 
$H^{(i)}_{k+1}$ is $|V(H^{(i)}_{k+1})|\leq n^{O(\epsilon^{4})}\cdot |V(H^{(i)}_{k})|$. The size of $H^{(i)}_{k+1}$ is
\[
|H^{(i)}_{k+1}|\leq O(\kappa_{H,\size}\cdot(|L^{(i)}_{k}| + \phi\cdot |H^{(i)}_{k}|))\leq O(t\cdot\kappa_{H,\size}\cdot\phi\cdot|H^{(i)}_{k}|),
\]
and $H^{(i)}_{k+1}$ can be updated from $H^{(i-1)}_{k+1}$ by a batched update $\pi^{(i)}_{k+1}$ with size $|\pi^{(i)}_{k+1}| \leq \kappa_{H,\rcs}\cdot |\bar{\pi}^{(i)}_{k}|$. 

\begin{claim}
For each $1\leq k\leq \bar{k}-1$, we have $V(H^{(i)}_{k})\cap V(H^{(i)}_{k+1}) = V(H^{(i)}_{k})\cap \bigcup_{k+1\leq k'\leq \bar{k}} V(H^{(i)}_{k'})$.
\end{claim}
\begin{proof}
We first show that, for $0\leq i\leq t$ and $1\leq k<k'\leq\bar{k}$, the fresh vertices in $V(H^{(i)}_{k'})$ have global identifiers different from those of $V(H^{(i)}_{k})$, i.e. $V(H^{(i)}_{k'})\setminus V(H^{(i)}_{k'-1})$ and $V(H^{(i)}_{k})$ are disjoint. We will argue this inductively over time. Obviously, the statement holds at time $i=0$ because we initialize the sparsifiers bottom-up (i.e. in the order $H^{(0)}_{1},...,H^{(0)}_{\bar{k}}$). Fix a time $i\geq 1$. We assume the statement holds at time $i-1$. Then each fresh vertices $v$ in $V(H^{(i)}_{k'})$ is not in $V(H^{(i)}_{k})$ by the following reason. If $v$ is created at time $i$, then $v\notin V(H^{(i)}_{k})$ by the definition of fresh vertices (all vertices in $V(H^{(i)}_{k})$ are created before $v$ by our algorithm). Otherwise, $v$ is created before time $i$. By the assumption, $v\notin V(H^{(i-1)}_{k})$. It remains to show $v$ is not in $V(H^{(i)}_{k})\setminus V(H^{(i-1)}_{k})$. Because we update the sparsifiers bottom-up, each vertex in $V(H^{(i)}_{k})\setminus V(H^{(i-1)}_{k})$ is created without any knowledge about the global identifiers of $v$. In other words, each vertex in $V(H^{(i)}_{k})\setminus V(H^{(i-1)}_{k})$ either is a fresh vertex or has the same identifier with someone in $\bigcup_{k''\leq k}V(H^{(i-1)}_{k''})$, but $v$ is not in $V(H^{(i-1)}_{k''})$ for all $k''\leq k$ by the assumption. This implies $v\notin V(H^{(i)}_{k})\setminus V(H^{(i-1)}_{k})$.

Consider a fixed $k$. We now show $V(H^{(i)}_{k'})\cap V(H^{(i)}_{k})\subseteq V(H^{(i)}_{k+1})$ for each $k'\geq k+1$ by induction on $k'$. When $k'=k+1$, it trivially holds. When $k'\geq k+2$, assume this statement holds for $k'-1$. We have shown that the fresh vertices in $V(H^{(i)}_{k'})$ are not in $V(H^{(i)}_{k})$, so $V(H^{(i)}_{k'})\cap V(H^{(i)}_{k})$ is a subset of non-fresh vertices $V(H^{(i)}_{k'})\cap V(H^{(i)}_{k'-1})$, which implies $V(H^{(i)}_{k'})\cap V(H^{(i)}_{k})\subseteq V(H^{(i)}_{k'-1})\cap V(H^{(i)}_{k})\subseteq V(H^{(i)}_{k+1})$. 

The original claim is a simple corrollary of the above statement.

\end{proof}

\noindent\underline{The Reinitialization Condition.} At time $i\geq 1$, we will reinitialize \Cref{thm:NonHopReducingEmulator} at this level $k$ if the \Cref{thm:DynamicED} maintaining $(C_{k},L_{k},{\cal N}_{k},{\cal R}_{k},\Pi_{{\cal R}_{k}\to H_{k}})$ triggers a reinitialization step at time $i$.

\paragraph{Size of $\pi^{(i)}_{k}$ and $H^{(i)}_{k}$.} At time $i$, let $k^{\star}$ be the minimum level at which the dynamic certified-ED algorithm is reintialized. Then we have for each $1\leq k\leq k^{\star}-1$, $|\bar{\pi}^{(i)}_{k}|\leq O(|\pi^{(i)}_{k}|)$ and $|\pi^{(i)}_{k+1}|\leq \kappa_{H,\rcs}\cdot|\bar{\pi}^{(i)}_{k}|$. Therefore, for each $1\leq k\leq k^{\star}$, $|\pi^{(i)}_{k}|\leq O(\kappa_{H,\rcs})^{k-1}\cdot |\pi^{(i)}|$. For each $1\leq k\leq k^{\star}-1$, $|\bar{\pi}^{(i)}_{k}|\leq O(\kappa_{H,\rcs})^{k-1}\cdot |\pi^{(i)}|$

For each $1\leq k\leq \bar{k}$, we have
\[
|H^{(i)}_{k}|\leq O(t\cdot\kappa_{H,\size}\cdot\phi)^{k-1}\cdot |G^{(i)}|,
\]
because $|H^{(i)}_{k+1}|\leq O(t\cdot\kappa_{H,\size}\cdot\phi\cdot|H^{(i)}_{k}|)$ holds for each $1\leq k\leq \bar{k}-1$.

\paragraph{The Recourse of Pairwise Covers at Time $i$.} Let $k^{\star}$ be the minimum level at which the dynamic certified-ED algorithm is reinitialized. 

Then at each level $1\leq k\leq k^{\star}-1$, the dynamic certified-ED algorithm performs a normal update step at level $k$ and time $i$, 
by \Cref{thm:DynamicED}, we have
\begin{align*}
\recourse({\cal N}^{(i-1)}_{k}\to{\cal N}^{(i)}_{k})&\leq \lambda_{\rt,\prune}(t)\cdot t\cdot \kappa_{\PC,\omega}\cdot\kappa_{\dynED,\gamma}\cdot|\pi^{(i)}_{k}|/\phi\\
&\leq \lambda_{\rt,\prune}(t)\cdot t\cdot\kappa_{\PC,\omega}\cdot\kappa_{\dynED,\gamma}\cdot 2^{O(\bar{k})}\cdot\kappa_{H,\rcs}^{\bar{k}}\cdot|\pi^{(i)}|/\phi.
\end{align*}

Now consider level $k^{\star}$. The dynamic certified-ED algorithm triggers a reinitialization step at level $k^{\star}$ and time $i$. Note that by the reinitialization condition, we have $|\pi^{(i)}_{k^{\star}}|\geq \phi\cdot |H^{(i)}_{k^{\star}}|$, 
which implies
\[
|H^{(i)}_{k^{\star}}|\leq O(\kappa_{H,\rcs})^{k^{\star}-1}\cdot |\pi^{(i)}|/\phi.
\]
Furthermore, by the definition of $k^{\star}$, the dynamic sparsifier algorithm at level $k^{\star}-1$ that maintains $H_{k^{\star}}$ performs a normal update step at time $i$, so we have
\begin{align*}
|H^{(i-1)}_{k^{\star}}|&\leq |H^{(i)}_{k^{\star}}| + |\pi^{(i)}_{k^{\star}}| \leq O(\kappa_{H,\rcs})^{k^{\star}-1}\cdot |\pi^{(i)}|/\phi.
\end{align*}

Therefore, for each level $k\geq k^{\star}$, $\recourse({\cal N}^{(i-1)}_{k}\to{\cal N}^{(i)}_{k})$ can be bounded by
\begin{align*}
\recourse({\cal N}^{(i-1)}_{k}\to{\cal N}^{(i)}_{k})&\leq \size({\cal N}^{(i-1)}_{k}) + \size({\cal N}^{(i)}_{k})\\
&\leq \omega^{(i-1)}_{k}\cdot |H^{(i-1)}_{k}| + \omega^{(i)}_{k}\cdot |H^{(i)}_{k}|\\
&\leq \omega^{(i-1)}_{k}\cdot |H^{(i-1)}_{k^{\star}}| + \omega^{(i)}_{k}\cdot |H^{(i)}_{k^{\star}}|\\
&\leq O(t\cdot\kappa_{\PC,\omega})\cdot O(\kappa_{H,\rcs})^{k^{\star}-1}\cdot|\pi^{(i)}|/\phi\\
&\leq 2^{O(\bar{k})}\cdot t\cdot \kappa_{\PC,\omega}\cdot\kappa_{H,\rcs}^{\bar{k}}\cdot |\pi^{(i)}|/\phi.
\end{align*}
Here the first inequality is because we can simply delete the whole ${\cal N}^{(i-1)}_{k}$ and then add the whole ${\cal N}^{(i)}_{k}$. 
The second inequality is by $\size({\cal N}^{(i-1)}_{k})\leq \omega^{(i-1)}_{k}\cdot |A^{(i-1)}_{k}|\leq \omega^{(i-1)}_{k}\cdot |H^{(i-1)}_{k}|$.
The third inequality is because we have $|H^{(i)}_{k+1}|\leq O(t\cdot\kappa_{H,\size}\cdot\phi\cdot |H^{(i)}_{k}|)\leq o(|H^{(i)}_{k}|)$ for all $i$ and $k$ (here we use the condition $\phi\leq o(1/t\cdot\kappa_{H,\size})$).

In conclusion, we have
\begin{align*}
\sum_{1\leq k\leq \bar{k}}\recourse({\cal N}^{(i-1)}_{k}\to{\cal N}^{(i)}_{k}) &\leq \lambda_{\rt,\prune}(t)\cdot t\cdot \kappa_{\PC,\omega}\cdot \kappa_{\dynED,\gamma}\cdot 2^{O(\lambda_{k})}\cdot \kappa^{\lambda_{k}}_{H,\rcs}\cdot |\pi^{(i)}|/\phi.
\end{align*}

\paragraph{The Initialization and Update Time.} For the initialization, at level $k$, the dynamic certified-EDs algorithm takes time, by \Cref{thm:DynamicED},
\[
|A^{(0)}_{k}|\cdot \poly(h_{k})\cdot n^{O(\epsilon)} = |H^{(0)}_{k}|\cdot\poly(h_{k})\cdot n^{O(\epsilon)},
\]
because $|A^{(0)}_{k}| = O(|H^{(0)}_{k}|)$. The dynamic sparsifier algorithm takes time, by \Cref{thm:NonHopReducingEmulator},
\begin{align*}
(|H^{(0)}_{k}| + |L^{(0)}_{k}|/\phi)\cdot \poly(\bar{h}_{k})\cdot n^{O(\epsilon)}/\phi
\leq |H^{(0)}_{k}|\cdot \poly(\bar{h}_{k})\cdot n^{O(\epsilon)}/\phi,
\end{align*}
because $|L^{(0)}_{k}|\leq O(\phi\cdot |H^{(0)}_{k}|)$. The total initialization time summing over all levels is
\[
|G^{(0)}|\cdot\poly(\max_{k}\{h_{k},\bar{h}_{k}\})\cdot n^{O(\epsilon)}/\phi.
\]

About the update time for handling batch $\pi^{(i)}$, let $k^{\star}$ be the minimum level at which the dynamic certified-ED algorithm is reinitialized. Then for each level $1\leq k\leq k^{\star}-1$, the dynamic certified-ED algorithm needs update time
$|\pi^{(i)}_{k}|\cdot n^{O(\epsilon)}\cdot \poly(h_{k})/\phi$, and 
the dynamic sparsifier algorithm needs update time
$|\bar{\pi}^{(i)}_{k}|\cdot \poly(\bar{h}_{k})\cdot n^{O(\epsilon)}/\phi^{2}$.
For each level $k\geq k^{\star}$, the dynamic certified-ED and sparsifier algorithms are both reinitialized with running time 
\[
|H^{(i)}_{k}|\cdot\poly(\max\{h_{k},\bar{h}_{k}\})\cdot n^{O(\epsilon)}/\phi.
\]
Finally, combining $|\pi^{(i)}_{k}|,|\bar{\pi}^{(i)}_{k}|\leq O(\kappa_{H,\rcs})^{\bar{k}}\cdot |\pi^{(i)}|$ for each $k\leq k^{\star}-1$ and $|H^{(i)}_{k}|\leq o(|H^{(i)}_{k^{\star}}|)\leq |\pi^{(i)}_{k^{\star}}|/\phi$, the total update time is
\[
|\pi^{(i)}|\cdot \poly(\max_{k}\{h_{k},\bar{h}_{k}\})\cdot n^{O(\epsilon)}/\phi^{2}.
\]

\end{proof}

\section{Dynamic Distance Oracles for the Low Distance Regime}
\label{sect:OracleShort}

Providing the dynamic length-constrained expander hierarchy, we are ready to design dynamic distance oracles. However, the oracles in this section only work for low distance because it has initialization time and update time depending on $\poly(h)$.

\begin{theorem}
Let $G$ be a dynamic graph under $t$ batches of updates $\pi^{(1)},\pi^{(2)},...,\pi^{(t)}$ of edge insertions/deletions and isolated vertex insertions/deletions. Given a parameter $h$, there is an algorithm that maintains an oracle ${\cal O}$ s.t. for each $0\leq i\leq t$, the oracle ${\cal O}^{(i)}$ supports the following query. Given two vertices $u,v\in V(G)$, ${\cal O}^{(i)}$ will declare either $\dist_{G^{(i)}}(u,v)> h$ or $\dist_{G^{(i)}}(u,v)\leq \lambda_{\query,\alpha}\cdot h$, where
\[
\lambda_{\query,\alpha} = 2^{O(\lambda_{k})}\cdot(\lambda_{H,\low}\cdot\lambda_{H,\up})^{\lambda_{k}}\cdot (\lambda_{\rt,h}(t)\cdot\lambda_{\init,h}\cdot\lambda_{\dynED,h})^{\lambda_{k}}.
\]
The initialization time is 
$|G^{(0)}|\cdot\poly(h)\cdot n^{O(\epsilon)}$, the update time to handle $\pi^{(i)}$ is
$|\pi^{(i)}|\cdot \poly(h)\cdot n^{O(\epsilon)}$, and the query time is $O(1/\epsilon^{3})$.

Furthermore, if ${\cal O}^{(i)}$ declares $\dist_{G^{(i)}}(u,v)\leq \lambda_{\query,\alpha}\cdot h$, it can output a path $P$ connecting $u$ and $v$ in $G^{(i)}$ with length $\ell_{G^{(i)}}(P)\leq \lambda_{\query,\alpha}\cdot h$ in additional $2^{O(1/\epsilon^{4})}\cdot |P|$ time.

\label{thm:LowDistanceOracle}
\end{theorem}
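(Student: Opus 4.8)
The plan is to instantiate the dynamic length-constrained expander hierarchy of \Cref{thm:ExpanderHierarchy} with an appropriate choice of parameters and then implement the climbing-up query algorithm sketched in \Cref{sect:overview}. First I would set the number of levels $\bar k = \lambda_k = \Theta(1/\epsilon^2)$, choose the congestion parameter $\phi = 1/(t\cdot\kappa_{H,\size}\cdot n^{\Theta(\epsilon^2)})$ (or more simply some $n^{-\Theta(\epsilon^2)}$ small enough to satisfy $\phi \le o(1/(t\kappa_{H,\size}))$), and pick the two sequences of length parameters geometrically: $h_1 = \bar h_1 = \Theta(h)$ and $h_{k+1}, \bar h_{k+1} = \Theta(h_k)$ with a large enough hidden constant so that the real-diameter bound of level $k$ (which is $h_{\emb,k}\cdot\lambda_{\rt,h}(f_k) = 2^{\poly(1/\epsilon)}\cdot h_k$ by \Cref{lemma:CertifiedEDToDist}) plus the sparsifier stretch factor $\lambda_{H,\low}$ incurred when pushing a path from $H_{k+1}$ down to $H_k$ still fits comfortably inside $h_{k+1}$. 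Since all $h_k, \bar h_k = \poly(1/\epsilon)\cdot h \le \poly(h)$, the initialization and update times of \Cref{thm:ExpanderHierarchy} become $|G^{(0)}|\cdot\poly(h)\cdot n^{O(\epsilon)}/\phi^2 = |G^{(0)}|\cdot\poly(h)\cdot n^{O(\epsilon)}$ and $|\pi^{(i)}|\cdot\poly(h)\cdot n^{O(\epsilon)}$ as required.

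Next I would describe the query. Given $u,v$, we climb the hierarchy maintaining a pair $(u_k, v_k)\in V(H_k)$, starting from $(u_1,v_1)=(u,v)$ in $H_1 = G$. At level $k$ we consult the $b_k$-distributed neighborhood cover ${\cal N}_k$: using the stored ball covers ${\cal S}_{u_k}$ and ${\cal S}_{v_k}$ (each of size $O(t)$), we check in $O(t) = O(1/\epsilon^2)$ time whether some cluster $S\in{\cal N}_k$ contains both $u_k$ and $v_k$. If so we declare ``$\dist_{G}(u,v)\le\lambda_{\query,\alpha}h$''; if not and $k=\bar k$ we declare ``$\dist_G(u,v)>h$''; otherwise we look for a vertex $u_{k+1}\in L_k$ with a cluster $S\in{\cal N}_k$ containing both $u_k$ and $u_{k+1}$ (this is where we use that the next-level terminal/extended-terminal set $V(H_{k+1})\cap V(H_k)$ contains $L_k$, so that the landmark lies in $H_{k+1}$), likewise pick $v_{k+1}$, and recurse; if no such $u_{k+1}$ or $v_{k+1}$ exists we declare ``$\dist_G(u,v)>h$''. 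Since ball covers are stored explicitly with constant-time access (the fast-access interfaces promised after \Cref{def:CertifiedED}), each level costs $O(1/\epsilon^2)$ and the total query time is $O(\bar k/\epsilon^2) = O(1/\epsilon^4)$, which I should tighten to the claimed $O(1/\epsilon^3)$ by noting that the per-level cluster-membership test can be done in $O(b_k)=O(1/\epsilon^2)$ but the landmark search over a constant-size ball cover is $O(1)$, so the honest bound is $O(\bar k\cdot b_k)$; if this only gives $O(1/\epsilon^4)$ I would either absorb it (the statement tolerates $O(1/\epsilon^3)$, so a sharper amortization of the $b_k$ factor via a single hash lookup per level is needed) or re-examine the constants.

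The correctness argument has two halves, mirroring the overview. For soundness: if the algorithm declares ``$\dist_G(u,v)\le \lambda_{\query,\alpha}h$'' at some level $k^\star$, then $\dist_{H_{k^\star}}(u_{k^\star},v_{k^\star}) \le O(h_{\emb,k^\star}\cdot\lambda_{\rt,h}(f_{k^\star})) = 2^{\poly(1/\epsilon)}h_{k^\star}$ by \Cref{lemma:CertifiedEDToDist}; each down-step from $H_{k+1}$ to $H_k$ multiplies distances by at most $\lambda_{H,\low}$ (sparsifier lower-bound side, \Cref{def:Sparsifier} property 2), and the jump $u_k\to u_{k+1}$ (and $v_k\to v_{k+1}$) within a cluster of ${\cal N}_k$ contributes at most another $O(h_{\emb,k}\lambda_{\rt,h}(f_k))$ to $\dist_{H_k}$; summing the geometric series over $k\le k^\star$ and pushing everything back down to $G = H_1$ yields $\dist_G(u,v)\le 2^{O(\lambda_k)}(\lambda_{H,\low}\lambda_{H,\up})^{\lambda_k}(\lambda_{\rt,h}(t)\lambda_{\init,h}\lambda_{\dynED,h})^{\lambda_k}\cdot h = \lambda_{\query,\alpha}h$. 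For completeness: if $\dist_G(u,v)\le h$, I would prove by induction on $k$ that $\dist_{H_k}(u_k,v_k)\le h_k/(\text{const})$ so long as the algorithm has not yet halted, exactly as in the warm-up — the key point at each level being that the shortest $u_k$–$v_k$ path $P$ in $H_k$ satisfies $\ell_{H_k}(P)\le h_k/\text{const}$ but (if no common cluster) $\ell_{H_k-C_k}(P)>h_k$, forcing a $C_k$-vertex on $P$, which by the landmark property (distortion $\sigma_k = O(\lambda_{\dynED,h}h_k/\kappa_\sigma) \ll h_k$) lies within $G-C_k$-distance $\sigma_k$ of some $u_{k+1}\in L_k$, and then covering radius $h_{\cov,k}=h_k$ guarantees a cluster of ${\cal N}_k$ catches the pair $(u_k, u_{k+1})$; the real-diameter bound then gives $\dist_{H_k}(u_{k+1},v_{k+1})\le h_{k+1}/\text{const}$ via triangle inequality, and since at the top level $C_{\bar k}=0$ we must have hit a common cluster at or before $\bar k$. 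The main obstacle is bookkeeping the interplay between the two multiplicative slack sources — the sparsifier stretch $\lambda_{H,\low}$ per level and the certified-ED embedding/router length slack per level — in the choice of the $h_k$ sequence, so that the completeness induction's ``$h_k/\text{const}$'' hypothesis survives all $\bar k = \Theta(1/\epsilon^2)$ levels while keeping $h_{\bar k} = \poly(1/\epsilon)\cdot h$; this requires the geometric ratio between consecutive $h_k$ to dominate $\lambda_{H,\low}\cdot 2^{\poly(1/\epsilon)}$, which is fine since a constant (in $n$) ratio suffices and $\bar k$ is only polylogarithmically many — wait, $\bar k = \Theta(1/\epsilon^2)$ is constant for constant $\epsilon$, so $h_{\bar k} = (\lambda_{H,\low}2^{\poly(1/\epsilon)})^{O(1/\epsilon^2)}\cdot h = 2^{\poly(1/\epsilon)}\cdot h$, consistent with $\poly(h)$ running time. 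Finally, the path-reporting claim follows by composing: at the halting level we extract the router path via \Cref{thm:RouterPathReport} and unfold it through $\Pi_{{\cal R}_{k^\star}\to H_{k^\star}}$, then repeatedly apply the sparsifier path-unfolding of \Cref{thm:NonHopReducingEmulator} (item 6) and the hierarchy's path-reporting primitive (\Cref{thm:ExpanderHierarchy}) to push the path down from $H_{k^\star}$ to $G$, also splicing in the two within-cluster connector subpaths at each level; each unfolding step is linear in the output length with a $2^{O(1/\epsilon^4)}$ overhead and blows up path length by $\le\lambda_{H,\low}$, so the total is $2^{O(1/\epsilon^4)}|P|$ time and $\ell_G(P)\le\lambda_{\query,\alpha}h$.
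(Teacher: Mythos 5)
Your proposal follows essentially the same route as the paper: instantiate the expander hierarchy of \Cref{thm:ExpanderHierarchy} with $\bar k = \lambda_k = \Theta(1/\epsilon^2)$ levels, a geometric sequence of length parameters calibrated so the diameter slack ($\lambda_{\rt,h}(t)\cdot h_{\emb,k}$) and sparsifier stretch ($\lambda_{H,\low},\lambda_{H,\up}$) are absorbed by the ratio $h_{k+1}/h_k$, and then implement the bottom-up climbing query with landmark hand-offs. The soundness/completeness analysis, the use of \Cref{claim:EmptyLevel} to guarantee a terminating level, and the path-unfolding strategy all agree with the paper.

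The one genuine error is in the query-time bookkeeping. You write that a ball-cover scan costs $O(t) = O(1/\epsilon^2)$ per level and arrive at a total of $O(\bar k/\epsilon^2) = O(1/\epsilon^4)$, then worry about how to ``tighten'' this to $O(1/\epsilon^3)$. But the global time parameter is $t = \Theta(1/\epsilon)$, not $\Theta(1/\epsilon^2)$ (see \Cref{sect:GlobalParameters}), and correspondingly the distributed factor is $b_k = O(t) = O(1/\epsilon)$. With the correct values, each level costs $O(b_k) = O(1/\epsilon)$ (using the $O(1)$-time \textsc{Membership} and \textsc{L-vertex} interfaces from \Cref{sect:FastAccessToCertifiedED}, one scan of the ball cover suffices — you do not even need both $\mathcal{S}_{u_k}$ and $\mathcal{S}_{v_k}$, only the ball cover of $v_k$ with membership tests for $u_k$) and the total is $O(\lambda_k\cdot t) = O(1/\epsilon^3)$ directly. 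There is no gap to close and no need for the hash-lookup amortization you contemplate. Aside from this slip, the argument is sound.
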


\subsection{Fast Certified-ED Access Interfaces}
\label{sect:FastAccessToCertifiedED}
Before showing the query algorithm, we first implement some constant-time access interfaces of certified-EDs, which allow us to achieve sublogarithmic query time at last. Let $(C,L,{\cal N},{\cal R},\Pi_{{\cal R}\to G})$ be a certified-ED for some node-weighting $A\geq \mathds{1}(V(G))$ on $G$. In fact, our query algorithm will work on the restriction of ${\cal N}$ on $\mathds{1}(V(G))\subseteq A$, denoted by ${\cal N}_{\vertex} = {\cal N}_{\mid \mathds{1}(V(G))}$. Therefore, the interfaces below are also for ${\cal N}_{\vertex}$. As mentioned in \Cref{def:PairwiseCover}, equivalently ${\cal N}$ is a pairwise cover of vertex set $V(G)$.

\paragraph{Membership Queries.} Given a cluster $S\in{\cal N}_{\vertex}$ and a vertex $v\in A$, the interface $\Member(S,v)$ answers whether $v\in S$ or not. 

We assign a distinct cluster ID to each cluster $S\in{\cal N}_{\vertex}$, and we also assign a distinct (integral) clustering ID in range $[1,\omega]$ to each clustering ${\cal S}\in{\cal N}_{\vertex}$. For each virtual node $v\in A$, we store an array with length $\omega$, where the $k$-th element is the cluster ID of $S$ s.t. $v\in S$ and the clustering ${\cal S}$ containing $S$ has clustering ID $k$ (if there is no such $S$, then put a symbol $\perp$). For each vertex insertion or deletion update to ${\cal N}_{\vertex}$, it takes constant time to update this structure.

To answer $\Member(S,v)$, we look at the clustering ID of ${\cal S}\ni S$, say $k$. Then $v\in S$ if and only if the cluster ID in the $k$-th position of the array of $v$ is the same as the cluster ID of $S$. The query time is $O(1)$.

\paragraph{$L$-Vertex Queries.} Given a cluster $S\in{\cal N}_{\vertex}$, the interface $\textsc{L-vertex}(S)$ returns an arbitrary vertex $v\in S\cap L$, or declares $S\cap L$ is empty.

For each cluster $S\in{\cal N}_{\vertex}$, we store all $L$-vertices in $S$ in a binary search tree. The query time is trivially $O(1)$ since we can return the $L$-vertex at the root of the BST.

Updating these BSTs takes $n^{O(\epsilon)}$ time for each unit update to the certified-EDs, which is affordable to our update algorithm. First, for each vertex insertion or deletion to ${\cal N}_{\vertex}$, updating the BST takes $O(\log n)$ time. Second, for each $L$-vertex insertion, it takes $\tilde{O}(\omega) = n^{O(\epsilon)}$ time to update the BST, because all certified-EDs we maintain have ${\cal N}$ with width at most $n^{O(\epsilon)}$. There is no $L$-vertex deletion, because $L$ is updated incrementally during our maintenance of certified-EDs.

\subsection{Proof of \Cref{thm:LowDistanceOracle}}
\begin{proof}

This data structure also exploits dynamic expander hierarchy in \Cref{thm:ExpanderHierarchy}. Precisely, we initialize and maintain an expander hierarchy $({\cal H},{\cal D})$ with $\bar{k} = \lambda_{k}$ levels by applying \Cref{thm:ExpanderHierarchy} on the dynamic graph $G$ with parameters $\kappa_{k}=\lambda_{k}$ and $\phi = \kappa_{\phi}=n^{\epsilon^{2}}$.
The additional length parameters at each level $1\leq k\leq \bar{k}$ are $\hat{h}_{k} = 4\cdot h_{k}$ (here we change the notation by passing $\hat{h}_{k}$ to the parameter $h_{k}$ in \Cref{thm:ExpanderHierarchy}) and $\bar{h}_{k} = (16\cdot\lambda_{\Gdiam} + 3)\cdot h_{k}$, where $h_{k} = ((16\cdot\lambda_{\Gdiam} + 3)\cdot\lambda_{H,\up})^{k-1}\cdot h$ for some parameter $\lambda_{\Gdiam} = O(\lambda_{\rt,h}(t)\cdot\lambda_{\init,h}\cdot\lambda_{\dynED,h})$ with sufficiently large hidden constant. 

The hierarchy $({\cal H},{\cal D})$ has properties listed in \Cref{thm:ExpanderHierarchy}. For clarity, we now restate the properties we will use in the proof. For each level $1\leq k\leq \bar{k}$, let $H_{k}$ be the vertex spasifier and let $(C_{k},L_{k},{\cal N}_{k},{\cal R}_{k},\Pi_{{\cal R}_{k}\to H_{k}})$ be the certified-ED of some node-weighting $A_{k}$ on $H_{k}$. The vertex sparsifiers satsify that $H_{1} = G$ and for each $1\leq k\leq \bar{k}-1$, 
\begin{itemize}
\item $H_{k+1}$ is a $(\lambda_{H,\low},\lambda_{H,\up},\bar{h}_{k})$-sparsifier of $L_{k}$ on $H_{k}$,
\item $H_{k+1}$ has size $|H_{k+1}|\leq O(t\cdot\kappa_{H,\size}\cdot \phi\cdot|H_{k}|)$.
\item Given a path $P_{H_{k+1}}$ on $H_{k+1}$ connecting some vertices $u,v\in \bar{T}_{k}$, there is an algorithm that computes a $u$-$v$ path $P_{H_{k}}$ on $H_{k}$ with $\ell_{H_{k}}(P_{H_{k}})\leq \lambda_{H,\low}\cdot \ell_{H_{k+1}}(P_{H_{k+1}})$ and $|P_{H_{k}}|\geq |P_{H_{k+1}}|/2$. The path-reporting time is $|P_{H_{k+1}}|\cdot 2^{O(1/\epsilon^{4})}$.
\end{itemize}
The certified-EDs satisfy that, for each $1\leq k\leq \bar{k}$,
\begin{itemize}
\item $L_{k}$ is a landmark set of $C_{k}$ on $H_{k}$ with distortion $\sigma_{k} = O(\lambda_{\dynED,h}\cdot \hat{h}_{k}/\kappa_{\sigma})$.

\item ${\cal N}_{k}$ is a $b_{k}$-distributed $(h_{\cov,k},\omega_{k},h_{\diam,k})$-neighborhood cover of some node-weighting $A_{k}\geq \deg_{H_{k}} + \mathds{1}(H_{k})$ on $H_{k}-C_{k}$ with
\[
b_{k} = O(t),\ h_{\cov,k} = \hat{h}_{k} = 4\cdot h_{k},\text{ 
and }\omega_{k} = O(t\cdot\kappa_{\PC,\omega}).
\]
In fact, our query algorithm will work on ${\cal N}_{k,\vertex} = ({\cal N}_{k})_{\mid \mathds{1}(V(H_{k}))}$, i.e. the restriction of ${\cal N}_{k}$ on vertices $V(H_{k})$. Obviously, ${\cal N}_{k,\vertex}$ the same quality parameters $b_{k}, h_{\cov,k}$ and $\omega_{k}$.

\item The routers in ${\cal R}_{k}$ are maintained by \Cref{thm:Router} under at most $f_{k} = O(t)$ batched updates, and the embedding $\Pi_{{\cal R}_{k}\to H_{k}}$ has length $h_{\emb,k} = O(\lambda_{\init,h}\cdot \lambda_{\dynED,h}\cdot \hat{h}_{k})$. We note that this property is only for showing that, by \Cref{lemma:CertifiedEDToDist}, each cluster $S\in{\cal N}_{k,\vertex}$ has diameter on $H_{k}$ at most 
\[
h_{\Gdiam,k} = \lambda_{\rt,h}(f_{k})\cdot h_{\emb,k} \leq \lambda_{\Gdiam}\cdot h_{k}
\]
where the last inequality holds by our choice of $\lambda_{\Gdiam}$.
\end{itemize}
\begin{claim}
At each time $0\leq i\leq t$, there exists a level $k_{0}\leq \bar{k}$ (depending on $i$) s.t. $C_{k_{0}}$ is empty.
\label{claim:EmptyLevel}
\end{claim}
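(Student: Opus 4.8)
The plan is to track how the size of the sparsifiers $H_k$ shrinks as $k$ grows, and to use the fact that the cut $C_k$ produced by \Cref{thm:ExpanderHierarchy} has size bounded in terms of $|H_k|$; once $|H_k|$ drops below $1$, the moving cut at that level is forced to be empty. Concretely, fix a time $0 \le i \le t$ and abbreviate $H_k = H_k^{(i)}$, $C_k = C_k^{(i)}$. From \Cref{thm:ExpanderHierarchy} we have the geometric contraction $|H_{k+1}| \le O(t \cdot \kappa_{H,\size} \cdot \phi \cdot |H_k|)$ for each $1 \le k \le \bar k - 1$, and since $\phi = \kappa_\phi = n^{\epsilon^2}$ while $t \cdot \kappa_{H,\size} = n^{O(\epsilon^4)} \cdot \poly(1/\epsilon)$, the product $t \cdot \kappa_{H,\size} \cdot \phi$ is at most $n^{-\Omega(\epsilon^2)}$ — in particular strictly less than $1$, indeed at most $n^{-\epsilon^2/2}$ say. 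Therefore $|H_k| \le (n^{-\epsilon^2/2})^{k-1} \cdot |H_1| = (n^{-\epsilon^2/2})^{k-1} \cdot |G^{(i)}|$, and since $|G^{(i)}|$ is polynomially bounded in $n$, there is a level $k_0 = O(1/\epsilon^2)$ at which $|H_{k_0}| < 1$, i.e. $H_{k_0}$ is the empty graph.

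The second ingredient is the cut-size bound $|C_{k_0}| \le O(t \cdot \lambda_{\dynED,h} \cdot h_{k_0} \cdot \phi \cdot |H_{k_0}|)$ from \Cref{thm:ExpanderHierarchy}. Once $|H_{k_0}| = 0$ (the empty graph has no edges), this bound gives $|C_{k_0}| = 0$, so $C_{k_0}$ is the zero function, which is exactly the assertion that $C_{k_0}$ is empty. The only thing to check carefully is that $k_0$ as produced by the size argument is at most $\bar k = \lambda_k$: this is where the choice $\lambda_k = O(1/\epsilon^2)$ with a \emph{sufficiently large} hidden constant enters. Since $|G^{(i)}| \le n^{O(1)}$ and each level multiplies the size by at most $n^{-\epsilon^2/2}$, we need $(k_0 - 1) \cdot (\epsilon^2/2) \ge O(1) \cdot \log_n(n) = O(1)$, i.e. $k_0 = O(1/\epsilon^2)$, and $\lambda_k$ is chosen large enough (in the proof of \Cref{thm:LowDistanceOracle} we set $\bar k = \lambda_k$) that this $k_0$ falls within the hierarchy.

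I expect the main (though modest) obstacle to be bookkeeping the constants so that the bound $k_0 \le \bar k$ genuinely holds: one must confirm that $t \cdot \kappa_{H,\size} \cdot \phi < 1$ — which relies on the global parameter regime $\epsilon \in [1/\log^c n, 1]$ ensuring $(\log n)^{\poly(1/\epsilon)} \le n^{\poly(\epsilon)}$, so that $\kappa_{H,\size} = n^{O(\epsilon^4)}$ is genuinely dwarfed by $\phi = n^{\epsilon^2}$ — and that the resulting logarithmic number of contraction steps, namely $O(\log_{n^{\epsilon^2}} n^{O(1)}) = O(1/\epsilon^2)$, is absorbed into $\lambda_k$. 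No delicate structural argument about the expander decomposition itself is needed; the claim is purely a consequence of the quantitative size/cut bounds already asserted in \Cref{thm:ExpanderHierarchy} together with the hypothesis $\phi \le o(1/(t \cdot \kappa_{H,\size}))$ that is part of its statement.
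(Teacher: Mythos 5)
Your argument is correct and takes essentially the same approach as the paper: both rely on the geometric contraction $|H_{k+1}| \le O(t\cdot\kappa_{H,\size}\cdot\phi\cdot|H_k|) \le |H_k|/n^{\Theta(\epsilon^2)}$ and the choice of $\lambda_k = O(1/\epsilon^2)$ with a large enough hidden constant to force some level to be the empty graph. The only cosmetic difference is in the last step: the paper picks $k_0$ with $H_{k_0+1}$ empty and infers $L_{k_0}=\emptyset$ (since $L_{k_0}\subseteq V(H_{k_0+1})$) and hence $C_{k_0}=\emptyset$ from the landmark-set definition, whereas you pick $k_0$ with $H_{k_0}$ empty and read off $|C_{k_0}|=0$ directly from the cut-size bound; both are immediate and cost only one extra level, which is absorbed into $\lambda_k$. (One small notational slip: you write $\phi=\kappa_\phi=n^{\epsilon^2}$, repeating a typo from the paper's proof of \Cref{thm:LowDistanceOracle}; the intended value, as in the proof of \Cref{thm:HopReducingEmulator} and as your subsequent inequality $t\cdot\kappa_{H,\size}\cdot\phi \le n^{-\Omega(\epsilon^2)}$ requires, is $\phi=1/\kappa_\phi=1/n^{\epsilon^2}$.)
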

\begin{proof}
Because we have $|H^{(i)}_{k+1}|\leq O(t\cdot \kappa_{H,\size}\cdot \phi\cdot |H^{(i)}_{k}|)\leq |H^{(i)}_{k}|/n^{\Theta(\epsilon^{2})}$ for each time $i$ and level $k$, because $t\cdot\kappa_{H,\size} = n^{O(\epsilon^{4})}$ and $\phi = 1/n^{\epsilon^{2}}$. Since the number of levels is $\lambda_{k} = O(1/\epsilon^{2})$ (with sufficiently large hidden constant factor), there exists $k_{0}$ s.t. the sparsifier $H^{(i)}_{k_{0}+1}$ is empty, which means $L^{(i)}_{k_{0}}$ and $C^{(i)}_{k_{0}}$ are empty.
\end{proof}
The above claim says that the hierarchy has at most $\lambda_{k} = O(1/\epsilon^{2})$ nontrivial levels over all time. In what follows, we let $k^{(0)}$ (depends on the current time $i$) denote the minimum level s.t. $C_{k_{0}}$ is empty.

\paragraph{The Query Algorithm}

We now describe a query algorithm that given two vertices $u,v\in V(G)$, outputs either \textsc{Close} or \textsc{Far}, where \textsc{Close} means $\dist_{G}(u,v)\leq \lambda_{\query,\alpha}\cdot h$, while \textsc{Far} means $\dist_{G}(u,v)>h$. 

The algorithm is bottom-up. When we are at level $k$, the input is two vertices $u_{k},v_{k}\in V(H_{k})$. Initially, we start from level $1$ and let the input be $u_{1} = u$ and $v_{1} = v$ (trivially $u_{1},v_{1}\in V(H_{1})$ because $H_{1} = G$). We do the following at level $k$.
\begin{enumerate}
\item We consider the ball cover ${\cal S}_{v_{k}}$ of $v_{k}$ in ${\cal N}_{k,\vertex}$ (see \Cref{def:DistributedNC} to recall the definition of ball cover). If there is a cluster $S\in {\cal S}_{v_{k}}$ containing $u_{k}$ (by querying $\textsc{Membership}(S,u_{k})$), we terminate the algorithm with output \textsc{Close}.

\item Otherwise, we have $u_{k}\notin \bigcup_{S\in {\cal S}_{v_{k}}}S$. If we have reached the level $k_{0}$ with empty $C_{k_{0}}$, we terminate the algorithm with output \textsc{Far}.

\item Otherwise, now we are at some level $k<k_{0}$, and we will pick the input $u_{k+1}$ and $v_{k+1}$ of the next level as follows. We select $v_{k+1}$ to be an arbitrary $L_{k}$-vertex in some cluster $S\in {\cal S}_{v_{k}}$ (querying $\textsc{L-vertex}(S)$ for each $S\in{\cal S}_{v_{k}}$). If there is no such $v'_{k}$, we terminate the algorithm with answer \textsc{Far}. The vertex $u_{k+1}$ is picked by an analogous algorithm. Note that $u_{k+1},v_{k+1}\in L_{k}\subseteq V(H_{k+1})$, and we proceed to the next level $k+1$. 
\end{enumerate}

\paragraph{The Query Time and Correctness.} The correctness of the query algorithm is shown by \Cref{lemma:QueryCorrectness1,lemma:QueryCorrectness2}. Consider the running time at each level $k$. Step 1 takes $O(b_{k})$ time, because the ball cover ${\cal S}_{v_{k}}$ has at most $b_{k}$ clusters and the membership query of $u$ and each cluster $S\in{\cal S}_{v_{k}}$ takes $O(1)$ time. Step 2 takes constant time trivially. Step 3 takes $O(b_{k})$ time because for each cluster $S$, selecting a $L_{k}$-vertex in $S$ (or declare there is no $L_{k}$-vertex) takes constant time. In conclusion, the total query time is
\[
O(\sum_{1\leq k\leq \bar{k}}b_{k}) = O(\bar{k}\cdot t) = O(\lambda_{k}\cdot t) = O(1/\epsilon^{3}).
\]

\begin{lemma}
If the input vertices $u$ and $v$ have $\dist_{G}(u,v)\leq h$, the query algorithm will output \textsc{Close}.
\label{lemma:QueryCorrectness1}
\end{lemma}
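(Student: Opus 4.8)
The plan is to prove the statement by a bottom-up induction on the level $k$, mirroring the warm-up argument from the overview but now with the extra machinery of vertex sparsifiers between consecutive levels. Specifically, I would maintain the induction hypothesis that at each level $k$ reached by the query algorithm with inputs $u_k, v_k \in V(H_k)$, we have $\dist_{H_k}(u_k, v_k) \le h_k$, with the base case $k=1$ following from $\dist_{G}(u,v) \le h = h_1$ and $H_1 = G$. Given this hypothesis, I first observe that because $\dist_{H_k}(u_k, v_k) \le h_k$, and because the query algorithm only reaches step 2 or 3 if step 1 fails (i.e. no cluster in the ball cover ${\cal S}_{v_k}$ contains $u_k$), we can derive useful structural information: either $C_k$ is nonempty and there is a $C_k$-vertex on the shortest $u_k$-$v_k$ path in $H_k - C_k$, or — if $C_k$ is empty — then $\dist_{H_k - C_k}(u_k,v_k) = \dist_{H_k}(u_k, v_k) \le h_k \le h_{\cov,k}$, which would force step 1 to succeed by the covering radius property, so the algorithm cannot output \textsc{Far} at step 2.

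The heart of the argument is then the inductive step for step 3. Consider the shortest $u_k$-to-$v_k$ path $P$ in $H_k$; by the induction hypothesis $\ell_{H_k}(P) \le h_k$. Since $C_k$ is nonempty at this level and step 1 failed, I want to argue $\ell_{H_k - C_k}(u_k, v_k) > h_{\cov,k} = \hat h_k = 4h_k$ (this is exactly what the failure of step 1 combined with the covering radius of ${\cal N}_{k,\vertex}$ gives), so cutting $C_k$ strictly increased the path length. Hence there is an edge $e$ on $P$ (or nearby) incident to a $C_k$-vertex $w'$ close to $v_k$ in $H_k - C_k$ distance, say within $h_k$; by the landmark property of $L_k$ (distortion $\sigma_k = O(\lambda_{\dynED,h} \hat h_k / \kappa_\sigma) \ll h_k$ since $\kappa_\sigma = n^{\Omega(\epsilon^4)}$ dominates), there is a landmark $v_{k+1} \in L_k$ with $\dist_{H_k - C_k}(v', v_{k+1}) \le \sigma_k$, so $v_{k+1}$ lies in some cluster of the ball cover ${\cal S}_{v_k}$ (because $v_{k+1}$ is within $h_{\cov,k}$ of $v_k$ in $H_k - C_k$), which means step 3 succeeds in finding $v_{k+1}$, and symmetrically $u_{k+1}$. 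I then need to bound $\dist_{H_{k+1}}(u_{k+1}, v_{k+1})$: using that $u_{k+1}, v_{k+1} \in L_k \subseteq V(H_{k+1}) \cap V(H_k)$ and that $H_{k+1}$ is a $(\lambda_{H,\low}, \lambda_{H,\up}, \bar h_k)$-sparsifier of $L_k$ on $H_k$, it suffices to show $\dist_{H_k}(u_{k+1}, v_{k+1}) \le \bar h_k = (16 \lambda_{\Gdiam} + 3) h_k$ so that property~\ref{property:Sparsifier3} of the sparsifier applies and gives $\dist_{H_{k+1}}(u_{k+1}, v_{k+1}) \le \lambda_{H,\up} \cdot \dist_{H_k}(u_{k+1}, v_{k+1}) \le \lambda_{H,\up}(16\lambda_{\Gdiam}+3) h_k = h_{k+1}$, closing the induction. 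The bound $\dist_{H_k}(u_{k+1}, v_{k+1}) \le \bar h_k$ follows from the triangle inequality: $v_{k+1}$ is within the real diameter $h_{\Gdiam,k} \le \lambda_{\Gdiam} h_k$ of $v_k$ in $H_k$ (they share a cluster of ${\cal N}_{k,\vertex}$), likewise $u_{k+1}$ within $\lambda_{\Gdiam} h_k$ of $u_k$, and $\dist_{H_k}(u_k, v_k) \le h_k$, giving roughly $(2\lambda_{\Gdiam} + 1) h_k \le \bar h_k$ comfortably. Since $C_{k_0}$ is empty, the induction must terminate with step 1 outputting \textsc{Close} before or at level $k_0 \le \bar k$.

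The main obstacle I anticipate is making the constants line up cleanly through the chain of inequalities — in particular verifying that the landmark distortion $\sigma_k$ really is small enough relative to $h_k$ (this needs the hypothesis $1/\log^c n \le \epsilon \le 1$ so that $\lambda_{\dynED,h} = 2^{O(1/\epsilon)}$ and the $\poly(1/\epsilon)$ slacks are swallowed by $\kappa_\sigma = n^{\Omega(\epsilon^4)}$), and that the geometric growth $h_{k+1} = \lambda_{H,\up}(16\lambda_{\Gdiam}+3) h_k$ together with the covering-radius choice $h_{\cov,k} = 4h_k$ is mutually consistent with the "step 1 failure implies $\dist_{H_k-C_k} > 4h_k$" claim. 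A secondary subtlety is the precise choice of $w'$ and the landmark: I need to pick $w'$ as the $C_k$-vertex on $P$ closest to (say) the relevant endpoint and argue $\dist_{H_k - C_k}$ to it is at most $\ell_{H_k}(P) \le h_k$ because the prefix of $P$ up to $w'$ has no $C_k$-edges, exactly as in the footnote of the overview; I will need to handle the edge case where $P$ has no $C_k$-vertex at all, but that case is impossible here since we established $\ell_{H_k - C_k}(P) > \ell_{H_k}(P)$. None of these require heavy computation, just careful bookkeeping of the slack parameters already tabulated in \Cref{table:GlobalParameters}.
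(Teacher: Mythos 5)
Your proposal is correct and follows essentially the same inductive argument as the paper's own proof of \Cref{lemma:QueryCorrectness1}: the same induction hypothesis $\dist_{H_k}(u_k,v_k)\le h_k$, the same use of the covering radius to rule out \textsc{Far} at step~2 when $C_{k_0}$ is empty, the same nearest-$C_k$-vertex-plus-landmark argument to rule out \textsc{Far} at step~3, and the same triangle-inequality-plus-sparsifier-upper-bound computation to close the induction. The arithmetic checks out ($\lambda_{H,\up}(2\lambda_{\Gdiam}+1)h_k \le \lambda_{H,\up}\bar h_k = h_{k+1}$), and your note about needing the prefix of $P$ up to $w'$ to be $C_k$-free so that $\dist_{H_k-C_k}(v_k,w')\le h_k$ matches the paper's footnote-level reasoning exactly.
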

\begin{proof}
First we will prove, by induction, that each level $k$ reached by the algorithm has $\dist_{H_{k}}(u_{k},v_{k})\leq h_{k}$. This statement trivially holds for the first level. For each level $k\geq 2$ reached by the algorithm, suppose the statement holds for the level $k-1$. Then we have
\begin{align*}
\dist_{H_{k}}(u_{k+1},v_{k+1}) &\leq \dist_{H_{k}}(u_{k+1},u_{k}) + \dist_{H_{k}}(u_{k},v_{k}) + \dist_{H_{k}}(v_{k},v_{k+1})\\
&\leq 2\cdot h_{\Gdiam,k} + h_{k}\leq \bar{h}_{k}.
\end{align*}

Because $u_{k+1},v_{k+1}\in L_{k}$ and $H_{k+1}$ is a $(\lambda_{H,\low},\lambda_{H,\up},\bar{h}_{k})$-sparsifier of $L_{k}$ on $H_{k}$, we have 
\[
\dist_{H_{k+1}}(u_{k+1},v_{k+1})\leq \lambda_{H,\up}\cdot\dist_{H_{k}}(u_{k+1},v_{k+1})\leq h_{k+1}.
\]

Now we show that the algorithm will not terminate with output \textsc{Far}. If the algorithm reach the level $k=k_{0}$ with empty $C_{k_{0}}$, $\dist_{H_{k}-C_{k}}(u_{k},v_{k}) = \dist_{H_{k}}(u_{k},v_{k})\leq h_{k}\leq h_{\cov,k}$ and $u_{k}$ must be inside the ball cover ${\cal S}_{v_{k}}$, and the algorithm will terminate with output \textsc{Close}. For other level $k<k_{0}$, if the algorithm reaches level $k$ but does not terminate with output \textsc{Close} at step 1, then $\dist_{H_{k}-C_{k}}(u_{k},v_{k})>h_{\cov,k}\geq 4h_{k}$ and $u_{k+1}$ and $v_{k+1}$ can be found successfully by the following reason. Consider the $u_{k}$-to-$v_{k}$ shortest path $P$ on $G$. We know $\ell_{H_{k}}(P)\leq h_{k}$ but $\ell_{H_{k}-C_{k}}(P)\geq 4\cdot h_{k}$. Then there must be a $C_{k}$-vertex $v'_{k}\in P$ with $\dist_{H_{k}-C_{k}}(v_{k},v'_{k})\leq h_{k}$ by taking the $C_{k}$-vertex on $P$ closest to $v_{k}$. By the definition of landmarks, there exists a $L_{k}$-vertex $v_{k+1}$ s.t. $\dist_{H_{k}-C_{k}}(v'_{k},v_{k+1})\leq \sigma_{k}$. Therefore, $\dist_{H_{k}-C_{k}}(v_{k},v_{k+1})\leq h_{k} + \sigma_{k}\leq 4\cdot h_{k} = h_{\cov,k}$, because $\sigma_{k}\leq h_{k}$ from $\lambda_{\dynED,h} = 2^{O(t)} = 2^{O(1/\epsilon)}$ and $\kappa_{\sigma} = n^{\epsilon^{4}}$. The same argument holds for $u_{k+1}$.

\end{proof}

\begin{lemma}
If the query algorithm outputs \textsc{Close}, then $\dist_{G}(u,v)\leq \lambda_{\query,\alpha}\cdot h$.
\label{lemma:QueryCorrectness2}
\end{lemma}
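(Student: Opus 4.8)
\textbf{Proof plan for \Cref{lemma:QueryCorrectness2}.}
The plan is to track, level by level, the ``ladder'' of vertices $u = u_1, u_2, \dots, u_{k^\star}$ and $v = v_1, v_2, \dots, v_{k^\star}$ produced by step 3 of the query algorithm, where $k^\star$ is the level at which the algorithm outputs \textsc{Close} in step 1. First I would record two elementary distance bounds. (a) For each level $k < k^\star$, the vertex $v_{k+1}$ is an $L_k$-vertex lying in some cluster $S \in \mathcal{S}_{v_k}$, hence $v_k, v_{k+1} \in S$ and so by \Cref{lemma:CertifiedEDToDist} (the real-diameter bound, which gives $h_{\Gdiam,k} \le \lambda_{\Gdiam} \cdot h_k$) we have $\dist_{H_k}(v_k, v_{k+1}) \le h_{\Gdiam,k} \le \lambda_{\Gdiam} \cdot h_k$; the same holds for $u_k, u_{k+1}$. (b) At level $k^\star$, step 1 found a single cluster $S \in \mathcal{N}_{k^\star, \vertex}$ containing both $u_{k^\star}$ and $v_{k^\star}$, so again $\dist_{H_{k^\star}}(u_{k^\star}, v_{k^\star}) \le h_{\Gdiam, k^\star} \le \lambda_{\Gdiam} \cdot h_{k^\star}$.

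Next I would push these $H_k$-distances down the hierarchy to $G = H_1$ using the lower-bound side of the sparsifier guarantee: for every $1 \le k \le \bar{k} - 1$ and all $x, y \in L_k \subseteq V(H_{k+1}) \cap V(H_k)$, $\dist_{H_k}(x,y) \le \lambda_{H,\low} \cdot \dist_{H_{k+1}}(x,y)$. Iterating this from level $k$ down to level $1$ converts any $\dist_{H_k}$ bound between two vertices that survive to level $k$ into a bound on $\dist_G$ multiplied by $\lambda_{H,\low}^{k-1}$. Concretely, I would define $D_k := \dist_G(u_k^{(G)}, u_{k+1}^{(G)})$ as the $G$-distance between the images at the base level, and show $\dist_G(u_k\text{-image}, u_{k+1}\text{-image}) \le \lambda_{H,\low}^{k-1} \cdot \lambda_{\Gdiam} \cdot h_k$; similarly for the $v$-side hops and for the top-level connection $\dist_G \le \lambda_{H,\low}^{k^\star - 1} \cdot \lambda_{\Gdiam} \cdot h_{k^\star}$. (One subtlety: the hops are between vertices of $H_k$, and I must make sure these are genuinely the ``same'' vertices across levels, which is exactly the content of the global-identifier / $V(H_{k+1}) \cap V(H_k) \supseteq L_k$ conventions already set up in \Cref{thm:ExpanderHierarchy}; so no new work is needed there.)

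Then I assemble the telescoping triangle inequality along the ladder:
\begin{align*}
\dist_G(u,v) \;\le\; \sum_{1 \le k \le k^\star - 1} \dist_G(u_k, u_{k+1}) \;+\; \dist_G(u_{k^\star}, v_{k^\star}) \;+\; \sum_{1 \le k \le k^\star - 1} \dist_G(v_{k+1}, v_k),
\end{align*}
where each term is bounded as above. Since $h_k = ((16 \lambda_{\Gdiam} + 3) \lambda_{H,\up})^{k-1} \cdot h$ grows geometrically and $\lambda_{H,\low} \ge 1$, the whole sum is dominated (up to a constant and a factor $2$ for the two sides) by its top term, giving
\[
\dist_G(u,v) \;\le\; 2^{O(\bar{k})} \cdot \big(\lambda_{H,\low} \cdot \lambda_{H,\up}\big)^{\bar{k}} \cdot \lambda_{\Gdiam} \cdot \lambda_{H,\low}^{\bar{k}} \cdot h,
\]
and plugging in $\bar{k} = \lambda_k$, $\lambda_{\Gdiam} = O(\lambda_{\rt,h}(t) \cdot \lambda_{\init,h} \cdot \lambda_{\dynED,h})$, and absorbing $\lambda_{H,\low}^{\bar k}$ into the $(\lambda_{\rt,h}(t)\lambda_{\init,h}\lambda_{\dynED,h})^{\lambda_k}$ and $(\lambda_{H,\low}\lambda_{H,\up})^{\lambda_k}$ factors, this matches $\lambda_{\query,\alpha} \cdot h$.

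The main obstacle I anticipate is purely bookkeeping rather than conceptual: carefully propagating each per-level hop bound down through the sparsifier chain while keeping the accumulated $\lambda_{H,\low}$ exponents correct, and verifying that the geometric growth rate of $h_k$ (chosen with the specific constant $16\lambda_{\Gdiam} + 3$) really does make the sum collapse to the top term without the constants blowing past $\lambda_{\query,\alpha}$. I would also need to double-check that the query algorithm never outputs \textsc{Close} ``spuriously'' below $k^\star$ in a way that breaks the ladder — but since \textsc{Close} is output exactly once and the algorithm stops there, the ladder is well-defined and finite, so this is not a real issue.
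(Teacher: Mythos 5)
There is a genuine gap in the way you push per-hop bounds all the way down to $G$. Your plan bounds $\dist_{H_k}(u_k,u_{k+1})\le \lambda_{\Gdiam}\cdot h_k$ and then ``iterates the sparsifier lower bound from level $k$ down to level $1$'' to obtain $\dist_G(u_k,u_{k+1})\le \lambda_{H,\low}^{k-1}\lambda_{\Gdiam}h_k$. But the sparsifier lower bound at level $k'$ (Property~\ref{property:Sparsifier2} of \Cref{def:Sparsifier}, as instantiated in \Cref{thm:ExpanderHierarchy}) only applies to pairs of vertices in $V(H_{k'+1})\cap V(H_{k'})$. The vertex $u_{k+1}$ is chosen from $L_k\subseteq V(H_k)\cap V(H_{k+1})$; nothing guarantees that $u_{k+1}\in V(H_{k'})$ for $k'<k$, let alone $u_{k+1}\in V(G)$. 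In particular, $u_{k+1}$ could be an artificial star center created by \Cref{lemma:DynamicCoverToSparsifier} at some level $j\le k$ and later picked as a landmark by the certified-ED maintenance at level $k$ — such a vertex is fresh and does not exist in $H_1=G$. The parenthetical where you dismiss this subtlety is exactly where the argument breaks: the global-identifier convention plus $L_k\subseteq V(H_{k+1})\cap V(H_k)$ tell you $u_{k+1}$ lives in levels $k$ and $k+1$, but nothing about levels below $k$. (The hierarchy invariant $V(H_{k'})\cap V(H_{k'+1})=V(H_{k'})\cap\bigcup_{k''>k'}V(H_{k''})$ only lets you deduce downward membership \emph{given} that the vertex is already known to live at the lower level, so it cannot be bootstrapped here.) Consequently the quantities $D_k=\dist_G(u_k^{(G)},u_{k+1}^{(G)})$ you want to telescope may simply be undefined.

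The paper avoids this entirely by recursing on the single quantity $d_k:=\dist_{H_k}(u_k,v_k)$, pushing one level at a time: by the triangle inequality in $H_k$ and the diameter bound, $d_k\le 2\lambda_{\Gdiam}h_k+\dist_{H_k}(u_{k+1},v_{k+1})$, and then $\dist_{H_k}(u_{k+1},v_{k+1})\le\lambda_{H,\low}\cdot d_{k+1}$, which only uses $u_{k+1},v_{k+1}\in L_k\subseteq V(H_k)\cap V(H_{k+1})$ — exactly what is guaranteed. Unrolling this recursion (together with the base case $d_{k^\star}\le\lambda_{\Gdiam}h_{k^\star}$ and the geometric growth of $h_k$) gives $d_1=\dist_G(u,v)\le O(\lambda_{H,\low}^{k^\star-1}\lambda_{\Gdiam}h_{k^\star})\le\lambda_{\query,\alpha}\cdot h$. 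To repair your proof you should switch from telescoping the $G$-images of the individual hops to this one-level-at-a-time recursion on $\dist_{H_k}(u_k,v_k)$; the constants and the geometric-sum argument you sketched then go through unchanged, and everything else in your proposal (the diameter bound via \Cref{lemma:CertifiedEDToDist}, the base case at $k^\star$, the bookkeeping of $\lambda_{\query,\alpha}$) is fine.
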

\begin{proof}
Suppose the algorithm terminates with output \textsc{Close} at level $k^{\star}$. Then by the end condition,
\[
\dist_{H_{k^{\star}}}(u_{k^{\star}},v_{k^{\star}})\leq h_{\Gdiam,k^{\star}} = \lambda_{\Gdiam}\cdot h_{k^{\star}}.
\]
For each level $k<k^{\star}$, because $H_{k+1}$ is a $(\lambda_{H,\low},\lambda_{H,\up},\bar{h}_{k})$-sparsifier of $L_{k}$ on $H_{k}$ and $u_{k+1},v_{k+1}\in L_{k}$, we have
\[
\dist_{H_{k}}(u_{k+1},v_{k+1})\leq \lambda_{H,\low}\cdot \dist_{H_{k+1}}(u_{k+1},v_{k+1}),
\]
which implies 
\begin{align*}
\dist_{H_{k}}(u_{k},v_{k}) &\leq \dist_{H_{k}}(u_{k},u_{k+1}) + \dist_{H_{k}}(u_{k+1},v_{k+1}) + \dist_{H_{k}}(v_{k+1},v_{k})\\
&\leq 2\cdot h_{\Gdiam,k} + \dist_{H_{k}}(u_{k+1},v_{k+1})\\
&\leq 2\cdot \lambda_{\Gdiam}\cdot h_{k} + \lambda_{H,\low}\cdot \dist_{H_{k+1}}(u_{k+1},v_{k+1}).
\end{align*}

Combining all together, we have 
\begin{align*}
\dist_{G}(u,v)\leq O(\lambda_{H,\low}^{k^{\star}-1}\cdot \lambda_{\Gdiam}\cdot h_{k^{\star}})\leq 2^{O(\lambda_{k})}\cdot(\lambda_{H,\low}\cdot\lambda_{H,\up})^{\lambda_{k}}\cdot \lambda_{\Gdiam}^{\lambda_{k}}\cdot h
\end{align*}
\end{proof}

\paragraph{The Initialization and Update Time.} Note that the oracle ${\cal O}$ is exactly the length-constrained hierarchy $({\cal H},{\cal D})$, so by \Cref{thm:ExpanderHierarchy} and the fact that $\phi = \kappa_{\phi} = n^{\epsilon^{2}}$ and
\[
\max_{k}\{\hat{h}_{k},\bar{h}_{k}\} = O(\lambda_{\Gdiam}\cdot\lambda_{H,\up})^{\lambda_{k}} = n^{O(\epsilon)}\cdot h,
\]
the initialization time is $|G^{(0)}|\cdot\poly(h)\cdot n^{O(\epsilon)}$,
and the update time to handle $\pi^{(i)}$ is
$|\pi^{(i)}|\cdot \poly(h)\cdot n^{O(\epsilon)}$.

\paragraph{Path Reporting.} Given two vertices $u,v\in V(G)$, if the query algorithm returns \textsc{Close}, we can compute a path $P$ connecting $u$ and $v$ on $G$ as follows. More precisely, we consider the levels from top to bottom (i.e. from level $k^{\star}$ to $1$), and at each level $k$, we will generate a path $P_{H_{k}}$ connecting $u_{k}$ and $v_{k}$ on $H_{k}$. Then the path $P_{H_{1}}$ is what we want.

At the top level $k^{\star}$, because the output is \textsc{Close}, there is a cluster $S_{\vertex}\in{\cal N}_{k^{\star},\vertex}$ containing $u_{k^{\star}}$ and $v_{k^{\star}}$. Let $S\in{\cal N}_{k^{\star}}$ be the original cluster corresponding to $S_{\vertex}$, and let $u_{k^{\star},\virtual},v_{k^{\star},\virtual}\in S$ be arbitrary virtual nodes of $u_{k^{\star}}$ and $v_{k^{\star}}$ respectively. By \Cref{thm:RouterPathReport} on the router $R$ corresponding to $S$, we can obtain a $u_{k^{\star}}$-$v_{k^{\star}}$ path $P_{R}$ on $R$. Substituting each edge on $P_{R}$ with its embedding path in $\Pi_{{\cal R}_{k^{\star}}\to H_{k^{\star}}}$ lead to the desired $P_{H_{k^{\star}}}$. We have $\ell_{H_{k^{\star}}}(P_{H_{k^{\star}}})\leq \lambda_{\rt}(f_{k^{\star}})\cdot h_{\emb,k^{\star}}\leq \lambda_{\Gdiam}\cdot h_{k^{\star}}$, because $|P_{R}|\leq \lambda_{\rt}(f_{k^{\star}})$ and $\Pi_{{\cal R}_{k^{\star}}\to H_{k^{\star}}}$ has length $h_{\emb,k^{\star}}$.

At each level $1\leq k\leq \bar{k}-1$, note that we have generated a path $P_{H_{k+1}}$ connecting $u_{k+1}$ and $v_{k+1}$ on $H_{k+1}$ with length $\lambda_{\Gdiam}\cdot h_{k+1}$. By the guarantees of vertex sparsifiers, we can unfold $P_{H_{k+1}}$ to obtain a $u_{k+1}$-$v_{k+1}$ path $P'_{H_{k}}$ on $H_{k}$ with $\ell_{H_{k}}(P'_{H_{k}}) \leq \lambda_{H,\low}\cdot \ell_{H_{k+1}}(P_{H_{k+1}})$ and $|P'_{H_{k}}|\geq |P_{H_{k+1}}|/2$. Next, because there is some cluster in ${\cal N}_{k,\vertex}$ containing both $u_{k}$ and $u_{k+1}$, similar to the algorithm at level $k^{\star}$, we can obtain a $u_{k}$-$u_{k+1}$ path $P_{H_{k},u}$ on $H_{k}$ with $\ell_{H_{k}}(P_{H_{k},u})\leq \lambda_{\Gdiam}\cdot h_{k}$. Similarly, we can obtain a $v_{k}$-$v_{k+1}$ path $P_{H_{k},v}$ on $H_{k}$ with $\ell_{H_{k}}(P_{H_{k},v})\leq \lambda_{\Gdiam}\cdot h_{k}$. Finally, $P_{H_{k}}$ is the concatenation of $P_{H_{k},u}$, $P'_{H_{k}}$ and $P_{H_{k},v}$ with guarantee $\ell_{H_{k}}(P_{H_{k}})\leq 2\cdot\lambda_{\Gdiam}\cdot h_{k} + \lambda_{H,\low}\cdot\ell_{H_{k+1}}(P_{H_{k+1}})$ and $|P_{H_{k}}|\geq |P_{H_{k+1}}|/2$.

Following the analysis in \Cref{lemma:QueryCorrectness2}, the final path $P_{G} = P_{H_{1}}$ on $G = H_{1}$ is a $u$-$v$ path with length $\ell_{G}(P_{G})\leq \lambda_{\query,\alpha}\cdot h$. The total path-reporting time can be bounded by $\sum_{1\leq k\leq \bar{k}} |P_{H_{k}}|\cdot 2^{O(1/\epsilon^{4})}\leq 2^{O(1/\epsilon^{4})}\cdot|P_{G}|$, because $|P_{H_{k}}|\leq 2^{O(\lambda_{k})}\cdot|P_{G}|$ and $\lambda_{k} = O(1/\epsilon^{2})$.

\end{proof}

\section{Dynamic Length-Reducing Emulators}
\label{sect:DynHopEmu}

In this section, we will show a dynamic length-reducing emulator algorithm. It will be used in \Cref{sect:Oracle} to remove the $\poly(h)$-dependency of the running time of \Cref{thm:LowDistanceOracle}.

\begin{definition}[Length-Reducing Emulators]
Let $G$ be a graph, an $(\alpha_{\low},\alpha_{\up},h)$-length-reducing emulator $Q$ is an unweighted graph with the following properties.
\begin{itemize}
\item[(1)] $V(G)\subseteq V(Q)$.
\item[(2)] Each pair of vertices $u,v\in V(G)$ s.t. $\dist_{G}(u,v)\leq h$ has $\dist_{Q}(u,v)\leq \alpha_{\up}$.
\item[(3)] Each pair of vertices $u,v\in V(G)$ has $\dist_{G}(u,v)\leq \alpha_{\low}\cdot h\cdot\dist_{Q}(u,v)$.
\end{itemize}
\end{definition}

\begin{theorem}
Let $G$ be a dynamic graph with parameters $h$ and $\phi$ under $t$ batches of updates $\pi^{(1)},\pi^{(2)},...,\pi^{(t)}$ of edge insertions/deletions and isolated vertex insertions/deletions. There is an algorithm that initializes and maintains a $(\lambda_{Q,\low},\lambda_{Q,\up},h)$-length-reducing-emulator $Q$ of $G$ with
\begin{align*}
\lambda_{Q,\low} &= \lambda_{\rt,h}(t)\cdot\lambda_{\init,h}\cdot\lambda_{\dynED,h}\cdot 2^{O(\lambda_{k})}\cdot (\lambda_{H,\up}\cdot\lambda_{H,\low})^{\lambda_{k}} = 2^{\poly(1/\epsilon)},\\
\lambda_{Q,\up} &= 4\cdot\lambda_{k} = O(1/\epsilon^{2}). 
\end{align*}
The algorithm further guarantee the following.
\begin{itemize}
\item The size of $Q$ is bounded by $|Q|\leq \kappa_{Q,\size}\cdot |G|$, where $\kappa_{Q,\size} = O(\lambda_{k}\cdot t\cdot\kappa_{\PC,\omega}) = n^{O(\epsilon^{4})}$. In additional, it can also be bounded by $|Q|\leq n^{O(\epsilon^{2})}\cdot |V(G)|$.
\item The recourse from $Q^{(i-1)}$ to $Q^{(i)}$ is $O(\kappa_{Q,\rcs}\cdot|\pi^{(i)}|)$, where $\kappa_{Q,\rcs} = \kappa_{\EH,\PC,\rcs}\cdot\kappa_{\phi}$
and $\kappa_{\phi} = n^{\epsilon^{2}}$. 
\item The initialization time is 
$|G^{(0)}|\cdot\poly(h)\cdot n^{O(\epsilon)}$
and the update time is
$|\pi^{(i)}|\cdot \poly(h)\cdot n^{O(\epsilon)}$.
\item Given a path $P_{Q}$ on $Q$ connecting some vertices $u,v\in V(G)$, the algorithm can compute a $u$-$v$ path $P_{G}$ on $G$ with $\ell_{G}(P_{G})\leq \lambda_{Q,\low}\cdot h\cdot \ell_{Q}(P_{Q})$ and $|P_{G}|\geq |P_{Q}|/2^{O(\lambda_{k})}$ in time $2^{O(1/\epsilon^{4})}\cdot|P_{G}|$.
\end{itemize}
\label{thm:HopReducingEmulator}
\end{theorem}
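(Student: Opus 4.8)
The plan is to construct the length-reducing emulator $Q$ directly from the dynamic length-constrained expander hierarchy of \Cref{thm:ExpanderHierarchy}, mirroring the overview's ``star union'' idea but now across both the certified-ED pairwise covers and the sparsifier vertex identifications. First I would instantiate \Cref{thm:ExpanderHierarchy} on $G$ with $\bar{k} = \lambda_k = O(1/\epsilon^2)$ levels, $\phi = \kappa_\phi = n^{\epsilon^2}$, and length parameters chosen geometrically: set $h_k = ((\text{const}\cdot\lambda_{\Gdiam}\cdot\lambda_{H,\up})^{k-1})\cdot h$ with $h_1 = h$ exactly as in \Cref{thm:LowDistanceOracle}, and $\bar h_k = O(\lambda_{\Gdiam}\cdot h_k)$. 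This yields, at each level $k$, a sparsifier $H_k$ (with $H_1=G$) and a certified-ED $(C_k,L_k,{\cal N}_k,{\cal R}_k,\Pi_{{\cal R}_k\to H_k})$ of a node-weighting $A_k\ge \deg_{H_k}+\mathds 1(V(H_k))$. By \Cref{claim:EmptyLevel}-style reasoning (sizes shrink by $n^{\Theta(\epsilon^2)}$ per level), at every time there is a top level $k_0\le\bar k$ with $C_{k_0}$ empty, so only $O(1/\epsilon^2)$ levels are nontrivial.

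Next I would define $Q$ as the union over all levels $k$ of two kinds of gadgets. For each cluster $S\in{\cal N}_{k,\vertex}$ (the restriction of ${\cal N}_k$ to vertices of $H_k$) introduce a fresh star center $q_S$ and connect it with unit-length edges to every vertex in $S$; this is the standard trick making the cluster diameter $2$ in $Q$ while the real $H_k$-diameter is $h_{\Gdiam,k}\le\lambda_{\Gdiam} h_k$. Additionally, for each $k<k_0$ and each vertex $v\in V(H_{k+1})\cap V(H_k)\supseteq L_k$, add a unit-length edge between the two copies of $v$ living in $H_k$ and $H_{k+1}$ (using the global-identifier machinery of \Cref{sect:DynSparsifier} to know these are ``the same'' vertex), or rather identify them so that $V(G)\subseteq V(Q)$ holds with $V(G)=V(H_1)$. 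The upper-bound direction then follows the warm-up query algorithm from the overview: if $\dist_G(u,v)\le h$, I climb the hierarchy, at level $k$ either finding a cluster $S\in{\cal N}_{k,\vertex}$ containing both current representatives (giving $\dist_Q\le 2$ via $q_S$, plus $2$ hops per level to move up) or moving up via a landmark in a covering cluster of $v_k$ (again $2$ hops to reach an $L_k$-vertex plus crossing into $H_{k+1}$). Since distances in $G$ at most $h$ propagate to distances at most $\bar h_{k}$ in $H_k$ and at most $h_{k+1}$ in $H_{k+1}$ (the sparsifier upper bound $\lambda_{H,\up}$), this reaches the top level in at most $\bar k$ rounds, yielding $\dist_Q(u,v)\le 4\lambda_k = \lambda_{Q,\up}$. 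The lower-bound direction is the robustness argument: every unit edge of $Q$ is either a star edge at level $k$ (two endpoints in a common cluster $S$, hence $H_k$-distance $\le h_{\Gdiam,k}\le\lambda_{\Gdiam} h_k$) or an identification edge (zero cost after unfolding), and each level-$k$ $H_k$-length translates down to $H_{k-1}$-length with a $\lambda_{H,\low}$ blowup via the sparsifier lower bound, so a $Q$-path of length $\ell$ unfolds to a $G$-path of length $\le \ell\cdot \lambda_{\rt,h}(t)\lambda_{\init,h}\lambda_{\dynED,h}\cdot 2^{O(\lambda_k)}(\lambda_{H,\up}\lambda_{H,\low})^{\lambda_k}\cdot h = \lambda_{Q,\low}\cdot h\cdot\ell$.

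For the dynamic maintenance I would run the hierarchy algorithm of \Cref{thm:ExpanderHierarchy} as a black box, and maintain $Q$ incrementally by translating each batch $\pi^{(i)}$ into the induced batch of updates to the pairwise covers ${\cal N}_k$ and to the sparsifier vertex sets $V(H_k)$; each virtual-node insertion/deletion in some ${\cal N}_{k,\vertex}$ corresponds to $O(1)$ edge changes around the relevant star center $q_S$ (with the minor degeneracy bookkeeping when $|S|$ crosses $1$, handled exactly as in \Cref{lemma:DynamicCoverToSparsifier}), and new/removed identification edges track $V(H_k)\cap V(H_{k+1})$. The recourse of $Q$ is then $O(\sum_k \recourse({\cal N}_k^{(i-1)}\to{\cal N}_k^{(i)}) + \sum_k |V(H_k^{(i)})\triangle V(H_k^{(i-1)})|)$, which by the hierarchy's recourse bound $\sum_k\recourse({\cal N}_k^{(i-1)}\to{\cal N}_k^{(i)})\le \kappa_{\EH,\PC,\rcs}\cdot|\pi^{(i)}|/\phi$ and $\phi=\kappa_\phi$ gives $\recourse(Q^{(i-1)}\to Q^{(i)})= O(\kappa_{\EH,\PC,\rcs}\cdot\kappa_\phi\cdot|\pi^{(i)}|) = O(\kappa_{Q,\rcs}\cdot|\pi^{(i)}|)$. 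Size: $|Q|\le\sum_k O(\size({\cal N}_{k,\vertex})) + \sum_k |V(H_k)| \le O(\lambda_k\cdot t\cdot\kappa_{\PC,\omega})\cdot|G| = \kappa_{Q,\size}\cdot|G|$, and the alternative bound $n^{O(\epsilon^2)}|V(G)|$ comes from $|V(H_k)|\le n^{O(\epsilon^4)k}|V(G)|$. Initialization and update times are inherited from \Cref{thm:ExpanderHierarchy} with $\max_k\{h_k,\bar h_k\} = n^{O(\epsilon)}\cdot h$, giving $|G^{(0)}|\cdot\poly(h)\cdot n^{O(\epsilon)}$ and $|\pi^{(i)}|\cdot\poly(h)\cdot n^{O(\epsilon)}$ (the $1/\phi^2 = n^{O(\epsilon^2)}$ factors fold into $n^{O(\epsilon)}$). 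Path-unfolding recursively invokes the router path-reporting (\Cref{thm:RouterPathReport}) for each traversed star and the sparsifier path-unfolding of \Cref{thm:ExpanderHierarchy}/\Cref{thm:NonHopReducingEmulator}, each $Q$-hop producing a router path of length $\le\lambda_{\rt,h}(f_k)$ embedded with length $\le h_{\emb,k}$, for a total $2^{O(1/\epsilon^4)}\cdot|P_G|$ time and the claimed $|P_G|\ge|P_Q|/2^{O(\lambda_k)}$ (each of the $O(\lambda_k)$ levels can at most halve the hop count, as in \Cref{lemma:DynamicCoverToSparsifier}).

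The main obstacle I anticipate is the interaction between reinitializations in the hierarchy and the global-identifier discipline: when \Cref{thm:NonHopReducingEmulator} rebuilds $H_{k+1}$ at some level (triggered by a too-large induced batch), the fresh star centers and the $H_{k}$-$H_{k+1}$ identification edges must be recreated without corrupting the global identifiers of surviving vertices at other levels, and the recourse charged to that batch must still be bounded by $\kappa_{Q,\rcs}|\pi^{(i)}|$ — which works only because the reinitialization condition forces $|H_{k+1}^{(i)}|\le |\pi^{(i)}_{k+1}|/\phi$ and the blowup $\kappa_{H,\rcs}^{\bar k}$ per level is absorbed into $\kappa_{\EH,\PC,\rcs}$. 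Carefully threading the freshness invariant (as proven inside \Cref{thm:ExpanderHierarchy}) through the emulator construction, and verifying that the star edges at level $k$ correctly certify $H_k$-diameter $\le h_{\Gdiam,k}$ via \Cref{lemma:CertifiedEDToDist} even after $C_k$ has accumulated new cuts, is where the bulk of the bookkeeping lies; the distance bounds themselves are essentially the warm-up argument from the overview applied verbatim.
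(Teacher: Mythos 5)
Your proposal takes essentially the same route as the paper: instantiate the dynamic expander hierarchy of \Cref{thm:ExpanderHierarchy} with $\lambda_k = O(1/\epsilon^2)$ levels, build $Q$ as the union of unit-edge star graphs over the clusters of each ${\cal N}_{\vertex,k}$, rely on the global-identifier discipline to have clusters at adjacent levels share vertices, and prove the two stretch directions by the warm-up induction (upper bound climbs to a level where $C_k$ is empty in at most $4\lambda_k$ hops; lower bound decomposes a $Q$-path level-by-level using the structural fact that $V(Q_k)\cap V(Q_{\ge k+1}) = V(H_k)\cap V(H_{k+1})$ to invoke the sparsifier's $\lambda_{H,\low}$ bound only on genuine terminals). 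The recourse, size and time bounds are inherited from the hierarchy in the same way.

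Three small remarks, none of which breaks the argument. First, you write $\phi = \kappa_\phi = n^{\epsilon^2}$, but $\phi$ must satisfy $\phi \le o(1/(t\cdot\kappa_{H,\size})) < 1$; the paper sets $\phi = 1/\kappa_\phi = 1/n^{\epsilon^2}$. Your later recourse arithmetic ($|\pi^{(i)}|/\phi = \kappa_\phi\cdot|\pi^{(i)}|$) already assumes the reciprocal, so this is a typo rather than a conceptual error, but worth correcting. Second, you waver between adding explicit identification edges between the $H_k$ and $H_{k+1}$ copies of a shared vertex and simply identifying them via the global-identifier machinery; you should commit to identification (as the paper does), since explicit edges would add spurious hops to every level crossing and inflate $\lambda_{Q,\up}$. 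Third, you import the length parameters $h_k = (O(\lambda_{\Gdiam}\lambda_{H,\up}))^{k-1}h$ from \Cref{thm:LowDistanceOracle}; the paper's emulator proof uses the tighter $h_k = h(2\lambda_{H,\up})^{k-1}$, $\bar h_k = 2h_k$. Your choice still yields $\max_k\{h_k,\bar h_k\} = n^{O(\epsilon)}\cdot h$ (because $\lambda_{\Gdiam}^{\lambda_k} = 2^{\poly(1/\epsilon)} \le n^{O(\epsilon)}$ by the lower bound on $\epsilon$) and the extra $\lambda_{\Gdiam}$ factor in the geometric ratio is absorbed into the $2^{\poly(1/\epsilon)}$ final bound on $\lambda_{Q,\low}$, so the statement survives with slightly looser constants.
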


\begin{proof}

Our construction of the emulator $Q$ is based on the length-constrained expander hierarchy. Precisely, we initialize and maintain a length-constrained expander hierarchy $({\cal H}, {\cal D})$ with $\bar{k} = \lambda_{k}$ levels. Recall $\lambda_{k} = O(1/\epsilon^{2})$ is a sufficiently large global parameter and we define a global parameter $\kappa_{\phi} = n^{\epsilon^{2}}$. We apply \Cref{thm:ExpanderHierarchy} on $G$ with parameters $\phi = 1/\kappa_{\phi} = 1/n^{\epsilon^{2}}$ and $\bar{k} = \lambda_{k}$. The additional length parameter for each level $1\leq k\leq \bar{k}$ is $\hat{h}_{k} = 4\cdot h_{k}$ (here we change the notation by passing $\hat{h}_{k}$ to the parameter $h_{k}$ in \Cref{thm:ExpanderHierarchy}) and $\bar{h}_{k} = 2\cdot h_{k}$, where $h_{k} = h\cdot(2\cdot\lambda_{H,\up})^{k-1}$. 

For each level $1\leq k\leq \bar{k}$, let $H_{k}$ be the vertex spasifier and let $(C_{k},L_{k},{\cal N}_{k},{\cal R}_{k},\Pi_{{\cal R}_{k}\to H_{k}})$ be the certified-ED of some node-weighting $A_{k}$ on $H_{k}$. We now list the properties we will exploit in the remaining proof as follows. Regarding the sparsifiers, $H_{1} = G$, and for each $1\leq k\leq \bar{k}-1$,
\begin{itemize}
\item $H_{k+1}$ is a $(\lambda_{H,\low},\lambda_{H,\up},\bar{h}_{k})$-sparsifier of $V(H_{k+1})\cap V(H_{k})$ on $H_{k}$ s.t. $L_{k}\subseteq V(H_{k+1})\cap V(H_{k})$ (in particular, $H_{k+1}$ is a sparsifier of $L_{k}$ on $H_{k}$),
\item $H_{k+1}$ has size $|H_{k+1}|\leq O(t\cdot\kappa_{H,\size}\cdot \phi\cdot|H_{k}|)$ and number of vertices $|V(H_{k+1})|\leq n^{O(\epsilon^{4})}\cdot |V(H_{k})|$,
\item $V(H_{k})\cap V(H_{k+1}) = V(H_{k})\cap \bigcup_{k+1\leq k'\leq\bar{k}}V(H_{k'})$.
\item Given a path $P_{H_{k+1}}$ on $H_{k+1}$ connecting some vertices $u,v\in V(H_{k+1})\cap V(H_{k})$, there is an algorithm that computes a $u$-$v$ path $P_{H_{k}}$ on $H_{k}$ with $\ell_{H_{k}}(P_{H_{k}})\leq \lambda_{H,\low}\cdot \ell_{H_{k+1}}(P_{H_{k+1}})$ and $|P_{H_{k}}|\geq |P_{H_{k+1}}|/2$. The path-reporting time is $|P_{H_{k+1}}|\cdot 2^{O(1/\epsilon^{4})}$.
\end{itemize}
The certified-EDs satisfy that, for each $1\leq k\leq \bar{k}$,
\begin{itemize}
\item $L_{k}$ is a landmark set of $C_{k}$ on $H_{k}$ with distortion $\sigma_{k}= O(\lambda_{\dynED,h}\cdot \hat{h}_{k}/\kappa_{\sigma})$. At the top level, $C_{\bar{k}}$ and $L_{\bar{k}}$ are empty.
\item ${\cal N}_{k}$ is a $b_{k}$-distributed $(h_{\cov,k}, h_{\sep,k},\omega_{k})$-pairwise cover of $A_{k}\geq \deg_{H_{k}} + \mathds{1}(V(H_{k}))$ on $H_{k}-C_{k}$ with
\[
b_{k} = O(t),\ h_{\cov,k} = \hat{h}_{k} = 4\cdot h_{k},\ \omega_{k} = O(t\cdot\kappa_{\PC,\omega}).
\]
\item The routers in ${\cal R}_{k}$ are maintained by \Cref{thm:Router} under at most $f_{k} = O(t)$ batched updates.
\item The embedding $\Pi_{{\cal R}_{k}\to H_{k}}$ has length $h_{\emb,k} = O(\lambda_{\init,h}\cdot\lambda_{\dynED,h}\cdot \hat{h}_{k})$.
\end{itemize}

\paragraph{Construction of $Q$.} For each level $1\leq k\leq \bar{k}$, let ${\cal N}_{\vertex,k} = ({\cal N}_{k})_{\mid \mathds{1}(V(H_{k}))}$ by restricting ${\cal N}_{k}$ on vertices $V(H_{k})$. Note that ${\cal N}_{\vertex,k}$ is a pairwise cover of $T_{k}$ with the same quality parameters as those of ${\cal N}_{k}$.

For each cluster $S\in{\cal N}_{\vertex,k}$, let $Q^{\star}_{S}$ be the star graph with $V(Q^{\star}_{S}) = S \cup\{v_{S}\}$, where $v_{S}$ is the \emph{artificial center} of $S$ (which is a newly created fresh vertex) and we call $S$ the \emph{branch-vertices} of $Q^{\star}_{S}$. The edges of 
$Q^{\star}_{S}$ are $E(Q^{\star}_{S}) = \{(v_{S},v)\mid v\in S\}$. Then we construct $Q$ as
\[
Q:=\bigcup_{1\leq k\leq \bar{k},S\in{\cal N}_{\vertex,k}}Q^{\star}_{S}.
\]
Note that, when taking the union of $Q^{\star}_{S}$, we use the global vertex identifiers to identify vertices. We emphasize that $Q$ is an unweighted graph and each edge in $E(Q)$ has unit length.

\begin{lemma}
\label{lemma:EmulatorCorrectness}
The graph $Q$ is a $(\alpha_{\low},\alpha_{\up},h)$-length-reducing emulator of $G$ with
\[
\alpha_{\up} = 4\bar{k},\ \alpha_{\low} = \lambda_{\rt,h}(t)\cdot\lambda_{\init,h}\cdot\lambda_{\dynED,h}\cdot 2^{O(\bar{k})}\cdot (\lambda_{H,\up}\cdot\lambda_{H,\low})^{\bar{k}}.
\]
\end{lemma}
\begin{proof}

Let $k_{0}$ be the minimum level s.t. $C_{k_{0}}$ and $L_{k_{0}}$ are empty in the up-to-date hierarchy. Note that $k_{0}$ must exist at all time (but may be different depending on the current time) by \Cref{claim:EmptyLevel}.

\

\noindent\textbf{Property (1).} The first property $V(G)\subseteq V(Q)$ is trivial, because at the first level, the union of star graphs of ${\cal N}_{\vertex,1}$ will contain $V(G)$.

\

\noindent\textbf{Property (2).} Consider two vertices $u,v\in V(G)$ with $\dist_{G}(u,v)\leq h$. We will prove $\dist_{Q}(u,v)\leq 4\bar{k}$ by induction. Let the induction hypothesis be, for each level $1\leq k\leq k_{0}$ and each $u,v\in T_{k}$ with $\dist_{H_{k}}(u,v)\leq h_{k}$, there is $\dist_{Q}(\tau_{k}(u),\tau_{k}(v))\leq 4(k_{0}-k+1)$.

For the base case $k=k_{0}$, note that $C_{k}$ is empty, so ${\cal N}_{\tmn,k}$ is a pairwise cover of $T_{k}$ on $H_{k}$ with cover radius $h_{\cov,k}= 4\cdot h_{k}$. Therefore, $u,v\in T_{k}$ will be contained by some cluster $S\in{\cal N}_{\tmn,k}$, and by our construction, the star graph $Q^{\star}_{S}$ has a path connecting $\tau_{k}(u)$ and $\tau_{k}(v)$ with length $2\leq 4(\bar{k}-k+1)$.

For an inductive step at level $k$ s.t. $1\leq k\leq k_{0}-1$, ${\cal N}_{\tmn,k}$ is a pairwise cover of $T_{k}$ on $H_{k}-C_{k}$ with cover radius $h_{\cov,k}\geq 4\cdot h_{k}$. If $\dist_{H_{k}-C_{k}}(u,v)\leq 4\cdot h_{k}$, then some cluster $S\in{\cal N}_{\tmn,k}$ will contain $u$ and $v$, and $\dist_{Q}(\tau_{k}(u),\tau_{k}(v))\leq \dist_{Q^{\star}_{S}}(\tau_{k}(u),\tau_{k}(v))\leq 2\leq 4(\bar{k}-k+1)$. 

Otherwise, we have $\dist_{H_{k}-C_{k}}(u,v)>4h_{k}$. Consider the shortest $u$-$v$ path $P$ in $H_{k}$. By \Cref{claim:HREProperty2}, we can find a vertex $u'\in P$ with $\dist_{H_{k}-C_{k}}(u,u')\leq 2h_{k}$ and also a landmark $w_{u}\in L_{k}$ with $\dist_{H_{k}-C_{k}}(u',w_{u})\leq h_{k}/100$. Similarly, there are such $v'\in P$ and $w_{v}\in L_{k}$ with $\dist_{H_{k}-C_{k}}(v,v')\leq 2h_{k}$ and $\dist_{H_{k}-C_{k}}(v',w_{v})\leq h_{k}/100$.

\begin{claim}
There is a vertex $u'\in P$ s.t. $\dist_{H_{k}-C_{k}}(u,u')\leq 2h_{k}$ and there is a landmark $w_{u}\in L_{k}$ with $\dist_{H_{k}-C_{k}}(w_{u},u')\leq h_{k}/100$.
\label{claim:HREProperty2}
\end{claim}
\begin{proof}

Let $P_{u}$ be the longest prefix of $P$ from $u$ with length at most $2h_{k}$ in $H_{k}-C_{k}$, and let $P^{+}_{u}$ be the prefix from extending $P_{u}$ along $P$ by one more edge. Observe that $C_{k}(P^{+}_{u}) \geq \ell_{H_{k}-C_{k}}(P^{+}_{u}) - \ell_{H_{k}}(P^{+}_{u})\geq 2h_{k}-h_{k}\geq h_{k}$. We can simply pick $u'\in P_{u}$ incident to some $e\in \supp(C_{k})\cap P^{+}_{u}$, and let $w_{u}\in L_{k}$ be the landmark of $u'$ with $\dist_{H_{k}-C_{k}}(w_{u},u')\leq \sigma_{k}\leq h_{k}/100$.

\end{proof}

Now, because $\dist_{H_{k}-C_{k}}(u,w_{u})\leq 4h_{k} \leq h_{\cov,k}$, there is a cluster $S\in{\cal N}_{\vertex,k}$ containing $u$ and $w_{u}$, which implies 
\[
\dist_{Q}(u,w_{u})\leq 2\text{ and similarly }\dist_{Q}(v,w_{v})\leq 2.
\]
Also, $\dist_{H_{k}}(w_{u},w_{v})\leq \dist_{H_{k}-C_{k}}(w_{u},u') + \dist_{H_{k}}(u',v') + \dist_{H_{k}-C_{k}}(v',w_{v})\leq 1.02 h_{k}\leq 2h_{k} = \bar{h}_{k}$. Since $H_{k+1}$ is always a $(\lambda_{H,\low},\lambda_{H,\up},\bar{h}_{k})$-sparsifier of $L_{k}$ on $H_{k}$, we have $\dist_{H_{k+1}}(w_{u},w_{v})\leq \lambda_{H,\up}\cdot \bar{h}_{k} = 2\cdot\lambda_{H,\up}\cdot h_{k} =  h_{k+1}$, so 
\[
\dist_{Q}(w_{u},w_{v})\leq 4(k_{0}-k)
\]
by the induction. Finally, we can conclude that $\dist_{Q}(u,v)\leq \dist_{Q}(u,w_{u}) + \dist_{Q}(w_{u},w_{v}) + \dist_{Q}(w_{v},v)\leq 2 + 4(k_{0}-k) + 2 = 4(k_{0}-k+1)$ as desired.

\

\noindent\textbf{Property (3).} Consider an arbitrary pair of vertices $u,v\in V(G)$. We will show $\dist_{G}(u,v)\leq \alpha_{\low}\cdot h\cdot \dist_{Q}(u,v)$ by induction.

We begin with some notations. For each level $k$, let
\[
Q_{k} = \bigcup_{S\in{\cal N}_{k}} Q^{\star}_{S},\text{ and }Q_{\geq k} = \bigcup_{k'\geq k,S\in{\cal N}_{k'}} Q^{\star}_{S}.
\]
Let the induction hypothesis be, for each level $1\leq k\leq\bar{k}$ and each pair of vertices $u,v\in V(H_{k})$, we have
\[
\dist_{H_{k}}(u,v)\leq \beta\cdot (2\cdot \lambda_{H,\up}\cdot\lambda_{H,\low})^{\bar{k}-k}\cdot h_{k}\cdot \dist_{Q_{\geq k}}(u,v).
\]
where $\beta = O(\lambda_{\rt,h}(t)\cdot\lambda_{\init,h}\cdot\lambda_{\dynED,h})$.

\underline{The Base Case.} For the base case $k = \bar{k}$, consider the $u$-$v$ shortest path on $Q_{\geq k}$, denoted by $P_{Q_{\geq k}}(u,v)$. Note that $Q_{\geq k} = Q_{k}$ in this case, and $Q_{k}$ is a union of stars whose branch-vertices are in $V(H_{k})$. If we partition $P_{Q_{\geq k}}(u,v)$ into subpaths with endpoints in $V(H_{k})$ but no internal vertex in $V(H_{k})$, then each subpath $P'$ (with endpoints $u',v'$) has length exactly $2$ and it is contained by some $Q^{\star}_{S}\subseteq Q_{k}$ for some $S\in {\cal N}_{\vertex,k}$ s.t. $u',v'\in S$. Thus, by \Cref{lemma:CertifiedEDToDist},
\begin{align*}
\dist_{H_{k}}(u',v')&\leq \lambda_{\rt,h}(f_{k})\cdot h_{\emb,k}\leq O(\lambda_{\rt,h}(t)\cdot\lambda_{\init,h}\cdot\lambda_{\dynED,h}\cdot h_{k}).
\end{align*}
and because the number of such subpaths $P'$ is $\dist_{Q_{\geq k}}(u,v)/2$, we have 
\[
\dist_{H_{k}}(u,v)\leq O(\lambda_{\rt,h}(t)\cdot\lambda_{\init,h}\cdot\lambda_{\dynED,h}\cdot h_{k}\cdot \dist_{Q_{\geq k}}(u,v))
\]
as desired.

\underline{The Inductive Step.} For an inductive step $1\leq k \leq \bar{k}-1$, again we consider the $u$-$v$ shortest path $P_{Q_{\geq k}}(u,v)$ on $Q_{\geq k}$. We partition $P_{Q_{\geq k}}(u,v)$ into (maximal) subpaths totally on $Q_{k}$ or totally on $Q_{\geq k+1}$.

\begin{claim}
$V(Q_{k})\cap V(Q_{\geq k+1}) = V(H_{k})\cap V(H_{k+1})$.
\label{claim:QcapQ}
\end{claim}
\begin{proof}
By the construction of $Q$, $V(Q_{k})$ is the union of $V(H_{k})$ and artificial centers of stars, and similarly $V(Q_{\geq k+1})$ is the union of $\bigcup_{k+1\leq k'\leq\bar{k}}V(H_{k'})$ and artificial centers. Because the artificial centers of $V(Q_{k})$ and $V(Q_{\geq k+1})$ are disjoint, we have $V(Q_{k})\cap V(Q_{\geq k+1}) = V(H_{k})\cap \bigcup_{k+1\leq k'\leq \bar{k}} V(H_{k'}) = V(H_{k})\cap V(H_{k+1})$ by the guarantee of sparsifiers.
\end{proof}

For each subpath $P'_{\geq k+1}$ (with endpoints $u',v'$) on $Q_{\geq k+1}$, we have $u',v'\in V(H_{k})\cap V(H_{k+1})$ by \Cref{claim:QcapQ}. Then, the induction hypothesis at level $k+1$ tells
\[
\dist_{H_{k+1}}(u',v')\leq \beta\cdot (2\cdot \lambda_{H,\up}\cdot\lambda_{H,\low})^{\bar{k}-k-1}\cdot h_{k+1}\cdot \dist_{Q_{\geq k+1}}(u',v')
\]
By property (2) in \Cref{def:Sparsifier} of the sparsifier $H_{k+1}$ and the fact that $u',v'\in V(H_{k})\cap V(H_{k+1})$, we have 
\begin{align*}
\dist_{H_{k}}(u',v')&\leq \lambda_{H,\low}\cdot\dist_{H_{k+1}}(u',v')\\
&\leq \beta\cdot\lambda_{H,\low}\cdot(2\cdot\lambda_{H,\up}\cdot\lambda_{H,\low})^{\bar{k}-k-1}\cdot h_{k+1}\cdot \dist_{Q_{\geq k+1}}(u',v')\\
&\leq \beta\cdot\lambda_{H,\low}\cdot(2\cdot\lambda_{H,\up}\cdot\lambda_{H,\low})^{\bar{k}-k-1}\cdot (2\cdot\lambda_{H,\up}\cdot h_{k})\cdot \dist_{Q_{\geq k+1}}(u',v')\\
&\leq \beta\cdot(2\cdot\lambda_{H,\up}\cdot\lambda_{H,\low})^{\bar{k}-k}\cdot h_{k}\cdot \dist_{Q_{\geq k}}(u',v'),
\end{align*}
where the last inequality is by $\dist_{Q_{\geq k+1}}(u',v') = \dist_{Q_{\geq k+1}}(P'_{\geq k+1}) = \dist_{Q_{\geq k}}(P'_{\geq k+1}) = \dist_{Q_{\geq k}}(u',v')$. 

Next, for each subpath $P'_{k}$ (with endpoints $u',v'$) totally on $Q_{k}$, we have $u',v'\in V(H_{k})$ by \Cref{claim:QcapQ} and the fact that $u,v\in V(H_{k})$. An argument similar to the base case can show 
\[
\dist_{H_{k}}(u',v')\leq \beta\cdot h_{k}\cdot\dist_{Q_{k}}(u',v')= \beta\cdot h_{k}\cdot \dist_{Q_{\geq k}}(u',v').
\]
Lastly, we can conclude
\[
\dist_{H_{k}}(u,v)\leq \beta\cdot(2\cdot\lambda_{H,\up}\cdot\lambda_{H,\low})^{\bar{k}-k}\cdot h_{k}\cdot \dist_{Q_{\geq k}}(u,v)
\]
by concatenating the subpaths. 

\end{proof}

\paragraph{The Size of $Q$.} By the construction, we have
\[
|Q|\leq \sum_{1\leq k\leq \bar{k}}\size({\cal N}_{\vertex,k})\leq \sum_{k}\omega_{k}\cdot|V(H_{k})|\leq O(\lambda_{k}\cdot t\cdot\kappa_{\PC,\omega}\cdot |G|),
\]
where the last inequality is because $|V(H_{k})|\leq |H_{k}|\leq |G|$ (since for each $k$, $|H_{k+1}|\leq O(t\cdot\kappa_{H,\size}\cdot\phi\cdot H_{k})\leq |H_{k}|$).

Besides, from $|V(H_{k+1})|\leq n^{O(\epsilon^{4})}\cdot |V(H_{k})|$, we have $|V(H_{k})|\leq n^{O(\lambda_{k}/\epsilon^{4})}\cdot|V(G)| = n^{O(\epsilon^{2})}\cdot |V(G)|$. Therefore, we can claim another bound $|Q|\leq n^{O(\epsilon^{2})}\cdot |V(G)|$.

\paragraph{The Recourse from $Q^{(i-1)}$ to $Q^{(i)}$.} By our construction of $Q$, each vertex insertion/deletion from ${\cal N}^{(i-1)}_{\vertex,k}$ to ${\cal N}^{(i)}_{\vertex,k}$ will cause at most constant number of unit updates from $Q^{(i-1)}$ to $Q^{(i)}$. Therefore,
\begin{align*}
\recourse(Q^{(i-1)}\to Q^{(i)}) &\leq \sum_{1\leq k\leq \bar{k}}\recourse({\cal N}^{(i-1)}_{\vertex,k}\to {\cal N}^{(i)}_{\vertex,k})\leq \sum_{1\leq k\leq \bar{k}}\recourse({\cal N}^{(i-1)}_{k}\to {\cal N}^{(i)}_{k})\\
&\leq O(\kappa_{\EH,\NC,\rcs}\cdot|\pi^{(i)}|/\phi) = O(\kappa_{\EH,\NC,\rcs}\cdot\kappa_{\phi}\cdot|\pi^{(i)}|).
\end{align*}

\paragraph{The Initialization and Update time.} The running time is dominated by the initialization and maintenance of the length-constrained expander hierarchy. Hence by \Cref{thm:ExpanderHierarchy}, the initialization time is 
$|E(G^{(0)})|\cdot\poly(h)\cdot n^{O(\epsilon)}$
and the update time is
$|\pi^{(i)}|\cdot \poly(h)\cdot n^{O(\epsilon)}$.

\paragraph{Path Unfolding.} Given a path $P_{Q}$ connecting $u,v\in V(G)$ in $Q$, the $u$-$v$ path $P_{G}$ in $G$ can be generated by the following algorithm, which basically follows the analysis of property (3) in \Cref{lemma:EmulatorCorrectness}. We will use the same notations in the proof of property (3) in \Cref{lemma:EmulatorCorrectness}.

Precisely, we consider the levels from top to bottom (i.e. from level $\bar{k}$ to level $1$). At each level $k$, we implement a subroutine that given a path $P_{Q_{\geq k}}$ on $Q_{\geq k}$ connecting two vertices $u,v\in V(H_{k})$, returns a $u$-$v$ path $P_{H_{k}}$ on $H_{k}$ with $\ell_{H_{k}}(P_{H_{k}})\leq \beta\cdot (2\cdot\lambda_{H,\up}\cdot\lambda_{H,\low})^{\bar{k}-k}\cdot h_{k}\cdot \ell_{Q_{\geq k}}(P_{Q_{\geq k}})$ and $|P_{H_{k}}|\geq |P_{Q\geq k}|/2^{\bar{k}-k+1}$ in time $|P_{H_{k}}|\cdot 2^{O(1/\epsilon^{4})}\cdot (2^{\bar{k}-k+1}-1)$.

For the top level $k = \bar{k}$, we have $Q_{\geq k} = Q_{k}$ and we decompose the path $P_{Q_{\geq k}}$ into \emph{star subpaths} s.t. each subpath $P'_{Q_{k}}$ connects two vertices $u',v'\in V(H_{k})$ and is contained by some star in $Q_{k}$. Each star subpath $P'_{Q_{k}}$ can be unfolded to a $u'$-$v'$ path $P'_{H_{k}}$ on $H_{k}$ with $\ell_{H_{k}}(P'_{H_{k}})\leq \lambda_{\rt}(f_{k})\cdot h_{\emb,k} \leq \beta\cdot h_{k}\leq \beta\cdot h_{k}\cdot\ell_{Q_{k}}(P'_{Q_{k}})$ (the last inequality is because each star subpath has length exactly 2) and $|P'_{H_{k}}|\geq |P'_{Q_{k}}|/2$ (which needs to take care of the case $u'=v'$). This can be done by exploiting the router of the cluster of this star and the embedding in time $2^{O(1/\epsilon^{4})} + O(|P'_{H_{k}}|)$. A similar argument has appeared several times in e.g. the path-unfolding paragraph in the proof of \Cref{thm:NonHopReducingEmulator}, so we omit it here. By substituting each star subpath $P'_{Q_{k}}$ on $P_{Q_{\geq k}}$ with $P'_{H_{k}}$, we can obtain the desired $P_{H_{k}}$. The total running time is $|P_{Q_{k}}|\cdot 2^{O(1/\epsilon^{4})} + \sum |P'_{H_{k}}| = |P_{H_{k}}|\cdot 2^{O(1/\epsilon^{4})}$ because we invoke the path-reporting algorithm of routers $O(|P_{Q_{k}}|)$ times.

For an arbitrary level $1\leq k\leq \bar{k}-1$, we decompose the path $P_{Q_{\geq k}}$ into subpaths totally on $Q_{k}$ or totally on $Q_{\geq k+1}$. 
\begin{enumerate}
\item For each subpath $P'_{Q_{\geq k+1}}$ totally on $Q_{\geq k+1}$, note that it has endpoints $u',v'\in V(Q_{\geq k+1})\cap V(Q_{k}) = V(H_{k+1})\cap V(H_{k})$ by \Cref{claim:QcapQ}. The subroutine at level $k+1$ will give a $u'$-$v'$ path $P'_{H_{k+1}}$ on $H_{k+1}$, and the path unfolding algorithm for sparsifiers will further give a $u'$-$v'$ path $P'_{H_{k}}$ on $H_{k}$ s.t. $\ell_{H_{k}}(P'_{H_{k}})\leq \lambda_{H,\low}\cdot \ell_{H_{k+1}}(P'_{H_{k+1}})$ and $|P'_{H_{k}}|\geq |P'_{H_{k+1}}|/2$.
\item For each subpath $P'_{Q_{k}}$ totally on $Q_{k}$, it has endpoints $u',v'\in V(H_{k})$. Similar to the algorithm for the top level $\bar{k}$, we can get a $u'$-$v'$ path $P'_{H_{k}}$ on $H_{k}$ with $\ell_{H_{k}}(P'_{H_{k}}) \leq \beta\cdot h_{k}\cdot \ell_{Q_{k}}(P'_{Q_{k}})$ and $|P'_{H_{k}}|\geq |P'_{Q_{k}}|/2$.
\end{enumerate}
By substituting each subpath of $P_{Q_{\geq k}}$ with its unfolded path on $H_{k}$, we can obtain the desired $P_{H_{k}}$. The running time for this subroutine is 
\[
\sum_{\text{each }P'_{Q_{\geq k+1}}} (|P'_{H_{k+1}}|\cdot 2^{O(1/\epsilon^{4})}\cdot (2^{\bar{k}-k}-1) + |P'_{H_{k}}|\cdot 2^{O(1/\epsilon^{4})}) + \sum_{\text{each }P'_{Q_{k}}} |P'_{H_{k}}|\cdot 2^{O(1/\epsilon^{4})},
\]
which is at most $|P_{H_{k}}|\cdot 2^{O(1/\epsilon^{4})}\cdot (2^{\bar{k}-k+1}-1)$ because each $P'_{H_{k+1}}$ in case 1 has $|P'_{H_{k+1}}|\leq 2|P'_{H_{k}}|$

Finally, the subroutine for level $1$ can return the $u$-$v$ path $P_{G}$ in $G$ with guarantee $\ell_{G}(P_{G})\leq \kappa_{Q,\low}\cdot\ell_{Q}(P_{Q})$ and $|P_{G}|\geq |P_{Q}|/2^{O(\lambda_{k})}$. The path-unfolding time is $|P_{G}|\cdot 2^{O(1/\epsilon^{4})}\cdot 2^{\lambda_{k}} = |P_{G}|\cdot 2^{O(1/\epsilon^{4})}$ because $\lambda_{k} = O(1/\epsilon^{2})$.

\end{proof}

\section{Fully Dynamic Distance Oracles}
\label{sect:Oracle}

In this section, we finally prove our main result:

\begin{theorem}
\label{thm:MainDetailed}
Let $G$ be a fully dynamic undirected graph under edge insertions/deletions and isolated vertex insertions/deletions s.t. edge lengths on $G$ are all polynomially bounded positive integers. Let $m_{0}$ denote the initial graph size and $n$ be an upper bound of number of vertices. 

For some sufficiently small constant $c>0$, given a parameter $1/\log^{c}n\leq \epsilon\leq 1$, there is an approximate deterministic fully dynamic distance oracle supports the following query. At any time, given two vertices $u,v$ in the current $G$, if $u$ and $v$ are disconnected in $G$, the oracle answers $u$ and $v$ are disconnected. Otherwise, the oracle answers an approximate distance $\wtilde{d}$ s.t. $\dist_{G}(u,v)\leq \wtilde{d}\leq 2^{\poly(1/\epsilon)}\cdot\dist_{G}(u,v)$ in $O(\log\log n/\epsilon^{4})$ query time. The oracle can further output
\begin{itemize}
\item a (possibly non-simple) path $P$ connecting $u$ and $v$ on $G$ with $\ell_{G}(P)\leq 2^{\poly(1/\epsilon)}\cdot\dist_{G}(u,v)$ in additional $O(|P|)$ time, and
\item a simple path $P_{\sp}$ connecting $u$ and $v$ on $G$ with $\ell_{G}(P_{\sp})\leq 2^{\poly(1/\epsilon)}\cdot\dist_{G}(u,v)$ in additional $O(|P_{\sp}|\cdot n^{\epsilon})$ time.
\end{itemize}
The oracle has initialization time $O(m_{0}\cdot n^{\epsilon})$ and worst-case update time $O(n^{\epsilon})$.
\end{theorem}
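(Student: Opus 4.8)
The plan is to assemble \Cref{thm:MainDetailed} from the two main dynamic objects already built: the length-reducing emulator of \Cref{thm:HopReducingEmulator} and the bounded-distance oracle of \Cref{thm:LowDistanceOracle}, glued together by the standard \emph{stacking} and \emph{online-batch-to-worst-case} reductions. First I would fix the length scale $h = n^{\epsilon}$ and build a chain of emulators $G = G_{1}, G_{2}, \dots, G_{\bar x}$ where $G_{x+1}$ is the $(\lambda_{Q,\low},\lambda_{Q,\up},h)$-length-reducing emulator of $G_{x}$ from \Cref{thm:HopReducingEmulator}, with $\bar x = \lambda_{x} = O(1/\epsilon)$ chosen large enough that $h^{\bar x} = n^{\Omega(1)}$ exceeds the polynomial bound on distances in $G$. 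Since each emulator multiplicatively shrinks distances by roughly an $h$ factor (condition (2) of the definition) and inflates them back by at most $\lambda_{Q,\low}\cdot h$ (condition (3)), the top graph $G_{\bar x}$ has maximum relevant distance at most $h$, and for any pair $u,v\in V(G)$ there is a level $x$ where $\dist_{G_{x}}(u,v)\le h$; querying the bounded-distance oracle of \Cref{thm:LowDistanceOracle} on $G_{x}$ at scale $h$ certifies $\dist_{G_{x}}(u,v)$ up to a $\lambda_{\query,\alpha}$ factor, which unfolds level-by-level back to an $O(\kappa_{\query,\alpha})\cdot(\kappa_{\stk,\low}\kappa_{\stk,\up})^{\lambda_{x}}\cdot 2^{O(1/\epsilon^{4})} = 2^{\poly(1/\epsilon)}$ approximation of $\dist_{G}(u,v)$.

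Next I would handle the three deliverables of the query. For the \emph{distance value}: binary-search over the $O(1/\epsilon)$ scales — actually over $\log\log n$ many geometrically-spaced distance guesses within the bounded-distance oracle at the appropriate level — to pin down $\wtilde d$; this is where the $O(\log\log n/\epsilon^{4})$ query time comes from, since each oracle query costs $O(1/\epsilon^{3})$ by \Cref{thm:LowDistanceOracle} and the emulator/stacking levels contribute only $O(1/\epsilon)$ multiplicatively and $\log\log n$ additively via the search. For the \emph{non-simple path}: the bounded-distance oracle outputs a path in $G_{x}$ in time $2^{O(1/\epsilon^{4})}\cdot|P|$, and each emulator level's path-unfolding (the last bullet of \Cref{thm:HopReducingEmulator}) turns a path in $G_{x+1}$ into one in $G_{x}$ with length blown up by $\lambda_{Q,\low}\cdot h$ and edge count shrunk by at most $2^{O(\lambda_{k})}$, in time $2^{O(1/\epsilon^{4})}\cdot|P_{G_{x}}|$; composing $\bar x$ levels gives the final $G$-path in $O(|P|)$ time (the $2^{\poly(1/\epsilon)}$ overhead is absorbed since $\bar x,\lambda_{k}=\poly(1/\epsilon)$). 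For the \emph{simple path}: a non-simple $u$-$v$ walk of length $\ell$ in $G$ can be shortcut to a simple path of no greater length by a single pass that, for each vertex, keeps only the subwalk between its first and last occurrence; since $G$ is dynamic we cannot precompute adjacency in $O(1)$, but we can afford $O(n^{\epsilon})$ per edge of $P_{\sp}$ to test/locate repeated vertices using the data structures already maintained, yielding the claimed $O(|P_{\sp}|\cdot n^{\epsilon})$.

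Finally I would apply the online-batch-to-fully-dynamic reduction of \Cref{sect:Reduction} (as cited, from \cite{NanongkaiSW17,JS22}): every component above (\Cref{thm:HopReducingEmulator,thm:LowDistanceOracle}, hence the stacked oracle) is an online-batch dynamic algorithm with worst-case per-batch update time $|\pi^{(i)}|\cdot\poly(h)\cdot n^{O(\epsilon)} = |\pi^{(i)}|\cdot n^{O(\epsilon)}$ (since $h=n^{\epsilon}$ makes $\poly(h)=n^{O(\epsilon)}$) and the recourse bounds are all $n^{O(\epsilon)}$ per unit update, so the reduction with $t = \Theta(1/\epsilon)$ batches produces a fully dynamic algorithm with worst-case update time $n^{O(\epsilon)}$; rescaling the constant in the exponent (i.e. replacing $\epsilon$ by $\Theta(\epsilon)$, harmless since $\epsilon$ is a free parameter in $[1/\log^{c}n,1]$) gives exactly $O(n^{\epsilon})$, and the initialization time $m_{0}\cdot\poly(h)\cdot n^{O(\epsilon)} = O(m_{0}^{1+\epsilon})$. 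Connectivity is read off for free: $u,v$ are disconnected in $G$ iff the query at every scale reports \textsc{Far} and in particular the top-level bounded oracle on $G_{\bar x}$ (which has no cut obstruction beyond disconnection) reports \textsc{Far}, so disconnection is detected whenever $\dist_{G}(u,v)=\infty$. The main obstacle I anticipate is bookkeeping the global vertex identifiers consistently across the $\bar x$ stacked emulators and the hierarchy inside each (the excerpt flags this as delicate, since taking unions of graphs built on different vertex sets is only well-defined with a global identifier pool), together with verifying that the recourse does not blow up multiplicatively past $n^{o(1)}$ when $\bar x=O(1/\epsilon)$ emulators are stacked — this is precisely the $n^{O(\epsilon^{2}\cdot\bar x)} = n^{O(\epsilon)}$ calculation sketched in the overview, and making it rigorous (choosing $c$ small enough that all $(\log n)^{\poly(1/\epsilon)}$ factors stay below $n^{\poly(\epsilon)}$) is the one place real care is needed rather than routine composition.
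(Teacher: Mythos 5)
Your high-level plan matches the paper's: stack $O(1/\epsilon)$ levels at scale $h=n^{\epsilon}$, query the bounded-distance oracle at the right level with a binary search over $\log h = O(\epsilon\log n)$ distance thresholds, unfold paths level by level, and convert online-batch to worst-case via \Cref{lemma:Reduction}. Two issues.

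First, a fixable imprecision in the chain definition. You write that $G_{x+1}$ is the $(\lambda_{Q,\low},\lambda_{Q,\up},h)$-length-reducing emulator of $G_{x}$, i.e.\ just $Q$ from \Cref{thm:HopReducingEmulator}. But property (2) of a length-reducing emulator only bounds $\dist_{Q}(u,v)$ when $\dist_{G_{x}}(u,v)\le h$; for farther pairs $\dist_{Q}(u,v)$ has no upper bound, so the chain need not reduce diameter. The paper instead sets $G_{x+1}$ to be the \emph{distance-reduced} graph $G_{\stk}=Q\cup E_{\heavy}$ of \Cref{thm:Stacking}, where the heavy edges (original edges of length $\ge h/3$, rescaled) give the uniform bound $\dist_{G_{\stk}}(u,v)\le\lambda_{\stk,\up}\max\{\dist_{G_{x}}(u,v)/h,1\}$ for \emph{all} pairs; this is exactly what makes \Cref{claim:LastGraphMaxDis} (the top level has all distances $\le h$) go through. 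You do invoke the $\kappa_{\stk}$ constants, so you likely intend this, but the construction as written does not work.

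Second, a genuine gap: the simple-path report. You propose to take the (possibly non-simple) path $P$ and shortcut it to a simple path $P_{\sp}$ by a single pass keeping only first-to-last occurrences of each vertex, charging $O(n^{\epsilon})$ per vertex lookup. But that pass necessarily touches every edge of $P$, so it costs at least $\Omega(|P|)$, and there is no bound relating $|P|$ to $|P_{\sp}|$: the walk produced by the unfolding can revisit vertices an unbounded number of times, so $|P|$ may be polynomially larger than $|P_{\sp}|$. This is exactly the obstruction the paper flags at the start of \Cref{sect:ReportSimplePath}. The fix (taken from \cite{chuzhoy2021decremental}) is different in kind: maintain an additional $O(\log n/\epsilon)$ distance oracles ${\cal O}_{y,z}$ on graphs $G_{y,z}$ in which every edge length is inflated by an additive $d_{y}/L_{z}$, with $d_{y}=2^{y}$ and $L_{z}=\lceil n^{\epsilon}\rceil^{z}$. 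A phase loop increases $z$ and picks $y$ so that if ${\cal O}_{y,z}$ returns a path of length at most $2\alpha^{\star}d_{y}$, the additive inflation forces it to have at most $2\alpha^{\star}L_{z}$ edges, while an invariant (any short path must already have $\ge L_{z-1}$ edges) certifies $|P_{\sp}|\ge L_{z-1}$, so the total reporting work $\sum_{z'\le z} O(\alpha^{\star}L_{z'})$ is $|P_{\sp}|\cdot n^{O(\epsilon)}$. Your proposal has no substitute for this counting mechanism.
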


Our approach is to ``stack'' length-reducing emulators on top of each other. These emulators allow us to assume that the distances are small, so we can apply the distance oracle for the short-distance regime from \Cref{sect:OracleShort}.
The approach is standard and was used in many previous works
\cite{chechik,chuzhoy2021decremental,bernstein2022deterministic,lkacki2022near,chuzhoy2023new}.

\subsection{One Level of the Stacking}
\label{sect:Stacking}

\begin{lemma}
Let $G$ be a graph under $t$ batches of updates $\pi^{(1)},\pi^{(2)},...,\pi^{(t)}$ of edge insertions and edge deletions. Given a parameter $h$, there is an algorithm that maintains a \text{distance-reduced} graph $G_{\stk}$ s.t. $V(G)\subseteq V(G_{\stk})$ and for each $u,v\in V(G)$, 
\begin{align*}
\dist_{G}(u,v)&\leq \lambda_{\stk,\low}\cdot h\cdot \dist_{G_{\stk}}(u,v),\text{ and }\\
\dist_{G_{\stk}}(u,v)&\leq \lambda_{\stk,\up}\cdot\max\{ \dist_{G}(u,v)/h,1\},
\end{align*}
where $\lambda_{\stk,\low} = \lambda_{Q,\low} = 2^{\poly(1/\epsilon)}$ and $\lambda_{\stk,\up} = O(\lambda_{Q,\up}) = O(1/\epsilon^{2})$. The size of $G_{\stk}$ is $|G_{\stk}|\leq \kappa_{\stk,\size}\cdot |G|$, where $\kappa_{\stk,\size} = \kappa_{Q,\size} + 1$. The algorithm further guarantee the following.
\begin{itemize}
\item The recourse from $G_{\stk}^{(i-1)}$ to $G_{\stk}^{(i)}$ is at most $\kappa_{\stk,\rcs}\cdot|\pi^{(i)}|$, where $\kappa_{\stk,\rcs} = \kappa_{Q,\rcs} + 1$. 
\item The initialization time is
$|G^{(0)}|\cdot\poly(h)\cdot n^{O(\epsilon)}$
and the update time is
$|\pi^{(i)}|\cdot \poly(h)\cdot n^{O(\epsilon)}$.
\item Given a path $P_{G_{\stk}}$ on $G_{\stk}$ connecting $u,v\in V(G)$, the algorithm can compute a $u$-$v$ path $P_{G}$ on $G$ with $\ell_{G}(P_{G})\leq \lambda_{\stk,\low}\cdot h\cdot \ell_{G_{\stk}}(P_{G_{\stk}})$ and $|P_{G}|\geq |P_{G_{\stk}}|/2^{O(\lambda_{k})}$ in time $2^{O(1/\epsilon^{4})}\cdot |P_{G}|$.
\end{itemize}
\label{thm:Stacking}
\end{lemma}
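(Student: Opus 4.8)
The plan is to combine a dynamic length-reducing emulator (\Cref{thm:HopReducingEmulator}) with one additional ``virtual-vertex'' layer that glues $V(G)$ into the emulator so that the resulting graph simultaneously reduces distances by a factor $h$ and keeps the original vertices. Concretely, I would first invoke \Cref{thm:HopReducingEmulator} on $G$ with length parameter $h$ and congestion parameter $\phi = 1/\kappa_\phi = 1/n^{\epsilon^2}$ to obtain and maintain a $(\lambda_{Q,\low},\lambda_{Q,\up},h)$-length-reducing emulator $Q$ of $G$, which satisfies $V(G)\subseteq V(Q)$, $\dist_Q(u,v)\le \lambda_{Q,\up}$ whenever $\dist_G(u,v)\le h$, and $\dist_G(u,v)\le \lambda_{Q,\low}\cdot h\cdot\dist_Q(u,v)$ for all $u,v\in V(G)$. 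I would then simply take $G_{\stk} = Q$. Since $V(G)\subseteq V(Q)$, property~(1) of the stacked graph holds. The lower-bound inequality $\dist_G(u,v)\le \lambda_{\stk,\low}\cdot h\cdot\dist_{G_{\stk}}(u,v)$ is precisely property~(3) of the emulator with $\lambda_{\stk,\low}=\lambda_{Q,\low}$.

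The upper-bound inequality $\dist_{G_{\stk}}(u,v)\le \lambda_{\stk,\up}\cdot\max\{\dist_G(u,v)/h,1\}$ needs a short argument since the emulator guarantee is only stated for the single threshold $h$. I would handle it by scale-chopping: given $u,v\in V(G)$, let $d=\dist_G(u,v)$ and let $P$ be a shortest $u$-$v$ path in $G$. If $d\le h$, the emulator directly gives $\dist_Q(u,v)\le \lambda_{Q,\up} = \lambda_{\stk,\up}\cdot 1$. If $d>h$, break $P$ into $\lceil d/h\rceil$ consecutive subpaths each of $G$-length at most $h$, with breakpoints $u=x_0,x_1,\dots,x_\ell=v\in V(G)$ where $\ell=\lceil d/h\rceil$; each consecutive pair $x_{j-1},x_j$ has $\dist_G(x_{j-1},x_j)\le h$, so $\dist_Q(x_{j-1},x_j)\le \lambda_{Q,\up}$, and concatenating gives $\dist_Q(u,v)\le \ell\cdot\lambda_{Q,\up}\le 2\lambda_{Q,\up}\cdot(d/h) = \lambda_{\stk,\up}\cdot(d/h)$ for a suitable $\lambda_{\stk,\up}=O(\lambda_{Q,\up})=O(1/\epsilon^2)$ (using $\lceil d/h\rceil \le 2d/h$ when $d\ge h$). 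This establishes both distance bounds.

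The remaining bookkeeping items transfer directly from \Cref{thm:HopReducingEmulator}: the size bound $|G_{\stk}| = |Q| \le \kappa_{Q,\size}\cdot|G|$ gives $\kappa_{\stk,\size} = \kappa_{Q,\size}+1$ (the $+1$ absorbing any rounding); the recourse bound $\recourse(Q^{(i-1)}\to Q^{(i)}) = O(\kappa_{Q,\rcs}\cdot|\pi^{(i)}|)$ gives $\kappa_{\stk,\rcs}=\kappa_{Q,\rcs}+1$; the initialization time $|G^{(0)}|\cdot\poly(h)\cdot n^{O(\epsilon)}$ and update time $|\pi^{(i)}|\cdot\poly(h)\cdot n^{O(\epsilon)}$ are inherited verbatim. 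The path-unfolding claim is likewise exactly the path-unfolding guarantee of \Cref{thm:HopReducingEmulator}: given $P_{G_{\stk}}=P_Q$ connecting $u,v\in V(G)$, the emulator algorithm returns a $u$-$v$ path $P_G$ in $G$ with $\ell_G(P_G)\le \lambda_{Q,\low}\cdot h\cdot\ell_Q(P_Q)$ and $|P_G|\ge |P_Q|/2^{O(\lambda_k)}$ in time $2^{O(1/\epsilon^4)}\cdot|P_G|$.

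I do not anticipate a serious obstacle here; this lemma is essentially a repackaging of \Cref{thm:HopReducingEmulator} into the ``distance-reduced graph'' interface used by the stacking construction in \Cref{sect:Oracle}. The only mildly delicate point is making sure the scale-chopping argument for the upper bound uses breakpoints that lie in $V(G)$ (which it does, since vertices of a path in $G$ are in $V(G)$) and that the constant blow-up $\lceil d/h\rceil\le 2d/h$ is only invoked in the regime $d\ge h$, so that the $\max\{d/h,1\}$ form is correctly recovered. One should also note for completeness that the lemma statement only mentions edge insertions and deletions (no isolated-vertex updates), which is a restriction of the emulator's update model, so no extra work is needed there.
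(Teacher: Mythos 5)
Your proposal sets $G_{\stk}=Q$ and argues the upper bound $\dist_{G_{\stk}}(u,v)\le \lambda_{\stk,\up}\cdot\max\{\dist_G(u,v)/h,1\}$ by chopping a shortest $u$-$v$ path $P$ in $G$ into subpaths each of $G$-length at most $h$. This chopping step has a genuine gap: if $P$ contains an edge $e$ with $\ell_G(e)>h$ (edge lengths in the paper are only required to be polynomially bounded, not bounded by $h$), there is no way to choose breakpoints in $V(G)$ along $P$ so that every consecutive subpath has $G$-length at most $h$, because any subpath containing $e$ already has length $>h$. Since $P$ is a shortest path, the subpath between two of your breakpoints is itself a shortest path, so the two breakpoints have $\dist_G >h$ and property (2) of the emulator $Q$ gives no bound on their $Q$-distance. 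The scale-chopping argument therefore fails exactly in the regime that the emulator cannot see.

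The paper's proof addresses this by taking $G_{\stk}=Q\cup E_{\heavy}$, where $E_{\heavy}=\{e\in E(G)\mid \ell_G(e)\ge h/3\}$ with each heavy edge given length $\lceil\ell_G(e)/h\rceil$ in $G_{\stk}$. The shortest path $P$ is then decomposed into singleton subpaths (heavy edges, handled directly via $E_{\heavy}$), light segments ($\ell_G<h$, handled by property (2) of $Q$), and heavy sub-segments carved so that $h/3\le\ell_G\le h$ (possible because the remaining edges are all shorter than $h/3$). This is what makes the $\max\{\dist_G(u,v)/h,1\}$ bound hold. It also explains the $+1$ terms you flagged as ``absorbing any rounding'': $\kappa_{\stk,\size}=\kappa_{Q,\size}+1$ and $\kappa_{\stk,\rcs}=\kappa_{Q,\rcs}+1$ come from adding $E_{\heavy}$ to $Q$ (at most $|G|$ extra edges; at most one heavy edge per unit update), and the path-unfolding routine must also be extended to pass heavy edges through unchanged. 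With $G_{\stk}=Q$ alone, there is no reason for those $+1$ terms and the upper-bound inequality simply does not hold in general.
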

\begin{proof}

We maintain a $(\lambda_{Q,\low},\lambda_{Q,\up},h)$-length-reducing emulator $Q$ of $G$ by \Cref{thm:HopReducingEmulator}. Note that each edge in $Q$ has unit length. Let $E_{\heavy} = \{(u,v)\mid (u,v)\in E(G), \ell_{G}(u,v)\geq h/3\}$, and we define the length of each edge $e\in E_{\heavy}$ to be $\lceil\ell_{G}(u,v)/h\rceil$. Then we define $G_{\stk} = Q\cup E_{\heavy}$.

\paragraph{Quality of $G_{\stk}$.} 

We first show $\dist_{G}(u,v)\leq \lambda_{\stk,\low}\cdot h\cdot \dist_{G_{\stk}}(u,v)$. Let $P_{\stk}$ be a shortest $u$-$v$ path in $G_{\stk}$. We decompose $P_{\stk}$ into subpaths with endpoints in $V(G)$ but no intermediate vertex in $V(G)$. Note that each subpath is either a single edge $e\in E_{\heavy}$ or a path totally inside $Q$. For each subpath which is an edge $e=(u,v)\in E_{\heavy}$, we have $\dist_{G}(u,v)\leq \ell_{G_{\stk}}(e)\cdot h$ by the definition of $E_{\heavy}$. For each subpath $P'$ inside $Q$, we have $\dist_{G}(u,v)\leq \lambda_{Q,\low}\cdot h\cdot \dist_{Q}(u,v) = \lambda_{Q,\low}\cdot h\cdot \ell_{Q}(P')$. Therefore, we can conclude that $\dist_{G}(u,v)\leq \lambda_{\stk,\low}\cdot h\cdot\dist_{G_{\stk}}(u,v)$.

Next we show $\dist_{G_{\stk}}(u,v)\leq \lambda_{\stk,\up}\cdot\lceil \dist_{G}(u,v)/h\rceil$. We first consider the case when $\dist_{G}(u,v)\leq h$, in which we have $\dist_{G_{\stk}}(u,v)\leq \dist_{Q}(u,v)\leq \lambda_{Q,\up}$.

For a general case with $\dist_{G}(u,v)>h$. Let $P$ be a shortest $u$-$v$ path in $G$. We decompose $P$ in the following way. First, we view each edge $e\in P$ with $\ell_{G}(e)\geq h/3$ as a \textit{singleton subpath}, and then remove those edges, which breaks $P$ into subpaths called \textit{segments}. For each segment $P'$ with $\ell_{G}(P')<h$, we call it a \textit{light segment}. For each segment $P'$ with $\ell_{G}(P')\geq h$, we further decompose it into sub-segments $P''$ with $h/3\leq \ell_{G}(P'')\leq h$ (this is doable since each edge on such segments has length at most $h/3$), and we call these sub-segments \textit{heavy sub-segments}. 

For each singleton subpath $P'$ connecting $u',v'$, we have $\dist_{G_{\stk}}(u',v') \leq \lceil\ell_{G}(P')/h\rceil\leq 3\ell_{G}(P')/h$, because the edge $e\in P'$ is inside $E_{\heavy}$. For each light segment $P'$ connecting $u',v'$, we have $\dist_{G_{\stk}}(u',v')\leq \dist_{Q}(u',v')\leq \lambda_{Q,\up}$. For each heavy sub-segment $P''$ connecting $u''$ and $v''$, we have $\dist_{G_{\stk}}(u'',v'')\leq \dist_{Q}(u',v')\leq\lambda_{Q,\up}\leq 3\cdot\lambda_{Q,\up}\cdot\ell_{G}(P'')/h$. Also, note that the number of light segments is at most the number of singleton subpaths plus 1. Summing over all things,
\begin{align*}
\dist_{G_{\stk}}(u,v) &\leq \sum_{\text{singleton subpath }P'} 3\ell_{G}(P')/h + \sum_{\text{light segments }P'} \lambda_{Q,\up} + \sum_{\text{heavy sub-segments }P''} 3\cdot\lambda_{Q,\up}\cdot\ell_{G}(P'')/h\\
&\leq \lambda_{Q,\up} + \sum_{\text{singleton subpath }P'} (1+\lambda_{Q,\up})\cdot 3\cdot \ell_{G}(P')/h + \sum_{\text{heavy sub-segments }P''} 3\cdot\lambda_{Q,\up}\cdot\ell_{G}(P'')/h\\
&\leq \lambda_{Q,\up} + 3(1+\lambda_{Q,\up})\ell_{G}(P)/h\\
&\leq O(\lambda_{Q,\up})\cdot \dist_{G}(u,v)/h,
\end{align*}
where the second inequality is because we distribute the term $\lambda_{Q,\up}$ of light segments to singleton subpaths (and there may be one left), and the last inequality is because $\ell_{G}(P) = \dist_{G}(u,v)\geq h$.

\paragraph{The Size of $G_{\stk}$.} By \Cref{thm:HopReducingEmulator}, $|Q|\leq \kappa_{Q,\size}\cdot |G|$, so $|G_{\stk}|\leq |Q| + |G| \leq (\kappa_{Q,\size} + 1)\cdot |G|$.

\paragraph{The Recourse from $G_{\stk}^{(i-1)}$ to $G_{\stk}^{(i)}$.} The total recourse is simply the sum of recourse from $Q^{(i-1)}$ to $Q^{(i)}$ and the changes from $E^{(i-1)}_{\heavy}$ to $E^{(i)}_{\heavy}$. The former is $\kappa_{Q,\rcs}\cdot |\pi^{(i)}|$ by \Cref{thm:HopReducingEmulator} and the latter is trivially at most $|\pi^{(i)}|$, so $\recourse(G^{(i-1)}_{\stk}\to G^{(i)}_{\stk})\leq (\kappa_{Q,\rcs} + 1)\cdot|\pi^{(i)}|$.

\paragraph{The Initialization and Update Time.} They have the same bounds as those in \Cref{thm:HopReducingEmulator}.

\paragraph{Path Unfolding.} Note that $G_{\stk}$ is the union of the emulator $Q$ and some original edges $E_{\heavy}$. Therefore, given a path $P_{G_{\stk}}$ connecting some $u,v\in V(G)$, we just decompose it into original edges and subpaths totally on $Q$, and then unfold each $Q$-subpath to a path on $G$ using \Cref{thm:HopReducingEmulator}.

\end{proof}

\subsection{Online-Batch Dynamic Distance Oracles from Stacking}
\begin{theorem}
Let $G$ be a graph under $t$ batched updates $\pi^{(1)},...,\pi^{(t)}$ of edge deletions and edge insertions. There is an oracle supporting the following query. Given two vertices $u,v\in V(G)$, if $u$ and $v$ are disconnected in $G$, the oracle answers they are disconnected. Otherwise, the oracle answers an approximation $\wtilde{d}$ of $\dist_{G}(u,v)$ s.t. $\dist_{G}(u,v)\leq \wtilde{d}\leq \lambda_{\DO,\alpha}\cdot\dist_{G}(u,v)$.
The approximation ratio 
\[
\lambda_{\DO,\alpha} = O(\lambda_{\query,\alpha})\cdot (\lambda_{\stk,\low}\cdot\lambda_{\stk,\up})^{\lambda_{x}}\cdot 2^{O(1/\epsilon^{4})},
\]
where $\lambda_{x} = O(1/\epsilon)$ is a sufficiently large global parameter. Furthermore, the algorithm can output a path $P$ connecting $u$ and $v$ on $G$ with $\ell_{G}(P)\leq \lambda_{\DO,\alpha}\cdot \dist_{G}(u,v)$.

The initialization time is $|G^{(0)}|\cdot n^{O(\epsilon)}$, the update time for batch $\pi^{(i)}$ is $|\pi^{(i)}|\cdot n^{O(\epsilon)}$, the distance query time is $O(\log\log n/\epsilon^{4})$, and the additional path-reporting time is $O(|P|)$.
\label{thm:BatchDynDistanceOracle}
\end{theorem}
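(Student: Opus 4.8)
\textbf{Proof proposal for \Cref{thm:BatchDynDistanceOracle}.}
The plan is to stack $\lambda_x = O(1/\epsilon)$ copies of the distance-reduction gadget from \Cref{thm:Stacking} on top of each other, using the length parameter $h = n^{\epsilon}$ at each level, and then run the bounded-distance oracle of \Cref{thm:LowDistanceOracle} on \emph{each} level of the stack. Concretely, I would set $G_1 = G$ and, for $1 \le x \le \lambda_x - 1$, let $G_{x+1}$ be the distance-reduced graph $G_{\stk}$ produced by applying \Cref{thm:Stacking} to $G_x$ with length parameter $h = n^{\epsilon}$. Since each application shrinks pairwise distances by roughly a factor $h/\lambda_{\stk,\up}$ and the original edge lengths are polynomially bounded, after $\lambda_x = O(\log n / \log h) = O(1/\epsilon)$ levels (with a sufficiently large hidden constant) every pair of connected vertices $u,v \in V(G)$ has $\dist_{G_{\lambda_x}}(u,v) \le \lambda_{\stk,\up}\cdot h$, i.e. the top graph has ``small'' pairwise distance. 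The batched updates propagate down the stack: a batch $\pi^{(i)}$ to $G$ becomes, via the recourse bound $\kappa_{\stk,\rcs}$ of \Cref{thm:Stacking}, a batch of size $\le \kappa_{\stk,\rcs}^{x-1}\cdot|\pi^{(i)}|$ to $G_x$; since $\kappa_{\stk,\rcs} = n^{O(\epsilon^2)}$ and there are only $O(1/\epsilon)$ levels, this stays $|\pi^{(i)}|\cdot n^{O(\epsilon)}$. On top of each $G_x$ I maintain the oracle ${\cal O}_x$ of \Cref{thm:LowDistanceOracle} with the fixed length parameter $h = n^{\epsilon}$, which is exactly the regime where its $\poly(h) = n^{O(\epsilon)}$ initialization/update time is affordable.

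For the query algorithm, given $u,v \in V(G)$ I would binary-search over the $\lambda_x$ levels. At level $x$, the oracle ${\cal O}_x$ on $G_x$ can decide (up to the $\lambda_{\query,\alpha}$ approximation) whether $\dist_{G_x}(u,v) \le h$ or $\dist_{G_x}(u,v) > h$; since each level scales distances down by roughly $h$, the smallest level $x^\star$ at which ${\cal O}_{x^\star}$ reports ``close'' pins down $\dist_{G}(u,v)$ up to a multiplicative factor $(\lambda_{\stk,\low}\cdot\lambda_{\stk,\up})^{x^\star}\cdot\lambda_{\query,\alpha}$, which is $\le \lambda_{\DO,\alpha} = O(\lambda_{\query,\alpha})\cdot(\lambda_{\stk,\low}\cdot\lambda_{\stk,\up})^{\lambda_x}\cdot 2^{O(1/\epsilon^4)}$. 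The disconnectivity case is detected when \emph{all} levels report ``far'' (or, equivalently, one can run the binary search up to the top level $\lambda_x$, where a ``far'' answer for $h = \lambda_{\stk,\up}\cdot h$ certifies disconnection given the diameter bound after all reductions). Binary search over $O(1/\epsilon)$ levels, each query costing $O(1/\epsilon^3)$ time by \Cref{thm:LowDistanceOracle}, gives query time $O(\log(1/\epsilon)/\epsilon^3)$; the claimed $O(\log\log n/\epsilon^4)$ follows because $1/\epsilon \le \log^c n$ so $\log(1/\epsilon) \le O(\log\log n)$. For path reporting, once $x^\star$ is identified I invoke the path-reporting routine of \Cref{thm:LowDistanceOracle} on $G_{x^\star}$ to get a path $P_{x^\star}$, then repeatedly apply the path-unfolding routine of \Cref{thm:Stacking} to pull $P_{x^\star}$ down through $G_{x^\star - 1}, \dots, G_1 = G$. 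Each unfolding step blows up the hop-length by $2^{O(\lambda_k)}$ but multiplies the real length by only $\lambda_{\stk,\low}\cdot h$, and the running time at each level is $2^{O(1/\epsilon^4)}\cdot|P_{\text{current}}|$; since $|P_G| \ge |P_{x^\star}|/2^{O(\lambda_k)\cdot\lambda_x}$ and the total number of levels is $O(1/\epsilon)$, the aggregate path-reporting time telescopes to $O(|P|)$ (absorbing the $2^{\poly(1/\epsilon)}$ and $2^{O(\lambda_k \lambda_x)} = 2^{O(1/\epsilon^3)}$ factors into the $n^{O(\epsilon)}$-free $O(\cdot)$, which is legitimate because $1/\epsilon \le \log^c n$ keeps these below $n^{o(1)}$, though strictly the statement claims $O(|P|)$ so I should double-check the constant does not secretly depend on $n$ — it does not, it depends only on $\epsilon$).

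The correctness of the distance estimate rests on two composed inequalities from \Cref{thm:Stacking}: the lower-bound side $\dist_{G_x}(u,v) \le \lambda_{\stk,\up}\cdot\max\{\dist_{G_{x-1}}(u,v)/h, 1\}$ (iterated, this says the top graph really does have bounded distance) and the upper-bound side $\dist_{G_{x-1}}(u,v) \le \lambda_{\stk,\low}\cdot h\cdot\dist_{G_x}(u,v)$ (iterated, this translates a ``close'' verdict at level $x^\star$ back to a real distance upper bound on $G$). The main subtlety — and the step I expect to require the most care — is the interaction between the approximation slack $\lambda_{\query,\alpha}$ of the bounded-distance oracle and the thresholding in the binary search: ${\cal O}_x$ only distinguishes $\dist_{G_x} > h$ from $\dist_{G_x} \le \lambda_{\query,\alpha}\cdot h$, so there is a ``gray zone'' $(h, \lambda_{\query,\alpha} h]$ where the answer is ambiguous. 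I would handle this exactly as in prior work (\cite{chechik,chuzhoy2021decremental,chuzhoy2023new}): the geometric spacing of distance scales across levels (factor $\approx h/\lambda_{\stk,\up} = n^{\Omega(\epsilon)} \gg \lambda_{\query,\alpha} = 2^{\poly(1/\epsilon)}$, again using $1/\epsilon \le \log^c n$) guarantees that for any pair $u,v$ there is a level where the verdict is unambiguous, so the binary search is well-defined and the first ``close'' level it finds is correct up to the stated multiplicative factor. A secondary bookkeeping point is that \Cref{thm:Stacking} is stated for updates of edge insertions/deletions only (no isolated-vertex updates), which matches the hypothesis of \Cref{thm:BatchDynDistanceOracle}, but when I feed the derived batches down the stack I must confirm that the recourse batches $\pi^{(i)}_x$ generated by \Cref{thm:Stacking} are again pure edge insertions/deletions (they are, since $G_{\stk} = Q \cup E_{\heavy}$ and both $Q$ and $E_{\heavy}$ change only by edge updates plus isolated-vertex updates, the latter being harmless and handleable by \Cref{thm:LowDistanceOracle}). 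The initialization time $|G^{(0)}|\cdot n^{O(\epsilon)}$ and update time $|\pi^{(i)}|\cdot n^{O(\epsilon)}$ then follow by summing the per-level costs of \Cref{thm:Stacking} and \Cref{thm:LowDistanceOracle} over the $O(1/\epsilon)$ levels, each of which is $(\text{size})\cdot\poly(n^{\epsilon})\cdot n^{O(\epsilon)} = (\text{size})\cdot n^{O(\epsilon)}$.
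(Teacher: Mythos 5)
Your stacking construction, the per-level $n^{O(\epsilon)}$ update-time accounting via $\kappa_{\stk,\rcs}$, and the unfolding strategy for path reporting all match the paper's approach. However, the query algorithm has a genuine gap in the approximation analysis, and it lies exactly in the ``gray zone'' subtlety you flag as the step requiring most care.

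You maintain a \emph{single} low-distance oracle ${\cal O}_x$ per level $x$, all with the same fixed length parameter $h = n^{\epsilon}$. The binary search over $x$ is indeed well defined (distances shrink by roughly $h/\lambda_{\stk,\up} \gg \lambda_{\query,\alpha}$ per level, so ``far'' is monotone in $x$), but finding the first ``close'' level $x^\star$ only tells you $\dist_{G_{x^\star}}(u,v) \le \lambda_{\query,\alpha}\cdot h$ while the ``far'' verdict at level $x^\star - 1$ pushed down gives only $\dist_{G_{x^\star}}(u,v) \ge \Omega(1/\lambda_{\stk,\low})$. The ratio between these bounds is $\Theta(\lambda_{\stk,\low}\cdot\lambda_{\query,\alpha}\cdot h)$, which carries a factor $h = n^{\epsilon}$. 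Propagated back to $G$ this gives approximation $n^{\Omega(\epsilon)}$, not $2^{\poly(1/\epsilon)}$; your claim that $x^\star$ ``pins down $\dist_G(u,v)$ up to a multiplicative factor $(\lambda_{\stk,\low}\cdot\lambda_{\stk,\up})^{x^\star}\cdot\lambda_{\query,\alpha}$'' silently drops this $h$. The paper avoids this by maintaining, at each level $x$, a family of oracles ${\cal O}_{x,y}$ for $1 \le y \le \bar{y} = \lceil\log h\rceil + 1$ with geometrically spaced thresholds $h_y = 2^y$. After a linear scan over $x$ finds $x^\star$ (cheap: only $O(1/\epsilon)$ levels), a binary search over the $y$-axis localizes $\dist_{G_{x^\star}}(u,v)$ to within a factor $2\lambda_{\query,\alpha}$ rather than a factor $h$, and only then does the composed approximation come out to $2^{\poly(1/\epsilon)}$. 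The $O(\log\log n)$ in the query time comes precisely from $\log\bar{y} = O(\log\log n)$ iterations of this inner binary search, not from a search over $x$.

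A minor secondary point on path reporting: your telescoping gives time $2^{O(1/\epsilon^4)}\cdot|P|$, and you are right that the hidden constant depends only on $\epsilon$. But the theorem literally claims $O(|P|)$ with no $\epsilon$-dependence in the constant. The paper closes this gap by a small trick: since the returned path need not be simple, it replaces $P$ by a path $\wtilde{P}$ that walks back and forth on each edge $2^{O(1/\epsilon^4)}$ times, so $|\wtilde{P}| = 2^{O(1/\epsilon^4)}\cdot|P|$, the length blowup is absorbed into $\lambda_{\DO,\alpha}$, and the reporting time becomes $O(|\wtilde{P}|)$ with an absolute constant. Without this normalization, the path-reporting bound as stated is not literally met.
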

\begin{proof}

\ 

\paragraph{The Oracle.} We pick a parameter $h =  n^{\epsilon}$ and construct a chain ${\cal G}$ of $\bar{x} = \lambda_{x}$ many graphs $G_{1}, G_{2}, ..., G_{\bar{x}}$ with the following properties, where $\lambda_{x} = O(1/\epsilon)$ is a sufficiently large global parameter.
\begin{itemize}
\item[(1)] $G_{1} = G$ is the original graph.
\item[(2)] For each $2\leq x\leq \bar{x}$, $G_{x}$ is the distance-reduced graph from applying \Cref{thm:Stacking} on $G_{x-1}$ with length parameter $h$.
\end{itemize}

\begin{claim}
For the last graph $G_{\bar{x}}$, we have $\dist_{G_{\bar{x}}}(u,v)\leq h$ for all connected $u,v\in V(G)$.
\label{claim:LastGraphMaxDis}
\end{claim}
\begin{proof}
Consider an arbitrary pair of connected vertices $u$ and $v$. Note that $\dist_{G}(u,v) = \poly(n)$ because we assume the edge lengths are all polynomial and the graph $G$ is connected. For each $1\leq x\leq \bar{x}-1$, we have the following by \Cref{thm:Stacking}. 
\begin{itemize}
\item If $\dist_{G_{x}}(u,v)> h$, then $\dist_{G_{x+1}}(u,v)\leq \lambda_{\stk,\up}\cdot \dist_{G_{x}}(u,v)/h$.
\item If $\dist_{G_{x}}(u,v)\leq h$, then $\dist_{G_{x+1}}(u,v)\leq \lambda_{\stk,\up}\leq h$ because $\lambda_{\stk,\up} = 2^{\poly(1/\epsilon)}$ and $h = n^{\epsilon}$.
\end{itemize}
Therefore, if there is some $1\leq x\leq\bar{x}-1$ s.t. $\dist_{G_{x}}(u,v)\leq h$, we have $\dist_{G_{\bar{x}}}(u,v) \leq h$. Otherwise, we have $\dist_{G_{\bar{x}}}(u,v) \leq (\lambda_{\stk,\up}/h)^{\lambda_{x}-1}\cdot \dist_{G}(x,y)\leq \poly(n)/(n^{\epsilon})^{\lambda_{x}-1}\leq h$ because $\lambda_{x} = O(1/\epsilon)$ is sufficiently large. 

\end{proof}

Next, for each $G_{x}$, we construct several low-distance oracles with different length thresholds. Precisely, for each $1\leq y\leq \bar{y} = \lceil\log h\rceil + 1$, let $h_{y} = 2^{y}$ and we construct a low-distance oracle ${\cal O}_{x,y}$ by applying \Cref{thm:LowDistanceOracle} on $G_{x}$ with length parameter $h_{y}$.

Note that ${\cal G}$ and its low-distance oracles $\{{\cal O}_{x,y}\}$ can be maintain under batched updates by \Cref{thm:Stacking} and \Cref{thm:LowDistanceOracle}. Precisely, suppose $G_{x}$ is under batched updates $\pi_{x}^{(1)},...,\pi_{x}^{(t)}$ (the first graph $G_{1}$ has $(\pi_{1}^{(1)} ,...,\pi_{1}^{(t)}) = (\pi^{(1)},...,\pi^{(t)})$), we maintain ${\cal O}_{x,y}$ for each $1\leq y\leq \bar{y}$ under $\pi^{(1)}_{x},...,\pi^{(t)}_{x}$ by \Cref{thm:LowDistanceOracle}. Also, we maintain $G_{x+1}$ under $\pi^{(1)}_{x},...,\pi^{(t)}_{x}$ by \Cref{thm:Stacking}, which will generate the batched updates $\pi^{(1)}_{x+1},...,\pi^{(t)}_{x+1}$ for $G_{x+1}$ s.t. $|\pi^{(i)}_{x+1}|\leq \kappa_{\stk,\rcs}\cdot|\pi^{(i)}_{x}|$ for each $1\leq i\leq t$. From this, we have $|\pi^{(i)}_{x}|\leq (\kappa_{\stk,\rcs})^{x-1}\cdot |\pi^{(i)}|$ for each $1\leq x\leq \bar{x}$.

\paragraph{The Initialization and Update Time.} First we bound the size of graphs $G_{x}$. At any time, we have $|G_{x+1}|\leq \kappa_{\stk,\size}\cdot |G_{x}|$ for each $1\leq x\leq \bar{x}-1$. Simply we have $|G_{x}| \leq \kappa_{\stk,\size}^{\lambda_{x}}\cdot |G|\leq n^{O(\epsilon)}\cdot |G|$ for all $x$, because $\kappa_{\stk,\size} = n^{O(\epsilon^{4})}$. 

For the initialization, the construction all graphs in ${\cal G}$ takes total time $\sum_{1\leq x\leq\bar{x}} |G_{x}^{(0)}|\cdot \poly(h)\cdot n^{O(\epsilon)}$, 
and the construction of all low-distance oracles $\{{\cal O}_{x,y}\}$ takes total time $\sum_{1\leq x\leq \bar{x}}\sum_{1\leq y\leq \bar{y}}|G^{(0)}_{x}|\cdot \poly(h)\cdot n^{O(\epsilon)}$.
Hence the total initialization time is $|G^{(0)}|\cdot n^{O(\epsilon)}$ because $h = n^{\Theta(\epsilon)}, \bar{x}\leq \lambda_{h} = O(1/\epsilon)$ and $\bar{y} = O(\log h) = O(\log n)$.

The update time to handle each $\pi^{(i)}$ is bounded by $\sum_{1\leq x\leq \bar{x}}|\pi^{(i)}_{x}|\cdot\poly(h)\cdot n^{O(\epsilon)} = |\pi^{(i)}|\cdot n^{O(\epsilon)}$, because $|\pi^{(i)}_{x}|\leq (\kappa_{\stk,\rcs})^{\lambda_{x}}\cdot |\pi^{(i)}| = |\pi^{(i)}|\cdot n^{O(\epsilon)}$ (this is from $\kappa_{\stk,\rcs} = n^{O(\epsilon^{2})}$ and $\lambda_{x} = O(1/\epsilon)$)

\paragraph{The Query Algorithm.} 

Suppose a query $(u,v)$ arrives right after the update $\pi^{(i)}$. We will query ${\cal G}^{(i)}$ and $\{{\cal O}^{(i)}_{x,y}\}$ at that moment, and for simplicity, we omit the superscript $(i)$ in what follows. 

The algorithm is as foolows.
\begin{enumerate}
\item We first find the minimum $x^{\star}$ s.t. ${\cal O}_{x^{\star},\bar{y}}$ declares $\dist_{G_{x^{\star}}}(u,v)\leq \lambda_{\query,\alpha}\cdot h_{\bar{y}}$ but for each $1\leq x\leq x^{\star}-1$, ${\cal O}_{x^{\star},\bar{y}}$ declares $\dist_{G_{x}}(u,v)>h_{\bar{y}}$. Note that if $u$ and $v$ are connected, such $x^{\star}$ must exist because $\dist_{G_{\bar{x}}}(u,v)\leq h\leq h_{\bar{y}}$ by \Cref{claim:LastGraphMaxDis}. If we cannot find such $x^{\star}$, we can safely declare $u$ and $v$ are disconnected.

\item Next, we use a binary search to find a $y^{\star}\in[1,\bar{y}]$ s.t. ${\cal O}_{x^{\star},y^{\star}}$ declares $\dist_{G_{x^{\star}}}(u,v)\leq \lambda_{\query,\alpha}\cdot h_{y^{\star}}$ and if $y^{\star}\geq 2$, ${\cal O}_{x^{\star},y^{\star}-1}$ declares $\dist_{G_{x^{\star}}}(u,v)>h_{y^{\star}-1}$.

\item The output is $\wtilde{d} = h^{x^{\star}-1}\cdot h_{y^{\star}}\cdot (\lambda_{\stk,\low}^{x^{\star}-1}\cdot \lambda_{\query,\alpha})$.

\end{enumerate}

\paragraph{The Correctness and the Query Time.} The correctness of the query algorithm is given by \Cref{lemma:FinalQueryCorrectness}. The total query time is $O((\lambda_{x} + \log\bar{y})\cdot \lambda_{k}\cdot t) = O(\log\log n/\eps^{4})$ by the following reasons. Step 1 takes $O(\lambda_{x}\cdot \lambda_{k}\cdot t) = O(1/\epsilon^{4})$ because we make at most $\bar{x} = \lambda_{x} = O(1/\epsilon)$ queries to low-distance oracles, each of which takes $O(\lambda_{k}\cdot t) = O(1/\epsilon^{3})$ time by \Cref{thm:LowDistanceOracle}. Step 2 takes $O(\log \bar{y}\cdot\bar{k}\cdot t)$ time because the binary search query the low-distance oracles at most $O(\log\bar{y}) = O(\log\log n/\epsilon^{4})$ times and $\bar{y} = O(\log h) = O(\epsilon\cdot\log n)$.

\begin{lemma}
\label{lemma:FinalQueryCorrectness}
If $u$ and $v$ are connected, the approximate distance $\wtilde{d}$ satisfies
\[
\dist_{G}(u,v)\leq \wtilde{d}\leq 2\cdot\lambda_{\stk,\up}^{x^{\star}-1}\cdot \lambda_{\stk,\low}^{x^{\star}-1}\cdot \lambda_{\query,\alpha}\cdot\dist_{G}(u,v).
\]
\end{lemma}
\begin{proof}
First, we bound $\dist_{G^{\star}}(u,v)$ using $\dist_{G}(u,v)$. For each $1\leq x\leq x^{\star}-1$, we know ${\cal O}_{x,\bar{y}}$ declared $\dist_{G_{x}}(u,v)>h_{\bar{y}}\geq h$. Therefore, from \Cref{thm:Stacking}, we have 
$\dist_{G_{x+1}}(u,v)\leq \lambda_{\stk,\up}\cdot\max\{ \dist_{G_{x}}(u,v)/h,1\} = \lambda_{\stk,\up}\cdot  \dist_{G_{x}}(u,v)/h$,
which implies
\[
\dist_{G_{x^{\star}}}(u,v)\cdot h^{x^{\star}-1}\leq \lambda_{\stk,\up}^{x^{\star}-1}\cdot\dist_{G}(u,v).
\]
On the lower bound side, we have for each $1\leq x\leq x^{\star}-1$,
$\dist_{G_{x}}(u,v)\leq \lambda_{\stk,\low}\cdot h\cdot \dist_{G_{x+1}}(u,v)$,
so
\[
\dist_{G_{x^{\star}}}(u,v)\cdot h^{x^{\star}-1}\geq \dist_{G}(u,v)/\lambda_{\stk,\low}^{x^{\star}-1}.
\]

Next, we bound $\dist_{G^{\star}}(u,v)$ using $h_{y^{\star}}$. On the lower bound side, we have
\[
\dist_{G_{x^{\star}}}(u,v)\geq h_{y^{\star}}/2
\]
because if $y^{\star}\geq 2$, ${\cal O}_{x^{\star},y^{\star}-1}$ will declare $\dist_{G_{x^{\star}}}(u,v)>h_{y^{\star}-1}=h_{y^{\star}}/2$, and if $y^{\star} = 1$, we still have $\dist_{G_{x^{\star}}}(u,v)\geq 1 = h_{y^{\star}}/2$, since all edges have positive weights and $h_{1} = 2$. On the upper bound side, ${\cal O}_{x^{\star},y^{\star}}$ declares 
\[
\dist_{G_{x^\star}}(u,v)\leq \lambda_{\query,\alpha}\cdot h_{y^{\star}}.
\]

Combining the above bounds of $\dist_{G_{x^{\star}}}(u,v)$, we have
\[
\dist_{G}(u,v)/(\lambda_{\stk,\low}^{x^{\star}-1}\cdot \lambda_{\query,\alpha})\leq h_{y^{\star}}\cdot h^{x^{\star}-1}\leq 2\cdot\lambda_{\stk,\up}^{x^{\star}-1}\cdot\dist_{G}(u,v).
\]
\end{proof}

\paragraph{Path Reporting.} Let $u,v\in V(G)$ be the input vertices. Because ${\cal O}_{x^{\star},y^{\star}}$ declares $\dist_{G_{x^{\star}}}(u,v)\leq \lambda_{\query,\alpha}\cdot h_{y^{\star}}$, by \Cref{thm:LowDistanceOracle}, we can obtain a $u$-$v$ path $P_{G_{x^{\star}}}$ on $G_{x^{\star}}$ with $\ell_{G_{x^{\star}}}(P_{G_{x^{\star}}})\leq \lambda_{\query,\alpha}\cdot h_{y^{\star}}$ in $2^{O(\lambda_{k})}\cdot |P_{G_{x^{\star}}}|$ time. Next, we iterate $x$ from $x^{\star}-1$ to $1$. For each $x$, by the path unfolding algorithm in \Cref{thm:Stacking} (with input $P_{G_{x+1}}$), we obtain a $u$-$v$ path $P_{G_{x}}$ on $G_{x}$ with $\ell_{G_{x}}(P_{G_{x}}) \leq \lambda_{\stk,\low}\cdot h\cdot \ell_{G_{x+1}}(P_{G_{x+1}})$ and $|P_{G_{x}}|\geq |P_{G_{x+1}}|/2^{O(\lambda_{k})}$ in time $2^{O(\lambda_{k})}\cdot |P_{G_{x}}|$.

At last, we obtain a $u$-$v$ path $P_{G} = P_{G_{1}}$ on $G$ with $\ell_{G}(P_{G})\leq \lambda_{\stk,\low}^{x^{\star}-1}\cdot\lambda_{\query,\alpha}\cdot h^{x^{\star}-1}\cdot h_{y^{\star}}$. From the proof of \Cref{lemma:FinalQueryCorrectness}, we have $h_{y^{\star}}\leq 2\cdot\lambda_{\stk,\low}^{x^{\star}-1}/h^{x^{\star}-1}\cdot \dist_{G}(u,v)$. Therefore, $\ell_{G}(P_{G})\leq 2\cdot \lambda_{\query,\alpha}\cdot (\lambda_{\stk,\low}\cdot \lambda_{\stk,\up})^{x^{\star}-1} \cdot \dist_{G}(u,v)$. The path-reporting time can be bounded by $\sum_{1\leq x\leq x^{\star}} 2^{O(1/\epsilon^{4})}\cdot |P_{G_{x}}| \leq 2^{O(1/\epsilon^{4})}\cdot 2^{O(\lambda_{k}\cdot\lambda_{x})}\cdot |P_{G}|$ because each $P_{G_{x}}$ has $|P_{G_{x}}|\leq P_{G}\cdot 2^{O(\lambda_{x}\cdot\lambda_{k})}$ and $\lambda_{x}\cdot\lambda_{k} = O(1/\epsilon^{3})$.

We note that, when $P_{G}$ is not required to be a simple path, it is trivial to move the overhead on path-reporting time to approximation. For example, we can define a new $u$-$v$ path $\wtilde{P}_{G}$ by walking along $P_{G}$ but going back and forth on each edge $2^{O(\lambda_{k}\cdot\lambda_{x})}$ times. Then we have $\ell_{G}(\wtilde{P}_{G})\leq 2^{O(1/\epsilon^{4})}\cdot \ell_{G}(P_{G})\leq \lambda_{\DO,\alpha}\cdot \dist_{G}(u,v)$. Moreover, because $|\wtilde{P}_{G}| = 2^{O(1/\epsilon^{4})}\cdot |P_{G}|$, we can rewrite the path-reporting time to be $O(|\wtilde{P}_{G}|)$.

\end{proof}

\subsection{Fully Dynamic Algorithms from Online-Batch Algorithms}
\label{sect:Reduction}

The last step is to turn our online-batch dynamic distance oracle to a fully dynamic one. This can be done by a standard online-batch-to-fully-dynamic technique, which was implicitly shown in \cite{NanongkaiSW17} and formalized in \cite{JS22} afterwards. 

\begin{lemma}[Theorem 10.1 in \cite{JS22}, Section 5 in \cite{NanongkaiSW17}]
Let $G$ be a graph undergoing batch updates. Assume for two parameters $\zeta$ and $w$, there is a preprocessing algorithm with preprocessing time $t_{\rm preprocess}$ and an update algorithm with amortized update time $t_{\rm amortized}$ for a data structure ${\cal D}$ in the online-batch setting with batch number $\zeta$ and sensitivity $w$, where $t_{\rm preprocess}$ and $t_{\rm amortized}$ are functions the map the upper bounds of some graph measures throughout the update, e.g. maximum number of edges, to non-negative numbers.

Then for any $\xi\leq \zeta$ satisfying $w\geq 2\cdot 6^{\xi}$, there is a fully dynmaic algorithm with preprocessing time $O(2^{\xi}\cdot t_{\rm preprocess})$ and worst case update time $O(4^{\xi}\cdot(t_{\rm preprocess}/w + w^{(1/\xi)}\cdot t_{\rm amortized}))$ to maintain a set of $O(2^{\xi})$ instances of the data structure ${\cal D}$ such that after each update, the update algorithm specifies one of the maintained data structure instances satisfying the following conditions
\begin{itemize}
\item[(1)] The specified data structure instances is for the up-to-date graph.
\item[(2)] The online-batch update algorithm is executed for at most $\xi$ times on the specified data structure instance with each update batch of size at most $w$.
\end{itemize}
\label{lemma:Reduction}
\end{lemma}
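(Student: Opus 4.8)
The statement is a black-box de-amortization result, and the plan is to reproduce the ``lazy rebuilding with look-ahead'' scheme of \cite{NanongkaiSW17} (Section 5), as later formalized in \cite{JS22} (Theorem 10.1), instantiated with our parameters $\zeta, w, \xi, t_{\rm preprocess}, t_{\rm amortized}$. The overall idea is that, although the online-batch data structure $\mathcal{D}$ is efficient only per \emph{batch} (a batch of $s$ unit updates costs $O(s\cdot t_{\rm amortized})$ in the worst case, at most $\zeta$ batches may ever be applied to one instance, and each batch has size at most $w$), one can simulate the fully dynamic setting by maintaining several copies of $\mathcal{D}$ at different ``ages'' and keeping, for one distinguished copy, the guarantee that it is up to date and has been touched by only a few batches.

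First I would set up the level structure. Fix block sizes $b_0 = 1 < b_1 < \dots < b_\xi = w$ with $b_j = \lceil w^{j/\xi}\rceil$, so that consecutive ratios satisfy $b_j/b_{j-1} \ge w^{1/\xi}$; the hypothesis $w \ge 2\cdot 6^\xi$ guarantees $w^{1/\xi}\ge 6$, which is exactly the constant slack that makes the look-ahead argument at each level go through. Chop the incoming unit-update stream into consecutive blocks of length $b_\xi = w$ and, recursively, into sub-blocks of lengths $b_{\xi-1}, b_{\xi-2}, \dots, b_0 = 1$. Run $\xi$ nested background processes, one per level $j$, each owning a constant number of copies of $\mathcal{D}$. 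Inductively, level $j$ receives from level $j+1$ (or, for $j=\xi$, creates itself once per $w$ steps) a copy $\mathcal{D}^{(j)}$ that is correct for the graph snapshot at the start of a length-$b_j$ window and that has absorbed at most $\xi-j$ batches so far. Over that window it (i) spreads the cost of the pending preprocessing/catch-up evenly over $\Theta(b_{j-1})$ unit-update steps — so $O(t_{\rm preprocess}/w)$ per step at the top level, and $O((b_j/b_{j-1})\cdot t_{\rm amortized}) = O(w^{1/\xi} t_{\rm amortized})$ per step at level $j$ for applying the accumulated batch — and (ii) accumulates the at most $b_{j-1}$ unit updates arriving after hand-off into a pending batch, which is then handed down to level $j-1$. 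Since $b_0=1$, at level $0$ nothing remains to absorb, and the resulting instance is directly queryable: it is correct for the current graph and has absorbed at most $\xi$ batches in total (one per level, each of size $\le w\le$ the corresponding window length), which is precisely conditions (1) and (2).

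I expect the main obstacle to be the purely combinatorial bookkeeping: synchronizing the $\xi$ spreading schedules so that at every level a freshly rebuilt copy is guaranteed to be \emph{ready} before the moment it must replace the current one (this is where the slack $b_j/b_{j-1}\ge 6$, equivalently $w\ge 2\cdot 6^\xi$, is used), and checking that the preprocessing and batch-update subroutines of $\mathcal{D}$ can be paused and resumed incrementally — which only requires treating them as ordinary step-by-step computations whose internal state is stored. One must also verify that the online-batch constraints are never violated: no batch exceeds $w$, and the active instance never absorbs more than $\xi\le\zeta$ of them.

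Finally I would do the accounting. The full preprocessing is triggered once per $w$ unit updates, contributing $O(t_{\rm preprocess}/w)$ per step once spread out; each of the $\le\xi$ levels contributes $O(w^{1/\xi} t_{\rm amortized})$ per step for its batch application; the recursive nesting multiplies the number of live copies by a constant per level, yielding $O(2^\xi)$ instances and an additional $O(4^\xi)$ factor in the worst-case per-update time, giving the claimed $O(2^\xi\cdot t_{\rm preprocess})$ preprocessing time and $O(4^\xi(t_{\rm preprocess}/w + w^{1/\xi} t_{\rm amortized}))$ worst-case update time. Since the construction is essentially verbatim that of \cite{NanongkaiSW17,JS22}, the write-up would consist of recalling their recursive scheme, substituting our $t_{\rm preprocess}(\cdot)$ and $t_{\rm amortized}(\cdot)$ (which are monotone in the relevant graph-size bounds, so the spreading estimates are unaffected), and verifying the three items separately: correctness of the distinguished active instance, the two time bounds, and condition (2).
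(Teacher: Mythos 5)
The paper states this lemma as a black-box citation of Theorem~10.1 in \cite{JS22} (and Section~5 of \cite{NanongkaiSW17}) and does not re-prove it, so there is no internal proof to compare against; your reconstruction is a faithful high-level sketch of exactly that lazy-rebuilding-with-look-ahead scheme (geometric block sizes $b_j\approx w^{j/\xi}$, nested spreading of the preprocessing and batch-application work, a constant number of live copies per level, and the distinguished level-$0$ copy absorbing at most one size-$\le w$ batch per level), and the cost accounting you give matches the stated $O(2^\xi t_{\rm preprocess})$, $O(4^\xi(t_{\rm preprocess}/w + w^{1/\xi}t_{\rm amortized}))$, and $O(2^\xi)$-instances bounds.
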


The sensitivity $w$ is the maximum size of each update batch allowed by the online-batch dynamic algorithm. For our online-batch dynamic distance oracle in \Cref{thm:BatchDynDistanceOracle}, there is no such restriction and we can simply choose $w = n^{2}$. Recall that we mentioned in \Cref{sect:GlobalParameters} that we will set the batch number $\zeta = t = \Theta(1/\epsilon)$. Furthermore, we pick $\xi = \zeta = \Theta(1/\epsilon)$.

By \Cref{thm:BatchDynDistanceOracle}, our online-batch dynamic distance oracle algorithm has preprocessing time $t_{\rm preprocess} = |G^{(0)}|\cdot n^{O(\epsilon)} = n^{2+O(\epsilon)}$. An online-batch dynamic algorithm has amortized update time $t_{\rm amortized}$ if it can handle each update batch $\pi^{(i)}$ in time $|\pi^{(i)}|\cdot t_{\rm amortized}$. By this definition, \Cref{thm:BatchDynDistanceOracle} has $t_{\rm amortized} = n^{O(\epsilon)}$.

Therefore, by \Cref{lemma:Reduction}, we can obtain a fully dynamic distance oracle algorithm with initialization time $O(2^{\xi}\cdot |G^{(0)}|\cdot n^{O(\epsilon)}) = |G^{(0)}|\cdot n^{O(\epsilon)}$ (we can check in the proof of \Cref{lemma:Reduction} in \cite{JS22} that the initialization time only depends on the initial graph size), worst-case update time $O(4^{\xi}\cdot(t_{\rm preprocess}/w + w^{(1/\xi)}\cdot t_{\rm amortized})) = n^{O(\epsilon)}$, the same query time $O(\log\log n/\epsilon^{4})$ and the same path-reporting time $O(|P|)$. This completes the proof of \Cref{thm:MainDetailed} except the part about reporting simple paths, which will be included in \Cref{sect:ReportSimplePath}.

\subsection{Report Simple Paths}
\label{sect:ReportSimplePath}

In this section, we will discuss how to ensure the path reported by the oracle is simple, while keeping the path-reporting time linear (with an $n^{O(\epsilon)}$ overhead actually) to the number of edges in this path. Note that, after we obtain a non-simple $u$-$v$ path $P$ from the oracle, we cannot just pick an arbitrary $u$-$v$ simple path $P_{\sp}$ on $P$ as the output, because $|P|$ can be much larger than $|P_{\sp}|$, and then the path-reporting time $O(|P|)$ cannot be bounded by $|P_{\sp}|\cdot n^{O(\epsilon)}$. An algorithm for addressing this issue has been proposed by \cite{chuzhoy2021decremental}. We include it below for completeness, and integrate it into the oracle for modularity (the algorithm in \cite{chuzhoy2021decremental} is on the application side for using an oracle which may report non-simple paths).

Let $\alpha^{\star} = \lceil \lambda_{\DO,\alpha} \rceil$ be the approximation factor of our oracle (round it up to an integer). We restate the query interfaces of the oracle here. Let $u,v\in V(G)$ be a given pair of vertices.
\begin{itemize}
\item The oracle can return an integral value $\wtilde{d}$ s.t. $\dist_{G}(u,v)\leq \wtilde{d}\leq \alpha^{\star}\cdot\dist_{G}(u,v)$. The distance query time is $O(\log\log n/\epsilon^{4})$.
\item The oracle can return a $u$-$v$ path $P$ on $G$ s.t. $\dist_{G}(u,v)\leq \ell_{G}(P)\leq \alpha^{\star}\cdot\dist_{G}(u,v)$. The path-reporting time is $O(\log\log n/\epsilon^{4} + |P|)$ (which includes the distance query time).
\end{itemize}

For each $1\leq y\leq \bar{y} = O(\log n)$ (s.t. $2^{\bar{y}}$ upper bounds the diameter of $G$ over time) and $1\leq z\leq \bar{z} = \lceil 1/\epsilon \rceil+1$, let $d_{y} = 2^{y}$ and $L_{z} = \lceil n^{\epsilon}\rceil^{z}$, and we define $G_{y,z}$ to be a dynamic graph exactly the same with $G$ (undergoing the same sequence of updates) except that we increase the length of each edge (including initial edges and new edges) by $d_{y}/L_{z}$ additionally. We maintain an oracle ${\cal O}_{y,z}$ on each graph $G_{y,z}$, and also an oracle ${\cal O}$ on the original graph $G$. Strictly speaking, our oracle required the input dynamic graph has positive integral edge lengths, and we can ensure this by scaling up the edge lengths of $G_{y,z}$ (and then removing this scaling in the answers). 

Let $u,v\in V(G)$ be the input of the path-reporting query, and let $\wtilde{d}$ be the approximate 
$u$-$v$ distance on $G$ returned by ${\cal O}$. The algorithm has at most $\bar{z}$ phases (from phase $1$ to phase $\bar{z}$, but it might terminate at some intermediate phase). Let $z$ be the current phase. At the beginning of phase $z$, we have an invariant that any $u$-$v$ path $P$ on $G$ with $\ell_{G}(P)\leq \wtilde{d}\cdot (10\cdot\alpha^{\star})^{\bar{z} - z + 1}$ has $|P|\geq L_{z-1}$ (we define $L_{0} = 1$). For the first phase, this invariant trivially holds, and for each phase $z\geq 2$, this invariant is guaranteed by phase $z-1$ as we will see. 

Let $y$ be the minimum s.t. $d_{y}\geq \wtilde{d}\cdot (10\cdot\alpha^{\star})^{\bar{z} - z}$, which implies $d_{y}\leq 2\cdot \wtilde{d}\cdot (10\cdot\alpha^{\star})^{\bar{z} - z}$. We query the oracle ${\cal O}_{y,z}$ with input $(u,v)$, and it will return a path $P_{y,z}$ on $G_{y,z}$, which corresponds to a path on $G$. 
\begin{itemize}
\item Suppose $\ell_{G_{y,z}}(P_{y,z})\leq 2\cdot \alpha^{\star}\cdot d_{y}$. Then we know $|P_{y,z}|\leq 2\cdot\alpha^{\star}\cdot L_{z}$ because each edge on $G_{y,z}$ has an addition length increasing $d_{y}/L_{z}$. We take an arbitrary $u$-$v$ simple path $P_{\sp}$ on $P_{y,z}$. Trivially $\ell_{G}(P_{\sp})\leq \ell_{G}(P_{y,z})\leq \ell_{G_{y,z}}(P_{y,z})\leq 2\cdot\alpha^{\star}\cdot d_{y}\leq 4\cdot\alpha^{\star}\cdot\wtilde{d}\cdot (10\cdot\alpha^{\star})^{\bar{z}-z}$. Therefore, by the invariant, we know $|P_{\sp}|\geq L_{z-1}$. We terminate the algorithm with output $P_{\sp}$.

\item Otherwise, we have $\ell_{G_{y,z}}(P_{y,z})> 2\cdot\alpha^{\star}\cdot d_{y}$, which implies $\dist_{G_{y,z}}(u,v)\geq \ell_{G_{y,z}}(P_{y,z})/\alpha^{\star}> 2\cdot d_{y}$ from the approximation guarantee of the oracle. Now we show that the invariant of phase $z+1$ holds, i.e. an arbitrary $u$-$v$ path $P$ with $\ell_{G}(P)\leq \wtilde{d}\cdot (10\cdot\alpha^{\star})^{\bar{z}-z}$ has $|P|\geq L_{z}$. Assume $|P|<L_{z}$ for contradiction. Then we have $\ell_{G_{y,z}}(P)\leq \ell_{G}(P) + d_{y}\cdot |P|/L_{z}\leq \wtilde{d}\cdot (10\cdot\alpha^{\star})^{\bar{z}-z} + d_{y}\leq 2\cdot d_{y}$, a contradiction.
\end{itemize}

The algorithm must finally output some $u$-$v$ simple path $P_{\sp}$ before the last phase $\bar{z}$ because the invariant at the last phase cannot hold, since the shortest $u$-$v$ path $P$ on $G$ has $\ell_{G}(P) = \dist_{G}(u,v)\leq \wtilde{d}\cdot (10\cdot\alpha^{\star})$ and $|P|\leq n-1 < L_{\bar{z}-1}$. 

Let $z^{\star}$ be the phase at which $P_{\sp}$ is obtained. Then $\ell_{G}(P_{\sp})\leq 4\cdot\alpha^{\star}\cdot\wtilde{d}\cdot (10\cdot\alpha^{\star})^{\bar{z}-z^{\star}}\leq 4\cdot (\alpha^{\star})^{2}\cdot (10\cdot \alpha^{\star})^{\bar{z}}\cdot\dist_{G}(u,v)\leq 2^{\poly(1/\epsilon)}\cdot\dist_{G}(u,v)$. The total path-report time is $\sum_{1\leq z\leq z^{\star}}O(|P_{y,z}|)\leq \sum_{1\leq z\leq z^{\star}}O(2\cdot\alpha^{\star}\cdot L_{z})\leq |P_{\sp}|\cdot n^{O(\epsilon)}$ where the last inequality is because for each $1\leq z\leq z^{\star}$, $|L_{z}|\leq \lceil n^{\epsilon}\rceil\cdot L_{z^{\star}-1}\leq |P_{\sp}|\cdot n^{O(\epsilon)}$. Maintaining all the oracles ${\cal O}$ and ${\cal O}_{y,z}$ will only increase the initialization time and the update time by a factor of $O(\bar{y}\cdot\bar{z}) = O(\log n/\epsilon)$, which is negligible.

\section{Extension: Dynamic Vertex Sparsifiers for All Distances}
\label{sect:ExtendedSparsifier}

In this section, we generalize our results about dynamic vertex sparsifiers in \Cref{sect:DynSparsifier}. The theorem below restates \Cref{thm:vertex sparsifier}

\begin{theorem}
Let $G$ be a fully dynamic graph with a terminal set $T\subseteq V(G)$ undergoing a sequence of edge insertions/deletions, isolated vertex insertions/deletions and terminal insertions/deletions. For $\alpha_{\low} = 2^{\poly(1/\epsilon)}$ and $\alpha_{\up} = 2^{O(1/\epsilon^{3})}$, there is a deterministic algorithm that maintains either
\begin{enumerate}
\item\label{SparsifierAmortized} a graph $H$ with size $|H|\leq |T|\cdot n^{O(\epsilon^{4})}$ which is an $(\alpha_{\low},\alpha_{\up})$-sparsifier of $T$ on $G$ at all time, using $n^{O(\epsilon)}$ amortized update time, or
\item\label{SparsifierWorstCase} a collection of $2^{O(1/\epsilon)}$ graphs $H_{1},H_{2},...,H_{2^{O(1/\epsilon)}}$ with sizes at most $|T|\cdot n^{O(\epsilon^{4})}$ using $n^{O(\epsilon)}$ worst-case update time, such that at each moment $i$, one of these graphs is an $(\alpha_{\low},\alpha_{\up})$-sparsifier of $T$ on $G$ and it will be specified by the algorithm.
\end{enumerate}
The initialization time is $|G^{(0)}|\cdot n^{O(\epsilon)}$.

\label{thm:FullyDynamicSparsifier}
\end{theorem}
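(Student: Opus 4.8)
\textbf{Proof plan for \Cref{thm:FullyDynamicSparsifier}.}
The plan is to combine three ingredients already available: the online-batch dynamic vertex sparsifier \Cref{thm:NonHopReducingEmulator}, the stacking construction of \Cref{sect:Stacking} (which reduces arbitrary distances to bounded distances), and the standard online-batch-to-fully-dynamic reduction \Cref{lemma:Reduction} that was used to prove \Cref{thm:MainDetailed}. The technical target is an \emph{online-batch} dynamic algorithm that, under $t = \Theta(1/\epsilon)$ batches, maintains an $(\alpha_\low,\alpha_\up)$-sparsifier of $T$ on $G$ of size $|T|\cdot n^{O(\epsilon^4)}$ with worst-case update time $|\pi^{(i)}|\cdot n^{O(\epsilon)}$; feeding this into \Cref{lemma:Reduction} with $\xi = \zeta = \Theta(1/\epsilon)$ and sensitivity $w = n^2$ immediately yields both statements: the single explicit sparsifier with amortized update time (part \ref{SparsifierAmortized}), and the collection of $2^{O(1/\epsilon)}$ candidate sparsifiers with worst-case update time where one is always specified as correct (part \ref{SparsifierWorstCase}).

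The first thing I would do is fix the three deficiencies of \Cref{thm:NonHopReducingEmulator} listed right after its statement. The batched-update-size restriction $|\pi^{(i)}|\le \phi\cdot|G^{(i)}|$ is removed by a standard rebuild: whenever a batch is too large relative to the current graph, reinitialize from scratch; since we only handle $t=O(1/\epsilon)$ batches this costs an extra $O(1/\phi)$ factor amortized inside a batch and nothing in the worst case beyond $|\pi^{(i)}|\cdot\poly(h)\cdot n^{O(\epsilon)}/\phi^2$, which is $|\pi^{(i)}|\cdot\poly(h)\cdot n^{O(\epsilon)}$ for $\phi = n^{\Theta(\epsilon^2)}$. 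The additive size term $\phi\cdot|G|$ in $|H|$ is removed by \emph{recursively} applying \Cref{thm:NonHopReducingEmulator} to its own output $O(\log_{1/\phi} n) = O(1/\epsilon^2)$ times, each round shrinking the graph by roughly a $\phi = n^{-\Theta(\epsilon^2)}$ factor until the output has size $O(|T|\cdot n^{O(\epsilon^4)})$; the stretch multiplies by $(\lambda_{H,\low}\cdot\lambda_{H,\up})^{O(1/\epsilon^2)} = 2^{\poly(1/\epsilon)}$ and the recourse multiplies by $\kappa_{H,\rcs}^{O(1/\epsilon^2)} = n^{O(\epsilon^2)}$, both acceptable. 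This recursion is itself exactly the length-constrained expander hierarchy of \Cref{thm:ExpanderHierarchy} specialized to a fixed length scale $h$, so I would reuse that machinery verbatim rather than re-deriving it. The third deficiency — the $\poly(h)$ dependence in the update time — is removed by the stacking of \Cref{sect:Stacking,sect:Oracle}: we maintain a chain $G = G_1, G_2, \dots, G_{\bar x}$ with $\bar x = O(1/\epsilon)$ where each $G_{x+1}$ is the distance-reduced graph of \Cref{thm:Stacking} applied to $G_x$ with length parameter $h = n^\epsilon$, so that distances in $G_{\bar x}$ are bounded by $n^\epsilon$; on $G_{\bar x}$ we run the bounded-distance sparsifier with $h = n^\epsilon$, giving $\poly(h) = n^{O(\epsilon)}$ update time. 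The output sparsifier of $G_{\bar x}$, pulled back through the chain of distance-reducing graphs (all of which have stretch $2^{\poly(1/\epsilon)}$ on the low side and $O(1/\epsilon^2)$ on the high side and polynomially bounded recourse), is an $(\alpha_\low,\alpha_\up)$-sparsifier of $T$ on $G$ with $\alpha_\low = 2^{\poly(1/\epsilon)}$ and $\alpha_\up = 2^{O(1/\epsilon^3)}$ after accounting for the $\bar x = O(1/\epsilon)$ levels of stacking each contributing a $\lambda_{\stk,\up} = O(1/\epsilon^2)$ factor.

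I expect the main obstacle to be the bookkeeping around \emph{terminal deletions} combined with the recursive/stacked structure. \Cref{thm:NonHopReducingEmulator} only supports terminal \emph{insertions}, and its extended terminal set $\bar T$ (original terminals plus all landmarks) grows monotonically, which is essential for the low-recourse density trick; but \Cref{thm:FullyDynamicSparsifier} must support terminal deletions. The clean fix is to never actually delete a terminal inside the online-batch subroutine — a deleted terminal simply becomes an ordinary vertex, but keeping it in $\bar T$ is harmless for correctness since properties \ref{property:Sparsifier2} and \ref{property:Sparsifier3} of \Cref{def:Sparsifier} only constrain pairs currently in $T$, and the size bound $|T|\cdot n^{O(\epsilon^4)}$ is restored by the periodic rebuild that \Cref{lemma:Reduction} forces every $O(1/\epsilon)$ batches anyway. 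The second delicate point is the global-vertex-identifier discipline: since the recursive sparsifier and the stacking both introduce fresh artificial vertices, and the pull-back operations take unions of graphs living on different vertex sets, I would need to verify (exactly as in the proof of \Cref{thm:ExpanderHierarchy}, via the claim that $V(H_k)\cap V(H_{k+1}) = V(H_k)\cap\bigcup_{k'\ge k+1}V(H_{k'})$) that fresh vertices at different levels never collide, so that the composed sparsifier is well-defined and the size telescopes correctly. Once these two points are handled, the approximation, size, and update-time bounds follow by composing the quantitative guarantees already proved, and the two bullets of \Cref{thm:FullyDynamicSparsifier} are precisely the two outputs of \Cref{lemma:Reduction}.
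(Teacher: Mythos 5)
Your overall architecture — a recursive application of \Cref{thm:NonHopReducingEmulator} to kill the additive $\phi\cdot|G|$ term, then stacking (\Cref{thm:Stacking}) to kill the $\poly(h)$ factor, then passing the resulting online-batch algorithm through an online-batch-to-fully-dynamic transformation — matches the paper's proof (via \Cref{lemma:SmallSparsifierLowDist}, \Cref{lemma:SmallSparsifier}, and the final transformation). You also correctly identify terminal deletions as the main obstacle and propose the right high-level fix (defer deletions, rely on periodic rebuilds). However, two points need attention.

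The genuine gap is your claim that \Cref{lemma:Reduction} ``immediately yields both statements.'' It does not yield part~\ref{SparsifierAmortized}. \Cref{lemma:Reduction} maintains $O(2^{\xi})$ parallel instances of the online-batch data structure and merely \emph{specifies} which one is up to date; it does not give you a single explicitly maintained graph $H$ with bounded recourse. If you try to expose a single $H$ by copying over the specified instance, then whenever the specified instance changes — which happens on a large fraction of updates — you pay recourse on the order of $|H_{k_{i-1}}| + |H_{k_{i}}| = |T|\cdot n^{O(\epsilon^4)}$, far exceeding the $n^{O(\epsilon)}$ amortized budget. The paper explicitly flags this obstruction and therefore uses a \emph{different}, simpler transformation for part~\ref{SparsifierAmortized}: a single online-batch instance whose batch decomposition mirrors the $b$-ary digit representation of the current time step (a deamortization/partial-rollback trick, not \Cref{lemma:Reduction}). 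Your plan needs this second reduction; without it, only part~\ref{SparsifierWorstCase} follows.

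A lesser issue is the claim that the recursion ``is exactly the length-constrained expander hierarchy of \Cref{thm:ExpanderHierarchy} specialized to a fixed length scale $h$'' and can be reused verbatim. It cannot be reused verbatim: \Cref{thm:ExpanderHierarchy} takes no terminal set as input, does not support terminal insertions or deletions as update operations, and builds each level's sparsifier over the \emph{landmark} set $L_k$ of a certified-ED rather than over a given terminal set $T$. The paper therefore re-derives the recursion in \Cref{lemma:SmallSparsifierLowDist}, where the level-$(k{+}1)$ terminal set is reset to the \emph{original} $T$ on rebuild, terminal deletions are held back between rebuilds, and a per-level reinitialization threshold of $\phi\cdot|H_k|$ on both terminal-deletions and other updates (together with \Cref{claim:13.3}) is what actually enforces the size bound $|H|\le |T|\cdot n^{O(\epsilon^4)}$ \emph{at all times}, not merely after periodic rebuilds. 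Your hand-wave that ``the size bound is restored by the periodic rebuild'' is not enough, since the theorem requires the bound to hold between rebuilds as well; you would need the per-level size-control argument.
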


\begin{lemma}
Let $G$ be an online-batch dynamic graph with a fully dynamic terminal set $T\subseteq V(G)$ under $t$ batches of updates $\pi^{(1)},...,\pi^{(t)}$ of edge insertions/deletions, isolated vertex insertions/deletions and terminal insertions/deletions. Given parameters $h$, there is an algorithm that maintains an $(\alpha_{\low}, \alpha_{\up}, h)$-sparsifier $H$ of $T$ on $G$ with size $|H|\leq O(\kappa_{H,\size}\cdot|T|) = |T|\cdot n^{O(\epsilon^{4})}$,
\[
\alpha_{\low} = \lambda_{H,\low}^{\lambda_{k}} = 2^{\poly(1/\epsilon)}\text{ and }\alpha_{\up} = \lambda_{H,\up}^{\lambda_{k}} = 2^{O(1/\epsilon^{2})}
\]
The initialization time is $|G^{(0)}|\cdot \poly(h)\cdot n^{O(\epsilon)}$ and the update time for $\pi^{(i)}$ is $|\pi^{(i)}|\cdot\poly(h)\cdot n^{O(\epsilon)}$.
\label{lemma:SmallSparsifierLowDist}
\end{lemma}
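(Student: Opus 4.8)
\textbf{Proof plan for \Cref{lemma:SmallSparsifierLowDist}.}
The plan is to apply \Cref{thm:NonHopReducingEmulator} recursively so as to remove the additive $\phi\cdot |G|$ term in the sparsifier size, which is the only obstacle preventing \Cref{thm:NonHopReducingEmulator} alone from giving a sparsifier of size $\tilde O(|T|)$. Recall that \Cref{thm:NonHopReducingEmulator} gives a sparsifier of size $\kappa_{H,\size}\cdot(|T|+\phi\cdot|G|)$. So a single application shrinks the host graph by roughly a factor $\phi = n^{\Theta(\epsilon^2)}$ (plus the additive $|T|$ term). Fixing $\phi$ to be a small polynomial $n^{\Theta(\epsilon^2)}$ and iterating $\bar k = \lambda_k = O(1/\epsilon^2)$ times, the ``$\phi\cdot|G|$'' contribution decays geometrically and after all levels becomes dominated by $\poly(\kappa_{H,\size})\cdot|T|$, which is $|T|\cdot n^{O(\epsilon^4)}$, while the host graph at the top level either has size $O(|T|\cdot n^{O(\epsilon^4)})$ or the recursion bottoms out. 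First I would set up the chain: $H_1 = G$ with terminal set $T_1 = T$, and for each $1\le k\le \bar k-1$, obtain $H_{k+1}$ by running \Cref{thm:NonHopReducingEmulator} on $H_k$ with terminal set $T_k$ and parameters $h$ and $\phi=\Theta(1/(t\cdot\kappa_{H,\size}))$ chosen small enough; the new terminal set $T_{k+1}$ is the extended terminal set $\bar T_k = V(H_{k+1})\cap V(H_k)\supseteq T_k$ output by that theorem. The final sparsifier is $H = H_{\bar k}$.

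For correctness of the stretch: each $H_{k+1}$ is a $(\lambda_{H,\low},\lambda_{H,\up},h)$-sparsifier of $\bar T_k$ on $H_k$ (not just of $T_k$), so distances between the original terminals $T\subseteq T_1\subseteq \bar T_1\subseteq\cdots$ are preserved up to a multiplicative $\lambda_{H,\low}$ (lower side) and $\lambda_{H,\up}$ (upper side, for pairs at $H_k$-distance $\le h$) at each level. Composing $\bar k = \lambda_k$ levels gives $\alpha_{\low} = \lambda_{H,\low}^{\lambda_k}$ and $\alpha_{\up} = \lambda_{H,\up}^{\lambda_k}$, as claimed. A subtlety here is the ``$\dist_G(u,v)\le h$'' hypothesis needed to invoke the upper-bound property at each level: I need to argue that if $\dist_{H_1}(u,v)\le h$ then $\dist_{H_k}(u,v)$ stays $\le$ the $h$ used at level $k$ — but since all levels use the same length parameter $h$ and the upper-bound property only inflates distances multiplicatively by $\lambda_{H,\up}$ for pairs already within $h$, I should instead scale the length parameter per level (pass $h_k = h\cdot\lambda_{H,\up}^{k-1}$ at level $k$, analogously to the hierarchy construction in \Cref{sect:OracleShort,sect:DynHopEmu}) so that the inductive invariant ``$\dist_{H_k}(u,v)\le h_k$'' is maintained. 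The freshness/global-identifier bookkeeping from \Cref{thm:NonHopReducingEmulator} (vertices of $V(H_{k+1})\setminus V(H_k)$ are fresh) guarantees the chain is well-defined across levels.

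For the size bound: by \Cref{thm:NonHopReducingEmulator}, $|H_{k+1}|\le \kappa_{H,\size}\cdot(|T_k|+\phi\cdot|H_k|)$, and $|T_{k+1}| = |\bar T_k| \le |H_{k+1}|$. Unrolling this recursion with $\phi\cdot\kappa_{H,\size}\le o(1)$, the term $\phi^{k}\kappa_{H,\size}^{k}|G|$ decays and after $\bar k = \lambda_k$ levels one gets $|H_{\bar k}| \le \kappa_{H,\size}^{O(\lambda_k)}\cdot|T| + (\phi\kappa_{H,\size})^{\lambda_k}\cdot|G|$. Since $\kappa_{H,\size} = n^{O(\epsilon^4)}$ and $\lambda_k = O(1/\epsilon^2)$, we have $\kappa_{H,\size}^{O(\lambda_k)} = n^{O(\epsilon^2)}$ — this is \emph{not} $n^{O(\epsilon^4)}$, so a naive unrolling is too lossy. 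The fix is the standard observation that we need not run all $\lambda_k$ levels: we stop the recursion at the first level $k^\star$ where $\phi\cdot|H_{k^\star}| \le |T|$, at which point $|H_{k^\star+1}| \le \kappa_{H,\size}\cdot(|T_{k^\star}|+|T|) = \kappa_{H,\size}\cdot O(|T_{k^\star}|)$, and since $|T_{k^\star}|$ itself is bounded by $\kappa_{H,\size}\cdot O(|T|)$ coming from the previous level's additive term, we get $|H| = |T|\cdot n^{O(\epsilon^4)}$. (One has to check that $k^\star\le\lambda_k$ always, which follows because $|H_k|$ shrinks by a factor $\ge n^{\Theta(\epsilon^2)}$ per level until the additive term dominates, exactly as in \Cref{claim:EmptyLevel} of \Cref{sect:OracleShort}.) The main obstacle I anticipate is precisely this size accounting: getting the final overhead down to $n^{O(\epsilon^4)}$ rather than $n^{O(\epsilon^2)}$ requires the early-termination argument and a careful tracking of how the ``$|T|$'' additive term at each level feeds into the terminal count at the next, rather than a brute-force geometric-series bound. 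The running time is straightforward: summing $|H_k^{(0)}|\cdot\poly(h)\cdot n^{O(\epsilon)}$ over $k\le\lambda_k$ gives initialization time $|G^{(0)}|\cdot\poly(h)\cdot n^{O(\epsilon)}$ (host sizes decrease geometrically), and likewise the per-batch update time telescopes to $|\pi^{(i)}|\cdot\poly(h)\cdot n^{O(\epsilon)}$ using the recourse bound $\kappa_{H,\rcs} = n^{O(\epsilon^4)}$ from \Cref{thm:NonHopReducingEmulator} to control how the update batch size propagates up the chain, exactly as in the proof of \Cref{thm:ExpanderHierarchy}.
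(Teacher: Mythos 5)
There is a genuine gap: you set the terminal set at level $k+1$ to be the extended terminal set $T_{k+1} = \bar T_k = V(H_{k+1})\cap V(H_k)$, whereas the paper sets $T_{k+1}$ to be (essentially) the \emph{original} terminal set $T$ again at each level. Your choice undermines the shrinkage you rely on. From the proof of \Cref{thm:NonHopReducingEmulator}, $\bar T_k$ consists of $T_k$ together with all the landmark sets, so $|\bar T_k|$ is of the same order as $|H_{k+1}|$, i.e.\ up to $n^{O(\epsilon^4)}\cdot(|T_k|+\phi\cdot|H_k|)$. Feeding this back as the next level's terminal set makes the recursion $|H_{k+2}|\le \kappa_{H,\size}\cdot(|T_{k+1}|+\phi|H_{k+1}|) \approx \kappa_{H,\size}\cdot|H_{k+1}|$, so the sparsifier sizes actually grow with $k$; the $\phi|G|$ term never decays. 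Your claimed unrolled bound $|H_{\bar k}|\le \kappa_{H,\size}^{O(\lambda_k)}|T|+(\phi\kappa_{H,\size})^{\lambda_k}|G|$ is implicitly assuming $|T_{k}|\approx|T|$, which your construction does not deliver. The early-termination patch you propose does not rescue this: with $T_{k^\star}=\bar T_{k^\star-1}$, the condition $\phi\cdot|H_{k^\star}|\le|T|$ only gives $|T_{k^\star}|\lesssim|T|/\phi = |T|\cdot n^{O(\epsilon^2)}$, not $\kappa_{H,\size}\cdot O(|T|)$, so the final size would be $|T|\cdot n^{O(\epsilon^2)}$ at best, not $|T|\cdot n^{O(\epsilon^4)}$.

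The fix is the one the paper uses: at each level $k$, apply \Cref{thm:NonHopReducingEmulator} to $H_k$ with terminal set $T_k$, and set $T_{k+1}:=T$ (the current original terminal set, maintained under the terminal insertions/deletions in $\pi^{(i)}$), \emph{not} $\bar T_k$. Since $T\subseteq T_k\subseteq \bar T_{k-1}$, distances between $T$-pairs are still preserved through the chain with stretch $\lambda_{H,\low}^{\lambda_k}$ and $\lambda_{H,\up}^{\lambda_k}$ exactly as you argue; but now $|T_k|$ stays comparable to $|T|$ (the paper's \Cref{claim:13.3} controls the drift between reinitializations), and the recurrence becomes $|H_{k+1}|\le \kappa_{H,\size}|T| + |H_k|/n^{\Theta(\epsilon^2)}$. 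Unrolling this gives $|H_{\bar k}|\le O(\kappa_{H,\size}|T|) + |G|/n^{\Theta(\epsilon^2)\cdot\lambda_k}$, and with $\lambda_k=O(1/\epsilon^2)$ sufficiently large the second term is $O(1)$ — no early termination is needed, and the constant-exponent overhead is $\kappa_{H,\size}=n^{O(\epsilon^4)}$, not $\kappa_{H,\size}^{\lambda_k}$, because the $\kappa_{H,\size}|T|$ contribution is a fixed additive term per level that telescopes into a geometric series. Your length-parameter scaling $h_k=h\cdot\lambda_{H,\up}^{k-1}$ and your running-time accounting (via the recourse bound $\kappa_{H,\rcs}$) are both correct and match the paper's proof once the terminal-set choice is repaired.
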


\begin{proof}
We fix a parameter $\phi = \kappa_{\phi} = n^{\epsilon^{2}}$ and $\bar{k} = \lambda = O(1/\epsilon^{2})$ to be sufficently large. For each $1\leq k\leq \bar{k}$, we define $h_{k} = h\cdot \lambda_{H,\up}^{k-1}$.
Our strategy is to construct and maintain a hierarchy of sparsifiers with $\bar{k}$ levels by applying \Cref{thm:NonHopReducingEmulator} repeatedly. For each level $1\leq k\leq \bar{k}$, we let $H_{k}$ and $T_{k}$ denote the sparsifier and terminal set at this level. We will guarantee the invariant $T_{1}\supseteq T_{2}\supseteq ...\supseteq T_{\bar{k}}\supseteq T$ at any time.

Initially, let $H_{1}^{(0)} = G^{(0)}$ and $T_{1}^{(0)} = T^{(0)}$ be the first-level sparsifier and terminal set. At each level $1\leq k\leq \bar{k}-1$, we initialize a $(\lambda_{H,\low}, \lambda_{H,\up}, h_{k})$-sparsifier $H_{k+1}^{(0)}$ of terminal set $T_{k}^{(0)}$ on $H_{k}^{(0)}$ via \Cref{thm:NonHopReducingEmulator}, and let $T_{k+1}^{(0)} = T^{(0)}$ be the terminal set of the next level. When an update $\pi^{(i)}$ comes, we update the hierarchy from the bottom up. Concretely, at each level $1\leq k\leq \bar{k}-1$, let $\pi^{(i)}_{k}$ be the update at this level (the first-level update is $\pi^{(i)}_{1} = \pi^{(i)}$). Let $(\pi^{(i)}_{k})_{\mid T^{-}}, (\pi^{(i)}_{k})_{\mid T^{+}}\subseteq \pi^{(i)}_{k}$ denote the terminal deletions and insertions in $\pi^{(i)}_{k}$ respectively. Let $\hat{\pi}^{(i)}_{k} = \pi^{(i)}_{k}\setminus (\pi^{(i)}_{k})_{\mid T^{-}}$ represent edge insertions, edge deletions and terminal insertions in $\pi^{(i)}_{k}$.
\begin{itemize}
\item If $|(\pi^{(i)}_{k})_{\mid T^{-}}|> \phi\cdot |H^{(i)}_{k}|$ or $|\hat{\pi}^{(i)}_{k}|>\phi\cdot |H^{(i)}_{k}|$, we reinitialize the sparsifier and terminal set via \Cref{thm:NonHopReducingEmulator} at all level above. Concretely, for each level $k'$ s.t. $k\leq k'\leq \bar{k}-1$, initialize a $(\lambda_{H,\low}, \lambda_{H,\up}, h_{k'})$-sparsifier $H_{k'+1}^{(i)}$ of terminal set $T_{k'}^{(i)} = T^{(i)}$ on $H_{k'}^{(i)}$ via \Cref{thm:NonHopReducingEmulator}. The update algorithm at time $i$ terminates after the reinitialization.
\item Otherwise, we have $|(\pi^{(i)}_{k})_{\mid T^{-}}|\leq \phi\cdot |H^{(i)}_{k}|$ and $|\hat{\pi}^{(i)}_{k}|\leq\phi\cdot |H^{(i)}_{k}|$. We feed $\hat{\pi}^{(i)}_{k}$ to the \Cref{thm:NonHopReducingEmulator} maintaining $H_{k+1}$ ($|\hat{\pi}^{(i)}_{k}|\leq \phi\cdot|H^{(i)}_{k}|$ meets the requirement of \Cref{thm:NonHopReducingEmulator}), it will update $H_{k+1}^{(i-1)}$ to $H_{k+1}^{(i)}$ s.t. $H_{k+1}^{(i)}$ is a $(\lambda_{H,\low}, \lambda_{H,\up}, h_{k})$-sparisifer of $T_{k}^{(i)} = T_{k}^{(i-1)}\cup (\pi^{(i)}_{k})_{\mid T^{+}}$ on $H_{k}^{(i)}$. Furthermore, let $\bar{\pi}_{k}^{(i)}$ be the batched update that updates $H_{k}^{(i-1)}$ to $H_{k}^{(i)}$. We define the batched update at the next level be $\pi_{k+1}^{(i)} = \bar{\pi}_{k}^{(i)}\cup (\pi_{k}^{(i)})_{\mid T^{+}} \cup (\pi_{k}^{(i)})_{\mid T^{-}}$.
\end{itemize}

\paragraph{Quality of the Highest Sparsifier $H = H_{\bar{k}}$.} First, $T\subseteq V(H)$ holds because we indeed have the invariant $T_{1}\supseteq T_{2}\supseteq ... T_{\bar{k}}\supseteq T$ at any time by the algorithm.

Second, we show that for each $u,v\in T$, we have $\dist_{H}(u,v)\cdot\alpha_{\low}\geq \dist_{G}(u,v)$, where $\alpha_{\low} = \lambda_{H,\low}^{\bar{k}-1} = 2^{\poly(1/\epsilon)}$. The reason is as follows. For each level 
$1\leq k\leq \bar{k}-1$, because $T\subseteq T_{k}$ and $H_{k+1}$ is a $(\lambda_{H,\low}, \lambda_{H,\up}, h_{k})$-sparsifier of $T_{k}$ on $H_{k}$, we have $\dist_{H_{k+1}}(u,v)\cdot\lambda_{H,\low}\geq \dist_{H_{k}}(u,v)$. Therefore, $\dist_{H}(u,v)\cdot\alpha_{\low}\geq \dist_{G}(u,v)$.

Third, for each $u,v\in T$ s.t. $\dist_{G}(u,v)\leq h$, we have $\dist_{H}(u,v)\leq\alpha_{\up}\cdot \dist_{G}(u,v)$, where $\alpha_{\up} = \lambda_{H,\up}^{\bar{k}-1} = 2^{O(1/\epsilon^{2})}$, by the following reason. At each level $1\leq k\leq \bar{k}-1$, $\dist_{H_{k+1}}(u,v)\leq \lambda_{H,\up}\cdot \dist_{H_{k}}(u,v)$ for each $u,v\in T\subseteq T_{k}$ s.t. $\dist_{H_{k}}(u,v)\leq h_{k} = h\cdot\lambda_{H,\up}^{k-1}$. This implies for each $u,v\in T$, $\dist_{H_{k}}(u,v)\leq h\cdot \dist_{G}(u,v)\cdot\lambda_{H,\up}^{k-1}$ at each level $1\leq k\leq \bar{k}$, and in particular, $\dist_{H}(u,v) = \dist_{H_{\bar{k}}}(u,v) \leq \alpha_{\up}\cdot\dist_{G}(u,v)$. 

The size of $H$ is $|H|\leq O(\kappa_{H,\size}\cdot|T|)$ by the following reason. For each level $1\leq k\leq \bar{k}-1$, we have 
\begin{align*}
|H_{k+1}|&\leq \kappa_{H,\size}\cdot(|T_{k}|+\phi\cdot |H_{k}|)\\
&\leq \kappa_{H,\size}\cdot(|T|+(\phi+3t\phi\cdot(1+\phi)^{t})\cdot |H_{k}|)\\
&\leq \kappa_{H,\size}\cdot |T| + |H_{k}|/n^{\Theta(\epsilon^{2})},
\end{align*}
where the second inequality is by \Cref{claim:13.3} and the third inequality is by $\phi = n^{\epsilon^{2}}$. This implies $|H_{k+1}|\leq O(\kappa_{H,\size}\cdot |T|) + |G|/(n^{\Theta(\epsilon^{2})})^{k}$. Because $\bar{k} = O(1/\epsilon^{2})$ is sufficiently large, we have $|H_{\bar{k}}|\leq O(\kappa_{H,\size}\cdot |T|)$.

\begin{claim}
For each level $1\leq k\leq \bar{k}-1$ and each moment $0\leq i\leq t$, we have $|T_{k}^{(i)}|\leq |T^{(i)}| + 3t\phi\cdot(1+\phi)^{t}|H_{k}^{(i)}|$.
\label{claim:13.3}
\end{claim}
\begin{proof}
Let $i^{\star}\leq i$ be the latest moment that $H_{k+1}^{(i^{\star})}$ is obtained by (re)initialization. Then by the algorithm, we have $T_{k}^{(i^{\star})} = T^{(i^{\star})}$. Consider any moment $i'$ s.t. $i^{\star}+1\leq i'\leq i$, we have $|(\pi^{(i')}_{k})_{\mid T^{-}}|\leq \phi\cdot |H^{(i')}_{k}|$ and $|\hat{\pi}^{(i')}_{k}|\leq\phi\cdot |H^{(i')}_{k}|$, which implies $|\pi^{(i')}_{k}|\leq 2\phi\cdot |H^{(i')}_{k}|$.
Obviously, the difference between $|T^{(i)}|$ and $|T^{(i^{\star})}|$ is bounded by the total size of batched updates during this period. Formally, we have
\begin{align*}
|T^{(i)}|\geq |T^{(i^{\star})}| - \sum_{i^{\star}+1\leq i'\leq i} |\pi^{(i')}_{k}|\geq |T^{(i^{\star})}| - \sum_{i^{\star}+1\leq i'\leq i} 2\phi\cdot |H^{(i')}_{k}|.
\end{align*}
Similarly, the difference between $|T^{(i)}_{k}|$ and $|T^{(i^{\star})}_{k}|$ is bounded by
\begin{align*}
|T^{(i)}_{k}|\leq |T^{(i^{\star})}_{k}| + \sum_{i^{\star}+1\leq i'\leq i} |\hat{\pi}^{(i')}_{k}|\leq |T^{(i^{\star})}_{k}| + \sum_{i^{\star}+1\leq i'\leq i} \phi\cdot |H^{(i')}_{k}|.
\end{align*}
Next, we bound $|H^{(i')}_{k}|$ in terms of $|H^{(i)}_{k}|$. For each $i^{\star}+1\leq i'\leq i$, we have $|H^{(i'-1)}_{k}|\leq |H^{(i')}_{k}| + |\hat{\pi}^{(i')}_{k}|\leq  (1+1/\phi)|H^{(i')}_{k}|$ because $\hat{\pi}^{(i')}_{k}$ is the batched update from $H^{(i'-1)}_{k}$ to $H^{(i')}_{k}$. This means
\[
|H^{(i')}_{k}|\leq (1+\phi)^{i-i'}\cdot |H^{(i)}_{k}|\leq (1+\phi)^{t}\cdot |H^{(i)}_{k}|.
\]
Putting all together, we have $|T^{(i)}_{k}|\leq |T^{(i)}| + 3t\phi(1+\phi)^{t}\cdot|H^{(i)}_{k}|$.

\end{proof}

\paragraph{The Running Time.} The initialization time is 
\[
\sum_{k}(|H_{k}^{(0)}| + |T^{(0)}|/\phi)\cdot\poly(h_{k})\cdot n^{O(\epsilon)}/\phi = |G^{(0)}|\cdot \poly(h)\cdot n^{O(\epsilon)}.
\]

Now consider the batched update $\pi^{(i)}$ at time $i$, and let $k^{\star}$ be the minimum level that a reinitialization is triggered. For each level $1\leq k\leq k^{\star}-2$, we have $|\pi^{(i)}_{k+1}|
\leq |\bar{\pi}^{(i)}_{k+1}| + |\pi^{(i)}_{k}|\leq O(\kappa_{H,\rcs}\cdot|\pi^{(i)}_{k}|)\leq O(\kappa_{H,\rcs})^{k}\cdot |\pi^{(i)}|\leq n^{O(\epsilon)}\cdot |\pi^{(i)}|$. The total update time summing over all levels $1\leq k\leq k^{\star}-1$ is
\[
\sum_{1\leq k\leq k^{\star}-1}|\pi^{(i)}_{k}|\cdot \poly(h_{k})\cdot n^{O(\epsilon)}/\phi^{2} = |\pi^{(i)}|\cdot \poly(h)\cdot n^{O(\epsilon)}.
\]
The total reinitialization time summing over all levels $k^{\star}\leq k\leq \bar{k}-1$ is
\[
\sum_{k^{\star}\leq k\leq \bar{k}-1} (|H_{k}^{(i)}| + |T_{k}^{(i)}|/\phi)\cdot\poly(h_{k})\cdot n^{O(\epsilon)}/\phi = |\pi^{(i)}|\cdot\poly(h)\cdot n^{O(\epsilon)},
\]
because the reinitialization time is dominated by level 
$k^{\star}$, and we have $|\pi^{(i)}_{k^{\star}}|\geq \phi\cdot |H_{k^{\star}}^{(i)}|$, $|T_{k^{\star}}^{(i)}|\leq |H^{(i)}_{k^{\star}}|$ and $|\pi^{(i)}_{k^{\star}}|\leq n^{O(\epsilon)}\cdot |\pi^{(i)}|$.

\end{proof}

\begin{lemma}
Let $G$ be an online-batch dynamic graph with a fully dynamic terminal set $T\subseteq V(G)$ under $t$ batches of updates $\pi^{(1)},...,\pi^{(t)}$ of edge insertions/deletions, isolated vertex insertions/deletions, and terminal insertions/deletions. There is an algorithm that maintains an $(\alpha_{\low}, \alpha_{\up})$-sparsifier $H$ of $T$ on $G$ with $\alpha_{\low} = 2^{\poly(1/\epsilon)}$, $\alpha_{\up} = 2^{O(1/\epsilon^{3})}$, and size $|H|\leq O(\lambda_{x}\cdot \kappa_{H,\size}\cdot|T|) = |T|\cdot n^{O(\epsilon^{4})}$.
The initialization time is $|G^{(0)}|\cdot n^{O(\epsilon)}$ and the update time for $\pi^{(i)}$ is $|\pi^{(i)}|\cdot n^{O(\epsilon)}$.
\label{lemma:SmallSparsifier}
\end{lemma}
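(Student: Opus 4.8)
The plan is to prove \Cref{lemma:SmallSparsifier} by the same ``stacking'' paradigm used for the distance oracle in \Cref{thm:BatchDynDistanceOracle}: combine the bounded-distance sparsifier of \Cref{lemma:SmallSparsifierLowDist} with a chain of distance-reducing graphs so that every relevant distance scale is brought down to $n^{\epsilon}$, where the $\poly(h)$ factor in the running time of \Cref{lemma:SmallSparsifierLowDist} becomes only $n^{O(\epsilon)}$. Concretely, set $h = n^{\epsilon}$ and $\bar{x} = \lambda_{x} = O(1/\epsilon)$ sufficiently large. Let $G_{1} = G$, and for $2\le x\le\bar{x}$ let $G_{x}$ be the distance-reduced graph obtained from $G_{x-1}$ by \Cref{thm:Stacking} with length parameter $h$; the chain only receives the graph-update portion of each batch $\pi^{(i)}$, and by \Cref{thm:Stacking} the batch it induces on $G_{x}$ has size at most $\kappa_{\stk,\rcs}^{x-1}\cdot|\pi^{(i)}|$. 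On each $G_{x}$ we maintain, via \Cref{lemma:SmallSparsifierLowDist}, a bounded-distance $(\alpha'_{\low},\alpha'_{\up},\hat{h})$-sparsifier $H_{x}$ of the terminal set $T$ (which lives in $V(G)\subseteq V(G_{x})$), feeding it the induced graph-update batch on $G_{x}$ together with the original terminal insertions/deletions; here $\alpha'_{\low} = \lambda_{H,\low}^{\lambda_{k}}$, $\alpha'_{\up} = \lambda_{H,\up}^{\lambda_{k}}$, and $\hat{h} = \lambda_{\stk,\up}^{\bar{x}}\cdot h = n^{O(\epsilon)}$. Finally, from each $H_{x}$ produce $\widetilde{H}_{x}$ by (i) scaling every edge length by $s_{x} := \lceil(\lambda_{\stk,\low}\cdot h)^{x-1}\rceil$ and (ii) relabelling every non-terminal vertex of $H_{x}$ with a fresh global identifier, so that $V(\widetilde{H}_{x})\cap V(\widetilde{H}_{x'}) = T$ for $x\ne x'$; output $H := \bigcup_{1\le x\le\bar{x}}\widetilde{H}_{x}$.

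For correctness, iterating the two inequalities of \Cref{thm:Stacking} gives, for all $u,v\in V(G)$, both $\dist_{G}(u,v)\le(\lambda_{\stk,\low}h)^{x-1}\dist_{G_{x}}(u,v)$ and $\dist_{G_{x}}(u,v)\le\lambda_{\stk,\up}^{x-1}\max\{\dist_{G}(u,v)/h^{x-1},1\}$. Combining these with the two sparsifier inequalities of $H_{x}$ and the scaling by $s_{x}$, the lower-bound direction becomes uniform: for every $x$ and every $u,v\in T$ one gets $\dist_{G}(u,v)\le\alpha'_{\low}\cdot\dist_{\widetilde{H}_{x}}(u,v)$. To upgrade this to $\dist_{G}(u,v)\le\alpha'_{\low}\cdot\dist_{H}(u,v)$ I would take a shortest $u$-$v$ path $P$ in $H$ and cut it at terminal vertices; since relabelling made every non-terminal vertex private to a single $\widetilde{H}_{x}$, each resulting sub-path lies entirely inside one $\widetilde{H}_{x}$ with both endpoints in $T$, so applying the per-level bound to each piece and summing yields the claim (in particular $H$-paths certify $G$-connectivity). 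For the upper bound, given $u,v\in T$ with $d = \dist_{G}(u,v)\ge 1$, choose $x$ with $h^{x-1}\le d< h^{x}$ (such $x\le\bar{x}$ since $d\le\poly(n)$ and $\bar{x}$ is large enough); then $\dist_{G_{x}}(u,v)\le\lambda_{\stk,\up}^{x-1}\cdot d/h^{x-1}< \hat{h}$, so $H_{x}$ preserves it up to $\alpha'_{\up}$, and rescaling by $s_{x}$ gives $\dist_{H}(u,v)\le\dist_{\widetilde{H}_{x}}(u,v)\le(\lambda_{\stk,\low}\lambda_{\stk,\up})^{\bar{x}}\cdot\alpha'_{\up}\cdot d =: \alpha_{\up}\cdot d$; the case $d<1$ is handled by $x = 1$. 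Thus $\alpha_{\low} = \alpha'_{\low} = 2^{\poly(1/\epsilon)}$ and $\alpha_{\up} = (\lambda_{\stk,\low}\lambda_{\stk,\up})^{\bar{x}}\alpha'_{\up}$, which is $2^{O(1/\epsilon^{3})}$ after fixing the slack parameters of the sub-routines as in the paper.

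For efficiency, each $\widetilde{H}_{x}$ has the same size as $H_{x}$, so $|H|\le\bar{x}\cdot O(\kappa_{H,\size}|T|) = O(\lambda_{x}\kappa_{H,\size}|T|) = |T|\cdot n^{O(\epsilon^{4})}$ by \Cref{lemma:SmallSparsifierLowDist}. For the running time, the key observation is that the graph sizes and the induced batch sizes blow up by at most $\kappa_{\stk,\size}$ and $\kappa_{\stk,\rcs}$ per level respectively, and since $\bar{x} = O(1/\epsilon)$ while $\kappa_{\stk,\size},\kappa_{\stk,\rcs} = n^{O(\epsilon^{2})}$, over the whole chain these accumulate only to $n^{O(\epsilon)}$; together with $\hat{h} = n^{O(\epsilon)}$ this makes each of the $\bar{x}$ invocations of \Cref{thm:Stacking} and \Cref{lemma:SmallSparsifierLowDist} cost $|G^{(0)}|\cdot n^{O(\epsilon)}$ at initialization and $|\pi^{(i)}|\cdot n^{O(\epsilon)}$ per batch, and summing over the $\bar{x}$ levels (and absorbing the relabelling and rescaling, which are linear in the changed part of each $H_{x}$) gives the stated bounds. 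As throughout the paper, we use $1/\log^{c}n\le\epsilon\le 1$ with $c$ small so that $2^{\poly(1/\epsilon)}$ and $\poly\log n$ are both $n^{O(\epsilon)}$.

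I expect the main obstacle to be the \emph{gluing step}: making the union of the per-level sparsifiers into a single graph that is still a genuine vertex sparsifier of $T$. A naive union would let $H$-paths ``hop'' between levels through spurious shared vertices, destroying the stretch guarantee; the relabelling-to-fresh trick fixes this, but one must verify carefully that after relabelling every maximal terminal-to-terminal sub-path of an $H$-path indeed lies in a single $\widetilde{H}_{x}$, and that the scaling factors $s_{x}$ are calibrated exactly so that the per-level lower bound is $\dist_{G}\le\alpha'_{\low}\dist_{\widetilde{H}_{x}}$ with \emph{no} dependence on $x$, while the per-level upper bound for the ``right'' $x$ loses only the telescoping factor $(\lambda_{\stk,\low}\lambda_{\stk,\up})^{x-1}$. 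The remaining work is the (routine but lengthy) bookkeeping of how the $2^{\poly(1/\epsilon)}$ and $n^{O(\epsilon)}$ factors propagate through the $O(1/\epsilon)$ levels of the chain.
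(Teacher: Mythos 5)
Your overall route is the same as the paper's: build the chain $G_{1},\dots,G_{\bar{x}}$ of distance-reduced graphs via \Cref{thm:Stacking}, run \Cref{lemma:SmallSparsifierLowDist} on each level to get a bounded-distance sparsifier, rescale, and glue by keeping terminals shared while making everything else private. The gluing is slightly cleaner than the paper's (they give \emph{all} vertices of $H_{x}$ fresh identities and then add unit-length edges from each $v\in T$ to its copy $v_{x,\fresh}$; you keep $T$ shared and relabel only non-terminals), but both variants cut every $H$-path at terminal vertices into pieces each living in a single level, so this is a cosmetic difference.

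There is, however, a genuine calibration error in your choice of scaling factor, and it causes your claimed $\alpha_{\up}$ to fail. You scale level $x$ by $s_{x}=\lceil(\lambda_{\stk,\low}h)^{x-1}\rceil$ so that the lower-bound direction becomes $x$-independent, but this pushes the $\lambda_{\stk,\low}^{x-1}$ factor into the upper bound: for the ``right'' $x$ you get
\[
\dist_{\widetilde{H}_{x}}(u,v)\;\le\; s_{x}\,\alpha'_{\up}\,\lambda_{\stk,\up}^{x-1}\,\frac{\dist_{G}(u,v)}{h^{x-1}}\;\le\; O\bigl((\lambda_{\stk,\low}\lambda_{\stk,\up})^{\bar{x}}\bigr)\alpha'_{\up}\,\dist_{G}(u,v).
\]
With the paper's parameter choices $\lambda_{\stk,\low}=O(\lambda_{Q,\low})$ contains a factor of at least $\lambda_{\rt,h}(t)=2^{\Omega(1/\epsilon^{4})}$, and $\bar{x}=\lambda_{x}=\Theta(1/\epsilon)$, so $\lambda_{\stk,\low}^{\bar{x}}=2^{\Omega(1/\epsilon^{5})}$. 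This is strictly worse than the stated $\alpha_{\up}=2^{O(1/\epsilon^{3})}$, so your assertion that fixing the slack parameters as in the paper yields $2^{O(1/\epsilon^{3})}$ is false. (Your $\alpha_{\low}\cdot\alpha_{\up}$ is still $2^{\poly(1/\epsilon)}$, so you would still get \emph{a} sparsifier, just not one with the split required by the lemma statement.) The paper avoids this by scaling $H'_{x}$ by exactly $h^{x-1}$, i.e.\ only by the length-reduction factor accumulated along the chain and \emph{not} by $\lambda_{\stk,\low}^{x-1}$. Then the telescoping $h^{x-1}$ cancels against the $1/h^{x-1}$ from \Cref{thm:Stacking} in the upper-bound direction, giving $\alpha_{\up}\le O(\lambda_{\stk,\up}^{\bar{x}-1}\alpha'_{\up})$, and since $\lambda_{\stk,\up}=O(1/\epsilon^{2})$, $\bar{x}=O(1/\epsilon)$, and $\alpha'_{\up}=\lambda_{H,\up}^{\lambda_{k}}=2^{O(1/\epsilon^{2})}$, this is indeed $2^{O(1/\epsilon^{3})}$. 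The price is that the lower-bound direction now picks up the factor $\lambda_{\stk,\low}^{x-1}$, so $\alpha_{\low}=\lambda_{\stk,\low}^{\bar{x}-1}\alpha'_{\low}$, which is where the $2^{\poly(1/\epsilon)}$ in the lemma comes from. Replace your $s_{x}$ by $h^{x-1}$ and the rest of your argument goes through.
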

\begin{proof}

Fix $h = n^{\epsilon}$. We maintain the same chain ${\cal G}$ of distance-reduced graphs $G_{1},G_{2},...,G_{\bar{x}}$ as that in the proof of \Cref{thm:BatchDynDistanceOracle}. 
\begin{itemize}
\item[(1)] $G_{1} = G$ is the original graph.
\item[(2)] For each $2\leq x\leq \bar{x}$, $G_{x}$ is the distance-reduced graph from applying \Cref{thm:Stacking} on $G_{x-1}$ with length parameter $h$.
\end{itemize}
By setting $\bar{x} = \lambda_{x} = O(1/\epsilon)$ to be sufficiently large, \Cref{claim:LastGraphMaxDis} states that the last graph $G_{\bar{x}}$ has $\dist_{G_{\bar{x}}}(u,v)\leq h$ for all $u,v\in V(G)$.

For each $1\leq x\leq \bar{x}$, we maintain an $(\alpha'_{\low},\alpha'_{\up},h)$-sparsifier $H'_{x}$ of $T$ on $G_{x}$ by applying \Cref{lemma:SmallSparsifierLowDist}, where $\alpha'_{\low} = 2^{\poly(1/\epsilon)}$, $\alpha'_{\up} = 2^{O(1/\epsilon^{2})}$ and $|H'_{x}| = O(\kappa_{H,\size}\cdot |T|)$. Let $H_{x}$ be the graph from scaling up the length of all edges in $H'_{x}$ by a factor $h^{x-1}$.

We construct the final sparsifier $H$ as follows. For each dynamic sparsifier $H_{x}$, we maintain a \emph{fresh graph} $H_{x,\fresh}$ which is isomorphic to $H_{x}$ over all time, but vertices in $H_{x,\fresh}$ are \emph{fresh vertices}. In other words, $H_{x,\fresh}$ is obtained by assigning each vertex in $H_{x}$ a completely new global vertex identifier. This step is to ensure that, over all time, $V(H_{x_{1},\fresh})$ and $V(H_{x_{2},\fresh})$ are disjoint for different $1\leq x_{1},x_{2}\leq\bar{x}$, and each $V(H_{x,\fresh})$ is disjoint from $T$. For each $H_{x,\fresh}$, let $T_{x,\fresh}\subseteq V(H_{x,\fresh})$ be vertices corresponding to $T\subseteq V(H_{x})$. Then we define $H$ by setting $V(H) = T\cup\bigcup_{x}V(H_{x,\fresh})$ and
\[
E(H) = \{(v,v_{x,\fresh})\mid v\in T,1\leq x\leq \bar{x}\}\cup \bigcup_{x}E(H_{x,\fresh}),
\]
where $v_{x,\fresh}\in T_{x,\fresh}$ is the counterpart of $v\in T$, and each edge $(v,v_{x,\fresh})$ has length $1$ (if zero-length edges are allowed, it is more intuitive to set the length to be $0$). In other words, we take the union of all $H_{x,\fresh}$, add the terminals $T$, and then for each terminal $v\in T$ and each $1\leq x\leq \bar{x}$, add an edge $(v,v_{x,\fresh})$ connecting $v$ to its counterpart $v_{x,\fresh}\in T_{x,\fresh}$.

To see the stretch of $H$, consider an arbitrary pair of different vertices $u,v\in T$. We first show that $\dist_{H}(u,v)\leq \alpha_{\up}\cdot \dist_{G}(u,v)$ for $\alpha_{\up} = \lambda^{x^{\star}-1}_{\stk,\up}\cdot \alpha'_{\up} + 2 = 2^{O(1/\epsilon^{2})}$. If $u$ and $v$ are not connected in $G$, then this statement trivially holds. Now suppose $u$ and $v$ are connected in $G$. We take the minimum $x^{\star}$ s.t. $\dist_{G_{x^{\star}}}(u,v)\leq h$. Note that $x^{\star}$ must exist by \Cref{claim:LastGraphMaxDis}. Then for each $2\leq x\leq x^{\star}$, we have $\dist_{G_{x}}(u,v)\leq \lambda_{\stk,\up}\cdot \dist_{G_{x-1}}(u,v)/h$, which implies $\dist_{G_{x^{\star}}}(u,v)\leq \lambda_{\stk,\up}^{x^{\star}-1}\cdot \dist_{G}(u,v)/h^{x^{\star}-1}$. Because $\dist_{G_{x^{\star}}}(u,v)\leq h$, the properties of $H'_{x^{\star}}$ gives $\dist_{H'_{x^{\star}}}(u,v)\leq \alpha'_{\up}\cdot \dist_{G_{x^{\star}}}(u,v)$. Therefore, we have
\begin{align*}
\dist_{H}(u,v)&\leq 2 + \dist_{H_{x,\fresh}}(u_{x,\fresh},v_{x,\fresh})= 2 + \dist_{H_{x}}(u,v)\\
&= 2 + \dist_{H'_{x^{\star}}}(u,v)\cdot h^{x^{\star}-1}\leq 2 + \lambda^{x^{\star}-1}_{\stk,\up}\cdot \alpha'_{\up}\cdot \dist_{G}(u,v)\\
&\leq (\lambda^{x^{\star}-1}_{\stk,\up}\cdot \alpha'_{\up} + 2)\cdot \dist_{G}(u,v).
\end{align*}

Next, we show that $\dist_{G}(u,v)\leq \alpha_{\low}\cdot \dist_{H}(u,v)$, for $\alpha_{\low} = \lambda^{x^{\star}-1}_{\stk,\low}\cdot \alpha'_{\low} \leq 2^{\poly(1/\epsilon)}$. We consider the shortest $u$-to-$v$ path $P$ on $H$. We decompose $P$ into maximal subpaths $P'$ with both endpoints in $T$ but no internal vertices in $T$. For each such subpath $P'$ (with endpoints $u',v'\in T$), we will show $\dist_{G}(u',v')\leq \alpha_{\low}\cdot \ell_{H}(P')$ in a moment, so $\dist_{G}(u,v)\leq \alpha_{\low}\cdot \dist_{H}(u,v)$ holds by concatenating all such subpaths. Now consider one such subpath $P'$. Note that there exists an $x$ s.t. $P'$ is the concatenation of edge $(u',u'_{x,\fresh})$, subpath $P'_{x,\fresh}$ (connecting $u'_{x,\fresh}$ and $v'_{x,\fresh}$) and edge $(v'_{x,\fresh},v')$, where $P'_{x,\fresh}$ is in $H_{x,\fresh}$. This implies $\ell_{H}(P') = 2 + \ell_{H_{x,\fresh}}(P'_{x,\fresh})$. Then
\begin{align*}
\dist_{G}(u',v')&\leq \lambda_{\stk,\low}^{x-1}\cdot h^{x-1}\cdot \dist_{G_{x}}(u',v')\\
&\leq \lambda_{\stk,\low}^{x-1}\cdot\alpha'_{\low}\cdot h^{x-1}\cdot \dist_{H'_{x}}(u',v')\\
&= \lambda_{\stk,\low}^{x-1}\cdot\alpha'_{\low}\cdot\dist_{H_{x}}(u',v')\\
&=\lambda_{\stk,\low}^{x-1}\cdot\alpha'_{\low}\cdot\dist_{H_{x,\fresh}}(u'_{x,\fresh},v'_{x,\fresh})\\
&\leq \lambda_{\stk,\low}^{x^{\star}-1}\cdot\alpha'_{\low}\cdot\ell_{H_{x}}(P'),
\end{align*}
where the first inequality is by property of distance-reduced graphs (see \Cref{thm:Stacking}), and the second inequality is by the property of sparsifier $H'_{x}$.

The running time is straightforward combining \Cref{lemma:SmallSparsifierLowDist} and the analysis in the proof of \Cref{thm:BatchDynDistanceOracle}.

\end{proof}

\begin{proof}[Proof of \Cref{thm:FullyDynamicSparsifier}]
The result \ref{SparsifierWorstCase} with worst-case update time in \Cref{thm:FullyDynamicSparsifier} is a simple corollary from \Cref{lemma:SmallSparsifier} and \Cref{lemma:Reduction} by an argument similar to that in \Cref{sect:Reduction}. We will maintain $2^{O(1/\epsilon)}$ graphs $H_{1},...,H_{2^{O(1/\epsilon)}}$ as candidates of the sparsifer instead of only one, because the reduction in \Cref{lemma:Reduction} basically obtains the fully dynamic algorithm by running $2^{O(1/\epsilon)}$ instances of the online-batch dynamic algorithm in parallel, and at any moment, only one of the instances is ready for the up-to-date $G$. In other words, if we insist to maintain only one sparsifier $H$ explicitly, then when we switch from the copy $k_{i-1}$ for time $i-1$ to the copy $k_{i}$ for time $i$, the sparsifier $H$ needs to change from $H_{k_{i-1}}$ to $H_{k_{i}}$, and this may take $|H_{k_{i-1}}| + |H_{k_{i}}| = (|T^{(i-1)}| + |T^{(i)}|)\cdot n^{O(\epsilon^{4})}$ (much larger than $n^{O(\epsilon)}$) in the worst case, which is unaffordable.

However, if amortized update time is allowed, then a much simpler transformation from online-batch dynamic algorithm to fully dynamic algorithm enables us to maintain the sparsifier $H$ explicitly. Let $b = \lceil n^{\epsilon}\rceil$ and $\bar{z} = \lceil 2/\epsilon\rceil$. For each integer $0\leq i\leq n^{2}$, we can represent it by a unique polynomial $i = \sum_{0\leq z\leq \bar{z}} a_{z}\cdot b^{\bar{z}-z}$, where $0\leq a_{z}\leq b-1$ are integers. We initialize one instance of the online-batch dynamic algorithm, and keep the invariant that at any time $i$ (if we arrive at a moment $i>n^{2}$, we just restart the algorithm), the algorithm is for batched updates $\pi^{(1)},\pi^{(2)},...,\pi^{(\bar{z})}$ where each $\pi^{(z)}$ has size $|\pi^{(z)}| = a_{z}\cdot b^{\bar{z}-z}$ (drop those $\pi^{(z)}$ with size zero) and they are generated by chopping the current sequence of unit updates. By the invariant, when we move from moment $i-1$ to $i$, we may drop a suffix of the batched updates and perform a new batch update (which is the union of the dropped batched updates and the unit update at moment $i$). To simulate this, we just need to withdraw the suffix of batched updates we are planning to drop (which takes time at most their total update time) and then perform the new batched update on the online-batch dynamic algorithm. Regarding the amortized update time, note that each unit update will be inside at most $\bar{z}\cdot b =  n^{O(\epsilon)}$ batched updates (including those got withdrew). Therefore, the total size of batched updates applied to the online-batch dynamic algorithm is at most the multiplication of $n^{O(\epsilon)}$ and the length of the sequence of unit updates. Since the update time for a batch $\pi$ is $|\pi|\cdot n^{O(\epsilon)}$, the final amortized update time is $n^{O(\epsilon)}\cdot n^{O(\epsilon)} = n^{O(\epsilon)}$.

\end{proof}

\section{Application: Maximum Multicommodity Flows}
\label{sect:MultiFlow}
In this section, we show a constant-approximation multicommodity maxflows algorithm for vertex-unit-capacitated graphs. In the multicommodity maxflow problem, the input is $k$ pairs of source and sink vertices, denoted by $\{(s_{j},t_{j})\mid 1\leq j\leq k\}$. Let ${\cal P}_{j}$ collect all simple paths in $G$ connecting $s_{j}$ and $t_{j}$. The problem is formulated by the LP on the left below, and its dual is on the right.

\begin{align*}
~~~~~~~~\max &\sum_{1\leq j\leq k}\sum_{P\in {\cal P}_{j}} f(P) ~~~~~~~~~~~~~~~~~~~~~~~~~~~~\min\sum_{v\in V(G)} w(v)\\
\text{s.t. }&\sum_{P\ni v} f(P)\leq 1, \forall v\in V(G)~~~~~~~~~~~~~~~~~~\text{s.t. }\sum_{v\in P}w(v)\geq 1,\forall P\in\bigcup_{j}{\cal P}_{j}\\
&f(P)\geq 0, \forall P\in \bigcup_{j}{\cal P}_{j}~~~~~~~~~~~~~~~~~~~~~~~~~~~~w(v)\geq 0,\forall v\in V(G)
\end{align*}

\begin{theorem}
Let $G$ be an vertex-unit-capacitated undirected graph with $n$ vertices, $m$ edges, and $k$ source sink pairs $\{(s_{j},t_{j})\mid 1\leq j\leq k\}$. For some sufficiently small constant $c>0$, given a parameter $1/\log^{c}n\leq \epsilon\leq 1$, there is a deterministic algorithm computes a $2^{\poly(1/\epsilon)}$-approximate multicommodity maxflow in $O((m+n+k)n^{\epsilon})$ time.
\end{theorem}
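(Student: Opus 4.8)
The plan is to use the multiplicative weights update (MWU) framework for multicommodity flow, as first observed by Madry \cite{Madry10_stoc}, where the bottleneck computation is a dynamic shortest-path/distance-oracle subroutine. We maintain node weights $w(v)$, all initialized to roughly $1/n$ (after the standard rescaling), and in each iteration we find an approximately shortest $s_j$-$t_j$ path under the current weights $w$, route a small amount of flow along it, and multiplicatively increase the weights of the vertices on that path. The key point is that we do not rebuild a shortest-path structure from scratch each iteration: instead, the weight increases are a sequence of edge-length (or vertex-length) updates to an underlying dynamic graph, so we can plug in \Cref{thm:main} (or more precisely its detailed version \Cref{thm:MainDetailed}, which supports path-reporting) as the dynamic distance oracle. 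Since we need vertex capacities rather than edge capacities, I would use the standard vertex-splitting reduction — replace each vertex $v$ by an edge $(v_{\mathrm{in}},v_{\mathrm{out}})$ carrying the ``capacity'' weight — which only blows up $n$ and $m$ by a constant factor, so the dynamic oracle runs on a graph of size $O(m+n)$.

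The steps, in order, would be: (1) set up the vertex-split auxiliary graph $G'$ and initialize the dynamic distance oracle of \Cref{thm:MainDetailed} on $G'$ with parameter $\epsilon$, paying $O((m+n)n^{\epsilon})$ initialization time; (2) run $O(\epsilon^{-1}\log n)$ (or $\mathrm{poly}(1/\epsilon)\log(n+k)$) MWU phases, where in each phase we iterate over the $k$ commodities, query the oracle for an approximate shortest path for $(s_j,t_j)$ under the current weights, and — as is standard in MWU for maxflow — either route flow along it if it is short enough or skip the commodity; (3) after each routing step, apply the weight-increase updates to the oracle (only on the vertices of the returned path, so the total update cost across the whole algorithm telescopes against the total flow-path length, which MWU bounds by $O((m+n+k)\cdot\mathrm{poly}(1/\epsilon)\log n)$ times the per-update cost $n^{\epsilon}$); (4) track the standard MWU potential to argue termination after $\widetilde{O}(1/\epsilon)$-ish phases and that the produced fractional flow, after the usual scaling-down to restore feasibility, has value within a $2^{\mathrm{poly}(1/\epsilon)}$ factor of optimum. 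The approximation factor is governed by the product of the MWU width/approximation overhead (a constant, or $1+\delta$) and the distance-oracle approximation $2^{\mathrm{poly}(1/\epsilon)}$ from \Cref{thm:main}, so the composition stays $2^{\mathrm{poly}(1/\epsilon)}$. The final running time is $O((m+n+k)n^{\epsilon})$ after absorbing $\mathrm{poly}(1/\epsilon)\log n = n^{O(\epsilon)}$ factors into $n^{\epsilon}$, exactly as in the comparison to \cite{chuzhoy2021decremental,chuzhoy2023new}.

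The main obstacle, and the place requiring the most care, is reconciling the \emph{approximate} nature of the distance oracle with the correctness of the MWU analysis: the classical analysis assumes we can find an (almost) exact shortest path under $w$, but here the oracle only returns a path whose length is within a $2^{\mathrm{poly}(1/\epsilon)}$ factor of $\mathrm{dist}_w(s_j,t_j)$. This is handled by the by-now-standard observation (used in \cite{chuzhoy2021decremental} and related work) that MWU is robust to an $\alpha$-approximate oracle at the cost of an extra $\alpha$ factor in the final approximation ratio — one runs the phases against the relaxed stopping threshold $\alpha$ times the current minimum, and the dual weight function $w$ at termination certifies near-optimality up to $\alpha$. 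A secondary technical point is that the oracle's path-reporting time is $O(|P|)$ for a (possibly non-simple) path, and MWU only needs to bound $\sum |P|$ over all routing steps; since each routing step pushes at least a $\mathrm{poly}(1/\epsilon)/\log n$-fraction of a vertex's capacity worth of flow along $P$ and total flow through each vertex is $1$, the total reported-path length is $O((m+n+k)\cdot n^{O(\epsilon)})$, so the cumulative update and reporting cost is $O((m+n+k)n^{\epsilon})$. One should also confirm that non-simple returned paths cause no harm — we can either shortcut them (the $O(|P|n^{\epsilon})$ simple-path interface of \Cref{thm:MainDetailed}) or simply observe that routing along a walk only wastes capacity by a constant factor already absorbed into $\alpha$.
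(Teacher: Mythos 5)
Your high-level plan — run the Garg--K\"onemann / MWU framework against the dynamic distance oracle of \Cref{thm:MainDetailed}, bound the update and path-reporting cost by a telescoping argument against the total flow, and absorb the oracle's $2^{\poly(1/\epsilon)}$ approximation into the final ratio — is exactly what the paper does, and most of the accounting (total queries $O(k\log n)$, total weight increases $O(\log n)$ per vertex hence $O(m\log n)$ edge-length updates, total simple-path hops $O(n\log n)$ via feasibility and $\eta=\Theta(1/\log n)$) lines up with the paper's proof. However, two specific ingredients you propose need to be replaced. First, the vertex-splitting reduction is the wrong tool here: the dynamic oracle is only used to evaluate $w$-shortest paths, not to enforce vertex capacities (those are enforced by the MWU weight updates), and in an \emph{undirected} graph the split gadget does not force a path to traverse the internal $(v_{\rm in},v_{\rm out})$ edge unless you add an orientation/port structure, while the original edges get length $0$, which the oracle forbids. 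The paper avoids all of this with the clean trick $\ell(e) = (w(u)+w(v))/2$ for each $e=(u,v)$, which gives $\ell(P)\le w(P)\le 2\ell(P)$ on \emph{simple} paths and costs only a factor $2$ in the approximation. Second, the fallback ``routing along a walk only wastes capacity by a constant factor'' is not correct: a non-simple returned walk $P$ can visit a low-weight vertex $v$ up to roughly $\alpha^2\lambda/w(v)$ times while still satisfying $w(P)\le(1+\delta)\alpha^2\lambda$, and that multiplicity is not $O(1)$ when $w(v)$ is near its floor $1/n^\zeta$. This breaks both the width bound $f(v)\le 1/\eta$ in the feasibility lemma (which drives $\eta=\Theta(1/\log n)$) and the telescoping $\sum_P |P|\le n/\eta$ that you rely on for the running time. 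The paper therefore must — and does — invoke the $O(|P_{\sp}|\cdot n^{\epsilon})$ \emph{simple}-path-reporting interface, paying the extra $n^{\epsilon}$ per reported hop, which still gives total path-reporting time $O(n^{1+\epsilon})$ and overall running time $O((m+n+k)n^{\epsilon})$.
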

\begin{proof}
We first prepare distance oracles (with simple-path-reporting interfaces) for graphs with vertex lengths. Precisely, let $G_{w}$ be a graph with length function $w$ on vertices. Naturally, for any simple path $P$ on $G_{w}$, its length is $w(P) = \sum_{v\in P}w(v)$. For two vertices $u,v\in V(G_{w})$, the distance between them is the length of the shortest $u$-$v$ simple paths. 

To implement an oracle for vertex lengths, we define a graph $G_{\ell}$ (isomorphic to $G_{w}$) with length function $\ell$ on edges, where for each edge $e=(u,v)\in E(G_{\ell})$, define its length $\ell(e):=(w(u) + w(v))/2$. Obviously for each simple path $P$, we have $\ell(P)\leq w(P)\leq 2\ell(P)$, so any $\alpha'$-approximate distance oracle for $G_{\ell}$ is a $(2\alpha')$-approximate distance oracle for $G_{w}$.

Therefore, applying the oracle in \Cref{thm:MainDetailed} on $G_{\ell}$, we can obtain an $\alpha$-approximate dynamic distance oracle ${\cal O}$ for $G_{w}$ with approximate ratio $\alpha = 2^{\poly(1/\epsilon)}$. A small technical detail is that \Cref{thm:MainDetailed} requires $G_{\ell}$ has polynomially-bounded positive integral edge lengths. Fortunately, in our application in \Cref{algo:MultiFlow}, we can easily scale up the vertex lengths $w$ to be real numbers (at least $1$ and at most $\poly(n)$), round them up to integers, and multiply them by $2$. This will ensure edge lengths $\ell$ are polynomially-bounded positive integral, and only increase the approximate ratio $\alpha$ by a constant factor.

\paragraph{The Algorithm.} Given an $\alpha$-approximate deterministic dynamic distance oracle (with \emph{simple}-path-reporting interfaces), it is quite standard to obtain fast algorithm for approximate multicommodity maxflows using a multiplicative weight update styled framework. The algorithm is shown in \Cref{algo:MultiFlow}. We use $G_{w}$ to denote the graph $G$ with function $w$ as the edge length function. In \Cref{algo:MultiFlow}, we will initialized an oracle ${\cal O}$ on $G_{w}$ at line 2, query the oracle at line 7 and update the oracle at line 10.

\begin{algorithm}[H]
\caption{Approximate Multicommodity Maxflows}
\label{algo:MultiFlow}
\begin{algorithmic}[1]
\Require A vertex-unit-capacitated undirected graph $G$ with $k$ source-sink pairs $\{(s_{j},t_{j})\mid 1\leq j\leq k\}$ and a constant $0<\delta<0.01$.
\Ensure A $((1+100\delta)\alpha^{2})$-approximate feasible multicommodity flow.
\State Let $\zeta = 1/\delta$ and $\eta = \delta^{2}/((1+10\delta)\log n)$.
\State Initialize $w(v) = 1/n^{\zeta}$ for all vertices $v\in V(G)$.
\State Initialize $\lambda = 1/n^{\zeta}$.
\State Initialize $f$ to be a zero flow.
\While{$\lambda < 1$}
\For{each source-sink pair $(s_{j},t_{j})$}
\State Query an $\alpha$-approximate distance $\wtilde{d}$ for $(s_{j},t_{j})$ in $G_{w}$.
\State If $\wtilde{d} > (1+\delta)\alpha\lambda$, then continue to the next source-sink pair.
\State Query an $\alpha$-approximate shortest simple $s_{j}$-$t_{j}$ path $P$ in $G_{w}$.
\State Add $P$ into $f$ with $f(P) = 1$.
\State For each vertex $v\in P$, $w(v)\gets (1+\delta)\cdot w(v)$.
\EndFor
\State $\lambda \gets (1+\delta)\lambda$
\EndWhile
\State Return $f\cdot\eta$.
\end{algorithmic}
\end{algorithm}

\begin{claim}
At the moment an $\alpha$-approximate shortest simple $s_{j}$-$t_{j}$ path $P$ is obtained, we have $w(P)\leq (1+\delta)\alpha^{2}\lambda$.
\label{claim:MultiApprox}
\end{claim}
\begin{proof}
This is from 
\[
w(P)\leq \alpha\cdot\dist_{G_{w}}(s_{j},t_{j})\leq \alpha\cdot\wtilde{d}\leq (1+\delta)\alpha^{2}\lambda,
\]
where $\wtilde{d}\leq (1+\delta)\alpha\lambda$ is by line 8.
\end{proof}

\paragraph{The Correctness.} The proof of correctness below follows the presentation of \cite{HaeuplerHT2023length}. Combining \Cref{lemma:MultiFlowFeasible,lemma:MultiFlowApprox} and $\alpha = 2^{\poly(1/\epsilon)}$, we have that $f\cdot\eta$ is a $2^{\poly(1/\epsilon)}$-approximate feasible solution.

\begin{lemma}
At any time, we have $\dist_{G_{w}}(s_{j},t_{j})\geq \lambda$ for each $1\leq j\leq k$.
\label{lemma:MultiflowLambda}
\end{lemma}
\begin{proof}
This statement holds initially. Every time $\lambda$ is updated to $(1+\delta)\lambda$, the exit condition of the inner loop guarantees that $\dist_{G_{w}}(s_{j},t_{j})> (1+\delta)\lambda$ for each pair $(s_{j},t_{j})$, because the $\alpha$-approximate distance $\wtilde{d}$ of $(s_{j},t_{j})$ has $w(P)>(1+\delta)\alpha\lambda$. Therefore, this statement still holds after each update of $\lambda$. 
\end{proof}

\begin{lemma}
The flow $f\cdot \eta$ is feasible.
\label{lemma:MultiFlowFeasible}
\end{lemma}
\begin{proof}
It suffices to show that $f(v)\leq 1/\eta$ for each vertex $v\in V(G)$. By the update rule, the final weight of $v$ is $w_{\final}(v) = (1+\delta)^{f(v)}/n^{\zeta}$. Moreover, $w_{\final}(v)\leq(1+\delta)^{2}\alpha$, because right before the last time $w(v)$ is updated, $w(v)\leq w(P)\leq (1+\delta)\alpha^{2}\lambda$ (by \Cref{claim:MultiApprox}) and $\lambda\leq 1$. Therefore,
\begin{align*}
(1+\delta)^{f(v)}/n^{\zeta}&\leq (1+\delta)^{2}\alpha^{2}\\
(1+\delta)^{f(v)}&\leq n^{\zeta+1}\\
f(v)&\leq \frac{(\zeta+1)\log n}{\log(1+\delta)}\\
f(v)&\leq \frac{(1+\delta)(\zeta+1)\log n}{\delta}\\
f(v)&\leq \frac{(1+\delta)^{2}\log n}{\delta^{2}},
\end{align*}
where we assume $(1+\delta)^{2}\alpha^{2}\leq n$ in the second line, and uses $\delta/(1+\delta)\leq \log(1+\delta)$ for all $\delta>-1$ in the fourth line. Therefore, $f(v)\leq 1/\eta$ by our choice of $\eta$.
\end{proof}

\begin{lemma}
The flow $f\cdot\eta$ is an $((1+100\delta)\cdot\alpha^{2})$-approximate multicommodity maxflow.
\label{lemma:MultiFlowApprox}
\end{lemma}
\begin{proof}
Let $\beta$ denote the value of the maxflow.
Consider an iteration $i$ (we refer to the inner loop as one iteration). Let $\lambda^{(i)},w^{(i)}$ denote the variable $\lambda$ and function $w$ at the beginning of this iteration. Let $D^{(i)} = \sum w^{(i)}(v)$. Note that $w^{(i)}/\lambda^{(i)}$ is a feasible dual solution, so $\beta\leq D^{(i)}/\lambda^{(i)}$. Let $P_{i}$ be the path $P$ in iteration $i$. By our update rule, we have
\begin{align*}
D^{(i+1)} &= \sum_{v\notin P_{i}} w^{(i)}(v) + \sum_{v\in P_{i}} (1+\delta)\cdot w^{(i)}(v)\\
&\leq \sum_{v} w^{(i)}(v) + \sum_{v\in P_{i}}\delta\cdot w^{(i)}(v)\\
&\leq D^{(i)} + \delta (1+\delta)\alpha^{2}\lambda^{(i)}\\
&\leq D^{(i)}\cdot (1 + \delta(1+\delta)\alpha^{2}/\beta)\\
&\leq D^{(i)}\cdot\exp(\delta(1+\delta)\alpha^{2}/\beta).
\end{align*}
From the second line to the third line, we use $w^{(i)}(P_{i})\leq (1+\delta)\alpha^{2}\lambda^{(i)}$ from \Cref{claim:MultiApprox}.

Let $K = \val(f)$ and let $T-1$ be the last iteration, then the final dual value satisfies
\[
D^{(T)}\leq D^{(1)}\cdot \exp(\delta(1+\delta)\alpha^{2} K/\beta).
\]
From the initialization, we have $D^{(1)} = 1/n^{\zeta-1}$. At the end of the algorithm, we have $\min\dist_{G_{w}}(s_{j},t_{j})\geq \lambda\geq 1$ (by \Cref{lemma:MultiflowLambda}), which implies $D^{(T)}\geq 1$. Therefore,
\begin{align*}
n^{\zeta-1}&\leq \exp(\delta(1+\delta)\alpha^{2} K/\beta)\\
\beta\log n\cdot \frac{\zeta -1}{\delta(1+\delta)\alpha^{2}} &\leq K\\
\frac{\beta(1-\delta)}{(1+\delta)(1+10\delta)\alpha^{2}}&\leq K\eta\\
\frac{\beta}{(1+100\delta)\alpha^{2}}&\leq K\eta
\end{align*}
which means $\val(f\cdot\eta) = K\eta \geq \beta/((1+100\delta)\alpha^{2})$.
\end{proof}

\paragraph{The Running Time.} The running time is dominated by the total update time, total distance query time and total path-reporting time of the oracle ${\cal O}$ on $G_{\ell}$. 

Let ${\cal P}_{\report}$ collect all paths obtained at line 9. We have the total number of edges on paths in ${\cal P}_{\report}$ is at most $O(n\log n)$, i.e. $\sum_{P\in{\cal P}_{\report}} |P|\leq O(n\log n)$ by the fact that $f\cdot \eta$ is a feasible flow and $\eta = \Theta(1/\log n)$. Therefore, the total path-reporting time is $O(n^{1+\epsilon})$ by \Cref{thm:MainDetailed}. 

For the total distance query time, note that each distance query followed by a path-reporting query will have its distance query time $O(\log\log n/\epsilon^{4})$ dominated by the path-reporting time $O(|P|\cdot n^{\epsilon})$ because $|P|\geq 1$. Hence, we only need to consider the extra distance queries of those inner-loop iterations ended at line 8. There are at most $O(k\cdot \log_{1+\delta}n^{\zeta}) = O(k\log n)$ such iterations, so the extra distance queries takes $O(k\log n\log\log n/\epsilon^{4})$ time.

Finally, we bound the total update time. For each vertex $v\in V(G)$, its weight $w(v)$ will be updated at most $O(\log_{1+\delta} n^{\delta}) = O(\log n)$ time. Each update to $w(v)$ will change the edge lengths of $\deg_{G}(v)$ edges in $G_{\ell}$. To change the length of some edge in $G_{\ell}$, we can remove the edge and add it back with new edge length. Therefore, the total number of updates to $G_{\ell}$ is $O(m\log n)$, and the total update time is $O(mn^{\epsilon})$.

In conclusion, the total running time is $O(n^{1+\epsilon} + k\log n\log\log n/\epsilon^{4} + mn^{\epsilon}) = O((m+n+k)n^{\epsilon})$.

\end{proof}

\section*{Acknowledgments}
Haeupler is partially funded by the European Research Council (ERC) under the European Union's Horizon 2020 research and innovation program (ERC grant agreement 949272).
Saranurak is supported by NSF grant CCF-2238138.
This work was done in part while Long and Saranurak were visiting the Simons Institute for the Theory of Computing during the Data Structures and Optimization for Fast Algorithms problem.

\appendix

\newpage
\section{Dynamic Distance Oracles in the Literature}

\begin{table}[H]
\footnotesize{

\begin{tabular}{|>{\centering}p{0.1\textwidth}|>{\centering}p{0.1\textwidth}|>{\centering}p{0.08\textwidth}|>{\centering}p{0.15\textwidth}|>{\centering}p{0.15\textwidth}|>{\centering}p{0.08\textwidth}|>{\centering}p{0.12\textwidth}|>{\centering}p{0.1\textwidth}|}
\hline 
 & Fully Dynamic & Approx & Update Time & Query Time & Det ? & Worst-Case & Note\tabularnewline
\hline 
\hline 
\textbf{Exact Distance Matrix} &  &  &  &  &  &  & \tabularnewline
\hline 
\cite{ausiello1991incremental} & incremental & $1$ & $n^{3}/m$ & 1 & det & amortized & unweighted, directed\tabularnewline
\hline 
\cite{BaswanaHS07} & decremental & $1$ & $n^{3}/m$ & $1$ & adaptive & amortized & unweighted, directed\tabularnewline
\hline 
\cite{evald2021decremental} & decremental & $1$ & $n^{3}/m$ & $1$ & det & amortized & unweighted, directed\tabularnewline
\hline 
\cite{king1999fully,demetrescu2001fully} & fully & 1 & $n^{2.5}$ & 1 & det & amortized & unweighted, directed\tabularnewline
\hline 
\cite{demetrescu2004new} & fully & $1$ & $n^{2}$ & $1$ & det & amortized & directed\tabularnewline
\hline 
\cite{thorup2005worst} & fully & $1$ & $n^{3-1/4}$ & 1 & det & worst-case & directed\tabularnewline
\hline 
\cite{abraham2017fully} & fully & 1 & $n^{3-1/3}$ & 1 & adaptive & worst-case & directed\tabularnewline
\hline 
\cite{gutenberg2020fully} & fully & $1$ & $n^{3-2/7}$ & 1 & det & worst-case & directed\tabularnewline
\hline 
\cite{chechik2023faster} & fully & $1$ & $n^{3-20/61}$ & $1$ & det & worst-case & directed\tabularnewline
\hline 
 &  &  &  &  &  &  & \tabularnewline
\hline 
\textbf{Exact Oracle} &  &  &  &  &  &  & \tabularnewline
\hline 
\cite{Sankowski05} & fully & $1$ & $n^{1.897}$ & $n^{1.265}$ & adaptive & worst-case & unweighted, directed\tabularnewline
\hline 
\cite{RodittyZ11} & fully & $1$ & $m\sqrt{n}$ & $n^{3/4}$ & adaptive & amortized & unweighted, directed\tabularnewline
\hline 
\cite{BrandNS19} & fully & $1$ & $n^{1.724}$ & $n^{1.724}$ & adaptive & worst-case & unweighted, directed\tabularnewline
\hline 
\cite{karczmarz2023sensitivity} & fully & $1$ & $n^{2}$ & $n$ & adaptive & worst-case & unweighted, directed\tabularnewline
\hline 
\cite{karczmarz2023fully} & fully & $1$ & $mn^{4/5}$ & $n^{4/5}$ & det & worst-case & directed\tabularnewline
\hline 
\end{tabular}

}

\caption{\label{tab:exact}Exact regime. We assume $\protect\poly(n)$ aspect ratio. We often omit $\protect\poly(\log n)$ factors in the update time.}
\end{table}

\begin{table}[H]
\footnotesize{

\begin{tabular}{|>{\centering}p{0.1\textwidth}|>{\centering}p{0.1\textwidth}|>{\centering}p{0.08\textwidth}|>{\centering}p{0.15\textwidth}|>{\centering}p{0.15\textwidth}|>{\centering}p{0.08\textwidth}|>{\centering}p{0.12\textwidth}|>{\centering}p{0.1\textwidth}|}
\hline 
 & Fully Dynamic & Approx & Update Time & Query Time & Det ? & Worst-Case & Note\tabularnewline
\hline 
\hline 
\textbf{Partially dynamic $(1+\epsilon)$-approx} &  &  &  &  &  &  & \tabularnewline
\hline 
\cite{BaswanaHS07} & decremental & $(1+\epsilon)$ & $n^{2}/\epsilon\sqrt{m}$ & $1$ & oblivious & amortized & unweighted, directed\tabularnewline
\hline 
\cite{rodittyZ2} & decremental & $(1+\epsilon)$ & $n/\epsilon$ & $1$ & oblivious & amortized & unweighted\tabularnewline
\hline 
\cite{henzinger16} & decremental & $(1+\epsilon)$ & $n/\epsilon$ & $1$ & det & amortized & unweighted\tabularnewline
\hline 
\cite{bernstein16} & decremental & $(1+\epsilon)$ & $n/\epsilon$ & $1$ & oblivious & amortized & directed\tabularnewline
\hline 
\cite{karczmarz2019reliable} & incremental & $(1+\epsilon)$ & $n^{4/3}/\epsilon$ & $1$ & det & amortized & directed\tabularnewline
\hline 
\cite{evald2021decremental} & decremental & $(1+\epsilon)$ & $n^{2}/\epsilon\sqrt{m}$ & $1$ & det & amortized & unweighted, directed\tabularnewline
\hline 
\cite{bernstein2022deterministic} & decremental & $(1+\epsilon)$ & $n/\epsilon$ & $1$ & det & amortized & \tabularnewline
\hline 
 &  &  &  &  &  &  & \tabularnewline
\hline 
\textbf{Fully dynamic $(1+\epsilon)$-approx} &  &  &  &  &  &  & \tabularnewline
\hline 
\cite{van2019dynamic} & fully & $(1+\epsilon)$ & $n^{2}/\epsilon^{\omega+1}$ & $1$ & oblivious & worst-case & unweighted\tabularnewline
\hline 
\cite{van2019dynamic} & fully & $(1+\epsilon)$ & $n^{2.045}/\epsilon^{2}$ & $1$ & oblivious & worst-case & directed\tabularnewline
\hline 
\cite{brand2021fast} & fully & $(1+\epsilon)$ & $n^{2}O(\frac{1}{\epsilon})^{\sqrt{2\log_{1/\epsilon}n}}$ & $1$ & det & worst-case & \tabularnewline
\hline 
\cite{van2023deterministic} & fully & $(1+\epsilon)$ & $n^{2.293}/\epsilon$ & 1 & det & worst-case & directed\tabularnewline
\hline 
\cite{van2019dynamic} & fully & $(1+\epsilon)$ & $n^{1.863}\frac{1}{\epsilon^{2}}\log\frac{1}{\epsilon}$ & $n^{0.66}/\epsilon^{2}$ & oblivious & worst-case & directed\tabularnewline
\hline 
\cite{brand2021fast} & fully & $(1+\epsilon)$ & $n^{1.788}O(\frac{1}{\epsilon})^{\sqrt{2\log_{1/\epsilon}n}}$ & $n^{0.45}/\epsilon^{2}$ & det & worst-case & unweighted\tabularnewline
\hline 
\cite{bergamaschi2021new} & fully  & $(1+\epsilon)$ & $n^{1.529}$ & $n^{1+o(1)}$ & oblivious & worst-case & unweighted\tabularnewline
\hline 
 &  &  &  &  &  &  & \tabularnewline
\hline 
\textbf{$(2+\epsilon)$-approx} &  &  &  &  &  &  & \tabularnewline
\hline 
\cite{king1999fully} & fully & $(2+\epsilon)$ & $n^{2}$ & $1$ & det & amortized & directed\tabularnewline
\hline 
\cite{APSPfully2} & fully & $(2+\epsilon)$ & $m\cdot2^{O(\sqrt{\log n})}\cdot(\frac{1}{\epsilon})^{O(\log_{1/\epsilon}n)}$ & $\log\log\log n$ & obl & amortized & \tabularnewline
\hline 
\cite{bernstein2022deterministic} & fully & $(2+\epsilon)$ & $m\cdot2^{O(\sqrt{\log n})}\cdot(\frac{1}{\epsilon})^{O(\log_{1/\epsilon}n)}$ & $\log\log\log n$ & det & amortized & \tabularnewline
\hline 
\cite{dory2022new} & Decremental & $(2+\epsilon)$ & $\max\{n^{2/3},$ $n^{3/2+o(1)}/\sqrt{m}\}$ & $1$ & obl & amortized & \tabularnewline
\hline 
\end{tabular}

}

\caption{\label{tab:small approx}Small-approximation regime. We assume $\protect\poly(n)$ aspect ratio. We often omit $\protect\poly(\log n)$ factors in the update time.}
\end{table}

\begin{table}[H]
\footnotesize{

\begin{tabular}{|>{\centering}p{0.1\textwidth}|>{\centering}p{0.1\textwidth}|>{\centering}p{0.14\textwidth}|>{\centering}p{0.13\textwidth}|>{\centering}p{0.18\textwidth}|>{\centering}p{0.05\textwidth}|>{\centering}p{0.1\textwidth}|>{\centering}p{0.1\textwidth}|}
\hline 
 & Fully Dynamic & Approx & Update Time & Query Time & Det ? & Worst-Case & Note\tabularnewline
\hline 
\hline 
\textbf{Lower bounds} &  &  &  &  &  &  & \tabularnewline
\hline 
\cite{DorHZ00,HenzingerKNS15} & fully & $2-\epsilon$ & $n^{1-o(1)}$ & $n^{2-o(1)}$ & obl & amortized & $m=\Omega(n^{2})$\tabularnewline
\hline 
\cite{abboud2023stronger} & fully & $\alpha\in[2,3)$ & $n^{2/(1+\alpha)-o(1)}$ & $n^{o(1)}$ & obl & amortized & $m=n^{1+o(1)}$\tabularnewline
\hline 
\cite{abboud2023stronger,jin2023removing} & fully & $2k-1$ & $n^{1/k-o(1)}$ & $n^{o(1)}$ & obl & amortized & $m=O(n)$\tabularnewline
\hline 
\cite{abboud2022hardness} & fully & $2k-1$ & $n^{\Omega(1/k)}$ & $n^{\Omega(1/k)}$ & obl & amortized & $m=O(n)$\tabularnewline
\hline 
\cite{BrandNS19} & partially & $5/3-\epsilon$ & $n^{2-o(1)}$ & $n^{o(1)}$ & obl & worst-case & $m=\Omega(n^{2})$\tabularnewline
\hline 
 &  &  &  &  &  &  & \tabularnewline
\hline 
\textbf{Spanners} &  &  &  &  &  &  & \tabularnewline
\hline 
\cite{BaswanaKS12} & fully & $(2k-1)$ & $k^{2}\log^{2}n$ & $kn^{1+1/k}\log n$ & obl & amortized & \tabularnewline
\hline 
\cite{bernstein2021deamortization} & fully & $(2k-1)$ & $O(1)^{k}\log^{3}n$ & $kn^{1+1/k}\polylog n$ & obl & worst-case & \tabularnewline
\hline 
\cite{bernstein2022fully} & fully & $\polylog n$ & $\polylog n$ & $n\cdot\polylog n$ & adapt & amortized & \tabularnewline
\hline 
\cite{bernstein2022fully} & fully & $\polylog n$ & $n^{o(1)}$ & $n^{1+o(1)}$ & adapt & worst-case & \tabularnewline
\hline 
 &  &  &  &  &  &  & \tabularnewline
\hline 
\textbf{Partially Dynamic} &  &  &  &  &  &  & \tabularnewline
\hline 
\cite{rodittyZ2} & decremental & $(2+\epsilon)k-1$ & $n$ & $1$ & obl & amortized & unweighted\tabularnewline
\hline 
\cite{BernsteinR11} & decremental & $(2+\epsilon)k-1$ & $\frac{n^{2}}{m}\cdot n^{1/k}$ & $1$ & obl & amortized & unweighted\tabularnewline
\hline 
\cite{chechik} & decremental & $(2+\epsilon)k-1$ & $n^{1/k+o(1)}$ & $\log\log n$ & obl & amortized & \tabularnewline
\hline 
\cite{lkacki2022near} & decremental & $(2+\epsilon)k-1$ & $n^{1/k}(\frac{k\log n}{\epsilon})^{O(k)}$ & $k$ & obl & amortized & \tabularnewline
\hline 
\cite{chuzhoy2021decremental} & decremental & $(\log n)^{2^{O(1/\epsilon)}}$ & $n^{\epsilon}$ & $\log n\log\log n$ & det & amortized & \tabularnewline
\hline 
\cite{bernstein2022deterministic} & decremental & $n^{o(1)}$ & $n^{o(1)}$ & $n^{o(1)}$ & det & amortized & \tabularnewline
\hline 
\cite{chen2020fast} & incremental & $(1/\epsilon)^{O(1/\epsilon)}$ & $n^{\epsilon}$ & $n^{\epsilon}$ & det & worst-case & \tabularnewline
\hline 
\cite{chen2020fast} & incremental & $(2k-1)$ & $m^{1/2}n^{1/k}$ & $m^{1/2}n^{1/k}$ & det & worst-case & \tabularnewline
\hline 
\cite{forster2023deterministic} & incremental & $\polylog n$ & $\polylog n$ & $\log\log n$ & det & amortized & \tabularnewline
\hline 
 &  &  &  &  &  &  & \tabularnewline
\hline 
\textbf{Fully Dynamic} &  &  &  &  &  &  & \tabularnewline
\hline 
\cite{abraham2014fully} & fully & $2^{O(1/\epsilon^{2})}$ & $m^{1/2+\epsilon}$ & $O(1/\epsilon^{4})$ & obl & amortized & \tabularnewline
\hline 
\cite{forster2021dynamic} & fully & $(\log n)^{O(1/\epsilon)}$ & $n^{\epsilon}$ & $(1/\epsilon)\cdot\log^{2}n$ & obl & amortized & \tabularnewline
\hline 
\cite{forster2023bootstrapping} & fully & $(1/\epsilon)^{O(1/\epsilon)}$ & $n^{\epsilon}$ & $n^{\epsilon/8}$ & obl & amortized & \tabularnewline
\hline 
\cite{chuzhoy2023new} & fully & $(\log\log n)^{2^{O(1/\epsilon^{3})}}$ & $n^{\epsilon}$ & $2^{\poly(1/\epsilon)}\log n\log\log n$ & det & amortized & \tabularnewline
\hline 
\makecell{\cite{KMP23}\\\tiny{Independent}} & fully & $n^{o(1)}$ & $n^{o(1)}$ & $\log^{2} n$ & det & worst-case & \tabularnewline
\hline 
\textbf{Our} & fully & $2^{\poly(1/\epsilon)}$ & $n^{\epsilon}$ & $\poly(1/\epsilon)\cdot\log\log n$ & det & worst-case & \tabularnewline
\hline 
\end{tabular}

}

\caption{\label{tab:large approx}Large-approximation regime. We assume $\protect\poly(n)$ aspect ratio. We often omit $\protect\poly(\log n)$ factors in the update time. The variable $m$ in the lower bounds indicates the number of edges in the hard instances. \cite{KMP23} is an independent work as discussed in \Cref{sect:IndependentWork}.}
\end{table}

\newcommand{\etalchar}[1]{$^{#1}$}

\end{document}